\newcommand*{\defeq}{\mathrel{\rlap{%
	\raisebox{0.3ex}{$\m@th\cdot$}}%
	\raisebox{-0.3ex}{$\m@th\cdot$}}%
	=}
\newcommand{\mc}{\mathcal}
\newcommand{\normord}[1]{:\mathrel{\mspace{2mu}#1\mspace{2mu}}:}
\newcommand{\T}{{\mathsf{T}}}
\numberwithin{equation}{section}
\newtheorem{Def}{Definition}[section]
\newtheorem{Prop}[Def]{Proposition}
\newtheorem{Thm}[Def]{Theorem}
\newtheorem{PThm}[Def]{Physics Theorem}
\newtheorem{PProp}[Def]{Physics Proposition}
\newtheorem{Lem}[Def]{Lemma}
\newtheorem*{Rmk}{Remark}
\newtheorem{Cor}[Def]{Corollary}
\newtheorem*{Exp}{Example}
\newenvironment{customthm}[1]
  {\innercustomthm}
  {\endinnercustomthm}
\newenvironment{customprop}[1]
  {\innercustomprop}
  {\endinnercustomprop}
\newcommand{\sm}[1]{\left(\begin{smallmatrix} #1 \end{smallmatrix}\right)}
\newcommand{\cbg}{\mathcal{C}_{\beta\gamma}}
\newcommand{\grho}{{{\mathfrak g}_\rho}}
\newcommand{\be}{\begin{equation}}
\newcommand{\ee}{\end{equation}}
\newcommand{\pd}{{\partial}}
\newcommand{\ol}{\overline}
\newcommand{\wt}{\widetilde}
\newcommand{\mb}{\mathbf}
\newcommand{\ds}{\displaystyle}
\newcommand{\cf}{\emph{cf.}}
\newcommand{\ie}{\emph{i.e.}}
\newcommand{\R}{{\mathbb R}}
\renewcommand{\L}{{\mathbb L}}
\newcommand{\V}{{\mathbb V}}
\newcommand{\W}{{\mathbb W}}
\newcommand{\C}{{\mathbb C}}
\newcommand{\Z}{{\mathbb Z}}
\newcommand{\X}{{\mathbb X}}
\renewcommand{\S}{{\mathbb S}}
\newcommand{\CA}{{\mathcal A}}
\newcommand{\CC}{{\mathcal C}}
\newcommand{\CF}{{\mathcal F}}
\newcommand{\CG}{{\mathcal G}}
\newcommand{\CH}{{\mathcal H}}
\newcommand{\CI}{{\mathcal I}}
\newcommand{\CM}{{\mathcal M}}
\newcommand{\CN}{{\mathcal N}}
\newcommand{\CO}{{\mathcal O}}
\newcommand{\CT}{{\mathcal T}}
\newcommand{\CU}{{\mathcal U}}
\newcommand{\CV}{{\mathcal V}}  
\newcommand{\CW}{{\mathcal W}}  
\newcommand{\CZ}{{\mathcal Z}}
\newcommand{\CL}{{\mathcal L}}
\newcommand{\Fock}{{\mathcal F}}
\newcommand{\bg}{\beta\gamma}
\newcommand{\vgl}{\widehat{\mathfrak{gl}(1|1)}}
\newcommand{\wh}{\widehat}
\newcommand{\id}{{1\!\!1}}
\newcommand{\m}{m}
\newcommand{\mv}{{m}}
\newcommand{\uv}{{\upsilon}}
\newcommand{\LL}{\mathbb{L}}
\begin{document}
\title{3d mirror symmetry of braided tensor categories}
\author[1]{Andrew Ballin}
\author[2]{Thomas Creutzig}
\author[3]{Tudor Dimofte}
\author[1]{Wenjun Niu}

\affil[1]{\small Department of Mathematics and Center for Quantum Mathematics and Physics, UC Davis, One~Shields~Ave., Davis,~CA~95616, United States}
\affil[2]{Department of Mathematical \& Statistical Sciences 632 CAB, University~of~Alberta, Edmonton,~Alberta~T6G~2G1,~Canada}
\affil[3]{School of Mathematics, University~of~Edinburgh, James~Clerk~Maxwell~Building, Peter~Guthrie~Tait~Rd., Edinburgh~EH9~3FD, United Kingdom}

\date{\today}

\maketitle

\begin{abstract}
We study the braided tensor structure of line operators in the topological A and B twists of abelian 3d $\CN=4$ gauge theories, as accessed via boundary vertex operator algebras (VOA's).
We focus exclusively on abelian theories.
We first find a non-perturbative completion of boundary VOA's in the B twist, which start out as certain affine Lie superalebras; and we construct free-field realizations of both A and B-twist VOA's, finding an interesting interplay with the symmetry fractionalization group of bulk theories. We use the free-field realizations to establish an isomorphism between A and B VOA's related by 3d mirror symmetry. Turning to line operators, we extend previous physical classifications of line operators to include new monodromy defects and bound states. We also outline a mechanism by which continuous global symmetries in a physical theory are promoted to higher symmetries in a topological twist --- in our case, these are infinite one-form symmetries, related to boundary spectral flow, which structure the categories of lines and control abelian gauging. Finally, we establish the existence of braided tensor structure on categories of line operators, viewed as non-semisimple categories of modules for boundary VOA's. In the A twist, we obtain the categories by extending modules of symplectic boson VOA's, corresponding to gauging free hypermultiplets; in the B twist, we instead extend Kazhdan-Lusztig categories for affine Lie superalgebras. We prove braided tensor equivalences among the categories of 3d-mirror theories.
All results on VOA's and their module categories are mathematically rigorous; they rely strongly on recently developed techniques to access non-semisimple extensions.
\end{abstract}

\tableofcontents

\section{Introduction}
\label{sec:intro}

Three-dimensional $\CN=4$ gauge theories have enjoyed a rich interplay with geometry and representation theory since the 90's, particularly motivated by the discovery of 3d mirror symmetry \cite{IS,dBHOO}. This interplay has only intensified in recent years, due in part to the advent of new, non-perturbative, physical and mathematical methods to access algebraic structures in 3d $\CN=4$ gauge theories. Just a small and incomplete sampling includes \cite{GW-S,CHZ,BFN,BDG,BDGH,Dedushenko:2017avn,Webster-Coulomb, AganagicOkounkov,BullimoreZhang,DedushenkoNekrasov,Gaiotto-twisted,CG,CCG,AsselGomis,lineops,RW-Coulomb,BFK,CDGG,hilburn2021tate,gammage2022perverse,BBFS-3d,NSWZ}. In turn, many of these new methods can be viewed from the perspective of \emph{holomorphic} or \emph{topological twists} --- in that the non-perturbative structure being computed lies in the cohomology of a holomorphic or topological supercharge.

A basic but important example is the moduli space of vacua of a 3d $\CN=4$ gauge theory. Functions on the Higgs and Coulomb branches in the space of vacua are given by the expectation values of certain half-BPS local operators, elements of the so-called Higgs and Coulomb chiral rings, \emph{cf.} \cite{CHZ,BDG}. These same local operators appear, respectively, in the B and A topological twists of a 3d $\CN=4$ gauge theory. The B twist (sometimes known as the $C$ twist \cite{Gaiotto-twisted}) was introduced by Blau and Thompson \cite{BlauThompson}, and generalized to hyperk\"ahler sigma-models by Rozansky and Witten \cite{RW}. Its 3d mirror, the A twist (also known as the $H$ twist \cite{Gaiotto-twisted}) is a reduction of Witten's Donaldson twist of 4d $\CN=2$ gauge theory \cite{Witten-Donaldson}. Schematically,
\be \text{A-twist local operators}\simeq \C[\CM_{\rm Coul}]\,,\qquad \text{B-twist local operators}\simeq \C[\CM_{\rm Higgs}]\,. \ee
The celebrated Braverman-Finkelberg-Nakajima construction \cite{Nak,BFN}, which opened up the study of Coulomb branches and 3d $\CN=4$ mirror symmetry to much of mathematics, is a careful computation of the algebra of local operators in the A twist, using a topological state-operator correspondence.

In the current paper, we further develop the physics and mathematics of \emph{line operators} and \emph{boundary vertex operator algebras} (VOA's) preserved by the two topological twists 3d $\CN=4$ gauge theories. The two objects are intrinsically related, since line operators correspond to modules for boundary VOA's \cite{CG,CCG}. We will focus exclusively on the simplest class of 3d $\CN=4$ theories: abelian gauge theories, with continuous gauge group acting faithfully on linear hypermultiplet matter. (We call these ``simple'' abelian theories.) This class is particularly nice, because it is closed under 3d $\CN=4$ mirror symmetry. As we shall see, it is also sufficiently nontrivial to make the analysis in this paper interesting.

Our main results include:
\begin{itemize}
    \item A non-perturbative definition of boundary VOA's in the $B$ twist, generalizing their original construction \cite{CG} to include the contribution of boundary monopole operators.
    \item Free-field realizations of boundary VOA's in both A and B twists, which turn out to involve the finite ``symmetry fractionalization group'' of bulk 3d theories in an intrinsic way.
    \item A proof of 3d mirror symmetry for boundary VOA's, in A and B twists of mirror abelian theories.
    \item An extended physical classification of line operators in A and B twists, including previously unstudied monodromy defects, and bound states of Wilson lines and of vortex lines.
    \item A mathematically rigorous construction of abelian module categories for boundary VOA's that have a braided tensor structure and include a basic collection of bulk line operators as simple modules. We construct these using independent methods in the A and B twists, and then prove that 3d mirror symmetry induces an isomorphism of categories. We expect that the full physical categories of bulk line operators are the derived categories of the abelian VOA module categories.
\end{itemize}
The results on free-field realizations, braided tensor categories of modules, and 3d mirror symmetry thereof generalize work of the first and last author \cite{BN22}, who studied the simplest abelian mirror pair: the A twist of a free hypermultiplet and the B twist of SQED.

Along the way, we introduce some new constructions that may be of independent interest:
\begin{itemize}
    \item We establish an efficient computation of the symmetry fractionalization group of abelian gauge theories, and show that its order counts massive vacua. This group thus measures the non-triviality of abelian theories.

    Mathematically, this result amounts to a simple new formula for counting fixed points of hypertoric varieties.
    \item We find new, infinite one-form global symmetry in topological twists of 3d $\CN=4$ theories, which emerges in a natural way from zero-form symmetries that become cohomologically trivial in the twist. Gauging of physical zero-form symmetries is promoted to gauging of this one-form symmetry in topological twists --- which in turn translates to infinite simple-current extensions of boundary VOA's. 
\end{itemize}

We'll elaborate a bit further on our results in the remainder of the Introduction.

While this paper was in preparation, we learned of related work of C. Beem and A. Ferrari on free-field realizations of  boundary VOA's; we've coordinated our preprints to appear together. Results relating a special class of abelian boundary VOA's (for the A twist of SQED with multiple hypermultiplets) to W-algebras also appeared very recently  \cite{Yoshida:2023wyt}.

A few other recent works related to braided tensor categories of line operators associated to 3d $\CN=4$ theories include: \cite{Mikhaylov,GeerYoung,GarnerNiu} on line operators in abelian supergroup Chern-Simons theories, of which the B twists of a free hypermultipet and SQED are examples; \cite{Dedushenko:2018bp,Gang:2021hrd} on semisimple categories of line operators associated to massive 3d $\CN=4$ theories; and  \cite{creutzig2020vertex,FrenkelGaiotto,CDGG} on a few particulary interesting examples of boundary VOA's for nonabelian theories and their module categories, related to geometric Langlands and quantum groups. 
Modular tensor categories and VOA characters associated to 3d $\CN=2$ theories have also been studied, \emph{e.g.} in \cite{3dmodularity,VOAcharacters}.

The general story for nonabelian 3d $\CN=4$ gauge theories is unfortunately very far from complete.

\subsection{Boundary VOA's}
\label{sec:intro-VOA}

The first half of our paper is devoted to developing the structure of boundary VOA's in the A and B twists of abelian 3d $\CN=4$ gauge theories.

These boundary VOA's were first introduced in \cite{CG}. Their existence came as a bit of a surprise. One might naively expect that boundary conditions for a physical 3d $\CN=4$ theory that are compatible with the bulk topological supercharge $Q$ would also become topological after twisting. \cite{CG} found an exception to this, constructing boundary conditions that remained holomorphic --- thus supporting an entire VOA --- after a topological twist of the bulk.

The VOAs that we are dealing with are very non-rational in the sense that they admit infinitely many inequivalent simple modules and they also have reducible, but indecomposable modules, \ie\ they are examples of VOA's of logarithmic type. Presently there are few such logarithmic VOA's whose representation theory is under full control. Some most relevant members of well understood logarithmic theories are the $\beta\gamma$-VOA, also called symplectic bosons, which we denote by $\CV_{\beta\gamma}$ VOA \cite{allen2020bosonic}, the affine VOA of $\mathfrak{gl}(1|1)$ at any non-zero level \cite{creutzig2020tensor}, symplectic fermions \cite{Tsuchiya:2012ru} and the singlet vertex algebras \cite{Creutzig:singlet1, Creutzig:singlet2}. For an introduction to these examples and logarithmic VOA's in general, see \cite{Creutzig:2013hma}.

These VOA's enjoy a zoo of relations, e.g. the symplectic fermions are a BRST-cohomology of the $\CV_{\beta\gamma} \otimes \CV_{bc}$-VOA very similar to \eqref{VOA-A-intro}, while the affine VOA of $\mathfrak{gl}(1|1)$ is a $\C^*$-orbifold of $\CV_{\beta\gamma} \otimes \CV_{bc}$. Conversely this means that $\CV_{\beta\gamma} \otimes \CV_{bc}$ is a simple current extension of the affine VOA of $\mathfrak{gl}(1|1)$. We review some of these relations while setting up some VOA formalism in Section \ref{subsecVOAconvention}.

It is interesting to note that the relation between the affine VOA of $\mathfrak{gl}(1|1)$, the $\beta\gamma$ VOA, and symplectic fermions is the by far simplest example of the Gaiotto-Rapcak triality of ``corner'' VOA's \cite{GaiottoRapcak}, proven in \cite{Creutzig:2020zaj, Creutzig:2021dda}.
Relating corner VOA's involves taking cosets/orbifolds and then VOA extensions; these two operations can be performed in a single step via a BRST cohomology \cite{Creutzig:2022yvr} involving the kernel VOA's of \cite{creutzig2020vertex,FrenkelGaiotto}. 
The kernel VOA of $\mathfrak{g}= \mathfrak{gl}(1)$ is nothing but the lattice VOA of the integer lattice, \emph{i.e.} of a pair of free fermions $\CV_{bc}$, times a Heisenberg VOA. 

Our VOA's $\mc{V}_{\rho}^A$ and $\mc{V}_{\rho}^B$ are constructed very much in this spirit; however it turns out that we are able to be much more explicit. The reason is that all VOA's involved are extensions of several singlet algebras together with some Heisenberg VOA's. Moreover, the singlet algebra itself is a subalgebra of a Heisenberg VOA characterized as the kernel of a certain screening charge. This means that all our VOA's $\mc{V}_{\rho}^A$ and $\mc{V}_{\rho}^B$ allow for free field realizations inside large lattice VOA's times Heisenberg VOA's. These free field realizations are ultimately of great aid in allowing us to explicitly analyze modules.

For a 3d $\CN=4$ theory with gauge group $G$ and hypermultiplet matter representation $T^*V$, the resulting boundary VOA in the A twist is of the form
\be \CV_{G,V}^A = H^\bullet_{\rm BRST}\big(\mathfrak{g}_\C,\CV_{\beta\gamma}^{T^*V}\otimes \CV_{bc}^{W}\big)\,, \label{VOA-A-intro} \ee
a BRST reduction (\emph{a.k.a.} semi-infinite cohomology) of a number of copies of beta-gamma VOA's (corresponding to bulk hypermultiplets) and some additional free-fermion ($bc$) VOA's to soak up a boundary gauge anomaly. Namely, one must choose a  representation $W$ so that quadratic Casimirs satisfy $T_V-T_{\mathfrak g}=T_W$.
When the boundary gauge anomaly vanishes on its own (meaning $T_V=T_{\mathfrak g}$) the 3d $\CN=4$ theory turns out to be the dimensional reduction of a superconformal 4d $\CN=2$ theory, and the boundary VOA \eqref{VOA-A-intro} coincides with the 4d $\CN=2$ VOA of \cite{Beem:2013sza}. For our abelian gauge theories, the boundary anomaly never vanishes on its own, but there is a canonical way to cancel it: by choosing $W=V$.

The VOA \eqref{VOA-A-intro} is defined for any abelian theory, but difficult to analyze. In Section \ref{secNeumannA}, we introduce free field realization of \eqref{VOA-A-intro} that lets us explicitly take the BRST cohomology. One immediate consequence is that the BRST cohomology is supported in degree zero.

To further describe the structure we find, we need bit more notation for abelian theories, which we develop in Section \ref{sec:setup} and then use throughout the paper. A theory with $G=U(1)^r$ acting faithfully on $T^*V=T^*\C^n$ is characterized by an $n\times r$ charge matrix $\rho$. We denote the theory $\CT_\rho$,
\be \CT_{\rho}:\qquad G=U(1)^r\,,\quad V = \C^n\,,\quad\text{charge matrix\;$\rho$} \ee
The charge matrix fits into an exact sequence
\be \raisebox{-.2in}{\includegraphics[width=2.3in]{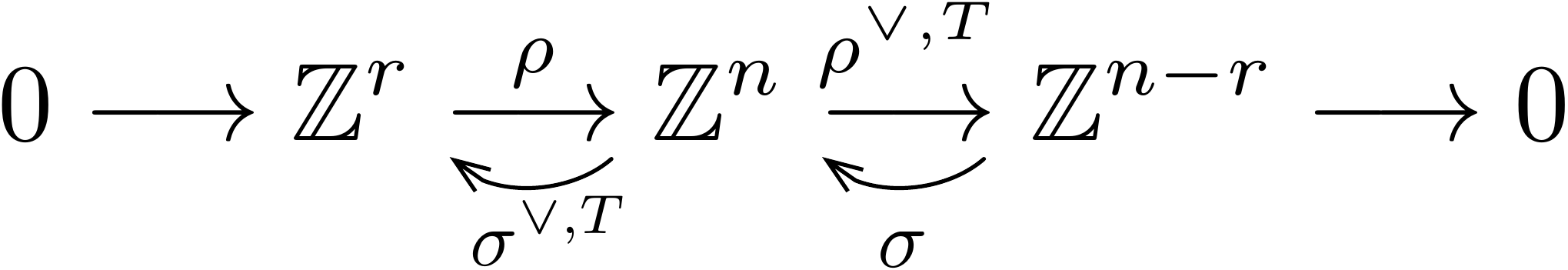}} \label{split-SES2-intro} \ee
A splitting $\sigma$ of this sequence determines charges of $T_F=U(1)^{n-r}$ flavor symmetry on hypermultiplets, and a co-splitting $\sigma^\vee$  determines charges of $T_{\rm top}=U(1)^r$ topological flavor symmetry on monopole operators. 3d mirror symmetry acts by transposing the entire exact sequence \eqref{split-SES2-intro} \cite{dBHOOY},
\be \CT_\rho \;\overset{\text{3d mirror}}\leftrightarrow\; \CT_{\rho^\vee}\,. \ee
The ``symmetry fractionalization group'' of $\CT_\rho$ is the finite group
\be H = \Z^r/\rho^T\rho(\Z^r)\,. \ee
We prove in Section \ref{sec:H} (Prop. \ref{prop:H} and Physics Thm. \ref{Thm:H}) that $|H|$ counts massive vacua.

Coming back to the A-twist VOA for abelian theories, we will establish in Section \ref{sec:VA-BRST} a decomposition schematically of the form
\begin{customprop}{\ref{prop:A}}
There is an isomorphism
\begin{align} \label{propA-intro}
\mc{V}_{\rho}^A &= H_{\rm BRST}\left(\mathfrak{gl}(1)^r, \mc{V}_{\beta\gamma}^{T^*V} \otimes \mc{V}_{bc}^{V}\right) 
\,\cong\, \bigoplus\limits_{h\in H} \CM_{\rho,h}\otimes \CV_{\rho,h} 
\end{align}
\end{customprop}
\noindent where the RHS is an $H$ simple current extension of a product of two VOA's: $\CM_{\rho,0}$ (a particular extension of a product of singlet VOA's) and $\CV_{\rho,0}$ (a lattice VOA with bilinear form $\rho^T\rho$). 

We note that free-field realizations of very closely related 4d $\CN=2$ VOA's have been constructed in \cite{Beem:2019tfp,Kiyoshige:2020uqz,Beem:2021jnm,Beem:2022vfz}; they are of a different character, based intrinsically on the geometry of the Higgs branch, and it would be interesting to connect them to expressions such as \eqref{propA-intro}. Another interesting direction that we won't investigate here is to determine the effect of resolution/deformation parameters in bulk theories boundary VOA's, as well as their categories of modules.
Upon turning on FI parameters to resolve the Higgs branch of a 3d $\CN=4$ theory, the A-twist boundary VOA is expected to become a sheaf of beta-gamma systems on the Higgs branch. The latter has been constructed, specifically in the case of abelian theories, by Kuwabara \cite{Kuwabara}.

The boundary VOA of a 3d $\CN=4$ gauge theory in the B twist has a different sort of complication. In \cite{CG} (see also \cite{garner2022vertex}) the \emph{perturbative} boundary VOA was shown to be equivalent to an affine VOA $\widehat{\mathfrak g_{G,V}}$ associated to a Lie superalgebra
\be \mathfrak g_{G,V} := \mathfrak g \oplus \mathfrak g^* \oplus \Pi V \oplus \Pi V^*\,,  \label{gGV-intro} \ee
whose even part is $T^*\mathfrak g$ and whose odd part is $T^*V$. The nonperturbative boundary VOA further includes a contribution from boundary monopole operators. In nonabelian theories, it is still not known how to describe this contribution; boundary monopoles have nonzero cohomological degree \cite{BDGH,dimofte2018dual} and their presence is likely to induce a particular quotient of $\widehat{\mathfrak g_{G,V}}$.

A close analogue of this setup that may be useful to keep in mind is Chern-Simons theory with compact gauge group at integer level \cite{Witten-Jones,EMSS}. There, a chiral boundary condition supports a Kac-Moody VOA perturbatively, which is corrected to a WZW VOA (a maximal simple quotient) nonperturbatively. The bulk B twist of a 3d $\CN=4$ gauge theory seems to be version of Chern-Simons theory for the Lie superalgebra \eqref{gGV-intro}; but the analogue of the WZW VOA on its boundary is not yet known in general.

For abelian 3d $\CN=4$ theories, we find (applying methods of \cite{dimofte2018dual}) that boundary monopoles all have cohomological degree zero. We identify them as particular simple currents of the perturbative VOA, denoted $\widehat{\mathfrak{g}_\rho}$ for abelian theories; and we propose that they should be accounted for by taking a simple current extension (Definition \ref{def-VB} of Section \ref{subsecperturbalg}):
\be \CV_\rho^B := \text{$\Z^r$ simple current extension of the affine VOA $\widehat{\mathfrak{g}_\rho}$}\,. \ee
We then introduce a free field realization of $\CV_\rho^B$ (Section \ref{sec:Bff}), and use it to prove 3d mirror symmetry of abelian boundary VOA's:
\begin{customthm}{\ref{thm:VOA}}
For 3d-mirror abelian theories, there is an isomorphism of boundary VOA's
\be \CV_\rho^A\,\cong\,\CV_{\rho^\vee}^B\,, \ee
which is compatible with conformal grading, cohomological grading, and global (flavor) symmetries on the two sides.
\end{customthm}

An important special case of this theorem is the mirror symmetry between the A twist of a free hypermultiplet (whose boundary VOA is $\CV_{\beta\gamma}\otimes \CV_{bc}$) and the B twist of SQED with one hypermultiplet (whose boundary VOA is a $\Z$ extension of the affine VOA $\widehat{\mathfrak{gl}(1|1)}$). The equivalence of these VOA's was established in \cite{creutzig2009gl,Creutzig:2011cu,creutzig2013w}, which motivated \cite{BN22} and the more general free-field realizations we use in this paper.

We include a final section (Section \ref{secMorita}) analyzing the structure of VOA's in the A and B twists in order to prepare to describe their categories of modules. The most important result is seemingly technical at this point. We define ``alternative'' versions of the boundary VOA's, schematically (Definitions \ref{def:altA} and \ref{def:altB})
\be \label{ext-orb-intro}
\begin{array}{rl} \wt \CV_{\rho}^A &:= \text{$\Z^r$ simple current extension of $\CV_{\beta\gamma}^{T^*V}$} \\[.2cm]
 \wt\CV_{\rho}^B &:= \text{$(\C^*)^r$ orbifold of $T^*V$ symplectic fermions times a self-dual lattice}\,. \end{array}
\ee
Then we prove
\begin{customthm}{\ref{ThmMirSymbdVOA}}
There are equivalences
\be \CV_\rho^A \;\cong\; \wt\CV_\rho^A \;\cong\; \wt\CV_{\rho^\vee}^B\;\cong\; \CV_{\rho^\vee}^B \ee
up to tensoring with (or removing) self-dual lattice VOA's. In particular, these are all Morita-equivalent VOA's, whose module categories are equivalant on the nose.
\end{customthm}

The significance of this result becomes clearer in Section \ref{sec:highersym} and Section \ref{sec:VOAlines}. The operations on the RHS of \eqref{ext-orb-intro} roughly correspond to starting with free hypermultiplets (in either the A or B twists) and gauging part of their flavor symmetry to obtain a rank$-r$ abelian gauge theory $\CT_\rho$. They are not literally what's induced on a boundary VOA from bulk gauging, as that would lead to the usual $\CV_\rho^A,\CV_\rho^B$. Rather, they mimic at the level of VOA's the simplest manifestations of gauging of module categories. Ultimately, Theorem \ref{ThmMirSymbdVOA}) says that a simple, naive gauging operation on VOA's is sufficient for analyzing their module categories.

\subsection{Infinite higher symmetry}
\label{sec:sym-intro}

The remainder of the paper is devoted to studying the line operators of abelian 3d $\CN=4$ gauge theories, in the A and B topological twists. One of our main tools, both conceptually and practically, is gauging and ungauging operations in the bulk and on the boundary. To this end, we make a small detour in Section \ref{sec:highersym} to explain a rather interesting feature of global symmetries in a topological twist.

The physical 3d $\CN=4$ theories $\CT_\rho$ that we study are defined in such a way that they have no one-form higher/generalized symmetries, in the sense of \cite{GKSW}. Their ordinary zero-form symmetry includes flavor symmetry $T_F=U(1)^{n-r}$ acting on hypermultiplets and $T_{\rm top}=U(1)^r$ acting on monopole operators. The gauge group can be enlarged by gauging a subgroup of $T_F$ and effectively reduced (ungauged) by gauging a subgroup of $T_{\rm top}$ \cite{Witten-SL2}. Physically, these operations are on the same footing.

In the topological A and B twists, the two flavor symmetries go very different ways. In the A twist, $T_{\rm top}$ is naturally complexified, and allows a theory to couple to background complex flat connections, in a manner studied in \cite{CDGG}. However, $T_F$ acts trivially on local operators, because its current becomes $Q$-exact. Mathematically, this setup would be expected to lead to a potentially nontrivial ``topological action'' of $T_F$ --- an action that factors through the homotopy type of $T_F$ \cite{Teleman-ICM}. 
In our case, the twisted theory $\CT_\rho^A$ gains a new, infinite 1-form symmetry $\pi_1(T_F)\simeq\Z^r$ replacing $T_F$ (Physics Theorem \ref{thm:sym}). In the B twist, the role of symmetries is reversed, so that altogether
\be \label{oneform-intro}
\begin{array}{l|cc}
& \text{zero-form} & \text{one-form} \\\hline
\CT_\rho & T_F\times T_{\rm top} \simeq U(1)^{n-r}\times U(1)^r & \text{none}  \\
\CT_\rho^A & T_{\text{top},\C}\simeq (\C^*)^{r} & \pi_1(T_F)\simeq \Z^{n-r} \\
\CT_\rho^B & T_{F,\C}\simeq (\C^*)^{n-r} & \pi_1(T_{\rm top})\simeq \Z^r
\end{array}\ee

The generators of one-form symmetry are flavor vortex lines in the A twist, and Wilson lines in the B twist. Both are intrinsically related to spectral-flow modules in boundary VOA's, in a way that resembles classic constructions in Chern-Simons theory \cite{MooreSeiberg-taming}.

An important consequence of \eqref{oneform-intro} is that while gauging part of $T_F$ (say) induces a $\C^*$ orbifold on the category of line operators in $\CT_\rho^B$, it also induces $\Z$ extension (gauging of the one-form symmetry) on the category of line operators in $\CT_\rho^A$. These are the operations that are being mimicked by alternative VOA's \eqref{ext-orb-intro}. The operations are closely analogous to guaging and unguaging of finite symmetry in physical (not topologically twisted) 3d gauge theories, studied \emph{e.g.} in \cite{HLS-gauging,BBFS,BBFS-3d,NSWZ}. It would be interesting to further study the interplay between topological one-form symmetry and higher symmetries in physical theories, as discussed \emph{e.g.} in \cite{Eckhard:2019jgg,GukovHsinPei} and the preceding references.

\subsection{Modules and line operators}
\label{sec:lines-intro}

In a topological twist of a 3d $\CN=4$ theory, one expects line operators to have the structure of a DG braided tensor category $\wh\CC$. The first examples of such categories, for B twists of 3d $\CN=4$ sigma models, were developed in \cite{KRS,KapustinRozansky,RobertsWillerton}. The objects of the category $\wh\CC$ are the line operators themselves; the morphism space $\text{Hom}_{\wh\CC}(\ell_1,\ell_2)$ among a pair of objects is the space of local operators at their junction; tensor product (\emph{a.k.a.} fusion) is induced from collision of parallel lines; and the braiding isomorphisms $R_{\ell_1,\ell_2}\in \text{Hom}_{\wh\CC}(\ell_1\otimes\ell_2,\ell_2\otimes\ell_1)$ are induced from topologially braided configurations of lines: 
\be  \raisebox{-.7in}{\includegraphics[width=2.1in]{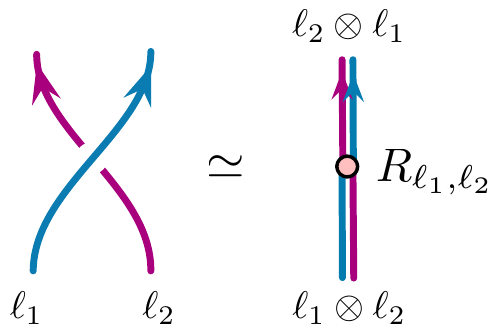}} \ee

Much of this is standard in any 3d TQFT. A feature that is special for twists of supersymmetric theories, as we are considering now, is the DG (differential graded) nature of the category. This is reflected in morphism spaces having defined as the cohomology of a larger space of physical local operators, with respect to a topological supercharge $Q$. Relatedly, one almost always expects line operators to form nontrivial bound states. If $\widehat \CC$ can be modeled as the derived category of an abelian category $\CC$, then $\CC$ will nearly always be non-semisimple. These ideas were further reviewed from a physical perspective in (\emph{e.g.}) \cite{KRS,CCG,CDGG}; related derived structure in modular tensor was studied mathematically by \cite{SchweigertWoike}.

Yet another feature is that twisted 3d $\CN=4$ theories may give rise to Spin or Spin$^c$ TQFT's, depending on the gauge group and matter content, \emph{cf.} \cite{CGP-spin,RW-Coulomb}. This enhances categories of line operators with extra twisted sectors and extra structure. In this paper, however, we will only consider trivial spin structures ---  by ``category of line operators'' we mean the untwisted sector.

An important expectation for the categories of line operators in A and B twists of 3d $\CN=4$ theories is that they reproduce the rings of functions on Coulomb and Higgs branches. Namely, the trivial line operator $\id$  belongs to both A and B twist categories. It is the identity for the tensor product. Junctions between $\id$ and $\id$ simply carry bulk local operators, so
\be \text{Hom}_{\wh\CC_A}(\id,\id) = \C[\CM_{\rm Coulomb}]\,,\qquad \text{Hom}_{\wh\CC_B}(\id,\id) = \C[\CM_{\rm Higgs}]\,. \label{Endid-intro} \ee
The fact that there are nontrivial endomorphims of $\id$ is yet another sign that $\wh\CC$ cannot be a semisimple category.

Several classifications and geometric models of categories of line operators in 3d $\CN=4$ gauge theories are known. Assel and Gomis gave brane constructions of half-BPS line operators \cite{AsselGomis}, adapting techniques previously used to analyze surface operators in 4d gauge theory \cite{HananyHori,GukovWitten,AGGTV,Dimofte:2010tz}. The physical classification was extended in \cite{lineops}, which further gave a computation of morphism spaces. The works \cite{CG,CCG} related line operators to boundary VOA modules, and then used the expected relation \eqref{Endid-intro} to give a new computation of Higgs and Coulomb chiral rings.

Geometric models for categories of line operators were proposed by Hilburn and Yoo, based on a mathematical analysis of A and B twists using techniques from derived geometry --- roughly, they predict that line operators in the A twist of $G,T^*V$ gauge theory look like D-modules on the algebraic loop space of $V/G_\C$, whereas line operators in the B twist look like coherent sheaves on the derived (locally constant) loop space of $V/G_\C$, \cf\ \cite{BF-notes,hilburn2021tate}. Versions of these models have appeared in mathematical work on Coulomb branches  \cite{BFN-lines,Webster-Coulomb,Webster-tilting}.
Hilburn and Raskin proved a 3d-mirror equivalence for categories in the A twist of a free hypermultiplet and the B twist of SQED. An alternative description of line operators in the B twist, essentially an equivariantization of \cite{KRS}, has appeared in work of Oblomkov-Rozansky on knot homology \cite{OR-Chern}.
Most recently, line operators were reconstructed from DG 2-categories of boundary conditions in \cite{HG-Betti}.

However, none of the existing models of categories of line operators in twists of (massless) 3d $\CN=4$ gauge theories capture braided tensor structure. We ultimately access it via boundary VOA's. (The perspective on line operators via Koszul duality proposed by \cite{CostelloPaquette} may provide another route to determining their braided tensor structure, directly from bulk physics. This would be interesting to investigate further.)

The relation between bulk line operators in a topological twist and boundary VOA modules was outlined in \cite{CCG}; it generalizes the celebrated relationship in Chern-Simons theory between Wilson lines and boundary WZW modules \cite{Witten-Jones,EMSS}. For every line operator $\ell\in \wh\CC$, there is an associated module $M(\ell)$, given by local operators at a junction of $\ell$ and the boundary:
\be \label{VOA-intro} \raisebox{-.8in}{\includegraphics[width=2.7in]{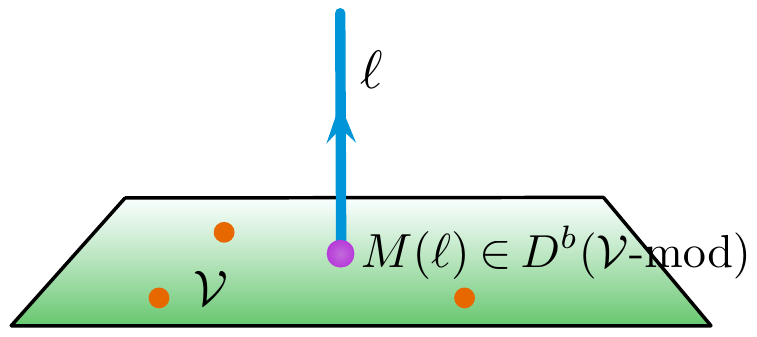}}\qquad  
\ee
This defines a functor
\be \begin{array}{ccc}\widehat \CC &\to & D^b(\CV\text{-mod}) \\
\ell & \mapsto & M(\ell) \end{array}  \ee
which one expects to be a derived equivalence, between the bulk DG category and the \emph{derived} category of VOA modules.

In Section \ref{sec:VOAlines}, we will define abelian categories of modules $\CC_\rho^A$, $\CC_\rho^B$ for the boundary VOA's $\CV_\rho^A$, $\CV_\rho^B$, which both have braided tensor structures, and are equivalent under 3d mirror symmetry. This is a major novelty of the present work as well as \cite{BN22} compared to previous literature. Neither the precise identification of the categories nor their braided tensor structure is entirely straightforward.
A general theory of braided tensor categories of VOA modules --- \emph{a.k.a.} vertex tensor categories --- has been developed by Huang, Lepowsky and Zhang \cite{huang2014logarithmic1,huang2010logarithmic2, huang2010logarithmic3, huang2010logarithmic4, huang2010logarithmic5, huang2010logarithmic6, huang2011logarithmic7, huang2011logarithmic8}. Often it is difficult to show that this theory applies and one of the very few logarithmic theories whose vertex tensor categories are completely understood are the singlet VOA categories \cite{Creutzig:singlet1, Creutzig:singlet2}. The VOAs $\mc{V}_{\rho}^A$ and $\mc{V}_{\rho}^B$ are extensions of many singlet VOAs times some Heisenberg VOA's. The theory of vertex algebra extensions \cite{creutzig2017tensor, creutzig2020direct} then allows us to also understand the braided tensor categories of modules of $\mc{V}_{\rho}^A$ and $\mc{V}_{\rho}^B$. 

More precisely, we first review the work of \cite{BN22}, which implemented this program for the basic mirror pair, the A-twist of a free hypermultiplet and the B twist of SQED (Section \ref{subsec:baby}.) Then we use the idea that gauging in $\CT_\rho^A$ should be implemented by $\Z$ simple current extensions to define (schematically)
\be \CC_\rho^A := \CC_{\beta\gamma}^{T^*V}\big/ \Z^r\,, \label{CA-intro} \ee
and bootstrap the results of \cite{BN22} to prove that this has a braided tensor structure. In Section \ref{subsec:LineB} we independently define
\be \CC_\rho^B := KL_\rho\big/\Z^r\,, \ee
where $KL_\rho$ denotes the Kazhdan-Lusztig category of the affine superalgebra $\widehat{\mathfrak g_\rho}$ (consisting of finite-length grading-restricted modules --- see Definition \ref{def:KLrho}), and now the $\Z^r$ simple current extension is due to boundary monopole operators. We prove (Theorem \ref{Thm:KLKLrho}) that $\CC_\rho^B$ also has a braided tensor structure. Finally, we establish
\begin{customthm}{\ref{ThmCACB}}
There is an equivalence of abelian braided tensor categories
\be \CC_\rho^A\simeq \CC_{\rho^\vee}^B\,. \ee
\end{customthm}

\subsection{Comparison between bulk and boundary categories}

The somewhat technical VOA analysis of Section \ref{sec:VOAlines} is complemented by a comparison with line operators, constructed purely in the bulk 3d theories in Section \ref{sec:bulklines}. In Section \ref{sec:bulklines}, we organize the bulk categories of lines according to the infinite one-form symmetry \eqref{oneform-intro}. In the B twist, the generators of the $\Z^r$ one-form symmetry are half-BPS Wilson lines $\{\W_m\}_{m\in \Z}$, well known from \cite{AsselGomis,lineops}. The ``charged objects'' include hybrids of monodromy defects and Wilson lines $\W_m^{(\upsilon)}$, labelled by an additional parameter $\upsilon\in (\C/\Z)^r$ (a character of $\Z^r$). These have not been previously studied. They are analogues of Wilson-'t Hooft lines in 4d gauge theory

We define a minimal DG category containing the $\W_m^{(\upsilon)}$, which has a block decomposition
\be \wh\CC_\rho^B = \bigoplus_{\upsilon\in(\C/\Z)^r} \wh\CC_{\rho,\upsilon}^B\,, \ee
with each subcategory $\wh\CC_{\rho,\upsilon}^B$ generated by $\{\W_m^{(\upsilon)}\}_{m\in \Z^r}$. We argue in Section \ref{sec:VOAlines} (in particular, by comparing derived morphisms in Section \ref{subsec:HomLines}) that $\wh\CC_\rho^B$ is the derived category of our VOA module category,
\be \wh\CC_\rho^B \simeq D^b \CC_\rho^B\,, \label{DB-intro} \ee
in such a way that the generators $\W_m^{(\upsilon)}$ of $\wh\CC_\rho^B$ map to the simple objects of $\CC_\rho^B$.

Similarly, in the A twist, we recall that the bulk theory has ``flavor vortex'' lines $\{\V_m\}_{m\in \Z^{n-r}}$, which now generate the $\Z^{n-r}$ one-form symmetry. In addition, there are previously unstudied BPS defects $\V_m^{(\upsilon)}$ labelled by $m\in\Z^{n-r}$ and $\upsilon\in (\C/\Z)^{n-r}$. Altogether, these generate a DG category that we argue is the derived category of our VOA module category,
\be \wh\CC_\rho^A = \bigoplus_{\upsilon\in(\C/\Z)^{n-r}} \wh\CC_{\rho,\upsilon}^A \simeq D^b \CC_\rho^A\,. \label{DA-intro} \ee

The results of Section \ref{sec:VOAlines} immediately imply a 3d-mirror equivalence of bulk categories
\be \wh\CC_\rho^A  \simeq  \wh\CC_{\rho^\vee}^B\,. \ee
We also expect that the braided tensor structure of $\CC_\rho^A,\CC_\rho^B$ will lift to the derived categories. In particular, by a result of  \cite{creutzig2020tensor} for affine $\mathfrak{gl}(1|1)$ together with properties of the lifting functor that we use to construct $\CC_\rho^A,\CC_\rho^B$, it follows that the tensor product in the abelian categories is exact. Under some mild conditions, it should then lift to a tensor product on the derived categories. We do not analyze this rigorously here.

We expect that the derived categories $\wh\CC_\rho^A,\wh\CC_\rho^B$ are also closely related to the ``Betti'' geometric models of line operators in the B twist studied in \cite{OR-Chern,HG-Betti}. This appears to be instance of Koszul duality, which we will  pursue elsewhere.

\subsection{Acknowledgements}

We are grateful to many for insightful comments and advice in regards to this work, especially Mathew Bullimore, Niklas Garner, Jusin Hilburn, Eugene Gorsky, Pavel Safronov, and Daniel Zhang. We also thank Andrea Ferrari and Christopher Beem for coordinating a manuscript submission with us.
TC's research is supported by NSERC Grant RES0048511.
TD's research is supported by EPSRC Open Fellowship EP/W020939/1. WN's research is supported by a UC Davis dissertation year fellowship.
AB, TD, and WN were also supported by NSF CAREER Award DMS-1753077.

\section{Setup and notation}
\label{sec:setup}

\subsection{Simple abelian 3d gauge theories}
\label{sec:Trho}

As noted in the introduction, we consider in this paper abelian 3d $\CN=4$ theories, with a few further restrictions:

\begin{enumerate}
    \item The gauge group $G$ is continuous (no discrete factors), and thus isomorphic to $U(1)^r$ (for some $r\geq 0$).
    \item The gauge group acts faithfully on a matter representation of the form $T^*V \simeq V^*\oplus V$, where $V$ is a complex linear representation of $V$, and $V^*$ its dual. We let $V=\C^n$, and may assume that the action has been diagonalized, meaning the action map $G\to U(V)$ factors through a homomorphism $\varphi:G\to U(1)^n$, \emph{i.e.}
    \be \label{gauge-hom} \varphi: U(1)^r \to U(1)^n\,. \ee
    Faithfulness of the action means that $\varphi$ is injective.
\end{enumerate}
We will call such theories ``simple.'' One of the things that is simple about them is the action of 3d mirror symmetry. We review this in a moment, after first translating the requirements above to statements about charge lattices.

\subsubsection{Charge lattices}\label{sec:chargelattice}

We first translate faithfulness to a statement about charge lattices.
Using the exponential map $\theta\mapsto e^{2\pi i \theta}$ to identify $\R/\Z$ with $U(1)$, we see that the homomorphism \eqref{gauge-hom} is induced from a linear map
\be \rho :\R^r \to \R^n \ee
that preserves integer sublattices, $\rho(\Z^r)\subseteq \Z^n$. (This restriction ensures that $\varphi=e^{2\pi i \rho}$ is well defined.) One would usually call the integer-valued matrix $\rho$ the charge matrix of the theory. We'll write its components as
\be \rho = \rho^i{}_a\,,\qquad i=1,...,n\,,\quad a=1,...,r \ee
such that $\rho^i{}_a$ is the charge of the $i$-th hypermultiplet under the $a$-th factor in the gauge group.

We also define
\be \Lambda := \rho(\Z^r) \ee
to be the charge lattice itself. It is a sublattice of $\Z^n$, with basis vectors $\rho_a$ (the columns of $\rho$) and bilinear form $$(\rho^\T\rho)_{ab},$$ where $\rho^\T$ denotes the transpose matrix. Denote by $\Lambda_\R:=\Lambda\otimes_\Z\R$ the $\R$-span of $\Lambda$.

We then have:
\begin{Lem}
The faithfulness condition (2) is equivalent to the cokernel of the restricted map $\rho:\Z^r\to \Z^n$ being rank $n-r$ and torsion-free. In other words, the gauge group acts faithfully if and only if there exists an exact sequence
\be 0\longrightarrow \Z^r \overset{\rho}{\longrightarrow} \Z^n \overset{\tau}{\longrightarrow} \Z^{n-r} \longrightarrow 0\,. \label{SES} \ee
Alternatively, in lattice terminology, condition (2) is equivalent to $\Lambda$ being a \emph{complete sublattice}, meaning
\be \Lambda_\R\cap \Z^n = \Lambda\,. \ee
\end{Lem}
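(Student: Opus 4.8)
The plan is to reduce everything to a statement about the single linear map $\rho$ and its pullback of the integer lattice, and to prove the chain of equivalences
\[
\text{faithful}\;\iff\;\rho^{-1}(\Z^n)=\Z^r\;\iff\;\Lambda_\R\cap\Z^n=\Lambda\;\iff\;\text{SES \eqref{SES} exists}.
\]
First I would compute the kernel of $\varphi=e^{2\pi i\rho}$ directly on the tori. Since $\rho$ preserves integer lattices, it descends to $\varphi\colon\R^r/\Z^r\to\R^n/\Z^n$, and a class $[\theta]$ lies in $\ker\varphi$ exactly when $\rho(\theta)\in\Z^n$. Hence $\ker\varphi=\rho^{-1}(\Z^n)/\Z^r$, and faithfulness (condition 2) is \emph{equivalent} to the purely lattice-theoretic identity $\rho^{-1}(\Z^n)=\Z^r$. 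This is the conceptual bridge from the Lie-group statement to linear algebra over $\Z$.

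Next I would isolate the rank condition. The point is that $\rho^{-1}(\Z^n)$ always contains the real subspace $\ker\rho\subset\R^r$ (as $\rho$ sends it to $0\in\Z^n$). So if $\rho$ fails to be injective over $\R$, then $\rho^{-1}(\Z^n)/\Z^r$ contains the nontrivial connected subgroup $(\ker\rho + \Z^r)/\Z^r$, and $\varphi$ cannot be injective. Therefore faithfulness forces $\rho$ to have full column rank $r$; equivalently $\rho\colon\Z^r\to\Z^n$ is injective and $\Lambda_\R=\rho(\R^r)$ is $r$-dimensional. Under this assumption $\rho$ restricts to a linear isomorphism $\R^r\xrightarrow{\sim}\Lambda_\R$ carrying $\Z^r$ onto $\Lambda$ and carrying $\rho^{-1}(\Z^n)$ bijectively onto $\Lambda_\R\cap\Z^n$. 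Transporting the identity $\rho^{-1}(\Z^n)=\Z^r$ across this isomorphism turns it into $\Lambda_\R\cap\Z^n=\Lambda$, which is precisely the completeness condition. This settles faithfulness $\iff$ $\Lambda$ complete.

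Finally I would relate completeness to the short exact sequence. Because $\Lambda_\R$ is the real span of integer vectors it is a rational subspace, so its integer points are rational combinations of an integer basis; consequently $\Lambda_\R\cap\Z^n=\Lambda_\Q\cap\Z^n$, and completeness is the same as \emph{saturation} of $\Lambda$ in $\Z^n$. Saturation is in turn equivalent to torsion-freeness of the quotient $\Z^n/\Lambda$: a class is torsion iff some nonzero integer multiple of a vector lands in $\Lambda$, which saturation forbids, and conversely. A finitely generated torsion-free abelian group is free, here of rank $n-\mathrm{rank}\,\Lambda=n-r$, so a splitting exists and we obtain the exact sequence $0\to\Z^r\xrightarrow{\rho}\Z^n\xrightarrow{\tau}\Z^{n-r}\to0$; conversely the existence of such a sequence forces $\rho$ injective with free, hence torsion-free, cokernel of rank $n-r$, giving completeness back. (If one prefers, all of this last paragraph can be made explicit at one stroke using the Smith normal form of $\rho$, reading off that the cokernel is $\Z^{\,n-s}\oplus\bigoplus_i\Z/d_i$ with $s$ the rank and $d_i$ the invariant factors, so that rank $n-r$ and torsion-freeness amount to $s=r$ and all $d_i=1$.)

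I expect the main obstacle to be the interface between the compact-group statement and the lattice statement in the first two paragraphs: one must be careful to separate the \emph{continuous} kernel coming from rank deficiency of $\rho$ from the \emph{discrete} kernel measuring non-saturation, since it is only the discrete part that corresponds to the completeness condition. The remaining equivalence between real-span saturation and torsion-freeness of the quotient is routine, but should be stated cleanly to avoid conflating $\Lambda_\R$ with $\Lambda_\Q$.
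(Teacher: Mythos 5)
Your proof is correct. It establishes the same chain of equivalences as the paper, but the organization and the central tool differ: the paper's proof puts Smith normal form at the center, using the diagonalized map $\hat\rho$ both to identify $\ker\varphi\simeq(\hat\rho^{diag})^{-1}(\Z^r)/\Z^r\simeq\mathrm{coker}\,\hat\rho^{diag}$ and to read off that $\mathrm{coker}\,\rho\simeq\mathrm{coker}\,\hat\rho^{diag}\oplus\Z^{n-r}$, so that triviality of $\ker\varphi$ becomes the statement that the cokernel is free of rank $n-r$; completeness is then handled as a short final remark via $\ker\varphi\simeq(\Lambda_\R\cap\Z^n)/\Lambda$. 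You instead work coordinate-free: you compute $\ker\varphi=\rho^{-1}(\Z^n)/\Z^r$ directly, make the continuous-versus-discrete kernel split explicit (the paper only asserts ``injectivity of $\varphi$ implies injectivity of $\rho$'' without isolating the connected subgroup coming from $\ker_\R\rho$), route the argument \emph{through} completeness as the middle term, and then invoke saturation plus the structure theorem for finitely generated abelian groups (torsion-free $\Rightarrow$ free) to manufacture the exact sequence, relegating Smith normal form to an optional aside. What the paper's route buys is a single self-contained computation that produces all the invariants at once; what yours buys is basis-independence, a cleaner conceptual separation of the two possible failure modes of faithfulness, and a correctly flagged treatment of the one subtle point ($\Lambda_\R\cap\Z^n=\Lambda_\Q\cap\Z^n$ by rationality of $\Lambda_\R$), which the paper glosses over. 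Both are complete proofs.
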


\noindent\emph{Proof.} 
After an integral change of basis in both $\Z^r$ and $\Z^n$, we can transform $\rho$ to a \emph{diagonal} map $\hat \rho:\Z^r\to \Z^n$ whose kernel and cokernel are isomorphic to those of $\rho$. (This is sometimes known as Smith normal form. Diagonal means $\hat\rho^i{}_a=0$ unless $i=a$.) Injectivity of $\varphi$ implies injectivity of $\rho,\,\hat \rho$, which in turn requires $n\geq r$. We then have a direct sum decomposition
\be \hat\rho = \hat\rho^{diag} \oplus 0 \,:\, \Z^r\to \Z^r\oplus \Z^{n-r}\,, \label{Smith} \ee
with a nondegenerate $r\times r$ diagonal matrix $\hat\rho^{diag}$.
Injectivity of $\varphi$ also leads to a more subtle condition.
Even when $\rho$ is injective, $\varphi$ may still have a kernel consisting of non-integral points of $\R^r$ that are mapped by $\rho$ to integer points of $\R^n$. Namely, denoting by $(\hat\rho^{diag})^{-1}(\Z^r)$ the preimage {in $\R^r$} of the first summand in \eqref{Smith}, we have
\be \text{ker}\,\varphi \simeq (\hat\rho^{diag})^{-1}(\Z^r)\,\big/\,\Z^r \simeq \Z^r / \hat\rho^{diag}(\Z^r) = \text{coker}\,\hat\rho^{diag}\,. \ee
This is a finite group. Moreover, since $\text{coker}\,\rho \simeq \text{coker}\hat\rho = \text{coker}\,\hat\rho^{diag} \oplus \Z^{n-r}$, we conclude that $\text{ker}\,\varphi$  is trivial if and only if the restriction of $\rho$ to integer points can be completed to an exact sequence \eqref{SES}.

The statement about complete lattices follows by further observing that $\Lambda_\R\cap\Z^n$ contains precisely the points of $\R^r$ mapped by $\rho$ integral points of $\Z^n$, so
\be \text{ker}\,\varphi  \simeq \big(\Lambda_\R\cap\Z^n\big)\big/\Lambda\,,  \ee
whence faithfulness is equivalent to $\Lambda$ being complete.
\hfill $\square$

\subsubsection{Global symmetry}\label{sec:global}

We summarize for future reference the global symmetries of a 3d $\CN=4$ theory $\CT_\rho$ with $n\times r$ charge matrix $\rho$, \emph{i.e.} with gauge group $U(1)^r$ acting faithfully on $n$ hypermultiplets. The ordinary (zero-form) symmetry structure is well known, see \emph{e.g.} \cite{IS}. There is:
\begin{itemize}
    \item An $SU(2)_H$ R-symmetry acting on hypermultiplet scalars (and vectormultiplet fermions), inducing hyperkahler rotations of the Higgs branch.
    \item An $SU(2)_C$ R-symmetry acting on vectormultiplet scalars (and both hypermultiplet and vectormultiplet fermions), inducing hyperk\"ahler rotations of the Coulomb branch.
    \item A flavor symmetry with maximal torus $T_F=U(1)^{n-r}$ acting on hypermultiplets, and thus on the Higgs branch.
    \item A topological flavor symmetry with maximal torus $T_{top}=U(1)^r$ acting on monopole operators, and thus on the Coulomb branch.
\end{itemize}
Both of the flavor symmetries may have nonabelian enhancements in special cases, which do not play a role in this paper.

The charges of hypermultiplets under the flavor symmetry $T_H\simeq U(1)^{n-r}$ are only defined modulo gauge charges. One can fix this ambiguity by choosing a splitting of the exact sequence \eqref{SES},
\be \raisebox{-.2in}{\includegraphics[width=2.3in]{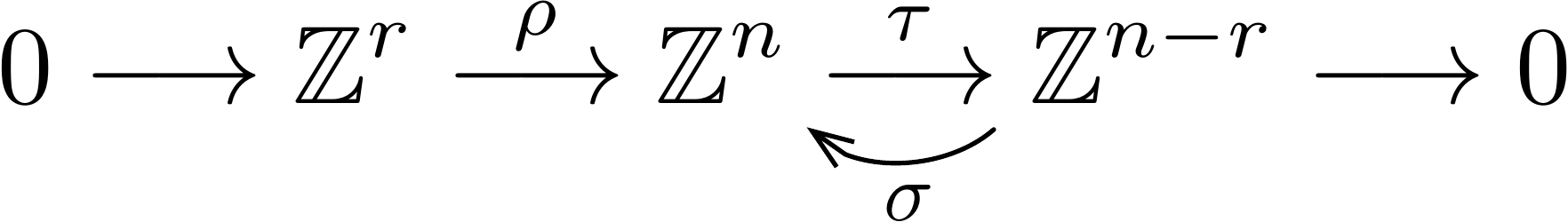}} \label{split-SES} \ee
(Recall that a splitting is a map $\sigma$ satisfying $\tau\sigma=\text{id}_{\Z^{n-r}}$. The change of basis to Smith normal form \eqref{Smith} shows a splitting always exists.) Then the components $\sigma^i{}_{\alpha}$ encode the charge of the $i$-th hypermultiplet under the $\alpha$-th factor in the flavor group. Correspondingly the flavor symmetry embeds into the full abelian $U(1)^n$ symmetry of $V=\C^n$ via
\be e^{2\pi i\sigma}: U(1)^{n-r}\to U(1)^n\,.\ee

In general, 3d $\CN=4$ theories can also have one-form symmetries \cite{GKSW,HLS-gauging}. They would arise from discrete factors in the gauge group, giving rise to one-form symmetry acting on Wilson lines; or from a discrete subgroup of the gauge group acting trivially, giving rise to one-form symmetry acting on vortex lines. These phenomena and their interaction with zero-form symmetry can be quite rich, and were recently studied in a 3d $\CN=4$ context in \cite{BBFS}. However, the two origins of one-form symmetry are precisely what we have ruled out in our definition of \emph{simple} abelian gauge theories. We thus expect that \emph{simple abelian 3d $\CN=4$ theories are precisely the theories with trivial one-form symmetry}.

It should be quite interesting to extend the analysis in current paper beyond simple theories in the future.

\subsubsection{3d mirror symmetry}

For a simple theory with exact sequence \eqref{SES}, the theory $\CT_\rho$ defined by charge matrix $\rho$ has a 3d mirror \cite{dBHOOY,KapustinStrassler}, denoted $\CT_{\rho^\vee}$, with gauge group $G=U(1)^{n-r}$, $n$ hypermultiplets $V=\C^n$, and mirror charge matrix
\be \rho^\vee := \tau^\T\,, \ee
fitting into the dual exact sequence
\be 0\longrightarrow \Z^{n-r} \overset{\rho^\vee}{\longrightarrow} \Z^n \overset{\tau^\vee=\rho^\T}{\longrightarrow} \Z^{r} \longrightarrow 0\,. \label{SES-mirror} \ee
Note that duality here uses the standard Euclidean inner product on $\Z^r,\Z^n,\Z^{n-r}$, and $M^\T$ denotes the usual transpose of a matrix $M$.

In terms of lattices, mirror symmetry swaps $\Lambda$ and the \emph{orthogonal lattice}
\be \Lambda^\perp := \{\mu\in \Z^n\,|\, \mu\cdot\lambda=0\;\forall\,\lambda\in\Lambda\} = \rho^\vee(\Z^{n-r})\,. \label{Lambda-perp} \ee
The requirement that $\Lambda$ be a complete lattice guarantees that $(\Lambda^\perp)^\perp=\Lambda$.

3d mirror symmetry is also sometimes phrased in terms of gauge \emph{and} flavor charges. Upon choosing a splitting $\sigma$ as in \eqref{split-SES}, one can form an $n\times n$ matrix $(\rho\;\sigma)$ that is unimodular. 3d mirror symmetry than acts by taking the inverse-transpose of this matrix:
\be (\rho^\vee,\sigma^\vee) := (\rho,\sigma)^{-1,\T}  \label{mirror-matrix} \ee
where $\rho^\vee$ is the mirror charge matrix, satisfying $\rho^{\vee\T}\rho=0$; and $\sigma^\vee$, encoding mirror flavor charges, is a splitting of the dual sequence, chosen to satisfy $\sigma^{\vee\T}\sigma=0$.

Physically, 3d mirror symmetry swaps the role of the $T_F$ flavor symmetry (acting on hypermultiplets) and the $T_{top}$ topological flavor symmetry (acting on monopole operators, and thus on the Coulomb branch). It also swaps the two R-symmetries:
\be \text{3d MS:} \qquad \begin{array}{c@{\;\;\leftrightarrow\;\;}c}
 SU(2)_H & SU(2)_C \\
 T_F & T_{\rm top} \\
 \text{vortex lines} & \text{Wilson lines} \\
  \text{A twist} & \text{B twist} \end{array} \ee
Thus, if we keep track of the R-symmetry action, hypermultiplets are replaced in a mirror theory with twisted hypermultiplets (which by definition are hypermultiplets with a swapped action of $SU(2)_H$ and $SU(2)_C$) and vectormultiplets are similarly replaced by twisted vectormultiplets. The swap of R-symmetries is also why A and B topological twists are exchanged in mirror theories.

\subsubsection{Vacuum structure of simple theories}
\label{sec:smooth}

The Higgs and Coulomb branches of simple abelian gauge theories are hypertoric varieties, whose geometry may be analyzed via the combinatorics of hyperplane arrangements. This is reviewed mathematically in \cite{Proudfoot,Gale} and from a physical 3d $\CN=4$ perspective in \cite{BDGH}. We collect a few results here.

After turning on generic real mass and real FI parameters (associated to $T_F$ and $T_{top}$ flavor symmetries), both the Higgs and Coulomb branches become massive. They support a finite number of vacua, located at points on a partially resolved Higgs branch, or a partially resolved Coulomb branch (the same vacua appear on both). More precisely, the vacua are in 1-1 correspondence with subsets $I\subset\{1,...,n\}$ of size $r$, such that the corresponding minor of the charge matrix satisfies
\be \text{det}(\rho^I) \neq 0\,,\quad\text{where}\quad \rho^{I}:= (\rho^i{}_a)^{i\in I}_{a\in\{1,...,r\}} \label{det-vac}. \ee
The complement $I^c$ contains the hypermultiplets that are set to zero in the vacuum, due to mass parameters. The set $I$ itself contains hypermultiplets that are set to non-zero values to satisfy real moment-map constraints, and are then fixed by the action of the rank-$r$ gauge group $G$. The constraint \eqref{det-vac} guarantees that the Lie algebra of $G$ acts freely. However, when $\text{det}(\rho^I) \neq \pm 1$, a finite subgroup of $G$ may remain unbroken. In this case the local neighborhood of the vacuum has the structure of an orbifold singularity of order $|\text{det}(\rho^I)|$ on both the Higgs and Coulomb branches, and the theory is not quite fully massive.

This vacuum analysis is beautifully compatible with mirror symmetry, due to a simple identity relating minors of a matrix and its inverse. In the case at hand, the inverse relation \eqref{mirror-matrix} guarantees that for each subset $I$ of size $r$ and its complement $I^c$ we have
\be \text{det}\,\rho^I\, =\, \text{det}\,(\rho^\vee)^{I^c}\,. \ee
This gives another way to see that both vacua and orders of orbifold singularities on Higgs and Coulomb branches must match.

A rough measure of the non-triviality of a simple abelian theory, which we'll refine in Section \ref{sec:H}, is the count of vacua (upon generic mass and FI deformation) weighted by the square of their orbifold orders --- or, equivalently, weighted by the product of their Higgs-branch and Coulomb-branch orbifold orders. We'll denote this quantity $N_{\rm vac}(\rho)$. It is given by summing over all minors of the charge matrix,
\be N_{\rm vac}(\rho) :=  \sum_{I\subset\{1,...,n\}} \big(\text{det}\,\rho^I\big)\big(\text{det}\,(\rho^\vee)^{I^c}\big) = \sum_{I\subset\{1,...,n\}}\big|\det\,\rho^I\big|^2\,. \label{Nvac} \ee

\subsubsection{Examples}
\label{sec:eg1}

The theory $\CT_{(1)}$ with $\rho=(1)$ is 3d $\CN=4$ SQED. It has $G=U(1)$ acting on a single hypermultiplet, with charge 1. Its exact sequence is
\be \text{SQED}_1:\quad  0 \longrightarrow \Z \xrightarrow{\rho=1} \Z \xrightarrow{\tau=0} 0 \longrightarrow 0\,. \label{SQED-seq} \ee
Its 3d mirror $\CT_{0}$ is a free hypermultiplet. (More accurately, the mirror of SQED$_1$ is a twisted hypermultiplet, in that the actions of $SU(2)_H$ and $SU(2)_C$ R-symmetries are swapped.)
It has $G=1$ acting (trivially) on $V=\C$. Its exact sequence is
\be \text{hyper}:\quad  0 \longrightarrow 0 \xrightarrow{\rho^\vee=0}  \Z \xrightarrow{\tau^\vee=(1)} \Z \longrightarrow 0\,. \label{hyper-seq} \ee
Both of these theories have $N_{\rm vac}=1$.

A more interesting theory is SQED with $n$ hypermultiplets of charge 1, meaning $G=U(1)$, $V=\C^n$, and $\rho=(1,1,...,1)^\T$,
\be \text{SQED}_n:\quad  0 \longrightarrow \Z \xrightarrow{\rho=\left(\begin{smallmatrix}1\\\vdots \\1\end{smallmatrix}\right)} \Z^n \overset{\tau}\longrightarrow \Z^{n-1} \longrightarrow 0\,.  \ee
The splitting in the exact sequence is given by the $(n-1)\times n$ matrix
\be \tau = \left(\begin{smallmatrix} 1&-1&0&\cdots & 0 \\ 0&1&-1&\cdots & 0 \\ &&&\ddots& \\ 0&0&0&&-1 \end{smallmatrix}\right)\,. \ee
This theory has $N_{\rm vac}(\rho)=n$ smooth massive vacua, as is most easily seen from the minors of $\rho$. Its Higgs branch resolves to $T^*\mathbb{CP}^{n-1}$, and the massive vacua are fixed points of a generic $U(1)$ action thereon.

Finally, a simple example of a theory with orbifold vacua is a modification of SQED$_2$, with a hypermultiplet of charge one and a hypermultiplet of charge \emph{two}.
\be \widetilde{\text{SQED}_2} :\quad 0\longrightarrow \Z \xrightarrow{\rho = \left(\begin{smallmatrix}1\\2\end{smallmatrix}\right)} \Z^2 \xrightarrow{\tau=\left(\begin{smallmatrix}2 & -1\end{smallmatrix}\right)} \Z \longrightarrow 0  \ee
It admits a splitting $\sigma = \left(\begin{smallmatrix}0\\-1\end{smallmatrix}\right) $. Looking at $1\times 1$ minors of $\rho$, it is clear that this theory has two vacua, one smooth and one that is a $\Z_2$ orbifold on both the Higgs and Coulomb branches. Now we have $N_{\rm vac}(\rho) = 1+2^2=5$.

\subsection{An invariant of abelian theories}
\label{sec:H}

A natural question to ask is how far one can simplify an abelian 3d $\CN=4$ theory using field redefinitions and dualities. We will give a partial answer here, together with a necessary and sufficient criterion for a theory to be IR free --- meaning it is a product of free hypermultiplets \eqref{hyper-seq} and SQED$_1$'s \eqref{SQED-seq} (which flow to free twisted hypermultipelts in the IR%
\footnote{This flow is just rephrasing the statement that SQED$_1$ is mirror to a free twisted hypermultiplet. 
We remind the reader that 3d mirror symmetry relates pairs of gauge theories that become the same \emph{in the infrared}.}%
).

Given a simple (in the sense of Section \ref{sec:setup}) abelian theory with charge matrix $\rho$, we can obtain an equivalent theory by
\begin{itemize}
\item Permuting the hypermultiplets, and/or negating their charges. Abstractly, this is an action of the Weyl group of the full symmetry $USp(n)$ of $n$ hypermultiplets, which is the stabilizer of the torus $U(1)^n$ through which we have required the gauge group to act. In other words, this is the permutation group that preserves weight spaces. Such a permutation acts on $\rho$ by
\be \rho \mapsto P \rho \ee
where $P \in GL(n,\Z)$ has exactly one $\pm1$ entry in each row and each column. (Recall that this condition on an integer matrix $P$ is equivalent to $P^TP=I$, \emph{i.e.} $P\in O(n,\Z)$.)

Geometrically, these permutations are the automorphisms of the Euclidean lattice $\Z^n$, and act on the embedding of the charge lattice $\Lambda\subseteq\Z^n$.

\item Redefining the vectormultiplets, by a transformation that is invertible over the integers. This amounts to a redefinition of gauge charges. The charge matrix transforms as
\be \rho \mapsto \rho M \ee
for any $M\in GL(r,\Z)$. These are automorphisms of the charge lattice $\Lambda$ itself.
\end{itemize}
All together, field redefinitions correspond to isomorphisms of the charge lattice $\Lambda$ and its embedding in $\Z^n$. Therefore, isomorphism classes of simple abelian theories are classified \emph{at most} by isomorphism classes of embedded sublattices in $\Z^n$.

However, dualities may further relate such lattice isomorphism classes. For example, 3d mirror symmetry swaps a lattice with its orthogonal complement, as in \eqref{Lambda-perp}. We are therefore led to define a coarser invariant of abelian theories, which may not be a complete invariant, but is guaranteed to be preserved by any quantum duality.

We consider the group
\be H := \text{coker}(\rho^\T\rho) =  \Z^r / \text{im}({\rho^\T\rho})\,, \ee
noting that $\rho^T\rho$ is the intrinsic bilinear form on the charge lattice $\Lambda$ (independent of its embedding in $\Z^n$. The abelian group $H$ is finite (since $\rho$ has full rank $r$, whence $\rho^\T\rho$ is non-degenerate). Its order is simply $|H| = \text{det}(\rho^\T\rho)$. It has several beautiful mathematical and physical properties:

\begin{Prop} \label{prop:H} \phantom{x}
\begin{itemize}
\item[(i)]  A field redefinition $\rho\mapsto P\rho M$ induces an isomorphism of $H$.
\item[(ii)] Recalling the definition $\rho^\vee=\tau^\T$ from \eqref{SES-mirror}, there are isomorphisms
  \be\label{H-iso} H =  \Z^r / \text{im}({\rho^\T\rho}) \simeq \Z^{n-r} / \text{im}({\rho^\vee{}^\T\rho^\vee})   \simeq \Z^n/(\Lambda\oplus\Lambda^\perp) \ee
  In particular, $H$ is invariant under 3d mirror symmetry.
\item[(iii)] The order of $H$ is the weighted count of massive vacua introduced in \eqref{Nvac}:
\be |H| = N_{\rm vac}(\rho)\,. \ee
\item[(iv)] $H$ is the ``symmetry fractionalization group'' for Wilson lines (or, by (ii) and 3d mirror symmetry, for vortex lines). Namely,
\be H \,\simeq\, \frac{\text{possible flavor charges of local operators at junctions of Wilson lines}}{\text{possible flavor charges of bulk local operators}} \,.
\label{H-frac} \ee
(Here we are ignoring R-symmetry, and considering only global flavor symmetry.) A restatement of \eqref{H-frac}, is that if $F$ denotes the hypermultiplet flavor symmetry in the presence of Wilson lines, the symmetry acting faithfully on bulk local operators is $F/H$.
\end{itemize}
\end{Prop}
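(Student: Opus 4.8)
The plan is to recognize $H=\Z^r/\mathrm{im}(\rho^\T\rho)$ as the \emph{discriminant group} of the charge lattice $\Lambda$ with its intrinsic form $\rho^\T\rho$, and to deduce all four assertions from a single explicit surjection. The one structural fact I would establish at the outset is that dualizing the split short exact sequence \eqref{SES} (a sequence of free $\Z$-modules, hence split, so $\mathrm{Hom}(-,\Z)$ preserves exactness) yields
\begin{equation}
0\longrightarrow \Z^{n-r}\overset{\rho^\vee}{\longrightarrow}\Z^n\overset{\rho^\T}{\longrightarrow}\Z^r\longrightarrow 0,
\end{equation}
where I use $\rho^\vee=\tau^\T$ and the identification of each lattice with its dual via the standard form. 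Thus $\rho^\T$ is \emph{surjective} with $\ker\rho^\T=\mathrm{im}\,\rho^\vee=\Lambda^\perp$ (\emph{cf.} \eqref{Lambda-perp}). This is the input that makes the rest routine.

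For (ii) I would introduce the composite
\begin{equation}
\phi:\ \Z^n\overset{\rho^\T}{\longrightarrow}\Z^r\longrightarrow \Z^r/\mathrm{im}(\rho^\T\rho)=H .
\end{equation}
Surjectivity of $\phi$ is immediate from surjectivity of $\rho^\T$. For the kernel, $v\in\ker\phi$ iff $\rho^\T v=\rho^\T\rho\,w$ for some $w$, iff $v-\rho w\in\ker\rho^\T=\Lambda^\perp$, iff $v\in\Lambda+\Lambda^\perp=\Lambda\oplus\Lambda^\perp$ (the sum being direct since orthogonality and nondegeneracy force $\Lambda\cap\Lambda^\perp=0$). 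Hence $\phi$ descends to an isomorphism $\Z^n/(\Lambda\oplus\Lambda^\perp)\xrightarrow{\sim}H$. Running the identical argument with $\rho^\vee$ in place of $\rho$ — legitimate because completeness gives $(\Lambda^\perp)^\perp=\Lambda$ — produces $\Z^n/(\Lambda^\perp\oplus\Lambda)\xrightarrow{\sim}\Z^{n-r}/\mathrm{im}(\rho^{\vee\T}\rho^\vee)$. Since the two middle quotients coincide, this establishes the full chain \eqref{H-iso}, and in particular mirror invariance.

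Part (i) I would dispatch separately and quickly: under $\rho\mapsto P\rho M$ with $P^\T P=I$ and $M\in GL(r,\Z)$, the form becomes $M^\T(\rho^\T\rho)M$, and since $M(\Z^r)=\Z^r$ we have $\mathrm{im}\big(M^\T(\rho^\T\rho)M\big)=M^\T\,\mathrm{im}(\rho^\T\rho)$; as $M^\T$ is an automorphism of $\Z^r$ it induces the desired isomorphism of cokernels. (Conceptually this just records that $H$ depends only on the abstract lattice-with-form.) For (iii), since $\rho^\T\rho$ is square and positive definite we have $|H|=\det(\rho^\T\rho)$, and the Cauchy–Binet formula gives $\det(\rho^\T\rho)=\sum_{|I|=r}(\det\rho^I)^2=\sum_I|\det\rho^I|^2$, which is precisely $N_{\rm vac}(\rho)$ of \eqref{Nvac} (the middle equality there being the already-established minor identity $\det\rho^I=\det(\rho^\vee)^{I^c}$).

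The genuinely subtle part — and where I expect the main obstacle — is the physical statement (iv), since it requires pinning down what the ``flavor charge'' of a \emph{non}-gauge-invariant junction operator should mean. Operators ending a Wilson line of gauge charge $m\in\Z^r$ are built from hypermultiplet fields of $U(1)^n$-weight $c$ with $\rho^\T c=m$, and the naive $T_F=U(1)^n/\mathrm{im}\,\rho$ action on such $c$ is only projective — which is the very origin of fractionalization. My plan is to define the flavor charge as the orthogonal projection $\pi(c)=c-\rho(\rho^\T\rho)^{-1}\rho^\T c\in\Lambda^\perp_\R$, which is honest (well defined and $T_F$-equivariant) and reduces to $c$ itself on gauge-invariant ($m=0$) operators. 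Then bulk charges fill out $\pi(\Lambda^\perp)=\Lambda^\perp$, junction charges fill out $\pi(\Z^n)$ (as $m$ and the $\Lambda^\perp$-freedom range, $c$ sweeps all of $\Z^n$), and I would compute
\begin{equation}
\frac{\pi(\Z^n)}{\Lambda^\perp}\ \cong\ \frac{\Z^n}{\Lambda\oplus\Lambda^\perp}\ \cong\ H,
\end{equation}
using that $\pi$ has kernel $\Lambda_\R$, so its preimage of $\Lambda^\perp$ meets $\Z^n$ in exactly $\Lambda\oplus\Lambda^\perp$ by completeness, together with part (ii). Dualizing this inclusion of weight lattices yields the restatement that $F/H$ is the symmetry acting faithfully in the bulk. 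The work here is conceptual rather than computational: justifying that $\pi$ is the correct notion of flavor charge and that every element of $\pi(\Z^n)$ is realized by a genuine Wilson-line junction operator.
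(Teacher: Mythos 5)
Your proposal is correct and takes essentially the same route as the paper: parts (i)--(iii) coincide with the paper's proof (transport of the cokernel by $M^\T$, the isomorphism $\Z^n/(\Lambda\oplus\Lambda^\perp)\simeq H$ obtained from the surjection $\rho^\T$ of the dualized exact sequence together with the symmetry $\rho\leftrightarrow\rho^\vee$, and Cauchy--Binet). For (iv), the paper records the flavor charge of a junction operator by pushing its $U(1)^n$ weight forward along $\tau$ (so junction charges fill $\tau(\Z^n)=\Z^{n-r}$ and bulk charges fill $\mathrm{im}(\tau\tau^\T)$), whereas you use the orthogonal projection $\pi$ onto $\Lambda^\perp_\R$; since $\tau\circ\pi=\tau$ and $\tau$ is injective on $\Lambda^\perp_\R$, the two bookkeepings are carried into one another by an isomorphism, and both identify the fractionalization quotient with $H$ via part (ii).
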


\noindent\emph{Proof.} (i) is straightforward: for any $N,M\in GL(r,\Z)$, the abelian groups $\Z^r/\rho^\T\rho$ and $\Z^r/N\rho^\T\rho M$ are isomorphic, and we observe that the field redefinition $\rho\mapsto P\rho M$ induces $\rho^\T\rho \mapsto M^\T(\rho^\T\rho)M$.

For (ii), it is enough to show that $H\simeq \Z^n/(\Lambda\oplus\Lambda^\perp) = \Z^n / ( \text{im}(\rho)\oplus \text{im}(\rho^\vee))$, since the middle isomorphism in \eqref{H-iso} will then follow by symmetry between $\rho$ and $\rho^\vee$. Note that from the dual exact sequence \eqref{SES-mirror} we have $\text{im}(\rho^\vee) =\text{ker}(\rho^\T)$. 
Then the third isomorphism theorem for groups gives
\be \frac{\Z^n}{\text{im}(\rho)\oplus \text{im}(\rho^\vee)} = \frac{\Z^n}{\text{im}(\rho)\oplus \text{ker}(\rho^\T)} \overset{\sim}\longrightarrow \frac{\rho^\T(\Z^n)} {\rho^T\big(\text{im}(\rho)\oplus \text{ker}(\rho^\T)\big) }  = \frac{\Z^r}{\text{im}(\rho^\T\rho)}\,. \ee

(iii) is a consequence of the Cauchy-Binet formula, which expresses the determinant of a product of rectangular matrices as a sum of corresponding minors:
\be |H| = \text{det}(\rho^T\rho) \overset{C.B.}= \sum_{I\in \{1,...,n\}} (\text{det}\,\rho^I)(\text{det}\,\rho^\T_I) = \sum_{I\in \{1,...,n\}} |\det\rho^I|^2\,,\ee
where the sum is over all size-$r$ subsets of $\{1,...,n\}$. This reproduces the RHS of \eqref{Nvac}.

(iv) amounts to unravelling the physical definition of symmetry fractionalization. This is a concept introduced in the condensed matter literature, which has recently played a key role in the study of higher/generalized symmetries, \emph{cf.} \cite{BCH,BBFS}. The Wilson-line analysis is perturbative, and only involves hypermultiplet fields. 
 We consider perturbative local operators monomials built from hypermultiplet scalars, fermions, and their derivatives. Each monomial can be labelled by a vector $m\in \Z^n$, corresponding to its charge under the abelian $U(1)^n$ symmetry of $n$ hypermultiplets. Before one imposes gauge invariance, all charges are possible. For example, denoting the chiral hypermultiplet scalars $(X_i,Y_i)_{i=1}^n$, there is an operator $\CO_m$ of each charge $m\in \Z^n$, given by
\be \CO_m = \prod_{i=1}^n \begin{cases} (X_i)^{m_i} & m_i\geq 0 \\ (Y_i)^{-m_i} & m_i < 0 \end{cases}\,. \ee 
At junctions of (appropriate) Wilson lines, any such monomial may appear. If it is not gauge invariant, its lack of gauge invariance can be compensated for by the charges of the lines. The flavor charges of all possible local operators at all possible junctions of Wilson lines are then given by the image $\tau(\Z^n)=\Z^{n-r}$ of the second map in the exact sequence \eqref{SES}. 

In contrast, in the bulk, gauge invariance requires local operators to have $m\in \Z^n$ orthogonal to the sublattice $\text{im}(\rho)\subset \Z^n$ (since the charge of a monomial under the $a$-th gauge group is $m\cdot \rho^a$). Thus, gauge-invariant operators have $m\in (\text{im}\,\rho)^\perp$. By the dual exact sequence \eqref{SES-mirror},  $ (\text{im}\,\rho)^\perp = \text{ker}\,\rho^\T = \text{im}\,\tau^\T$. The possible flavor charges of bulk local operators are therefore $\tau(\text{im}\,\tau^\T) = \text{im}(\tau\tau^\T) \subset \Z^{n-r}$. The symmetry fractionalization group is then $\Z^{n-r} / \text{im}(\tau\tau^\T)$, which is isomorphic to $H$ by part (ii) of the Proposition. \hfill $\square$

\medskip

Note that symmetry fractionalization is a topological physical observable. It is invariant under RG flow, and invariant under all quantum dualities, whether exact dualities or only IR dualities (like 3d mirror symmetry). Thus, a corollary of Part (iv) of the Proposition is:

\begin{PThm} \label{Thm:H}
The group $H$ is an invariant of IR isomorphism classes of simple abelian 3d $\CN=4$ gauge theories.
\end{PThm}

\begin{Exp}
Consider SQED$_n$, the theory with gauge group $U(1)$ and $n$ hypermultiplets of charge $1$, which has charge matrix $\rho=(1,1,...,1)^\T$. We have $\rho^\T\rho = (n)$, and so
\be H = \Z/n\Z\,. \label{H-SQEDn} \ee

Correspondingly, the theory has $|H|=n$ massive vacua after a generic real mass and FI deformation. They were described in 
Section \ref{sec:eg1}, where it was also shown that they are individual smooth vacua rather than orbifold points.

Let us also look at symmetry fractionalization. We learned of this example of symmetry fractionalization from M. Bullimore and A. Ferrari \cite{BF-talk} (now published in \cite{BBFS-3d}). Bulk local operators must have an equal number of positively and negatively charged fields in order to be gauge invariant. (For example, the monomials $X_iY_j$ are all gauge invariant.)
 They are labelled by $m\in \Z^n$ orthogonal to $(1,1,...,1)$; after quotienting by $(1,1,...,1)$, these may be identified with points on the root lattice of $SU(n)$. In contrast, local operators at junctions of Wilson lines can be labelled by any $m\in \Z^n$ (for example, individual $X_i$ or $Y_i$ can appear at junctions where Wilson line charges change by $+1$ or $-1$, respectively). After quotienting by $(1,1,...,1)$, the flavor charges of operators at junctions may be identified with points on the weight lattice of $SU(n)$. The symmetry fractionalization group is
 \be \frac{\text{$SU(n)$ weight lattice}}{\text{$SU(n)$ root lattice}} \simeq \Z_n \ee

The group $SU(n)$ appears naturally in this example due to a nonabelian symmetry enhancement. The global symmetry of the bulk theory is $PSU(n)$, while the global symmetry in the presence of Wilson lines is $SU(n)$, with $PSU(n)\simeq SU(n)/H$\,.
\end{Exp}

It is interesting to ask whether $H$ is a complete IR invariant of simple abelian gauge theories. The answer is likely negative, given the great richness of the classification of definite integral lattices. In particular, there are in general many more lattice isomorphism types of a given rank and determinant (given $r$ and $|H|$) than there are abelian groups of fixed order. Even for $|H|=1$ there are nontrivial isomorphism types of lattices when $r\geq 8$, the first example (at $r=8$) being the $E_8$ lattice. It seems highly unlikely that these would correspond to trivial 3d $\CN=4$ theories.

Nonetheless, we can give one stronger criterion for a theory to be infrared free:

\begin{Prop} \label{prop:triv} If a simple theory $\CT_\rho$ is such that its charge lattice is (intrinsically) isomorphic to a trivial lattice, meaning that there exists $M\in GL(n,\Z)$ such that $M^\T(\rho^\T\rho) M=I_{r\times r}$, then $\CT_\rho$ is isomorphic to an IR-free theory.

In particular, by the classification of definite unimodular lattices, any theory with $r<8$ and $|H|=1$ is IR free.
\end{Prop}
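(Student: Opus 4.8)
The plan is to show that if the intrinsic bilinear form $\rho^\T\rho$ on the charge lattice $\Lambda$ is equivalent over $\Z$ to the identity, then after field redefinitions the theory $\CT_\rho$ splits as a product of SQED$_1$'s and free (twisted) hypermultiplets, and is therefore IR free by definition. The key observation is that the hypothesis is a statement purely about the \emph{abstract} lattice $(\Lambda,\rho^\T\rho)$, while the theory $\CT_\rho$ depends on the \emph{embedding} $\Lambda\hookrightarrow\Z^n$. So the strategy has two layers: first use the hypothesis to trivialize the intrinsic form, then use the embedding data to reconstruct an IR-free presentation.

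First I would use the hypothesis $M^\T(\rho^\T\rho)M=I_{r\times r}$ with $M\in GL(r,\Z)$ to perform the field redefinition $\rho\mapsto\rho M$ (a redefinition of the vectormultiplets, as in Section~\ref{sec:H}), after which we may assume $\rho^\T\rho=I_{r\times r}$. Geometrically this says the columns $\rho_1,\dots,\rho_r$ of $\rho$ are an \emph{orthonormal} set of integer vectors in $\Z^n$ under the standard Euclidean form. The next step is the main point: an orthonormal set of vectors in the standard lattice $\Z^n$ must consist of $\pm$ standard basis vectors, with disjoint supports. Concretely, if $v\in\Z^n$ satisfies $v\cdot v=1$ then $v=\pm e_i$ for some $i$; and orthonormality of distinct columns then forces the supports $\{i_a\}$ to be distinct. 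This is exactly the condition that says each gauge $U(1)$ acts (up to sign) on a single distinct hypermultiplet with charge $\pm 1$.

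With the columns identified as $\rho_a=\pm e_{i_a}$, I would apply a permutation/sign change $\rho\mapsto P\rho$ with $P\in O(n,\Z)$ (the other allowed field redefinition) to bring $\rho$ to the normal form $\rho=\left(\begin{smallmatrix} I_{r\times r}\\ 0\end{smallmatrix}\right)$. In this basis the exact sequence \eqref{SES} visibly splits the $n$ hypermultiplets into the first $r$, each charged with charge $1$ under its own gauge $U(1)$ (an SQED$_1$ factor, with exact sequence \eqref{SQED-seq}), and the remaining $n-r$ ungauged hypermultiplets (free hypermultiplet factors, with exact sequence \eqref{hyper-seq}). Hence $\CT_\rho$ is a product of SQED$_1$'s and free hypermultiplets, which is IR free by definition. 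The second assertion then follows immediately: for $r<8$ the classification of positive-definite unimodular lattices gives that the only such lattice of rank $r$ is the standard one $I_{r\times r}$ (the first nontrivial example being $E_8$ at $r=8$), so $|H|=\det(\rho^\T\rho)=1$ forces $\rho^\T\rho$ to be intrinsically trivial, and the first part applies.

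I expect the only real content to be the elementary lattice lemma that an orthonormal integer set consists of signed basis vectors with disjoint supports; everything else is bookkeeping with the two kinds of field redefinitions from Section~\ref{sec:H} and an appeal to the cited lattice classification. The subtle point to state carefully is that the hypothesis concerns the intrinsic form, so the reduction to $\rho^\T\rho=I$ must precede any claim about the embedding, and one must verify that the $O(n,\Z)$ and $GL(r,\Z)$ moves used are precisely the field redefinitions that preserve the isomorphism class of the theory.
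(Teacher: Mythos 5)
Your proposal is correct and follows essentially the same route as the paper's proof: reduce to $\rho^\T\rho=I$ via the $GL(r,\Z)$ redefinition, observe that an orthonormal set of integer vectors consists of signed standard basis vectors in distinct rows, then use a permutation $P\in O(n,\Z)$ to bring $\rho$ to the block form $\left(\begin{smallmatrix} I_{r\times r}\\ 0\end{smallmatrix}\right)$, exhibiting the theory as $r$ copies of SQED$_1$ times $n-r$ free hypermultiplets, with the $r<8$ statement following from the classification of definite unimodular lattices exactly as in the paper.
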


\noindent\emph{Proof.} This is intuitively clear given that lattices together with embeddings in $\Z^n$ classify simple abelian theories, and the permutations that change embeddings are fairly inocuous. However, we give a short self-contained proof.

If $M^\T(\rho^\T\rho) M=I_{r\times r}$, then the $r$ column vectors of $\rho M$ are orthonormal. A vector of \emph{integers} of norm one must contain a single $\pm 1$, with all other entries 0. Orthogonality of the columns implies that each $\pm 1$ occurs in a different row. Then we can find a permutation $P \in O(n,\Z)$ such that
\be P\rho M = \left(\begin{smallmatrix} 1&0&\cdots &0 \\ 0 & 1 &\cdots& 0 \\ &&\ddots \\ 0&0&\cdots &1 \\ 0&0&\cdots&0 \\ & \vdots \\ 0&0&\cdots&0 \end{smallmatrix}\right)\,. \ee
This means that after a field redefinition corresponding to $P$ (acting on hypers) and $M$ (acting on gauge fields), our theory becomes a tensor product of $r$ copies of SQED$_1$ and $n-r$ hypermultiplets. \hfill $\square$

\subsection{Free field vertex algebras}\label{subsecVOAconvention}

In this section we introduce the basic vertex algebras and their modules that appear in this work. A free field algebra is a vertex algebra that is strongly generated by fields that have the property that only the identity appears in their operator product algebra. There are four classes of free field algebras that admit a Virasoro structure (stress tensor): the free boson/Heisenberg VOA, the free fermion, the symplectic fermions, and the symplectic bosons. These algebras are related in various way that we now recall.

\subsubsection{Heisenberg VOA's and Fock modules}
\label{sec:Heis}

The basic Heisenberg VOA $H_J$ is generated by a single even (bosonic) field $J(z) = \sum\limits_{n \in\Z} J_n z^{-n-1}$ with OPE
\be
J(z) J(w) \sim \frac{1}{(z-w)^2}\,.
\ee
Its simple modules are Fock modules $\Fock_\lambda$ of highest weight $\lambda \in \C$. These are generated by a highest-weight vector $|\, \lambda\, \rangle$ on which $J_0$ acts by multiplication with $\lambda$, the $J_n$ for positive $n$ annihilate $|\, \lambda\, \rangle$, and the negative modes act freely. 
We will denote vector space tensor products by the usual symbol $\otimes$; while the fusion product, \emph{i.e.} the tensor product as modules of a VOA $\CV$, will be denoted by the symbol $\otimes_\CV$. The Fock modules of the Heisenberg VOA satisfy the fusion rules
\be
\Fock_\lambda \otimes_{H_J} \Fock_\mu = \Fock_{\lambda + \mu}\,.
\ee
The vertex operator associated to the highest-weight vector $|\, \lambda\, \rangle$ is denoted by $Y(|\, \lambda\, \rangle, z)$, and the fusion rules of the Fock modules are reflected in the following OPE formula
\be
Y(|\, \lambda\, \rangle, z) Y(|\, \mu\, \rangle, w) \sim (z-w)^{\lambda\mu} Y(|\, \lambda+\mu\, \rangle, w) + \dots 
\ee

More generally, let $V$ be a finite-dimensional complex vector space, say of dimension $n$, with symmetric bilinear form $B: V \times V \rightarrow \C$, and fix a basis $\{v^1, \dots, v^n\}$ of $V$. Then the Heisenberg VOA associated to $(V, B)$, which we denote compactly as $H_{\{v^i\}}$, is strongly and freely generated by fields $J^i$ for $i=1, \dots, n$ with OPE
\be
H_{\{v^i\}}:\quad J^i(z) J^j(w) \sim \frac{B(v^i, v^j)}{(z-w)^2}\,. \label{JVB-OPE}
\ee
We also introduce formal fields $v^i(z)$, obeying $J^i(z)=\pd v^i(z)$, with a non-analytic OPE
\be
v^i(z)v^j(w)\sim B(v^i,v^j)\log (z-w)\,, \label{vv-OPE}
\ee
which implies \eqref{JVB-OPE}. The $v^i(z)$ themselves are not part of the Heisenberg VOA, but they provide a useful way to describe modules.

Fock modules (\emph{a.k.a.} Verma modules) for the generalized Heisenberg VOA $H_{\{v^i\}}$ are in one-to-one correspondence with linear maps $V\to \C$. In all applications in this paper, the bilinear form $B$ will be non-degenerate, and can thus be used to establish an isomorphism between linear maps $V\to \C$ and elements of $V$ itself.
We will mainly use the latter to describe Fock modules.

Given an element $\lambda\in V$, with associated map $B(\lambda,-):V\to \C$, there is a unique Fock module denoted
\be \Fock_\lambda \qquad \text{or (for clarity)}\quad \Fock_\lambda^{v^1,...,v^n}\,, \ee
generated by a highest-weight state $|\lambda\rangle$ that satisfies
\be J_0^i|\lambda\rangle  = B(\lambda,v^i)|\lambda\rangle\,,\qquad J_{n>0}^i|\lambda\rangle = 0\,, \ee
and on which the $J_{n<0}^i$ act freely. The vacuum module of the Heisenberg VOA is simply $\Fock_0$.
If we expand $\lambda=\sum_i\lambda_iv^i$ and correspondingly set $\lambda(z):= \sum_i \lambda_i v^i(z)$, then we can formally express the vertex operator corresponding to the highest-weight state $|\lambda\rangle$ as 
\be 
Y(|\, \lambda\, \rangle, z) \,=\, \normord{e^{\lambda (z)}} \,.
\ee
The OPE between vertex operators follows from \eqref{vv-OPE}. In particular,
\be \normord{e^{\lambda(z)}} \normord{e^{\eta(w)}} \;\sim\; (z-w)^{B(\lambda,\eta)} \normord{e^{\lambda(w)+\eta(w)}} + \ldots\,. \label{lattice-OPE}\ee
Thus the fusion rules are $\Fock_\lambda\otimes_{H_{v^i}} \Fock_\eta = \Fock_{\lambda+\eta}$.

We finally note that, fixing an orthogonal basis $v^1,...,v^n$ of $V$, there is a decomposition 
\be H_{v^1,...,v^n}\, \cong\, \bigotimes_i H_{v^i}\,. \ee 
Correspondingly, letting $\lambda = \sum_i \lambda_i v^i$, there is a decomposition of Fock modules
\be \CF_\lambda^{v^1,...,v^n} \,\cong\, \bigotimes_i \Fock_{\lambda_iv^i}^{v^i}\,.\ee

\subsubsection{Free fermions}
\label{sec:FF}

Let us choose the basis $v^1,...,v^n$ such that $B(v^i, v^j)$ is real for all pairs $(v^i, v^j)$, and consider the subcategory of those Fock modules $\Fock_\lambda$ that have the property that all $\lambda_i$ are real. This category is a braided tensor category \cite{creutzig2019schur}. Let $L \subset V$ be a lattice, meaning a $\mathbb Z$-submodule of $V$, with the property that $B$ restricted to $L$ is integral. 
Then 
\be
\CV_L := \bigoplus_{\lambda \in L} \Fock_\lambda
\ee
is itself a vertex superalgebra, the lattice VOA of the lattice $L$;  and if $L$ is even, then it is actually a vertex algebra. 

In particular, in rank one with $V=\C\langle v\rangle$ generated by a vector with $B(v,v)=1$, choosing $L = \Z\langle v\rangle \simeq \Z$ gives rise to a vertex algebra
\be \CV_\Z \,\simeq\, \mc{V}_{bc} \label{bose-fermi} \ee
strongly generated by a pair of free fermions. This is the classic bose-fermi correspondence. The two fermionic generators may be chosen as
\be b(z) = Y(\vert 1\rangle, z) = \normord{e^{v(z)}}\,,\qquad c(z) = Y(\vert -1\rangle, z) = \normord{e^{-v(z)}}\,, \ee
with OPE
\be
b(z)c(w) \sim \frac{1}{(z-w)}\,.
\ee

The free fermions $\mc{V}_{bc}$ are a \emph{holomorphic} VOA, in the sense that the VOA itself is the only simple module and every module is completely reducible. Thus the module category of $\mc{V}_{bc}$ is isomorphic to the ``trivial'' category of vector spaces.
 (The terminology ``holomorphic VOA,'' sometimes also called a ``holomorphic CFT,'' originates from the fact that all spaces of conformal blocks are automatically one-dimensional, implying that the VOA itself has well-defined partition functions in any genus, and carries the structure of a full CFT. For example, the free-fermion VOA $\mc{V}_{bc}$ is equivalent to the well-defined physical CFT containing a free complex-valued 2d chiral fermion.)

\subsubsection{Symplectic fermions and the singlet VOA}
\label{sec:SF}

The symplectic fermion vertex algebra $\CV_{SF}$ may be defined as the vertex algebra strongly generated by two fermionic fields $\chi^\pm(z)$ with OPE
\be \CV_{SF}:\quad \chi^+(z)\chi^-(w) \,\sim\, \frac{1}{(z-w)^2} \label{SF-OPE} \ee
More generally, given any complex vector space $W$ with a non-degenerate anti-symmetric bilinear form $\Omega:W\times W\to \C$ (a symplectic form), there is an asociated symplectic fermion VOA $\CV_{SF}^W$. Given a basis $\{\chi^i\}$ for $W$, the VOA is generated by fermionic fields $\{\chi^i(z)\}$ with
\be \CV_{SF}^W:\quad \chi^i(z)\chi^j(w) \,\sim\, \frac{\Omega(\chi^i,\chi^j)}{(z-w)^2}\,.  \ee

Symplectic fermions can be embedded in free fermions (and thus in a lattice VOA), as the kernel of certain screening charges. Define the screening operator $S(z):\mc{V}_{bc}\to \mc{V}_{bc}(\!(z)\!)$ by
\be S(z):= b(z)\,, \ee
and the ``screening charge'' $S_0:\mc{V}_{bc} \to \mc{V}_{bc}$ by
\be S_0 := \frac{1}{2\pi i}\oint S(z)dz = b_0\,. \ee
In this case, this is just the zero-mode of the fermionic field $b(z) = \sum_{n \in\Z}  b_n z^{-n-1}$.

The kernel of $S_0$ is simply the subalgebra generated by $b(z)$ and $\partial c(z)$, since $b_0$ commutes with all modes of $c(z)$ except its zero-mode. Letting $\chi^+(z) = b(z)$ and $\chi^-(z)=\partial c(z)$, we find that $\chi^\pm$ satisfy the symplectic fermion OPE \eqref{SF-OPE}. Thus
\be \CV_{SF} \,\cong\, \text{ker}(S_0:\mc{V}_{bc}\to \mc{V}_{bc})\,. \label{SF-screen} \ee

There is an action of $\C^*$ on both free fermions and symplectic fermions, such that $b,\chi^+$ have weight $+1$ and $c,\chi^-$ have weight $-1$. This makes $\mc{V}_{bc}$ and $\CV_{SF}$ $\Z$-graded vertex algebras. Decomposing symplectic fermions (as a vector space) into graded components
\be\label{eqSFfreefield}
\CV_{SF} = \bigoplus_{\mu \in \Z} M_\mu\,,
\ee
we find that the degree-zero subspace $M:=M_0$ (\emph{a.k.a.} the $\C^*$ orbifold of $\CV_{SF}$) is a vertex algebra itself, while the other components $M_n$ are modules for $M$. Conversely, $\CV_{SF}$ is an extension of $M$ by the modules $\{M_\mu\}_{\mu\in \Z}$.

The vertex algebra $M=M_0$ is known as the $p=2$ singlet algebra.
It contains the fields with equal numbers of $\chi^+$ and $\chi^-$, such as $\normord{\chi^+\chi^-}$, $\normord{\chi^+\partial^a\chi^-}$, $\normord{}\chi^+\partial^a \chi^+\chi^-\partial^b\chi^-$, etc.
The modules $M_\mu$ that appear in the decomposition of symplectic fermions are simple currents with fusion rules
\be
M_\mu \otimes_M M_\nu = M_{\mu+\nu}\,.
\ee
The singlet algebra has many other modules, however; its full representation theory is rather complicated and has only been completely understood in the past year \cite{Creutzig:singlet1, Creutzig:singlet2}.

Later in the paper we will encounter multiple copies of symplectic fermions and singlet algebras/modules. We summarize some notation and relations.  By combining the relation to free fermions \eqref{SF-screen} with the bose-fermi correspondence \eqref{bose-fermi}, we find that $n$ symplectic fermions are embedded in a rank-$n$ lattice VOA
\be \CV_{SF}^{\otimes n} \hookrightarrow \CV_{\Z^n}\,,\qquad \chi_+^i \mapsto\,\normord{e^{v^i}}\,,\quad \chi_-^i \mapsto\, -\normord{\pd v^i\,e^{-v^i}}\,, \ee
where $\CV_{\Z^n}$ is the extension of the Heisenberg algebra $H_{v^1,...,v^n}$ with $B(v^i,v^j)=\delta^{ij}$ by Fock modules $\CF_{\mu\cdot v}$ for all $\mu\in \Z^n$. The embedding is the kernel of screening operators
\be \CV_{SF}^{\otimes n} = \bigcap_{i=1}^n \text{ker}\, S^i_0\big|_{\CV_{\Z^n}}\,,\qquad S^i(z) = \frac{1}{2\pi i} \oint \normord e^{v^i(z)}\,. \ee
Furthermore, there is a $(\C^*)^n$ action on $\CV_{SF}^{\otimes n}$, induced from the $(\C^*)^n$ action on $\CV_{\Z^n}$ under which $\normord e^{\mu\cdot v}$ has charge $\mu\in \Z^n$. (And so in particular $\chi_\pm^i$ have charges $(0,...,0,\underset{i}{\pm 1},0,...,0)$.) We denote the weight spaces of this action $M_{\mu\cdot v}^{\{v^i\}}$ or simply $M_{\mu\cdot v}$, with
\be \CV_{SF}^{\otimes n} = \bigoplus_{\mu\in \Z^n} M_{\mu}\,, \qquad M_{\mu\cdot v} = \bigcap_{i=1}^n \text{ker}\, S^i_0\big|_{\CF_{\mu\cdot v}}\,. \label{SF-ff-n} \ee
Here $M_0^{\{v^i\}}\cong \bigoplus_{i=1}^n M_0^{v^i}$ is $n$ copies of the singlet VOA, and each of the $M_{\mu\cdot v}$'s are simple currents thereof.

\subsubsection{Symplectic bosons}
\label{sec:betagamma}

The basic symplectic boson VOA $\mc{V}_{\beta\gamma}$, \emph{a.k.a.} a beta-gamma system, is strongly generated by two bosonic fields $\beta(z)$, $\gamma(z)$ with OPE
\be \mc{V}_{\beta\gamma}:\quad \beta(z)\gamma(w) \,\sim\, \frac{-1}{z-w}\,. \label{betagamma-OPE} \ee
More generally, given a symplectic vector space $(W,\Omega)$, there is an associated symplectic boson VOA $\mc{V}_{\beta\gamma}^{W}$. Given a basis $\{\beta^i\}$ for $W$, the VOA is generated by fields $\{\beta^i(z)\}$ with OPE
\be  \CV_{\beta\gamma[W]}: \quad \beta^i(z)\beta^j(w) \,\sim\, \frac{\Omega(\beta^i,\beta^j)}{z-w}\,.\ee
Symplectic bosons are closely related to the other free field VOA's above, in several interesting ways.

To begin, let $V=\C\langle \eta\rangle$ be the one-dimensional vector space with negative inner product $B(\eta,\eta)=-1$, and let $\sqrt{-1}\Z := \Z\langle \eta\rangle$ denote the integer lattice therein. Correspondingly, we have a Heisenberg VOA $\CH_\eta$ and its lattice extension
\be \CV_{\sqrt{-1}\Z} =  \bigoplus_{n \in \Z} \Fock_n\,,\ee
where each Fock module $\Fock_n$ is generated by $\normord{e^{n\eta}}$. Now consider the $\C^*$ action on $\CV_{\sqrt{-1}\Z}$ under which the subspace $\CF_n$ has weight $-n$. Also recall the $\C^*$ action on symplectic fermions in \eqref{eqSFfreefield}. The invariant part of the tensor-product VOA $\CV_{SF}\otimes \CV_{\sqrt{-1}\Z}$ under the diagonal $\C^*$ action is
\be
(\CV_{SF} \otimes \CV_{\sqrt{-1}\Z})^{\C^*} = \bigoplus_{n \in \Z} M_n \otimes \Fock_n\,, \label{betagamma-MF0}
\ee
and it is generated by fields
\be
\beta = \chi^+ \otimes \normord{e^\eta}\quad \text{and} \quad 
\gamma = -\chi^- \otimes \normord{e^{-\eta}}
\ee
precisely satisfying the symplectic boson OPE \ref{betagamma-OPE}. Thus
\be (\CV_{SF} \otimes \CV_{\sqrt{-1}\Z})^{\C^*} \simeq \mc{V}_{\beta\gamma}\,. \label{betagamma-MF} \ee

Alternatively, by combining the embedding \eqref{SF-screen} of symplectic fermions in a lattice VOA with \eqref{betagamma-MF}, we obtain a well-known free field realization of symplectic bosons, \emph{cf.} \cite{allen2020bosonic}. Let $H_{\phi,\eta}$ be the rank-two Heisenberg algebra corresponding to a two-dimensional vector space with basis $\{\phi,\eta\}$ and inner product $B(\phi,\phi)=1$, $B(\eta,\eta)=-1$, $B(\phi,\eta)=0$. Consider the one-dimensional lattice $L=\Z\langle \phi+\eta\rangle$ and the corresponding lattice VOA
\be \CV_L = \bigoplus_{n\in \Z} \Fock_{n(\phi+\eta)}\,. \label{Lphieta}  \ee
There is an embedding $\mc{V}_{\beta\gamma}\hookrightarrow \CV_L$ given by 
\be \beta\,\mapsto\, e^{\phi+\eta}\,,\quad \gamma\,\mapsto\, \normord{\partial\phi\, e^{-\phi-\eta}}\,. \ee

To characterize this as the kernel of a screening charge, we note that the lattice VOA $\CV_L$ has modules $\CV_{L,k\phi}$ defined by lifting the Fock modules $\Fock_{k\phi}$ of the Heisenberg VOA to the lattice $\CV_L$. Explicitly,
\be \CV_{L,k\phi} = \bigoplus_{n\in \Z} \Fock_{(k+n)\phi+n\eta}\,. \ee
Consider the intertwining operator $S(z):\CV_{L,k\phi}\to \CV_{L,(k+1)\phi}(\!(z)\!)$ defined by
\be S(z) \,=\,  \normord{e^{\phi(z)}} \ee
and corresponding screening charges $S_0 = \frac{1}{2\pi i} \oint S(z)\,dz : \CV_{L,k\phi}\to \CV_{L,(k+1)\phi}$. Then the embedding of symplectic bosons in $\CV_L$ coincides with the kernel
\be \mc{V}_{\beta\gamma} \,=\,\text{ker}(S_0: \CV_L \to \CV_{L,\phi} )\,. \label{betagamma-screen} \ee
This follows from decomposing $\Fock_{n(\phi+\eta)}=\Fock_{n\phi}\otimes \Fock_{n\eta}$ in \eqref{Lphieta} as modules for two Heisenberg VOA's $H_\phi\otimes H_\eta$, and combining the decomposition with \eqref{SF-screen} and \eqref{betagamma-MF0}.

\subsubsection{BRST cohomology}
\label{sec:BRST}

Conversely, we may go back from symplectic bosons to symplectic fermions using BRST cohomology.  By BRST cohomology we mean relative semiinfinite Lie algebra cohomology as in \cite{Frenkel:BRST}; and the version that we need here is exactly the one used in \cite{Creutzig:BRST}.

Let $\CV$ be a vertex algebra with an internal $\C^*$ Kac-Moody action at level zero. In other words, $\CV$ contains a field $J(z)$ (the Kac-Moody current) that has non-singular OPE with itself, and generates an action of the loop group $\C(\!(z)\!)^*$ on $\CV$ by sending:
\be \text{for $\alpha(z)\in \C(\!(z)\!)$,}\quad v\,\mapsto\, \frac{1}{2\pi i}\oint \alpha(z)J(z)\cdot v\,. \ee
Intuitively, BRST cohomology takes the symplectic quotient of $\CV$ by the $\C^*(\!(z)\!)$ action --- setting the current $J(z)$ to zero and taking $\C(\!(z)\!)^*$ invariants --- in a derived way, \emph{i.e.} by expressing this quotient as the zeroth cohomology of a complex. Concretely, let $\mc{V}_{bc}$ be a free-fermion VOA, $Z$-graded such that $c$ has degree 1 and $b$ has degree $-1$. The tensor product $\CV\otimes \mc{V}_{bc}$ inherits this grading, and has a differential
\be Q_{\rm BRST} := \frac{1}{2\pi i} \oint c(z)\,J(z)\,dz \ee
of degree $+1$. Let the \emph{relative complex} $(\CV\otimes \mc{V}_{bc})^{\rm rel}$ be the subspace of $\CV\otimes \mc{V}_{bc}$ annihilated by the zero-modes $b_0$ and $J_0$. Then one defines BRST cohomology as the $Q_{\rm BRST}$-cohomology of the relative complex, denoted
\be H_{\rm BRST}(\CV) := H^\bullet\big( (\CV\otimes \mc{V}_{bc})^{\rm rel},Q_{\rm BRST}\big)\,.\ee
Note that $Q_{\rm BRST}$ sends $b(z)\mapsto J(z)$ (whence $J(z)$ is effectively set to zero in cohomology); it also sends any element of $\CV$ to its image under the $GL\big(1,\C(\!(z)\!)\big)$ action with generator $c(z)$ (whence cohomology also takes invariants for the action).

Similarly, if $M$ is any module for $\CV$, its BRST cohomology is defined by
\be H_{\rm BRST}(M) := H^\bullet\big( (M\otimes \mc{V}_{bc})^{\rm rel},Q_{\rm BRST}\big)\,, \label{def-BRST}\ee
where again $(M\otimes \mc{V}_{bc})^{\rm rel}$ denotes the subspace of $M\otimes \mc{V}_{bc}$ annihilated by $b_0$ and $J_0$.

Going from symplectic bosons to symplectic fermions uses a particularly well-behaved instance of BRST cohomology. Let $H_{\phi,\eta}$ be the rank-two Heisenberg algebra associated to vectors $\phi,\eta$ of norms $+1,-1$, respectively. The field $J(z)=\partial(\phi+\eta)$ is a level-zero $\C^*$ Kac-Moody current. Consider the Fock module $\Fock_{\lambda\phi+\mu\eta}$ (here $\lambda,\eta\in \C$). Its BRST cohomology has the property that
\be \label{BRST-triv0}
H^i_{\text{BRST}}(\Fock_{\lambda\phi+\eta\mu}) 
= \delta_{i, 0} \delta_{\lambda- \mu, 0} \C [|\lambda\phi+\mu\eta\rangle]\,.
\ee
In other words, the cohomology vanishes unless $\lambda-\mu=0$, in which case the cohomology is one-dimensional and given by the class of the highest-weight vector.

Now, the symplectic boson VOA $\mc{V}_{\beta\gamma}$ has a current $J_{\beta\gamma}\,=-\,\normord{\beta\gamma}$ at level $-1$ ($J_{\beta\gamma}$ generates a Heisenberg subalgebra $H_\eta$ with inner product $-1$). The free fermion VOA $\mc{V}_{bc}$ has a current $J_{bc}\,=\,\normord{bc}$ at level 1 (generating a Heisenberg subalgebra $H_\phi$ with norm $+1$). Therefore, their tensor product $\mc{V}_{bc}\otimes \mc{V}_{\beta\gamma}$ has a diagonal current $J=J_{\beta\gamma}+J_{bc}$ at level zero. Using the bose-fermi correspondence and \eqref{betagamma-MF0}, we know that as modules for $H_{\phi,\eta}=H_\phi\otimes H_\eta$ we have
\be \mc{V}_{bc}\otimes \mc{V}_{\beta\gamma} = \bigoplus_{m,n\in \Z} \Fock^\phi_{n\phi}\otimes \Fock^\eta_{m\eta}\otimes M_m\,. \ee
Taking BRST cohomology for the diagonal $\C^*$ action, we get
\begin{equation}
\begin{split}
H_{\text{BRST}}(\mc{V}_{bc} \otimes \mc{V}_{\beta\gamma}) &= \bigoplus_{n, m \in \Z } H_{\text{BRST}}\left(\Fock^\phi_{n\phi} \otimes  \Fock^\eta_{m\eta} \right) \otimes M_{m} \\
&= \bigoplus_{n, m \in \Z }  \delta_{n-m, 0}\C \otimes M_{m} \\
&= \bigoplus_{m \in \Z} M_m \\
&= \CV_{SF}.
\end{split}    
\end{equation}

\subsubsection{Affine $\mathfrak{gl}(1|1)$}
\label{sec:gl11}

There is in fact one more player that we can introduce, the affine VOA of $\mathfrak{gl}(1|1)$. It is the diagonal $\C^*$ orbifold of $\mc{V}_{bc} \otimes \mc{V}_{\beta\gamma}$, that is 
\be
\widehat{\mathfrak{gl}(1|1)} \,\cong\, (\mc{V}_{bc} \otimes \mc{V}_{\beta\gamma})^{\C^*} =
\bigoplus_{n \in \Z } \Fock^{\phi}_{n\phi} \otimes  \Fock^{\eta}_{n\eta} \otimes M_{n}
\ee
Here the level $k$ can be any non-zero number and in particular it can always be set to one. 
It is generated by the fields $N = :bc:, \psi^+ = :\beta b:, \psi^- = -:\gamma c:$ and $E = :bc: -:\beta\gamma:$.

All the relations between the different free field algebras above have been explored in detail in \cite{creutzig2009gl, Creutzig:2011cu}. 
The representation theory of $\mc{V}_{\beta\gamma}$ is worked out in \cite{allen2020bosonic}; the one for $\widehat{\mathfrak{gl}(1|1)}$ in \cite{creutzig2020tensor}, and the one for the singlet in \cite{Creutzig:singlet1, Creutzig:singlet2}. All the module categories are non-finite and non-semisimple ribbon (super)categories.
Orbifolds, simple current extensions, and BRST cohomologies provide nice functors between representation categories, as explained in \cite{Creutzig:BRST, Creutzig:2022ugv}.

\subsection{Spectral flow}
\label{sec:spectralflow}

In this final introductory section, we review the idea of spectral-flow automorphisms, and spectral-flow modules of a VOA, which will play a central role in many of our constructions. Spectral flow is associated with abelian Kac-Moody symmetries of VOA's; loosely speaking, it mixes the Kac-Moody symmetry with conformal symmetry. 
The basic idea appeared in physics in the 80's, in particular in the context of of worldsheet superstring theory. In the mathematical theory of VOA's, spectral flow is implemented by Haisheng Li's $\Delta$-operator \cite{Li:1997}.

Let $\CV$ be a VOA that has a Heisenberg subVOA, say of rank $n$ with associated bilinear form $B$ as in \eqref{JVB-OPE}. (The Heisenberg subVOA is another name for an abelian current algebra, or an abelian Kac-Moody symemtry.)
The Heisenberg VOA has a huge group of automorphisms. In particular, for any vector $\ell = (\ell_1, \dots, \ell_n)\in \C^n$, there is a spectral automorphism $\sigma^\ell$ that acts on the modes of the Heisenberg VOA as
\begin{equation} \label{specflow-zero}
    \sigma^\ell(J^i_n) = J^i_n + \delta_{n,0}\ell_i\,,  
\end{equation}
doing nothing but shifting the zero-modes by a scalar. Not all of these automorphisms lift from the Heisenberg subVOA to an automorphism of the mode algebra of the full VOA; this depends on the way that $\CV$ is graded by the $J^i_0$. The automorphisms that do lift are called spectral flows automorphisms of $\CV$.

For example, if $\CV$ is a lattice VOA $\CV_L$, there are spectral-flow automorphism so long as $\ell$ lies in the lattice $L'$ that is dual to $L$.  If $\CV$ is an affine VOA, then spectral-flow automorphisms correspond to coweights. The example of $\mathfrak{sl}(2)$ is instructive; it is discussed in detail in section 2 of \cite{Creutzig:2013yca}.

If $\sigma^\ell$ is an automorphism of the algebra of modes of $\CV$, then to any $\CV$-module $M$ one defines the \emph{spectral-flow module} $\sigma^\ell(M)$ as follows. The underlying vector space of $\sigma^\ell(M)$ is isomorphic to $M$, the isomorphism mapping $m \in M$ to $\sigma^\ell(m) \in \sigma^\ell(M)$;
but an element $x$ of the mode algebra of $\CV$ acts as
\begin{equation}
x\cdot \sigma^\ell(m)= \sigma^\ell(\sigma^{-\ell}(x)\cdot m)\,.
\end{equation}

Haisheng Li's $\Delta$-operator \cite{Li:1997}, implementing spectral flow in a mathematical context,  has several nice properties.  By Proposition 2.11 of \cite{Li:1997} together with skew-symmetry of intertwining operators, spectral flow respects fusion:
\begin{equation}
    \sigma^\ell(M) \otimes_\CV \sigma^{\ell'}(M') \cong \sigma^{\ell+\ell'}(M \otimes_\CV M').
    \end{equation}
In particular, the spectral flow image of the VOA itself, $\sigma^\ell(\CV)$ is always a simple current, with fusion rules
\begin{equation} \label{spec-fusion}
\sigma^\ell(\CV) \otimes_\CV \sigma^{\ell'}(\CV) \cong \sigma^{\ell+\ell'}(\CV), \qquad
\sigma^\ell(\CV) \otimes_\CV M \cong \sigma^{\ell}(M).
   \end{equation}
Another property is that spectral flow is exact, \ie\ it maps simple to simples, non-split short exact sequences to non-split short-exact sequences, Loewy diagrams to Loewy diagrmas and so on --- see Proposition 2.5 of \cite{creutzig2019schur}.

\begin{Exp}\label{exp:heisenberg}
When $\CV$ itself is a Heisenberg VOA, spectral flow acts on its own Fock modules. One has that $\sigma^{-\lambda}(\mathcal F_\mu) \cong \mathcal F_{\lambda+\mu}$ in particular
$\sigma^{-\lambda}(H_{v^i}) \cong \mathcal F_\lambda$ and the well-known fusion rules 
\[
\mathcal F_\lambda \otimes_{H_{v^i}} \mathcal F_\mu \cong \sigma^{-\lambda}(H_{v^i}) \otimes_{H_{v^i}}  \sigma^{-\mu}(H_{v^i}) \cong \sigma^{-\lambda-\mu}(H_{v^i}) \cong \mathcal F_{\lambda+\mu}
\]
follow immediately.
\end{Exp}

\begin{Exp}\label{exp:affineVOA}
 The example of the affine VOA of $\mathfrak{gl}(1|1)$
 is found in section 3.2 of \cite{creutzig2009gl}. The spectral-flow automorphism acts on the modes as
   \[
  \sigma^\ell(N_r) = N_r, \qquad \sigma^\ell(E_r) = E_r - \ell k \delta_{r, 0}, \qquad \sigma^\ell(\Psi^\pm_r) = \psi^\pm_{r \mp \ell}.
  \]
  This illustrates that spectral flow changes the mode labels, \emph{i.e.} it doesnot leave the horizontal subalgebra of  $\widehat{\mathfrak{gl}(1|1)}$ invariant.
\end{Exp}

\begin{Rmk}
    We will denote spectral-flow automorphisms by both $\sigma$ or $\Sigma$ in this paper, in order to distinguish different contexts in which they are used.
\end{Rmk}

\section{Boundary VOA for $\CT_{\rho}^A$}
\label{secNeumannA}

In this section, we develop the structure of boundary VOA's for abelian 3d $\CN=4$ theories in the topological A twist. We begin by reviewing the physical definition of holomorphic boundary conditions, following \cite{CG}, and describe their continuous global symmetries. We then consider the VOA's supported on holomorphic boundary conditions, which in \cite{CG} were defined using abelian BRST cohomology. We use free field realizations to show that the BRST cohomologies are supported in degree zero, and explicitly relate them to extensions of products of symplectic fermions, symplectic bosons, and singlet algebras.

We work in the conventions of Section \ref{sec:Trho}: we consider abelian theories $\CT_\rho$, with gauge group $G=U(1)^r$
 acting faithfully on hypermultiplet representation $T^*V = T^*\C^n$, with weights encoded in the $n\times r$ charge matrix $\rho$, which fits into the split exact sequence \eqref{split-SES}. We denote $\rho^\vee:=\tau^T$. We also choose a co-splitting as in \eqref{mirror-matrix}, meaning a map $\sigma^\vee:\Z^r\to\Z^n$ such that $\rho^T\circ \sigma^\vee = \text{id}_{\Z^r}$, satisfying the additional condition that $\sigma^{\vee,T}\sigma = 0$. Altogether, we have
\be \raisebox{-.2in}{\includegraphics[width=2.3in]{split2}} \label{split-SES2} \ee

\subsection{Neumann boundary condition and its symmetry}
\label{sec:Nbc}

The physical 3d $\CN=4$ gauge theory $\CT_\rho$ has a half-BPS boundary condition that preserves 2d $\CN=(0,4)$ supersymmetry and is Neumman-like, in that it imposes a Neumann b.c. on the gauge fields and on the hypermultiplet scalars \cite{CG}. The boundary condition is intrinsically chiral; it preserves ``right-handed'' hypermultiplet fermions (transforming in representation $T^*V$) and ``left-handed'' vectormultiplet fermions (transforming in the adjoint representation $\mathfrak g$, which of course is trivial since $G$ is abelian). This leads to a boundary gauge anomaly that must be cancelled, by introducing additional boundary matter \cite{dimofte2018dual}  (see also \cite{Gadde:2013sca,ChungOkazaki}). There is a canonical choice of boundary matter: left-handed complex fermions in representation $V$.

To analyze the anomaly cancellation, and more generally the symmetry structure at the boundary, we introduce some notation. We schematically denote
\begin{itemize}
\item $\lambda_-,\lambda_-':$ $2r$ bulk vectormultiplet fermions (gauginos in representation $T^*\mathfrak g_\C\simeq \C^r\oplus \C^r$) that survive at the boundary; they have negative (left-handed) boundary chirality
\item $X,Y$: $2n$ bulk hypermultiplet scalars (in representations $V,V^*$) that have a Neumann b.c., and thus survive at the boundary
\item $\psi_+^X,\psi_+^Y$: $2n$ bulk hypermultiplet fermions (in representations $V,V^*$) that survive at the boundary; they have positive (right-handed) boundary chirality
\item $\Gamma_-$: $n$ 2d right-handed fermions (in representation $V$) added at the boundary; they have negative boundary chirality
\end{itemize}
In addition to the $G=U(1)^r$ bulk gauge symmetry, there is bulk $SU(2)_H\times SU(2)_C$ R-symmetry (we keep track of a maximal torus $U(1)_H\times U(1)_C$), and bulk $T_F=U(1)^{n-r}$ and $T_{\rm top}=U(1)^r$ flavor symmetry. Moreover, the boundary fermions $\Gamma_+$ have their own $T_B=U(1)^{n-r}$ flavor symmetry, independent of the bulk flavor symmetry. The charges of the above fields under these various symmetries are:
\be \label{Ncharges}
\begin{array}{c|c|c|c|c|c}
\text{symmetry} & \text{f.s.} & \lambda_-, \lambda_-' &  X,Y & \psi_+^X,\psi_+^Y & \Gamma_- \\\hline
G = U(1)^r & \mathbf{G} & 0, 0 & \rho, -\rho & \rho,-\rho &  \rho  \\
T_F = U(1)^{n-r} & \mathbf{F} & 0,0 & \sigma,-\sigma & \sigma,-\sigma &  \sigma \\
T_{\rm top} = U(1)^r & \mathbf{T} & 0,0 & 0,0&0,0& \sigma^\vee \\
T_B = U(1)^{n-r} & \mathbf{B} & 0,0 & 0,0 & 0,0 & \rho^\vee \\
U(1)_H & \mathbf{H} & 1,1 & 1,1 & 0,0 & 0 \\
U(1)_C & \mathbf{C} & 1,-1 & 0,0 & -1,-1 & 0 
\end{array}
\ee
Note that some choices are involved here. The $T_F$ charges of bulk fields depend on a choice of splitting of \eqref{split-SES}, and the $T_{\rm top}$ charges of boundary fermions depend on a choice of co-splitting $\sigma^\vee$.
 All other charges (in particular, charges of boundary fermions $\Gamma_+$ under all the other symmetries) are completely fixed by cancellation of boundary anomalies.

The boundary anomaly polynomial is a quadratic function of the field strengths of the various gauge and global symmetries, computed from UV data by the schematic formula \cite{dimofte2018dual}
\be \CI_\pd \;= (\text{bulk CS level})\;- \!\!\!\sum_{\text{fermions $\chi$ at the bdy}} d_\chi \epsilon_\chi [\text{charge}(\chi)\cdot (\text{field strengths})]^2\,, \label{anom-def} \ee
with $d_\chi = 1$ for purely 2d fermions and $d_\chi=\frac12$ for 3d fermions that survive at the boundary; and $\epsilon_\chi\in \pm1$ denoting the boundary chirality. We label the field strengths as in \eqref{Ncharges}, with $\mathbf{G} = (\mathbf{G}_1,...,\mathbf{G}_r)$, $\mathbf{F} = (\mathbf{F}_1,...,\mathbf{F}_r)$ (etc.) denoting vectors when the symmetry has rank $>1$. The bulk Chern-Simons (CS) contribution to \eqref{anom-def} is $-2\mb G\cdot \mb T$ (this defines the bulk topological symmetry, and determines topological charges of bulk monopole operators).
Then the anomaly polynomial of the Neumann b.c. for $\CT_\rho$ is
\begin{align} \CI_\pd &= -2\,\mb G\cdot \mb T+ \frac{r}{2}(\mb H+\mb C)^2 + \frac{r}{2}(\mb H-\mb C)^2 -\frac12|\rho\mb G+\sigma\mb F-\mb C|^2  -\frac12|-\rho\mb G-\sigma\mb F-\mb C|^2 \notag \\ & \qquad + |\rho\mb G+\sigma \mb F +\sigma^\vee \mb T +\rho^\vee\mb B|^2  \notag \\
&= r\mb H^2-(n-r)\mb C^2 +2\,\mb F\cdot \mb B +|\sigma^\vee\mb T+\rho^\vee\mb B|^2 \label{anom-N}
\end{align}

Crucially, the terms quadratic and linear in $\mb G$ cancel in the anomaly polynomial. This ensures, respectively, that there is no gauge anomaly or mixed gauge-global anomaly. The remaining terms in the anomaly polynomial encode boundary 't Hooft anomalies for the global symmetries.

\subsection{The VOA on the boundary}
\label{subsecNeumannAVOA}

The topological A-twist of the 3d bulk is compatible with a holomorphic twist of the Neumann boundary condition above, and leads to the boundary VOA of \cite{CG}. We review the definition of this VOA here; for further discussion see also \cite{garner2022vertex}.

We place the boundary along a complex $z,\bar z$ plane, at Euclidean time $t=0$. The boundary VOA is generated by
\begin{itemize}
\item $n$ beta-gamma systems $\mc{V}_{\beta\gamma}^{T^*V}=\mc{V}_{\beta\gamma}^{\otimes n}$, whose fields come from the restriction of the bulk hypermultiplet scalars $X,Y$ to the boundary
\be \gamma^{i}(z):=X^i(z,t=0)\,,\qquad \beta^{i}(z):= Y^i(z,t=0)\,,\qquad i=1,...,n \ee
\item $n$ free fermions $\mc{V}_{bc}^V=\mc{V}_{bc}^{\otimes n}$ whose fields come from the additional 2d fermions used to cancel the boundary gauge anomaly,
\be b^i(z) = \Gamma_-^i(z)\,,\qquad c^i(z) = \overline{\Gamma}^i_-(z)\,,\qquad i=1,...,n\,. \ee
\item $r$ free fermions $\mc{V}_{bc}^{\mathfrak g} = \mc{V}_{bc}^{\otimes r}$ whose fields come from bulk gauginos restricted to the boundary,
\be \pd c^a(z) = \lambda_-^a(z,t=0)\,,\qquad b_a(z)=\lambda_-'(z,t=0)\,,\qquad a=1,...,r\,. \label{partialc} \ee
\end{itemize}
The A-twisted QFT has a cohomological $\Z$ grading given by $U(1)_C$ charge. Comparing with \eqref{Ncharges}, we see that the degrees of generators are $|c^a|=1$, $|b_a|=-1$, $|\beta|=|\gamma|=|c^i|=|b^i|=0$.

Moreover, there is a nontrivial differential $Q$ on $\mc{V}_{\beta\gamma}^{T^*V}\otimes \mc{V}_{bc}^V\otimes \mc{V}_{bc}^{\mathfrak g}$ induced from the A-twist supercharge. It coincides with the BRST differential for the $\mathfrak g_\C=\mathfrak{gl}(1)^r$ action on $\mc{V}_{\beta\gamma}^{T^*V}\otimes \mc{V}_{bc}^V$. The currents generating the action are 
\be J_a = \sum_i\rho_{ia}(-\normord{\beta^i\gamma^i}+\normord{b^ic^i})\,, \ee
and they generate an affine $\mathfrak{gl}(1)^r$ Kac-Moody algebra at level zero. The differential is then
\be Q = \frac{1}{2\pi i}\oint dz\, \sum_a c^a J_a\,.\ee
In particular,
\be Q(c^a)=0\,,\quad Q(b_a) = J_a\,,\quad \begin{array}{c} Q(\beta^i)=-\sum_a \rho_{ia}c^a\beta^i\,,\quad Q(\gamma^i)= \sum_a \rho_{ia}c^a\gamma^i\,,\\[.2cm]
Q(b^i)= \sum_a \rho_{ia}c^ab^i\,,\quad Q(c^i)= -\sum_a \rho_{ia}c^ac^i\,. \end{array} \ee
Altogether, we find
\begin{Def}
The boundary VOA on a Neumann boundary condition in the A-twisted theory $\CT_{\rho}^A$ is given by the relative BRST cohomology
\begin{equation} \label{def-VA}
    \mc{V}_{\rho}^A:=H_{\rm BRST}\left(\mathfrak{gl}(1)^r, \mc{V}_{\beta\gamma}^{T^*V}\otimes \mc{V}_{bc}^{V}\right)=H^\bullet\left(\left(\mc{V}_{\beta\gamma}^{T^*V}\otimes \mc{V}_{bc}^{V}\otimes \mc{V}_{bc}^{\mathfrak g}\right)^{\mathrm{rel}}, Q\right)\,.
\end{equation}
\end{Def}
\noindent The appearance of relative cohomology, as defined in Section \ref{sec:BRST}, is due physically to the absence of the zero mode of $c^a$, \emph{cf}. \eqref{partialc}\,. For further explanation of this phenomenon, see \cite[Sec. 6.2.1]{CDG-chiral}.

\begin{Rmk} The fact that the  $\mathfrak{gl}(1)^r$ Kac-Moody algebra above has level zero is a consequence of cancellation of boundary gauge anomalies. The currents $J_a^{\beta\gamma}=-\sum_i\rho_{ia}\normord{\beta^i\gamma^i}$ acting on the beta-gamma system alone (the boundary values of bulk hypermultiplets) generate a Kac-Moody algebra at level $-\rho^T\rho$, meaning
\begin{equation}\label{eqAanomaly}
   J_a^{\beta\gamma}(z)J_b^{\beta\gamma}(w)\sim \frac{-\sum_i\rho_{ia} \rho^i{}_b}{(z-w)^2}\,.
\end{equation}
The level $\rho^T\rho$ is simply the contribution of bulk hypermultiplets to the boundary anomaly, \emph{cf}. the $-|\rho\mb G|^2$ term in \eqref{anom-N}. The currents $J_a^{bc} = \sum_i \rho_{ai} \normord{b^ic^i}$ acting on the 2d boundary fermions generate a Kac-Moody at level $+\rho^T\rho$, in agreement with the $+|\rho\mb G|^2$ contribution from the final term in \eqref{anom-N}.
\end{Rmk}

\subsubsection{Symmetries of $\CV_\rho^A$}
\label{sec:VA-sym}

The global symmetries of the Neumann boundary condition all induce symmetries of the boundary VOA $\CV_\rho^A$. However, exactly how these symmetries act depends on further properties of the topological A twist.

We already saw that the R-symmetry $U(1)_C$ gives the A-twisted boundary VOA a cohomological $\Z$ grading, under which $|c^a|=-|b_a|=|Q|=1$. Later we will prove that the BRST cohomology \eqref{def-VA} is supported entirely in cohomological degree zero, so ultimately $U(1)_C$ acts \emph{trivially} on $\CV_\rho^A$.

The R-symmetry $U(1)_H \subset SU(2)_H$ is used in the twisting homomorphism that defines the topological A-twist. It does not survive the twist. However, a diagonal combination of boundary rotations and $U(1)_H$ (twisted spin) defines a $\frac12 \Z$-valued conformal grading for the boundary VOA, with charges
\be \label{conf-A} |Q|=0\,,\quad |\pd|=|z^{-1}|=1\,,\qquad |\beta|=|\gamma|=|c^i|=|b^i|=\tfrac12\,,\quad |c^a|=0\,,\quad |b_a|=1\,.\ee 
Moreover, since the bulk stress tensor is trivialized ($Q$-exact) in the topological A-twist, there can be a conserved boundary stress tensor $L(z)$ whose mode $L_{0}$ acts with charges \eqref{conf-A}. The boundary stress tensor is given by 
\be L(z) =\frac{1}{2}\sum\limits_i \normord{\partial \beta_i \gamma^i}-\normord{\beta_i\partial \gamma^i}+\normord{\partial b_i c^i}+\normord{\partial c_i b^i}+\sum\limits_a\normord{\partial c_ab^a} \label{L-A} \ee
This commutes with the BRST operator $Q$, and induces a stress tensor on $\CV_\rho^A$.

The remaining flavor symmetry $T_F\times T_{\rm top}\times T_B$ all acts on $\CV_\rho^A$ as well. As $T_B$ acts exclusively on 2d boundary matter, it should be enhanced to a Kac-Moody action. Indeed, it is easy to identify $T_B$ currents
\be J^B_\alpha  = \sum_i (\rho^\vee)_{\alpha i} \normord{b^ic^i}\,,\qquad \alpha=1,...,n-r \ee
that commute (have trivial OPE) with all gauge currents $J_a$, and thus descend to currents in the BRST quotient $\CV_\rho^A$. Similarly, in the A-twist, the bulk current for $T_F$ is $Q$-exact, allowing the existence of a conserved boundary current that should generate a Kac-Moody action on $\CT_\rho^A$. (This mechanism is reviewed in greater detail later in Section \ref{sec:sym-spectral}.) The current for the boundarty $T_F$ Kac-Moody symmetry is
\be J^F_\alpha = \sum_i \sigma_{\alpha i}(-\normord{\beta^i\gamma^i}+\normord{b^ic^i})\,,\ee
and it again commutes with the gauge currents $J_a$\,.

In constrast, the bulk current for  $T_{\rm top}$ is \emph{not} exact in the A-twist. Indeed, the A-twist has bulk monopole operators that are nontrivially charged under $T_{\rm top}$. Correspondingly, we do not expect a Kac-Moody enhancement of the finite symmetry $T_{\rm top}$ acting on $\CV_\rho^A$. The lack of enhancement can be seen as follows. Prior to BRST reduction, the VOA has $T_{\rm top}$ currents
\be J^{\rm top}_a =\sum_i (\sigma^\vee)_{ai}\normord{b^ic^i}\,. \ee
However, these do not commute with the BRST operator, since
\be J^{\rm top}_a(z)J_b(w) \sim \frac{\delta_{ab}}{(z-w)^2}\,.\ee
(This calculation corresponds to the contribution $+2\,\mb G\cdot \mb T$ from the last term in \eqref{anom-N}, which is cancelled by the bulk Chern-Simons level $-2\,\mb G\cdot \mb T$.) Thus, the Kac-Moody action does not survive the BRST quotient, even though the finite $T_{\rm top}$ action, generated by $J_0^{\rm top}$, does.

In summary, the VOA $\CV_\rho^A$ has the following symmetries: 
\be
\begin{array}{r|c|c|c|c|c|c|c|c}
& \pd & Q& c^a & b_a &\beta^i & \gamma_i & b^i & c_i \\\hline
\text{cohomological}\; U(1)_H & 0 & 1 & 1 & -1 & 0 & 0 & 0 &0 \\
\text{Virasoro}\;\ni L_{0} & 1 & 0 & 0 & 1 & \tfrac12 & \tfrac12 & \tfrac12 & \tfrac12 \\
\text{Kac-Moody}\; \supset T_F & 0&0& 0&0 & -\sigma & \sigma & \sigma & -\sigma \\
\text{Kac-Moody}\; \supset T_B & 0&0&0&0 & 0&0&\rho^\vee & -\rho^\vee \\
\text{finite}\; T_{\rm top} & 0&0&0&0 &0&0 & \sigma^\vee & -\sigma^\vee
\end{array}
\ee

\subsection{BRST cohomology via free field realizations}\label{subsecBRSTAff}

To describe the VOA $\CV_\rho^A$ more explicitly, we will use the free field realization of $\mc{V}_{\beta\gamma}$ as an extensions of a Heisenberg VOA by its Fock modules, from Section \ref{sec:betagamma}, combined with the special cases of BRST cohomology reviewed in Section \ref{sec:BRST}.

Let $H_\phi,H_\eta$ be Heisenberg algebras defined by the vector space $\C^n$ with bases $\{\phi^i\}_{i=1}^n$ and $\{\eta^i\}_{i=1}^n$ and opposite pairings $B( \phi^i,\phi^j)=-B( \eta^i,\eta^j)=\delta^{ij}$. 
For $\mu\in \Z^n$, recall that $\CF_{\mu\cdot\phi}^\phi$ denotes the Fock module of $H_\phi$ generated by highest-weight vector $\normord{e^{\mu\cdot \phi}}$, and let $M_{\mu\cdot\eta}^\phi \subset \CF_{\mu\cdot\eta}^\phi$ be the subspace defined by
\be M_{\mu\cdot\phi}^\phi = \bigcap_{i=1}^n \text{ker}\, S^i_0\big|_{\CF_{\mu\cdot\phi}^\phi}\,,\qquad S^i(z) = \normord{ e^{\phi^i(z)}}\,. \ee
Recall from Section \ref{sec:SF} that $M_{0}^\phi$ is $n$ copies of the singlet VOA, and that the $M_{\mu\cdot\phi}^\phi$ are simple modules thereof. (Alternativelty, $M_{\mu\cdot\phi}^\phi$ is the charge-$\mu$ subspace of $n$ symplectic fermions.)
Then $n$ copies of the beta-gamma VOA have a decomposition 
\be \mc{V}_{\beta\gamma}^{T^*V}\,\cong\, \bigoplus_{\mu\in \Z^n} M_{\mu\cdot \phi}^\phi \otimes \CF_{\mu\cdot\eta}^\eta\,.\label{nbg-ff} \ee
Explicitly, the generators of the beta-gamma VOA are identified as
\be \beta^i(z) = \normord{e^{\phi^i(z)+\eta^i(z)}}\,,\qquad \gamma^i(z) = -\normord{\pd\phi^i(z)e^{-\phi^i(z)-\eta^i(z)}}\,. \ee

Let's also bosonize the $n$ boundary fermions $\mc{V}_{bc}^V = \mc{V}_{bc}^{V}$, by introducing a third Heisenberg algebra $H_{\theta}$ with vector space spanned by $\{\theta^i\}_{i=1}^n$ and positive inner product $B(\theta^i,\theta^j)=\delta^{ij}$. The $\mc{V}_{bc}^{\otimes n} \cong \bigoplus_{\lambda\in \Z^n} \CF_{\lambda\cdot\theta}^\theta$ and
\be \mc{V}_{\beta\gamma}^{T^*V}\otimes \mc{V}_{bc}^{V}\,\cong\, \bigoplus_{\mu,\lambda\in\Z^n} M_{\mu\cdot \phi}^\phi \otimes \CF_{\mu\cdot\eta}^\eta \otimes \CF_{\lambda\cdot\theta}^\theta\,. \label{bgbc-ff}\ee

\subsubsection{Decomposing the charge lattice}
\label{sec:VA-lattice}

We'd like to find the BRST cohomology of $\mathfrak{gl}(1)^r$ acting on this, as in \eqref{def-VA}. To do so, we use a  decomposition of the vector space $V=\C^n$ into a subspace spanned by the gauge charges, and its orthogonal complement:
\be V = \rho(\C^r)\oplus \rho^\vee(\C^{n-r})\,. \ee
These two subspaces are orthogonal under the standard Euclidean inner product, as a consequence of the exact sequence \eqref{split-SES2}.
Let $\pi:V \to \rho(\C^r)$ and $\pi^\perp: V\to \rho^\vee(\C^{n-r})$ denote the orthogonal projections. These orthogonal projections can be written using the matrices $\rho$ and $\rho^\vee$ as: $\pi=\rho (\rho^\T \rho)^{-1} \rho^\T $ and $\pi^\perp= \tau^\T (\tau \tau^\T)^{-1}\tau$. As projections, these operators satisfy $\pi^2=\pi$, $(\pi^\perp)^2=\pi^\perp$, and moreover $\pi+\pi^\perp=\mathrm{Id}$ and $\pi^\perp\cdot \pi=0$.

If we could similarly decompose the integral lattice $\Z^n$ as $\rho(\Z^r)\oplus \rho(\Z^{n-r})$, BRST cohomology would be quite easy to evaluate. However, we know from Proposition \ref{prop:H} in Section \ref{sec:H} that
\be \frac{\Z^n}{\rho(\Z^r)\oplus \rho^\vee(\Z^{n-r})} = H \ee
may be a nontrivial finite group, the symmetry fractionalization group of the theory $\CT_{\rho}$, whose elements label massive vacua.

To account for nontrivial $H$ we consider the image of $\Z^n$ under the orthogonal projections $\pi$ and $\pi^\perp$. Since $\pi(\Z^n)$ is a lattice inside $\rho(\C^r)$, we can take its inverse image under $\rho$ to obtain a lattice $N$ in $\C^r$. Doing this also for the orthogonal projection $\pi^\perp$, we obtain
\be N := \rho^{-1}(\pi(\Z^n)) \subset \C^r\,,\qquad N^\vee := \rho^{\vee,-1}(\pi^\perp(\Z^n)) \subset \C^{n-r}\,. \ee 
By Proposition \ref{prop:H} there are natural isomorphisms
\be N/\Z^r \simeq \Z^n/\big(\rho(\Z^r)\oplus\rho^\vee(\Z^{n-r})\big) \simeq  N^\vee/\Z^{n-r} \simeq H\,. \ee
For $\lambda\in N$ and $\lambda^\vee\in N^\vee$, denote $[\lambda],[\lambda^\vee] \in H$ the corresponding equivalence classes in $H$. We then have an orthogonal decomposition
\begin{subequations}  \label{Z-decomp}
\be \Z^n = \bigsqcup_{\footnotesize\begin{array}{c}\lambda\in N,\,\lambda^\vee\in N^\vee \\\text{s.t.}\; [\lambda]=[\lambda^\vee]\end{array}} \rho(\lambda)+\rho^\vee(\lambda^\vee) \qquad\text{with}\quad \rho(\lambda)\cdot \rho^\vee(\lambda^\vee)=0\,.\ee
Equivalently, if for each $h\in H$ we choose representatives $h_N\in N$ and $h_{N^\vee}\in N^\vee$ of its equivalence classes, we have
\be \Z^n = \bigsqcup_{\footnotesize\begin{array}{c}\lambda\in \Z^r,\,\lambda^\vee\in \Z^{n-r} \\ h \in H\end{array}} \rho(\lambda+h_N) + \rho^\vee(\lambda^\vee+h_{N^\vee})\,. \ee
\end{subequations}

\begin{Exp}
Consider the case:
\begin{equation}
    \rho=\left(\begin{array}{c}1 \\ 1\end{array}\right). 
\end{equation}
Namely the gauge group is $U(1)$ and the representation is $V=\mathbb{C}^2$ with weights $1,1$. In this case, $\rho^\vee$ is given by $\left(\begin{smallmatrix} 1\\-1\end{smallmatrix}\right)$, and $H \simeq \Z_2$ as in \eqref{H-SQEDn}. The images $\rho(\Z)$ and $\rho^\vee(\Z)$ in $\Z^2$ are shown on the left of Figure \ref{fig:lattices}. The projection $\pi(\Z^2)$ is shown on the right, and its inverse image is $N = \rho^{-1}(\pi(\Z^2)) = \frac12\Z$. Similarly, $N^\vee= \rho^{\vee,-1}(\pi^\perp(\Z^2)) = \frac12 \Z$\,. We can write every element $v=\left(\begin{smallmatrix} a\\ b\end{smallmatrix}\right)\in\Z^2$ uniquely as
\be  v = \rho(n)+\rho^\vee(m) \ee
where $n,m$ are both integers (corresponding to the class $0\in H$) or both half-integers (corresponding to the class $1\in H$). Explicitly, $n=\frac12(a+b)$ and $m=\frac12(a-b)$\,.
\end{Exp}

\begin{figure}[htb]
\centering
\includegraphics[width=5in]{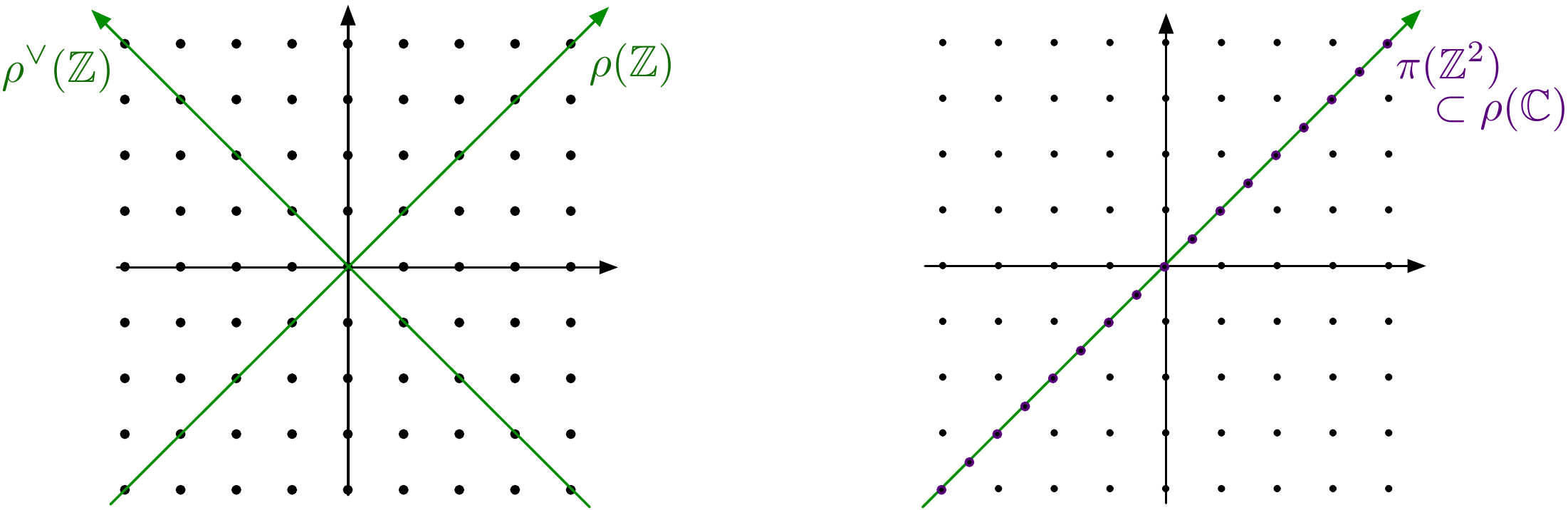}
\caption{The $\Z^2$ lattice of all hypermultiplet charges in gauge theory with $G=U(1)$, $V=\C^2$, and $\rho=\left(\begin{smallmatrix}1\\1\end{smallmatrix}\right)$. Left: the image of the gauge charge lattice $\rho(\Z)$ and its orthogonal complement $\rho^\vee(\Z)$. Right: the projection $\pi(\Z^2)$ of the full charge lattice to the space spanned by $\text{im}\,\rho$.}
\label{fig:lattices}
\end{figure}

\subsubsection{The BRST cohomology}
\label{sec:VA-BRST}

We will use the orthogonal decomposition of $\C^n$ above (and its integral lattice) to decompose the Heisenberg algebras (and their Fock modules) appearing in free field realizations.

For example, given a basis $\{\phi^i\}_{i=1}^n$ for $\C^n$ with inner product $B(\phi^i,\phi^j) = \delta^{ij}$, the decomposition $\C^n = \rho(\C^r)\oplus \rho^\vee(\C^{n-r})$ corresponds to choosing a new basis
\be \big\{\phi_0^a\big\}_{a=1}^{r} \cup \big\{\phi_\perp^\alpha\big\}_{\alpha=1}^{n-r}\,, \qquad \phi_0^a:=(\rho^T \phi)^a\,,\quad \phi_\perp^\alpha:=(\rho^{\vee T}\phi)^\alpha\,, \ee
where the $\{\phi_0^a\}$ are a basis for $\C^r$ and the $\{\phi_\perp^\alpha\}$ are a basis for $\C^{n-r}$, with induced inner products $\rho^T\rho$ and $\rho^{\vee T}\rho^\vee$, respectively:
\be B(\phi_0^a,\phi_0^b) = (\rho^T\rho)_{ab}\,,\qquad B(\phi_0^\alpha,\phi_0^\beta) = (\rho^{\vee T}\rho^\vee)_{\alpha\beta}\,.\ee
Also, by orthogonality \eqref{split-SES2}, $B(\phi_0^a,\phi_\perp^\alpha)=0$.
The Heisenberg algebra thus decomposes as
\be H_\phi \,\cong\, H_{\phi_0}\otimes H_{\phi_\perp}\,. \ee
For a Fock module labelled by $\lambda\in \Z^n$ we can now write
\be \lambda =\rho(\lambda_0)+\rho^\vee(\lambda^\vee)\,,\qquad \lambda_0\in N\,,\quad \lambda^\vee\in N^\vee\,, \quad [\lambda_0]=[\lambda^\vee]\in H\,, \ee
and thus the Fock module decomposes as
\be \CF_{\lambda\cdot\phi}^\phi \,\cong\, \CF_{\lambda_0\cdot\phi_0}^{\phi_0} \otimes \CF_{\lambda^\vee\cdot\phi_\perp}^{\phi_\perp} \ee

In exactly the same way, we decompose Heisenberg algebras
\be H_\eta\,\cong\, H_{\eta_0}\otimes H_{\eta_\perp}\,,\qquad H_\theta\,\cong\, H_{\theta_0}\otimes H_{\theta_\perp}\,,\ee
along with their Fock modules, the only difference being that in the $\eta$ algebras the inner products of basis vectors all have opposite signs.
Altogether, the free field realization of the beta-gamma and bc VOA's in \eqref{bgbc-ff} becomes
\be \mc{V}_{\beta\gamma}^{T^*V}\otimes \mc{V}_{bc}^V \;\cong \hspace{-.3cm}\bigoplus_{\substack{ \mu,\lambda\in N;\;\; \mu^\vee,\lambda^\vee\in N^\vee \\[.1cm] \text{s.t.}\, [\mu]=[\mu^\vee],\; [\lambda]=[\lambda^\vee]}} \hspace{-.3cm}
  M_{(\rho(\mu)+\rho^\vee(\mu^\vee))\cdot \phi}^\phi \otimes \CF_{\mu\cdot \eta_0}^{\eta_0} \otimes \CF_{\mu^\vee\cdot\eta_\perp}^{\eta_\perp} \otimes \CF_{\lambda\cdot\theta_0}^{\theta_0} \otimes \CF_{\lambda^\vee\cdot\theta_\perp}^{\theta_\perp} \ee

In this free-field realization, the $\mathfrak{gl}(1)^r$ gauge currents for BRST reduction are given by
\begin{align}
  J_a &= \sum_i \rho_{ia} \left(-\normord{\beta^i\gamma^i}+\normord{b^ic^i} \right)=\sum\limits_i\rho_{ia}\partial \eta^i(z)+\sum\limits_i \rho_{ia}\partial \theta^i(z) \notag \\
  &=  \pd\eta^a_0(z) + \pd\theta^a_0(z)\,.
\end{align}
These commute with $M_{(\rho(\mu)+\rho^\vee(\mu^\vee))\cdot \phi}^\phi$ since they do not involve $\phi$ at all, and also commute with $\CF_{\mu^\vee\cdot\eta_\perp}^{\eta_\perp}$ and $ \CF_{\lambda^\vee\cdot\theta_\perp}^{\theta_\perp}$, thanks to our orthogonal decomposition. The only terms in the decomposition on which the currents act nontrivially are
\be \CF_{\mu\cdot \eta_0}^{\eta_0} \otimes \CF_{\lambda\cdot\theta_0}^{\theta_0}\,. \ee
Recalling from \eqref{BRST-triv0} of  Section \ref{sec:BRST} that 
\begin{equation} \label{BRST-triv}
H_{\rm BRST}\left(\mathfrak{gl}(1)^r,   \CF_{\mu\cdot \eta_0}^{\eta_0} \otimes \CF_{\lambda\cdot\theta_0}^{\theta_0}
\right)=\delta_{\lambda,\mu}\mathbb{C},
\end{equation}
the BRST cohomology of $\mc{V}_{\beta\gamma}^{T^*V}\otimes \mc{V}_{bc}^{V}$ evaluates to
\begin{equation}\label{eqffBRSTlatticeNq}
  H_{\rm BRST}\left(\mathfrak{gl}(1)^r, \mc{V}_{\beta\gamma}^{T^*V}\otimes \mc{V}_{bc}^{V}\right) \,\cong\hspace{-.3cm}\bigoplus_{\substack{ \mu\in N;\;\; \mu^\vee,\lambda^\vee\in N^\vee \\[.1cm] \text{s.t.}\, [\mu]=[\mu^\vee]=[\lambda^\vee]}} \hspace{-.3cm}
  M_{(\rho(\mu)+\rho^\vee(\mu^\vee))\cdot \phi}^\phi \otimes \CF_{\mu^\vee\cdot\eta_\perp}^{\eta_\perp} \otimes \CF_{\lambda^\vee\cdot\theta_\perp}^{\theta_\perp}\,.
\end{equation}

There is useful way to regroup this.
Note that
\be \CV_{\rho^\vee,0} := \bigoplus_{\lambda^\vee\in \Z^{n-r}} \CF^{\theta_\perp}_{\lambda^\vee\cdot\theta_\perp} \label{def-V0} \ee
is simply the lattice VOA corresponding to the full-rank lattice $\Z^{n-r}\in\C^{n-r}$ with inner product $\rho^{\vee T}\rho^\vee$\,. Its category of modules is semisimple, generated by the simple currents
\be \CV_{\rho^\vee,h} := \bigoplus_{\lambda^\vee\,\in\, \Z^{n-r}+h_{N^\vee}} \CF^{\theta_\perp}_{\lambda^\vee\cdot\theta_\perp}\,, \label{def-Vh} \ee
for each $h\in H$, where $h_{N^\vee}$ representing the equivalence class of $h\in N^\vee/\Z^{n-r}$, as in \eqref{Z-decomp}. These obey $\CV_{\rho^\vee,h}\times \CV_{\rho^\vee,h'}=\CV_{\rho^\vee,h+h'}$\,.

Similarly, the VOA
\be \CM_{\rho,0} := \bigoplus_{\substack{\lambda\,\in\,\Z^r \\
\lambda^\vee\,\in\,\Z^{n-r}}} M_{(\rho(\lambda)+\rho^\vee(\lambda^\vee)\cdot\phi}^\phi \otimes \CF_{\lambda^\vee\cdot\eta_\perp}^{\eta_\perp}\, \label{def-M0} \ee
has modules labeled by $h\in H$, given by
\be \CM_{\rho,h} := \bigoplus_{\substack{\lambda\,\in\,\Z^r+h_N \\
\lambda^\vee\,\in\,\Z^{n-r}+h_{N^\vee}}} M_{(\rho(\lambda)+\rho^\vee(\lambda^\vee))\cdot\phi}^\phi \otimes \CF_{\lambda^\vee\cdot\eta_\perp}^{\eta_\perp}\,. \label{def-Mh} \ee
From the relation \eqref{SF-ff-n} between singlet algebras and symplectic fermions, we see that $\CM_{\rho,0}$ can also be understood as an invariant subalgebra (\emph{a.k.a.} orbifold)
\be \CM_{\rho,0} \,\cong\, \big( \CV_{SF}^{T^*V}\otimes \CV_{\rho^\vee,0}^-\big)^{(\C^*)^{n-r}} \label{M0-orb} \ee
of $n$ symplectic fermions and a lattice VOA $\CV_{\rho^\vee,0}^- = \bigoplus_{\lambda^\vee\in \Z^{n-r}} \Fock_{\lambda^\vee\cdot\eta_\perp}^{\eta_\perp}$ corresponding to the lattice $\Z^{n-r}$ with negative bilinear form $-\rho^{\vee T}\rho^\vee$\,. The group $(\C^*)^{n-r}$ acts anti-diagonally, so that taking invariants ties together the weight spaces in the two factors, to reproduce \eqref{def-M0}.
Now the full module category of $\CM_{\rho,0}$ is more complicated, but by the work of \cite{McRae:2020} the $\CM_{\rho,h}$ are still simple $H$ currents, obeying $\CM_{\rho,h}\times \CM_{\rho,h'}=\CM_{\rho,h+h'}$\,.

Putting everything together, we arrive at:
\begin{Prop} \label{prop:A}
There is an isomorphism
\begin{align}
\mc{V}_{\rho}^A &= H_{\rm BRST}\left(\mathfrak{gl}(1)^r, \mc{V}_{\beta\gamma}^{T^*V} \otimes \mc{V}_{bc}^{V}\right) \notag \\
&\hspace{1in} \cong\, \bigoplus_{\substack{ \mu\in N;\;\; \mu^\vee,\lambda^\vee\in N^\vee \\[.1cm] \text{s.t.}\, [\mu]=[\mu^\vee]=[\lambda^\vee]}} \hspace{-.3cm}
  M_{(\rho(\mu)+\rho^\vee(\mu^\vee))\cdot \phi}^\phi \otimes \CF_{\mu^\vee\cdot\eta_\perp}^{\eta_\perp} \otimes \CF_{\lambda^\vee\cdot\theta_\perp}^{\theta_\perp} \notag \\
& \hspace{1in} \cong\, \bigoplus\limits_{h\in H} \CM_{\rho,h}\otimes \CV_{\rho,h} \label{eqBRSTbgq}
\end{align}
expressing $\CV_\rho^A$ as extension of the product of VOA's $\CM_{\rho,0}\otimes \CV_{\rho^\vee,0}$ by the $H$ simple currents defined in \eqref{def-Vh}--\eqref{def-Mh}. In particular, the BRST cohomology in the original definition of $\CV_\rho^A$ is concentrated in cohomological degree zero.
\end{Prop}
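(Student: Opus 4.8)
The plan is to deduce the Proposition directly from the free-field realization \eqref{bgbc-ff}, the orthogonal lattice decomposition \eqref{Z-decomp}, and the localization property \eqref{BRST-triv} of Heisenberg BRST cohomology. The three displayed isomorphisms are then, respectively, the definition \eqref{def-VA}, an explicit cohomology computation, and a reindexing; the only nontrivial representation-theoretic input is the simple-current claim at the very end.

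First I would rewrite $\mc{V}_{\beta\gamma}^{T^*V}\otimes\mc{V}_{bc}^V$ in the basis adapted to $\C^n=\rho(\C^r)\oplus\rho^\vee(\C^{n-r})$, splitting each Heisenberg factor as $H_\phi\cong H_{\phi_0}\otimes H_{\phi_\perp}$ and likewise for $\eta,\theta$, so that every Fock module factorizes into a gauge-direction ($\eta_0,\theta_0$) part and a flavor-direction ($\phi,\eta_\perp,\theta_\perp$) part. The essential observation is that the BRST currents $J_a=\pd\eta_0^a+\pd\theta_0^a$ involve only the gauge-direction Heisenberg subalgebras; this is exactly where the orthogonality built into \eqref{split-SES2} is used. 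Consequently the BRST complex is a tensor product of the gauge-direction Fock complex with an inert spectator, and the cohomology factors: $H_{\rm BRST}$ acts on $\CF_{\mu\cdot\eta_0}^{\eta_0}\otimes\CF_{\lambda\cdot\theta_0}^{\theta_0}$ through \eqref{BRST-triv}, returning $\delta_{\lambda,\mu}\C$ and passing the $M^\phi$, $\CF^{\eta_\perp}$, $\CF^{\theta_\perp}$ factors through unchanged. Setting $\lambda=\mu$ and combining the surviving constraints $[\mu]=[\mu^\vee]$ and $[\mu]=[\lambda^\vee]$ yields \eqref{eqffBRSTlatticeNq}. Since \eqref{BRST-triv0} is concentrated in degree zero and the spectator factors carry no BRST degree, the entire cohomology sits in degree zero, giving the final sentence of the Proposition.

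The second-to-third isomorphism is then pure bookkeeping: in \eqref{eqffBRSTlatticeNq} every summand carries a single common class $h:=[\mu]=[\mu^\vee]=[\lambda^\vee]\in H$, so I would partition the index set by $h$. For fixed $h$, summing the $\phi$- and $\eta_\perp$-dependent factors over $\mu,\mu^\vee$ of class $h$ reproduces $\CM_{\rho,h}$ as in \eqref{def-Mh}, while summing the $\theta_\perp$-factor over $\lambda^\vee$ of class $h$ reproduces $\CV_{\rho^\vee,h}$ of \eqref{def-Vh}; these two sums being independent once $h$ is fixed, the $h$-block factorizes as $\CM_{\rho,h}\otimes\CV_{\rho^\vee,h}$, and summing over $H$ gives $\bigoplus_{h\in H}\CM_{\rho,h}\otimes\CV_{\rho^\vee,h}$. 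Because the free-field realization and BRST cohomology are both VOA constructions, this is an isomorphism of VOA's and not merely of graded vector spaces.

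It remains to justify the phrase ``$H$ simple current extension.'' For the $\CV_{\rho^\vee,h}$ this is immediate from the lattice-VOA fusion rules of Section \ref{sec:FF}, which give $\CV_{\rho^\vee,h}\otimes\CV_{\rho^\vee,h'}=\CV_{\rho^\vee,h+h'}$. The genuinely delicate input — and the step I expect to be the main obstacle — is that the $\CM_{\rho,h}$ are simple currents for the orbifold VOA $\CM_{\rho,0}$: a direct free-field argument is not available here, and one must instead use the description \eqref{M0-orb} of $\CM_{\rho,0}$ as a $(\C^*)^{n-r}$-orbifold of symplectic fermions times a lattice VOA, and invoke the results of \cite{McRae:2020} to conclude that the $\CM_{\rho,h}$ are simple and satisfy $\CM_{\rho,h}\otimes\CM_{\rho,h'}=\CM_{\rho,h+h'}$. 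Granting this, the $H$-grading together with the two simple-current fusion rules exhibits the right-hand side as the $H$-simple-current extension of $\CM_{\rho,0}\otimes\CV_{\rho^\vee,0}$, completing the proof.
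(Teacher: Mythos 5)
Your proposal is correct and follows essentially the same route as the paper's own proof: the orthogonal splitting of the Heisenberg algebras adapted to $\C^n=\rho(\C^r)\oplus\rho^\vee(\C^{n-r})$, localization of the BRST differential onto the $\CF^{\eta_0}\otimes\CF^{\theta_0}$ factors via \eqref{BRST-triv} (which also gives degree-zero concentration), regrouping the surviving sum by the common class $h\in H$, and appealing to the orbifold description \eqref{M0-orb} together with \cite{McRae:2020} for the simple-current property of the $\CM_{\rho,h}$. You also correctly identified that this last step is the only genuinely nontrivial representation-theoretic input, exactly as in the paper.
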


 One can use the fact that $\CV_\rho^A$ is a simple current extension of $\CM_{\rho,0}\otimes \CV_{\rho^\vee,0}$ to relate their categories of modules, applying the work of \cite{creutzig2017tensor}. This would be closely analogous to \cite{Creutzig:2022ugv} as well as \cite{Adamovic:2022oqj}.  We will describe the modules of $\CV_\rho^A$ in other ways later.

\subsubsection{Some examples}
\label{sec:VA-examples}

\begin{Exp}
Consider the case $\rho=(1)$, \emph{i.e.} $U(1)$ gauge theory with $V=\C$ of weight 1. Then $\Lambda^\perp=0$ and by \eqref{eqSFfreefield} we have
\be \CV^A_\rho=\bigoplus_{\lambda\in \mathbb{Z}} M_{\lambda\varphi}^\varphi \,\cong\, \CV_{SF}\,. \ee
Thus the boundary VOA is just symplectic fermions. The explicit relation between $\mc{V}_{\beta\gamma}\otimes \mc{V}_{bc}$ and symplectic fermions $\chi^\pm$ is
\begin{equation}
    \beta(z) b(z)\mapsto \chi^+(z),~-\gamma(z)c(z)\mapsto \chi^-(z). 
\end{equation}
\end{Exp}

\begin{Exp}
As a slight generalization, consider any simple abelian theory whose charge lattice is isomorphic to a trivial lattice, meaning that the rank-$r$ lattice with bilinear form $\rho^T\rho$ is isomorphic to the standard Euclidean lattice $\Z^r$ as in Proposition \ref{prop:triv}. In particular, $H=\{0\}$.
Then $\Z^n=\rho(\Z^r)\oplus\rho^\vee(\Z^{n-r})$ and each of the lattices $\rho(\Z^r)$ and $\rho^\vee(\Z^{n-r})$ is isomorphic to standard $\Z^r$ and $\Z^{n-r}$. We may decompose $M_{(\rho(\lambda)+\rho^\vee(\lambda^\vee))\cdot\phi}^\phi \,\cong\, M_{\lambda\cdot\phi_0}^{\phi_0} \otimes M_{\lambda^\vee\cdot\phi_\perp}^{\phi_\perp}$ and it follows from the free field realizations of symplectic fermions
\eqref{SF-ff-n} and and symplectic bosons \eqref{betagamma-MF0} that
\be \CM_{\rho,0} \,\cong\, \left( \bigoplus_{\lambda\in \Z^r} M_{\lambda\cdot\phi_0}^{\phi_0} \right) \otimes \left( \bigoplus_{\lambda^\vee\in \Z^{n-r}} M_{\lambda^\vee\cdot\phi_\perp}^{\phi_\perp} \otimes \CF_{\lambda^\vee\cdot\eta_\perp}^{\eta_\perp} \right) \,\cong\, \CV_{SF}^{\otimes r}\otimes \mc{V}_{\beta\gamma}^{\otimes {n-r}}\,. \ee
Similarly, by bosonization,
\be \CV_{\rho^\vee,0}\,\cong\, \bigoplus_{\lambda^\vee\in \Z^{n-r}}\CF_{\lambda^\vee\cdot\theta_\perp}^{\theta_\perp} \,\cong\, \mc{V}_{bc}^{\otimes n-r}\,, \ee
whence
\be \CV_\rho^A = \CM_{\rho,0}\otimes \CV_{\rho^\vee,0} \,\cong\, \CV_{SF}^{\otimes r}\otimes \mc{V}_{\beta\gamma}^{\otimes {n-r}} \otimes   \mc{V}_{bc}^{\otimes n-r}\,. \ee
This is consistent with Proposition \ref{prop:triv}, which states that the underlying physical theory $\CT_{\rho}$ factorizes as a product of $r$ SQED's (with boundary VOA $\CV_{SF}$ as in the previous example) and $n-r$ free hypermultiplets (with boundary VOA $\mc{V}_{\beta\gamma}\otimes \mc{V}_{bc}$). 
\end{Exp}

We also note that when $H$ is trivial, the entire VOA $\CV_\rho^A$ can be expressed as an orbifold of symplectic fermions and lattice VOA's. Indeed, from \eqref{M0-orb} together with $H=\{0\}$ we have
\be \CV_\rho^A\,\cong\, \big( \CV_{SF}^{T^*V}\otimes \CV_{\rho^\vee,0}^{-}\big)^{(\C^*)^{n-r}}\otimes \CV_{\rho^\vee,0}^+\,.\ee
Later in Section \ref{subsecgaugeextA}, we will generalize this idea.  We will show that, for \emph{any} $\rho$, $\CV_\rho^A$ is isomorphic to an orbifold of symplectic fermions modulo a Morita-trivial vertex algebra. 

\begin{Exp}
Consider the case when $\rho=(1,1)^{\mathsf{T}}$, \emph{i.e.} $G=U(1)$ and $V=\C^2$ with weights 1,1.
Then $\rho^{\vee}=(1,-1)^{\mathsf{T}}$\,, and recall that $H\simeq \Z_2$.
The Heisenberg VOA underlying the free field realization of $\CV_\rho^A$ in Prop. \ref{prop:A} is based on a vector space with basis $\{\phi^1,\phi^2,\eta_\perp,\theta_\perp\}$ and nonzero pairings
\be B(\phi^i,\phi^j)=\delta^{ij}\,,\quad B(\eta_\perp,\eta_\perp)=-2\,,\quad B(\theta_\perp,\theta_\perp)=2\,.\ee
Then the boundary VOA takes the form
\be \CV_\rho^A \;\cong\, \bigoplus_{\substack{a,b,c\,\in\, \Z \\ \text{s.t.}\; a+b+c\;\text{is even}}}  M_{a\phi^1+b\phi^2}^\phi\otimes \Fock_{\frac12(a-b)\eta_\perp}^{\eta_\perp}\otimes \Fock^{\theta_\perp}_{\frac12 c\,\theta_\perp} \ee
For small $a,b,c$ we can give the following correspondence between generators of the direct summands and fields in $\mc{V}_{\beta\gamma}^{\otimes 2}\otimes \mc{V}_{bc}^{\otimes 2}$ prior to taking the BRST cohomology:
\be \begin{array}{c@{\quad\leftrightarrow\quad}c}
\beta^1 b^1,\, \beta^1 b^2 &   M_{\phi^1}\otimes \Fock_{\frac{1}{2}\eta_\perp}\otimes \Fock_{\pm\frac{1}{2}\theta_\perp} \\
\beta^2 b^1,\, \beta^2 b^2  & M_{\phi^2}\otimes \Fock_{-\frac{1}{2}\eta_\perp}\otimes \Fock_{\pm\frac{1}{2}\theta_\perp} \\
\gamma^1 c^1,\, \gamma^1 c^2 & M_{-\phi^1}\otimes \Fock_{-\frac{1}{2}\eta_\perp}\otimes \Fock_{\pm\frac{1}{2}\theta_\perp} \\
\gamma^2 c^1,\, \gamma^2 c^2 & M_{-\phi^2}\otimes \Fock_{\frac{1}{2}\eta_\perp}\otimes \Fock_{\pm\frac{1}{2}\theta_\perp}  \\
b^1 c^2,\,b^2 c^1 & M_{0}\otimes \Fock_{0}\otimes \Fock_{\pm\theta_\perp}\end{array} \label{eq11VOAA}
\ee
\end{Exp}

\subsubsection{Symmetries in the free-field realization}
\label{sec:VA-sym-ff}

In Section \ref{sec:VA-sym} we saw that $\CV_\rho^A$ should have a cohomological $U(1)_H$ symmetry, as well as $T_F\times T_B\times T_{\rm top} \simeq U(1)^{n-r}\times U(1)^{n-r}\times U(1)^r$ flavor symmetry. In Proposition \ref{prop:A} we proved that the cohomological symmetry $U(1)_H$ acts trivially. The remaining symmetry appears in the free-field realization the following way.

Prior to BRST reduction, in the $\CV_{\beta}^{T^*V}\otimes \mc{V}_{bc}^V$ free-field realization \eqref{bgbc-ff}, the boundary $T_B$ symmetry has currents $J^B = \rho^{\vee T}\pd\theta$  (meaning $n-r$ currents $J^B_\alpha = \rho^\vee_{\alpha i}\pd\theta^i$). After decomposing the Heisenberg VOA's, these simply become $J^B_\alpha = \pd\theta_\perp$, which commute with the BRST reduction, and define $T_B$ currents in $\CV_\rho^A$, acting exclusively on the lattice VOA modules $\CV_{\rho^\vee,h}$. Under the zero-modes of $J^B$, the Fock modules therein have weight
\be T_B:\qquad J^B = \pd\theta_\perp\,,\qquad \text{weight}(\CF_{\lambda^\vee\cdot\theta_\perp}^{\theta_\perp}) = \rho^{\vee T}\rho^\vee \lambda^\vee  \in \Z^{n-r}\,.\label{A-JB}\ee

The bulk flavor symmetry $T_F$ acts prior to BRST reduction with currents $J^F = \sigma^T(\pd\eta+\pd\theta)$. The BRST currents for the gauge action are $J^G=\rho^T(\pd\eta+\pd\theta)$, and we can shift the $J^F$ by multiples of $J^G$ (which is exact in BRST cohomology) to ensure that they depend exclusively on $\eta_\perp$ and $\theta_\perp$. Put differently, we project onto the subspace normal to the gauge charge lattice. The modified flavor currents become $J^F=(\sigma^T-\sigma^T\rho(\rho^T\rho)^{-1}\rho^T)(\pd\eta+\pd\theta)$. With a bit of linear algebra, this gets written in terms of $\eta_\perp,\theta_\perp$ as
\be T_F:\qquad J^F = (\rho^{\vee T}\rho^\vee)^{-1} (\pd\eta_\perp+\pd\theta_\perp)\,. \label{A-JF} \ee
Under the zero-mode of $J^F$, the modules in the lattice decomposition have weights
\be \text{weight}\big(M_{(\rho(\mu)+\rho^\vee(\mu^\vee))\cdot \phi}^\phi \otimes \CF_{\mu^\vee\cdot\eta_\perp}^{\eta_\perp} \otimes \CF_{\lambda^\vee\cdot\theta_\perp}^{\theta_\perp}\big) = \lambda^\vee-\mu^\vee \in \Z^{n-r}  \ee
(Note both $\mu^\vee$ and $\lambda^\vee$ may be fractional, but their difference is integral precisely because $[\mu^\vee]=[\lambda^\vee]\in H$.)

The topological symmetry $T_{\rm top}$ is not enhanced to Kac-Moody. Before the BRST quotient, it acts on the modules $M_{\mu\cdot\phi}^\phi\otimes \Fock_{\mu\cdot\eta}^\eta \otimes \Fock_{\lambda\cdot\theta}^\theta$ in the free-field realization \eqref{bgbc-ff} with weights $\sigma^{\vee T}\lambda \in \Z^r$. After the BRST quotient, this is idenfitied with
\be T_{\rm top}:\qquad \text{weight}\big(M_{(\rho(\mu)+\rho^\vee(\mu^\vee))\cdot \phi}^\phi \otimes \CF_{\mu^\vee\cdot\eta_\perp}^{\eta_\perp} \otimes \CF_{\lambda^\vee\cdot\theta_\perp}^{\theta_\perp}\big) = \mu+\sigma^{\vee T}\rho^\vee\lambda^\vee \in \Z^r\,. \label{A-Jtop} \ee
This is again integral due to $[\mu]=[\lambda^\vee]\in H$. The lack of Kac-Moody enhancement is reflected in the fact that the putative currents $\pd\phi$ that would be required to measure the $\mu$ weight are not contained in $M^\phi_0$, as they are not in the kernel of the screening charges.

To write down the stress tensor in the free-field realization, let us first write the stress tensor of the BRST complex before taking the cohomology using the free-field realization: 
\be
L(z)=\frac{1}{2}\sum_i \left(\normord{\partial\phi^i\partial\phi_i}-\partial^2\phi^i-\normord{\partial \eta^i\partial \eta_i}+ \normord{\partial \theta^i\partial\theta_i}\right)+ \sum_a \normord{\partial c^a b_a}.
\ee
We can use the projection $\pi$ and $\pi^\perp$ to write $\partial \eta^i=\sum \pi_{ij}\partial \eta^j+\pi^\perp_{ij}\partial \eta^j$. The term $\sum \pi_{ij}\partial \eta^j$ is equal to:
\be
\sum \pi_{ij}\partial \eta^j=\sum_{j}\left(\rho(\rho^\T\rho)^{-1} \rho^\T\right)_{ij} \partial \eta^j=\sum_{a, j}\left(\rho(\rho^\T\rho)^{-1} \right)_{i}{}^a J^{\beta\gamma}_a.
\ee
Here the term $J^{\beta\gamma}_a$ is the generator of $U(1)^r$ Kac-Moody symmetry in $\mc{V}_{\beta\gamma}^{\otimes n}$.
We denote by $\partial \eta_{\alpha}$ the term $\sum \tau^\alpha{}_j\partial \eta^j$. Using these, we can rewrite the second term:
\be
\sum \pi_{ij}^\perp\partial \eta^j=\sum_{j}\left(\tau^\T(\tau\tau^\T)^{-1} \tau\right)_{ij} \partial \eta^j=\sum_{\alpha} \left(\tau^\T(\tau\tau^\T)^{-1}\right)_{i\alpha} \partial \theta^{\alpha}.  
\ee
For convenience, let us denote by $\Pi$ the matrix $\rho(\rho^\T\rho)^{-1}$ and $\Pi^\perp$ the matrix $\tau^\T(\tau\tau^\T)^{-1}$. We have:
\be
\sum\limits_i \normord{\partial\eta^i\partial\eta_i}=\sum_{i, a, b}\Pi_{i}{}^a\Pi^{ib}\normord{J_a^{\beta\gamma}J_b^{\beta\gamma}}+\sum_{i,\alpha, \beta} \Pi_{i}^{\perp}{}^\alpha\Pi^{\perp, i\beta}\normord{\partial \eta_{\alpha}\partial \eta_{\beta}}.
\ee
Here we used the fact that $\pi\cdot \pi^\perp=0$. Similarly, we have a decomposition for $\sum_i \normord{\partial \theta^i\partial\theta_i}$ as:
\be
\sum\limits_i \normord{\partial \theta^i\partial\theta_i}=\sum_{i, a, b}\Pi_{i}{}^a\Pi^{ib}\normord{J_a^{bc}J_b^{bc}}+\sum_{i,\alpha, \beta} \Pi_{i}^{\perp}{}^\alpha\Pi^{\perp, i\beta}\normord{\partial \theta_{\alpha}\partial \theta_{\beta}}.
\ee
Here $J_a^{bc}$ are the generators of the $U(1)^r$ symmetry in $\mc{V}_{bc}^{\otimes n}$ and $\partial\theta_\alpha=\sum \tau^\alpha{}_j\partial \theta^j$. In the cohomology, the term: 
\be
\sum_{i, a, b} 
\Pi_{i}{}^a\Pi^{ib}\normord{J_a^{\beta\gamma}J_b^{\beta\gamma}}+\sum_{i, a, b}\Pi_{i}{}^a\Pi^{ib}\normord{J_a^{bc}J_b^{bc}}+\sum_a \normord{\partial c^a b_a}
\ee
is exact, and we are left with the following expression of the conformal element:
\be\label{eq:L-A-ff-cohomology}
\frac{1}{2}\sum_i \left(\normord{\partial\phi^i\partial\phi_i}-\partial^2\phi^i-\sum_{ \alpha, \beta}\Pi_{i}^{\perp}{}^\alpha\Pi^{\perp, i\beta} \normord{\partial \eta_{ \alpha}\partial \eta_{\beta}}+\sum_{ \alpha, \beta}\Pi_{i}^{\perp}{}^\alpha\Pi^{\perp, i\beta} \normord{\partial \theta_{\alpha}\partial \theta_{\beta}}\right). 
\ee

\section{Boundary VOA for $\mc T_\rho^B$}
\label{secDirichletB}

In this section, we define the boundary VOA $\CV_\rho^B$ for the B twist $\mc T_\rho^B$. It was already discussed in \cite{CG} that the boundary VOA is a non-purterbative completion of the affine Lie superalgebra $\widehat{\grho}$. In this section, we identify boundary monopole operators as generators of simple currents of $\widehat{\grho}$, and define the non-purterbative completion $\mc V_\rho^B$. We justify our identification by giving a free-field realization of $\mc V_\rho^B$, and further, by proving the mirror symmetry statement of the boundary VOAs in Theorem \ref{thm:VOA}. 

The structure of this section is as follows:

\begin{itemize}
    \item In Section \ref{sec:Dbc}, we recall the Dirichlet boundary condition of $\mc T_\rho^B$ according to \cite{CG}. 

    \item In Section \ref{subsecperturbalg}, we recall the definition of $\widehat{\grho}$ and introduce monopole operators and the completion $\CV_\rho^B$.

    \item In Section \ref{sec:Bff}, we construct free-field realizations of $\widehat{\grho}$ and $\CV_\rho^B$, using lattice extension of Heisenberg VOAs.

    \item In Section \ref{secMirrorsymbdryVOA}, we prove mirror symmetry statement in terms of boundary VOAs in Theorem \ref{thm:VOA}. 
    
\end{itemize}

\subsection{Dirichlet boundary condition and its symmetry}
\label{sec:Dbc}

The holomorphic boundary condition compatible with the bulk B twist descends from a physical boundary condition that preserves 2d $\CN=(0,4)$ supersymmetry and is Dirichlet-like \cite{CG}. In particular, it fixes all values of hypermultiplet scalars, and fixes a trivialization of the gauge bundle, breaking gauge symmetry to a global symmetry.

Since gauge symmetry is explicitly broken, there is no boundary gauge anomaly to cancel. However, it is still useful to analyze the boundary 't Hooft anomalies for global symmetries, as they control OPE coefficients in the boundary VOA.

The boundary fields that survive on the Dirichlet boundary condition are, roughly speaking, the complement of the fields on the Neumann b.c. of Section \ref{sec:Nbc}. They include:
\begin{itemize}
    \item $N:=F_{zt}+i\pd_z\sigma$: $r$ boundary curvature components, modified by real gauge scalars $\sigma$; these particular combinations become Q-closed elements of the boundary VOA upon twisting
    \item $\varphi$: $r$ complex vectormultiplet scalars
    \item $\lambda_+,\lambda_+'$: $2r$ bulk gauginos, in representation $T^*\mathfrak g_\C \simeq \C^r\oplus \C^r$, with positive (right-handed) boundary chirality; they are superpartners of $N$ and $\varphi$
    \item $\psi_-^X,\psi_-^Y$: $2n$ bulk hypermultiplet fermions, in representations $V,V^*$, that have negative (left-handed) boundary chirality.
\end{itemize}
The hypermultiplet scalars $X,Y$ are set to zero.

At the boundary, the bulk $G=U(1)^r$ gauge symmetry becomes a global flavor symmetry. In addition, there is again bulk $SU(2)_R\times SU(2)_C$ R-symmetry, and bulk $T_F=U(1)^{n-r}$ and $T_{\rm top}=U(1)^{n-r}$ flavor symmetry. The above fields have charges
\be \label{Dcharges}
\begin{array}{c|c|c|c|c}
\text{symmetry} & \text{f.s.} & N,\varphi & \lambda_+, \lambda_+'  & \psi_-^X,\psi_-^Y \\\hline
G = U(1)^r & \mathbf{G} & 0,0 & 0, 0  & \rho,-\rho   \\
T_F = U(1)^{n-r} & \mathbf{F} & 0,0 & 0,0  & \sigma,-\sigma  \\
T_{\rm top} = U(1)^r & \mathbf{T} & 0,0 & 0,0 &0,0 \\
U(1)_H & \mathbf{H} & 0,0 & 1,1  & 0,0  \\
U(1)_C & \mathbf{C} & 0,2 & 1,-1  & -1,-1  
\end{array}
\ee
The boundary anomaly polynomial is computed using formula \eqref{anom-def}, and simply evaluates to
\begin{align} \label{anom-D}
\CI_\pd &= +2\mb G\cdot \mb T- \frac{r}{2}(\mb H+\mb C)^2 - \frac{r}{2}(\mb H-\mb C)^2 +\frac12|\rho\mb G+\sigma\mb F-\mb C|^2  +\frac12|-\rho\mb G-\sigma\mb F-\mb C|^2 \notag \\ 
&= -r\mb H^2+(n-r)\mb C^2 + 2\mb T\cdot \mb G+|\sigma\mb F+\rho\mb G|^2
\end{align}
Note that the contributions of fermions are exactly opposite to those on the Neumann boundary condition of Section \ref{sec:Nbc}, due to opposite chiralities. We also \emph{choose} the bulk Chern-Simons level to be $2\mb G\cdot \mb T$; this defines how the topological symmetry acts on monopole operators.

\subsubsection{3d mirror symmetry}

It is easy to see that the Neumann boundary anomaly \eqref{anom-N} and the Dirichlet boundary anomaly \eqref{anom-D} are identical upon swapping
\be \label{MS-sym} \begin{array}{c@{\quad\leftrightarrow\quad}c}
  r & n-r \\
  \rho,\rho^\vee & \rho^\vee,\rho \\
  \sigma,\sigma^\vee & \sigma^\vee,\sigma \\
  \mb C,\mb H & \mb H,\mb C \\
  \mb T,\mb F & \mb F, \mb T \\
  \mb B & \mb G
\end{array} \ee
This is evidence that the two boundary conditions may be mirror to each other, up to the corresponding redefinitions of global symmetries. One nontrivial aspect of this correspondence, which is not a standard part of the \emph{bulk} mirror symmetry dictionary, is that the boundary $U(1)_B$ symmetry carried by extra 2d matter on the Neumann b.c. corresponds to the global boundary $G$ symmetry on the Dirichlet b.c.

\subsubsection{Boundary monopole operators}
\label{sec:mon1}

The Dirichlet boundary condition supports boundary monopole operators. The analysis of boundary monopole operators in \cite{BDGH,dimofte2018dual} shows that exist `bare' BPS boundary monopoles $U_\mu$ labelled by all cocharacters $\mu$ of the gauge group, meaning
\be \mu \in \Z^r\,. \ee
Their charges under the various global symmetries are calculated by differentiating the anomaly polynomial with respect to $\mb G$. Monopole operators also acquire boundary spin, proportional to the product of their `magnetic' charge $\mu$ and their `electric' charge under $G$.
Altogether, we find
\be \label{mon-charges}
\begin{array}{c|ccccc|c} & G & U(1)_F & U(1)_{\rm top}  & U(1)_H & U(1)_C & \text{Spin}(2) \\\hline
\text{charges of $U_\mu$} & (\rho^T\rho)\mu & (\sigma^T\rho)\mu & \mu & 0 & 0 & \tfrac12\mu\cdot(\rho^T\rho)\mu = \tfrac12|\rho\mu|^2 \end{array}
\ee

\subsection{Perturbative and non-perturbative boundary VOA}
\label{subsecperturbalg}

Upon taking a topological B twist of the bulk, the boundary condition of the previous section (more accurately, a slight deformation thereof) becomes purely holomorphic, and supports a VOA \cite{CG}. Perturbatively, the VOA is generated by the bulk fields that survive at the boundary. Their OPE's are computed from one-loop bulk-boundary Feynman diagrams, as explained further in \cite{CDG-chiral, garner2022vertex}. 

The perturbative boundary VOA turns out to have the form of a Kac-Moody Lie superalgebra $\widehat\grho$. It is based on a finite Lie superalgebra $\grho$, defined as 
\begin{align} \label{def-grho} \grho &:= T^*(\mathfrak g_\C\oplus \Pi V) \;=\; \mathfrak g_\C \oplus \Pi V \oplus \Pi V^* \oplus \mathfrak g_\C^* \\
& \hspace{.8in} \text{with basis}\;\; (N_a,\;\;\; \psi^{i,+},\;\;\; \psi^{i,-},\;\;\; E^a)\,. \notag
\end{align}
for $a=1,...,r$ and $i=1,...,n$.
This is a $\Z$-graded Lie algebra. Its even/bosonic part is a copy of $\mathfrak g_\C=\C^r$ and a copy of $\mathfrak g_\C^*=\C^r$ (with basis elements $N_a,E^a$). Its odd/fermionic part, indicated by the parity shift `$\Pi$', is $V\oplus V^* = \C^{2n}$ (with basis $\psi^{i,\pm}$). The nonvanishing Lie brackets are given by
\be \label{g-brackets}
[N_a,\psi^{i, \pm}]=\pm \rho^i{}_a \psi^{i,\pm}\,,\qquad 
\{\psi^{i,+},\psi^{j,-}\}=\delta^{ij} \sum_a \rho^i{}_aE^a.
\ee

\begin{Rmk} The reason for this Lie algebra appearing is that the topological B-twist of 3d $\CN=4$ gauge theory is perturbatively equivalent to a Chern-Simons theory for $\grho$. Hypermultiplet fermions and vectormultiplet bosons can be re-interpreted as the components of a $\grho$ super-connection. The statement extends to general group $G$ and symplectic representation $T^*V$, using a Lie superalgebra given by $\mathfrak g_{G,V}:=(\mathfrak g_\C\oplus \Pi V)$. The generalization of \eqref{g-brackets} states that $[N,-]$ encodes the $\mathfrak g_\C$ action on all other elements, $E$ is central, and $\{\psi,\psi\}$ is the complex moment map for the $\mathfrak g_\C$ action on $T^*V$.

A Kac-Moody VOA arises because the Dirichlet boundary condition above generalizes the  Dirichlet-like boundary conditions used in classic analyses of Chern-Simons theory, \cf\ \cite{Witten-Jones, EMSS}, which perturbatively support Kac-Moody VOA's.
\end{Rmk}

To define the Kac-Moody Lie superalgebra, we need to choose an invariant, even, nondegenerate bilinear pairing $\kappa:\grho\times \grho\to\grho$, otherwise known as a level. There are many choices of pairings on $\grho$. The one encoded in the B-twisted bulk action is induced from the symplectic pairing on $T^*(\mathfrak g \oplus \Pi V)$, and shifted by a one-loop boundary correction, \cf~\cite{CDG-chiral, garner2022vertex}; it is given by
\begin{equation}\label{eqbilinearBper}
   \kappa(N_a,N_b)=(\rho^T\rho)_{ab}=\sum\limits_i \rho^i{}_{a}\rho_{ib}\,,\qquad \kappa(N_a,E^b)=\delta_a{}^b\,,\qquad\kappa(\psi^{i,+},\psi^{j,-})=\delta^{ij}\,.
\end{equation}
(The $\rho^T\rho$ term is the one-loop correction.)

The corresponding Kac-Moody VOA $\widehat\grho$ is strongly generated by fields $N_a(z),\psi^{i,\pm}(z),E^a(z)$, with OPE's of any two generators $x,y$ given universally by $x(z)y(w)\sim \frac{\kappa(x,y)}{(z-w)^2}+\frac{[x,y](w)}{z-w}$. Explicitly, this means
\begin{equation}\label{eqperturbVOABgenOPE}
\begin{aligned}
& N_a(z)E^b(w)\sim \frac{\delta_{a}{}^b}{(z-w)^2}\,,\qquad\;\; \quad N_a(z)N_b(w)\sim \frac{(\rho^T\rho)_{ab}}{(z-w)^2}\,,\\ &N_a(z)\psi^{i,+}(w)\sim \frac{ \rho^i{}_a\psi^{i,+}(w)}{(z-w)}\,,\qquad N_a(z)\psi^{i,-}(w)\sim \frac{- \rho^i{}_a\psi^{i,-}(w)}{(z-w)}\,,\\ &\hspace{.3in}\psi^{i,+}(z)\psi^{j,-}(w)\sim \frac{\delta^{ij}}{(z-w)^2}+\frac{\delta^{ij}\sum_a \rho^i{}_aE^a(w)}{z-w}\,.
\end{aligned}
\end{equation}
These VOA generators correspond to fields on the Dirichlet b.c. from Section \ref{sec:Dbc} as follows:
\begin{itemize}
    \item $N_a$ are the modified boundary field strengths $F_{zt}+i\pd_z\sigma$
    \item $E$ are the complex vectormultiplet scalars $\varphi$, dualized to elements of $\mathfrak g_\C^*$ rather than $\mathfrak g_\C$ by using the Killing form (normalized by the physical gauge couplings) on each $\mathfrak{gl}(1)$ factor in the gauge group
    \item $\psi^{i,+},\psi^{i,-}$ are the conjugates of the hypermultiplet fermions, $\psi^+=\overline{\psi_-^X}$ and $\psi^-=\overline{\psi_-^Y}$.
\end{itemize}

\begin{Exp}
When $\rho=(1)$, $\widehat\grho$ is the affine Lie superalgebra of $\mathfrak{gl}(1|1)$ (see for example \cite{creutzig2020tensor} and Section \ref{sec:gl11}. The notation $(N,\psi^+,\psi^-,E)$ that we use here is standard in the literature on  $\mathfrak{gl}(1|1)$. 
\end{Exp}

\begin{Exp}
When the gauge group $G$ is trivial $(r=0)$, the B-twisted VOA simply consists of $n$ copies of symplectic fermions $\psi^{i,\pm}(z)$. This is the full non-perturbative result, since when there is no gauge group, there are no monopole operators.

More generally, any $\widehat\grho$ (and its non-perturbative completion $\CV_{\rho}^B$ defined below) has a free-field realization via $n$ symplectic fermions $\CV_{SF}^{T^*V}$, in which $\psi^{i,\pm}(z)$ is given by coupling symplectic fermion generator $\chi^{i, \pm}(z)$ with vertex operators for degenerate bosons, see Section \ref{subsecrelatealt}.

\end{Exp}

\subsubsection{Symmetries of the VOA}
\label{sec:sym-B}

The various boundary global symmetries summarized in \eqref{Dcharges} should all induce symmetries of the boundary VOA.

The global $G=U(1)^r$ symmetry exists only at the boundary, since it is gauged in the bulk. Thus it should become an enhanced $\mathfrak{gl}(1)^r$ Kac-Moody symmetry of the VOA. It is easy to see that its currents are given by the fields $N^a(z)$. Its level is clearly $(\rho^T\rho)$ from \eqref{eqperturbVOABgenOPE}; this also agrees with the $|\rho\mb G|^2$ term in the boundary 't Hooft anomaly \eqref{anom-D}.

The topological $T_{\rm top}=U(1)^r$ symmetry has a bulk current that is exact in the B twist; thus it should also descend to a Kac-Moody $\mathfrak{gl}(1)^r$ symmetry of the boundary VOA. Moreover, the boundary 't Hooft anomaly \eqref{anom-D} does not contain a $\mb T^2$ term, and implies that this $\mathfrak{gl}(1)^r$ Kac-Moody algebra has level zero, \emph{i.e.} is totally commutative. There is an obvious level-zero abelian Kac-Moody algebra in $\widehat{\grho}$, generated by the currents $E^a(z)$, and we shall argue that these are indeed the $T_{\rm top}$ currents. They do not act on any the other fields in $\widehat{\grho}$. Indeed, $T_{\rm top}$ should measure charge of boundary monopole operators, which are a non-perturbative feature. 

The hypermultiplet flavor symmetry $T_F=U(1)^{n-r}$ has a nontrivial bulk current (it acts on bulk hypermultiplet operators $X,Y$ in the B-twist), and thus leads to finite $T_F$ symmetry of the boundary VOA, which does not have a Kac-Moody enhancement. This is the obvious symmetry acting on the $T^*V$-valued fermions $\psi^\pm$ with charges $\pm\sigma$.

The $U(1)_H$ R-symmetry becomes homological degree in the B twist. However, all the fields generating the boundary VOA have $U(1)_H$ charge zero. The addition of boundary monopole operators will not change this fact; thus, \emph{the B-twisted boundary VOA is manifestly supported in homological degree zero}.

The $U(1)_C$ R-symmetry mixes with $\text{Spin}(2)$ boundary spin to define the conformal grading $L_0$ on the boundary VOA. Explicitly, the charges are related by $L_0 = \frac12 C + J$ (where $J$ denotes boundary spin). All the generators of the Kac-Moody VOA turn out to have conformal weight one. Namely, $N\sim F_{zt}+i\pd_z\sigma$ has $C=0$ and $J=1$; $\psi^\pm$ have $C=1$ and $J=\frac12$; and $E$ has $C=2$ and $J=0$. Since the bulk stress tensor is Q-exact, there can exist a conserved boundary stress tensor. It is given by
\be \label{L-B}
    L(z)\,=\,\frac{1}{2}\Big(\sum_a\normord{N_a E^a}+\normord{E^a N_a}-\sum_i \normord{\psi^{i,+}\psi^{i,-}}+\sum_i \normord{\psi^{i,-}\psi^{i,+}}\Big)\,.
\ee
This expression is derived directly from bulk-boundary physics in \cite{garner2022vertex}.

In summary, the symmetries of $\widehat{\grho}$ and their charges are
\be \label{VOA-B-sym}
\begin{array}{c|c|cccc}
\text{symmetry} & \text{current} & N & E & \psi^+ & \psi^- \\\hline
G=U(1)^r & N(z) & 0 & 0 & \rho & -\rho \\
T_{\rm top} = U(1)^r & E(z) & 0 & 0 & 0 & 0 \\
T_F = U(1)^{n-r} & \text{n.a.} & 0 & 0 & \sigma & -\sigma \\
\text{homological}\;U(1)_C & \text{n.a.} & 0 & 0 & 0 & 0 \\
\text{conformal} & L(z) & 1 & 1 & 1 & 1 \end{array}
\ee

\begin{Rmk}
For a simple Lie algebra $\mathfrak k$, it is well known how to construct a Kac-Moody VOA~$\widehat{\mathfrak k}_\kappa$. The level $\kappa: \mathfrak k \times \mathfrak k \to \C$ is an invariant, nondegenerate, symmetric bilinear form, and thus specified uniquely by fixing a multiple of the Cartan-Killing form.
 The conformal element (stress tensor) is then determined by the Segal-Sugawara construction in terms of the inverse of the level, interpreted as a quadratic Casimir. Explicitly, if $J^a$ are generators of $\mathfrak k$, and $\kappa(J^a,J^b) = \kappa^{ab}$, then the stress tensor is --- up to overall normalization --- $L(z) = (\kappa^{-1})_{ab} \normord{J^a(z)J^b(z)}$. 
The change of normalization can be understood as shifting the level (by the dual Coxeter number) before inverting. 

For our Kac-Moody VOA $\widehat{\grho}$, the level \eqref{eqbilinearBper} and the quadratic Casimir appearing in the stress tensor \eqref{L-B} are related in a similar way. Namely, we must first remove the term  $\rho^T\rho$ in \eqref{eqbilinearBper} --- which was the one-loop level shift, encoded in the boundary 't Hooft anomaly --- before inverting to construct a Casimir. However, since the space of bilinear forms (equivalently, the space of quadratic Casimirs) is not just one-dimensional, this shift has a much more drastic effect than just changing an overall normalization.
 
In Appendix \ref{app:conformal}, we discuss the generalization of the Segal-Sugawara construction that applies to boundary Kac-Moody algebras in 3d $\CN=4$ theories with general gauge group $G$ (even non-abelian) and symplectic matter representation $T^*V$.
\end{Rmk}

\subsubsection{Non-perturbative completion}
\label{sec:mon2}

It is expected that the B-twisted boundary VOA is non-perturbatively corrected by boundary monopole operators. When $G$ is nonabelian, monopole operators have nonzero cohomological degree, and the correction is expected to involve nontrivial quotients, analogous to the simple quotients that appear in the classic constructions of WZW algebras from Kac-Moody algebras. These quotients are not well understood. In contrast, when $G$ is abelian, it is easy to see from the boundary anomaly polynomial \eqref{anom-D} that monopole operators have zero cohomological degree --- this follows because there are no $\mb H\cdot\mb G$ terms mixing the $U(1)_H$ R-symmetry and the boundary $G$ symmetry. Thus one might hope in the abelian that monopole operators simple extend the Kac-Moody algebra $\widehat{\grho}$. We give a proposal here for the appropriate extension.

Recall from Section \ref{sec:mon1} that boundary monopole operators $U_\mu$ are labelled by gauge cocharacters $\mu\in \Z^r$.
For each $\mu\in \Z^r$, we expect the monopole operator to take the form:
\be \CU_\mu(z) \,=\, \normord {e^{\mu\cdot \int N(z)}} = \normord{e^{\sum_a \mu^a \int N_a(z)}} \label{mon-vertex} \ee
where the right-hand side is the vertex operator given by the formal integral of $N$. (The same formal integral shows up in the construction of Heisenberg algebras, \emph{cf.} $v=\int J$ in Section~\ref{sec:Heis}. Here the $N_a$ generate a Heisenberg subalgebra with bilinear form $\rho^T\rho$.) This vertex operator satisfies 
\be \label{mon-OPE} \begin{array}{c} \ds N_a(z)\CU_\mu(w) \sim \frac{(\rho^T\rho\mu)_a}{z-w} \CU_\mu(w)\,,\qquad E^a(z)\CU_\mu(w)\sim \frac{\mu^a}{z-w}\CU_\mu(w) \\[.2cm]
\ds \psi^{i,\pm}(z) \CU_\mu(w) \sim (z-w)^{\pm (\rho\mu)^i} \normord{\psi^{i,\pm}(z)\CU_\mu(w)}\,, \end{array}\ee
where by $\normord{\psi^{i,\pm}(z)\CU_\mu(w)}$ we mean $\normord{\psi^{i,\pm}\CU_\mu}(w) + (z-w)\normord{\pd\psi^{i,\pm}\CU_\mu}(w) + \frac12(z-w)^2\normord{\pd^2\psi^{i,\pm}\CU_\mu}(w)+\ldots$, where the number of terms that contribute to the singular OPE depends on $(\mu\rho)^i$. The non-purterbative VOA should be generated by $\widehat{\grho}$ together with $\CU_\mu$ for all $\mu\in \Z^r$. 

To give a precise mathematical definition, we introduce the following spectral-flow automorphism of $\widehat{\grho}$. For each $ \mu, \nu\in \C^r$ such that $\rho(\nu)\in \Z^n$, the VOA $\widehat{\grho}$ has the following automorphism:
\be
\sigma_{\mu, \nu}N_a(z)= N_a(z)-\frac{\mu_a}{z},\qquad \sigma_{\mu, \nu}E_a(z)=E_a(z)-\frac{\nu_a}{z},\qquad \sigma_{\mu, \nu}\psi^{i, \pm}(z)=z^{\mp \rho (\nu)_i}\psi^{i, \pm}.
\ee
It is not difficult to see that the spectral-flow automorphism $\sigma_{\mu, \nu}$ is associated to the $\mathfrak{gl}(1)$ Kac-Moody symmetry action $(\mu-\rho^\T\rho\nu)\cdot E(z)+\nu \cdot N(z)$ (see Section \ref{sec:spectralflow}). The OPE of equation \eqref{mon-OPE} between $\widehat{\grho}$ and $\CU_\mu$ is reflected by the action of $\widehat{\grho}$ on $\sigma_{\rho^\T\rho\mu, \mu}\vert 0\rangle$, where $\vert 0\rangle$ is the vacuum vector of the vacuum module $\widehat{\grho}$:
\be\begin{array}{c}
N_a(z)\sigma_{\rho^\T\rho\mu, \mu}\vert 0\rangle\sim \frac{(\rho^\T\rho\mu)_a}{z}\sigma_{\rho^\T\rho\mu, \mu}\vert 0\rangle,\qquad E_a(z)\sigma_{\rho^\T\rho\mu, \mu}\vert 0\rangle\sim \frac{\mu_a}{z}\sigma_{\rho^\T\rho\mu, \mu}\vert 0\rangle\\  \psi^{i,\pm}(z) \sigma_{\rho^\T\rho\mu, \mu}\vert 0\rangle=z^{\pm \rho(\mu)_i}\sigma_{\rho^\T\rho\mu, \mu}\psi^{i,\pm}(z)\vert 0\rangle.
\end{array}
\ee
This suggests that for each $\mu\in \Z^r$, we should identify $\CU_\mu$ with the spectral flow of the vacuum vector:
\be
\CU_\mu\leftrightarrow \sigma_{\rho^\T\rho\mu, \mu}\vert 0\rangle.
\ee
This spectral flow is associated to the Kac-Moody symmetry $\mu\cdot N(z)$, exactly as equation \eqref{mon-vertex} suggests. 

Let us define $\mathbb U_\mu$ to be the module of $\widehat{\grho}$ generated by $\sigma_{\rho^\T\rho\mu, \mu}\vert 0\rangle$, or in other words, the spectral flow of the vacuum module:
\be
\mathbb U_\mu:= \sigma_{\rho^\T\rho\mu, \mu}\mathbb U_0=\sigma_{\rho^\T\rho\mu, \mu}\widehat{\grho}.
\ee
We will prove in Section \ref{sec:VOAlines}, Lemma \ref{Lem:grhosimple} that the vacuum module $\mathbb U_0$ is simple as its own module,\footnote{This is also a consequence of the free field realization of Section \ref{sec:Bff}, especially Proposition \ref{PropffpurtBgaugeq}.} and consequently all the $\mathbb U_\mu$ are simple. These are simple currents in the sense of Section \ref{sec:spectralflow}, and the work of \cite{Li:1997} implies the following fusion rule:
\be
\mathbb U_\mu\otimes_{\widehat{\grho}}\mathbb U_{\mu'}\cong \mathbb U_{\mu+\mu'},
\ee
which is realized by the OPE:
\be\label{eq:mons-OPE} \CU_\mu(z)\CU_{\mu'}(w) \sim (z-w)^{\mu\cdot \rho^T\rho\mu'}\CU_{\mu+\mu'}(w)\, \ee
In this expression, the vertex operator associated to $\sigma_{\rho^\T\rho\mu, \mu}\vert 0\rangle$ is precisely $\CU_\mu$. If we restrict to $\mu, \mu'\in \Z^r$, then the above OPE is local. Consequently, the direct sum:
\be
\bigoplus_{\mu \in \Z^r}\mathbb U_\mu
\ee
admit the structure of a vertex operator algebra, such that the vertex operator associated to $\sigma_{\rho^\T\rho\mu, \mu}\vert 0\rangle$ is simply $\CU_\mu$.
We propose that the non-perturbative boundary VOA is given by this extension:

\begin{Def} \label{def-VB}
Let $\mc{V}_\rho^B := \bigoplus_{\mu\in \Z^r} \mathbb U_\mu$ denote the extension of $\widehat{\grho}$ by all the modules $\mathbb U_\mu$. 
\end{Def} 

The strongest justification for $\mc{V}_\rho^B$ being the correct boundary VOA in the B twist will be given in Section 
 \eqref{secMirrorsymbdryVOA}, where we prove that there is an equivalence $\mc{V}_\rho^B \cong \CV_{\rho^\vee}^A$ of boundary VOA's for mirror theories. However, we can also give a more intuitive physical justification.
 
Recall that in a 3d $\CN=4$ $U(1)$ abelian gauge theory with no matter, half-BPS monopole operators have an explicit description in terms of the dual photon field $\gamma$ and the vectormultiplet scalar $\sigma$ \cite{SW-3d,IS},
\be U_\mu = e^{i\mu(\gamma+i\sigma)}\qquad (\mu\in \Z)\,. \ee
The dual photon satisfies $d\gamma = * dF$, and the $\pd_z$ derivative of $\gamma+ i\sigma$ is precisely the combination $F_{zt}+i\pd_z\sigma$ whose boundary value defines the field $N(z)$. This suggests that \eqref{mon-vertex} defines the monopole operators in the boundary VOA, at least in the absence of matter. 

We may also consider the action of global symmetries. From \eqref{VOA-B-sym} and \eqref{mon-OPE} it follows that the vertex operator $\CU_\mu$ has charge $\mu$ under the (putative) $T_{\rm top}$ topological symmetry with current $E$; and it has charge $\rho^T\rho\mu$ under the boundary $G$ symmetry. Moreover, using \eqref{L-B} we find that its conformal weight is $\frac{1}{2}|\rho\mu|^2$. This all agrees perfectly with the expected charges of monopole operators in \eqref{mon-charges}.

\subsection{Free field realization}
\label{sec:Bff}

Next, we introduce a free field realization of the Kac-Moody algebra $\widehat{\grho}$ and its extension $\CV_\rho^B$. The free field realization will make the extension much easier to analyze, as it will involve more standard Fock modules. Moreover, it will enable us to prove 3d mirror symmetry of boundary VOA's in the next section.

Let $H_{X,Y,Z}$ denote the Heisenberg VOA associated to the $2r+n$ dimensional vector space with basis $\{X_a,Y^a,Z^i\}$ for $a=1,...,r$ and $i=1,...,n$, and bilinear form
\begin{equation}
    B( X_a,Y^b)=\delta^b{}_a\,,\qquad B( Z^i,Z^j)=\delta^{ij}\,.
\end{equation}
Let
\be \CV_Z = \bigoplus_{\lambda\in \Z^n} \Fock_{\lambda\cdot Z}^{X,Y,Z} \ee
be the lattice VOA obtained by extending $H_{X,Y,Z}$ by Fock modules generated by $\normord{e^{\lambda Z}}$ for all $\lambda\in \Z^r$.
The assignments:
\begin{equation}\label{ffRealization}
    \begin{aligned}
     &N_a\mapsto \partial X_a +\sum_i \rho_{ia}\partial Z^i\,,\\
     &E^a\mapsto  \partial Y^a\,,\\
     &\psi^{i, +}(z)\mapsto \normord{e^{Z^i}}\,,\\
     &\psi^{i,-}\mapsto \normord{\sum_i \rho^i{}_{a}\partial Y^a e^{-Z^i}}+\normord{\partial e^{-Z^i}}
    \end{aligned}
\end{equation}
define an embedding of the Kac-Moody VOA $\widehat{\grho}$ into the lattice VOA $\CV_Z$.

To write the conformal element in terms of the free field algebra, we start with the terms involving $N$ and $E$:
\be
\frac{1}{2}\left(\sum\normord{N_aE^a}+\normord{E^aN_a}\right)=\frac{1}{2}\left(\sum_a\normord{\partial X_a\partial Y^a}+\normord{\partial Y^a\partial X_a}\right)+\sum_{ia}\rho_{ia}\partial Z^i\partial Y^a.
\ee
Now for $\psi^{i,\pm}$, we first have:
\be
\frac{1}{2} \sum_i \left(\normord{e^{Z^i}\partial e^{-Z^i}}-\normord{\partial e^{-Z^i} e^{Z^i}}\right)=-\frac{1}{2}\left(\sum_i \normord{\partial Z_i\partial Z^i}-\normord{\partial^2Z_i}\right)
\ee
which follows from the OPE:
\be
e^{Z(z)}e^{-Z(w)}=\frac{1}{z-w}+\partial Z(w)+(z-w)(\frac{1}{2}\partial^2 Z(w)+\frac{1}{2}\normord{\partial Z(w)\partial Z(w)})+\cdots .
\ee
Using the same OPE, we can compute:
\be
\frac{1}{2} \sum_{i,a}\left(\normord{e^{Z^i}\rho_{ia}\partial Y^ae^{-Z^i}}-\normord{\rho_{ia}\partial Y^ae^{-Z^i} e^{Z^i}}  \right)=\sum_{ia}\rho_{ia}\partial Z^i\partial Y^a-\sum_{i, a}\rho_{ia}\partial^2 Y^a.
\ee
Combining all these, one finds that the stress tensor of $\widehat{\grho}$ takes the following form:
\begin{equation} \label{L-B-ff}
\frac 1 2 \Big(\normord{ \sum_a (\partial X_a\partial Y^a + \partial Y^a \partial X_a) +\sum_i(\partial Z^i)(\partial Z_i)}\Big) + \frac 1 2 \sum_i \Big(\sum_{a}\rho_{ia}\partial^2Y^a-\partial^2Z_i\Big)\,.
\end{equation}
\begin{Exp}
When $\rho=(1)$, we find the free field realization:
\begin{equation}
\begin{aligned}
    & N\mapsto \partial X+\partial Z\,,\qquad E\mapsto \partial Y\\
    & \psi^{+}\mapsto  \normord{e^{Z}}\,,\qquad \psi^-\mapsto \normord{\partial Y e^{-Z}}+\normord{\partial e^{-Z}}.
\end{aligned}
\end{equation}
This is exactly the free field realization of affine $\mathfrak{gl}(1|1)$ used in \cite{creutzig2021duality}, if we shift $N$ by $-E/2$. 
\end{Exp}

We would like to identify this embedding as the kernel of screening operators. Recall that the lattice VOA $\CV_Z$  has Fock modules
\be \CV_{Z,\mu X+\nu Y} = \bigoplus_{\lambda\in \Z^n} \CF_{\mu \cdot X+\nu\cdot Y+\lambda\cdot Z}^{X,Y,Z} \ee
labelled by $\mu,\nu\in \C^r$, lifted from the Fock modules of the Heisenberg VOA $H_{X,Y,Z}$. When $\rho(\mu)\in \Z^n$, define intertwiners $S^i(z): \CV_{Z,\mu X+\nu Y} \to \CV_{Z,\mu X+\nu Y-(\rho Y)^i}(\!(z)\!)$ given by
\be S^i(z)=\normord{e^{Z^i(z)-\sum_a \rho^i{}_aY^a(z)}}\,, \label{def-B-screen} \ee
with corresponding screening charges $S^i_0 = \frac{1}{2\pi i}\oint S^i(z)\,dz$. Note that this is only well-defined when $\rho(\mu)\in \Z^n$, since this ensures that $S^i$ is integer moded. We claim that

\begin{Prop}\label{PropffpurtBgaugeq}
The image of the embedding $\widehat{\grho}\hookrightarrow \CV_Z$ is the kernel of the screening charges:
\begin{equation}
   \widehat{\grho}\,\cong\, \bigcap_{i=1}^n \mathrm{ker}\, S^i_0\big|_{\CV_Z}\,.
    \end{equation}
\end{Prop}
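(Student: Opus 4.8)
The plan is to prove both inclusions after a linear change of basis on the underlying $(2r+n)$-dimensional space that decouples the $n$ screening charges. Set $W^i := Z^i - \sum_a \rho^i{}_a Y^a$; a short computation with the bilinear form $B$ gives $B(W^i,W^j)=\delta^{ij}$, so the $W^i$ span a positive-definite orthonormal subspace and the screening current \eqref{def-B-screen} is simply $S^i(z)=\normord{e^{W^i(z)}}$. Its orthogonal complement $W^\perp$ is spanned by $\wt X_a := X_a + \sum_i \rho^i{}_a Z^i$ and $Y^a$, on which $B$ restricts to $B(\wt X_a,\wt X_b)=(\rho^\T\rho)_{ab}$, $B(\wt X_a,Y^b)=\delta^b_a$, $B(Y^a,Y^b)=0$. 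Crucially, in this basis the images \eqref{ffRealization} become $N_a = \partial \wt X_a$ and $E^a=\partial Y^a$, lying entirely in the $W^\perp$ Heisenberg algebra, while $\psi^{i,+}=\normord{e^{W^i}}\cdot\normord{e^{\rho^i\cdot Y}}$ and $\psi^{i,-}=\normord{\partial e^{-W^i}}\cdot\normord{e^{-\rho^i\cdot Y}}$ factorize into a symplectic-fermion generator $\chi^{i,\pm}$ in the $W$-directions times a pure $Y$-vertex operator.

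Because $S^i(z)=\normord{e^{W^i}}$ involves only the orthonormal $W$-directions, $S^i_0$ acts on $\CV_Z$ as $b^i_0\otimes\mathrm{id}$, where $b^i=\normord{e^{W^i}}$ and $c^i=\normord{e^{-W^i}}$ generate a free-fermion pair. The forward inclusion $\widehat{\grho}\subseteq \bigcap_i \ker S^i_0$ is then immediate: by the free-field description of symplectic fermions \eqref{SF-ff-n}, $\chi^{i,\pm}\in \bigcap_j \ker b^j_0$, and $N_a,E^a$ involve no $W$-oscillators, so every strong generator of $\widehat{\grho}$ lies in the common kernel. Since each $S^i_0$ is the zero mode of an integer-moded field and hence a derivation into a shifted-momentum module, the common kernel is a vertex subalgebra and therefore contains the whole image. (Equivalently, one verifies this by direct OPE; the only nontrivial check is $S^i_0\psi^{i,-}=0$, which follows from a residue cancellation.)

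For the reverse inclusion, which is the main work, I would exploit the orthogonal splitting to decompose the lattice VOA as a module over $H_{\langle W\rangle}\otimes H_{W^\perp}$. Since $\lambda\cdot Z = \lambda\cdot W + (\rho^\T\lambda)\cdot Y$, each summand factorizes as $\Fock_{\lambda\cdot Z}\cong \Fock^{W}_{\lambda\cdot W}\otimes \Fock^{W^\perp}_{(\rho^\T\lambda)\cdot Y}$, giving $\CV_Z=\bigoplus_{\lambda\in\Z^n}\Fock^W_{\lambda\cdot W}\otimes \Fock^{W^\perp}_{(\rho^\T\lambda)\cdot Y}$ with the $W$- and $Y$-momenta locked together. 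As $S^i_0=b^i_0\otimes\mathrm{id}$ respects this decomposition, its kernel is $\ker(b^i_0)$ tensored with all of the $W^\perp$ factor; intersecting over $i$ and invoking $\bigcap_i\ker b^i_0\big|_{\Fock^W_{\lambda\cdot W}}=M_{\lambda\cdot W}$ from \eqref{SF-ff-n} yields $\bigcap_i \ker S^i_0 = \bigoplus_{\lambda\in\Z^n} M_{\lambda\cdot W}\otimes \Fock^{W^\perp}_{(\rho^\T\lambda)\cdot Y}$.

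It remains to check that the generators actually exhaust this space, which is the crux. In the $W$-sector the symplectic fermions $\bigoplus_\lambda M_{\lambda\cdot W}$ are strongly generated by the $\chi^{i,\pm}$; because $B(Y^a,Y^b)=0$, the accompanying $Y$-vertex operators combine with no extra singular factors, so normally-ordered products of the $\psi^{i,\pm}$ reproduce the full symplectic-fermion content while keeping the $Y$-momentum locked to $\rho^\T\lambda$. Finally, the modes of $N_a=\partial\wt X_a$ and $E^a=\partial Y^a$ supply all negative $W^\perp$-oscillators and thereby build each $\Fock^{W^\perp}_{(\rho^\T\lambda)\cdot Y}$ on top of its momentum state. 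Hence the image saturates $\bigcap_i\ker S^i_0$, completing the identification. The main obstacle is precisely this generation step: confirming that the momentum-locking forced by the factorized form of $\psi^{i,\pm}$ is compatible with, and fully saturates, the tensor decomposition of the kernel, rather than producing only a proper subalgebra.
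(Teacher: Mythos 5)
Your change of basis $W^i = Z^i - \sum_a\rho^i{}_a Y^a$, $\wt X_a = X_a + \sum_i\rho^i{}_a Z^i$ is correct and genuinely nice: it factorizes each summand of $\CV_Z$ as $\Fock^W_{\lambda\cdot W}\otimes\Fock^{W^\perp}_{(\rho^\T\lambda)\cdot Y}$, makes the screening currents \eqref{def-B-screen} act purely in the $W$-sector, and therefore gives the kernel $\bigcap_i\ker S^i_0\big|_{\CV_Z} = \bigoplus_{\lambda\in\Z^n} M_{\lambda\cdot W}\otimes\Fock^{W^\perp}_{(\rho^\T\lambda)\cdot Y}$ exactly, with no filtration needed. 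The forward inclusion is also fine. But the decisive step --- that the image of $\widehat{\grho}$ \emph{exhausts} this kernel --- is exactly where your argument stops being a proof, and you say so yourself. The reason the heuristic ("the $\psi$'s reproduce the symplectic-fermion content, and $N,E$ supply the oscillators") fails as written is that normally ordered products of the generators do not respect the tensor decomposition: since $e^{\pm\rho^i\cdot Y}(z)$ expands as the momentum-shift operator plus $z$-corrections involving $Y_{-k}$ modes, a monomial in the $\psi^{i,\pm}$-modes equals the corresponding $\chi$-monomial tensored with the momentum state \emph{plus} correction terms mixing lower $W$-weight with $W^\perp$-oscillators. So what you directly exhibit is only a spanning set "up to lower-order contamination," and without a further argument you cannot exclude that the generated subalgebra is proper. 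You also never address injectivity, which is part of the word "embedding": the paper gets this from simplicity of $\widehat{\grho}$ as its own module (Lemma \ref{Lem:grhosimple}).

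Both defects are repairable, in two ways. (i) Direct repair of your generation step: induct on the conformal weight in the $W$-sector. The corrections to each $\psi$-monomial have strictly smaller $W$-weight; the minimal-weight vector of each $M_{\lambda\cdot W}$ therefore has no corrections at all, and once $m\otimes\lvert(\rho^\T\lambda)\cdot Y\rangle$ is obtained, the modes of $N_a,E^a$ (which act only on the $W^\perp$-factor) build the full Fock module over it. This triangularity also yields injectivity, since leading terms of distinct PBW monomials are linearly independent. (ii) Or do what the paper does: prove only the forward inclusion and injectivity, then compare graded characters, using that a containment of graded vector spaces with equal (bounded-below, finite-dimensional-weight-space) characters is an equality. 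The paper computes the kernel's character by filtering the mode algebra so that the $X$ and $Y$ modes become commutative and invoking the snake lemma; in your setup that computation is actually immediate, since the decomposition $\bigoplus_\lambda M_{\lambda\cdot W}\otimes\Fock^{W^\perp}_{(\rho^\T\lambda)\cdot Y}$ visibly has character $\frac{1}{(q;q)_\infty^{2r}}\prod_i\bigl(-q\textstyle\prod_a s_a^{\rho^i{}_a},-q\prod_a s_a^{-\rho^i{}_a};q\bigr)_\infty$, matching \eqref{eqindexgqB}. So your decomposition plus a character count would give a proof that is arguably cleaner than the paper's; as submitted, however, the proposal has a gap at its crux.
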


Assuming this result (which we will prove momentarily), it is easy to further identify the extension corresponding to the non-perturbative boundary VOA. The formal integrals of $N$ that appeared in \eqref{mon-vertex} are now identified as $X+\rho Z$, with corresponding vertex operators
\be \CU_\mu^{\rm free} \;=\; \normord{e^{\mu\cdot(X+\rho^T Z)}}\,. \label{Ufree}\ee
One can verify that this does satisfy the OPE of equation \eqref{mon-OPE} and \eqref{eq:mons-OPE}. The screening operators $S^i(z)$ commute with $\CU_\mu^{\rm free}$, since the inner product $B(X_a+(\rho^T Z)_a,Z^i-(\rho Y)^i) = -\rho_{ia}+\rho_{ia}=0$. Thus the $\CU_\mu^{\rm free}$ are in the kernel of the screening charges, and an extension of $\CV_Z$ by the modules generated by $\CU_\mu^{\rm free}$ will commute with taking kernels:
\be \begin{array}{c} \text{$\widehat{\grho}$ extended by $\CU_\mu$'s} \;=\; \text{(kernel of $S$'s on $\CV_Z$) extended by $\CU_\mu$'s} \\
 \cong\; \text{kernel of $S$'s on ($\CV_Z$ extended by $\CU_\mu^{\rm free}$'s)\,.} \end{array} \ee

Explicitly, let
\be \CV_{Z,\mu\cdot(X+\rho^T Z)} = \bigoplus_{\lambda\in \Z^n} \Fock_{\mu\cdot(X+\rho^T Z)+\lambda\cdot Z}^{X,Y,Z} \ee
be the $\CV_Z$ module generated by $\CU_\mu^{\rm free}$. Since $\bigoplus_{\lambda\in \Z^n} \Fock_{(\mu\cdot(X+\rho^T Z)+\lambda\cdot Z}^{X,Y,Z} \,\cong\, \bigoplus_{\lambda\in \Z^n} \Fock_{\mu\cdot X+\lambda\cdot Z}^{X,Y,Z}$, we have $ \CV_{Z,\mu\cdot(X+\rho^T Z)}\,,\cong\, \CV_{Z,\mu\cdot X}$. Denote
\be \CV_{X,Z} \,:=\, \bigoplus_{\mu\in \Z^r} \CV_{Z,\mu\cdot X} \,\cong\, \bigoplus_{\substack{\mu\in \Z^r \\ \lambda\in \Z^n}} \CF_{\mu\cdot X+\lambda\cdot Z}^{X,Y,Z} \ee
the full integral lattice extension of $H_{X,Y,Z}$ by the $X$ and $Z$ lattices. We obtain

\begin{Cor} \label{cor:VB-ff} The non-perturbative boundary VOA from Definition \ref{def-VB} has a free field realization
\be \mc{V}_\rho^B\,\cong\, \bigcap_{i=1}^n \mathrm{ker}\, S^i_0 \big|_{\CV_{X,Z}} \ee
\end{Cor}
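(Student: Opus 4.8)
The plan is to bootstrap the $\mu=0$ case, Proposition \ref{PropffpurtBgaugeq}, to all monopole charges $\mu\in\Z^r$ by spectral flow, and then to note that the joint kernel of the screening charges respects the $\Z^r$ grading of $\CV_{X,Z}$ by $X$-lattice weight. Writing $\CV_{X,Z}=\bigoplus_{\mu\in\Z^r}\CV_{Z,\mu\cdot X}$, I first observe that each screening current $S^i(z)=\normord{e^{Z^i-\sum_a\rho^i{}_aY^a}}$ involves only the $Y$ and $Z$ bosons, so the charge $S^i_0$ preserves the $X$-weight and restricts to each summand. Hence
\be
\bigcap_{i=1}^n \mathrm{ker}\,S^i_0\big|_{\CV_{X,Z}}=\bigoplus_{\mu\in\Z^r}\Big(\bigcap_{i=1}^n \mathrm{ker}\,S^i_0\big|_{\CV_{Z,\mu\cdot X}}\Big),
\ee
and it remains to identify the $\mu$-th summand with $\mathbb{U}_\mu=\sigma_{\rho^\T\rho\mu,\mu}\widehat{\grho}$.

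This identification is the crux of the argument and the main obstacle. Under the free-field realization \eqref{ffRealization}, the abstract spectral flow $\sigma_{\rho^\T\rho\mu,\mu}$ is implemented (via Li's $\Delta$-operator, Section \ref{sec:spectralflow}) by the Heisenberg spectral flow associated with the lattice vector $v_\mu:=\mu\cdot X+(\rho\mu)\cdot Z$, the formal integral of the internal current $\mu\cdot N$. A short computation shows $B(v_\mu,X_a+\sum_i\rho_{ia}Z^i)=(\rho^\T\rho\mu)_a$ and $B(v_\mu,Y^a)=\mu^a$, so $v_\mu$ carries exactly the charges of $\CU_\mu$ recorded in \eqref{mon-OPE}; since $(\rho\mu)\cdot Z$ lies in the integral $Z$-lattice, this flow sends the vacuum module $\CV_Z$ to the module $\CV_{Z,\mu\cdot X}$ generated by $\CU_\mu^{\rm free}$ of \eqref{Ufree}. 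The decisive point is that the screening currents are fixed by the flow: one computes $B(v_\mu,\,Z^i-\sum_a\rho^i{}_aY^a)=(\rho\mu)^i-(\rho\mu)^i=0$ (the vanishing already noted after \eqref{Ufree}), so each $S^i$ remains integer-moded with unchanged zero mode. Because spectral flow is an exact functor (Proposition 2.5 of \cite{creutzig2019schur}) that leaves each $S^i_0$ invariant, it carries kernels to kernels, giving
\be
\bigcap_{i=1}^n \mathrm{ker}\,S^i_0\big|_{\CV_{Z,\mu\cdot X}}=\sigma_{\rho^\T\rho\mu,\mu}\Big(\bigcap_{i=1}^n\mathrm{ker}\,S^i_0\big|_{\CV_Z}\Big)=\sigma_{\rho^\T\rho\mu,\mu}\,\widehat{\grho}=\mathbb{U}_\mu,
\ee
where the middle equality is Proposition \ref{PropffpurtBgaugeq}.

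Finally, I would promote this isomorphism of graded vector spaces (and $\widehat{\grho}$-modules) to one of vertex superalgebras. Each screening current $S^i$ has conformal weight one with respect to the stress tensor \eqref{L-B-ff}, so $S^i_0$ acts as a derivation of all products on the lattice vertex superalgebra $\CV_{X,Z}$; consequently the joint kernel is a vertex subalgebra. Assembling the summands yields $\bigoplus_{\mu\in\Z^r}\mathbb{U}_\mu=\CV_\rho^B$ as a graded vector space, and the field attached to $\CU_\mu^{\rm free}$ inside $\CV_{X,Z}$ realizes precisely the OPE \eqref{eq:mons-OPE} defining the simple-current extension of Definition \ref{def-VB}. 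The vertex algebra structures therefore coincide. The entire argument hinges on the vanishing pairing $B(v_\mu,\,Z^i-\sum_a\rho^i{}_aY^a)=0$, which simultaneously ensures that $\CU_\mu^{\rm free}$ lies in the kernel and that spectral flow leaves the screening charges intact.
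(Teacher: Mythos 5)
Your proof is correct, but it takes a genuinely different route from the paper's. The paper constructs the map directly: it sends $\CU_\mu\mapsto\CU_\mu^{\rm free}$, notes that the image lies in the joint kernel (using the same vanishing pairing $B\big(X_a+(\rho^\T Z)_a,\,Z^i-(\rho Y)^i\big)=0$ that you exploit), and then proves bijectivity by simplicity on both sides --- injectivity because each $\mathbb U_\mu$ is a simple $\widehat{\grho}$-module, and surjectivity by citing \cite{adamovic2003classification, Creutzig2014coset} for the fact that $\bigcap_i \mathrm{ker}\,S^i_0\big|_{\CV_{Z,\mu\cdot(X+\rho^\T Z)}}$ is itself a simple $\widehat{\grho}$-module, so that the image saturates each summand. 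You instead decompose the joint kernel over the $X$-weight grading (legitimate, since each $S^i_0$ commutes with the zero modes $Y^a_0$ that measure that grading) and identify the $\mu$-th piece with $\mathbb U_\mu$ by conjugating with the Heisenberg spectral flow attached to $v_\mu=\mu\cdot X+(\rho\mu)\cdot Z$: this flow sends $\CV_Z$ to $\CV_{Z,\mu\cdot X}$, restricts on $\widehat{\grho}$ to the abstract automorphism $\sigma_{\rho^\T\rho\mu,\mu}$, and fixes each screening current because $B\big(v_\mu,Z^i-(\rho Y)^i\big)=0$, so it carries $\bigcap_i\mathrm{ker}\,S^i_0\big|_{\CV_Z}=\widehat{\grho}$ (Proposition \ref{PropffpurtBgaugeq}) isomorphically onto $\bigcap_i\mathrm{ker}\,S^i_0\big|_{\CV_{Z,\mu\cdot X}}$. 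The trade-off is clear: the paper's argument is shorter but outsources the key surjectivity step to external simplicity theorems about kernels of screenings on lattice modules, whereas yours is self-contained modulo Proposition \ref{PropffpurtBgaugeq} and standard properties of Li's $\Delta$-operator, and it reproves those cited simplicity statements for free, since each kernel summand is exhibited as a spectral flow of the simple vacuum module (Lemma \ref{Lem:grhosimple}), and spectral flow preserves simplicity. Both proofs hinge on the same orthogonality; you use it twice (once for membership of $\CU_\mu^{\rm free}$ in the kernel, once for invariance of the screening charges under the flow), which is precisely what upgrades the paper's heuristic that ``extension commutes with taking kernels'' into an actual argument. One cosmetic remark: the cleanest justification that the joint kernel is a vertex subalgebra is the commutator formula for zero modes of intertwining operators, rather than the conformal-weight-one statement you invoke, but the conclusion is standard either way.
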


\begin{proof}
    As we have shown above, by mapping $\CU_\mu$ to $\CU_\mu^{\mathrm{free}}$, we obtain a map from $\CV_\rho^B$ to $\CV_{X,Z}$ whose image is in the kernel of all $S^i_0$. It is injective since all $\mathbb U_\mu$ are simple as $\widehat{\grho}$ modules, and it is surjective since by \cite{adamovic2003classification, Creutzig2014coset},  the module:
    \be
\bigcap\limits_i \mathrm{Ker} S^i_0\big|_{\CV_{Z, \mu\cdot (X+\rho^\T Z)}}
    \ee
    is also a simple module of $\bigcap\limits_i \mathrm{Ker} S^i_0\big|_{\CV_{Z}}=\widehat{\grho}$. This completes the proof. 
    
\end{proof}

\subsubsection{Free field symmetries}
\label{sec:VB-sym-ff}

It is straightforward to translate the symmetries/currents of $\mc{V}_\rho^B$ to the free-field realization. The $G$ currents are of course $N = \pd X+\rho^T \pd Z$; and the (proposed) $T_{\rm top}$ currents are $E = \pd Y$.  The stress tensor was already identified in \eqref{L-B-ff}. 
The charges of $G$, $T_{\rm top}$, and $T_F$ on each Fock module $\CF_{\mu\cdot X+\lambda\cdot Z}$ are
\be G:\; \rho^T\lambda\,,\qquad T_{\rm top}:\; \mu\,,\qquad  T_F:\;\sigma^T\lambda\,. \label{mon-charges2} \ee
In particular, the charges of boundary monopole operators $\CU_\mu^{\rm free}$ are $\rho^T\rho\mu$, $\mu$, and $\sigma^T\rho\mu$, in agreement with \eqref{mon-charges}.

\subsubsection{Proof of Proposition \ref{PropffpurtBgaugeq}}

First of all, it is straightforward to check that the map $\widehat{\grho}  \to \CV_Z$ is contained in the kernel of $S^i_0$ for all $i$. The fact that this is an embedding follows from the fact that $\widehat{\grho}$ is simple as its own module, which will be proved in Section \ref{sec:VOAlines}, Lemma \ref{Lem:grhosimple}.  To show that it is an isomorphism, we will compare characters.

For any representation $R$ of $G=U(1)^r$ (with weights $N_{a,0}\in \Z$) and $\text{Spin}(2)$ (with conformal weights $L_0\in \frac12 \Z$) let
\be \chi(R) := \text{Tr}_R\; s^{N_0} q^{L_0} = \text{Tr}_R\;s_1^{N_{1,0}}s_2^{N_{2,0}}\cdots s_r^{N_{r,0}} q^{L_0} \ee
where $s_a$ and $q^{\frac12}$ are formal variables. It is clear that when $R=\widehat{\grho}$ or $R=\mc V_Z$, the weight of $L_0$ is bounded from below, and that each $L_0$ weight space is finite-dimensional, so that $\chi(R) \in \Z(\!(q^{\frac12})\!)[s,s^{-1}]$. This character is clearly positive, so if $R$ and $R'$ satisfy the boundedness and finiteness assumptions and $\chi(R)=\chi(R')$, then $R\simeq R'$ as vector spaces. Consequently, if ther is an embedding $R\hookrightarrow R'$, then it is an isomorphism of vector spaces.

Therefore, to finish the proof, we just need to show that the characters coincide:
\be
\chi (\widehat{\grho})=\chi (\mathrm{Ker} \bigcap\limits_i S^i_0 \big|_{\CV_{Z}}).
\ee

Define the Pochhammer symbols
\be \label{Poch} (\alpha;q)_\infty := \prod_{n=0}^\infty (1-q^n \alpha)\,,\qquad (\alpha_1,...,\alpha_k;q)_\infty := \prod_{i=1}^k (\alpha_i;q)_\infty\,. \ee
Then the character of $\widehat{\grho}$ is simply
\begin{equation}\label{eqindexgqB}
 \chi(\widehat{\grho}) \,=\, 
  \frac{1}{(q;q)_\infty^{2r}}\,\prod\limits_{i=1}^n \Big(-q\prod_as_a^{\rho^i{}_a},-q\prod_as_a^{-\rho^i{}_a};q\Big)_\infty\,.
\end{equation}

Next we'd like to compute the character of the kernel of $S^i_0$ on $\CV_Z$. We will do this (schematically) by introducing a filtration whose associated graded makes $X$ and $Y$ commutative, and leaves the character unchanged.

Let $U(\CV_Z)$ denote the mode algebra of $\CV_Z$. By the bose-fermi correspondence, it takes the form
\be U(\CV_Z) = \C[x_{a,k},y^a_k]_{1\leq a\leq n}^{k\in \Z} \otimes  U(\mc{V}_{bc}^{\otimes n})\,, \ee
where $x_{a,k}$ and $y^a_k$ are the modes of $\pd X(z)$ and $\pd Y(z)$, which are non-commutative; and the full lattice extension of $H_Z$ has been dualized to $n$ free fermions. Define the filtration
\be U(\CV_Z) = \bigcup_{d=0}^\infty F_d U(\CV_Z) \ee
where
\begin{equation}
  F_d U(\CV_Z)= \bigcup_{f\leq d}\text{Span}\{x_{a_{1}, k_1}\cdots x_{a_f, k_f}y_*^*\} U(\mc{V}_{bc}^{\otimes n})\,.
\end{equation}
In other words, $F_d U(\CV_Z)$ is generated by words that contain at most $d$ modes of $X$'s followed by an arbitrary (possibly zero) number of modes of $Y$'s.
The associated graded is
\begin{equation}
    \text{Gr}_*FU(\CV_Z)=  \mathbb{C}[x_{a,k},y_k^a]\otimes U(\mc{V}_{bc}^{\otimes n})\,.
\end{equation}
where all the $x$'s and $y$'s now commute. For each $\mu\in \Z^r$, the module $\CV_{Z, \mu\cdot Y}$ as a $U(\CV_Z)$ module is likewise filtered. In the associated graded, we have
\begin{equation}
    \text{Gr}_* F\CV_{Z, \mu\cdot Y}\cong \text{Gr}_* F\CV_{Z,0}=\mathbb{C}[x_{a,k},y_k^a]_{k<0} \otimes \mc{V}_{bc}^{\otimes n}\,,
\end{equation}
and it is clear that the natural map $\CV_{Z, \mu\cdot Y}\to \text{Gr}_* F\CV_{Z, \mu\cdot Y}$ is an isomorphism of vector spaces.

Moreover, since the $S^i_0$'s do not contain $X$ fields, their action preserves the filtrations on the mode algebra and on modules. Therefore, the diagram
\begin{equation}
    \begin{tikzcd}
        \CV_Z\arrow[r, "S^i"] \dar& V_{\mathcal Z, -(\rho Y)^i}\dar\\
        \text{Gr}_* F\CV_Z & \text{Gr}_* F\CV_Z
    \end{tikzcd}
\end{equation}
can be completed to a diagram
\begin{equation}
    \begin{tikzcd}
        \CV_Z\arrow[r, "S^i"] \dar& V_{\mathcal Z, -(\rho Y)^i}\dar\\
        \text{Gr}_* F\CV_Z \arrow[r, "\overline{S}^i"]& \text{Gr}_* F\CV_Z
    \end{tikzcd}
\end{equation}
One easily verifies that $\overline{S}^i$ has the expression:
\begin{equation}
    \overline{S}^i=\frac{1}{2\pi i}\oint\mathrm{d}z\, e^{-\sum_i \rho^i{}_aY^a}\normord{e^{Z^i}}
\end{equation}
where the modes of $Y$  now commute with modes of $X$. By the snake lemma, the induced map:
\begin{equation}
    \bigcap\limits_i \mathrm{Ker}(S^i)\to \bigcap\limits_i \mathrm{Ker}(\overline{S}^i)
\end{equation}
is an isomorphism of vector spaces.

The process of taking associated graded further preserves the conformal grading and the $G$ grading (by $N_{a, 0}$) since the embedding $ F_dV_{\mathcal Z, \mu}\subseteq F_{d+1}V_{\mathcal Z, \mu}$ is one of graded vector spaces. The kernel $\bigcap\limits_i \mathrm{Ker}(\overline{S}^i)$ can be identified easily:
\begin{equation}
    \mathbb{C}[x_{a,k},y_k^a]_{k<0} \otimes \CV_{SF}^{\otimes n}\,, \label{SF-char}
\end{equation}
since the kernel of $\overline{S}^i$ can be easily identified with the kernel of $\oint\mathrm{d}z\normord{e^{Z^i}}$. The piece $e^{-\sum_i \rho^i{}_aY^a}$ in the definition of $\overline{S}^i$ commutes with everything in the associated graded. The character of $\mathbb{C}[x_{a,k},y_k^a]_{k<0}$ is $(q;q)_\infty^{-2r}$, since the $x$'s and $y$'s are free commutative variables with conformal weight $-k$ and uncharged under $G$. The character of $n$ symplectic fermions $\CV_{SF}^{\otimes n}$, transforming in representation $T^*V$ of $G$, is $\prod\limits_{i=1}^n \Big(-q\prod_as_a^{\rho^i{}_a},-q\prod_as_a^{-\rho^i{}_a};q\Big)_\infty$. Thus the character of
\eqref{SF-char} coincides with equation \eqref{eqindexgqB}. This completes the proof.

\subsection{Mirror symmetry of boundary VOA's}
\label{secMirrorsymbdryVOA}

We can now use free field realizations to prove the first main result of the paper.

\begin{Thm}\label{thm:VOA}
Give a pair of simple abelian $\CN=4$ gauge theories $\CT_{\rho}$, $\CT_{\rho^\vee}$ that are 3d mirror, the A-twist boundary VOA of $\CT_{\rho}$ is isomorphic to the B-twist boundary VOA of $\CT_{\rho^\vee}$,
    \begin{equation}
        \CV_\rho^A \,\cong\, \CV_{\rho^\vee}^B\,.
    \end{equation}
Furthermore, this isomorphism preserves stress tensors (conformal elements), mapping \eqref{L-A} to \eqref{L-B}; maps the $T_F\times T_B = U(1)^{2(n-r)}$ Kac-Moody currents of $\CV_\rho^A$ to the $T_{\rm top}\times G$ Kac-Moody currents of $\CV_{\rho^\vee}^B$; and maps the finite $T_{\rm top}=U(1)^r$ symmetry of $\CT_\rho^A$ to the finite $T_F$ symmetry of $\CT_{\rho^\vee}^B$.
\end{Thm}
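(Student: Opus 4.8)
The plan is to prove the theorem by directly comparing the two free-field realizations already in hand: the A-side presentation of $\CV_\rho^A$ from Proposition \ref{prop:A} (equation \eqref{eqffBRSTlatticeNq}) and the B-side presentation of $\CV_{\rho^\vee}^B$ from Corollary \ref{cor:VB-ff}. Both can be written as $\bigcap_i \ker S^i_0$ inside a lattice VOA on a Heisenberg vector space: using the singlet realization \eqref{SF-ff-n}, the A-side VOA is the kernel of the screenings $S^i=\normord{e^{\phi^i}}$ on the lattice VOA $\CV_{\Lambda_A}$ built from $\{\phi^i\}_{i=1}^n$ (orthonormal) together with $\{\eta_\perp^\alpha,\theta_\perp^\alpha\}_{\alpha=1}^{n-r}$ carrying forms $\mp\,\rho^{\vee\T}\rho^\vee$; the B-side VOA (with gauge rank $n-r$ for $\CT_{\rho^\vee}$) is the kernel of the screenings \eqref{def-B-screen} on $\CV_{X,Z}$, built from $\{Z^i\}_{i=1}^n$ (orthonormal) and the hyperbolic pairs $\{X_\alpha,Y^\alpha\}$. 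Both ambient spaces have dimension $3n-2r$ and signature $(2n-r,\,n-r)$, so the strategy is to exhibit an explicit linear isometry $\Phi$ of these spaces that carries one lattice and its screenings to the other, and hence one screening kernel to the other.

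The isometry is dictated by the mirror dictionary \eqref{MS-sym}. Set $Q:=\rho^{\vee\T}\rho^\vee$ and $\tilde X_\alpha := X_\alpha + \sum_i (\rho^\vee)^i{}_\alpha Z^i$, so that $\partial\tilde X_\alpha$ is the image of the $G$-current $N_a$. I would define
\[ \Phi(\phi^i) = Z^i - \textstyle\sum_\alpha (\rho^\vee)^i{}_\alpha Y^\alpha =: w^i,\qquad \Phi(\theta_\perp^\alpha)=\tilde X_\alpha,\qquad \Phi(\eta_\perp^\alpha) = \textstyle\sum_\beta Q_{\alpha\beta}Y^\beta - \tilde X_\alpha . \]
A short computation gives $B(w^i,w^j)=\delta^{ij}$, $B(w^i,\tilde X_\alpha)=B(w^i,Y^\alpha)=0$, $B(\tilde X_\alpha,\tilde X_\beta)=Q_{\alpha\beta}$, $B(\tilde X_\alpha,Y^\beta)=\delta^\beta_\alpha$, so $\Phi$ preserves all pairings. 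The choice of images for $\eta_\perp,\theta_\perp$ is exactly the one sending the A-side currents $J^B=\partial\theta_\perp$ and $J^F=Q^{-1}(\partial\eta_\perp+\partial\theta_\perp)$ of \eqref{A-JB}–\eqref{A-JF} to the B-side currents $N=\partial\tilde X$ and $E=\partial Y$. Crucially, the norm-one A-side screening vectors $\phi^i$ are sent to the vectors $w^i$ appearing in $S^i(z)=\normord{e^{Z^i-\sum_\alpha(\rho^\vee)^i{}_\alpha Y^\alpha}}$; hence $\Phi$ carries $\oint \normord{e^{\phi^i}}$ to $S^i_0$ on the nose.

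The main step — and the one I expect to be the real obstacle — is checking that $\Phi$ sends the A-lattice $\Lambda_A$ onto the $\{X,Z\}$-lattice $\Lambda_B$ of $\CV_{X,Z}$, so that $\Phi$ is an isomorphism of the ambient lattice VOAs and therefore restricts to the screening kernels. By Proposition \ref{prop:A} a momentum of $\Lambda_A$ has the form $\nu\cdot\phi+\mu^\vee\cdot\eta_\perp+\lambda^\vee\cdot\theta_\perp$ with $\nu=\rho\mu+\rho^\vee\mu^\vee\in\Z^n$, where $\mu^\vee$ is the $N^\vee$-component of $\nu$ in the decomposition \eqref{Z-decomp} and $\lambda^\vee\in N^\vee$ satisfies $[\mu^\vee]=[\lambda^\vee]$ in $H$; a momentum of $\Lambda_B$ is $\mu_B\cdot X+\lambda_B\cdot Z$ with $\mu_B\in\Z^{n-r}$, $\lambda_B\in\Z^n$. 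Applying $\Phi$ and comparing coefficients in the $(\tilde X,w,Y)$ basis, the match reduces to the assignment $\mu_B=\lambda^\vee-\mu^\vee$ and $\lambda_B=\nu+\rho^\vee(\lambda^\vee-\mu^\vee)$. This is precisely where the symmetry fractionalization group $H$ enters: $\mu_B=\lambda^\vee-\mu^\vee$ lies in $\Z^{n-r}$ exactly because $[\mu^\vee]=[\lambda^\vee]$, and the $Y$-coefficient matches because of the orthogonality identity $\rho^{\vee\T}\rho=0$, which gives $\rho^{\vee\T}\nu=Q\mu^\vee$. One then checks this assignment is a bijection $\Lambda_A\xrightarrow{\sim}\Lambda_B$ using \eqref{Z-decomp}. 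This is the heart of the argument and requires the most care, since the fractional projections on the A-side must conspire with the $H$-class constraints to land in the genuinely integral $X,Z$-lattice.

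With the lattice isomorphism in place, $\Phi$ extends to a VOA isomorphism $\CV_{\Lambda_A}\cong\CV_{X,Z}$ commuting with the matched screening charges, giving $\CV_\rho^A\cong\CV_{\rho^\vee}^B$. The remaining assertions are then routine verifications once $\Phi$ is fixed: the conformal vectors agree by substituting $\Phi$ into \eqref{eq:L-A-ff-cohomology} and comparing with \eqref{L-B-ff} (the background-charge terms $\partial^2\phi$ matching the $\partial^2 Z,\partial^2 Y$ improvements); the Kac–Moody currents $T_F\times T_B$ map to $T_{\rm top}\times G$ by the current computation above; and the finite $T_{\rm top}$-weights \eqref{A-Jtop} match the finite $T_F$-charges read off from \eqref{mon-charges2} for $\CT_{\rho^\vee}$ after the mirror replacement $\sigma\leftrightarrow\sigma^\vee$ of \eqref{MS-sym}. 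I would present the isometry and the lattice bijection in full detail and relegate these last compatibility checks to direct computation.
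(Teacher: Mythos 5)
Your proposal is correct and is essentially the paper's own proof: your isometry $\Phi$ is precisely the change of basis \eqref{redef} (with $w^i=\phi^i$, $\tilde X_\alpha=\theta_\alpha$, $Q Y-\tilde X=\eta$), and your lattice bijection $\mu_B=\lambda^\vee-\mu^\vee$, $\lambda_B=\nu+\rho^\vee(\lambda^\vee-\mu^\vee)$ is the paper's shifted summation $\lambda^\vee=m+\mu^\vee$ in \eqref{AB-m}, resting on the same orthogonality identity $\rho^{\vee\T}\rho=0$ and the $H$-class constraint from \eqref{Z-decomp}. The only difference is direction of presentation (you transport the A-side data into the B-side ambient space, the paper rewrites the B-side construction in the new variables and recognizes Proposition \ref{prop:A}); the content, including the stress-tensor and current checks, is identical.
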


\noindent\emph{Proof.} \; Consider the free field realization of $\CT_{\rho^\vee}^B$ from Section \ref{sec:Bff}, having replaced $\rho$ with $\rho^\vee$. The free field realization begins with a Heisenberg algebra $H_{X,Y,Z}$ with basis vectors $\{X_\alpha,Y^\alpha,Z^i\}$ for $1\leq \alpha\leq n-r$, $1\leq i\leq n$ with bilinear form
\begin{equation}
    B( X_\alpha,Y^\beta) =\delta_\alpha{}^\beta\,,\qquad ( Z^i,Z^j) =\delta^{ij}\,.
\end{equation}
Note that this Heisenberg algebra is isomorphic to $H_{\phi,\eta,\theta}$ with basis vectors $\{\phi^i,\eta^\alpha,\theta^\alpha\}$ for $1\leq \alpha\leq n-r$, $1\leq i\leq n$ and bilinear form
\be  B( {\phi}^i,{\phi}^j)=\delta^{ij}\,,\qquad  B({\theta}^\alpha,{\theta}^\beta)=-B( {\eta}^\alpha,{\eta}^\beta)=(\rho^{\vee\T}\rho^\vee)^{\alpha\beta}\,, \ee
under the change of basis
\be \label{redef} \phi = Z-\rho^\vee Y\,,\qquad \theta = X+\rho^{\vee\T}Z\,,\qquad  \eta = -X + \rho^{\vee\T}\rho^\vee Y-\rho^{\vee\T}Z\,, \ee
whose inverse is
\be X = -\eta -\rho^{\vee\T}\phi\,,\qquad  Y = (\rho^{\vee\T}\rho^\vee)^{-1}(\theta+\eta)\,,\qquad Z = \phi+\rho^\vee (\rho^{\vee\T}\rho^\vee)^{-1}(\theta+\eta)\,. \label{inv-redef} \ee
Our plan now is to apply the steps used to construct the free-field realization of $\CT_{\rho^\vee}^B$ to $H_{\phi,\eta,\theta}$ rather than $H_{X,Y,Z}$, and manifestly recover the free-field realization of $\CT_\rho^A$.

The next step in the B-twist free field realization is to extend by the Fock modules $\CF_{\tilde \mu\cdot Z}$ for $\tilde \mu\in \Z^n$. Just as in Sections \ref{sec:VA-lattice} and \ref{sec:VA-BRST}, we use an orthogonal decomposition of this lattice, writing each element of $\Z^n$ uniquely as
\be \tilde\mu= \rho(\mu) + \rho^\vee(\mu^\vee) \in \Z^n \quad\text{for}\; \mu \in N\,,\quad \mu^\vee\in N^\vee\,,\quad [\mu]=[\mu^\vee]\in H\,, \ee
recalling that $N,N^\vee$ are rational refinements of $\Z^r$, $\Z^{n-r}$ with $N/\Z^r\simeq  H \simeq N^\vee/\Z^{n-r}$. From \eqref{inv-redef} together with orthogonality of $\rho$ and $\rho^\vee$, we have
\be \tilde \mu\cdot Z= (\rho(\mu) + \rho^\vee(\mu^\vee))\cdot Z = (\rho(\mu) + \rho^\vee(\mu^\vee))\cdot \phi + \mu^\vee\cdot(\theta+\eta)\,. \ee
Therefore,
\be \CV_Z \bigoplus_{\tilde \mu\in \Z^n} \CF_{\tilde \mu\cdot Z}^{X,Y,Z} \,\cong\, \bigoplus_{\substack{\mu \in N,\; \mu^\vee\in N^\vee \\ [\mu]=[\mu^\vee] }} \CF_{(\rho(\mu) + \rho^\vee(\mu^\vee))\cdot \phi}^\phi \otimes \CF_{\mu^\vee\cdot\eta}^\eta \otimes \CF_{\mu^\vee\cdot\theta}^\theta\,. \ee
In the new variables, the screening operators from  \eqref{def-B-screen} are simply $S^i = \normord{e^{(Z-\rho^\vee Y)^i}} = \normord{e^{\phi^i}}$. Taking their kernel and using Proposition \ref{PropffpurtBgaugeq}, we obtain
\be \widehat{\mathfrak g_{\rho^\vee}} \,\cong\, \bigoplus_{\substack{\mu \in N,\; \mu^\vee\in N^\vee \\ [\mu]=[\mu^\vee] }} M_{(\rho(\mu) + \rho^\vee(\mu^\vee))\cdot \phi}^\phi \otimes \CF_{\mu^\vee\cdot\eta}^\eta \otimes \CF_{\mu^\vee\cdot\theta}^\theta\,. \ee

Finally, we extend by monopole operators, which are the Fock modules $\CF_{m\cdot(X+\rho^{\vee\T}Z)} = \CF_{m\cdot\theta}$ for $m\in \Z^{n-r}$. By Corollary \ref{cor:VB-ff}, we have
\be \CV_{\rho^\vee}^B \,\cong\, \bigoplus_{\substack{\mu \in N,\; \mu^\vee\in N^\vee,\; m\in \Z^{n-r} \\ [\mu]=[\mu^\vee] }} M_{(\rho(\mu) + \rho^\vee(\mu^\vee))\cdot \phi}^\phi \otimes \CF_{\mu^\vee\cdot\eta}^\eta \otimes \CF_{(m+\mu^\vee)\cdot\theta}^\theta\,. \label{AB-m} \ee
Defining a shifted summation variable $\lambda^\vee=m+\mu^\vee$, we can rewrite this as
\be \CV_{\rho^\vee}^B \,\cong\, \bigoplus_{\substack{\mu \in N,\; \mu^\vee,\lambda^\vee\in N^\vee \\ [\mu]=[\mu^\vee]=[\lambda^\vee] }} M_{(\rho(\mu) + \rho^\vee(\mu^\vee))\cdot \phi}^\phi \otimes \CF_{\mu^\vee\cdot\eta}^\eta \otimes \CF_{\lambda^\vee\cdot\theta}^\theta\,, \ee
which is manifestly equivalent to the free-field realization of $\CV_\rho^A$ from Proposition \ref{prop:A} (up to relabeling $\eta,\theta\mapsto \eta_\perp,\theta_\perp$).

We must also argue that various symmetries of the A and B VOA's match as expected across 3d mirror symmetry. This is relatively straightforward, since we have already described these symmetries in both the A free-field realization (Section \ref{sec:VA-sym-ff}) and the B free-field realization (Section \ref{sec:VB-sym-ff}).

To match conformal element, we start by rewriting $X_\alpha, Y_\alpha$ and $Z^i$ using $\phi^i,\theta_{ \alpha}$ and $\eta_{ \alpha}$ as follows, using the matrices $\Pi$ and $\Pi^\perp$ from Section \ref{sec:VA-sym-ff}:
\be
\begin{aligned}
& Z_i=\phi_i+\sum\limits_\alpha \Pi^{\perp}_{i\alpha}(\theta^\alpha+\eta^\alpha)\\
& X_\alpha=-\eta^\alpha-\sum (\rho^\vee)_{i\alpha}\phi^i\\
& Y_\alpha=\sum\limits_{i, \beta}\Pi_{i\alpha}^{\perp}\Pi^{\perp, i}{}_{\beta} (\theta^\beta+\eta^\beta)
\end{aligned}
\ee
This coincides with equation \eqref{inv-redef} because $\Pi^\perp \Pi^{\perp, \T}$ is the inverse of $\rho^{\vee \T}\rho^\vee$. We can now rewrite the conformal element from equation \eqref{L-B-ff}. First, we have:
\be
\begin{aligned}
\frac{1}{2}\sum_\alpha\left(\normord{\partial X_\alpha \partial Y^\alpha}+\normord{\partial Y^\alpha \partial X_\alpha}\right)=&-\sum_{\alpha, \beta, i}\Pi_{i\alpha}^{\perp}\Pi^{\perp, i}{}_{\beta} \normord{\partial \eta^\alpha\partial\eta^\beta} -\sum_{\alpha, \beta, i}\Pi_{i\alpha}^{\perp}\Pi^{\perp, i}{}_{\beta} \normord{\partial \eta^\alpha\partial\theta^\beta}\\ &-\sum_{i,\alpha,j,\beta}(\rho^\vee)_{i}{}^\alpha\Pi_{j\alpha}^{\perp}\Pi^{\perp, j}{}_{\beta} \normord{\partial\phi^i(\partial\theta^\beta+\partial\eta^\beta)}
\end{aligned}
\ee
The second part is:
\be
\begin{aligned}
\frac{1}{2}\sum_i \normord{\partial Z_i\partial Z^i}=&\frac{1}{2}\sum_i \normord{\partial \phi_i\partial \phi^i} +\sum_i\Pi^\perp_{i\alpha}\normord{\partial \phi^i (\partial \theta^\alpha+\partial \eta^\alpha)}\\ &+\frac{1}{2}\sum_{i,\alpha,\beta}\Pi^\perp_{i\alpha}\Pi^{\perp, i}{}_\beta \normord{(\partial \theta^\alpha+\partial \eta^\alpha)(\partial \theta^\beta+\partial \eta^\beta)}.
\end{aligned}
\ee
If we add these terms together with $\frac{1}{2}\sum_{i,\alpha} \rho^\vee_{i\alpha}\partial^2Y^\alpha-\partial^2Z_i$, and use the fact that $\rho^\vee (\Pi^\perp)^\T\Pi^\perp=\Pi^\perp$, then we can cancel many terms in the sum, and arrive at the following expression: 
\be
\frac{1}{2}\sum_i \left(\normord{\partial\phi^i\partial\phi_i}-\partial^2\phi^i-\sum_{ \alpha, \beta}\Pi_{i}^{\perp}{}^\alpha\Pi^{\perp, i\beta} \normord{\partial \eta_{ \alpha}\partial \eta_{\beta}}+\sum_{ \alpha, \beta}\Pi_{i}^{\perp}{}^\alpha\Pi^{\perp, i\beta} \normord{\partial \theta_{\alpha}\partial \theta_{\beta}}\right). 
\ee
This matches perfectly with equation \eqref{eq:L-A-ff-cohomology}. Thus the isomorphism above preserves the stress tensor of the two VOA's. 

The boundary Kac-Moody symmetry $T_B=U(1)^{n-r}$ on the A side has current $\pd\theta$ from \eqref{A-JB}. Using \eqref{redef}, this corresponds to $\pd X+\rho^{\vee\T}\pd Z$ on the B side, which in turn is the correct current for $T_G$.

The Kac-Moody symmetry $T_F=U(1)^{n-r}$ on the A side has current $(\rho^{\vee\T}\rho^\vee)^{-1}(\pd\eta+\pd\theta)$ from \eqref{A-JF}. Using \eqref{inv-redef}, this becomes the current $\pd Y$ in the free-field realization on the B side, which is the current for $T_{\rm top}$.

Finally, the finite symmetry $T_{\rm top}=U(1)^r$ on the A side gives a $\Z^r$ grading to the VOA $\CV_\rho^A$; in the free-field realization of $\CV_\rho^A$, each summand has a $T_{\rm top}$ weight given by $\mu+\sigma^{\vee\T}\rho^\vee\lambda^\vee$ as in \eqref{A-Jtop}. In the shifted summation \eqref{AB-m}, with $\lambda^\vee=m+\mu^\vee$, this becomes
\be \mu+\sigma^{\vee\T}\rho^\vee(m+\mu^\vee) = \sigma^{\vee\T}(\rho\mu+\rho^\vee\mu^\vee) + \sigma^{\vee\T}\rho^\vee m = \sigma^{\vee\T}\tilde \mu + \sigma^{\vee\T}\rho^\vee m \,. \label{weight-top-AB} \ee
Let's compare this with \eqref{mon-charges2} on the B side, which states that the $T_F$ weight of a Fock module $\CF_{a\cdot X+b\cdot Z}$ in the free-field realization of $\CV_{\rho^\vee}^B$ should simply be $\sigma^{\vee\T}b$. There is a perfect match, noting that the weight ``$b$'' of $Z$ in the Fock modules of \eqref{AB-m} has two contributions: $\tilde \mu$ from the initial extension by $\CF_{\tilde\mu\cdot Z}$ modules; and $\rho^{\vee}m$ from the subsequent extension by monopoles $\CF_{m\cdot(X+\rho^{\vee\T}Z)}$. Put differently, the first contribution is the weight of $\psi^{i,\pm}$ fields in the perturbative VOA, and the second contribution comes from monopole operators having nontrivial flavor charge.  \hfill$\square$  \medskip

\begin{Exp}
When $\rho=0$ and $\rho^\vee=(1)$ (a free hypermultiplet), the A-twisted VOA $\CV_\rho^A$ is equal to $\mc{V}_{\beta\gamma}\otimes \mc{V}_{bc}$. The mirror theory has $\rho=(1)$ (SQED with one hypermultiplet), and the B-twisted VOA $\CV_{\rho^\vee}^B$ is a simple current extension of $\widehat{\mathfrak{gl}(1|1)}$. The extension is exactly the same one studied in \cite{creutzig2013w}. In terms of the free-field realization using $X,Y,Z$ in Section \ref{sec:Bff}, the extension corresponds to the Fock modules $\Fock_{nX}$ for $n\in \mathbb{Z}$. 
\end{Exp}

\begin{Exp}
Consider $\rho=(1,1)^\T$ and $\rho^\vee=(1,-1)^\T$. We described some of the generators of $\CV_\rho^A$ in \eqref{eq11VOAA}. Theorem \ref{thm:VOA} claims that that $\CV_\rho^A$ can also be realized as an extension of the Kac-Moody superalgebra~$\widehat{\mathfrak{g}_{\rho^\vee}}$. We can identify the explicit map of generators: 
\begin{equation} \label{example-11-VOA}
    \begin{aligned}
    & N\mapsto :b^1c^1:-:b^2c^2:\\
    &\psi^{1,+}\mapsto \beta^1b^1\,,\quad \psi^{2,+}\mapsto \beta^2b^2\\
    & \psi^{1,-}\mapsto -\gamma^1c^1\,,\quad \psi^{2,-}\mapsto -\gamma^2c^2\\
    & E\mapsto  (\normord{b^1c^1}-\normord{\beta^1\gamma^1})
    \end{aligned}
\end{equation}
Most of the relations in $\widehat{\mathfrak g_{\rho^\vee}}$ are preserved by this map, with one exception. We expect
\be \psi^{2,+}(z)\psi^{2,-}(w) \,\sim\, \frac{1}{(z-w)^2} - \frac{E(w)}{z-w}\,, \ee
but from the RHS of \eqref{example-11-VOA} find instead
\begin{equation}
    \psi^{2,+}(z)\psi^{2,-}(w)\,\sim\, \frac{1}{(z-w)^2}+\frac{\normord{b^2c^2}-\normord{\beta^2\gamma^2}}{z-w}\,.
\end{equation}
The difference in the first-order pole is simply the BRST current $J=-\normord{b^1c^1}+\normord{\beta^1\gamma^1}-\normord{b^2c^2}+\normord{\beta^2\gamma^2}$, which is $Q$-exact in $\CV_\rho^A$. We also note that the boundary monopole operators of charge $\pm1$ in $\CV_{\rho^\vee}^B$, which generate the non-perturbative extension, correspond to $\beta^1b^2$ and $\gamma^1c^2$ in $\CV_\rho^A$.

\end{Exp}

\section{Morita-equivalent VOA's from gauging and ungauging}
\label{secMorita}

In the remainder of the paper, we will begin to shift our focus from boundary VOA's themselves to their braided tensor categories of modules. A powerful technique for studying braided tensor cartegories stems from gauging and ungauging --- \emph{a.k.a.} equivariantizing and de-equivariantizing --- various symmetries that act on them. These ideas go back to work on CFT's and string theory in the late 1980's: gauging (orbifolding) a finite symmetry in a modular fusion category was developed in \cite{DVVV}; its inverse was the simple current extension of \cite{SY-simple}. These procedures were later formalized and generalized mathematically. For example, the orbifolding procedure for finite groups, acting on categories, is synthesized in \cite{egno} (with a later generalization in\cite{McRae:2020}); and VOA extensions at the level of categories are captured by the lifting functors of \cite{creutzig2017tensor, creutzig2020direct}.

We would like to apply established gauging and un-gauging operations to study modules of our 3d $\CN=4$ boundary VOA's. For example, it would be convenient to relate module categories in a gauge theory $\CT_\rho$ to module categories in a theory of $n$ free hypermultiplets (which are already understood) via orbifolds and simple current extensions. 
Unfortunately, it is not obvious how to do this directly, since $\CV_\rho^A,\CV_\rho^B$ are \emph{not} related to the respective free-hypermultiplet VOA's $\CV_{\beta\gamma}^{T^*V},\CV_{SF}^{T^*V}$ by standard orbifolds and extensions. Fortunately, there exists a excellent workaround.

In this section, we will introduce two ``alternative'' VOA's that are directly related to free-hypermultiplet VOA's by standard operations; schematically:
\be \label{altAB-intro} \begin{array}{l} \wt\CV_\rho^A := \CV_{\beta\gamma}^{T^*V} \text{\;extended by $\Z^r$ simple currents} \\[.2cm]
\wt\CV_\rho^B := \text{$(\C^*)^r$ orbifold of $\CV_{SF}^{T^*V}\otimes \CV_{\Z^n}^-$}\,,\end{array}
\ee
where the $\Z^r$ currents and orbifold action depend on the charge matrix $\rho$, and $\CV_{\Z^n}^-$ is a complete Euclidean lattice VOA of negative-definite bilinear form. We will then prove in Theorem \ref{ThmMirSymbdVOA} that these alternative VOA's are related to the actual ones by tensoring with additional self-dual lattice VOA's. Combined with 3d mirror symmetry (Theorem \ref{thm:VOA}), we have
\be \label{Morita-intro} \wt\CV_\rho^A \otimes \mc{V}_{bc}^{\otimes n} \;\cong\; \wt\CV_{\rho^\vee}^B \otimes \mc{V}_{bc}^{\otimes n} \;\cong\; \CV_\rho^A\otimes \CV_{\Z^{2r}}^{+-}\;\cong\; \CV_{\rho^\vee}^B \otimes  \CV_{\Z^{2r}}^{+-}\,.\ee

Now, self-dual lattice VOA's have trivial module categories. Therefore, \eqref{Morita-intro} turns out to imply that the module categories of $\wt\CV_\rho^A,\wt\CV_\rho^B$ are equivalent to those of $\CV_\rho^A,\CV_\rho^B$. This is known as Morita equivalence. It ultimately allows us to replace the actual VOA's with the alternative ones in order to study module categories.

It might seem surprising that a $(\C^*)^r$ orbifold did not appear in both lines of \eqref{altAB-intro}. The physical origin of this is explained later, in Section \ref{sec:highersym}. In brief, the $U(1)^n$ flavor symmetry of $n$ hypermultiplets, part of which we gauge to obtain $\CT_\rho$, behaves very differently depending on which topological twist we take. In the B twist, it is complexified to a $\C^*$ symmetry, leading to a standard orbifold. In the A twist, the bulk $U(1)^n$ symmetry is instead promoted to a $\Z^n$ one-form symmetry. The generators of the one-form symmetry are ``flavor vortex'' line operators. Gauging part of the $U(1)^n$ flavor symmetry in the A twist effectively acts by gauging the associated one-form symmetry. It screens bulk line operators by flavor vortices that have become dynamical. On the boundary, it induces a simple current extension.

We already saw a reflection of this phenomenon in Sections \ref{sec:VA-sym} and \ref{sec:VB-sym-ff}, when we discussed symmetries of boundary VOA's. For $n$ free hypermultiplets in the B twist, $U(1)^n$ remains a finite symmetry of the boundary VOA $\CV_{SF}^{T^V}$ (so it makes sense to orbifold by it). For $n$ free hypermultiplets in the A twist, $U(1)^n$ is promoted to a Kac-Moody symmetry on the boundary. The generators of the bulk $\Z^n$ one-form symmetry are ``flavor vortex'
lines, whose endpoints on the boundary produce spectral-flow modules for the Kac-Moody symmetry. We will review the notion of spectral-flow modules in Section \ref{subsecgaugeextA}.

\subsection{$\wt{\mc{V}}_{\rho}^A$ by extending $\mc{V}_{\beta\gamma}^{T^*V}$}\label{subsecgaugeextA}

Recall from \eqref{nbg-ff} that $n$ copies of the beta-gamma VOA has a free-field realization
\be \mc{V}_{\beta\gamma}^{T^*V} \;\cong\; \bigoplus_{\lambda\in \Z^n} M_{\lambda\cdot\phi}^\phi \otimes \Fock_{\lambda\cdot\eta}^\eta\,, \label{betagamma-ff2} \ee
where $M_{\lambda\cdot\phi}^\phi\subset \Fock_{\lambda\cdot\phi}^\phi$ are singlet modules embedded in the Fock modules of a rank-$n$ Heisenberg algebra with $B(\phi^i,\phi^j)=\delta^{ij}$; and $\Fock_{\lambda\cdot\eta}^\eta$ are Fock modules for a rank-$n$ Heisenberg algebra with $B(\eta^i,\eta^j)=-\delta^{ij}$. This is the boundary VOA for free $T^*V$ hypermultiplets in the A twist --- modulo free fermions $\CV_{bc}^V$, which have a trivial module category.

As explained further in Section \ref{sec:highersym}, free hypermultiplets in the A twist have a $\Z^n$ one-form symmetry, generated by flavor vortex lines.
On the boundary, the symmetry becomes a $\Z^n$ spectral-flow automorphism of $\CV_{\beta\gamma}^{T^*V}$, associated to its $\mathfrak{gl}(1)^n$ Kac-Moody symmetry. In particular, the endpoints of bulk flavor vortices are spectral-flow modules
\be \V_m := \Sigma_m(\CV_{\beta\gamma}^{T^*V})\,. \ee
By the general theory reviewed in Section \ref{sec:spectralflow} (based on \cite{Li:1997}), the $\V_m$ are simple currents for $\CV_{\beta\gamma}^{T^*V}$. In particular, their fusion rules are
\be \V_m\times\V_{m'} = \V_{m+m'} \ee
and they have trivial monodromy around each other.
In the free-field realization, where the $\mathfrak{gl}(1)^n$ currents is $J_F^i=\partial\eta^i$, the vertex operator generating each spectral-flow module is $\normord{e^{\mu\cdot\eta(z)}}$. 
The module itself is therefore
\be \V_\mu \,\simeq\, \bigoplus_{\lambda\in\Z^n} M_{\lambda\cdot\phi}^\phi \otimes \Fock_{(\lambda+\mu)\cdot\eta}^\eta\,.\ee

Now we are in a position to analyze the effect of screening vortex lines. Gauging a $U(1)^r$ subgroup of the bulk symmetry with charges $\rho$ screens flavor vortices labelled by elements of the sublattice $m\in \rho(\Z^r)\subseteq \Z^n$.
Thus suggests
\begin{Def} \label{def:altA}
For a simple abelian gauge theory with charge matrix $\rho$, the alternative A-twisted boundary VOA is
\begin{align}  \wt\CV_\rho^A &\,:=\,  \bigoplus_{\mu\in \Z^r} \V_{\rho(\mu)} \qquad \text{(as $\mc{V}_{\beta\gamma}^{T^*V}$ modules)} \\
&\,\cong\, \bigoplus_{\substack{\lambda\in \Z^n \\ \mu\in \Z^r}} M_{\lambda\cdot\phi}^\phi \otimes \Fock_{(\lambda+\rho(\mu))\cdot\eta}^\eta\,. \end{align}
\end{Def}

We can simplify this a bit further, similar to how we simplified the free-field realization of $\CV_\rho^A$ in Section \ref{sec:VA-BRST}. Splitting $\lambda=\rho(\lambda_0)+\rho^\vee(\lambda^\vee)$ for $\lambda_0\in N\supset \Z^r$, $\lambda^\vee\in N^\vee\supset \Z^{n-r}$ with $[\lambda_0]=[\lambda^\vee]\in H$, and decomposing the second Heisenberg VOA as
\be \CH_\eta \,\cong\, \CH_{\eta_0}\otimes \CH_{\eta^\perp}\,,\qquad \eta_0=\rho^\T\eta\,,\quad \eta_\perp=\rho^{\vee\T}\eta\,, \ee
such that the $\eta_0$ fields have bilinear form $\rho^T\rho$ and the $\eta_\perp$ fields have bilinear form $\rho^{\vee\T}\rho^\vee$, we obtain
\be \wt\CV_\rho^A\;\cong\; \bigoplus_{\substack{\lambda_0\in N,\;\lambda^\vee\in N^\vee \\ \mu\in \Z^r \\ [\lambda]=[\lambda^\vee]\in H}} M_{(\rho(\lambda_0)+\rho^\vee(\lambda^\vee))\cdot \phi}^\phi \otimes \Fock_{\lambda^\vee\cdot\eta_\perp}^{\eta_\perp}\otimes \Fock_{(\lambda+\mu)\cdot\eta_0}^{\eta_0}\,.\ee
Further redefining $\mu\mapsto \mu-\lambda$ this becomes
\be \label{VA-alt-decomp} \wt\CV_\rho^A\;\cong\; \bigoplus_{\substack{\lambda_0,\mu\in N,\;\lambda^\vee\in N^\vee \\ [\lambda]=[\lambda^\vee]=[\mu]\in H}} M_{(\rho(\lambda_0)+\rho^\vee(\lambda^\vee))\cdot \phi}^\phi \otimes \Fock_{\lambda^\vee\cdot\eta_\perp}^{\eta_\perp}\otimes \Fock_{\mu\cdot\eta_0}^{\eta_0}\,.\ee

We recognize in the first two factors the VOA $\CM_{\rho,0}$ from \eqref{def-M0} (and its modules $\CM_{\rho,h}$ labelled by $h\in H$). The third factor can be summed in each $H$ equivalence class to give a negative-definite analogue of the lattice VOA \eqref{def-V0} and its simple $H$ currents \eqref{def-Vh}, namely
\be \CV_{\rho^\vee,h}^{-} \;:=\; \bigoplus_{\lambda^\vee\in \Z^{n-r}+h_{N^\vee}} \CF_{\lambda^\vee\cdot\eta_\perp}^{\eta_\perp}\,.\ee
Altogether, $\wt \CV_\rho^A$ may be expressed as an $H$ simple current extension
\be \wt\CV_\rho^A \;\cong\; \bigoplus_{h\in H} \CM_{\rho,h}\otimes \CV_{\rho^\vee,h}^{-}\,. \ee
This is however \emph{not} (in any obvious way) equivalent as the expression for $\CV_\rho^A$ in \eqref{eqBRSTbgq}, due to the negative norm of the lattice VOA.

\subsection{$\wt{\mc{V}}_\rho^B$ as an orbifold}
\label{subsecgaugeBalt}

Next, consider $n$ symplectic fermions $\CV_{SF}^{T^*V}$, the boundary VOA for free $T^*V$ hypermultiplets in the B twist. This has the free-field realization
\be \CV_{SF}^{T^*V} \;\cong\; \bigoplus_{\lambda\in \Z^n} M_{\lambda\cdot\phi}^\phi\,, \ee
where each singlet module $M_{\lambda\cdot\phi}^\phi\subset \Fock_{\lambda\cdot\phi}^\phi$ is embedded in a Fock module for a rank-$n$ Heisenberg VOA with $B(\phi^i,\phi^j)=\delta^{ij}$. We tensor this with a complete negative-definite lattice VOA
\be \CV_{\Z^n}^- \;:=\; \bigoplus_{\mu\in\Z^n}\Fock_{\mu\cdot \eta}^\eta\,, \ee
built from a rank-$n$ Heisenberg VOA with $B(\eta^i,\eta^j)=-\delta^{ij}$.

Note that tensoring with $\CV_{\Z^n}^-$ does not affect the category of modules, since $\CV_{\Z^n}^-$ has a trivial module category. The reason for tensoring with $\CV_{\Z^n}^-$ is the following.

Consider the diagonal $(\C^*)^r$ action on  $\CV_{SF}^{T^*V}\otimes \CV_{\Z^n}^- \,\cong\, \bigoplus_{\lambda,\mu\in\Z^n} M_{\lambda\cdot\phi}^\phi\otimes  \Fock_{\mu\cdot \eta}^\eta$ under which the subspace $M_{\lambda\cdot\phi}^\phi\otimes  \Fock_{\mu\cdot \eta}^\eta$ has weight $\rho^\T(\lambda-\mu)$.
\begin{Def} \label{def:altB}
Let $\wt{\mc{V}}_\rho^B$ be the $(\C^*)^r$ orbifold (i.e. invariant subspace) of $\CV_{SF}^{T^*V}\otimes \CV_{\Z^n}^-$ under the diagonal action above,
\be \wt{\mc{V}}_\rho^B \;:=\; \big(\CV_{SF}^{T^*V}\otimes \CV_{\Z^n}^-\big)^{(\C^*)^r}\,. \ee
\end{Def}
\noindent Then we have that
\begin{Lem} \label{lem:AB-alt}
There is an isomorphism $\wt\CV_\rho^A\,\cong\,\wt\CV_{\rho^\vee}^B$ for any 3d-mirror pair of simple abelian theories. \end{Lem}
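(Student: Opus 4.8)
The plan is to compute both sides in the common free-field realization and observe that they are literally the same sub-VOA of $\CV_{SF}^{T^*V}\otimes\CV_{\Z^n}^-$. First I would fix the ambient object: using $\CV_{SF}^{T^*V}\cong\bigoplus_{\lambda\in\Z^n}M_{\lambda\cdot\phi}^\phi$ (with $B(\phi^i,\phi^j)=\delta^{ij}$) and $\CV_{\Z^n}^-\cong\bigoplus_{\mu\in\Z^n}\Fock_{\mu\cdot\eta}^\eta$ (with $B(\eta^i,\eta^j)=-\delta^{ij}$), the tensor product is the VOA $\bigoplus_{\lambda,\mu\in\Z^n}M_{\lambda\cdot\phi}^\phi\otimes\Fock_{\mu\cdot\eta}^\eta$, inside which both $\wt\CV_\rho^A$ and $\wt\CV_{\rho^\vee}^B$ will be realized. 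Recall here (from \eqref{betagamma-ff2} and the surrounding discussion) that $\CV_{\beta\gamma}^{T^*V}$ is itself the full diagonal $(\C^*)^n$-invariant sub-VOA, namely the summands with $\lambda=\mu$.

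For the B side I would unwind Definition \ref{def:altB} applied to $\rho^\vee$: the diagonal $(\C^*)^{n-r}$ acts on $M_{\lambda\cdot\phi}^\phi\otimes\Fock_{\mu\cdot\eta}^\eta$ with weight $\rho^{\vee\T}(\lambda-\mu)$, so the invariant subalgebra retains exactly the summands with $\lambda-\mu\in\ker(\rho^{\vee\T}:\Z^n\to\Z^{n-r})$. Since $\rho^\vee=\tau^\T$ gives $\rho^{\vee\T}=\tau$, exactness of \eqref{SES} yields $\ker\tau=\text{im}\,\rho=\rho(\Z^r)$, whence
\be
\wt\CV_{\rho^\vee}^B \;\cong\; \bigoplus_{\lambda\in\Z^n,\ \nu\in\Z^r} M_{\lambda\cdot\phi}^\phi\otimes\Fock_{(\lambda-\rho(\nu))\cdot\eta}^\eta .
\ee
For the A side I would use Definition \ref{def:altA}: $\wt\CV_\rho^A=\bigoplus_{\mu\in\Z^r}\V_{\rho(\mu)}$, where the spectral-flow module $\V_{\rho(\mu)}=\Sigma_{\rho(\mu)}(\CV_{\beta\gamma}^{T^*V})$ sits inside the ambient VOA as $\bigoplus_{\lambda\in\Z^n}M_{\lambda\cdot\phi}^\phi\otimes\Fock_{(\lambda+\rho(\mu))\cdot\eta}^\eta$, its generating vertex operator being the lattice operator $\normord{e^{\rho(\mu)\cdot\eta(z)}}$. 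Summing over $\mu\in\Z^r$ gives
\be
\wt\CV_\rho^A \;\cong\; \bigoplus_{\lambda\in\Z^n,\ \mu\in\Z^r} M_{\lambda\cdot\phi}^\phi\otimes\Fock_{(\lambda+\rho(\mu))\cdot\eta}^\eta .
\ee
The relabelling $\nu=-\mu$ (and the fact that $\rho(\Z^r)$ is a subgroup of $\Z^n$) matches the two index sets, so the two objects coincide as subspaces of $\CV_{SF}^{T^*V}\otimes\CV_{\Z^n}^-$.

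The one point that needs genuine care, and which I regard as the heart of the argument, is upgrading this equality of underlying vector spaces to an equality of \emph{vertex algebras}: the abstract simple-current extension structure defining $\wt\CV_\rho^A$ must agree with the vertex operation that $\wt\CV_{\rho^\vee}^B$ inherits as a sub-VOA of the ambient object. This reduces to matching intertwining operators. By Li's $\Delta$-operator and the fusion rules \eqref{spec-fusion}, the extension of $\CV_{\beta\gamma}^{T^*V}$ by the $\V_{\rho(\mu)}$ is governed by the OPE of the generating operators $\normord{e^{\rho(\mu)\cdot\eta}}$, which by \eqref{lattice-OPE} is the lattice OPE
\be
\normord{e^{\rho(\mu)\cdot\eta}}(z)\,\normord{e^{\rho(\mu')\cdot\eta}}(w)\;\sim\;(z-w)^{-\rho(\mu)\cdot\rho(\mu')}\,\normord{e^{\rho(\mu+\mu')\cdot\eta}}(w)+\cdots.
\ee
Integrality of $\rho(\mu)\cdot\rho(\mu')=\mu^\T(\rho^\T\rho)\mu'$ (as $\rho(\mu),\rho(\mu')\in\Z^n$) guarantees locality, so the extension is well defined, and this is precisely the lattice structure that the $(\C^*)^{n-r}$-invariant subalgebra $\wt\CV_{\rho^\vee}^B$ carries on its $\eta$-lattice summands. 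Hence the vertex operations agree summand by summand, establishing $\wt\CV_\rho^A\cong\wt\CV_{\rho^\vee}^B$. I expect no serious obstacle beyond bookkeeping; the mirror input is entirely concentrated in the single identity $\ker\rho^{\vee\T}=\text{im}\,\rho$, which converts the B-twist orbifold projection into the A-twist spectral-flow extension.
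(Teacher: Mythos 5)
Your proposal is correct, and it follows the same overall strategy as the paper: realize both sides inside the common free-field ambient $\CV_{SF}^{T^*V}\otimes\CV_{\Z^n}^-$ and match the direct-sum decompositions. The difference is in the bookkeeping. The paper routes both sides through the orthogonal lattice decomposition $\lambda=\rho(\lambda_0)+\rho^\vee(\lambda^\vee)$ with the $N,N^\vee,H$ machinery of Section \ref{sec:H}: it first derives \eqref{VA-alt-decomp} for $\wt\CV_\rho^A$ (via a shift of summation variable), then computes the orbifold on the B side, where invariance forces $\lambda^\vee=\mu^\vee$, and observes that the resulting expression \eqref{VB-alt-decomp} coincides with \eqref{VA-alt-decomp}. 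You instead compare the raw index sets over $\Z^n\times\Z^r$ directly, isolating the entire mirror input in the single exactness statement $\ker\rho^{\vee\T}=\ker\tau=\operatorname{im}\rho$ from \eqref{SES}; this bypasses the symmetry-fractionalization bookkeeping entirely and is arguably cleaner for the lemma in isolation (the paper wants the $H$-graded forms anyway, since they exhibit $\wt\CV_\rho^A$ as an $H$ simple-current extension of $\CM_{\rho,0}\otimes\CV_{\rho^\vee,0}^-$ and feed into the surrounding results). Your final step --- upgrading the equality of graded vector spaces to an isomorphism of vertex (super)algebras by checking that the abstract simple-current extension of Definition \ref{def:altA} is realized by the lattice OPE \eqref{lattice-OPE} of the operators $\normord{e^{\rho(\mu)\cdot\eta}}$, which is exactly the structure the orbifold of Definition \ref{def:altB} inherits as a subalgebra --- is sound and makes explicit a point the paper leaves implicit by treating the free-field realizations as the working definitions of both objects.
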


\begin{proof}
By definition
\be \CV_{\rho^\vee}^B\;\cong\; \Big[  \bigoplus_{\lambda,\mu\in\Z^n} M_{\lambda\cdot\phi}^\phi\otimes  \Fock_{\mu\cdot \eta}^\eta \Big]^{(\C^*)^{n-r}} \ee
where $(\C^*)^{n-r}$ acts with weights $\rho^{\vee\T}(\lambda-\mu)$ on the $(\lambda,\mu)$ subspace. As usual, we decompose $\lambda=\rho(\lambda_0)+\rho^\vee(\lambda^\vee)$ and $\mu=\rho(\mu_0)+\rho^\vee(\mu^\vee)$ with $\lambda_0,\mu_0\in N$, $\lambda^\vee,\mu^\vee\in N^\vee$, and $[\lambda_0]=[\lambda^\vee]\in H$, $[\mu_0]=[\mu^\vee]\in H$. The $(\C^*)^{n-r}$ weights become $\rho^{\vee\T}\rho^\vee(\lambda^\vee-\mu^\vee)$, and taking invariants sets $\lambda^\vee=\mu^\vee$. We are left with
\begin{align} \CV_{\rho^\vee}^B &\;\cong\; \bigoplus_{\substack{\lambda_0,\mu_0\in N,\;\lambda^\vee\in N^\vee \\ [\lambda_0]=[\lambda^\vee]=[\mu_0]\in H}} M_{(\rho(\lambda_0)+\rho^\vee(\lambda^\vee))\cdot\phi}^\phi\otimes  \Fock_{(\rho(\mu_0)+\rho^\vee(\lambda^\vee))\cdot \eta}^\eta \\
&= \bigoplus_{\substack{\lambda_0,\mu_0\in N,\;\lambda^\vee\in N^\vee \\ [\lambda_0]=[\lambda^\vee]=[\mu_0]\in H}} M_{(\rho(\lambda_0)+\rho^\vee(\lambda^\vee))\cdot\phi}^\phi\otimes \Fock_{\lambda^\vee\cdot\eta^\perp}^{\eta_\perp} \otimes \Fock_{\mu_0\cdot\eta_0}^{\eta_0}\,, \label{VB-alt-decomp}
\end{align}
where in the last line we decomposed the $\eta$ Heisenberg VOA as $ \CH_\eta \,\cong\, \CH_{\eta_0}\otimes \CH_{\eta^\perp}$ with $\eta_0=\rho^\T\eta$ and $\eta_\perp=\rho^{\vee\T}\eta$. The final expression \eqref{VB-alt-decomp} perfectly matches \eqref{VA-alt-decomp}.

\end{proof}

\subsection{Morita-equivalent VOA's}
\label{subsecrelatealt}

To prove that the simplified VOA's of the previous sections are Morita-equivalent to the actual boundary VOA's of 3d $\CN=4$ theories, we now define yet another VOA and relate it to all the ones appearing above.

Consider the 3d mirror of the construction from Section \ref{subsecgaugeextA}. There, we begin with $n$ free hypermultiplets in the A twist, with boundary VOA $\mc{V}_{\beta\gamma}^{T^*V}$; then we extended the VOA by spectral-flow modules corresponding to bulk vortex lines that would be screened up on gauging. Now we'll begin with the 3d mirror, $n$ copies of $\text{SQED}_1$ in the B twist. Its boundary VOA $\big(\CV_{(1)}^B\big)^{\otimes n}$ is the extension of $\widehat{\mathfrak{gl}(1|1)}^{\otimes n}$ by boundary monopole operators. Explicitly, from Section \ref{subsecperturbalg}, $\widehat{\mathfrak{gl}(1|1)}^{\otimes n}$ has generators $N_i(z),\psi^{i,\pm}(z),E^i(z)$, with
\be \begin{array}{c} N_i(z)E^j(z) \sim \frac{\delta_i{}^j}{(z-w)^2}\,,\qquad N_i(z)N_j(w) \sim\frac{\delta_{ij}}{(z-w)^2} \\[.2cm]
 N_i(z)\psi^{j,\pm}(w) \sim \pm \frac{\delta_i{}^j\psi^{j,\pm}}{z-w}\,,\qquad  \psi^{i,+}(z)\psi^{j,-}(w) \sim \frac{\delta^{ij}}{(z-w)^2} + \frac{\delta^{ij} E^j(w)}{z-w}\,. \end{array} \ee
Then $\big(\CV_{(1)}^B\big)^{\otimes n}$ is obtained from $\widehat{\mathfrak{gl}(1|1)}^{\otimes n}$ by extending by modules generated by $\normord{e^{\nu\cdot \int N}}$ for all $\nu\in \Z^n$. We note from Theorem \ref{thm:VOA} that
\be \big(\CV_{(1)}^B\big)^{\otimes n}\;\cong\; \mc{V}_{\beta\gamma}^{T^*V} \otimes \mc{V}_{bc}^V\,, \label{n-hypers-SQED} \ee
with extra $\mc{V}_{bc}$ factors due to Definition \ref{def-VA} of canonical A-twisted VOA's.

In order to obtain a mirror of $\wt\CV_\rho^A$ from Definition \ref{def:altA}, we should \emph{further} extend by spectral flow modules, associated to the mirror of the Kac-Moody currents from the A side. The mirror of the current $\pd\eta^i$ are just $E^i$. This motivates
\begin{Def}
Given a charge matrix $\rho$ for a simple abelian theory, define
\begin{align} \CV_{\rho}^{\star} &\,:=\, \text{$\big(\CV_{(1)}^B\big)^{\otimes n}$ extended by $\normord{e^{\rho(\mu)\cdot\int E}}\;\; \forall\, \mu\in \Z^r$}    \label{def-V*} \\
&\;\cong\, \text{$\widehat{\mathfrak{gl}(1|1)}^{\otimes n}$ extended by $\normord{e^{\nu\cdot\int N}}$ and $\normord{e^{\rho(\mu)\cdot\int E}}$  \;\;$\forall\,\nu\in \Z^n,\;\mu\in \Z^r$} \notag
\end{align}
where in both cases we mean extended by modules corresponding to the vertex operators.
\end{Def}
\noindent In the bulk physics, the extension by $\normord{e^{\rho(\mu)\cdot\int E}}$ can be interpreted as screening (removing) Wilson line operator operators corresponding to part of the gauge symmetry of $(\text{SQED}_1)^{\otimes n}$ that should be un-gauged to get $\CT_{\rho^\vee}$.

The relation between $\CV_\rho^\star$ and $\wt \CV_\rho^A$ follows almost by construction:
\begin{Lem} \label{lem:V*B}
There is an isomorphism of VOA's
\be \CV_\rho^\star\,\cong\, \wt\CV_\rho^A \otimes \mc{V}_{bc}^V \ee
\end{Lem}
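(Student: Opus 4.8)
The plan is to pass through the isomorphism \eqref{n-hypers-SQED}, under which the base VOA $\big(\CV_{(1)}^B\big)^{\otimes n}$ appearing in Definition \ref{def-V*} becomes $\mc{V}_{\beta\gamma}^{T^*V}\otimes\mc{V}_{bc}^V$, and then to identify the extension by the monopole vertex operators $\normord{e^{\rho(\mu)\cdot\int E}}$ with the extension that defines $\wt\CV_\rho^A\otimes\mc{V}_{bc}^V$. I would carry out everything inside the common free-field realization, where $\mc{V}_{\beta\gamma}^{T^*V}\cong\bigoplus_{\lambda\in\Z^n}M_{\lambda\cdot\phi}^\phi\otimes\Fock_{\lambda\cdot\eta}^\eta$ (with $B(\eta^i,\eta^j)=-\delta^{ij}$) and $\mc{V}_{bc}^V\cong\bigoplus_{k\in\Z^n}\Fock_{k\cdot\theta}^\theta$ (with $B(\theta^i,\theta^j)=+\delta^{ij}$).

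First I would locate the current $E$ in this realization. From the description of $\widehat{\mathfrak{gl}(1|1)}$ in Section \ref{sec:gl11}, the $T_{\rm top}$ current of the $i$-th factor is $E^i=\normord{b^ic^i}-\normord{\beta^i\gamma^i}$; using the bosonizations $\normord{b^ic^i}=\partial\theta^i$ and $-\normord{\beta^i\gamma^i}=\partial\eta^i$ recorded in Sections \ref{sec:FF} and \ref{sec:betagamma}, this reads $E^i=\partial(\theta^i+\eta^i)$. Hence the generating field of the extension is $\normord{e^{\rho(\mu)\cdot\int E}}=\normord{e^{\rho(\mu)\cdot(\theta+\eta)}}$, whose self-locality exponent is $B\big(\rho(\mu)(\theta+\eta),\rho(\mu')(\theta+\eta)\big)=\rho(\mu)\cdot\rho(\mu')-\rho(\mu)\cdot\rho(\mu')=0$, confirming that these fields are mutually local and that the direct sum is a genuine VOA.

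The key step — and the one demanding care — is that the $\theta$-component of this generating field is already invertible inside $\mc{V}_{bc}^V$. Since $\rho(\mu)\in\Z^n$, the module $\Fock_{\rho(\mu)\cdot\theta}^\theta$ is a summand of $\mc{V}_{bc}^V$ and $\normord{e^{\rho(\mu)\cdot\theta}}$ is a simple current of the lattice VOA $\mc{V}_{bc}^V$, so multiplication by it is an invertible intertwiner. Therefore the $\mc{V}_{\beta\gamma}^{T^*V}\otimes\mc{V}_{bc}^V$-module generated by $\normord{e^{\rho(\mu)\cdot(\theta+\eta)}}$ equals the one generated by $\normord{e^{\rho(\mu)\cdot\eta}}$, which is $\V_{\rho(\mu)}\otimes\mc{V}_{bc}^V$ — recall from Section \ref{subsecgaugeextA} that $\V_{\rho(\mu)}=\Sigma_{\rho(\mu)}\big(\mc{V}_{\beta\gamma}^{T^*V}\big)$ is precisely the spectral-flow module generated by $\normord{e^{\rho(\mu)\cdot\eta}}$. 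Summing over $\mu\in\Z^r$ then identifies $\CV_\rho^\star$ with $\bigoplus_{\mu\in\Z^r}\V_{\rho(\mu)}\otimes\mc{V}_{bc}^V=\wt\CV_\rho^A\otimes\mc{V}_{bc}^V$, matching Definition \ref{def:altA}.

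To finish I would confirm that this is an isomorphism of vertex algebras, not merely of underlying vector spaces. Both sides are simple-current extensions indexed by the same lattice $\rho(\Z^r)\cong\Z^r$, the generating fields have been matched term by term, and spectral flow respects fusion as in \eqref{spec-fusion}, so the two extensions carry the same (essentially unique) extension structure. I expect the main obstacle to be exactly this $\theta$-absorption bookkeeping: one must check that trading $\normord{e^{\rho(\mu)\cdot(\theta+\eta)}}$ for $\normord{e^{\rho(\mu)\cdot\eta}}$ is compatible with the OPE and cocycle data of the extension. This holds because the discarded factor lies in $\mc{V}_{bc}^V$, whose module category is trivial, so it cannot alter the extension structure in the $\mc{V}_{\beta\gamma}^{T^*V}$ directions.
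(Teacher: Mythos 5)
Your proof is correct and follows essentially the same route as the paper's: both pass through the isomorphism \eqref{n-hypers-SQED}, identify the currents $E^i$ with $\normord{b^ic^i}-\normord{\beta^i\gamma^i}$ (equivalently $\partial(\theta^i+\eta^i)$), and then use the fact that the $bc$-part of the spectral flow is trivial, so the extension by $\normord{e^{\rho(\mu)\cdot\int E}}$ reduces to the extension by the $\V_{\rho(\mu)}$ defining $\wt\CV_\rho^A$. The only difference is presentational: the paper states the last step abstractly (spectral-flow modules of $\mc{V}_{bc}^V$ are isomorphic to vacuum modules, hence extending commutes with tensoring by $\mc{V}_{bc}^V$), whereas you realize the same observation explicitly as $\theta$-absorption in the free-field realization.
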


\begin{proof}
Consider the equivalence $\big(\CV_{(1)}^B\big)^{\otimes n}\;\cong\; \mc{V}_{\beta\gamma}^{T^*V} \otimes \mc{V}_{bc}^V$ of \eqref{n-hypers-SQED}. Theorem \ref{thm:VOA} identifies the $T_{\rm top}=U(1)^n$ Kac-Moody symmetry on the left (with currents $E^i$) with the $T_F=U(1)^n$ Kac-Moody symmetry on the right (with currents $\normord{-\beta^i\gamma^i+b^ic^i}$). Thus, an extension of the LHS by all spectral-flow modules for a subgroup $G\subset U(1)^n$ with charges $\rho$ is isomorphic to an extension of the RHS the corresponding spectral-flow modules. Moreover, on the RHS, extending by spectral-flow modules commutes with tensoring by $\mc{V}_{bc}^V$, because the spectral-flow modules for the factor $\mc{V}_{bc}^V$ alone are trivial (they are isomorphic to vacuum modules). Thus, upon extension of both sides, $\big(\CV_{(1)}^B\big)^{\otimes n}\;\cong\; \mc{V}_{\beta\gamma}^{T^*V} \otimes \mc{V}_{bc}^V$ implies $\CV_\rho^\star\,\cong\, \wt\CV_\rho^A \otimes \mc{V}_{bc}^V.$
\end{proof}

It also follows from Lemma \ref{lem:AB-alt} that $\CV_\rho^\star \,\cong\, \wt\CV_{\rho^\vee}^B\otimes \mc{V}_{bc}^V$.
We'd now like to relate $\CV_\rho^\star$ and $\CV_{\rho^\vee}^B$. To do so, we use free-field realizations of both, following Section \ref{sec:Bff}.

In the case of $\CV_\rho^\star$, we introduce a rank-$3n$ Heisenberg algebra $\CH_{\wt X,\wt Y,\wt Z}$ with basis $\{\wt X_i,\wt Y^i,\wt Z^i\}_{i=1}^n$ and bilinear form $B(\wt Z^i,\wt Z^j)=B(\wt X_i,\wt Y^j)=\delta^{ij}$.
Then
\be\label{gl11ff} \widehat{\mathfrak{gl}(1|1)}^{\otimes n} \,\cong\,  \bigcap_{i=1}^n\text{ker} \, S_0^i\Big| \bigoplus_{\lambda\in \Z^n} \CF_{\lambda\cdot\wt Z}^{\wt X,\wt Y,\wt Z}\,,\qquad  S^i  \;=\; \normord{e^{\wt Z^i-\wt Y^i}}\,, \ee
where $N\mapsto \pd \wt X+\pd\wt Z$, $E\mapsto \pd \wt Y$, $\psi^+\mapsto \normord{e^{\wt Z}}$, and $\psi^-\mapsto \normord{(\pd Y-\pd Z)e^Z}$. Correspondingly,
\be \big(\CV_{(1)}^B\big)^{\otimes n} \,\cong\, \bigcap_{i=1}^n\text{ker} \, S_0^i\Big| \bigoplus_{\lambda,\mu\in \Z^n} \CF_{\lambda\cdot\wt Z+\mu\cdot\wt X}^{\wt X,\wt Y,\wt Z}\,, \ee
\be  \CV_\rho^\star \,\cong\, \bigcap_{i=1}^n\text{ker} \, S_0^i\Big| \bigoplus_{\substack{\lambda,\mu\in \Z^n \\ \nu\in \Z^r}} \CF_{\lambda\cdot\wt Z+\mu\cdot\wt X+\rho(\nu)\cdot\wt Y}^{\wt X,\wt Y,\wt Z}\,. \label{V*-ff} \ee

Now we'll change basis in the Heisenberg algebra, in a way that allows us to factor off a complete rank $2r$ lattice VOA.
We choose (as in Section \ref{sec:global}) a splitting $\sigma:\Z^{n-r}\to\Z^n$ and co-splitting $\sigma^\vee:\Z^r\to \Z^n$ of the maps in the exact sequence associated to our 3d theory, whose maps satisfy
\be \rho^\T\sigma^\vee=\text{id}_{\Z^r}\,,\qquad \rho^{\vee\T}\sigma = \text{id}_{\Z^{n-r}}\,,\qquad \rho^{\vee\T}\rho = \sigma^{\vee\T}\sigma = 0\,, \ee
or simply  $(\rho^\vee,\sigma^\vee)=(\rho,\sigma)^{-1,\T}$\,. Then we relate $\{\wt X_i,\wt Y^i,\wt Z^i\}_{i=1}^n$ to $\{X_a',X_\alpha,Y'{}^a,Y^\alpha,Z^i\}$ with $1\leq a \leq r$, $1\leq \alpha\leq n-r$, $1\leq i\leq n$ using the transformation
\begin{subequations}
\label{V*-redef}
\be  Z=\wt Z-\sigma^\vee\rho^\T \wt Y\,,\qquad \begin{array}{c}   X'=\sigma^{\vee\T}(\wt X+\wt Z)\,, \\[.1cm] X = \rho^{\vee\T} \wt X\,, \end{array} \qquad \begin{array}{c} Y' = \rho^\T \wt Y\,, \\[.1cm] Y = \sigma^\T \wt Y\,,\end{array} \ee
whose inverse is
\be \wt Z = Z+\sigma^\vee Y'\,,\qquad \wt X = \rho X'+  \sigma X  - \rho\sigma^{\vee\T}(Z+\sigma^\vee Y')\,,\qquad \wt Y = \sigma^\vee Y' + \rho^\vee Y\,.\ee
\end{subequations}
We obtain an isomorphic Heisenberg algebra $\CH_{\wt X,\wt Y,\wt Z}\,\cong,\CH_{X,X',Y,Y',Z}$ with bilinear form
\be B(Z^i,Z^j) = \delta^{ij}\,,\qquad B(X_\alpha,Y^\beta) = \delta_\alpha{}^\beta\,,\qquad B(X_a',Y'{}^b) = \delta_a{}^b \ee
(and all other pairings of basis elements vanishing).

Under this transformation, the screening operator becomes $S^i = \normord{e^{Z^i - (\rho^\vee Y)^i}}$. Replacing the $\mu$ summation index in \eqref{V*-ff} with $\sigma^\vee\mu'+\rho^\vee\mu$ where $\mu'\in \Z^r$ and $\mu\in \Z^{n-r}$, we then obtain
\begin{align} \CV_\rho^\star &\,\cong\, \bigcap_{i=1}^n \text{ker}\,S_0^i\Big| \bigoplus_{\substack{\lambda\in \Z^n,\; \mu\in \Z^{n-r} \\ \mu',\nu\in \Z^r }} \CF^{X,X',Y,Y',Z}_{(\lambda-\sigma^\vee(\mu'))\cdot Z + \mu\cdot X+\mu'\cdot X' +(\nu+\sigma^{\vee\T}(\lambda)-\sigma^{\vee\T}\sigma^\vee(\mu'))\cdot Y'}  \\
 &\,=\, \bigcap_{i=1}^n \text{ker}\,S_0^i\Big| \bigoplus_{\substack{\lambda\in \Z^n,\; \mu\in \Z^{n-r} \\ \mu',\nu\in \Z^r }} \CF^{X,X',Y,Y',Z}_{\lambda\cdot Z+\mu\cdot X+\mu'\cdot X'+\nu \cdot Y'}\,,
 \end{align}
where in the second step we have shifted summation variables $\lambda\mapsto \lambda+\sigma^\vee(\mu')$ and (subsequently) $\nu\mapsto \nu-\sigma^{\vee\T}(\lambda)$. The Fock module then factors, and we find
\begin{align} \CV_\rho^\star &\,\cong\, \bigg[\bigcap_{i=1}^n \text{ker}\,S_0^i\Big| \bigoplus_{\lambda\in \Z^n,\;\mu\in \Z^{n-r}} \Fock^{X,Y,Z}_{\lambda\cdot Z+\mu\cdot X}\bigg] \;\otimes\; \bigoplus_{\mu',\nu\in \Z^r} \Fock^{X',Y'}_{\mu'\cdot X'+\nu\cdot Y'}\,,   \label{V*-factor}
\end{align}
where the screening operator $S^i=\normord{e^{Z^i-(\rho^\vee Y)^i}}$ only acts on the first factor.
Comparing with the free-field realization in Corollary \ref{cor:VB-ff}, with the screening operators in \eqref{def-B-screen}, we identify the first factor in \eqref{V*-factor} as $\CV_{\rho^\vee}^B$. The second factor is a complete rank-$2r$ lattice VOA, in an indefinite lattice with bilinear form $\left(\begin{smallmatrix} 0 & 1 \\ 1 & (\sigma^\vee)^\T\sigma^\vee \end{smallmatrix}\right)$. We denote it $\CV_{\Z^{2r}}^{+-}$. Thus we have shown that $\CV_\rho^\star\,\cong\, \CV_{\rho^\vee}^B \otimes \CV_{\Z^{2r}}^{+-}$. Combining this with Theorem \ref{thm:VOA} on 3d mirror symmetry of boundary VOA's, as well as Lemmas \ref{lem:AB-alt} and \ref{lem:V*B} above, we arrive at

\begin{Thm}\label{ThmMirSymbdVOA}
Given any pair of 3d-mirror simple abelian theories, with charge matrices $\rho,\rho^\vee$, there are isomorphisms of VOA's 
\be \CV_\rho^A\otimes \CV_{\Z^{2r}}^{+-} \;\cong\; \CV_{\rho^\vee}^B\otimes \CV_{\Z^{2r}}^{+-}  \;\cong\; \CV_\rho^\star \;\cong\; \wt\CV_\rho^A\otimes \mc{V}_{bc}^V \;\cong\; \wt \CV_{\rho^\vee}^B\otimes \mc{V}_{bc}^V\,. \ee
\end{Thm}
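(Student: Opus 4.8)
The plan is to prove the four isomorphisms by routing all five VOA's through the single intermediate object $\CV_\rho^\star$, exhibiting each of the other four as a twist of it by a Morita-trivial lattice factor. Three of the required equivalences are immediate from results already in hand. The first isomorphism, $\CV_\rho^A\otimes \CV_{\Z^{2r}}^{+-} \cong \CV_{\rho^\vee}^B\otimes \CV_{\Z^{2r}}^{+-}$, follows at once from the 3d-mirror statement $\CV_\rho^A\cong \CV_{\rho^\vee}^B$ of Theorem \ref{thm:VOA} upon tensoring both sides with the fixed lattice VOA $\CV_{\Z^{2r}}^{+-}$. The third isomorphism, $\CV_\rho^\star\cong\wt\CV_\rho^A\otimes \mc{V}_{bc}^V$, is precisely Lemma \ref{lem:V*B}. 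The fourth, $\wt\CV_\rho^A\otimes \mc{V}_{bc}^V\cong\wt\CV_{\rho^\vee}^B\otimes \mc{V}_{bc}^V$, follows from the alternative-VOA mirror equivalence $\wt\CV_\rho^A\cong\wt\CV_{\rho^\vee}^B$ of Lemma \ref{lem:AB-alt} after tensoring with $\mc{V}_{bc}^V$.

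Thus the only genuinely new content is the second isomorphism, $\CV_\rho^\star\cong \CV_{\rho^\vee}^B\otimes \CV_{\Z^{2r}}^{+-}$, which I would establish through free-field realizations. Starting from Definition \eqref{def-V*}, I would realize $\CV_\rho^\star$ inside the lattice extension of the rank-$3n$ Heisenberg algebra $\CH_{\wt X,\wt Y,\wt Z}$, as in \eqref{gl11ff}--\eqref{V*-ff}: it is the common kernel of the $n$ screening charges $S_0^i$ acting on the extension of $\CH_{\wt X,\wt Y,\wt Z}$ by the Fock modules $\CF_{\lambda\cdot\wt Z+\mu\cdot\wt X+\rho(\nu)\cdot\wt Y}$ for $\lambda,\mu\in\Z^n$, $\nu\in\Z^r$. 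The key move is the integral change of basis \eqref{V*-redef}, built from a splitting $\sigma$ and co-splitting $\sigma^\vee$ of the defining exact sequence, which replaces $\{\wt X_i,\wt Y^i,\wt Z^i\}$ by $\{X_a',X_\alpha,Y'{}^a,Y^\alpha,Z^i\}$ carrying the block-diagonal form $B(Z^i,Z^j)=\delta^{ij}$, $B(X_\alpha,Y^\beta)=\delta_\alpha{}^\beta$, $B(X_a',Y'{}^b)=\delta_a{}^b$. In these coordinates the screening charge becomes $S^i=\normord{e^{Z^i-(\rho^\vee Y)^i}}$, exactly the screening charge \eqref{def-B-screen} defining $\CV_{\rho^\vee}^B$; and, after reindexing the $\mu$-summation as $\sigma^\vee\mu'+\rho^\vee\mu$ and shifting $\lambda$ and $\nu$, the Fock-module sum factorizes as in \eqref{V*-factor} into the $S^i$-kernel computing $\CV_{\rho^\vee}^B$ (Corollary \ref{cor:VB-ff}) tensored with the complete rank-$2r$ lattice VOA $\CV_{\Z^{2r}}^{+-}$ of bilinear form $\left(\begin{smallmatrix} 0 & 1 \\ 1 & (\sigma^\vee)^\T\sigma^\vee \end{smallmatrix}\right)$, on which the screening charges act trivially.

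The main obstacle is the bookkeeping in this free-field computation. One must check that \eqref{V*-redef} is invertible over $\Z$, so that the integral lattices being extended are genuinely matched (and not merely the underlying Heisenberg algebras); that the splitting of the $X,Y$ variables into the $(X,Y)$ and $(X',Y')$ blocks is orthogonal and that each $S^i$ involves only the first block, so that taking the common kernel commutes with the tensor factorization; and that the reindexing $\mu\mapsto\sigma^\vee\mu'+\rho^\vee\mu$ together with the shifts $\lambda\mapsto\lambda+\sigma^\vee(\mu')$ and $\nu\mapsto\nu-\sigma^{\vee\T}(\lambda)$ really decouples the summation into independent lattices of the correct ranks. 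Once this factorization is in place, the theorem follows by concatenating the five equivalences; the final observation that both $\CV_{\Z^{2r}}^{+-}$ and $\mc{V}_{bc}^V$ are complete (self-dual) lattice VOA's with trivial module categories is what upgrades the chain of isomorphisms to the claimed Morita equivalence of the four boundary VOA's.
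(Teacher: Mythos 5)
Your proposal is correct and follows essentially the same route as the paper: the paper likewise obtains the chain by proving $\CV_\rho^\star\cong \CV_{\rho^\vee}^B\otimes \CV_{\Z^{2r}}^{+-}$ via the free-field realization \eqref{V*-ff}, the change of basis \eqref{V*-redef}, and the factorization \eqref{V*-factor}, and then concatenates this with Theorem \ref{thm:VOA}, Lemma \ref{lem:AB-alt}, and Lemma \ref{lem:V*B} exactly as you do. Your added attention to the $\Z$-invertibility of \eqref{V*-redef} and the orthogonality of the screening charges to the $(X',Y')$ block is sound bookkeeping that the paper performs implicitly in the same computation.
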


\begin{Cor}
The VOA's $\CV_\rho^A,\CV_{\rho^\vee}^B,\wt\CV_\rho^A,\wt\CV_{\rho^\vee}^B$ are all Morita-equivalent, meaning that their abelian braided tensor categories of modules are equivalent.
\end{Cor}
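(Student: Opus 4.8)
The plan is to deduce the Corollary from Theorem \ref{ThmMirSymbdVOA} together with a single categorical principle: tensoring a VOA with a \emph{holomorphic} VOA (one whose only simple module is itself, so that its module category is equivalent to $\mathrm{Vec}$) induces a braided tensor equivalence of module categories. First I would confirm that the two auxiliary factors appearing in Theorem \ref{ThmMirSymbdVOA} are holomorphic. The factor $\mc{V}_{bc}^V=\mc{V}_{bc}^{\otimes n}$ is holomorphic by the discussion in Section \ref{sec:FF}. For the lattice VOA $\CV_{\Z^{2r}}^{+-}$, built on $\Z^{2r}$ with Gram matrix $\left(\begin{smallmatrix} 0 & I \\ I & (\sigma^\vee)^\T\sigma^\vee \end{smallmatrix}\right)$, I would compute (by swapping the two block-rows, which contributes a sign $(-1)^r$, and then triangularizing) that its determinant equals $(-1)^r$. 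The lattice is therefore unimodular, hence self-dual, and the associated lattice VOA is holomorphic with trivial module category.

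Next I would invoke the factorization of module categories under tensor products: for a holomorphic VOA $\mc{H}$ and a VOA $\mc{W}$ of the type at hand, one has a braided tensor equivalence $(\mc{W}\otimes\mc{H})\text{-mod}\simeq \mc{W}\text{-mod}\boxtimes\mc{H}\text{-mod}\simeq \mc{W}\text{-mod}$, the last step because $\mc{H}\text{-mod}\simeq\mathrm{Vec}$ collapses the Deligne product. Concretely, every $(\mc{W}\otimes\mc{H})$-module is of the form $M\otimes\mc{H}$ for a unique $\mc{W}$-module $M$, and the tensor and braiding structures descend from the $\mc{W}$ factor. Applying this to each isomorphism in the chain of Theorem \ref{ThmMirSymbdVOA} then yields
\begin{equation*}
\CV_\rho^A\text{-mod}\,\simeq\,\big(\CV_\rho^A\otimes\CV_{\Z^{2r}}^{+-}\big)\text{-mod}\,\simeq\,\big(\CV_{\rho^\vee}^B\otimes\CV_{\Z^{2r}}^{+-}\big)\text{-mod}\,\simeq\,\CV_{\rho^\vee}^B\text{-mod}\,,
\end{equation*}
while the isomorphisms $\CV_\rho^\star\cong\wt\CV_\rho^A\otimes\mc{V}_{bc}^V\cong\wt\CV_{\rho^\vee}^B\otimes\mc{V}_{bc}^V$ identify $\wt\CV_\rho^A\text{-mod}$ and $\wt\CV_{\rho^\vee}^B\text{-mod}$ with the same category after cancelling the holomorphic factor $\mc{V}_{bc}^V$. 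All four module categories are thereby equivalent as braided tensor categories.

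The main obstacle is making the second step rigorous in the present \emph{logarithmic} (non-semisimple, non-finite) setting. One must specify the precise category of modules being used (grading-restricted generalized modules) and verify that the vertex tensor category hypotheses of Huang-Lepowsky-Zhang apply to each VOA and to its tensor product with the holomorphic factor; this is what guarantees that the factorization $\mathrm{Rep}(\mc{W}\otimes\mc{H})\simeq\mathrm{Rep}(\mc{W})\boxtimes\mathrm{Rep}(\mc{H})$ holds with the correct braiding, rather than merely as an equivalence of abelian categories. Since all four VOAs are extensions of products of singlet and Heisenberg VOAs, whose vertex tensor structure is under control, these hypotheses should be verifiable using the extension and lifting techniques of \cite{creutzig2017tensor,creutzig2020direct}.
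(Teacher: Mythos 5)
Your proposal is correct and follows essentially the same route as the paper: deduce the Corollary from Theorem \ref{ThmMirSymbdVOA} by showing that the auxiliary factors $\mc{V}_{bc}^V$ and $\CV_{\Z^{2r}}^{+-}$ are self-dual lattice VOA's with trivial (Morita-trivial) module categories, so that tensoring with them does not change the braided tensor category of modules. Your explicit determinant computation for $\CV_{\Z^{2r}}^{+-}$ and your caveat about verifying the factorization in the logarithmic setting simply spell out what the paper compresses into "easy to check" and a citation to \cite{DLM}.
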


\begin{proof}
This follows from the Theorem once we establish that $\mc{V}_{bc}^V$ and $\CV_{\Z^{2r}}^{+-}$ have trivial module categories. This in turn follows from the fact that that both $\mc{V}_{bc}^V$ and $\CV_{\Z^{2r}}^{+-}$ are lattice VOA's corresponding to \emph{self-dual} lattices, which is easy to check. Self-dual lattice VOA's are Morita trivial \cite{DLM}.
\end{proof}

\section{Interlude: topological one-form symmetry}
\label{sec:highersym}

We'd now like to make some general remarks on the structure of symmetries in the bulk topological A and B twists. In particular, we'd like to explain the appearance of higher one-form symmetries \cite{GKSW} that act on categories of line operators, controlling operations of gauging and ungauging. These symmetries manifest as spectral flow in boundary VOA's, which we've already seen in Section \ref{subsecperturbalg} and Section \ref{secMorita}.

The one-form symmetries discussed here are exclusively features of topological twists. Indeed, we remarked back in Section \ref{sec:global} that a simple abelian theory $\CT_\rho$ cannot (essentially by definition) have a physical one-form symmetry. What happens in the topological twists is that part of the zero-form flavor symmetry ($T_F$ in the A twist and $T_{\rm top}$ in the B twist) becomes cohomologically trivial, and is promoted to an infinite one-form symmetry. We explain this, in general terms, in Section \ref{sec:highersym}. The argument is a particular manifestation of symmetries acting topologically --- via their homotopy type --- in TQFT; see \cite{Teleman-ICM} for a discussion in a mathematical-physics context.

The remaining flavor symmetry ($T_{\rm top}$ in the A twist and $T_F$ in the B twist) is also interesting: it gets complexified, and allows the bulk theory to couple to flat complex connections. Preliminary investigations of this structure appeared in \cite{CDGG}.

Altogether, the zero-form flavor symmetry of $\CT_\rho$ induces the following symmetry in the topological twists:
\be \begin{array}{c|c|c}
 & \CT_\rho^A & \CT_\rho^B \\\hline
T_F & \pi_1(T_F)\simeq \Z^r\;\text{one-form} & (\C^*)^r\;\text{zero-form} \\
T_{\rm top} &(\C^*)^{n-r}\;\text{zero-form} & \pi_1(T_{\rm top})\simeq \Z^{n-r}\;\text{one-form}  
\end{array}
\ee
This may usefully be compared with symmetries in the boundary VOA's from Sec.~\ref{sec:VA-sym},~\ref{sec:sym-B}:
\be \begin{array}{c|c|c}
 & \CV_\rho^A & \CV_\rho^B \\\hline
T_F &  \text{$\mathfrak{gl}(1)^r$ Kac-Moody $\leadsto$ $\Z^r$ spectral flow} & \text{finite $(\C^*)^r$} \\
T_{\rm top} & \text{finite $(\C^*)^{n-r}$} &  \text{$\mathfrak{gl}(1)^{n-r}$ Kac-Moody $\leadsto$ $\Z^{n-r}$ spectral flow} 
\end{array}
\ee

In the remainder of this section, we will develop some general physical arguments, without any reference to particular theories or twists. We'll explain the origin of topological symmetry enhancements, then focus on the case of $\Z$ one-form symmetries that is ultimately most relevant for us. We'll discuss their implications for categories of line operators, and their relation to boundary spectral flow.

\subsection{The origin of topological actions}
\label{sec:oneform}

Consider any 3d TQFT $\CT^Q$ of cohomological type, obtained as the twist of an underlying non-topological theory $\CT$. Suppose further that $\CT$ has a non-anomalous global symmetry $G$, where $G$ is a \emph{continuous}, connected group with Lie algebra $\mathfrak g$; and suppose that the current $J$ for this symmetry is $Q$-exact, satisfying
\be J = Q(S)\,. \label{J-exact} \ee
(Recall that the current is a spacetime one-form values in $\mathfrak g^*$.)

Given an element $g=e^\alpha\in G$ and a codimension-one submanifold $\Sigma$ in spacetime, one would construct a  defect implementing a symmetry transformation by $g$ across $\Sigma$ by integrating the current:
\be  \CO[\Sigma,g] := e^{\int_\Sigma \langle \alpha,*J\rangle}\,. \ee
(The exponential is to be interpreted with a point-splitting regularization, determined by a co-orientation of $\Sigma$.)
This defect is topological even in the physical theory $\CT$ due to current conservation $d*J=0$.

In the topologically twisted theory $\CT^Q$, we can go further, and consider symmetry transformations with \emph{non-constant} parameters. For any continuous map $\hat g:\Sigma\to G$, represented by a (possibly discontinuous) map $\hat \alpha:\Sigma\to\mathfrak g$, consider the operator
\be \CO[\Sigma,\hat g] := e^{\int_\Sigma \langle \hat\alpha,*J\rangle}\,. \ee
This is typically no longer a topological defect in the physical theory. However, it is $Q$-closed, and so still topological in $\CT^Q$. Moreover, any small, continuous deformations of the map $\hat g$ are $Q$-exact: replacing $\hat g$ with $\hat g e^{\epsilon \hat \beta}$, where $\beta:\Sigma\to \mathfrak g$ is continuous and $\epsilon$ is small, we have
\be \CO[\Sigma,\hat g e^{\hat \beta}] \approx \CO[\Sigma,\hat g ]\Big(1+ \epsilon \int_\Sigma \langle\hat\beta,*J\rangle\Big)
 = \CO[\Sigma,\hat g] + Q\Big(  \CO[\Sigma,\hat g] \int_\Sigma \langle\hat\beta,*S\rangle \Big)\,. \ee
Thus, in $\CT^Q$, we expect $\CO[\Sigma,\hat g]$ to depend only on the homotopy class of the map $\hat g:\Sigma\to G$.

A boring consequence is that any constant symmetry transformation should act trivially on the twisted theory,
\be \CO[\Sigma,g] = 1 + Q(...) \qquad \text{for $g$ constant.} \ee
In particular, there can be no charged local operators in $\CT^Q$; all $Q$-closed charged local operators are automatically $Q$-exact. A more interesting consequence is that there may be new symmetry defects corresponding to homotopically nontrivial maps $\hat g:\Sigma\to G$, \emph{a.k.a.} ``large'' or ``winding'' global symmetry transformations. The product structure induced from collision of neighboring surfaces $\Sigma,\Sigma'$ (where $\Sigma'$ is a slight deformation of $\Sigma$, defined by a choice of co-orientation on $\Sigma$) is that of pointwise multiplication in the space of maps $\text{Maps}(\Sigma,G)$,
\be \CO[\Sigma,\hat g]\CO[\Sigma',\hat g'] \simeq \CO[\Sigma,\hat g\hat g']\,,\qquad \text{where} (\hat g\hat g')(x) = \hat g(x)\cdot \hat g'(x) \ee

\subsection{One-form symmetry and the category of line operators}
\label{sec:sym-lines}

Now let's specialize the preceding construction to the case that $\Sigma=\S^1\times \R$ is a cylinder and $G=U(1)$. We expect $\CT^Q$ to have symmetry defects labelled by
\be \label{one} \text{homotopy classes of maps:}\; S^1\times \R\to U(1)\quad\simeq\;\pi_1(U(1))=\Z\,. \ee
The group structure, coming from pointwise multiplication on maps from $S^1$, coincides with the concatenation product on $\pi_1$.

The non-trivial winding transformations \eqref{one} have direct consequences for the category of line operators $\widehat \CC$ of the twisted theory $\CT^Q$.

First, there must be a $\Z$ action on the category, a set of auto-equivalences
\be  \CF_m:\wh\CC \to \wh\CC \ee
satisfying $\CF_m\circ\CF_{m'} \simeq \CF_{m+m'}$. The action of the functor $\CF_m$ on an object $\ell$ is defined by wrapping $\ell$ with a surface $\Sigma\times \R$ with a flavor transformation \eqref{one} of winding number $m$.

Second, this action must be internal, in the sense that the category actually contains a collection $\{\LL_m\}_{m\in \Z}$ of line operators with tensor algebra
\be \label{L-tensor} \LL_m\otimes \LL_{m'}\simeq \LL_{m+m'}\,,\qquad  \LL_0 \simeq \id\,, \ee
and
\be \CF_m(\ell) \simeq \LL_m\otimes \ell\qquad \text{for any $\ell\in \wh\CC$}\,. \label{internal} \ee
Each object $\LL_m$ is simply defined by wrapping a flavor transformation \eqref{one} of winding number $m$ around an \emph{empty} cylinder, \emph{i.e.} around the identity line. In other words, $\LL_m:=\CF_m(\id)$. The property \eqref{internal} follows by noticing that a winding transformation can be deformed to be the identity on half of $S^1$, and then pulled off of a given line operator, as shown in Figure~\ref{fig:homotope}.

\begin{figure}[htb]
\centering
\includegraphics[width=5.8in]{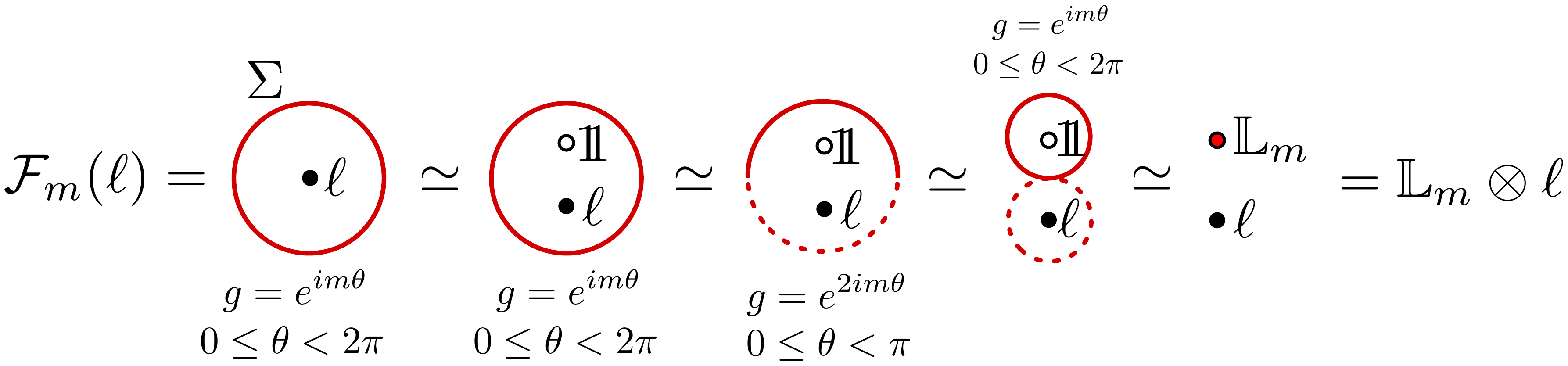}
\caption{Pictorial proof that $\CF_m(\ell)\simeq \LL_m\otimes \ell$, showing slices in the plane transverse to line operators and the symmetry surface $\Sigma\simeq S^1\times \R$.}
\label{fig:homotope}
\end{figure}

In addition, the $\Z$ one-form symmetry is non-anomalous --- this follows from  the original $G=U(1)$ symmetry being non-anomalous. Therefore, the symmetry defects $\LL_m$ can cross each other freely. Formally, this implies that that the monodromy (or double-braiding) of $\LL_m$'s is trivial:
\be \label{L-mon} \mu(\L_m,\L_{m'}) = R_{\L_{m'},\L_{m}} \circ R_{\L_{m},\L_{m'}} = \text{id}\,.\ee

Finally, we expect the category of line operators to decompose according to characters, or ``charges'' $\text{char}(\Z)\simeq\C/\Z$ of the one-form symmetry. To put this more concretely, let us assume that the $\LL_m$ braid semisimply with the remainder of the category%
\footnote{Since in general $\wh\CC$ is not a semisimple category, it is not obvious that this would be the case.}, %
implying that the monodromy of $\LL_m$ and any indecomposable object $\ell$ is just a constant multiple of the identity endomorphism
\be \mu(\L_m,\ell) = \xi\,\text{id}\,,\qquad \xi\in \C^*\,. \ee
Then the category has a block decomposition
\be \label{C-block1} \wh \CC = \bigoplus_{\upsilon\in \C/\Z} \wh\CC_{\upsilon}\,,\ee
where each ``block'' $\wh\CC_{\upsilon}$ is a subcategory such that
\begin{itemize}

\item Every object of $\wh\CC$ decomposes as a direct sum of objects that belong to individual blocks.

\item Within each block, the monodromy is
\be \mu(\LL_m,\ell) = e^{2\pi im\upsilon}\,\text{id}\,,\qquad \ell\in \wh\CC_\upsilon\,, \ee
\item There are no morphisms among objects in different blocks. (This amounts to charge conservation for the $\Z$ symmetry: lines labelled by different $\upsilon\in \C/Z$ cannot have junctions with each other.)

\item The tensor product among blocks respects the group structure of their charges:
\be \otimes: \wh\CC_{\upsilon} \boxtimes \wh\CC_{\upsilon'} \to \wh\CC_{\upsilon+\upsilon'}\,. \ee
This follows from wrapping one-form generators around a product of two lines:
\be \raisebox{-.7in}{\includegraphics[width=5in]{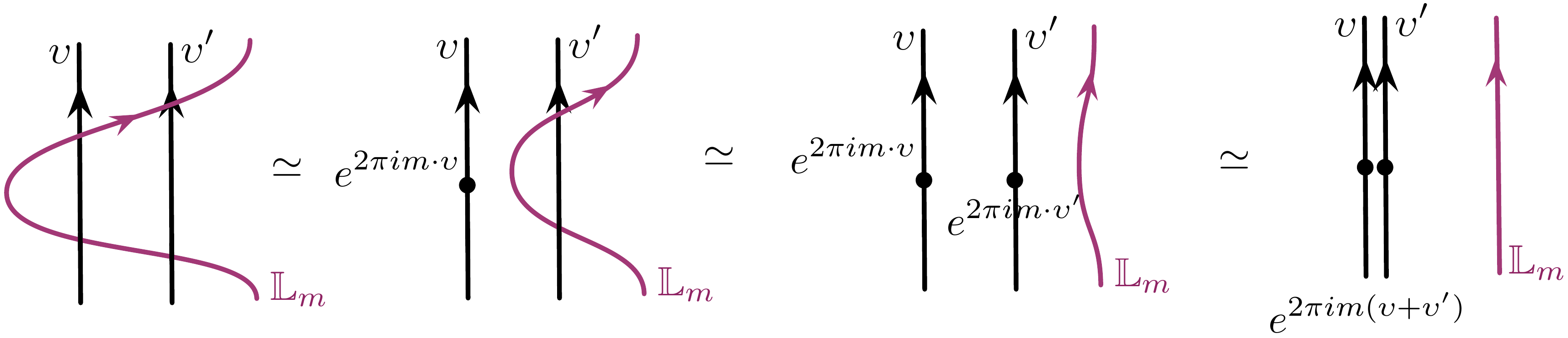}} \ee

\end{itemize}

Putting everything together, we arrive at:
\begin{PThm} \label{thm:sym} If a cohomological 3d TQFT $\CT^Q$ comes from a QFT $\CT$ with non-anomalous $U(1)$ zero-form symmetry whose current is $Q$-exact, then $\CT^Q$ has a one-form $\Z$ symmetry. In particular, the braided tensor category $\wh\CC$ of line operators has a collection of objects $\{\LL_m\}_{m\in \Z}$ that satisfy
\be \LL_0 \simeq \id\,,\qquad \LL_m\otimes \LL_{m'}\simeq \LL_{m+m'}\,,\qquad \mu(\LL_m,\LL_{m'})=\text{id}\,. \ee
If braiding with the $\LL_m$ is semisimple, then $\wh\CC$ decomposes into blocks labelled by the character group $\C/\Z$, respecting its group structure.
\end{PThm}

\subsection{Gauging}
\label{sec:Zgauge}

Next, suppose we would like to gauge a $G=U(1)$ symmetry of the physical theory $\CT$, whose current is exact in a topological twist, as in the preceding discussion. In the topologically twisted theory $\CT^Q$, gauging $G$ will induce a gauging of \emph{all} higher symmetries associated to the homotopy type of $G$. In particular, it will induce a gauging of the $\pi_1(G)=\Z$ one-form symmetry. It is reasonable to expect that this will be the \emph{only} consequence of gauging in $\CT^Q$, because the homotopy type of $G=U(1)$ is completely captured by its fundamental group.

The gauging of a one-form symmetry amounts physically to making the one-form generators, summing over all possible insertions/configurations of them in the path integral \cite{GKSW}. For the braided tensor category $ \wh\CC$ of line operators in $\CT^Q$, gauging the $\Z$ symmetry should be implemented by
\begin{itemize}
    \item Restricting to objects in the category that have trivial $\Z$ charge. In other words, restricting to the block $\wh\CC_{\upsilon=0}$ in \eqref{C-block1}.
    \item Imposing an equivalence relation 
    \be  \LL_m\otimes \ell \simeq \ell \qquad \forall\;m\in \Z \ee
    among all the remaining objects in the category. Alternatively, one may tensor all objects with the formal sum $\sum_{m\in \Z}\L_k$.
\end{itemize}
The gauged theory $\CT^Q/\Z$ has a new $\C^*$ zero-form symmetry, \cf\ \cite{GKSW,HLS-gauging} for in the case of finite-group gauging.
Roughly speaking, local operators of charge $q$ for this $\C^*$ come from $(\L_m,\L_{m'})$ junctions in $\CT_\rho^A$ with $m'-m=q$.

We will implement a rigorous version of this $\Z$ one-form symmetry gauging in Section \ref{sec:VOAlines}, using the lifting-functor technology of \cite{creutzig2017tensor, creutzig2020direct}.

\subsection{Relation to spectral flow}
\label{sec:sym-spectral}

We'll complete this circle of ideas by giving an elementary explanation of the relation between topological one-form symmetry and boundary spectral flow.

Suppose our physical theory $\CT$ with non-anomalous $U(1)$ global symmetry is placed on a half-space $\C_{z,\bar z}\times \R_{t\geq 0}$, with a boundary condition at $t=0$ that preserves the $U(1)$ symmetry. This means that in addition to the bulk current there is a boundary current $J^\pd$, satisfying bulk-boundary conservation equations
\be \label{bb-cons} d*J = 0\,,\qquad d_\pd *_\pd J^\pd = J_t\,, \ee
where $d_\pd$ and $*_\pd$ denote the two-dimensional exterior derivative and Hodge-star operator, and $J_t$ is the normal component of the bulk current.

Symmetry transformations with constant parameter $g=e^\alpha$ are implemented by choosing a two-dimensional submanifold $\Sigma$ of $\C\times \R_{\geq 0}$, which may end on the boundary, and inserting a defect
\be \CO[\Sigma,g]= \exp\Big(\alpha\int_\Sigma  *J + \alpha\oint_{\pd \Sigma} *J^\pd\Big)\,. \ee
The bulk-boundary conservation equation \eqref{bb-cons} ensures that $\CO[\Sigma,g]$ is invariant under continuous variations of the surface $\Sigma$ \emph{and} of its boundary.

Now suppose that the bulk-boundary system has a BRST symmetry $Q$, whose cohomology is topological in the bulk but \emph{holomorphic} on the boundary, so that the boundary supports a VOA $\CV$.
Moreover, suppose as in the preceding discussion that the bulk $U(1)$ symmetry is infinitesimally trivialized in the topological twist, $J=Q(S)$. Then the second equation in \eqref{bb-cons} becomes $d_\pd *_\pd J^\pd=Q(...)$. Thus the boundary current is conserved on its own in the topological twist. Holomorphicity of the boundary theory then implies holomorphic current conservation
\be \pd_{\bar z}J_z^\pd = 0\,,\qquad \text{on the boundary of in $\CT^Q$}\,, \ee
so that the boundary VOA must contain a $\mathfrak{gl}(1)$ Kac-Moody algebra generated by $J_z^\pd$\,.

\begin{Rmk}
    The well-known enhancement of finite global symmetry to Kac-Moody symmetry in 2d holomorphic QFT is a very close analogue of our arguments in Section \ref{sec:oneform} for one-form symmetry enhancement in a topological 3d bulk. Both of these deal with a physical symmetry that has been partially trivialized --- either made holomorphic or topological.
\end{Rmk}

In the topologically twisted bulk-boundary system $\CT^Q$, we can now define non-constant symmetry defects associated to any surface $\Sigma$, even ending on the boundary. Choose a smooth map $\hat g=e^{\hat\alpha}:\Sigma\to U(1)$, and consider 
\be \CO[\Sigma,\hat g] = \exp\Big(\int_\Sigma \hat \alpha * J + \frac{1}{2\pi}\oint_{\pd\Sigma} \hat \alpha J_z\,dz\Big)\,, \ee
where in the second term $\hat \alpha$ has been analytically continued to a holomorphic function in a neighborhood of $\pd\Sigma\subset \C_{z,\bar z}\times\{t=0\}$. This defect depends topologically on both the surface $\Sigma$ (including its boundary) and the map $\hat g$; it will be interesting when the map has nontrivial homotopy type.

In Section \ref{sec:sym-lines}, we defined generators $\CL_m$ of the bulk $\pi_1(U(1))=\Z$ one-form symmetry, by taking $\Sigma$ to be a cylinder $\S^1\times \R$ with a winding symmetry transformation $\hat g=e^{im\theta}$. The line operator $\L_m$ is obtained by shrinking the cylinder to a line. Now consider this operation in the presence of a boundary, as on the top of Figure \ref{fig:cylinder}. Taking $\Sigma=S^1\times \R_{t\geq 0}$, we have a defect
\begin{align} \CO[\Sigma,e^{im\theta}]  &= \exp\Big(im\int_{S^1\times \R_{t\geq 0}} \theta *J +\frac{im}{2\pi} \oint_{S^1\times {t=0}} \log z\, J_z\,dz\Big) \notag \\
&= \exp\Big(im\int_{S^1\times \R_{t\geq 0}} \theta *J\Big) \cdot \exp\Big( -\frac{m}{2\pi i} \oint_{S^1\times \{t=0\}} \log z\, J_z\,dz\Big) \notag \\
&\simeq \LL_m\big|_{t\geq 0} \cdot \exp\Big( \frac{m}{2\pi i}\oint_{S^1\times\{t=0\}}\frac{1}{z} v(z)\,dz\Big)\notag  \\
&= \LL_m\big|_{t\geq 0} \cdot e^{m v_0}|0\rangle\,.  \label{L-spec} 
\end{align}
Here we've used the formal anti-derivative $J_z=\pd v$ of an abelian current, \emph{a.k.a.} Heisenberg field, as in Section \ref{sec:Heis}. 

\begin{figure}[htb]
\centering
\bigskip
\includegraphics[width=5in]{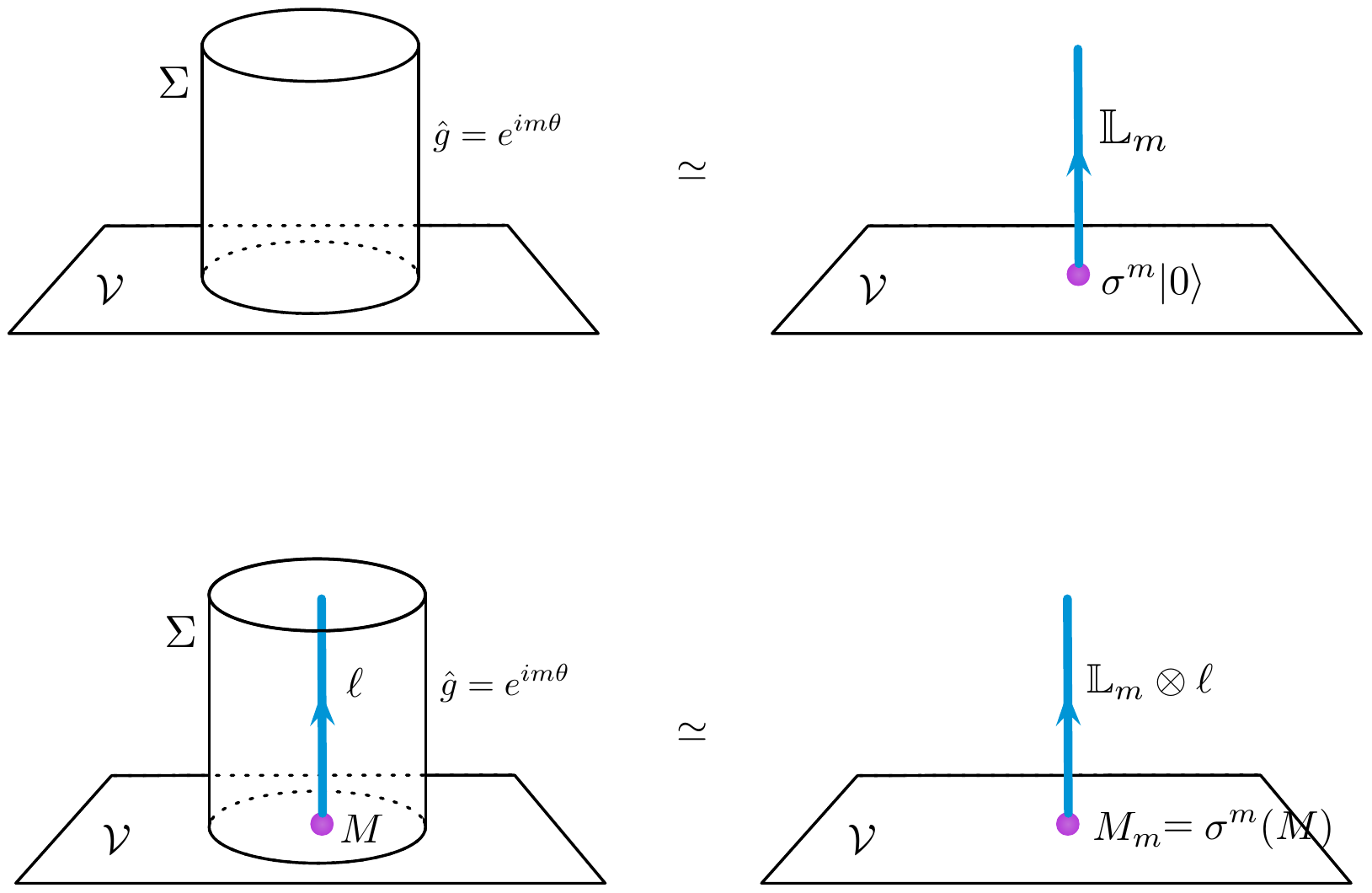}
\caption{Shrinking cylindrical winding defects in a twisted theory $\CT^Q$ to generate spectral-flow automorphisms of a boundary VOA.}
\label{fig:cylinder}
\end{figure}

The operator $e^{mv_0}$ in \eqref{L-spec} is the spectral-flow automorphism `$\sigma^m$' of the boundary VOA $\CV$ associated to its $\mathfrak{gl}(1)$ Kac-Moody symmetry. It acts on the current $J_{z,0}$ precisely by shifting its zero-mode
\be  e^{m v_0} J_{z,n} = J_{z,n}+ \delta_{n,0}\,m\,, \ee
exactly as in \eqref{specflow-zero} of Section \ref{sec:spectralflow}. In \eqref{L-spec}, it is acting on the vacuum state, since we did not put any insertions inside the cylinder $\Sigma$. If we insert other states in $\CV$, we will get the spectral flow of the entire vacuum module. More generally, inserting states in other VOA modules at the endpoints of bulk lines in the cylinder, as on the bottom of Figure \ref{fig:cylinder}, leads simultaneously to the action of the $\Z$ one-form symmetry in the bulk, and spectral flow on the boundary. In summary:

\begin{PProp}
Given a 3d bulk theory with $U(1)$ symmetry that is enhanced to a $\pi_1(U(1))=\Z$ one-form symmetry in a topological twist, and a boundary condition that becomes holomorphic in the twist, the endpoints of one-form symmetry defects $\LL_m$ are spectral flows $\sigma^m\CV$ of the vacuum module. Acting with the one-form symmetry in the bulk (by tensor product with $\LL_m$ is accompanied by a simultaneous spectral-flow action on the boundary.
\end{PProp}  

In light of this correspondence, many of the properties of bulk one-form symmetry and the structure of bulk line operators from Section \ref{sec:sym-lines} could be rephrased as familiar properties of VOA's with spectral-flow automorphisms.

\section{The category of line operators: bulk considerations}
\label{sec:bulklines}

We now turn our attention fully to line operators in the topological twists of simple abelian 3d $\CN=4$ theories.
Line operators in the topological twists of a 3d $\CN=4$ theory form a DG category, which is expected to have a braided tensor structure \cite{RW,KRS}. There turn out to be many subtle choices in determining how large the categories of line operators are.
In this section, we will describe ``minimal'' versions $\wh\CC_\rho^A$,  $\wh\CC_\rho^B$ of the categories in the A and B twists of simple abelian theories $\CT_\rho$. The discussion will be entirely from the perspective of bulk physics. Its main purpose is to provide a physical foundation for the rigorous but more abstract VOA constructions of Section \ref{sec:VOAlines}. 

Our categories $\wh\CC_\rho^A$,  $\wh\CC_\rho^B$ will be defined as the categories generated by a certain minimal collection of BPS line operators compatible with either the A or B twists, which we would always expect to find in a physical theory. Our notion of ``minimal'' is driven by the structure imposed by topological one-form symmetry, as described in Section \ref{sec:oneform}. The categories must contain at least the one-form symmetry generators themselves: $\pi_1(T_F)=\Z^{n-r}$ in the A twist and $\pi_1(T_{\rm top}) = \Z^r$ in the B twist. We will further identify ``charged objects'' within every block of the expected decompositions by characters of the one-form symmetry,
\be \label{AB-decomp} \wh\CC_\rho^A = \bigoplus_{\upsilon\in (\C/\Z)^{n-r}}\wh\CC_{\rho,\upsilon}^A\,,\qquad \wh\CC_\rho^B = \bigoplus_{\upsilon\in (\C/\Z)^{r}}\wh\CC_{\rho,\upsilon}^B\,.\ee
closed under the tensor-action of one-form symmetry.

In the B twist, the generators of the one-form symmetry are Wilson lines $\{\W_m\}_{m\in \Z^r}$, and the charged objects are mixed monodromy-Wilson lines
\be \W_m^{(\upsilon)}\,,\qquad m\in \Z^r\,,\; \upsilon\in (\C/\Z)^r\,, \ee
labelled by a gauge charge $m\in\text{char}(G)=\Z^r$ and a monodromy defect around which the complexified gauge connection has holonomy $e^{2\pi i\upsilon}\in G_\C$. Fully defining the monodromy-Wilson lines turns out to be surprisingly subtle, as they need to be supplemented by additional constraints on hypermultiplets, which we ultimately implement by forming certain bound states of Wilson lines.

In the A twist, the generators of the one-form symmetry are flavor vortices $\{\V_m\}_{m\in \Z^{n-r}}$, and the charged objects are generalized vortex defects, which specify a meromorphic singularity in the hypermultiplet scalars roughly of the form  
\be \V_m^{(\upsilon)}:\quad \begin{cases} X(z) \in z^{\sigma(\m)}\C[\![z]\!] \\ Y(z) \in z^{-\sigma(\m)-1} \C[\![z]\!]\end{cases},\quad \mu_F(X,Y) \in \frac{\upsilon}{z}+\C[\![z]\!]\,,\qquad m\in \Z^{n-r},\,\upsilon\in (\C/\Z)^{n-r}\,. \ee
Here $\sigma:\Z^{n-r}\to\Z^n$ is a splitting of our usual exact sequence (defining flavor charges), and $\mu_F$ are moment maps for the flavor symmetry. We justify physically that the $\V_m^{(\upsilon)}$ are 3d mirrors of the $\W_m^{(\upsilon)}$. 

The monodromy-Wilson lines $\W_\m^{(\upsilon)}$ are three-dimensional counterparts of the 't Hooft-Wilson line of 4d gauge theories \cite{Kapustin-WtH}.
This description of line operators is also analogous to the construction of surface operators in 4d $\CN=4$ Yang-Mills in work of Gukov and Witten \cite{GukovWitten}. It generalizes the constructions of 3d $\CN=4$ line operators of Assel-Gomis \cite{AsselGomis} (which did not consider vortex lines with more than first-order singularities) and the subsequent classification in Dimofte-Garner-Geracie-Hilburn \cite{lineops} (which briefly mentioned but did not study B-type monodromy defects, or generalized vortex defects). Monodromy-Wilson lines and generalized vortex defects have appeared in the work of Hilburn and Raskin \cite{hilburn2021tate}, in their geometric models of sheaves on derived loop spaces.

We will describe $\W_\m^{(\upsilon)}$ and $\V_\m^{(\upsilon)}$ in greater detail, and construct their \emph{morphism} spaces, \emph{a.k.a.} the spaces of local operators at their junctions. This information is, in principle, enough to determine the DG categories $\wh\CC_\rho^B$, $\wh\CC_\rho^A$ that they generate. The categories themselves contain many more objects, which are complexes formed from the $\W_\m^{(\upsilon)}$ and $\V_\m^{(\upsilon)}$; these can be thought of physically as quarter-BPS bound states of the $\W_\m^{(\upsilon)}, \V_\m^{(\upsilon)}$

Further physical data is required to determine the braided tensor structure on $\wh\CC_\rho^B$, $\wh\CC_\rho^A$. Some tensor products and braidings of individual $\W_\m^{(\upsilon)},\V_\m^{(\upsilon)}$ follow immediately from one-form symmetry considerations.

However, the braiding of complexes of these lines is highly nontrivial. We will ultimately access the full braided tensor structure via boundary VOA's in Section \ref{sec:VOAlines}.

\subsection{B twist}
\label{sec:Blines}

\subsubsection{Wilson lines}

Consider a simple abelian gauge theory $\CT_\rho$ with $G=U(1)^r$ and $N=\C^n$. In the topological B twist, the gauge connections  naturally combine with triples of adjoint vectormultiplet scalars (which have been twisted into a spacetime 1-form) to form complexified $GL(1,\C)^r$ connections $\CA^a$, $1\leq a \leq r$. One basic family of line operators in the B twist are complexified Wilson lines, labelled by characters $\m$ of the gauge group, and a curve $\gamma$ in 3d spacetime
\be \W_\m(\gamma) := \exp\, i\int_\gamma \m\cdot \CA\,,\qquad \m\in \Z^r\,. \ee

The Wilson lines are generators of the topological one-form symmetry $\pi_1(T_{\rm top})$ in the B twist. They can be constructed as winding defects, along the lines of Section \ref{sec:oneform}. A winding defect for the topological symmetry $T_{\rm top}$ is equivalent to a Wilson line, due to the mixed Chern-Simons coupling between $T_{\rm top}$ and the $U(1)^r$ gauge symmetry. (A classic version of this argument appears in \cite{Witten-topology}.)

Each $\W_\m$ should define an object in the category of line operators. The object itself is independent of the curve label `$\gamma$'; it should be thought of as the label of a line operator supported on an arbitrary, infinitesimal, line segment. The  morphism space among two Wilson lines $\W_\m,\W_{\m'}$ is the vector space of gauge-invariant local operators at their junction. In the B twist such operators come from polynomials in complex hypermultiplet scalars. We expect that
\be \label{Hom-W-phys} \text{Hom}^\bullet (\W_\m,\W_{\m'}) = \Big[\C[X^i,Y^i]_{i=1}^n\big/\big(\mu^a\big)_{a=1}^r\Big]_{\text{weight}\,\m'-\m}\,. \ee
In this expression, we first form the polynomial ring $\C[X^i,Y^i]_{i=1}^n$ in variables $X^i,Y^i$, which is $\Z_{\geq 0}\times \Z^n$ graded by 1) integer-valued cohomological degree, coming from $U(1)_H$ symmetry, under which all $X$'s and $Y$'s have degree +1; and 2) weights for the $U(1)^n$ symmetry under which $X^i$ has weight $(0,...0,\underset{(i)}{1},0,...,0)^T\in \Z^n$ and $Y^i$ has weight $(0,...0,\underset{(i)}{-1},0,...,0)^T\in \Z^n$. Then we quotient out by the ideal generated by complex moment maps for the gauge symmetry,
\be \mu^a = \rho_i{}^aX^iY^i\,, \ee
since these complex moment maps are $Q$-exact in the B twist. Finally, to preserve gauge invariance at the junction, we restrict to the subspace of gauge charge $\m'-\m$; this is the subspace generated by monomials whose weight $\lambda\in \Z^n$ satisfies
\be \rho^T \lambda = \m'-\m\,. \ee

The space of local operators on the RHS of \eqref{Hom-W-phys} still has its cohomological $\Z_{\geq 0}$ grading, with $|X|=|Y|=1$. This should correspond to cohomological grading of the derived morphism space on the LHS.

It will be useful for us to give an equivalent characterization of the morphism space \eqref{Hom-W-phys}, in the language of DG algebra, as follows.. The moment maps $\mu^a$ are the images of certain bulk gauginos $\lambda^a$ (denoted $\lambda_-^a$ in \eqref{Ncharges}) under the B-twist supercharge,
\be Q(\lambda^a) = \mu^a\,. \label{gaugino-mu} \ee
These gauginos are twisted to spacetime scalars, of odd fermion number and of cohomological degree +1. Including them in the construction of the morphism space, we find
\be  \text{Hom}^\bullet (\W_\m,\W_{\m'}) = H^\bullet\big(\C[X^i,Y^i,\lambda^a],Q\big)\Big|_{\text{weight}\,\m'-\m}\,, \label{Wilson-Hom} \ee
where the moment-map constraints $\mu^a=0$ are now imposed cohomologically.

\subsubsection{Monodromy defects}

The B twist also admits disorder operators, labelled by classes of singular solutions to the equations of motion in the neighborhood of a line. In particular, the complexified flat connections $\CA^a$ may acquire monodromy $e^{2\pi i \upsilon^a}$ for some $\upsilon\in (\C/\Z)^r$. In local coordinates on $\C_{z,\bar z}\times \R_t$, with a line operator at $z=0$, the connection takes the form
\be \label{A-mon} \CA \,\sim\, \upsilon\,d\theta = \frac{\upsilon}{2i} \Big(\frac{dz}{z}-\frac{d\bar z}{\bar z}\Big) \qquad \text{near $z=0$}, \ee  
modulo gauge transformations with at-most power-law growth as $z\to 0$. Note that winding gauge transformations $e^{2\pi i \m\theta}$ with $\m\in \Z^r$ do shift $\nu\mapsto \nu+\m$, whence the periodicity in the parameter $\upsilon$.

Such a monodromy defect must be accompanied by setting some or all of the hypermultiplet scalars to zero near $z=0$. This is because the equations of motion in the B twist also require the scalars to be covariantly constant, $d_\CA X = d_\CA Y = 0$, which implies that they are invariant under the monodromy $e^{2\pi i \upsilon}\in GL(1,\C)^r$. This leads to an important subtlety: to fully define a monodromy line, we must supplement \eqref{A-mon} with a constraint on the hypermultiplets, ensuring that $e^{2\pi i\rho(\upsilon)}X=X$ and $e^{2\pi i\rho(\upsilon)}Y=Y$; but there is not a unique, minimal way to do this. We'll explain this feature from one perspective here. It will reappear in Section \ref{subsec:LineB}.

As a simple example, consider $\text{SQED}_1$, namely $G=U(1)$ and $V=\C$, with charge matrix $\rho=(1)$. There is a single pair of hypermultiplet scalars $(X,Y)$, of gauge charges $(1,-1)$. We may analyze what happens near a line operator at $z=0$ by compactifying the 3d theory along the angular $\theta$ direction, to obtain a 2d TQFT on a half-space $\R_{|z|>0}\times \R_t$, with the line operator at its boundary. The effective 2d theory is a B model with target $\C^*\times T^*V$ and gauge group $U(1)$, where the `$\C^*$' in the target is the holonomy $g=\exp(i\oint \CA)$ of the 3d connection along the compactification circle. The effective 2d B model has a superpotential given by (\emph{cf.} \cite{BDGH,CDGG})
\be W_{2d} = (g-1)XY\,, \label{W2d} \ee
whose critical locus imposes invariance under the holonomy, $gX=X$ and $g^{-1}Y=Y$. 

The presence of a singular connection $\CA\sim \upsilon\,d\theta$ in 3d translates to setting $g=e^{2\pi i \upsilon}$ at $|z|=0$ in the 2d model. When $\upsilon\notin \Z$, boundary conditions for the 2d model then require a further ``matrix factorization'' of the remaining superpotential $(e^{2\pi i\upsilon}-1)XY$ \cite{KapustinLi}. This amounts to coupling to 1d quantum mechanics at $z=0$ with a curved differential $Q_{1d}$ that squares to $(e^{2\pi i\upsilon}-1)XY$ rather than zero. A minimal matrix factorization (known mathematically as the Koszul matrix factorization) uses a quantum mechanics with a two-dimensional Hilbert space $\CH=\C^2$, and differential $Q_{1d}:\CH\to \CH$ given by
\be Q_{1d} = \begin{pmatrix} 0 & (e^{2\pi i\upsilon}-1)X \\ Y & 0 \end{pmatrix}\,, \qquad Q_{1d}^2 = (e^{2\pi i\upsilon}-1)XY\,\text{id}_{\C^2}\,. \ee
Note that upon rescaling basis vectors in $\CH$, this is equivalent to $Q_{1d} = \sm{ 0 & X \\ (e^{-2\pi i\upsilon}-1)Y & 0 }$. As both $X,Y$ are in the image of $Q_{1d}$, they are both \emph{effectively} set to zero at $z=0$, as we would want in order to achieve $e^{2\pi i\upsilon}X=X$ and $e^{-2\pi i\upsilon}Y=Y$.

The important subtlety arises from remembering that we have a gauge theory. We expect gauge symmetry to unbroken by the monodromy line, and thus must choose gauge charges for the two states in $\CH$, compatible with the charges $|X|=1$ and $|Y|=-1$.
There is no canonical way to do this. The two simplest options are to give the states in $\CH$ charges $(1,0)$, or charges $(0,-1)$. More generally, we can have charges $(m,m-1)$ for any $m\in \Z$. 

What we have found can be translated back to the 3d bulk in the following way. There is no canonical monodromy defect in the B twist of $\text{SQED}_1$. Rather, there is a family of monodromy defects $\W_m^{(\upsilon)}$, labeled by integers $m\in \Z$, which combine a singular connection with a \emph{bound state of two consecutive Wilson lines}
\be \W_m^{(\upsilon)}:\quad \CA \sim \upsilon\,d\theta\quad\text{superposed with}\quad \W_m\overset{(e^{2\pi i \upsilon}-1)X}{\underset{Y}{\rightleftarrows}} \W_{m+1}\qquad (\upsilon \in (\C\backslash\Z)/\Z)\,. \ee
The bound state deforms the direct sum of Wilson lines $\W_m$ and $\W_{m+1}$. (We'll see more examples of bound states in Section \ref{sec:W-DG}.)
Different monodromy defects are related by tensoring (colliding) with pure Wilson lines:  $\W_{m'}\otimes \W_m^{(\upsilon)} \simeq \W_{m+m'}^{(\upsilon)}$.

We can now generalize to the B twist of any simple abelian theory $\CT_\rho$. Let $\upsilon \in (\C/\Z)^r$. 
The $X^i$ and $Y^i$ with $\rho(\upsilon)^i \notin \Z$ are the fields that we must (effectively) set to zero in order to get $e^{2\pi i\rho(\upsilon)}X=X$ and $e^{-2\pi i\rho(\upsilon)}Y=Y$. To accomplish this, we superpose the monodromy defect with the following bound state of Wilson lines
\be \label{def-monW} \W_\m^{(\upsilon)}:= \quad \CA\sim \upsilon\,d\theta\quad\text{with}\quad \W_\m\otimes \bigotimes_{\text{$i$ s.t. $\rho(\upsilon)^i \notin \Z$}} \bigg[ \W_0\overset{(e^{2\pi i \rho(\upsilon)^i}-1)X^i}{\underset{Y^i}{\rightleftarrows}} \W_{(0,...,\underset{i}{1},...,0)} \bigg]\,.  \ee
Here the tensor product of complexes is taken following the usual rules of homological algebra, and the tensor product of all individual Wilson lines is assumed to satisfy $\W_\m\otimes \W_{\m'} = \W_{\m+\m'}$. (We will justify this in Section \ref{sec:highersym}.) The total curved differential $Q_{1d}$ in the bound state satisfies $Q_{1d}^2 = Y\cdot (e^{2\pi i \rho(\upsilon)}-1)X$, as required to obtain a matrix factorization of an effective 2d superpotential.

The $\W_\m^{(\upsilon)}$ are hybrid ``monodromy-Wilson lines'' of $\CT_\rho^B$. They are analogues of hybrid Wilson-'t Hooft lines of 4d gauge theories \cite{Kapustin-WtH}. Note that $\W_\m^{(0)}=\W_\m$ are just the pure Wilson lines.
The entirely family of monodromy-Wilson lines is related by tensoring with pure Wilson lines,
\be \W_{\m'}\otimes \W_\m^{(\upsilon)} \simeq \W_{\m+\m'}^{(\upsilon)}\,. \ee

\begin{Exp}
Consider $\text{SQED}_2$, with $G=U(1)$, $V=\C^2$, and $\rho=\sm{ 1\\1}$. The hypermultiplet scalars are $(X^1,Y^1)$ and $(X^2,Y^2)$, both of gauge charges $(+1,-1)$. When $\upsilon =0$ (mod $\Z$), we simply have $\W_m^{(0)}=\W_m$. When $\upsilon\notin \Z$, and $m\in \Z$ the monodromy-Wilson line $\W_m^{(\upsilon)}$ superposes a monodromy singularity with a bound state
\be \W_{m} \overset{\sm{ \xi X^1 \\ \xi X^2}}{\underset{\sm{ Y^1 & Y^2}}{\rightleftarrows}} \W_{m+1}\oplus \W_{m+1} \overset{\sm{ \xi X^2 & -\xi X^1}}{\underset{\sm{Y^2\\-Y^1}}{\rightleftarrows}} \W_{m+2}\,, \ee
where $\xi = e^{2\pi i \upsilon}-1$.
Note that the differential here squares to $W_{2d}=(e^{2\pi i \upsilon}-1)(X^1Y^1+X^2Y^2)$ on each Wilson-line summand of the bound state.
\end{Exp}

\subsubsection{Morphisms and the DG category}
\label{sec:W-DG}

We computed morphisms (local operators) among Wilson lines in \eqref{Wilson-Hom}, and we can now generalize to the hybrid monodromy-Wilson lines. Since the B twist localizes to \emph{flat} complex connections, the monodromy $\upsilon \in (\C/\Z)^{r}$ cannot change across a junction of line operators. This means that
\be \text{Hom}^\bullet(\W_\m^{\upsilon},\W_{\m'}^{\upsilon'}) = 0\quad \text{unless}\quad \upsilon=\upsilon' \;\text{(mod $\Z^r$)}\,. \label{Hom-mon-0} \ee
This is a first indication that $\W_m^\upsilon$ belong in different blocks of the B-twist category, as in \eqref{AB-decomp}.

When $\upsilon=\upsilon'$, the local operators at a junction between $\W_\m^\upsilon$ and $\W_{\m'}^{\upsilon}$ still get a contribution from polynomials in the scalars $X,Y$, of gauge charge $\m'-\m$, modulo the gauge moment maps constraints $\mu^a=0$. However, the bound states decorating these lines effectively impose the conditions $e^{2\pi i \rho(\upsilon)}X=X$ and $e^{-2\pi i \rho(\upsilon)}Y=Y$. This removes some of the scalars. In addition, setting $e^{2\pi i \rho(\upsilon)}X=X$ and $e^{-2\pi i \rho(\upsilon)}Y=Y$ may automatically cause some of the gauge moment maps $\mu^a$ to vanish, rendering the corresponding gauginos $\lambda^a$ in \eqref{gaugino-mu} $Q$-closed. These gauginos then contribute to the morphism space as well.

The final expected answer for the morphism space can be succinctly described as follows. Let $L_\upsilon \subseteq T^*V$ be the subspace
\be \label{def-La} L_\upsilon:= \{e^{2\pi i\rho(\upsilon)}X=X,e^{-2\pi i\rho(\upsilon)}Y=Y\} = \{X_i=Y_i=0\}_{\rho(\upsilon)^i \notin \Z}\,,\ee
containing the non-vanishing $X$'s and $Y$'s. Let $\C[X^i,Y^i,\lambda^a]\big|_{L_\upsilon}$ be the algebra of polynomials in the hypermultiplet scalars and gauginos restricted to this subspace. It has a differential $Q$ induced from the bulk differential $Q(\lambda^a)=\mu^a$ in \eqref{gaugino-mu}. Then
\be \label{HomW-gen}  \text{Hom}^{\bullet}(\W^{(\upsilon)}_\m,\W^{(\upsilon')}_{\m'}) = \delta^{\upsilon\upsilon'} H^\bullet\big( \C[X^i,Y^i,\lambda^a]\big|_{L_\upsilon},Q\big) \Big|_{\text{weight}\,\m'-\m}\,. \ee

\begin{Rmk}
    This answer could have also been obtained from a more rigorous computation using a compactified 2d B-model with superpotential as in \eqref{W2d}, though we won't describe the details here. Part of what makes the rather complicated-looking bound states from \eqref{def-monW} ``minimal'' is that that they guarantee there are no additional contributions to spaces of local operators at junctions, aside from the expected \eqref{HomW-gen}.
\end{Rmk}

\begin{Exp}
Consider $\text{SQED}_1$, with $G=U(1)$ and scalars $(X,Y)$, of gauge charges $(1,-1)$ and cohomological degree $(1,1)$. The bulk gaugino $\lambda$ has gauge charge 0 and cohomological degree 1, and satisfies
\be Q(\lambda) = XY\,. \label{SQED-mom} \ee

The charges of Wilson lines are labelled by $\m\in \Z$,
and the possible monodromy defects are labelled by $\upsilon\in \C/\Z$. The subspace $L_\upsilon$ is given by
\be L_\upsilon = \begin{cases} T^*\C & \upsilon=0\;\text{(mod $\Z$)} \\
X=Y=0 & \upsilon\neq 0\;\text{(mod $\Z$)} \end{cases}\,.\ee
In particular, when $\upsilon\neq 0$ (mod $\Z$), we have $Q(\lambda)=0$, so $\lambda$ will contribute to morphism spaces. The morphism spaces are
\be \text{Hom}^\bullet(\W_\m^{(\upsilon)},\W_{\m'}^{(\upsilon')}) = \delta^{\upsilon\upsilon'} \begin{cases} \C\langle X^{\m'-\m}\rangle & \upsilon=0,\m'>\m \\
\C\langle 1\rangle & \upsilon=0,\m'=\m \\
\C\langle Y^{\m-\m'} \rangle & \upsilon = 0, \m'<\m \\
\C[\lambda]=\C\langle 1,\lambda\rangle & \upsilon\neq 0,\m'=\m\\
0 & \upsilon \neq 0,\m'\neq \m \end{cases}
\ee
Here by $\C\langle v_1,v_2,...\rangle$ we mean the vector space spanned by $v_1,v_2$,...\,.
\end{Exp}

We'd now like to take the ``closure'' of the set of $\W_\m^{(\upsilon)}$ line operators under some natural physical operations. The first of these, which we focus on here, is a standard way of combining one-dimensional defects in any cohomological TQFT, and leads to the notion of a DG category generated by the $\W_\m^{(\upsilon)}$. (See \emph{e.g.} \cite{Douglas-Dbranes,Dbranes} for analogous descriptions of boundary conditions in 2d A and B models.)

Namely, given a collection $\ell_1,\ell_2,...,\ell_k$ of line operators, one can
tensor each with a vector space $V_1,V_2,...,V_k$ (possibly of nontrivial cohomological degree) and then introduce a nontrivial differential on the direct sum $(V_1\otimes \ell_1)\oplus (V_2\oplus \ell_2)\oplus\cdots\oplus (V_k\otimes \ell_k)$. In more physical terms, this corresponds to introducing 1d quantum mechanical systems (also called Chan-Paton bundles) on the collection of lines, and deforming the collection with a nontrivial coupling. More mathematically, the result of introducing a differential on a sum $V_1\otimes \ell_1\oplus V_2\oplus \ell_2\oplus\cdots\oplus V_k\otimes \ell_k$ can also be represented by a complex, whose terms are sums of $\ell_i$'s. In the end, taking a closure under such operations leads to the DG category generated by $\W_\m^{(\upsilon)}$'s. This motivates

\begin{Def} \label{def:CB-bulk}
Let $\wh\CC_\rho^B$ be the DG category generated by the objects $\W_\m^{(\upsilon)}$, with morphisms given by \eqref{HomW-gen}. In other words, the objects of $\wh\CC_\rho^B$ are complexes whose terms are direct sums of $\W_\m^{(\upsilon)}$'s and cohomological shifts thereof, and whose maps are morphisms among the $\W_\m^{(\upsilon)}$'s. Two complexes are deemed equivalent if they are quasi-isomorphic; this implies that they represent the same line operator in the B-twisted TQFT.
\end{Def}

In terms of the underlying $\CN=4$ theory, the pure Wilson lines $\W_\m$ come from half-BPS line operators compatible with the B twist. The monodromy-Wilson lines $\W_\m^{(\upsilon)}$ appear to only be quarter-BPS, due to the bound states that must be used to decorate them. More general complexes of the $\W_\m$'s and $\W_\m^{(\upsilon)}$'s all correspond to quarter-BPS line operators.

\begin{Exp}
    Let's continue the $\text{SQED}_1$ example above. We'll describe a few of the `quarter-BPS' objects in the DG category $\wh\CC_{(1)}^B$.
    
    The trivial line operator $\W_0=\id$ the Wilson line $\W_{-1}$ have a morphism $X\in \text{Hom}(\W_{-1},\W_0)$, so we can form a complex
    \be \W_{-1} \overset{X}\longrightarrow \W_0 \label{complex0} \ee
    Note that `$X$' naturally has cohomological degree 1, as required for a differential.
    
    The complex \eqref{complex0} can be interpreted in a few different ways. It represents a bound state of the trivial line operator and and $\W_{-1}$ --- a deformation of their sum $\W_0\oplus \W_{-1}$ by the differential $\left(\begin{smallmatrix} 0& X\\0&0\end{smallmatrix}\right)$\,. This is otherwise known as an extension of $W_0$ by $W_{-1}$, or the cone of the map $X:W_{-1}\to W_{0}$\,.

    Alternatively, noting that quotienting out by the image of the differential in \eqref{complex0} sets $X=0$, we may think of \eqref{complex0} roughly as a single line operator that imposes $X=0$.

    Further bound states/extensions of Wilson lines can be described in similar ways. There is a bound state
    \be \W_{1} \overset{Y}{\longrightarrow} \W_0\,, \label{complex1} \ee
    that effectively sets $Y=0$. There is a higher bound state combining \eqref{complex0} and \eqref{complex1}
    \be \W_0 \overset{\sm{Y\\X}}\longrightarrow \W_{-1}\oplus \W_1 \overset{\sm{-X & Y}}\longrightarrow \W_0\,, \label{complex2} \ee
    which (roughly) sets both $X=0$ and $Y=0$,
    as well as iterated self-extensions of \eqref{complex0} of any length, \emph{e.g.}
    \be \W_{-1}\overset{X}\longrightarrow\W_0\overset{Y}\longrightarrow\W_{-1}\overset{X}\longrightarrow \cdots \overset{Y}\longrightarrow \W_{-1} \overset{X}\longrightarrow \W_0\,. \ee
    (Note that each composition of differentials $X\cdot Y$ and $Y\cdot X$ vanishes because the moment map \eqref{SQED-mom} is set to zero.)

    For monodromy defects of generic $\upsilon\neq 0$ (mod $\Z$), we find self-extensions of arbitrary length of the form
    \be \W^{(\upsilon)} \overset\lambda\longrightarrow \W^{(\upsilon)} \overset\lambda\longrightarrow \cdots \overset\lambda\longrightarrow  \W^{(\upsilon)} \overset\lambda\longrightarrow \W^{(\upsilon)}\,. \ee
\end{Exp}

\subsubsection{Elementary tensor products and braiding}
\label{sec:Wgen-tensor}

The second operation under which we would like the category $\wh\CC_\rho^B$ to be closed is the tensor product, induced physically by collision of parallel line operators. Ultimately, we will argue that $\wh\CC_\rho^B$ is closed under tensor products by exhibiting it as the derived category of an abelian category of VOA modules that in turn has a well-defined \emph{exact} tensor product. Here, we simply remark that it is possible to guess some (but not all) of the tensor products among the $\W_\m^{(\upsilon)}$.

Consider, for example, pure Wilson lines $\W_\m$. Since they generate the topological one-form symmetry, we must have a simple fusion rule
\be \W_m\otimes \W_{m'} \simeq \W_{m+m'}\,. \label{eq:Wfuse} \ee
Moreover, it is clear from the bound states defining monodromy-Wilson lines that
\be \W_m\otimes \W_{m'}^{(\upsilon')} \simeq \W_{m+m'}^{(\upsilon')}\,.\label{fusionna} \ee

Collision of monodromy defects is trickier, however. Abelian monodromy around nearby parallel line operators must add, so in general we must have
\be \W_{\m}^{(\upsilon)} \otimes \W_{\m'}^{(\upsilon')} = \text{some complex of $\W_{\m''}^{(\upsilon+\upsilon')}$, for various $\m''$}\,. \label{monodromy-fusion} \ee
We will eventually see, using fusion of boundary VOA modules, that \eqref{monodromy-fusion} typically contains several different summands and/or complexes. The precise structure is controlled by which hypermultiplet scalars are effectively set to zero for monodromies $\upsilon,\upsilon'$, and $\upsilon+\upsilon'$.

\begin{Exp}
    Consider $\text{SQED}_1$, and monodromy lines $\W_0^{(\upsilon)}$, $\W_0^{(\upsilon')}$ such that none of $\upsilon,\upsilon'$, or $\upsilon+\upsilon'$ vanish (mod $\Z$). Each of $\W_0^{(\upsilon)}$, $\W_0^{(\upsilon')}$ contains a bound state of pure Wilson lines $\W_0$ and $\W_1$, so their tensor product should contain a bound state of $(\W_0\oplus \W_1)^{\otimes 2} = \W_0\oplus \W_1\oplus \W_1\oplus \W_2$. The unique way to regroup this into $\W_m^{(\upsilon+\upsilon')}$ lines is
\be \W_0^{(\upsilon)}\otimes \W_0^{(\upsilon')} \simeq \W_0^{(\upsilon+\upsilon')}\oplus \W_1^{(\upsilon+\upsilon')}\,. \ee
 \end{Exp}

This example can be generalized to any simple abelian theory. So long as $\upsilon,\upsilon',\upsilon+\upsilon'$ all require the same set of hypermultiplets to vanish, we expect
\be \W_\m^{\upsilon}\otimes \W_{\m'}^{\upsilon'} \simeq \Big[\bigotimes_{\text{$i$ s.t. $\rho(\upsilon)^i\notin\Z$}} \big(\id \oplus \W_{(0,...,\underset{i}1,...,0)}\big)\Big]\otimes \W_{\m+\m'}^{(\upsilon+\upsilon')}\,. \label{gen-W-tensor} \ee
When different hypermultiplets vanish, the RHS can be more interesting.

We may also guess some elementary braiding morphisms directly from bulk considerations. Recall that the braiding of lines $\ell_1,\ell_2$ is an isomorphism $c_{\ell_1,\ell_2} \in \text{Hom}(\ell_1\otimes \ell_2,\ell_2\otimes\ell_1)$ that corresponds physically to the  local operator defined by the braided configuration
\be   \raisebox{-.5in}{\includegraphics[width=1.8in]{braidR.pdf}}  \ee

As there are no bulk Chern-Simons terms for the gauge connection, we expect Wilson lines to braid trivially with each other,
\be R_{\W_\m,\W_{\m'}} = \text{id} \in \text{Hom}_{\wh\CC_\rho^B}(\W_{\m}\otimes\W_{\m'},\W_{\m'}\otimes\W_\m) = \text{End}_{\wh\CC_\rho^B}(\W_{\m+\m'})\,. \ee
This turns out to be true up to a sign.
Moreover, wrapping a Wilson line all the way around a monodromy defect should measure its monodromy. This is expressed as
\be \mu(\W_{m'},\W_{\m}^{(\upsilon)}) = e^{2\pi im\cdot \upsilon}\cdot\text{id} \in \text{End}_{\wh\CC_\rho^B}(\W_{m+m'}^{(\upsilon)})\,, \ee
where in general the monodromy, or double braiding, is defined as
\be \mu(\ell,\ell') = R_{\ell',\ell}\circ R_{\ell,\ell'} \in \text{End}(\ell\otimes \ell')\,.\ee
The braiding of other monodromy-Wilson lines, and more general bound states, can be remarkably complicated!

\subsection{A twist}
\label{sec:A-lines}

We'll next describe the A-twist mirrors of the basic line operators in the B twist. To this end, it is helpful to recall a geometric classification of half-BPS lines, from \cite{lineops}.

\subsubsection{Review of line operators in the A twist}
\label{sec:A-review}

Half-BPS line operators compatible with the A twist of a 3d $\CN=4$ gauge theory were constructed in \cite{lineops}, motivated by characterizations of surface operators in 4d supersymmetric gauge theory (beginning with \cite{GukovWitten}) and mathematical work of Webster \cite{Webster-tilting} and Hilburn-Yoo. These line operators are generalized vortex lines, associated with \emph{Lagrangian singularities} in the gauge and matter fields. 

Specifically, taking 3d spacetime to be $\C_{z,\bar z}\times \R_t$ with line operators at $z=\bar z=0$, a basic set of half-BPS line operators are classified by
\begin{itemize}
    \item[1)] A choice of a meromorphic boundary condition for hypermultiplet scalars $\big(X(z),Y(z)\big)\in \CL$ near $z=0$ near $z=0$, defined in terms of a subvariety $\CL \subset V(\!(z)\!) \times V^*(\!(z)\!)$ such that
    \begin{itemize}
        \item $\CL$ is holomorphic Lagrangian with respect to the holomorphic symplectic form   $\Omega=\oint dz\, \langle\delta  Y(z)\wedge \delta X(z)\rangle$ --- in the sense that it is a maximal subvariety on which $\Omega$ vanishes; and
        \item $\CL$ is compatible with the vanishing of holomorphic moment maps $\mu^a=\rho_i{}^aX^i(z)Y^i(z)$ for the gauge symmetry, in the sense that $\{\mu^a=0\}\cap\CL\neq \oslash$.
    \end{itemize}
    \item[2)] A choice of subgroup $\Gamma\subset G_\C(\!(z)\!)$ of complexified, meromorphic gauge transformations near $z=0$ that preserves the boundary condition in (1).
\end{itemize}
This geometric description is obtained after complexifying the gauge group and passing to a holomorphic gauge. 

Here we use the notation $\C[\![z]\!]$ to denote formal power series (holomorphic functions in an infintesimal neighborhood of $z=0$) and $\C(\!(z)\!)$ to denote formal Laurent series (meromorphic functions in an infinitesimal neighborhood of $z=0$). Coordinates on $V(\!(z)\!)\times V^*(\!(z)\!)$ are given by the coefficients of the Laurent series
\be X^i(z)=\sum_{k\in \Z} x_k^i z^k\,,\qquad Y^i(z)=\sum_{k\in \Z} y_k^i z^k\,,\ee
in terms of which the holomorphic symplectic form is $\Omega = 2\pi i \sum_{k\in \Z}\sum_{i=1}^n \delta y_k^i\wedge \delta x_{-k-1}$.

\begin{Exp} The trivial/identity line operator has
\be \id:\quad \CL_{\id} = V[\![z]\!]\times V^*[\![z]\!]\,,\qquad \Gamma_{\id}=G_\C[\![z]\!]\,, \label{id-vortex} \ee
which simply says that hypermultiplet scalars and gauge transformations are regular near $z=0$. The space $\CL_{\id}$ is cut out by the constraints $x_k=y_k=0$\; $\forall\, k < 0$; this is a maximal subspace on which $\Omega\big|_{\CL_{\id}}=0$. 
\end{Exp}

We will further impose a natural finiteness condition on vortex lines, by requiring the singularity data $(\CL,\Gamma)$ to be finitely far from that of the identity line \eqref{id-vortex}. Namely, we ask that both $(\CL\cup \CL_{\id})\backslash(\CL\cap\CL_{\id})$ and $(\Gamma\cup \Gamma_{\id})\backslash(\Gamma\cap\Gamma_{\id})$ have bounded dimension. If one thinks physically of defining line operators by coupling the bulk 3d theory to 1d quantum mechanics along a line, the finiteness condition amounts to asking that the quantum mechanics has (in the topological twist) a finite-dimensional Hilbert space.

We'll now describe two classes of vortex lines for simple abelian gauge theories that have%
\be \Gamma = \Gamma_{\id} = GL(1,\C[\![z]\!])^r\,, \ee
just like the identity line \eqref{id-vortex}. 
(Here
$U(1)_\C[\![z]\!]=GL(1,\C[\![z]\!])$ is the group of invertible power series, \emph{i.e.} power series whose constant term is nonzero.)

\subsubsection{Flavor vortices}
\label{sec:A-flavor}

The first class of abelian vortex lines preserved by $\Gamma_{\id}$ was studied in detail in \cite{lineops}. They are known as flavor vortices. (They had also appeared in \cite{BDGH, BFN-lines} and much earlier work on supersymmetric line operators in 3d $\CN=2$ theories. See \cite{lineops} for a more complete list of references.)

For any $\lambda\in \Z^n$, consider the condition
\be X^i\in z^{\lambda^i} \C[\![z]\!]\,,\quad  Y^i\in z^{-\lambda^i}\C[\![z]\!]\,, 
\label{XY-vortex}\ee
\emph{i.e.} $X,Y\in \CL_{\lambda}$ with $\CL_{\lambda}=\big\{x^i_{k+\lambda^i}=0,\,y^i_{k-\lambda^i}=0\big\}_{k<0}$.
This allows each $X^i$ to have a pole of some order while simultaneously forcing $Y^i$ to have a zero of the same order, or vice versa. The pairing up of poles and zeroes ensures that $\CL_\lambda$ is Lagrangian.

Each $\CL_\lambda$ as above defines a vortex line. However, these vortex lines are not all independent, because they can be screened by dynamical vortex particles. Indeed, a (complexified) gauge transformation by  $g(z) = z^\m$ (for any $\m\in \Z^r$) in the neighborhood of a line operator sends $(X,Y)\mapsto (z^{\rho(\m)}X,z^{-\rho(\m)}Y)$.
This renders equivalent any pair of vortex lines labelled by $\lambda,\lambda'\in \Z^n$ such that $\lambda-\lambda'\in \rho(\Z^r)\,.$ After choosing a splitting $\sigma$ of the exact sequence of gauge/flavor charges as in \eqref{split-SES2}, we can label inequivalent line operators by cocharacters of the flavor symmetry $\mv\in \text{cochar}(T_F)\simeq  \Z^{n-r}$,
\be \label{def-Vm} \V_\mv:\quad X(z)\in z^{\sigma(\mv)}\C[\![z]\!]\,,\quad Y(z)\in z^{-\sigma(\mv)}\C[\![z]\!]\,,\qquad \mv \in \Z^{n-r}\,.\ee

The $\V_\mv$ are called ``flavor vortices'' because they can be created by applying a non-constant $T_F$ flavor-symmetry transformation (by $z^\m$) in the punctured neighborhood of $z=0$. This immediately identifies them as the $\pi_1(T_F)=\Z^{n-r}$ one-form symmetry defects of $\CT_\rho^A$ expected from Section \ref{sec:highersym}. Thus they are also the 3d mirrors of Wilson lines $\W_m$ in $\CT_{\rho^\vee}^B$.

We note that the constraints \eqref{def-Vm} defining $\V_{\mv}$ preserve the cohomological $U(1)_C$ symmetry of the A twist (under which $X,Y$ are uncharged). 
They also preserve twisted spin in the A twist: the diagonal subgroup of $U(1)_H$ and rotations in the $z$ plane, which becomes the conformal grading in a boundary VOA, \cf \eqref{conf-A}. Recall that the twisted spin of $X,Y$ and the coordinate $z$ is
\be |X|=|Y|=\tfrac12\,,\qquad |z|=-1\,. \label{spin-A} \ee

\subsubsection{Flavor defects}
\label{sec:mirror-monodromy}

There is a second class of vortex lines with regular gauge symmetry $\Gamma_{\id}$ that also preserve $U(1)_C$ cohomological symmetry and twisted spin. We expect them to be the 3d mirrors of monodromy defects.

Let $\uv\in \C^{n-r}$. A first guess at constructing the mirror of a monodromy defect is to fix the residue of the complex moment maps for the $T_F$ flavor symmetry to constant values
\be \mu_F \sim \frac{\uv}{z}+\text{regular} \qquad \text{near $z=0$}\,, \label{mu-residue} \ee
where $\mu_F = \sigma^T(XY)$, or in components
\be \mu_F^\alpha = \sum_{i=1}^n \sigma_i{}^\alpha X^iY^i\,. \ee

This is reasonable for several reasons. It induces a singular, meromorphic profile for $X,Y$ near $z=0$, fitting the general geometric characterization of vortex lines from Section \ref{sec:A-review}. Moreover, it preserves twisted spin: since $\mu_F$ is bilinear in $X,Y$, it has twisted spin 1, so its residue is dimensionless and can be set to a nonzero constant value.

An even stronger motivation for considering \eqref{mu-residue} comes from 3d mirror symmetry. The complex flavor moment maps $mu_F$ in  $\CT_\rho$ are the 3d mirrors of the complex vectormultiplet scalars $\varphi$ in $\CT_{\rho^\vee}$. In the B twist, the latter complexify the gauge connection in the $\C_{z,\bar z}$ plane, forming the combination $\CA=A+i(\varphi\,dz+\ol\varphi\,d\bar z)$ that gets a pole near a monodromy defect \eqref{A-mon}.

In terms of the underlying 3d $\CN=4$ theories, the literal mirror of a monodromy $\CA=\frac{\uv}{2i}\Big(\frac{dz}{z}-\frac{d\bar z}{\bar z}\Big)$ is not quite \eqref{mu-residue}. Rather, the mirror map sets $\mu_F \sim \text{Im}(\uv)/z$ while simultaneously creating a singularity in the $T_F$ flavor currents near $z=0$. To see this, consider the gauge-invariant curvature of the complexified connection $\CA$ near a monodromy singularity,
\be \CF=d\CA = 2\pi \uv \delta^{(2)}(z,\bar z) \label{F-sing} \ee
The real curvature is Hodge-dual to the $T_{\rm top}$ flavor symmetry current in $\CT_{\rho^\vee}$ so the complexified curvature may be written $\CF=*J_{\rm top}+id(\varphi\,dz+\ol\varphi\,d\bar z)$. The 3d mirror of \eqref{F-sing} in $\CT_\rho$ is therefore
\be *J_F +i(\pd_{\bar z}\mu_F-\pd_z\bar\mu_F) = 2\pi \uv \delta^{(2)}(z,\bar z)\,. \label{realvortex} \ee
The imaginary part of $\uv$ specifies a singuarlity in $\mu_F$, while the real part specifies a singularity in $*J_F$. 

The identification \eqref{realvortex} explains why the true physical mirror of a monodromy line should be labelled by a \emph{periodic} parameter $\uv\in (\C/\Z)^{n-r}$. The physical theory $\CT_\rho$ has dynamical particles of all integral $T_F$ flavor charges, which themselves source delta-function singularities in the flavor current.
In the presence of flavor-charged particles, it only possible to fix $*J_F = 2\pi \text{Re}(\uv)\delta^{(2)}(z,\bar z)$ for an equivalence class $\text{Re}(\uv)\in (\R/\Z)^{n-r}$.

In the A twist of $\CT_\rho$, the constraint \eqref{mu-residue} is equivalent to \eqref{realvortex}, up to $Q$-exact terms. (In the A twist, the flavor current itself is exact, and there are no longer any $T_F$-charged particles.) Periodicity in the parameter $\uv$ is restored in the category of line operators, as lines related by integer shifts of $\uv$ all become equivalent objects in the category. This periodicity will be manifest for the VOA modules of Sections \ref{subsec:baby}--\ref{subsec:LineA}. It is also manifest from the description of line operators in the A twist as D-modules on a loop space, \cf\ \cite{CCG,hilburn2021tate}, though we will not pursue this perspective further here.

There is an important additional subtlety to overcome in fully defining the mirrors of monodromy line. It parallels the additional choices required to carefully defining the B-twisted monodromy lines themselves. From an A-twist perspective, the problem is that the constraints \eqref{mu-residue} define a subvariety of $V(\!(z)\!)\times V^*(\!(z)\!)$ that is infinitely far away from the Lagrangian $\CL_\id = V[\![z]\!]\times V^*[\![z]\!]$ for the identity line. In particular, \eqref{mu-residue} allows infinitely many negative (polar) modes of $X$ and $Y$ fields to vary freely. To satisfy the finiteness condition of Section \ref{sec:A-review}, we must specify a Lagrangian in $V(\!(z)\!)\times V^*(\!(z)\!)$ that is finitely far away from $\CL_\id$ and still allows residues \eqref{mu-residue}. There is no unique minimal way to do this. There are multiple choices, all related to each other by tensoring with flavor vortices $\V_{\mv}$.

One minimal choice is the following. We can ``solve'' the residue constraints $\mu_F = \sigma^T(XY) = \frac{\uv}{z}+...$ by keeping all the $X$ fields regular
\be X^i(z) = x_0^i+x_1^i z + \ldots \ee
and giving the $Y$ fields a first-order pole
\be Y^i(z) = \frac{y_{-1}^i}{z}+ y_0^i+y_1^i z + \ldots \quad\text{such that} \quad y_{-1}^i = \frac{\rho^\vee(\uv)^i}{x_0^i}\,. \label{solve-y} \ee
Here $\rho^\vee:\Z^{n-r}\to \Z^n$ is the second map in the exact sequence \eqref{split-SES2}; it satisfies $\sigma^T\rho^\vee = \text{id}_{\Z^{n-r}}$, whence \eqref{solve-y} implies $\mu_F=\frac{\uv}{z}+...$.  The subvariety of $V(\!(z)\!)\times V^*(\!(z)\!)$ associated with these constraints is
\be \label{def-Lu} \CL^{\uv,Y} := \begin{cases} x_k = 0 & k<0 \\
y_k = 0 & k<-1 \\
y_{-1}^i = 0 & \rho^\vee(\uv)^i = 0 \\
y_{-1}^ix_0^i = \rho^\vee(\uv)^i & \rho^\vee(\uv)^i \neq 0 \end{cases} \ee

The variety $\CL^{\uv,Y}$ is now just finitely far from $\CL_{\id}$. It can also be checked directly that it is Lagrangian. In particular, $\CL^{\uv,Y}$ can be constructed from $\CL_\id$ as the graph of the generating function
\be f^{\uv}: \CL_\id \to \C\,,\qquad f^{\uv}(\{x_k,y_k\}_{k\geq 0}) := \rho^\vee(\uv)\cdot \log x_0\,. \ee
The exponential of this generating function $\exp(f^{\uv}) = x_0^{\rho^\vee(\uv)}$ will play a direct role in the definition of boundary VOA modules, \cf\ \eqref{bg-W}. 

The choice to giving only $Y$'s a pole --- as opposed to giving some other combination of $X$'s and $Y$'s a pole --- in order to solve $\mu_F = \uv/z+...$ was entirely arbitrary. However, all other choices are related to \eqref{def-Lu} by shifts of the modes of $X$'s and $Y$'s, \emph{i.e.} by replacing $(X,Y)\mapsto (z^\lambda X,z^{-\lambda}Y)$ for some $\lambda\in \Z^n$. Recall from Section \ref{sec:A-flavor} that shifts by gauge charges lead to equivalent line operators. The distinct, minimal ways to solve $\mu_F = \uv/z+...$ are simply classified by combining \eqref{def-Lu} with all possible shifts by flavor charges. This leads to an entire family of vortex lines,
\be \V_\mv^{(\uv)}:=\quad \begin{cases} X \in z^{\sigma(\mv)}\C[\![z]\!] \\ Y \in z^{-\sigma(\mv)-1}\C[\![z]\!] \end{cases} \quad\text{and}\quad 
y_{-\sigma(\mv)^i-1}^i = \frac{\rho^\vee(\uv)^i}{x_{\sigma(\mv)^i}^i}\,. \label{def-monV} \ee
We propose these to be the 3d mirrors of the monodromy-Wilson lines \eqref{def-monW}. We will refer to the entire family as \emph{flavor defects}.

When $\uv=0$ a flavor defect becomes a standard flavor vortex, $\V_\mv^{(0)}\simeq \V_\mv$\,. We also expect equivalences 
\be \V_\mv^{(\uv)}\simeq \V_\mv^{(\uv')} \quad\text{when}\quad \uv-\uv'\in \Z^r\,, \ee
which we ultimately justify in Sections \ref{subsec:baby}--\ref{subsec:LineA}.

\subsubsection{Morphisms, fusion, braiding, and the category}
\label{sec:A-cat}

In the B twist, we computed all morphism spaces \eqref{HomW-gen} among monodromy-Wilson lines $\W_\m^{\upsilon}$. If the flavor defects \eqref{def-monV} are indeed 3d mirrors of the monodromy-Wilson lines, then there should be a match
\be \text{Hom}^\bullet(\V_{\mv}^{\uv},\V_{\mv'}^{\uv'})\;\;\text{in $\CT_\rho^A$} \;\;\simeq\;\; \text{Hom}^\bullet(\W_{\mv}^{\uv},\W_{\mv'}^{\uv'})\;\;\text{in $\CT_{\rho^\vee}^B$} \label{match-WV} \ee
for all $\mv\in \Z^{n-r}$ and $\uv\in (\C/\Z)^{n-r}$.

When $\uv=\uv'=0$, the morphism spaces in the A twist are generated by monopole operators. They were computed in \cite{lineops} using a version of a state-operator correspondence that lead to cohomology of moduli spaces of solutions to vortex equations. The result was proven to satisfy \eqref{match-WV}. (The computation can also be done using methods that Braverman-Finkelberg-Nakajima introduced to construct Coulomb branches \cite{BFN-lines,Webster-Coulomb,Webster-tilting}.) Similar methods should yield a computation of morphism spaces for more general $\uv,\uv'$ as well, though we won't pursue that here.

Just like in the B twist, line operators in the A twist can be further combined --- added and/or coupled to 1d quantum mechanics, and collided with each other. They should form a DG braided tensor category. The smallest category that contains all the flavor defects we have just constructed is:
\begin{Def} \label{def:CA-bulk}
    Let $\wh\CC_\rho^A$ be the DG category generated by the flavor defects $\V_\mv^\uv$, in the A-twisted theory $\CT_\rho^A$. In other words, the objects of $\wh\CC_\rho^A$ are represented by complexes of the $\V_\mv^\uv$ and cohomological shifts thereof, modulo quasi-isomorphism.
\end{Def}

\noindent In terms of the underlying physical 3d $\CN=4$ theory, the flavor vortices $\V_\mv$ represent half-BPS line operators \cite{lineops}. More general flavor defects $\V_\mv^{(\uv)}$ would seem to represent half-BPS line operators as well, as they correspond to holomorphic Lagrangians (as opposed to real Lagrangians) in the loop-space characterization. This seems at odds with monodromy-Wilson lines $\W_\mv^{(\uv)}$ being quarter-BPS, and poses a small puzzle to solve.
General complexes of the $\V_\mv$ and $\V_\mv^\uv$ would all be quarter-BPS.

In Section \ref{subsec:LineA} we will construct an abelian category of VOA modules whose simple objects are precisely the $\wh\CC_\rho^A$, with derived morphism spaces that match all expectations from 3d mirror symmetry \eqref{match-WV}. 
The identification of bulk vortex defects with modules for the boundary VOA is straightforward, as the boundary VOA is manifestly built from the modes of the $X$ and $Y$ bulk fields. We will further establish a braided tensor structure on the category of VOA modules.

A very small part of this braided tensor structure can be inferred directly from symmetry considerations. Namely, as the $\V_\mv$ are $\Z^r$ one-form generators, we expect
\be \V_\mv \otimes \V_{\mv'}^{\uv'}\simeq \V_{\mv+\mv'}^{\uv'}\,. \ee
Moreover, from the identification of $\V_m^{(\upsilon)}$ as mirrors of monodromy defects, we expect a monodromy
\be \mu(\V_{\mv'}, \V_\mv^{(\uv)}) = e^{2\pi i \mv'\cdot\uv'}\cdot\text{id}\,.\ee
Obtaining any further information on tensor products and braiding directly in the A-twist  seems extremely difficult.

\section{The category of line operators from boundary VOA's}
\label{sec:VOAlines}

In this section, we define the braided tensor category of line operators for $\CT_\rho^A$ and $\CT_\rho^B$ using boundary vertex operator algebras. 

More specifically, for the theory $\CT_\rho^A$, using the isomorphism $\CV_\rho^A\otimes \CV_{\Z^{2r}}^{+-}\cong \wt{\CV}_\rho^A\otimes \CV_{bc}^V$, and that $\CV_{\Z^{2r}}^{+-}$ and $\CV_{bc}^V$ have trivial category of modules, we define an abelian category $\mc C_\rho^A$ in Definition \ref{def:lineopeA}, as a full subcategory of generalized modules of $\wt{\CV}_\rho^A$. We will show that this category has the structure of a braided tensor category defined by logarithmic intertwining operators, using the idea of simple current extensions, to be recalled in Section \ref{subsecsimplecurrent}. 

For the theory $\CT_\rho^B$, one could simply use the mirror symmetry statement in terms of boundary VOA, namely Theorem \ref{thm:VOA} and Theorem \ref{ThmMirSymbdVOA} to transport the category $\mc C_\rho^A$ to the B side. We will, however, use a different route. We will define an abelian category $\mc C_\rho^B$ in Definition \ref{Def:lineopeB}, as a full subcategory of modules of $\mc V_\rho^B$, and show that it has the structure of a braided tensor category, using the fact that $\mc V_\rho^B$ is a simple current extension of $\widehat{\grho}$.

We propose the mathematical definition of the category of line operators $\widehat{\mc C}_\rho^A$ and $\wh{\mc C}_\rho^B$ to be the bounded derived category:
\be
\widehat{\mc C}_\rho^A\simeq D^b \mc C_\rho^A,\qquad \widehat{\mc C}_\rho^B\simeq D^b \mc C_\rho^B.
\ee
Tensor structure and braiding of $\widehat{\mc C}_\rho^A$ and $\widehat{\mc C}_\rho^B$ will be induced from the corresponding structure of the abelian categories, thanks to the fact that tensor product is exact on the abelian categories in question. Under this proposal, the space of junctions between line operators, derived in Section \ref{sec:bulklines} by physical analysis, is the Yoneda extension group between objects in the bounded derived category. 

We give the following justifications of this proposal:
\begin{enumerate}
    \item We carefully analyze the content of $\mc C_\rho^A$ and $\mc C_\rho^B$ and show that simple objects in these categories are in one-to-one correspondence with the basic physical line operators of Section \ref{sec:bulklines}.

    \item We analyze examples of tensor product and monodromy between simple objects and show that these match the result from physical analysis of Section \ref{sec:bulklines}. 

    \item We compute the derived extension group between simple Wilson lines, and show that these reproduce the results of Section \ref{sec:bulklines}. 

    \item More importantly, we prove mirror symmetry statement, which is an equivalence of BTC:
    \be
\mc C_\rho^A\simeq \mc C_{\rho^\vee}^B,\qquad \mc C_\rho^B\simeq \mc C_{\rho^\vee}^A.
\ee
We show that under this equivalence, the simple objects match perfectly as expected from physical arguments.
    
\end{enumerate} 

The structure of this section is as follows:

\begin{itemize}
    \item In Section \ref{subsecsimplecurrent}, we recall the machinery of VOA extensions \cite{creutzig2017tensor, creutzig2020direct}, in particular simple current extensions \cite{creutzig2020simple}   which is the machinery we will use in defining the braided tensor categories $\mc C_\rho^A$ and $\mc C_\rho^B$. 

    \item In Section \ref{subsec:baby}, we recall the most fundamental example of abelian mirror symmetry from boundary VOA, based on \cite{BN22}. This will be a building block for the content of the rest of the section. 

    \item In Section \ref{subsec:LineA}, we define the abelian category $\mc C_\rho^A$ and show it has the structure of a braided tensor category. We also classify simple objects and compute their fusion rules and examples of monodromy. 

    \item In Section \ref{subsec:LineB}, we define the abelian category $\mc C_\rho^B$, and show it has the structure of a braided tensor category. We also classify simple objects and compute their fusion rules and examples of monodromy.

    \item In Section \ref{subsec:MirrorLines}, we show the equivalence of BTC $\mc C_\rho^A\simeq \mc C_{\rho^\vee}^B$. We also show how the simple objects in the categories match under this equivalence. 

    \item In Section \ref{subsec:HomLines}, we compute derived extension groups in the derived category $D^b \mc C_\rho^B$, and show that they match the physical computations in Section \ref{sec:bulklines}.

\end{itemize}

\subsection{Simple Current extensions and intertwining operators}\label{subsecsimplecurrent}

Before giving the definition of the category of line operators for $\mc T_{\rho}^A$ and $\mc T_\rho^B$, we will first recall the basic properties of simple current extensions, at the level of generality needed to tackle our braided tensor categories of VOA representations. 

Let $\CV$ be a VOA and $\mc C$ be a category of generalized modules of $\CV$. An object $\mc M$ in $\mc C$ is a smooth module of $U(\CV)$, the universal enveloping algebra of $\CV$. By the work of Huang-Lepowsky-Zhang \cite{huang2014logarithmic1,huang2010logarithmic2, huang2010logarithmic3, huang2010logarithmic4, huang2010logarithmic5, huang2010logarithmic6, huang2011logarithmic7, huang2011logarithmic8}, under certain conditions on $\CV$ and $\mc C$, there is a ``tensor-product" structure on $\mc C$, where tensor product of two modules is defined as the universal object of logarithmic intertwining operators.

Specifically, given three modules $\mc M, \mc N$ and $\mc P$, a logarithmic intertwining operator is a map:
\be \label{Y-int}
Y: \mc M\otimes \mc N\to \mc P\{z\}[\log z]
\ee
that satisfies a similar set of properties the state-operator map of a VOA needs to satisfy. 
The curly bracket means a formal power series with any possible complex  exponents, \ie\ sums of the type
\[
\sum_{\substack{y \in \mathbb C \\ n \in \mathbb N}} P_{y, n}\, z^y\, (\log(z))^n, \qquad  P_{y, n} \in\mc P.
\]
For a full definition, see \cite{creutzig2017tensor}. Let $\mathrm{Int}(\mc M\otimes\mc N,\mc P)$ be the set of $\mc M, \mc N, \mc P$ intertwining operators \eqref{Y-int}. For fixed $\mc M$ and $\mc N$, 
 a universal object for the space of intertwining operators is a module denoted $\mc M\otimes_\CV \mc N$, together with an intertwining operator $Y:\mc M\otimes\mc N\to \mc M\otimes_\CV\mc N \{z \}[\log z]$ that induces an isomorphism
\be
\mathrm{Int}(\mc M\otimes\mc N, P)\cong \mathrm{Hom}(\mc M\otimes_\CV\mc N, \mc P),
\ee
for any other module $\mc P$.

 It is difficult in general to verify the existence of such a universal object and in particular the existence of tensor category. Usually one can show this existence if the category $\mathcal C$ satisfies certain finiteness conditions \cite[Theorem 3.3.4]{creutzig2021tensor}. 
The best known examples are first the Kazhdan-Lusztig category for an affine Lie algebra associated to a semi-simple Lie algebra, which  at most levels satisfies the necessary finiteness conditions and has the structure of a braided tensor category; and second the Virasoro algebra at any central charge has a category of modules satisfying all the necessary finiteness conditions and hence has the structure of a braided tensor category \cite{Creutzig:2020zvv}. In general it is a quite difficult problem to establish  the structure of a braided tensor category on a category of modules of a VOA. 

In practice, especially in the VOA that we study in this paper, the category $\mc C$ will \emph{not} satisfy the conditions that ensure the existence of vertex tensor category directly. In particular
objects in the category $\mc C$ will fail to have bounded-from-below conformal degree. The works \cite{creutzig2017tensor, creutzig2020direct}  resolve this problem in particular in the following situation. Suppose that there is a vertex operator subalgebra $\CW\hookrightarrow \CV$ such that $\CV$ as a module of $\CW$ is decomposed into $\CV=\bigoplus_{c\in C}\CW_c$, where $\CW_c$ are simple modules of $\CW$, labelled by an abelian group $C$, and OPE $\CW_c\otimes \CW_{c'}\to \CW_{cc'}(\!(z)\!)$. In many examples, including the ones considered in this paper, this occurs where there is an action of another abelian group $C^\vee$ on $\CV$, such that $\CW=\CV^{C^\vee}$, the invariant part, and $\CW_c=\CV^{C^\vee, c}$ where $c$ denotes an element in the character lattice of $C^\vee$. In this case, even though modules of $\CV$ might not have  conformal degrees bounded from below, if modules of $\CW$ do, then one can still obtain braided tensor structure from intertwining operators for $\CV$. 

More specifically, suppose $\mc D$ is a category of modules of $\CW$ that is shown to have braided tensor structure via logarithmic intertwining operators, and that $\CV$ is an object in $\mc D$. Then we can view $\CV$ as a commutative algebra object $\mc A$ in $\mc D$, such that the state-operator correspondence of $\CV$ is the multiplication map: $\CV\otimes_\CW \CV\to \CV$, and the locality is the commutativity condition. In other words, $\CV$ is viewed as a tensor functor from $\mathrm{Vect}_C$, the category of vector spaces graded by $C$, to $\mc D$. We call $\CV$ a simple current extension of $\CW$ by $\{\CW_c\}_{c\in C}$ in $\mc D$. Let $\mc M$ be an object in $\mc D$ that has the structure of a left $\mc A$-module, meaning there is a morphism $\mc A\otimes_\CW \mc M\to \mc M$ that induces the module structure. Then one can interprete this morphism as a intertwining operator $\CV\otimes \mc M\to \mc M \{z\}[\log z]$. This is not quite a module for $\CV$ due to possible non-locality of the intertwining operator. 
Non-locality is measured by the monodromy, and $\mc M$ is a $\CV$-module if 
the monodromy is the identity.

``Monodromy'' here refers to the square of the braiding. Given modules $\mc M,\mc N$, and braiding isomorphisms
\be \label{def-RMN} R_{\mc M,\mc N}:\mc M\otimes_\CW \mc N\to \mc N\otimes_\CW \mc M\,,\qquad R_{\mc N,\mc M}:\mc N\otimes_\CW \mc M\to \mc M\otimes_\CW\mc N\,, \ee
the monodromy between $\mc M$ and $\mc N$ is
\be \label{def-monMN} \mu(\mc M,\mc N) := R_{\mc N,\mc M}\circ R_{\mc M,\mc N} \in \text{End}(\mc M\otimes_{\mc W} \mc N)\,.\ee
The locality condition above states that $\mc M$ is a $\CV$-module if $\mu(\mc A,\mc M)=\text{id}$\,.
Thus, an $\mc A$ module object in $\mc D$ that is local to $\mc A$ is precisely a module of the VOA $\CV$. Let us denote by $\mc A\!-\!\mathrm{Mod}_{\mathrm{loc}}(\mc D)$ the category of local $\mc A$ modules in $\mc D$. This is a full subcategory of $\CV$ modules, and the work of \cite{creutzig2017tensor} states the following, which essentially translates the universal object for logarithmic intertwiners from $\CW$ to $\CV$. 

\begin{Thm}
$\mc A\!-\!\mathrm{Mod}_{\mathrm{loc}}(\mc D)$ has the structure of a braided tensor category defined by the universal property of logarithmic intertwining operators. 
\end{Thm}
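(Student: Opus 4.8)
The plan is to realize $\CA=\CV$ as a commutative, associative, unital algebra object in the braided tensor category $\CD$ and then invoke the categorical theory of (local) modules over such an algebra, following Kirillov--Ostrik, Huang--Kirillov--Lepowsky and especially Creutzig--Kanade--McRae \cite{creutzig2017tensor, creutzig2020direct}. First I would verify the algebra axioms: the vacuum of $\CV$ supplies the unit $\CW\to\CA$; the state--field map restricted to $\CA\otimes\CA$ induces, through the universal property of $\otimes_\CW$, a multiplication $m:\CA\otimes_\CW\CA\to\CA$; associativity of $m$ is associativity of vertex operators, while commutativity $m\circ R_{\CA,\CA}=m$ is precisely single-valuedness (locality) of the OPE among the simple currents $\CW_c$, which holds because the relevant pairings are integral on the extending group $C$. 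This is where the simple-current hypothesis is essential: each $\CW_c$ is invertible, so $\CA=\bigoplus_{c\in C}\CW_c$ acquires the structure of a $C$-graded (equivalently $\mathrm{Vect}_C$-valued) commutative algebra in $\CD$.

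Next I would construct the tensor product. For $\CA$-modules $\CM,\CN$, define $\CM\otimes_\CA\CN$ as the coequalizer (cokernel) of the two maps $\CM\otimes_\CW\CA\otimes_\CW\CN \rightrightarrows \CM\otimes_\CW\CN$ built from the right action on $\CM$ and the braided left action on $\CN$. Existence of this coequalizer requires that $\CD$ admit cokernels and that $\otimes_\CW$ be right exact; both are guaranteed once $\CD$ is a braided tensor category in the Huang--Lepowsky--Zhang sense satisfying the finiteness properties already established for the building-block VOAs (singlets, Heisenberg, affine $\vgl$). The induction functor $\mc F(-)=\CA\otimes_\CW(-)$ is then monoidal and left adjoint to restriction, and its image lands in $\CA$-modules; I would use it to transport the associativity and unit constraints of $\CD$ to $\CA\text{-}\mathrm{Mod}(\CD)$, so that the pentagon axiom is inherited rather than re-proved.

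The crucial step is the passage to the \emph{local} subcategory and the descent of the braiding. I would show that $R_{\CM,\CN}$ descends to an isomorphism $\CM\otimes_\CA\CN\to\CN\otimes_\CA\CM$ exactly when $\CM,\CN$ are local, \emph{i.e.} when $\mu(\CA,\CM)=\mu(\CA,\CN)=\mathrm{id}$: triviality of the monodromy with $\CA$ is precisely the compatibility needed for $R$ to be well-defined on the coequalizer and for $\CM\otimes_\CA\CN$ to again be local, and the hexagon identities then follow from those in $\CD$. This is the main obstacle, and it is the heart of \cite{creutzig2017tensor}: in general both the well-definedness of the relative tensor product and the descent of braiding are delicate, and one must check that the coequalizer computed in $\CD$ is genuinely a $\CV$-module rather than merely an abstract $\CA$-module object. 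The final step is to match this categorical construction with the intertwiner picture: a local $\CA$-action $\CA\otimes_\CW\CM\to\CM$ is the same data as a logarithmic intertwining operator $\CV\otimes\CM\to\CM\{z\}[\log z]$ for $\CV$, so that $\CA\text{-}\mathrm{Mod}_{\mathrm{loc}}(\CD)$ is identified with a full subcategory of $\CV$-modules carrying the induced vertex tensor structure. Concretely, the entire argument reduces to verifying that our particular $\CD$ and $\CA=\CV$ satisfy the hypotheses of \cite{creutzig2017tensor, creutzig2020direct}, after which the theorem is an application of their general result.
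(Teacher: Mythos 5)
Your proposal is correct and follows essentially the same route as the paper: the paper does not prove this statement itself but cites it directly from the vertex-algebra-extension machinery of \cite{creutzig2017tensor, creutzig2020direct}, and your sketch (commutative algebra object from the VOA structure, relative tensor product via coequalizers, descent of braiding to local modules, identification of local $\mc A$-modules with $\CV$-modules via logarithmic intertwiners) is precisely an outline of the argument in those references. Your closing reduction --- verify the hypotheses of \cite{creutzig2017tensor, creutzig2020direct} and apply their general theorem --- is exactly how the paper treats it, including the appeal to \cite{creutzig2020direct} for the $\mathrm{Ind}$-completion needed when the extending group is infinite.
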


In the applications to follow, we will simply refer to this category as $\CV\!-\!\mathrm{Mod}_{\mathrm{loc}}(\mc D)$, to stress the fact that this is a category of generalized modules of $\CV$. Given any module $\mc M$ of $\CW$ in $\mc D$, one can obtain a left module of $\mc A$ by $\mc A\otimes_\CW \mc M$.    One can think of this as induction from modules of $\mathbbm{1}$, the identity object (which is a commutative algebra object), to $\mc A$ modules. One can ask whether $\mc A\otimes_\CW \mc M$ is local with respect to $\mc A$, namely whether it gives rise to a local module of $\CV$. Similar to the above locality, this happens precisely when the monodromy is trivial:
\begin{equation}
  R_{\mc M,\mc A}\circ R_{\mc A,\mc M}=\mathrm{Id}.
\end{equation}
Thus, if we let $\mc D^{[0]}$ be the full subcategory of $\mc D$ whose objects have trivial monodromy with $\mc A$, then the assignment:
\begin{equation}
    \mathcal{L}(\mc M):=\mc A\otimes_\CW \mc M
\end{equation}
gives a functor:
\begin{equation}
    \mathcal L: \mc D^{[0]}\longrightarrow \mc V\!-\!\mathrm{Mod}_{\mathrm{loc}}(\mc D)
\end{equation}
The theorem of \cite{creutzig2017tensor} states:
\begin{Thm}\label{Lexact}
The functor $\mathcal L$ is a braided tensor functor. If tensor product in $\mathcal D$ is exact, and there are no simple-current fixed points, then $\mathcal L$ is a full, surjective functor that maps simples to simples and preserves composition series \cite[Proposition 3.2]{Creutzig:2022ugv}.
\end{Thm}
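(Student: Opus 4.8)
The plan is to recognize $\mathcal L$ as the induction (free-module) functor attached to the commutative algebra object $\mc A=\bigoplus_{c\in C}\CW_c$ in the braided tensor category $\mc D$, and to extract every claimed property from the general theory of \'etale algebra objects together with the three structural inputs that are available here: each $\CW_c$ is invertible (a simple current), the tensor product on $\mc D$ is exact, and the $C$-action $\CW_c\otimes_\CW(-)$ has no fixed points among simple objects. First I would treat the two halves of the statement separately, since the braided tensor functor assertion requires only that $\mc A$ be a commutative algebra object, whereas fullness, surjectivity, and the behavior on simples are exactly where the additional hypotheses enter.

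For the braided tensor functor claim, the multiplication maps $\CW_c\otimes_\CW\CW_{c'}\to\CW_{cc'}$ coming from the locality (commutativity) of the extension make $\mc A$ an associative, commutative algebra in $\mc D$, and simplicity of the $\CW_c$ together with the grading by $C$ make it \'etale. For such an algebra the category of $\mc A$-modules carries a relative tensor product $\otimes_{\mc A}$, and the free-module functor $\mathcal L(\mc M)=\mc A\otimes_\CW\mc M$ is canonically monoidal, with structure isomorphism $\mathcal L(\mc M)\otimes_{\mc A}\mathcal L(\mc N)\cong\mathcal L(\mc M\otimes_\CW\mc N)$ induced by the multiplication on $\mc A$; restricting the source to $\mc D^{[0]}$ forces the image to consist of \emph{local} $\mc A$-modules, i.e. genuine $\CV$-modules, and commutativity of $\mc A$ makes $\mathcal L$ compatible with braidings. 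This half is essentially an invocation of the machinery of \cite{creutzig2017tensor}.

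Next I would record exactness: because $\mathcal L(-)=\mc A\otimes_\CW(-)$ is tensoring by a \emph{fixed} object, exactness of $\otimes_\CW$ on $\mc D$ immediately yields exactness of $\mathcal L$, and this is the sole role of the exactness hypothesis. The core step is then that $\mathcal L$ sends simples to simples. By Frobenius reciprocity the restriction of $\mathcal L(\mc S)$ to $\CW$ is $\bigoplus_{c\in C}\CW_c\otimes_\CW\mc S$; invertibility of $\CW_c$ plus exactness makes each summand simple, and the no-fixed-point hypothesis makes the summands pairwise non-isomorphic. A nonzero $\mc A$-submodule of $\mathcal L(\mc S)$ restricts to a subsum of these distinct simple pieces, but the action of the invertible generators $\CW_c\subset\mc A$ permutes the pieces transitively (as $C$ acts on itself by translation), forcing any such submodule to be everything; hence $\mathcal L(\mc S)$ is a simple $\CV$-module.

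Finally I would deduce essential surjectivity, preservation of composition series, and fullness. For essential surjectivity, a simple $\mc N\in\CV\!-\!\mathrm{Mod}_{\mathrm{loc}}(\mc D)$ restricted to $\CW$ contains a simple $\mc S\in\mc D^{[0]}$, and Frobenius reciprocity produces a nonzero map $\mathcal L(\mc S)\to\mc N$ which, both objects being simple, is an isomorphism. For composition series, exactness of $\mathcal L$ transports a filtration $0\subset\mc M_1\subset\cdots\subset\mc M_k=\mc M$ with simple subquotients in $\mc D^{[0]}$ to a filtration of $\mathcal L(\mc M)$ with subquotients $\mathcal L(\mc M_i/\mc M_{i-1})$, each simple by the previous step, giving preservation of Jordan--H\"older data. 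Fullness I would make precise through the de-equivariantization/orbit-category description, where $\mathrm{Hom}_\CV(\mathcal L\mc M,\mathcal L\mc N)\cong\bigoplus_{c}\mathrm{Hom}_\CW(\mc M,\CW_c\otimes_\CW\mc N)$ and the no-fixed-point condition kills all but the $c=0$ term between objects in a single $C$-orbit. The step I expect to be the genuine obstacle is precisely this simples-to-simples argument: the no-fixed-point hypothesis is indispensable, and the transitivity reasoning must be carried out at the level of $\mc A$-module (not merely $\CW$-module) submodules, i.e. one must rule out that the distinct simple summands are glued by the algebra action inside a proper submodule; the fullness claim is the secondary delicate point, since as literally stated it holds only after restricting to a single orbit.
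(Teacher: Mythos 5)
The paper offers no proof of this theorem at all---it is imported wholesale from the extension machinery of \cite{creutzig2017tensor} together with \cite[Proposition 3.2]{Creutzig:2022ugv}---so the only meaningful comparison is with the route taken in those references, and your reconstruction follows it exactly: induction $\mathcal L=\mc A\otimes_\CW(-)$ for the commutative algebra object $\mc A=\bigoplus_{c\in C}\CW_c$, monoidality/braiding from the theory of local modules, exactness because one tensors with a fixed object, and simplicity of $\mathcal L(\mc S)$ via Frobenius reciprocity plus transitivity of the $C$-action on the pairwise non-isomorphic simple summands $\CW_c\otimes_\CW\mc S$. That core argument is correct; the one ingredient you use only implicitly is that the multiplication maps $\CW_c\otimes_\CW\CW_{c'}\to\CW_{c+c'}$ are nonzero, hence isomorphisms by Schur's lemma, which is part of the simple-current-extension structure of $\CV$. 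Your closing caveat about fullness is genuine but slightly misstated: for $\mc M=\CW_d\otimes_\CW\mc N$ with $d\neq 0$ the surviving summand of $\mathrm{Hom}_{\CV}(\mathcal L\mc M,\mathcal L\mc N)$ is the $c=d$ term, not the $c=0$ term, so $\mathcal L$ is faithful but not literally full; the words ``full'' and ``surjective'' in the statement are to be read in the de-equivariantization sense, meaning morphisms and objects are recovered only up to the canonical identifications $\mathcal L(\CW_c\otimes_\CW\mc N)\cong\mathcal L(\mc N)$, and likewise essential surjectivity beyond simple objects requires the $\mathrm{Ext}^1$-level version of Frobenius reciprocity for the exact adjoint pair (induction, restriction)---both of which are exactly what the cited proposition packages.
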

A simple-current fixed point is a simple module $\mc M$, such that $\CW_c \otimes_\CW \mc M \cong \mc M$. All the categories in this work are fixed-point free.  The above theorem means that with the assumptions of $\CV$ and $\CW$, namely that $\CV$ is a simple current extension of $\CW$ in $\mc D$, we can view the category $\mc V\!-\!\mathrm{Mod}_{\mathrm{loc}}(\mc D)$ as the de-equivariantization of $\mc D^{[0]}$ by $C^\vee$. This is also denote by $D^{[0]}_{C^\vee}$ in \cite{egno}. The effect of the de-equivariantization in this case is simple: it identifies an object $\mc M$ of $\mc D^{[0]}$ with the object $\CW_c\otimes_\CW \mc M$ for all $c$, so we will also suggestively write it as $\mc D^{[0]}/C$. Thus, if the category $\mc C$ we would like to study happens to be $\mc D^{[0]}/C$, then $\mc C$ will have the structure of a braided tensor category. Moreover, tensor product will be an exact functor since it is an exact functor on $\mc D$ and that $\mc L$ preserves composition series. 

We also comment that in the applications of this paper, the object $\mc A$ is not technically an object of $\mc D$, since for us $C$ will be a product of \emph{infinite} cyclic groups, $C=\Z^\#$. This is resolved in \cite{creutzig2020direct} by constructing a suitable completion $\mathrm{Ind}(\mc D)$ that allows infinite direct sum, and that the braided tensor category structure will simply be extended to this completion. Then all the formulation above is applied to $\mathrm{Ind}(\mc D)$ and one can obtain a braided tensor category of $\CV$ modules even when modules of $\CV$ don't have conformal degrees bounded from below. This was successfully applied in \cite{BN22} to the case of the symplectic boson VOA $\mc{V}_{\beta\gamma}$, to obtain a large category of modules of $\mc{V}_{\beta\gamma}$ that has the structure of a braided tensor category. This category greatly extended the one studied in \cite{allen2020bosonic}, and is predicted in \cite{BN22} to be the category of line operators for the A twist of a free hypermultiplet.

We proceed now to review the results of \cite{BN22}, as they not just a basic example, but a fundamental building block for our more general constructions.

\subsection{Line operators and abelian mirror symmetry: first example}\label{subsec:baby}

In this section, we review the simplest example of 3d mirror symmetry for categories of line operators, corresponding to the A twist of a free hypermultiplet and the B twist of $\text{SQED}_1$. This mirror pair of 3d theories was introduced back in Section \ref{sec:eg1}.

The boundary VOA for a hypermultiplet $\CV_{\rm hyper}^A=\mc{V}_{\bg}\otimes \mc{V}_{bc}$,
which contains the affine VOA $\widehat{\mathfrak{gl}(1|1)}$ as a sub-VOA. In fact, according to \cite{creutzig2013w}, $\mc{V}_{\bg}\otimes \mc{V}_{bc}$ as a module of $\widehat{\mathfrak{gl}(1|1)}$ is a direct sum of $\Z$ simple currents. 
Using this, \cite{BN22} defined a category of modules of $\mc{V}_{\bg}\otimes \mc{V}_{bc}$, which was denoted $\cbg$. Using the technique of simple current extension, it was shown that  $\cbg$ is a braided tensor category and that there is an equivalence:
\be
\cbg\simeq KL^{[0]}/\Z
\ee
of braided tensor categories. This is an instance of mirror symmetry statement for the category of line operators. In the next two subsections, we will recall the details of the category $\cbg$ and the equivalence above. 

\begin{Rmk}
Recall that the free-fermion VOA $\CV_{bc}$ has a trivial module category. Any module category for $\CV_{\beta\gamma}\otimes \CV_{bc}$ is equivalent to one for $\CV_{\beta\gamma}$ alone, under the equivalence
\be M\in \CV_{\beta\gamma}\text{-Mod}\;\mapsto\; M\otimes \CV_{bc} \in \CV_{\beta\gamma}\otimes\CV_{bc}\text{-Mod}\,. \label{iso-bc} \ee
\end{Rmk}

\subsubsection{The category of modules of $\mc{V}_{\beta\gamma}$}

We recall facts about the VOA $\mc{V}_{\beta\gamma}$ and its category of modules from \cite{BN22}. Recall from Section \ref{sec:betagamma}, the VOA $\mc V_{\bg}$ is generated by fields $\beta(z)$ and $\gamma(z)$ with OPE:
\be
\beta(z)\gamma (w)\sim \frac{-1}{z-w}.
\ee
We make the following choice of conformal element:
\be
L(z)=\frac{1}{2}\left( \normord{\partial \beta\gamma}-\normord{\beta\partial \gamma}\right),
\ee
such that the conformal weights are $|\beta|=|\gamma|=\frac{1}{2}$, so that we can write mode expansion:
\be
\beta(z)=\sum_{n\in \Z} \beta_nz^{-n-\frac{1}{2}},\qquad \gamma(z)=\sum_{n\in \Z} \gamma_nz^{-n-\frac{1}{2}}. 
\ee
The OPE leads to the following commutation relation of the modes:
\be
\beta(z)\gamma (w)\sim \frac{-1}{z-w}\rightsquigarrow [\gamma_n,\beta_{-n}]=1.
\ee
For all $n$, the commutation relation of $\gamma_n$ and $\beta_{-n}$ is that of Weyl algebra, where we recognize $\gamma_n$ as $\partial_{\beta_{-n}}$. Therefore, one can (intuitively) recognize the universal enveloping algebra $U(\mc V_{\bg})$ as an infinite product of Weyl algebras:
\be
\begin{array}{c @{\qquad} @{\cdots \quad \otimes \quad}  c @{\quad \otimes \quad} c @{\quad \otimes \quad} c @{\quad \otimes \quad} c @{ \quad \cdots}}
U(\mc{V}_{\beta\gamma})\sim
	& \C[\beta_{-1}, \gamma_1]
	& \C[\beta_0, \gamma_0]
	& \C[\beta_1, \gamma_{-1}]
	&\C[\beta_2, \gamma_{-2}]
\end{array}
\ee
Strictly speaking, the universal enveloping algebra $U(\mc{V}_{\beta\gamma})$ is a topological algebra, in the sense that a module $\mc M$ of the VOA $\mc{V}_{\beta\gamma}$ is a module of the universal enveloping algebra where for each $m\in \mc M$, there exists $N>0$ such that $\beta_nm=\gamma_nm=0$ for $n>N$. These are called smooth modules of $U(\mc{V}_{\beta\gamma})$. This picture, which was called the column picture in \cite{BN22}, was helpful in visualizing the modules and there extensions. 

Let us now introduce simple modules of $\mc V_{\bg}$ as modules of the mode algebra, represented in column pictures. The simple vacuum module $\mc V_0$ is given by:
\be
\begin{array}{c @{\qquad} @{\cdots \quad \otimes \quad}  c @{\quad \otimes \quad} c @{\quad \otimes \quad} c @{\quad \otimes \quad} c @{ \quad \cdots}}
U(\mc{V}_{\beta\gamma})
	& \C[\beta_{-1}, \gamma_1]
	& \C[\beta_0, \gamma_0]
	& \C[\beta_1, \gamma_{-1}]
	&\C[\beta_2, \gamma_{-2}]\\
\mc V_0
	& \C[\beta_{-1}]
	& \C[\gamma_0]
	& \C[\gamma_{-1}]
	& \C[\gamma_{-2}]
\end{array}
\ee
Here, in writing $\C[\beta_n]$, we mean the module of the Weyl algebra $\C[\beta_n, \gamma_{-n}]$ where $\beta_n$ acts freely and $\gamma_{-n}$ acts as $\pd_{\beta_n}$. Similarly, in writing $\C[\gamma_n]$, we mean the module of the Weyl algebra $\C[\gamma_n, \beta_{-n}]$ where $\gamma_n$ acts freely and $\beta_{-n}$ acts as $-\pd_{\gamma_n}$. Note that positive modes act locally nilpotently on $\mc V_0$. This simple modules can be twisted by spectral-flow automorphism $\Sigma$, which was used in Section \ref{subsecgaugeextA} to define the simplified A-twist boundary VOA $\wt{V}_\rho^A$. Recall that it acts on the fields as:
\be
\Sigma \beta=z\beta,\qquad \Sigma \gamma=z^{-1}\gamma.
\ee
 This can be used to twist the simple module to obtain $\mc V_n:=\Sigma^n \mc V_0$. We can write this in column picture as:
\be
\begin{array}{c @{\qquad} @{\cdots \quad \otimes \quad}  c @{\quad \otimes \quad} c @{\quad \otimes \quad} c @{\quad \otimes \quad} c @{ \quad \cdots}}
U(\mc{V}_{\beta\gamma})
	& \C[\beta_{n-1}, \gamma_{-n+1}]
	& \C[\beta_{n}, \gamma_{-n}]
	& \C[\beta_{n+1}, \gamma_{-n-1}]
	&\C[\beta_{n+2}, \gamma_{-n-2}]\\
\mc V_n
	& \C[\beta_{n-1}]
	& \C[\gamma_{-n}]
	& \C[\gamma_{-n-1}]
	& \C[\gamma_{-n-2}]
\end{array}
\ee
For each $\upsilon\in \C/\Z$, there is a module $\mc W_{\upsilon}$, whose representation in column picture is as follows:
\be \label{bg-W}
\begin{array}{c @{\qquad} @{\cdots \quad \otimes \quad}  c @{\quad \otimes \quad} c @{\quad \otimes \quad} c @{\quad \otimes \quad} c @{ \quad \cdots}}
U(\mc V_{\bg})
	& \C[\beta_{-1}, \gamma_1]
	& \C[\beta_0, \gamma_0]
	& \C[\beta_1, \gamma_{-1}]
	&\C[\beta_2, \gamma_{-2}]\\
\mc W_\upsilon 
	& \C[\beta_{-1}]
	&\gamma_0^{\upsilon} \C[\gamma_0, \gamma_0^{-1}]
	& \C[\gamma_{-1}]
	& \C[\gamma_{-2}]
\end{array}
\ee
Here we abuse notation and use $\upsilon$ also as a representative in $\C$. One can alternatively define $\mc W_\upsilon$ to be generated by a vector $w_\upsilon$ such that positive modes of $\beta$ and $\gamma$ act trivially, negative modes act freely, and $\beta_0\gamma_0w_\upsilon=-(\upsilon+1)w_\upsilon$.

We emphasize that $\mc W_\upsilon$ depends only on the equivalence class of $\upsilon\in \C/\Z$. There is an obvious isomorphism among modules for which $\upsilon$ differs by an integer, since $\gamma_0^{\upsilon} \C[\gamma_0, \gamma_0^{-1}] = \oplus_{k\in \Z} \C\langle \gamma_0^{\upsilon+k}\rangle$.

When $\upsilon\ne 0\in \C/\Z$, $\mc W_\upsilon$ is irreducible. Otherwise, it fits into the following short exact sequence:
\begin{equation}\label{atyploewy+}
\begin{tikzcd}
 0\rar &  \mc{V}_0\rar & \mc W_0\rar &   \mc{V}_{1}\rar & 0
\end{tikzcd}
\end{equation}
We will denote $\mc W_0$ by $\mc W_{0}^+$, and introduce $\mc W_{0}^-$ to be the module whose column presentation is 
\be
\begin{array}{c @{\qquad} @{\cdots \quad \otimes \quad} c @{\quad \otimes \quad} c @{\quad \otimes \quad} c @{\quad \otimes \quad} c @{ \quad \cdots}}
U(\mc V_{\bg})
	& \C[\beta_{-1}, \gamma_1]
	& \C[\beta_0, \gamma_0]
	& \C[\beta_1, \gamma_{-1}]
	&\C[\beta_2, \gamma_{-2}]\\
\mc W_{0}^- 
	& \C[\beta_{-1}]
	&  \C[\beta_0, \beta_0^{-1}]
	& \C[\gamma_{-1}]
	& \C[\gamma_{-2}]
\end{array}
\ee
This fits in the following short exact sequence:
\begin{equation}\label{atyploewy-}
\begin{tikzcd}
 0\rar &  \mc{V}_{1}\rar & \mc W_{0}^-\rar &    \mc{V}_{0} \rar & 0
\end{tikzcd}
\end{equation}
We can also take the spectral flows $\Sigma^n \mc W_\upsilon$, which are still simple when $\upsilon\ne 0\in \C/\Z$. Altogether, we define:

\begin{Def}
The category $\cbg$ is the smallest abelian full subcategory of $\mc{V}_{\bg}$ modules that are finite length whose composition factors are $\Sigma^n\mc W_\upsilon$ and $\mc V_n$ for $\upsilon\ne 0 \text{ mod }\Z$  and $n\in \Z$.
\end{Def}
 This category admits a decomposition:
\be
\cbg=\bigoplus_{\upsilon\in \C/\Z} \mc C_{\bg, \upsilon}
\ee
where $\mc C_{\bg, \upsilon}$ is the subcategory generated by $\Sigma^k \mc W_\upsilon$, and can be alternatively characterized by requiring the generalized eigenvalue of $J_0^{\bg}$ to be contained in $\upsilon$, where recall $J^{\bg}(z)=-\normord{\beta(z)\gamma(z)}$.

The identification of the \emph{simple} modules in $\cbg$ with physical vortex lines from Section \ref{sec:A-lines} is straightforward. The fields $\beta,\gamma$ generating $\CV_{\beta\gamma}$ are the boundary values of bulk hypermultiplet scalars $Y(z),X(z)$. Comparing with the description of flavor vortices and flavor defects in Section \ref{sec:A-lines}, this lets us match
\be  \text{bulk}\;\; \V_m^{(\upsilon)} \;\leftrightarrow\;\begin{cases} \CV_m & \upsilon = 0\;\;\text{mod $\Z$} \\
\Sigma^m \mc W_\upsilon & \upsilon \neq 0\;\;\text{mod $\Z$}
\end{cases} \ee
In particular, the spectral flows of the vacuum module give a zero of some order to $X$ and allow a correspinding pole of the same order in $Y$ (or vice versa), matching flavor vortices $\V_m$ in \eqref{def-Vm}. The modules with $\upsilon\neq 0$ fix a nonzero residue for the boundary current, which is precisely the the boundary value of the bulk \emph{moment map} operator \eqref{mu-residue}. Moreover, the constraint on the residue is solved in a minimal way, by giving a simple pole to $Y$, as in \eqref{def-Lu}. 

The following theorem is the main result of \cite{BN22}.

\begin{Thm}\label{Thm:cbgBTC}
The category $\cbg$ is a braided tensor category whose tensor structure is defined by logarithmic intertwining operators. 
\end{Thm}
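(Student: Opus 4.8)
The plan is to realize $\cbg$ as a category of local modules for an infinite-order simple current extension, and then to invoke the extension machinery of Section \ref{subsecsimplecurrent} to transport a braided tensor structure from a category that is already known to be braided tensor. Concretely, I would take $\CW = \widehat{\mathfrak{gl}(1|1)}$ and $\mc D = KL$, its Kazhdan-Lusztig category, which by \cite{creutzig2020tensor} is a braided tensor category satisfying the finiteness hypotheses needed for the Huang-Lepowsky-Zhang theory to apply. The starting observation, recorded in Section \ref{sec:gl11}, is that $\widehat{\mathfrak{gl}(1|1)}$ is the diagonal $\C^*$-orbifold of $\mc{V}_{bc}\otimes\mc{V}_{\beta\gamma}$; equivalently, $\mc{V}_{bc}\otimes\mc{V}_{\beta\gamma}$ is a simple current extension of $\widehat{\mathfrak{gl}(1|1)}$ by the group $C = \Z$ of spectral-flow images $\{\Sigma^n(\widehat{\mathfrak{gl}(1|1)})\}_{n\in\Z}$.

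First I would verify that $\mc A := \mc{V}_{bc}\otimes\mc{V}_{\beta\gamma}$, viewed through this decomposition, is a commutative, fixed-point-free algebra object in the Ind-completion $\mathrm{Ind}(KL)$; the passage to $\mathrm{Ind}(KL)$ (following \cite{creutzig2020direct}) is forced because $C = \Z$ is infinite and the corresponding objects of $\cbg$ have conformal weights unbounded below. Commutativity amounts to the integrality and locality of the relevant OPEs, which follows because the spectral-flow currents are genuine simple currents with integral conformal weights and trivial mutual monodromy; the fixed-point-free condition is immediate from the structure of spectral flow on $KL$. With $\mc A$ in hand, the extension theorem of Section \ref{subsecsimplecurrent} together with Theorem \ref{Lexact} endows the category $\mc A\text{-}\mathrm{Mod}_{\mathrm{loc}}(\mathrm{Ind}(KL))$ of local $\mc A$-modules with a braided tensor structure defined by logarithmic intertwining operators, with lifting functor $\mathcal L(\mc M) = \mc A\otimes_{\CW} \mc M$ exact and sending simples to simples.

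The heart of the argument is then the identification of this category of local modules with $\cbg$. I would compute $\mathcal L$ on the simple and indecomposable objects of $KL^{[0]}$ and show that its images are precisely the objects $\mc V_n$ and $\Sigma^n\mc W_\upsilon$ (together with $\mc W_0^\pm$) that define $\cbg$: the integer $n$ is produced by the spectral-flow summands of $\mc A$, while the continuous label $\upsilon \in \C/\Z$ tracks the $J_0^{\bg}$-eigenvalue and hence the block decomposition $\cbg = \bigoplus_\upsilon \mc C_{\bg,\upsilon}$. The locality (trivial-monodromy) condition cuts $KL$ down to exactly the objects that lift, and checking that the resulting category has the prescribed composition factors and is closed as in the definition of $\cbg$ establishes the equivalence; since $\mc{V}_{bc}$ has trivial module category, the equivalence \eqref{iso-bc} then strips off the $\mc{V}_{bc}$ factor and transports the braided tensor structure to $\cbg$ itself. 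The main obstacle I anticipate is not the formal application of the extension theorem but the bookkeeping of the Ind-completion together with the precise matching of the abstractly-defined $\cbg$ (characterized purely by its composition factors) with the category of local modules: one must confirm that every local $\mc A$-module is finite length with the allowed composition factors, and conversely that each listed object genuinely lifts, which requires care with the infinite simple-current group and with the indecomposable, non-semisimple objects $\mc W_0^\pm$ sitting in the short exact sequences \eqref{atyploewy+} and \eqref{atyploewy-}.
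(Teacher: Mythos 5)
Your proposal takes essentially the same route as the paper's (i.e.\ \cite{BN22}'s) proof: realize $\mc{V}_{\bg}\otimes\mc{V}_{bc}$ as the $\Z$ simple current extension of $\widehat{\mathfrak{gl}(1|1)}$ by the spectral-flow simple currents (the modules $\widehat{A}_{-m/2+\epsilon(m),m}$), apply the extension/lifting machinery in $\mathrm{Ind}(KL)$ to get a braided tensor structure on $\mc{V}_{\bg}\otimes\mc{V}_{bc}\mathrm{-Mod}_{\mathrm{loc}}(\mathrm{Ind}(KL))$, and identify the image of $KL^{[0]}$ (equivalently $KL^{N}$) under $\mc L$ with $\cbg$, stripping off the Morita-trivial $\mc{V}_{bc}$ factor. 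The only differences are notational (you write the simple currents as spectral flows of the vacuum module, which is what the $\widehat{A}_{-m/2+\epsilon(m),m}$ are), so the argument matches the paper's.
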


The proof of this statement uses exactly the idea of simple current extensions, and the relation between $\mc{V}_{\bg}$ and the affine Lie superalgebra $\widehat{\mathfrak{gl}(1|1)}$ \cite{creutzig2013w}. However, before going into this part of the stody, we recall some facts about indecomposable modules in $\cbg$ and their free-field realizations. These facts are crucial in the proof of the above theorem, and will be important in the computation of monodromy in the following sections.

The column picture allows one to compute extension groups between objects in $\cbg$, which then can be used to characterize indecomposable objects. 
This was done in detail in \cite{BN22}, which we recall the results here. When $\upsilon\notin\mathbb{Z}$, for each $k>0$, there is a unique $k$-th self extension of $\Sigma^n \mc W_{\upsilon}$, namely a indecomposable module whose all composition factors are $\Sigma^n \mc W_{\upsilon}$. We will denote this by $\Sigma^n \mc W_{\upsilon}^k$. The module $\mc W_{\upsilon}^k$ has the following presentation: 
\be
\begin{array}{c @{\qquad} @{\cdots \quad \otimes \quad} c @{\quad \otimes \quad} c @{\quad \otimes \quad} c @{\quad \otimes \quad \cdots}}
U(\mc V_{\bg})
	& \C[\beta_{-1}, \gamma_1]
	& \C[\beta_0, \gamma_0]
	& \C[\beta_1, \gamma_{-1}]
	\\
\mc W_{\upsilon}^k 
	& \C[\beta_{-1}]
	&\gamma_0^\upsilon \log(\gamma_0)^{k-1}\C[\gamma_0, \gamma_0^{-1}]
	& \C[\gamma_{-1}]
\end{array}
\ee
Similarly, there are indecomposable self extensions $\Sigma^n\mc W_0^{\pm, k}$ of $\mc W_0^\pm$. One can show that these span the category, in the following sense:

\begin{Thm}\label{ThmCbgindecom}
Any object in $\cbg$ is a subquotient of a finite direct sum of $\Sigma^n\mc W_\upsilon^{k}$ and $\Sigma^n\mc W_0^{-, k}$. 
\end{Thm}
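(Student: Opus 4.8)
The plan is to prove the statement one block at a time, exploiting the decomposition $\cbg = \bigoplus_{\upsilon\in\C/\Z}\mc C_{\bg,\upsilon}$: since the class of ``subquotients of finite direct sums'' of the listed modules is preserved by projection onto each block, it suffices to fix $\upsilon$. First I would make precise the reduction already implicit in the column picture of \cite{BN22}, namely that every module in $\mc C_{\bg,\upsilon}$ is rigid in the nonzero columns (carrying a fixed free Weyl-module on each), so that all homological structure is concentrated in the zero-mode column $\C[\beta_0,\gamma_0]$. This promotes the problem to a question about finite-length modules over a quiver-with-relations controlling the action of $J_0=-\normord{\beta_0\gamma_0}$ on the weight spaces in $\upsilon+\Z$, for which the relevant $\mathrm{Ext}$ groups and indecomposables are exactly those tabulated in \cite{BN22}. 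Because every object of $\cbg$ has finite length, Krull--Schmidt applies, and it is enough to realize each \emph{indecomposable} as a subquotient of a finite direct sum of the listed modules.

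For the generic blocks $\upsilon\neq 0\bmod\Z$, the simple objects $\Sigma^n\mc W_\upsilon$ of distinct spectral-flow degree have disjoint conformal-weight spectra and do not extend one another; the only nonsplit extensions are the logarithmic self-extensions $\Sigma^n\mc W_\upsilon^k$. Each such block is therefore a direct sum over $n$ of ``local'' subcategories, each equivalent to finite-length $\C[[t]]$-modules, in which the unique indecomposable of length $k$ is $\Sigma^n\mc W_\upsilon^k$. Here every object is \emph{literally} a finite direct sum of the $\Sigma^n\mc W_\upsilon^k$, so the statement holds with ``subquotient'' strengthened to ``direct summand.''

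The genuine content is the atypical block $\upsilon=0$, whose simples are the $\mc V_n$, $n\in\Z$, linked by the extensions \eqref{atyploewy+}--\eqref{atyploewy-} and their spectral flows into a connected quiver, together with the loops supplied by the self-extensions. The first point I would record is that a single large $\Sigma^n\mc W_0^{-,k}$ is uniserial with the full alternating composition pattern $\mc V_0,\mc V_1,\mc V_0,\mc V_1,\dots$; consequently every consecutive two-step subquotient reproduces $\mc W_0^-$ or $\mc W_0^+$, and more generally \emph{every} uniserial (``interval'' or string) indecomposable of this block occurs as an interval subquotient of one suitably spectral-flowed $\Sigma^n\mc W_0^{-,k}$. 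This is precisely why the single family $\Sigma^n\mc W_0^{-,k}$ suffices and neither $\mc W_0^{+,k}$ nor the connecting modules need to be listed separately.

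The main obstacle is the non-uniserial indecomposables, most notably the ``diamond'' modules (the projective covers $P(\mc V_n)$ with $\mc V_n$ on top, $\mc V_{n+1}\oplus\mc V_{n-1}$ in the radical, and $\mc V_n$ in the socle), which are biserial rather than string modules. The key technical step is to realize each such module by an amalgamation: I would exhibit the diamond as a pullback--pushout of two uniserial strings $S_1\cong(\mc V_n/\mc V_{n+1}/\mc V_n)$ and $S_2\cong(\mc V_n/\mc V_{n-1}/\mc V_n)$, glued along their common top and socle, and observe that since each $S_i$ is a subquotient of a $\Sigma^{n_i}\mc W_0^{-,k}$, the amalgam is a subquotient of the direct sum of those two modules. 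The remaining work is to promote this to a uniform argument: show that the atypical block is (equivalent to modules over) a special biserial/string algebra, invoke the string--band classification of its indecomposables, rule out band modules via the conformal grading (a periodic string would have to close up with incompatible $L_0$-weights), and verify that every string and every biserial projective is obtained by finitely many such amalgamations of subquotients of the $\Sigma^n\mc W_0^{-,k}$. As a consistency check and potential shortcut, I would compare with the representation theory of $\widehat{\mathfrak{gl}(1|1)}$ through the simple-current relation underlying Theorem \ref{Thm:cbgBTC}, under which the $\Sigma^n\mc W_0^{-,k}$ correspond to spectral flows of the typical/projective modules of \cite{creutzig2020tensor}, whose generation properties are already known.
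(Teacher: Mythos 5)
First, a point of comparison: the paper itself does not prove Theorem \ref{ThmCbgindecom} --- it is quoted from \cite{BN22} as part of the review in Section \ref{subsec:baby} --- so your proposal can only be judged against what a correct proof requires, not against an in-paper argument.

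The main gap is in your treatment of the atypical block, in the step ``rule out band modules via the conformal grading.'' This step is false, and it is inconsistent with your own earlier discussion: the diamond modules you handle by amalgamation \emph{are} band modules for the relevant string algebra (the cyclic word $\psi^+\psi^-(\psi^+)^{-1}(\psi^-)^{-1}$ has total degree zero, so it closes up with perfectly compatible $L_0$- and $J_0$-weights). Worse, the basic diamonds are not the only degree-zero bands: the algebra $\C\langle\psi^+,\psi^-\rangle/((\psi^\pm)^2)$, graded with $\deg\psi^\pm=\pm1$, admits gradable band words of arbitrary length with net degree zero, giving infinitely many one-parameter families of indecomposables (with arbitrary Jordan thickenings) whose composition factors spread over three or more of the $\mc V_n$. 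These are neither uniserial strings nor diamonds, and your argument never shows they are subquotients of sums of $\Sigma^n\mc W_0^{-,k}$; so the classification-based induction is incomplete. (A side issue: the diamonds are not ``the projective covers $P(\mc V_n)$'' --- over each vertex there is a $\C^*$-family of pairwise non-isomorphic diamonds, distinguished by the ratio of $\psi^+\psi^-v$ to $\psi^-\psi^+v$ on the top vector, which already shows no single one can be projective; $\cbg$ has no projective objects.) There is also a smaller gap in the typical block: the ``disjoint conformal-weight spectra'' justification fails whenever $\upsilon$ is rational and non-integral, e.g.\ for $\upsilon=1/2$ the modules $\Sigma^n\mc W_\upsilon$ and $\Sigma^{n+2}\mc W_\upsilon$ have weights in the same $\Z$-cosets; the vanishing of $\mathrm{Ext}^1$ between distinct spectral flows is true but needs an actual computation (easiest by transporting to $KL^{[0]}$, where it is an elementary weight argument for finite-dimensional $\mathfrak{gl}(1|1)$-modules with $N_0$ semisimple).

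The atypical block can be repaired by abandoning the classification of indecomposables altogether. Any finite-length object $M$ of the block has cosocle $\bigoplus_i\mc V_{n_i}$, and since $E=\{\psi^+,\psi^-\}$ acts nilpotently, the radical acts nilpotently on $M$; hence $M$ is a quotient of a finite direct sum of truncations $P_{n_i}/\mathrm{rad}^N$, where $P_n$ is the cyclic module freely generated under $\psi^\pm$ (subject only to $(\psi^\pm)^2=0$) by a vector of degree $n$. Now $P_n/\mathrm{rad}^N$ embeds into $U_+\oplus U_-$, where $U_+$ (resp.\ $U_-$) is the length-$N$ uniserial alternating chain starting from $\mc V_n$ with $\psi^+$ (resp.\ $\psi^-$): map the generator to the sum of the two tops, and note that words beginning with $\psi^+$ land in $U_+$ and words beginning with $\psi^-$ land in $U_-$, independently, exactly as in $P_n$. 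Since $U_+$ and $U_-$ are interval subquotients of $\Sigma^{n}\mc W_0^{-,k}$ and $\Sigma^{n-1}\mc W_0^{-,k}$, every object --- strings, diamonds, and all higher bands alike --- is a subquotient of a finite sum of the listed modules. Your amalgamation construction is precisely the simplest instance of this argument; run in this generator-based form, it closes the gap.
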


This allows one to study the category $\cbg$ using free field realizations, since $\Sigma^n \mc W_0^\pm$ and $\Sigma^n \mc W_{\upsilon}$ and their extensions can be induced from Fock modules. In Section \ref{subsecBRSTAff}, we have introduced the free field realizaion of $\mc{V}_{\beta\gamma}$ using the Heisenberg VOA corresponding to $ \phi$ and $\eta$, with pairing $(\phi,\phi)=-( \eta,\eta)=1$. The embedding:
\begin{equation}
    \mc{V}_{\beta\gamma}\longrightarrow V_L:=\bigoplus\limits_n \Fock_{n\phi+n\eta}
\end{equation}
identifies Fock modules of the free field VOA with modules of $\mc{V}_{\beta\gamma}$. For each $\mu=a\phi+b\eta$ for $a-b\in \mathbb{Z}$, the Fock module $\Fock_{a\phi+b\eta}$ can be lifted to a Fock module of  $V_L$ given by:
\begin{equation}
   V_{L, \mu} =\bigoplus\limits_n \Fock_{(a+n)\phi+(b+n)\eta}.
\end{equation}
We have the following proposition, proved in \cite[Proposition 4.7]{wood2020admissible} and \cite[Proposition 4.1]{adamovic2019fusion}:

\begin{Prop}
Let $\mu=a\phi+b\eta$. When $a\notin \mathbb{Z}$, there is an isomorphism of $\mc{V}_{\bg}$ modules $V_{L, \mu}\cong \Sigma^{b-a-1} \mc W_{-a}$. When $a\in \mathbb{Z}$, there is an isomorphism of $\mc{V}_{\bg}$ modules $V_{L, \mu}\cong \Sigma^{b-a-1} \mc W_{0}^{-}$.

\end{Prop}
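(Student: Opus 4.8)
The plan is to determine the sector, reduce to a single base case by spectral flow, identify that base case by an explicit zero-mode computation, and finally confirm the identification with a character comparison. First I would read off the $\upsilon$-sector. Under the free-field realization the current $J^{\bg}=-\normord{\beta\gamma}$ becomes the Heisenberg field $\pd\eta$ (forced by its charge $-1$ on $\beta$ and its level $-1$), so $J^{\bg}_0$ acts on the Fock ground state $|(a+n)\phi+(b+n)\eta\rangle$ with eigenvalue $-(b+n)$. Hence every generalized $J^{\bg}_0$-eigenvalue on $V_{L,\mu}$ lies in $-b+\Z$, which by the standing hypothesis $a-b\in\Z$ equals $-a+\Z$; thus $V_{L,\mu}\in\mc C_{\bg,-a}$. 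The sector $\upsilon=-a$ is typical exactly when $a\notin\Z$ and atypical exactly when $a\in\Z$, which is already the dichotomy of the proposition.

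Next I would reduce to the case $b=a+1$ using spectral flow. The automorphism $\Sigma$ is attached to $J^{\bg}$, so on lattice modules it acts by translating the Fock label along $\eta$; tracking the $J^{\bg}_0$-weight together with the linear-in-$n$ conformal weights $h_n=h_0+n(a-b+\tfrac12)$ of the ground states fixes both the direction and the size of the shift, giving $\Sigma(V_{L,\mu})\cong V_{L,\mu+\eta}$. Since $\Sigma$ is exact and invertible (Section~\ref{sec:spectralflow}), it suffices to identify $V_{L,a\phi+(a+1)\eta}$ and then apply $\Sigma^{b-a-1}$, which reproduces the exponent in the statement.

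The crux is the base case. Here $\beta_0$ and $\gamma_0$ act as lattice shifts by $\pm(\phi+\eta)$ between adjacent summands, say $\beta_0 v_n = v_{n+1}$ and $\gamma_0 v_n = d_n v_{n-1}$ on the generating states $v_n$. From $[\gamma_0,\beta_0]=1$ one gets $d_{n+1}=d_n+1$, so $d_n=c_0+n$ is affine with unit slope and the eigenvalues of $\beta_0\gamma_0$ fill the coset $a+\Z$; in particular the single scalar $c_0$, extracted by evaluating $\gamma_0=\oint\tfrac{dz}{2\pi i}\,z^{-1/2}\normord{\pd\phi\,e^{-\phi-\eta}}(z)$ on a ground state, lies in $a+\Z$ and is therefore an integer precisely when $a\in\Z$. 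When $a\notin\Z$ the scalar $d_n$ never vanishes, both $\beta_0,\gamma_0$ act invertibly on the $v_n$, the module is irreducible, and its middle column is $\gamma_0^{-a}\C[\gamma_0,\gamma_0^{-1}]$ --- i.e.\ $\mc W_{-a}$. When $a\in\Z$ the affine function $d_n$ has an integral zero; the zero mode that survives invertibly is $\beta_0$ (whose middle column is $\C[\beta_0,\beta_0^{-1}]$), which selects the nonsplit extension \eqref{atyploewy-}, $0\to\mc V_1\to\mc W_0^-\to\mc V_0\to0$, rather than $\mc W_0^+$.

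The main obstacle is this last computation: the $\bg$-grading (the column picture) is not aligned with the Heisenberg conformal grading, so $\beta_0$ and especially $\gamma_0$ --- whose $\pd\phi$ insertion sits inside the normal ordering --- move states across both Fock summands and conformal degrees, and pinning down $c_0$ together with the precise (in)vertibility requires a careful contour/mode computation with lattice vertex operators. To guard against subtleties in distinguishing a module from a nonsplit self-extension, I would corroborate the identification by matching the full bigraded character of $V_{L,\mu}$ in $(J^{\bg}_0,L_0)$ against that of $\Sigma^{b-a-1}\mc W_{-a}$ (resp.\ $\Sigma^{b-a-1}\mc W_0^-$); combined with finite length and the sector computation of the first step, this forces the module isomorphism.
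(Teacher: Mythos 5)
The paper does not actually prove this proposition: it is imported verbatim, with the text just above it citing \cite[Proposition 4.7]{wood2020admissible} and \cite[Proposition 4.1]{adamovic2019fusion}. So there is no internal proof to compare against; your blind attempt is a from-scratch reconstruction, and it is in essence the same argument as in those references: read off the sector from $J_0^{\bg}=(\pd\eta)_0$, reduce to $b=a+1$ by spectral flow, analyze the zero-mode action on the Fock highest-weight vectors, and close with a character comparison. All the structural conclusions you reach are right: each Fock summand $\Fock_{(a+n)\phi+(b+n)\eta}$ is a single $J_0^{\bg}$-eigenspace with eigenvalue $-(b+n)$, so $V_{L,\mu}\in\mc C_{\bg,-a}$; in the base case the Fock ground states $v_n$ form the relaxed ``middle column,'' with $\beta_0 v_n=\pm v_{n+1}$ always nonzero (a lattice vertex operator has unit leading coefficient) and $\gamma_0 v_n=d_n v_{n-1}$ with $d_n\in a+\Z$ affine of slope one; and it is precisely the bijectivity of $\beta_0$ versus the kernel of $\gamma_0$ that selects $\mc W_{-a}$ for $a\notin\Z$ and the extension $\mc W_0^-$ of \eqref{atyploewy-} (rather than $\mc W_0^+$ or $\mc V_0\oplus\mc V_1$) for $a\in\Z$.

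Two points need repair, both fixable with ingredients you already have. First, a mode-convention trap at what you correctly identified as the crux: with the weight-$\tfrac12$ stress tensor, $\beta(z)$ and $\gamma(z)$ act on $V_{L,\mu}$ (and on the vacuum module) with \emph{integer} powers of $z$, so the paper's integer-indexed modes are the standard half-integer modes relabeled, $\beta_0=\beta^{\rm std}_{1/2}$ and $\gamma_0=\gamma^{\rm std}_{-1/2}$. Your contour formula $\gamma_0=\oint\tfrac{dz}{2\pi i}\,z^{-1/2}\gamma(z)$ extracts the coefficient of a half-integer power of $z$, which vanishes identically on $V_{L,\mu}$; the modes that actually connect ground states are extracted with measures $z^{0}\,dz$ (for $\beta_0$) and $z^{-1}\,dz$ (for $\gamma_0$), after which the $\pd\phi$ insertion indeed gives $d_n=a+n+\mathrm{const}$. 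Second, your endgame ``character + finite length + sector forces the isomorphism'' is not a proof on its own: $\mc W_0^+$, $\mc W_0^-$ and $\mc V_0\oplus\mc V_1$ all share the same character and sector. What closes the argument is an actual map: since the $v_n$ are annihilated by all positive modes and carry the computed zero-mode action, the universal property of the induced (relaxed Verma) module gives a morphism into $V_{L,\mu}$; in the typical case simplicity of $\mc W_{-a}$ makes it injective, in the atypical case injectivity follows because no proper quotient of $\mc W_0^-$ has $\beta_0$ acting bijectively on its column, and in both cases equality of the $(J_0,L_0)$-bigraded characters (whose pieces are finite-dimensional, since each Fock summand is a single $J_0$-eigenspace) then yields surjectivity, hence the isomorphism. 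With these two adjustments your proof is complete and coincides with the proof in the literature the paper cites.
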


For each $k>0$, there is a self extension of $\Fock_{a\phi+b\eta}$, generated by $(\phi+\eta)^{k-1}\vert \mu \rangle$, and it can be lifted to a module of $V_L$, as the action of $\exp (\phi(z)+\eta(z))$ is non-logarithmic. We denote the lift by $V_{L,\mu}^k$. The above identification can be generalized into the following:

\begin{Prop}
Let $\mu=a\phi+b\eta$. When $a\notin \mathbb{Z}$, there is an isomorphism of $\mc{V}_{\bg}$ modules $V_{L, \mu}^k\cong \Sigma^{b-a-1} \mc W_{-a}^k$. When $a\in \mathbb{Z}$, there is an isomorphism of $\mc{V}_{\bg}$ modules $V_{L, \mu}^k\cong \Sigma^{b-a-1} \mc W_{0}^{-, k}$.
\end{Prop}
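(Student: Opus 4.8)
The plan is to bootstrap from the $k=1$ statement (the preceding Proposition) using exactness of the relevant functors together with the explicit logarithmic structure of the self-extension, and I take as given that $V_{L,\mu}^k$ is a genuine $V_L$-module. The key structural fact to keep in mind is \emph{why} the extending field acts non-logarithmically: the self-extension $\Fock_{a\phi+b\eta}^k$ is the one whose momentum operator acquires a nilpotent part $N$ \emph{along the null direction} $\phi+\eta$ (this is what the cyclic generator $(\phi+\eta)^{k-1}\vert\mu\rangle$ records). Since $B(\phi+\eta,a\phi+b\eta)=a-b\in\Z$ and $B(\phi+\eta,\phi+\eta)=0$, the branching exponent of $\normord{e^{n(\phi+\eta)(z)}}$ on this module is the \emph{scalar} $n(a-b)$ with no nilpotent contribution, so the null lattice field acts without $\log z$. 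Note that $L_0$ nonetheless inherits a nilpotent part proportional to $(a-b)N$, so the lift is a genuinely logarithmic module — exactly matching the $\log(\gamma_0)^{k-1}$ appearing in the presentation of $\mc W_{-a}^k$.

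First I would produce the composition series. The Heisenberg self-extension carries the canonical filtration $0\subset \Fock_\mu\subset\cdots\subset\Fock_\mu^{\,k}$ with each subquotient $\cong\Fock_\mu$, coming from the powers $(\phi+\eta)^j\vert\mu\rangle$ of the nilpotent generator. Forming the $V_L$-lift (the lattice extension, which on underlying Heisenberg modules is the exact operation $\bigoplus_n\sigma^n$) and then restricting along $\mc{V}_{\beta\gamma}\hookrightarrow V_L$ is an exact additive functor, so it sends this filtration to a length-$k$ filtration of $V_{L,\mu}^k$ all of whose subquotients are $\cong V_{L,\mu}$. By the $k=1$ Proposition, $V_{L,\mu}\cong\Sigma^{b-a-1}\mc W_{-a}$ when $a\notin\Z$ and $V_{L,\mu}\cong\Sigma^{b-a-1}\mc W_0^{-}$ when $a\in\Z$, so every composition factor of $V_{L,\mu}^k$ is the corresponding object.

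Next I would pin down the isomorphism type. Spectral flow $\Sigma$ is exact and carries non-split extensions to non-split extensions and Loewy diagrams to Loewy diagrams (Prop.~2.5 of \cite{creutzig2019schur}), so it suffices to identify $\Sigma^{-(b-a-1)}V_{L,\mu}^k$, which is an indecomposable length-$k$ self-extension whose factors are all $\mc W_{-a}$ (resp.\ a self-extension of $\mc W_0^{-}$). For $a\notin\Z$ the uniqueness of the $k$-th self-extension of $\mc W_{-a}$ recorded in the excerpt then forces $\Sigma^{-(b-a-1)}V_{L,\mu}^k\cong\mc W_{-a}^k$, i.e.\ $V_{L,\mu}^k\cong\Sigma^{b-a-1}\mc W_{-a}^k$. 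For $a\in\Z$ uniqueness is more delicate because $\mc W_0^{-}$ is reducible, so instead I would observe that $\mc W_0^{-,k}$ is itself \emph{defined} in \cite{BN22} as the $\mc{V}_{\beta\gamma}$-module obtained by lifting a Heisenberg self-extension $\Fock_{\mu'}^{\,k}$ with $\mu'$ of integral $\phi$-charge; since $\Fock_\mu^{\,k}$ and $\Fock_{\mu'}^{\,k}$ are related by the integer spectral flow $\Sigma^{b-a-1}$, functoriality of the lift gives $V_{L,\mu}^k\cong\Sigma^{b-a-1}\mc W_0^{-,k}$ directly.

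The hard part will be the indecomposability / single-Jordan-block claim used above — namely verifying that the maximal nilpotent structure of $\Fock_\mu^{\,k}$ is transported faithfully, without splitting or fragmenting into smaller blocks, through the composite ``null-lattice extension followed by restriction to $\mc{V}_{\beta\gamma}$.'' The cleanest way I know to settle this, and to make the whole argument self-contained (bypassing the uniqueness input entirely), is to track the cyclic generator through the free-field dictionary. Under $\mc{V}_{\beta\gamma}\hookrightarrow V_L$ one has $\gamma\mapsto\normord{\partial\phi\,e^{-\phi-\eta}}$, so the zero mode $\gamma_0$ is governed by the null momentum $(\phi+\eta)$ and $\log\gamma_0$ corresponds precisely to the nilpotent direction $N$. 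Thus the generator $(\phi+\eta)^{k-1}\vert\mu\rangle$ maps to $\gamma_0^{-a}\log(\gamma_0)^{k-1}$ (up to the spectral flow $\Sigma^{b-a-1}$), which is exactly the cyclic vector in the degree-zero piece $\gamma_0^{\upsilon}\log(\gamma_0)^{k-1}\C[\gamma_0,\gamma_0^{-1}]$ of $\mc W_{-a}^k$ with $\upsilon=-a$. Matching generators this way exhibits an isomorphism extending the $k=1$ one, simultaneously confirming indecomposability and the precise block length $k$, and completing the identification in both cases.
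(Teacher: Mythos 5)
The paper itself contains no proof of this Proposition: it is recalled from \cite{BN22} as the direct generalization of the $k=1$ case, the only in-text ingredient being the observation (which you reproduce correctly) that the self-extension is taken along the null direction $\phi+\eta$, so the lattice operators are integer-moded and non-logarithmic on it and the lift $V_{L,\mu}^k$ exists. Judged on its merits, your argument for $a\notin\Z$ is correct and can be closed up rigorously: exactness of lift-then-restrict gives the $k$-step filtration with subquotients $V_{L,\mu}\cong\Sigma^{b-a-1}\mc W_{-a}$, and your free-field tracking is, in substance, the computation that the zero mode of $J^{\bg}=\partial\eta$ acts on the lift with nilpotent part equal to minus differentiation in the extension variable, so all of its Jordan blocks have size exactly $k$. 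Since the nilpotent part of $J_0$ is a module endomorphism and hence vanishes on every simple module, a nontrivial direct-sum decomposition of a length-$k$ module would cap the block size strictly below $k$; this gives indecomposability, and the uniqueness of the $k$-th self-extension of the simple $\mc W_{-a}$ then forces the isomorphism.

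The genuine gap is the $a\in\Z$ case, and both of your routes around it fail as written. First, the Jordan-block count no longer yields indecomposability: there $V_{L,\mu}^k$ has length $2k$ (each subquotient $\Sigma^{b-a-1}\mc W_0^{-}$ has the two factors $\mc V_{b-a}$ and $\mc V_{b-a-1}$) while the nilpotency index of $J_0$ is only $k$, so a splitting into, say, a length-$k$ self-extension of $\mc V_{b-a}$ plus a length-$k$ self-extension of $\mc V_{b-a-1}$ is not excluded, and no uniqueness statement for self-extensions of the reducible $\mc W_0^{-}$ is available. Second, your fallback rests on the assertion that $\mc W_0^{-,k}$ is \emph{defined} in \cite{BN22} as the lift of a Fock self-extension; the present paper introduces these modules only as ``indecomposable self extensions'' of $\mc W_0^{\pm}$, following the pattern of defining such objects by logarithmic column pictures (as it does for $\mc W_\upsilon^k$), and if that is the definition then your appeal is circular --- it assumes exactly the identification being proved. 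The concluding generator-matching sketch does not repair this: declaring where the cyclic vector should go does not produce a module homomorphism until the mode-algebra relations are verified (that verification is the actual content), and for $a\in\Z$ the logarithm in the column picture sits on $\beta_0$ rather than $\gamma_0$, a mismatch your sketch elides by treating ``both cases'' uniformly. What is needed for $a\in\Z$ is an explicit check that the lift contains a vector annihilated by all positive modes on which $\beta_0$, $\gamma_0$ and $J_0$ act as on the generator of $\Sigma^{b-a-1}\mc W_0^{-,k}$, together with a cyclicity or character comparison showing this vector generates the whole lift.
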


\subsubsection{Kazhdan-Lusztig category of affine $\mathfrak{gl}(1|1)$}

Let $\widehat{\mathfrak{gl}(1|1)}$ be the affine Lie superalgebra associated to the Lie superalgebra $\mathfrak{gl}(1|1)$. This is the perturbative boundary VOA of the B twist of $\mathrm{SQED}_1$, the theory with $\rho=(1)$.
A generalized $\widehat{\mathfrak{gl}(1|1)}$ module $W$ is called \emph{finite-length} if it has a finite composition series of irreducible $\widehat{\mathfrak{gl}(1|1)}$ modules. $W$ is called \emph{grading restricted} if it is graded by generalized conformal weights (the generalized eigenvalues of $L_0$), the weight spaces are all finite-dimensional, and the generalized conformal weights are bounded from below. For more details, see \cite{creutzig2017tensor}. 

\begin{Def}
The Kazhdan-Lusztig category $KL$ is defined as the category of finite-length grading-restricted generalized $\widehat{\mathfrak{gl}(1|1)}$ modules. 
\end{Def}

The category $KL$ is studied in detail in \cite{creutzig2020tensor}, in which it is shown that it is a rigid braided tensor category. Moreover, it admits a decomposition:
\be
KL=\bigoplus\limits_{e\in \C}KL_e
\ee
where $KL_e$ is the full subcategory of modules where the action of $E_0$ has generalized eigenvalue $e$. The fusion product respects this decomposition in the sense that $KL_e\times KL_{e'}\mapsto KL_{e+e'}$. 

Let $M$ be a finite-dimensional module of $\mathfrak{gl}(1|1)$, then one can obtain a module of $\widehat{\mathfrak{gl}(1|1)}$ in the following way. Let $\widehat{\mathfrak{gl}(1|1)}_{\geq 0}$ be the Lie subalgebra of $\widehat{\mathfrak{gl}(1|1)}$ generated by non-negative modes. One can view $M$ as a module of $\widehat{\mathfrak{gl}(1|1)}_{\geq 0}$ via the map $\widehat{\mathfrak{gl}(1|1)}_{\geq 0}\to \widehat{\mathfrak{gl}(1|1)}$. Define the module $\widehat{M}$ as the induction:
\be
\widehat{M}=U(\widehat{\mathfrak{gl}(1|1)})\otimes_{U(\widehat{\mathfrak{gl}(1|1)}_{\geq 0})}M.\ee
Then $\widehat{M}$ is an object in $KL$ and by the definition of $KL$, any simple object is a quotient of $\widehat{M}$ for some $M$. 

The Lie algebra $\mathfrak{gl}(1|1)$ has the following set of simple modules:
\begin{enumerate}

\item $A_{n,0}$, where $N$ acts with weight $n$ and all other modes act as zero. This module is one-dimensional.

\item $V_{n, e}$ where $N$ acts with weight $n\pm \frac 12$, and $E$ acts with weight $e\ne 0$. This module is two-dimensional. 

\end{enumerate}

From the above, any simple module in $KL$ is a quotient of $\widehat{V}_{n,e}$ or $\widehat{A}_{n,0}$. The following is shown in \cite{Creutzig:2011cu}:

\begin{itemize}

\item $\widehat{V}_{n,e}$ is irreducible iff $e\notin\mathbb{Z}$.

\item When $e\in \Z$ but $e\ne 0$, $\wh{V}_{n, e}$ is reducible, and fits into the following short exact sequence:
\begin{equation}
\begin{tikzcd}
0\rar &  \widehat{A}_{n+1,e}\rar & \widehat{V}_{n-e/2,e}
\rar & \widehat{A}_{n,e}\rar & 0 & (e>0)\\
 0\rar & \widehat{A}_{n-1,e}\rar & \widehat{V}_{n-e/2,e}
\rar & \widehat{A}_{n,e}\rar & 0 & (e<0)
\end{tikzcd}
\end{equation}

\end{itemize}

The modules $\wh{A}_{n,e}$ are simple currents of $\widehat{\mathfrak{gl}(1|1)}$ as they can be defined as the image of the following spectral-flow automorphisms $\sigma_{n,l}$:
\be
	\sigma_{n,l}(N) = N - \frac{n}{z} \qquad \sigma_{n,l}(E) = E - \frac{l}{z} \qquad \sigma_{n,l}(\psi_\pm) = z^{\mp l} \psi_\pm.
\ee
Introduce a function $\epsilon(l)$ on $\mathbb{Z}$ given by:
\begin{equation}
\epsilon(l) = \left\{
\begin{array}{rl}
-\frac{1}{2} & \text{if } l < 0,\\
0 & \text{if } l = 0,\\
\frac{1}{2} & \text{if } l > 0.
\end{array} \right.
\end{equation}
Define $\epsilon(l,l')=\epsilon(l)+\epsilon(l')-\epsilon(l+l')$. The simple currents $\wh{A}_{n, e}$ have the following simple fusion rules:
\be
\wh{A}_{n, l}\otimes_{\widehat{\mathfrak{gl}(1|1)}} \wh{A}_{n', l'}\cong \wh{A}_{n+n'-\epsilon(l, l'), l+l'}. 
\ee

As we have done in Section \ref{sec:mon1}, the full boundary VOA is the extension of $\vgl$ by monopole operators, which can be identified with the modules $\widehat{A}_{-m/2+\epsilon(m),m}$. The following, which was derived in \cite{creutzig2013w},  is an example of Theorem \ref{thm:VOA}:
\be
\mc{V}_{\bg}\otimes \mc{V}_{bc}=\bigoplus_{m} \widehat{A}_{-m/2+\epsilon(m),m}.
\ee
Namely, $\mc{V}_{\bg}\otimes \mc{V}_{bc}$ is a simple current extension of $\vgl$. Dually, there is a $\C^*$ action on $\mc{V}_{\bg}\otimes \mc{V}_{bc}$, induced by the action of $E_0$, such that $\vgl$ is the $\C^*$ invariant subspace. The technique of simple current extensions implies that there is a braided tensor category:
\be
\mc{V}_{\bg}\otimes \mc{V}_{bc}\mathrm{-Mod}_{\mathrm{loc}}(\text{Ind}(KL))
\ee
together with a full and surjective tensor functor: 
\be
\mc L: \text{Ind}(KL)^{[0]}\longrightarrow \mc{V}_{\bg}\otimes \mc{V}_{bc}\mathrm{-Mod}_{\mathrm{loc}}(\text{Ind}(KL)).
\ee
Here $\text{Ind}(KL)^{[0]}$ is the subcategory of $\text{Ind}(KL)$ that has trivial monodromy with $\mc{V}_{\bg}\otimes \mc{V}_{bc}$. Another main result of \cite{BN22} is the following:

\begin{Prop}
The image of $KL^{[0]}$ under $\mc L$ is identified with $\cbg$, and so $\cbg$ has the structure of a braided tensor category.
\end{Prop}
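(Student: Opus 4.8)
The plan is to lean on Theorem \ref{Lexact}, which already produces a braided tensor functor $\mc L:\mathrm{Ind}(KL)^{[0]}\to \mc{V}_{\bg}\otimes \mc{V}_{bc}\mathrm{-Mod}_{\mathrm{loc}}(\mathrm{Ind}(KL))$ that is full, surjective, sends simples to simples, and preserves composition series. The last three properties are available precisely because fusion on $KL$ is exact \cite{creutzig2020tensor} and, as one must check, $KL$ is free of simple-current fixed points. Granting this, the identity $\mc L(KL^{[0]})=\cbg$ reduces to two tasks: classifying the simple objects of $KL^{[0]}$, and identifying their images under $\mc L$ with the composition factors that define $\cbg$. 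First I would pin down $KL^{[0]}$ by computing the monodromy of each simple $\vgl$-module against the extension algebra $\mc A=\mc{V}_{\bg}\otimes \mc{V}_{bc}=\bigoplus_m \widehat{A}_{-m/2+\epsilon(m),m}$ of \cite{creutzig2013w}. Since the $\widehat{A}_{n,e}$ are spectral-flow simple currents, the double-braiding $\mu(\mc A,-)$ is a scalar determined by the generalized eigenvalues of the Cartan zero-modes via the level pairing of \eqref{eqbilinearBper}; the $[0]$ condition then selects exactly the $E_0$-blocks with integral coset, so the surviving simples are the irreducible $\widehat{V}_{n,e}$ with $e\notin\Z$ (irreducibility by \cite{Creutzig:2011cu}) together with the $e=0$ simple currents $\widehat{A}_{n,0}$.

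Next I would evaluate $\mc L$ on these simples, and here the free-field realization of Section \ref{subsec:baby} is the right tool: both $\mc{V}_{\bg}$ and $\vgl$ embed into the same rank-two lattice/Heisenberg VOA, and $\mc L$ intertwines the two inductions from a common family of Fock modules $V_{L,\mu}$. Using the identifications $V_{L,\mu}\cong \Sigma^{b-a-1}\mc W_{-a}$ for $a\notin\Z$ and $V_{L,\mu}\cong \Sigma^{b-a-1}\mc W_0^-$ for $a\in\Z$ from the Propositions of Section \ref{subsec:baby}, one reads off that $\mc L(\widehat{V}_{n,e})\cong \Sigma^n\mc W_\upsilon$ for the appropriate $\upsilon\neq 0\bmod\Z$ (the class of $e$), while $\mc L(\widehat{A}_{n,0})\cong \mc V_n$. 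These are exactly the composition factors appearing in the definition of $\cbg$. Because $\mc L$ is exact and preserves composition series, and $\cbg$ is by construction the smallest finite-length abelian category assembled from precisely these factors, the bijection on simples upgrades to the equality $\mc L(KL^{[0]})=\cbg$, now understood as an identification of the finite-length part of $\mc{V}_{\bg}\otimes \mc{V}_{bc}\mathrm{-Mod}_{\mathrm{loc}}(\mathrm{Ind}(KL))$.

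Finally, the braided tensor structure on $\cbg$ comes for free: the local module category carries such a structure by \cite{creutzig2017tensor}, and the equivalence just established transports it to $\cbg$ (after discarding the trivial $\mc{V}_{bc}$ factor via the equivalence \eqref{iso-bc}), with exactness of fusion descending from exactness on $KL$. I expect the main obstacle to be the monodromy computation that cuts out $KL^{[0]}$ together with the verification that $\mc L$ both lands in and exhausts the finite-length local modules — in other words, that the de-equivariantization $KL^{[0]}/\Z$ neither collapses nor creates objects. Establishing genuine fixed-point freeness and the exactness hypotheses of Theorem \ref{Lexact} in the $\mathrm{Ind}$-completed setting, so that composition series are honestly preserved and every simple of $\cbg$ is actually hit, is the delicate technical heart of the argument; the explicit free-field description of the indecomposables in Theorem \ref{ThmCbgindecom} is what makes these checks tractable.
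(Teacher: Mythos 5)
Your overall strategy --- invoke Theorem \ref{Lexact}, cut out $KL^{[0]}$ by a monodromy computation, evaluate $\mc L$ on simples via free fields, and transport the braided structure --- is the same route as the paper's (more precisely, the route of \cite{BN22}, which the paper cites for this Proposition rather than reproving it). But there is a genuine gap at exactly the step you yourself call the ``delicate technical heart'': your identification of $KL^{[0]}$ is wrong. The trivial-monodromy condition against the extension algebra $\mc A=\bigoplus_m\widehat{A}_{-m/2+\epsilon(m),m}$ is \emph{not} a condition on ``$E_0$-blocks with integral coset''; it is the condition that $N_0$ act semisimply with \emph{integer} eigenvalues, i.e.\ $KL^{[0]}=KL^N$ (this is stated in the paper immediately after the Proposition, is the $\rho=(1)$ case of Proposition \ref{prop:intBside}, and is mirror to the $J_0$-integrality of Proposition \ref{Propcbgsscbgmono}). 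The mechanism: since $\kappa(N,E)=1$, the $E_0$-weights of $\mc A$ (the integers $m$) pair in the monodromy scalar against the $N_0$-weights of the module, while the $N_0$-weights of the particular family $\widehat{A}_{-m/2+\epsilon(m),m}$ are arranged so that the dependence on the module's $E_0$-weight cancels. Hence it is the $N_0$-spectrum that gets quantized, and $e$ remains a free continuous label.

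Concretely, the simples of $KL^{[0]}$ are $\widehat{A}_{n,0}$ with $n\in\Z$ and $\widehat{V}_{n,e}$ with $n+\tfrac12\in\Z$, $e\notin\Z$; your list drops the integrality constraint on $n$ entirely. With the unconstrained list the argument fails downstream: a typical $\widehat{V}_{n,e}$ with $n+\tfrac12\notin\Z$ would have to lift to a ``spectral flow by a non-integral amount'' of $\mc W_e$, which is not an object of $\cbg$ --- indeed not a local $\mc{V}_{\bg}\otimes \mc{V}_{bc}$-module at all, precisely because its monodromy with $\mc A$ is the nontrivial scalar fixed by the fractional part of its $N_0$-weights --- so the image you compute would be strictly larger than $\cbg$ and the claimed equality collapses. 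A secondary omission in the same step: trivial monodromy also forces \emph{semisimplicity} of $N_0$ (a nontrivial Jordan block makes the monodromy unipotent rather than the identity), so working with ``generalized eigenvalues'' as you do is insufficient; this is why the paper insists on semisimple action in $KL^N$ and in Propositions \ref{Propcbgsscbgmono} and \ref{prop:intBside}. Once $KL^{[0]}=KL^N$ is correctly in place, the remainder of your argument (free-field evaluation giving $\mc L(\widehat{A}_{n,0})\cong\mc V_n$ and $\mc L(\widehat{V}_{n,e})\cong\Sigma^{n-1/2}\mc W_e$, preservation of composition series, fixed-point freeness, transport of the braiding) goes through as in \cite{BN22}.
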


The category $KL^{[0]}$ can be identified with the category $KL^N$, the subcategory of $KL$ consisting of objects on which the action of $N_0$ is semi-simple with integer eigenvalues, and the lifting functor identifies $\cbg$ with the de-equivariantization $KL^{N}_{\C^*}$. Since the de-equivariantization identifies an object $\mc M$ with $\widehat{A}_{-m/2+\epsilon(m),m}\otimes_{\widehat{\mathfrak{gl}(1|1)}} \mc M$, we can also denote this suggestively as $KL^N/\Z$. The lifting functors respects the decomposition of both $KL$ and $\cbg$, since it maps objects in $KL_e$ to objects in $\mc C_{\bg, \upsilon}$ for any representative $e\in \upsilon$. We can match modules of the VOA across this equivalence as follows. 

\begin{enumerate}

\item The image of $\wh{A}_{n,0}$ for $n\in \Z$ under $\mc L$ is identified with the spectral flow $\mc{V}_{n}$. From the bulk perspective of Section \ref{sec:Blines}, these are the Wilson lines $\W_n$.

\item The image of $\wh{V}_{n,e}$ for $n+\frac{1}{2}\in \Z$ is identified with $\Sigma^{n-1/2}\mc W_{e}$. From the bulk perspective, these are the monodromy defects $\W_{n-1/2}^{(e)}$. 

\end{enumerate}

In the following sections, we will generalize this baby example, and define braided tensor categories. In particular:

\begin{itemize}
    \item The category $\mc C_\rho^A$ will be defined as a de-equivariantization of $\cbg^{\boxtimes n}$, using the fact that $\wt{\mc V}_\rho^A$ is a simple current extension of $\mc V_{\bg}^{V}$. 

    \item The category $\mc C_\rho^B$ will be defined as a de-equivariantization of $KL_\rho$, the Kazhdan-Lusztig category of $\widehat{\grho}$, since $\mc V_\rho^B$ is a simple current extension of $\widehat{\grho}$. 
    
\end{itemize}

Objects in $\mc C_\rho^A$ are lifted from $\cbg^{\boxtimes n}$, which allows us to compare these objects with A twist bulk line operators in Section \ref{sec:A-lines}. Objects in $\mc C_\rho^B$ are lifted from $KL_\rho$, which allows us to compare these objects with B twist bulk line operators in Section \ref{sec:Blines}. The proof of the main result, Theorem \ref{ThmCACB}, is also based on this example, and Theorem \ref{ThmMirSymbdVOA}.

\subsection{The category of line operators for the A twist}\label{subsec:LineA}

\subsubsection{The definition of the category}

In Section \ref{subsecNeumannAVOA} we defined the VOA $\mc V_\rho^A$ and studied its free-field realizations. We'd now like to construct an abelian category of modules $\CC_\rho^A$ for $\CV_\rho^A$ that has a braided tensor structure defined via intertwining operators, and whose simple objects correspond to the flavor vortices and flavor defects of Section \ref{sec:A-lines}.

Recall that in Section \ref{subsecgaugeextA} we constructed an ``alternative'' vertex algebra $\wt\CV_\rho^A$ as the $\Z^{r}$ simple current extension of $n$ copies of $\CV_{\beta\gamma}$.
We proved in Theorem \ref{ThmMirSymbdVOA} that $\wt\CV_\rho^A$ is Morita-equivalent to $\CV_\rho^A$, as we have an isomorphism:
\be
\wt{\CV}_\rho^A\otimes \mc V_{bc}^V\cong \mc V_\rho^A\otimes \mc V_{\Z^{2r}}^{+-},
\ee
where $\mc V_{bc}^V$ and $\mc V_{\Z^{2r}}^{+-}$ are lattice VOA associated to self-dual lattices, and therefore are Morita-trivial. 
We will analyze the simple current extension (and thus modules of $\wt\CV_\rho^A$) at the level of categories, rather than constructing modules of $\CV_\rho^A$ directly.

From the perspective of Section \ref{sec:highersym}, our strategy is to start with the category of line operators for $n$ free hypermultiplets in the A twist --- now with braided tensor structure --- and to obtain the category for $\CT_\rho^A$ by gauging a $\Z^r$ one-form symmetry.

The $n$-th Deligne product $\cbg^{\boxtimes n}$ is a category of modules of the VOA $\mc{V}_{\bg}^{\otimes n}$, and can be alternatively defined as the smallest full abelian subcategory of $\mc{V}_{\bg}^{\otimes n}$ modules that are finite length, whose composition series are tensor products of $\Sigma^n \mc W_{\upsilon}$ and $\CV_n$. As we have reviewed in the previous section, this category has the structure of a braided tensor category via logarithmic intertwining operators. It also has a decomposition:
\be \label{Cbg-ndecomp}
\cbg^{\boxtimes n}=\bigoplus\limits_{\upsilon\in (\C/\Z)^n} \overset{n}{\underset{i=1}{\boxtimes}} \mc C_{\beta\gamma, \upsilon_i}.
\ee
Each direct summand in the definition of $\wt{\mc{V}}_{\rho}^A$ is a simple current of $\mc{V}_{\bg}^{\otimes n}$, and thus an object in $\cbg^{\boxtimes n}$. In fact, we can recognize the direct summands as follows:
\be
\V_{\rho(\mu)}\longleftrightarrow \bigotimes\limits_{1\leq i\leq n} \CV_{\rho(\mu)_i}, \text{ for all } \mu\in \Z^r,
\ee
such that the OPE of $\wt{\mc V}_\rho^A$ induces the fusion rule:
\be
\V_{\rho(\mu)}\otimes_{\mc V_{\bg}^{\otimes n}} \V_{\rho (\mu')}\cong \V_{\rho (\mu+\mu')}.
\ee
Therefore, the VOA $\wt{\mc{V}}_{\rho}^A$ can be viewed as an algebra object in $\mathrm{Ind}(\cbg^{\boxtimes n})$. Following Section~\ref{subsecsimplecurrent}, there is a lifting functor of braided tensor categories:
\be
\mc L_A: \cbg^{\boxtimes n, \rho,[0]}\longrightarrow \wt{\mc{V}}_{\rho}^A\mathrm{-Mod}_{\text{loc}} \left(\mathrm{Ind}(\cbg^{\boxtimes n}) \right).
\ee
Here $\mc C_{\beta\gamma}^{\boxtimes n,\rho, [0]}$ is the subcategory of $\mathcal{C}_{\beta\gamma}^{\boxtimes n}$ whose objects have trivial monodromy with $\wt{\mc{V}}_{\rho}^A$.

\begin{Def}\label{def:lineopeA}
Let us denote by $\mc C_\rho^A$ the image of $\mathcal{C}_{\beta\gamma}^{\boxtimes n, \rho, [0]}$ under $\mathcal L_A$:
\begin{equation}
    \mc C_\rho^A:= \mathcal{L}_A \left(\mathcal{C}_{\beta\gamma}^{\boxtimes n, \rho, [0]}\right).
\end{equation}
This is a braided tensor category via logarithmic intertwiners.
\end{Def}

Note that the symplectic boson VOA $\mc{V}_{\bg}^{\otimes n}$ is an orbifold of $\wt{\mc{V}}_{\rho}^A$ by $(\C^*)^r$ if we assign the weight of $\mathbb{V}_{\rho (\mu)}$ to be $\mu$. Indeed, in this case, the only invariant part of $\wt{\mc{V}}_{\rho}^A$ is $\mathbb{V}_0$ which is simply $\mc{V}_{\bg}^{\otimes n}$. We conclude that $\mc{C}_\rho^A$ is a de-equivariantization $\mathcal{C}_{\beta\gamma, (\C^*)^r}^{\boxtimes n, \rho, [0]}$, or $\mathcal{C}_{\beta\gamma}^{\boxtimes n, \rho, [0]}/\Z^r$.

The advantage of this definition is that we immediately obtain a braided tensor category, but a disadvantage is that it involves a seemingly complicated requirement of trivial monodromy. In the next section we show that this monodromy condition is something very explicit, and amounts to the integrability of an abelian group action. 

\subsubsection{Locality requirements}

In order to classify objects in the category $\mc C_\rho^A$, we must understand the category $\mathcal{C}_{\beta\gamma}^{\boxtimes n, \rho, [0]}$, that is, the sub-category whose objects have trivial monodromy with $\wt{V}_\rho^A$. Recall the $\mathfrak{gl}(1)$ currents in $\CV_{\bg}^{\otimes n}$ given by:
\be
J^i(z)=-\normord{\beta^i\gamma^i}.
\ee
We will show, in the following proposition, that this monodromy condition is an integrability condition. More specifically, we show that a module $\mc M$ has trivial monodromy with $\wt{V}_\rho^A$ if and only if the action of $  \sum \rho_{ia}J^i_{0}$ induces a $(\mathbb{C}^*)^r$ action on $\mc M$. 

\begin{Prop}\label{Propcbgsscbgmono}
   An object $\mc M$ belongs to $\mathcal{C}_{\beta\gamma}^{\boxtimes n, \rho, [0]}$ if and only if $\sum \rho_{ia}J^i_0$ acts semi-simply with integer eigenvalues.
\end{Prop}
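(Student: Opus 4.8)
\emph{Proof plan.} By Definition \ref{def:lineopeA}, an object $\mc M$ of $\cbg^{\boxtimes n}$ lies in $\mathcal C_{\beta\gamma}^{\boxtimes n,\rho,[0]}$ precisely when it has trivial monodromy with $\wt{\mc V}_\rho^A=\bigoplus_{\mu\in\Z^r}\V_{\rho(\mu)}$, i.e. $\mu(\V_{\rho(\mu)},\mc M)=\mathrm{id}$ for every $\mu\in\Z^r$. The first step is to reduce this to a finite check on generators. Since the $\V_{\rho(\mu)}$ are simple currents with $\V_{\rho(\mu)}\otimes_{\mc V}\V_{\rho(\mu')}\cong\V_{\rho(\mu+\mu')}$, and the monodromy is multiplicative in its simple-current argument (naturality of the braiding with respect to fusion), it suffices to impose $\mu(\V_{\rho_a},\mc M)=\mathrm{id}$ for the $r$ generators $\V_{\rho_a}=\V_{\rho(e_a)}$, $a=1,\dots,r$.

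Second, I would compute these monodromies via spectral flow. Writing $\widetilde J_a:=\sum_i\rho_{ia}J^i$ with $J^i=-\normord{\beta^i\gamma^i}$, the simple current $\V_{\rho(\mu)}=\Sigma^{\rho(\mu)}(\mc V_{\beta\gamma}^{\otimes n})$ is the spectral flow of the vacuum module associated to the Heisenberg current $\sum_i\rho(\mu)_iJ^i=\sum_a\mu_a\widetilde J_a$. Using the identification $\V_{\rho(\mu)}\otimes_{\mc V}\mc M\cong\Sigma^{\rho(\mu)}\mc M$ together with the balancing axiom $\theta_{X\otimes Y}=(\theta_X\otimes\theta_Y)\circ\mu(X,Y)$ for the ribbon twist $\theta=e^{2\pi iL_0}$, the self-pairing contributions $\tfrac12(\rho(\mu),\rho(\mu))$ coming from the twist of the simple current and from the conformal-weight shift under spectral flow cancel (as does the $h_{\mc M}$ dependence), leaving
\be
\mu(\V_{\rho(\mu)},\mc M)=\exp\!\Big(2\pi i\sum_a\mu_a\widetilde J_{a,0}\Big)
\ee
acting on $\mc M$ under the above identification. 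Because the zero modes $\widetilde J_{a,0}$ mutually commute and $\mu_a\in\Z$, triviality for all $\mu\in\Z^r$ is equivalent to $e^{2\pi i\widetilde J_{a,0}}=\mathrm{id}$ on $\mc M$ for each $a$.

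Finally, I would conclude by an elementary linear-algebra argument. The current $\widetilde J_a$ has conformal weight one, so $\widetilde J_{a,0}$ commutes with $L_0$ and preserves the finite-dimensional generalized $L_0$-eigenspaces of the finite-length module $\mc M$. On each such space, $e^{2\pi i\widetilde J_{a,0}}=\mathrm{id}$ holds if and only if $\widetilde J_{a,0}$ is diagonalizable with eigenvalues in $\Z$: a nonzero nilpotent (logarithmic) part would produce a nontrivial unipotent factor $1+2\pi iN+\cdots$, while a non-integral eigenvalue $\lambda$ would give a factor $e^{2\pi i\lambda}\neq1$. Ranging over all weight spaces yields exactly the condition that $\sum_i\rho_{ia}J^i_0$ act semisimply with integer eigenvalues, as claimed. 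The main obstacle is the monodromy formula of the second step: one must justify, in the logarithmic (non-semisimple) setting, that the double braiding of the spectral-flow simple current with $\mc M$ is literally $\exp(2\pi i\sum_a\mu_a\widetilde J_{a,0})$ \emph{including} on the unipotent part, which requires the precise properties of Li's $\Delta$-operator \cite{Li:1997} and the compatibility of the braiding with spectral flow. It is precisely this unipotent contribution that detects the semisimplicity half of the statement, so it must be tracked carefully rather than discarded.
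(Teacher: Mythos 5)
Your strategy reaches the right criterion, but it is genuinely different from the paper's proof, and it contains one repairable error. The paper never invokes a ribbon/balancing structure. Instead it computes the monodromy by hand in the free-field realization (Lemma \ref{LemmonoAq}): for modules $\mc M$ restricted from the lattice VOA $V_L^{\otimes n}$, generated by vectors $(\phi^{i_1}+\eta^{i_1})\cdots(\phi^{i_k}+\eta^{i_k})\vert\nu\rangle$, the fusion with $\V_{\rho(\mu)}$ is realized by the explicit intertwiner $\normord{e^{\sum_i\rho(\mu)_i\phi^i}}$, whose $\log z$-dependence is tracked directly; rotating $z\mapsto e^{2\pi i}z$ yields the monodromy $\mathrm{Id}\otimes e^{-2\pi i\sum_a\mu^a\rho_{ia}J_0^i}$, unipotent part included. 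The formula is then propagated to \emph{all} objects using Theorem \ref{ThmCbgindecom} (every object of $\cbg^{\boxtimes n}$ is a subquotient of a finite direct sum of such free-field modules) together with functoriality of monodromy under subobjects and quotients. This is precisely how the paper discharges what you correctly flag as your main obstacle: your route instead needs the balancing axiom for the HLZ twist $\theta=e^{2\pi iL_0}$ on $\mathrm{Ind}(\cbg^{\boxtimes n})$ and the naturality of Li's identification $\V_{\rho(\mu)}\otimes_{\mc V}\mc M\cong\Sigma^{\rho(\mu)}\mc M$. Those inputs are in fact available, and your cancellation of the quadratic self-pairing terms is the right computation (valid on unipotent parts because $[L_0,\widetilde J_{a,0}]=0$), so your argument can be completed; what it buys is generality, since it applies to any simple current arising as spectral flow along a Heisenberg current, with no need for a free-field realization or a density statement like Theorem \ref{ThmCbgindecom}. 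What the paper's route buys is that the logarithmic contribution is exhibited explicitly rather than certified by categorical axioms.

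The error: in your final linear-algebra step you assert that the generalized $L_0$-eigenspaces of a finite-length object are finite-dimensional. They are not for these modules — e.g.\ the lowest conformal-weight space of $\mc W_\upsilon$ is $\gamma_0^{\upsilon}\C[\gamma_0,\gamma_0^{-1}]$, which is infinite-dimensional because the zero mode $\gamma_0$ has conformal weight zero. The fix is to work instead with the joint generalized eigenspaces of the commuting operators $L_0,J_0^1,\dots,J_0^n$ (finite-dimensional on finite-length objects of $\cbg^{\boxtimes n}$), or simply to observe that each $J_0^i$ acts locally finitely; the elementary equivalence ``$e^{2\pi iN}=\mathrm{id}$ iff $N$ is semisimple with integer eigenvalues'' then applies on each finite-dimensional invariant piece, and your conclusion stands.
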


The proof uses free field realizations. Recall the embedding $\mc{V}_{\bg}^{\otimes n}\to V_L^{\otimes n}$, and the embeddings of simple modules $\mathbb{V}_{\mu}\to V_{L,\mu}^{\otimes n}$ for $\mu\in \Z^n$. Let $\mc M$ be any module of the VOA  $V_{L}^{\otimes n}$, then the fusion product over free field algebra gives an isomorphism:
\begin{equation}
\mathbb{V}_{\mu}\otimes_{\mc{V}_{\beta\gamma}^{\otimes n}}\mc M\cong V_{L,\mu}^{\otimes n}\otimes_{V_{L}^{\otimes n}}\mc M.
\end{equation}
This is true because fusion product is exact for both symplectic bosons and Heisenberg VOA. Let $\mc M$ be a module of $V_{L}^{\otimes n}$ generated by:
\begin{equation}
    (\phi^{i_1}+\eta^{i_1})\cdots (\phi^{i_k}+\eta^{i_k}) \vert v \rangle,
\end{equation}
where $v=\sum_i t_i\phi^i+s_i\eta^i$ and $t_i-s_i\in \Z$. The action of $\phi_0^i$ is defined by the commutation relation $[\phi_0^i,\phi^i]=1$, and similarly, we define the action for $\eta_0^i$. These are modules of the Heisenberg VOA $\mc H_{\phi,\eta}$ that can be lifted to $V_L^{\otimes n}$, and from Section \ref{subsec:baby}, as modules of $\mc{V}_{\bg}^{\otimes n}$, $\mc M$ can be identified with tensor products of $\Sigma^n W_{\upsilon}^k$ and $\Sigma^n W_0^{-, k}$.  

\begin{Lem}\label{LemmonoAq}
 For each $\mc M$ as above the monodromy:
    \begin{equation}
\mathbb{V}_{\mu}\otimes_{\mc{V}_{\beta\gamma}^{\otimes n}}\mc M\longrightarrow \mathbb{V}_{\mu}\otimes_{\mc{V}_{\beta\gamma}^{\otimes n}}\mc M
    \end{equation}
    is given by
    \begin{equation}
    \mathrm{Id}\otimes e^{-2\pi i \sum_{i} \mu_iJ_0^i},     
    \end{equation}
    where $J_0^i$ is the zero mode of $J^i=-\normord{\beta^i\gamma^i}$. 
\end{Lem}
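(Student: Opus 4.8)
The statement to prove is Lemma \ref{LemmonoAq}: for the free-field module $\mc M$ built as a lift of a Fock-type module over $\mc H_{\phi,\eta}$ to $V_L^{\otimes n}$, the monodromy of the braiding on $\mathbb V_\mu \otimes_{\mc V_{\beta\gamma}^{\otimes n}} \mc M$ equals $\mathrm{Id}\otimes e^{-2\pi i\sum_i \mu_i J_0^i}$. The plan is to reduce the computation to the Heisenberg/lattice VOA $V_L^{\otimes n}$, where the braiding and monodromy of Fock modules are completely explicit, and then transport the answer back to $\mc V_{\beta\gamma}^{\otimes n}$ using the free-field realization. First I would invoke the isomorphism stated just above the lemma,
\[
\mathbb V_\mu \otimes_{\mc V_{\beta\gamma}^{\otimes n}} \mc M \;\cong\; V_{L,\mu}^{\otimes n}\otimes_{V_L^{\otimes n}}\mc M\,,
\]
which holds because fusion is exact for both symplectic bosons and the Heisenberg VOA. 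Since the free-field realization is an embedding $\mc V_{\beta\gamma}^{\otimes n}\hookrightarrow V_L^{\otimes n}$ compatible with the respective current algebras, the braiding isomorphisms on the two sides are intertwined by this identification, so it suffices to compute the $V_L^{\otimes n}$-monodromy of $V_{L,\mu}^{\otimes n}$ against $\mc M$.

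**Key steps.** The core computation is the braiding of lattice/Fock modules, which factorizes over the $n$ copies, so I would work one copy at a time and reassemble. For the Heisenberg VOA $H_{v}$, the braiding of Fock modules is governed entirely by the conformal spin and the abelian charge: the double-braiding (monodromy) of $\Fock_\lambda$ around $\Fock_\mu$ is multiplication by the scalar $e^{2\pi i B(\lambda,\mu)}$, a standard consequence of the OPE \eqref{lattice-OPE}, where the factor $(z-w)^{B(\lambda,\mu)}$ produces exactly this phase upon analytic continuation around the origin. The key point is to identify which charge on $\mc M$ is being measured. The module $V_{L,\mu}^{\otimes n}$ is generated by the vertex operator $\normord{e^{\mu\cdot(\phi+\eta)}}$, so monodromy against it reads off the eigenvalue of the zero-mode of $\mu\cdot(\phi+\eta)$ acting on $\mc M$. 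Now I would translate this Heisenberg charge back into the symplectic-boson current $J^i = -\normord{\beta^i\gamma^i}$: under the free-field embedding $\beta^i = \normord{e^{\phi^i+\eta^i}}$, $\gamma^i = -\normord{\pd\phi^i e^{-\phi^i-\eta^i}}$, the current $J^i$ corresponds to $\pd\eta^i$ (as used in the derivation of the BRST currents in Section \ref{subsecBRSTAff}), whose zero-mode $J_0^i$ measures precisely the $\eta^i$-charge. Careful bookkeeping of the sign — accounting for $B(\eta^i,\eta^j)=-\delta^{ij}$ and the orientation convention in the definition \eqref{def-monMN} of $\mu(\mc M,\mc N)=R_{\mc N,\mc M}\circ R_{\mc M,\mc N}$ — should yield the phase $e^{-2\pi i\sum_i \mu_i J_0^i}$, with the identity acting on the $\mathbb V_\mu$ tensor factor because the braiding is implemented by transporting $\mc M$ around the (bosonic, hence monodromy-diagonal) generator of $\mathbb V_\mu$.

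**Main obstacle.** The routine part is the scalar phase computation; the subtle part is justifying that the braiding isomorphism really is implemented by the vertex-operator monodromy in the presence of \emph{logarithmic} structure. The modules $\mc M$ include indecomposables $\Sigma^n\mc W_\upsilon^k$ and $\Sigma^n\mc W_0^{-,k}$ on which $J_0^i$ need \emph{not} act semisimply (the parameter $k$ encodes Jordan blocks built from $\log\gamma_0$). I would therefore emphasize that $e^{-2\pi i\sum_i\mu_iJ_0^i}$ is to be interpreted as the exponential of the (possibly non-diagonalizable) operator $J_0^i$, so that the nilpotent part of $J_0^i$ produces exactly the $\log z$ contributions in the intertwining operator $Y:\mc M\to \mc P\{z\}[\log z]$. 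The crucial check is that Li's $\Delta$-operator formalism for spectral flow (Section \ref{sec:spectralflow}, \cite{Li:1997}), together with the lifting-functor compatibility of Section \ref{subsecsimplecurrent}, guarantees that monodromy of a logarithmic intertwiner is computed by $e^{2\pi i L_0}$-type transport restricted to the relevant Heisenberg charge, so that the full operator exponential (including its unipotent part) is the correct answer. Establishing this compatibility cleanly — rather than just matching eigenvalues on semisimple pieces — is where the real work lies, and I would lean on the exactness of fusion and the explicit Fock-module presentation to reduce it to the already-understood rank-one $\mc V_{\beta\gamma}$ case from \cite{BN22}.
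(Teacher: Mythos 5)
Your overall skeleton matches the paper's: reduce to the lattice VOA $V_L^{\otimes n}$ using exactness of fusion, compute the monodromy there as the exponential of a measured Heisenberg charge (including its nilpotent part), and translate back through $J^i=\partial\eta^i$. However, the key step that identifies \emph{which} charge is measured contains a genuine error. You take the generator of $V_{L,\mu}^{\otimes n}$ to be $\normord{e^{\mu\cdot(\phi+\eta)}}$. That vector has charge $\mu\cdot(\phi+\eta)\in L$, so it is a field of the lattice VOA $V_L^{\otimes n}$ itself (it is the free-field image of $\prod_i(\beta^i)^{\mu_i}$), and the module it generates is the vacuum module. Carried through from this premise, your computation yields the identity rather than the claimed answer: the zero mode being exponentiated is $\sum_i\mu_i(\phi_0^i+\eta_0^i)$, which acts semisimply with integer eigenvalues on every $\mc M$ under consideration --- its eigenvalue on $\vert\nu\rangle$ with $\nu=\sum_i t_i\phi^i+s_i\eta^i$ is $\sum_i\mu_i(t_i-s_i)\in\Z$, and its nilpotent part vanishes because $[\phi_0^i+\eta_0^i,\phi^j+\eta^j]=\delta^{ij}-\delta^{ij}=0$. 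So $e^{2\pi i\sum_i\mu_i(\phi_0^i+\eta_0^i)}=\mathrm{Id}$; no sign bookkeeping repairs this, because the wrong charge is being measured. The correct generator --- the one the paper's proof feeds into its intertwiner --- is $\normord{e^{\mu\cdot\phi}}$ (equivalently, in the conventions of Section 5.1, the spectral-flow module $\mathbb{V}_\mu$ is generated by $\normord{e^{\mu\cdot\eta}}$), whose charge lies \emph{outside} $L$. It is exactly this that makes the monodromy nontrivial: $B(\mu\cdot\phi,\nu)=\sum_i\mu_i t_i$ is generically non-integral, and $[\phi_0^i,\phi^j+\eta^j]=\delta^{ij}\neq 0$ produces the unipotent factor; the two contributions assemble into $e^{-2\pi i\sum_i\mu_i\eta_0^i}=e^{-2\pi i\sum_i\mu_iJ_0^i}$ after using $t_i-s_i\in\Z$.

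A secondary issue is your justification of the logarithmic part. Li's $\Delta$-operator governs spectral flow, not the monodromy of logarithmic intertwiners, and ``lifting-functor compatibility'' does not by itself upgrade eigenvalue statements to the required operator statement. The paper needs no such machinery: it writes the logarithmic intertwiner explicitly on the generating vector $w=(\phi^{i_1}+\eta^{i_1})\cdots(\phi^{i_k}+\eta^{i_k})\vert\nu\rangle$, where the factor $e^{\sum_i\mu_i\phi_0^i\log z}$ in $\normord{e^{\mu\cdot\phi(z)}}$ produces the $\log z$ terms, then substitutes $\log z\mapsto\log z+2\pi i$ and reads off the product of the unipotent factor $e^{2\pi i\sum_i\mu_i\partial/\partial\eta^i}$ with the scalar $e^{2\pi i\sum_i\mu_i t_i}$. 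The honest structural version of your route would be the balancing identity (monodromy equals the twist $e^{2\pi iL_0}$ of the fusion product composed with the inverse twists of the factors), which does hold in these vertex tensor categories --- but it still requires the correct module identification above, and for these explicitly presented modules the direct computation is shorter and self-contained.
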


\begin{proof}
As commented above, we may use the intertwining operator of the free field algebra $V_{L}^{\otimes n}$ of the modules:
\begin{equation}
  \mc Y: V_{L,\mu}^{\otimes n}\otimes \mc M\to V_{L,\mu}^{\otimes n}\otimes_{V_L}\mc M\{z\}[\log (z)].
\end{equation}
By assumption, $\mc M$ is generated by a vector of the form:
\begin{equation}
   w= (\phi^{i_1}+\eta^{i_1})\cdots (\phi^{i_k}+\eta^{i_k})\vert \nu\rangle,
\end{equation}
with $\nu=\sum t_i\phi^i+s_i\eta^i$ where $t_i-s_i\in \mathbb{Z}$. The fusion rule is given by the logarithmic intertwiner $\mc{Y}_{\mu,\nu}$ defined by the following formula on the generators:
\begin{equation}
    \mc{Y}_{\mu,\nu}(\vert\mu\rangle, z) w=\normord{e^{\sum_{i} \mu_i\phi^i}} w.
\end{equation}
In this formula, the logarithmic part comes from:
\begin{equation}
    e^{\sum_{i} \mu_i \phi^i_0\text{log}(z)} (\phi^{i_1}+\eta^{i_1})\cdots (\phi^{i_k}+\eta^{i_k})\vert \nu\rangle.
\end{equation}
Since $[\phi_0^i,\phi^j+\eta^j]=\delta^{ij}$, the above is given by:
\begin{equation}
(\phi^{i_1}+\eta^{i_1}+ \mu^{i_1}\text{log}(z))\cdots (\phi^{i_k}+\eta^{i_k}+\mu^{i_k}\text{log}(z)) e^{\sum_{i} \mu^it_i\text{log}(z)}\vert\nu\rangle.
\end{equation}
To compute the monodromy, we rotate the $z$ coordinate by $z\mapsto e^{2\pi i}z$, which results in $\text{log}(z)\mapsto \text{log}(z)+2\pi i$. The contribution of the above comes from two parts, where the first part is:
\begin{equation}
(\phi^{i_1}+\eta^{i_1}+\mu^{i_1}\text{log}(z)+2\pi i\mu^{i_1})\cdots (\phi^{i_k}+\eta^{i_k}+\mu^{i_k}\text{log}(z)+2\pi i\mu^{i_k}),
\end{equation}
and the second part is:
\begin{equation}
    e^{\sum_{i} \mu^it_i\text{log}(z)+2\pi i\sum_{i} \mu^it_i}\vert\nu\rangle.
\end{equation}
The contribution of the first part can be written compactly as:
\begin{equation}
    e^{2\pi i\sum_{i} \mu_i\frac{\partial}{\partial \eta^i}},
\end{equation}
while the second part as:
\begin{equation}
  e^{2\pi i \sum_{i} \mu^it_i }=e^{2\pi i \sum_{i} \mu^is_i }=e^{-2\pi i( \sum_i\mu_i\eta^i, \nu)}.
\end{equation}
One can verify that the morphism corresponding to:
\begin{equation}
    e^{\sum \mu_i\frac{\partial}{\partial \eta^i}}e^{-2\pi i( \sum_i\mu_i\eta^i, \nu)}
\end{equation}
is nothing but:
\begin{equation}
    e^{-2\pi i \sum_i \mu_i\eta_0^i}=e^{-2\pi i \sum_{ i} \mu_iJ_0^i}.
\end{equation}
We have, in conclusion
\begin{equation}
    \mc{Y}_{\mu,\nu}(\vert\mu\rangle, e^{2\pi i}z) w=\mc{Y}_{\mu,\nu}(\vert\mu\rangle, z)e^{-2\pi i \sum \mu_iJ_0^i}w.
\end{equation}
This means that the monodromy is given by:
\begin{equation}
    \text{Id}\otimes e^{-2\pi i \sum \mu_iJ_0^i}. 
\end{equation}

\end{proof}

We can now finish the proof of Proposition \ref{Propcbgsscbgmono} using Lemma \ref{LemmonoAq}. Indeed, by Theorem \ref{ThmCbgindecom}, any object in $\cbg^{\boxtimes n}$ is a subquotient of a direct sum of such module $\mc M$ restricted from the free field algebra $V_L^{\otimes n}$. By Lemma \ref{LemmonoAq}, for each $\mu\in \Z^r$, the action of monodromy:
\be
\mu(\V_{\rho(\mu)}, \mc M): \mathbb{V}_{\rho (\mu)}\otimes_{\mc V_{\bg}^{\otimes n}} \mc M\longrightarrow \mathbb{V}_{\rho (\mu)}\otimes_{\mc V_{\bg}^{\otimes n}}\mc M
\ee
is given by $\text{Id}\otimes e^{-2\pi i \sum \mu^a\rho_{ia}J_0^i}$. Since monodromy is functorial with respect to sub-object and quotient object, it follows that the action of monodromy is given by the same expression for all objects in $\cbg^{\boxtimes n}$. For this to be identity, one requires that $\sum \mu^a\rho_{ia}J_0^i$ acts as integers for all $\mu\in \Z^r$, which is the statement of Proposition \ref{Propcbgsscbgmono}. As a consequence, we have:

\begin{Cor}
Objects in the subcategory $\boxtimes_i \mc C_{\beta\gamma, \upsilon_i}$ gives rise to local modules of $\wt{\mc{V}}_{\rho}^A$ only if $\sum \rho_{ia}\upsilon^i=0$ mod $\Z$ for all $a$. The statement is if and only if for simple objects, since on these objects $J_0^i$ acts semi-simply. 

\end{Cor}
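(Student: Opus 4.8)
The plan is to leverage the monodromy computation of Lemma \ref{LemmonoAq} applied to the simple currents $\V_{\rho(\mu)}$ defining $\wt{\mc V}_\rho^A$. The corollary concerns objects of the block $\boxtimes_i \mc C_{\beta\gamma,\upsilon_i}$, and the claim is that such objects can be lifted (i.e.\ have trivial monodromy with every $\V_{\rho(\mu)}$) precisely when $\sum_i \rho_{ia}\upsilon^i \equiv 0 \pmod{\Z}$ for every $a=1,\dots,r$. First I would recall that the defining generators of $\wt{\mc V}_\rho^A$ are the $\V_{\rho(\mu)}$ for $\mu\in\Z^r$, so triviality of monodromy with the whole extension algebra is equivalent to triviality of monodromy with each $\V_{\rho(\mu)}$. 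By Proposition \ref{Propcbgsscbgmono}, trivial monodromy is equivalent to the operator $\sum_i \rho_{ia}J_0^i$ acting semisimply with integer eigenvalues for each $a$.

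The key step is then to relate the generalized eigenvalues of $J_0^i$ to the block label $\upsilon_i$. Recall from the discussion of the block decomposition of $\cbg$ in Section \ref{subsec:baby} that the subcategory $\mc C_{\beta\gamma,\upsilon_i}$ is characterized by requiring the generalized eigenvalue of $J_0^{\beta\gamma,i}=-\normord{\beta^i\gamma^i}$ to lie in the class $\upsilon_i\in\C/\Z$. Hence on any object of $\boxtimes_i \mc C_{\beta\gamma,\upsilon_i}$, each $J_0^i$ acts with generalized eigenvalues congruent to $\upsilon_i$ modulo $\Z$. Consequently the combination $\sum_i \rho_{ia}J_0^i$ has generalized eigenvalues congruent to $\sum_i \rho_{ia}\upsilon^i$ modulo $\Z$. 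For the monodromy $\mathrm{Id}\otimes e^{-2\pi i \sum_i \rho_{ia}\mu^a J_0^i}$ from Lemma \ref{LemmonoAq} to be the identity (for all $\mu\in\Z^r$, hence for each $a$), it is \emph{necessary} that these generalized eigenvalues be integers, which forces $\sum_i \rho_{ia}\upsilon^i\equiv 0\pmod{\Z}$; this gives the ``only if'' direction.

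For the ``if and only if'' on \emph{simple} objects, I would invoke the fact that $J_0^i$ acts \emph{semisimply} on the simple modules $\Sigma^k \mc W_{\upsilon_i}$ and $\mc V_k$ (as is clear from the column-picture descriptions in Section \ref{subsec:baby}, where $\gamma_0$, equivalently $\beta_0\gamma_0$, acts diagonally). On a simple object the operator $\sum_i \rho_{ia}J_0^i$ therefore acts semisimply, and the exponential $e^{-2\pi i \sum_i \rho_{ia}\mu^a J_0^i}$ is the identity if and only if every eigenvalue of $\sum_i \rho_{ia}J_0^i$ is an integer, which is exactly the condition $\sum_i \rho_{ia}\upsilon^i\equiv 0 \pmod{\Z}$. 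Combining the two directions on simples establishes the biconditional, while on general (non-simple) objects only the necessity survives because semisimplicity of the action may fail, so $e^{-2\pi i(\cdots)}=\mathrm{Id}$ can fail even when eigenvalues are integral.

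The main obstacle I anticipate is the careful bookkeeping between the two currents: the paper uses both $J^i = -\normord{\beta^i\gamma^i}$ in the statement of Proposition \ref{Propcbgsscbgmono} and $J^{\beta\gamma}(z)=-\normord{\beta(z)\gamma(z)}$ in characterizing the block decomposition, and one must verify these zero-modes are the same operator (up to the well-understood normal-ordering constant) so that the eigenvalue-class $\upsilon_i$ really governs the monodromy. The substance of the corollary is then essentially immediate once one observes that the necessity uses only the generalized-eigenvalue constraint (valid on all objects), while the sufficiency genuinely needs the semisimplicity that holds on simple objects but not in general. I would state this distinction explicitly rather than glossing over it, since it is the only non-formal point and is precisely what the final sentence of the corollary is flagging.
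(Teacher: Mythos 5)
Your proposal is correct and follows essentially the same route as the paper: the corollary there is drawn as an immediate consequence of Proposition \ref{Propcbgsscbgmono} (itself proved via Lemma \ref{LemmonoAq}) combined with the characterization of the block $\mc C_{\beta\gamma,\upsilon_i}$ by the generalized eigenvalue class of $J_0^i$, with necessity holding for all objects and sufficiency on simples following from semisimplicity of the $J_0^i$ action there. Your explicit separation of the two directions, and the observation that the exponentiated monodromy can fail to be the identity on non-semisimple generalized eigenspaces even with integral eigenvalues, is exactly the content the paper's final sentence is flagging.
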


This makes the study of the category $\mc C_\rho^A$ much more explicit. Simple objects in this category comes from lifts of simple objects in $\boxtimes_i \mc C_{\beta\gamma, \upsilon_i}$ for $\sum \rho_{ia}\upsilon^i=0 \text{ mod }\Z$, and a simple object $\mc M$ in $\boxtimes_i \mc C_{\beta\gamma, \upsilon_i}$ is identified with $\mathbb{V}_{\rho(\mu)}\otimes_{\mc{V}_{\bg}^{\otimes n}}\mc M\cong \Sigma^{\rho(\mu)}\mc M$ for all $\mu\in\Z^r$.  This implies also that $\mc C_\rho^A$ admits a decomposition:
\be
\mc C_\rho^A=\bigoplus\limits_{\upsilon\in (\C/\Z)^n, \rho(\mu)\cdot \upsilon=0}\mc{C}_{\rho, \upsilon}^A.
\ee
To write this more efficiently, we use the short exact sequence  in equation \eqref{SES}. The short exact sequence (the conjugate version) induces a short exact sequence of groups:
\be
\begin{tikzcd}
(\C/\Z)^{n-r}\rar{\rho^\vee} & (\C/\Z)^n\rar{\rho^\T} & (\C/\Z)^r
\end{tikzcd}
\ee
and $\rho(\mu)\cdot \upsilon=0 \text{ mod }\Z$ for all $\mu$ precisely when $\upsilon\in \mathrm{Ker}(\rho^\T)=\mathrm{Im}(\rho^\vee)$. We can alternatively write:
\be
\mc C_\rho^A=\bigoplus\limits_{\upsilon\in (\C/\Z)^{n-r}}\mc{C}_{\rho, \upsilon}^A
\ee
such that $\mc{C}_{\rho, \upsilon}^A$ is the image of $\mc C_{\beta\gamma, \rho^\vee(\upsilon)}^{\boxtimes n, \rho, [0]}$ under $\mc L_A$. The above decomposition of the abelian category induces a decomposition:
\be
\widehat{\mc{C}}^A_{\rho}=\bigoplus\limits_{\upsilon\in (\C/\Z)^{n-r}}\widehat{\mc{C}}_{\rho, \upsilon}^A
\ee
for the derived category. 

\subsubsection{Line operators as simple objects}

In this section, we compare the simple objects in the category $\mc C_\rho^A$ with the line operators expected from bulk physics in Section \ref{sec:bulklines}. This justifies our definition of $\CC_\rho^A$ as a category of VOA modules whose derived category captures a minimal set of bulk lines.
We will also show that the lifting functor coincides with the BRST functor on the simple objects in $\mc{C}_{\beta\gamma}^{\boxtimes n, \rho, [0]}$. 

For an object to be in $\mc{C}_{\beta\gamma}^{\boxtimes n, \rho, [0]}$, the action of $\sum \rho_{ia}J_a^i$ must be semi-simple with integer eigenvalues. In particular, the simple objects in $\boxtimes_i \mc{C}_{\bg, [0]}$ all satisfy this.  Simple objects in this sector of the category are given by $\mathbb{V}_\lambda=\bigotimes\limits_i \CV_{\lambda_i}$ for $\lambda\in \Z^n$, the spectral flow of vacuum module. Upon lifting, however, $\mc L_A\mathbb{V}_{\lambda}\cong \mc L_A\mathbb{V}_{\mu}$ if and only if $\lambda-\mu\in \mathrm{Im}(\rho)$, which is the equivalence relation in Section \ref{secMorita}. By the splitting of the short exact sequence \eqref{split-SES}, we can choose a representative of this equivalence class to be $\mathbb{V}_{\sigma (m)}$ for $m\in \Z^{n-r}$. These corresponds to the vortex lines of the abelian gauge theory, as they change the pole of $\beta^i$ and $\gamma^i$ according to $\sigma(m)$. 

The general vortex line operators sets $J^i(z)\sim \frac{\upsilon^i}{z}$, and they are compatible with gauging precisely when $\sum\rho_{ia}\upsilon^i\in \Z$. We identify these general vortex line operators with the simple objects in $\boxtimes_i \mc{C}_{\bg, \upsilon_i}$. They are simply of the form $\otimes_i \mc M_i$, such that $\mc M_i=\mc V_{n_i}$ when $\upsilon_i=0 \text{ mod }\Z$ and $\mc M_i=\Sigma^{n_i}\mc W_{\upsilon_i}$ when $\upsilon_i\ne 0 \text{ mod }\Z$. 

We summarize this in the following proposition:

\begin{Prop}
Simple objects in $\mc C_\rho^A$ are labelled by $\Z^{n-r}\times (\C/\Z)^{n-r}$. For $(m,\upsilon)\in \Z^{n-r}\times (\C/\Z)^{n-r}$, we denote by $\V_m^\upsilon$ the corresponding simple object, in accordance with Section \ref{sec:bulklines}. The module $\V_m^\upsilon$ is the image under the lifting functor $\mc L_A$ of the module of $\mc{V}_{\bg}^{\otimes n}$ of the form $\otimes_i \mc M_i$, where:
\be
\mc M_i=\begin{cases}
\mc V_{\sigma(m)_i} & \text{ if } \rho^\vee(\upsilon)_i=0 \text{ mod }\Z \\
\Sigma^{\sigma(m)_i}\mc W_{[\rho^\vee(\upsilon)_i]} & \text{ if } \rho^\vee(\upsilon)_i\ne 0 \text{ mod }\Z
\end{cases}
\ee

\end{Prop}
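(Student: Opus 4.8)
The plan is to read off the simple objects of $\mc C_\rho^A$ directly from the two structural facts already established: that $\mc C_\rho^A$ is the de-equivariantization $\mc C_{\beta\gamma}^{\boxtimes n, \rho, [0]}/\Z^r$ via the lifting functor $\mc L_A$ (Definition \ref{def:lineopeA}), and that the monodromy/locality condition cutting out $\mc C_{\beta\gamma}^{\boxtimes n, \rho, [0]}$ is the integrality condition $\sum_i \rho_{ia}\upsilon^i \in \Z$ from Proposition \ref{Propcbgsscbgmono} and its Corollary. First I would recall from the classification in Section \ref{subsec:baby} that the simple objects of $\cbg$ are exactly $\mc V_n$ (for $\upsilon = 0 \bmod \Z$) and $\Sigma^n \mc W_{\upsilon}$ (for $\upsilon \ne 0 \bmod \Z$), labelled by $n\in\Z$ and $\upsilon\in\C/\Z$. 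By the definition of the Deligne product and the decomposition \eqref{Cbg-ndecomp}, the simple objects of $\cbg^{\boxtimes n}$ are tensor products $\otimes_i \mc M_i$ with each $\mc M_i$ one of these, so they are labelled by $(\lambda,\upsilon)\in\Z^n\times(\C/\Z)^n$.

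Next I would impose the locality condition. Since $\mc L_A$ maps simples to simples and is full and surjective (Theorem \ref{Lexact}, using that $\cbg^{\boxtimes n}$ has exact tensor product and is fixed-point free — the latter following because spectral flow never fixes a simple $\cbg$-module), the simple objects of $\mc C_\rho^A$ are exactly the images under $\mc L_A$ of the simple objects of $\mc C_{\beta\gamma}^{\boxtimes n, \rho, [0]}$. By the Corollary to Proposition \ref{Propcbgsscbgmono}, a simple object $\otimes_i \mc M_i$ in the block $\boxtimes_i \mc C_{\beta\gamma,\upsilon_i}$ is local precisely when $\upsilon\in \mathrm{Ker}(\rho^\T)=\mathrm{Im}(\rho^\vee)$ inside $(\C/\Z)^n$, using the exact sequence \eqref{SES} dualized over $\C/\Z$. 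Thus the monodromy charge is $\upsilon = \rho^\vee(\bar\upsilon)$ for $\bar\upsilon\in(\C/\Z)^{n-r}$, and this $\bar\upsilon$ is well-defined because $\rho^\vee$ is injective. This already yields the $(\C/\Z)^{n-r}$ factor in the labelling and the block decomposition $\mc C_\rho^A = \bigoplus_{\bar\upsilon}\mc C_{\rho,\bar\upsilon}^A$.

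Finally I would quotient by the de-equivariantization relation. The functor $\mc L_A$ identifies $\mathbb V_{\lambda}$ with $\mathbb V_{\lambda'}$ exactly when $\lambda - \lambda' \in \mathrm{Im}(\rho)=\rho(\Z^r)$, since tensoring by the algebra object $\wt{\mc V}_\rho^A = \bigoplus_{\mu}\V_{\rho(\mu)}$ shifts the spectral-flow label by $\rho(\mu)$ and $\V_{\rho(\mu)}\otimes_{\mc V_{\beta\gamma}^{\otimes n}}(-)\cong \Sigma^{\rho(\mu)}(-)$. Choosing the splitting $\sigma$ of \eqref{split-SES}, each coset in $\Z^n/\rho(\Z^r)$ (within the appropriate $H$-class matched to $\bar\upsilon$) has the representative $\sigma(m)$ for a unique $m\in\Z^{n-r}$, giving the $\Z^{n-r}$ factor. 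Assembling the representative $\otimes_i \mc M_i$ with $\mc M_i = \mc V_{\sigma(m)_i}$ when $\rho^\vee(\upsilon)_i = 0\bmod\Z$ and $\mc M_i = \Sigma^{\sigma(m)_i}\mc W_{[\rho^\vee(\upsilon)_i]}$ otherwise produces exactly the stated formula, and matching against the flavor vortices $\V_m$ \eqref{def-Vm} and flavor defects $\V_m^{(\upsilon)}$ \eqref{def-monV} of Section \ref{sec:A-lines} completes the identification. The main obstacle I anticipate is bookkeeping the compatibility between the $H$-torsion in the lattice decomposition \eqref{Z-decomp} and the choice of coset representatives $\sigma(m)$, i.e. verifying that the label $m$ really ranges over all of $\Z^{n-r}$ independently of the fractional $\bar\upsilon$-class, rather than a subtle argument; the representation theory itself is entirely supplied by Section \ref{subsec:baby} and the lifting-functor formalism.
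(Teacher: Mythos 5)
Your proposal is correct and follows essentially the same route as the paper's own argument: classify the simples of $\cbg^{\boxtimes n}$ via Section \ref{subsec:baby}, cut down by the locality condition of Proposition \ref{Propcbgsscbgmono} (giving $\upsilon\in\mathrm{Im}(\rho^\vee)\subset(\C/\Z)^n$), and quotient by the spectral-flow identification $\lambda\sim\lambda+\rho(\mu)$ under $\mc L_A$, choosing coset representatives $\sigma(m)$ via the splitting of \eqref{split-SES}. The obstacle you anticipate at the end is in fact a non-issue: the de-equivariantization quotient is $\Z^n/\rho(\Z^r)\cong\Z^{n-r}$, which is torsion-free precisely by the faithfulness (complete-sublattice) assumption on $\rho$ and is the same in every block $\bar\upsilon$, so the label $m$ automatically ranges over all of $\Z^{n-r}$ — the group $H$ only enters quotients by $\rho(\Z^r)\oplus\rho^\vee(\Z^{n-r})$, which plays no role here.
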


\begin{Rmk}
We comment that the category $\mc C_\rho^A$ can be also defined as the full subcategory of $\wt{\mc{V}}_{\rho}^A$ modules that are finite length and whose composition factors are given by $\V_m^\upsilon$ for $(m,\upsilon)\in \Z^{n-r}\times (\C/\Z)^{n-r}$. 
\end{Rmk}

\begin{Exp}
    As an example, consider the case when $\rho=1$, namely A twist of $\mathrm{SQED}_1$. In this case $\wt{\mc V}_\rho^A$ is:
    \be
\wt{\mc V}_\rho^A\cong \bigoplus_{n\in \Z}\sigma^n \mc V_{\bg}\cong \mc V_{SF}\otimes \mc V_{\Z}^{-}.
    \ee
    Here $\mc V_{\Z}^{-}$ is a lattice VOA of a self-dual lattice of negative signature. 
    The only simple modules of $\mc V_{\bg}$ that satisfy the monodromy condition are the spectral flow of vacuum $\CV_n$, and are identified upon the lift. Consequently, there is only one simple module in $\mc C_\rho^A$, which is compatible with the fact that $\mc V_{SF}$ has a unique simple module (the vacuum module). 
    
\end{Exp}

In Section \ref{secMorita}, we showed that there is an isomorphism of VOA:
\be
\mc{V}_\rho^A\otimes \mc{V}_{\Z^{2r}}^{+-}\cong \wt{\mc V}_\rho^A\otimes \mc{V}_{bc}^V
\ee
Given a module $\mc M$ of $\mc{V}_{\bg}^{\otimes n}$, we can obtain a module of $\mc{V}_\rho^A$ by $H_{BRST}^*\left(\mc M\otimes \mc{V}_{bc}^{\otimes n}\right)$, and thus a module of $\mc{V}_\rho^A\otimes \mc{V}_{\Z^{2r}}^{+-}$ given by:
\be
H_{BRST}^*\left(\mc M\otimes \mc{V}_{bc}^{\otimes n}\right)\otimes \mc{V}_{\Z^{2r}}^{+-}.
\ee
On the other hand, when $M$ has trivial monodromy with $\wt{\mc V}_\rho^A$, we obtain a module $\mc{L}_A(\mc M)$ of $\wt{\mc V}_\rho^A$, and thus a module of $\wt{\mc V}_\rho^A\otimes \mc{V}_{bc}^{V}$ given by:
\be
\mc{L}_A(\mc M)\otimes \mc{V}_{bc}^{V}.
\ee
We further justify the use of $\wt{\mc V}_\rho^A$ for the definition of $\mc{C}_{\rho}^A$ by the following:

\begin{Thm}
Let $\mc M$ be a simple module in $\boxtimes_i \mc C_{\bg, \upsilon_i}$. If $\sum \rho_{ia}\upsilon^i\ne 0 \text{ mod }\Z$ for some $a$, then $H_{BRST}^*\left(\mc M\otimes \mc{V}_{bc}^{\otimes n}\right)$ is zero. Otherwise we have an isomorphism of VOA modules:
\be
H_{BRST}^*\left(\mc M\otimes \mc{V}_{bc}^{\otimes n}\right)\otimes \mc{V}_{\Z^{2r}}^{+-}\cong \mc{L}_A(\mc M)\otimes \mc{V}_{bc}^{V}.
\ee

\end{Thm}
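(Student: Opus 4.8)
The plan is to compute $H^\bullet_{BRST}(\mc M\otimes\mc{V}_{bc}^{\otimes n})$ explicitly in the free-field realization, running the argument of Proposition \ref{prop:A} with the module $\mc M$ inserted in place of the vacuum, and then to identify the outcome with $\mc L_A(\mc M)$ by matching it against the free-field description of $\wt{\mc V}_\rho^A$ in \eqref{VA-alt-decomp}. Since Theorem \ref{ThmMirSymbdVOA} provides a VOA isomorphism $\mc V_\rho^A\otimes\CV_{\Z^{2r}}^{+-}\cong\wt{\mc V}_\rho^A\otimes\mc{V}_{bc}^{V}$, both sides of the asserted equivalence are modules over a single VOA, and the two Morita-trivial tensor factors are exactly what is needed to bridge the BRST and lifting descriptions.

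First I would realize $\mc M\otimes\mc{V}_{bc}^{\otimes n}$ through the Heisenberg algebras $H_\phi,H_\eta,H_\theta$: each simple tensor factor $\mc M_i$ is either a Fock module $V_{L,\mu_i}$ (when $\upsilon_i\neq 0$) or a submodule of one (the spectral-flow vacua, when $\upsilon_i=0$), while $\mc{V}_{bc}^{\otimes n}$ bosonizes to the $\theta$-lattice. Decomposing orthogonally $\eta=(\eta_0,\eta_\perp)$ and $\theta=(\theta_0,\theta_\perp)$ with $\eta_0=\rho^T\eta$, $\theta_0=\rho^T\theta$, the BRST current is $J_a=\partial(\eta_0+\theta_0)_a$, so it acts only on the $\eta_0,\theta_0$ Fock sectors. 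The $\theta_0$-weights contributed by $\mc{V}_{bc}^{\otimes n}$ are always integral, whereas the $\eta_0$-weights carried by $\mc M$ lie in the fixed coset $\rho^T\upsilon+\Z^r$. By the fundamental vanishing \eqref{BRST-triv0}, the cohomology of a pair $\CF^{\eta_0}_{\mu\cdot\eta_0}\otimes\CF^{\theta_0}_{\lambda\cdot\theta_0}$ is $\delta_{\mu,\lambda}\C$ and lives only in degree zero; hence when $\sum_i\rho_{ia}\upsilon^i\not\equiv 0\pmod{\Z}$ for some $a$ no weights can match and the BRST cohomology vanishes, while otherwise it is concentrated in degree zero. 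This vanishing is precisely the module-theoretic shadow of the locality criterion of Proposition \ref{Propcbgsscbgmono}, which requires exactly $\sum_i\rho_{ia}\upsilon^i\equiv 0$ for $\mc M$ to lift to a local $\wt{\mc V}_\rho^A$-module.

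In the surviving case I would carry out the degree-zero cohomology explicitly. The matching condition collapses the $\eta_0,\theta_0$ sectors onto their diagonal, each matched weight contributing a single highest-weight class by \eqref{BRST-triv0}, and leaves the singlet ($\phi$), $\eta_\perp$, and $\theta_\perp$ data together with that diagonal lattice. I would then compare this, term by term, with the free-field expression \eqref{VA-alt-decomp} for $\mc L_A(\mc M)=\bigoplus_{\mu\in\Z^r}\Sigma^{\rho(\mu)}\mc M$, in which $\mc M$ appears spectral-flowed over the gauge lattice. The singlet and $\eta_\perp$ factors match directly, and the remaining lattice sectors differ by exactly one complete rank-$2r$ self-dual lattice, which is the content of the identification of Theorem \ref{ThmMirSymbdVOA}; this step simply mirrors the computation behind Proposition \ref{prop:A}. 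To upgrade the resulting graded-vector-space match to an isomorphism of VOA modules, I would track the action of the common free-field generators (the singlet currents, the $\eta_\perp,\theta_\perp$ Heisenberg fields, and the simple-current extension operators) on the surviving cohomology classes and verify agreement with their action on $\mc L_A(\mc M)$; alternatively, using that $\mc L_A$ sends simples to simples (Theorem \ref{Lexact}), one can compare characters and exhibit a single nonzero intertwiner.

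The main obstacle is twofold. First, simple modules are in general only subquotients of Fock modules — the singlet modules $M^\phi_{\mu\cdot\phi}\subset\CF^\phi_{\mu\cdot\phi}$, and in the $\upsilon_i=0$ sectors the spectral-flow vacua sitting inside length-two lattice modules — so one must ensure that passing to BRST cohomology commutes with these subquotients; I would handle this by a filtration argument, exploiting that the screening charges defining the singlets and the BRST differential act on disjoint field content ($\phi$ versus $\eta_0,\theta_0$) and hence commute. Second, and more seriously, the genuinely hard part is promoting the vector-space match to an isomorphism of modules over $\mc V_\rho^A\otimes\CV_{\Z^{2r}}^{+-}\cong\wt{\mc V}_\rho^A\otimes\mc{V}_{bc}^{V}$: one must check that the identification of generators furnished by Theorem \ref{ThmMirSymbdVOA} intertwines the two actions on the nose, rather than merely preserving conformal grading, flavor grading, and graded dimensions, and this is where the bulk of the careful bookkeeping will lie.
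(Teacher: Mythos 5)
Your proposal is correct and follows essentially the same route as the paper: free-field realization of $\mc M\otimes \mc{V}_{bc}^{\otimes n}$, the orthogonal splitting of $\eta,\theta$ into $(\eta_0,\eta_\perp)$ and $(\theta_0,\theta_\perp)$, the vanishing \eqref{BRST-triv0} yielding the criterion $\rho^{\T}\upsilon\in\Z^r$, and then matching against the decomposition \eqref{VA-alt-decomp} of the lift after tensoring with $\CV_{\Z^{2r}}^{+-}$. The one substantive difference lies in the step you defer to ``careful bookkeeping.'' The paper does not track generator actions on cohomology classes; instead it uses the fact (recalled in Section \ref{subsec:baby}) that every simple $\mc M$ is realized as the \emph{kernel of screening charges}, $\mc M=\bigcap_{i,\,s_i\in\Z}\mathrm{Ker}\,S^i_0\big|V^{\otimes n}_{L,s\phi+t\eta}$, i.e.\ as a genuine submodule of a lattice module rather than a general subquotient --- which is also why your first obstacle dissolves immediately: the screenings act only on $\phi$ and the BRST differential only on $\eta_0,\theta_0$, so taking kernels commutes with taking cohomology with no filtration needed. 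With that description in hand, after tensoring with $\CV_{\Z^{2r}}^{+-}$ and applying the explicit changes of variables \eqref{redef} and \eqref{V*-redef}, the BRST cohomology becomes literally the same kernel of the same screening charges inside the same lattice module as the free-field realization of $\mc L_A(\mc M)\otimes\mc{V}_{bc}^V$ (the latter identification being \cite[Proposition 4.8]{BN22}), so the isomorphism of VOA modules is automatic rather than the outcome of a generator-by-generator intertwining check. Note also that your fallback of ``comparing characters and exhibiting a single nonzero intertwiner'' is weaker than you suggest: character equality alone does not distinguish non-isomorphic simples, so you would still need to construct the intertwiner, which is exactly what the screening-kernel picture supplies for free.
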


\begin{proof}
Consider $V_{L,s\phi+t\eta}^{\otimes n}$, the module of the lattice VOA $V_L^{\otimes n}$ generated by $\sum_i s_i\phi^i+t_i\eta^i$ with $s_i-t_i\in \Z$. For each $s_i$ with $s_i\in \Z$, there is a screening operator $S^i=\oint\mathrm{d}z\exp (\phi^i(z))$ acting on this module, and the kernel:
\be
\mc M=\bigcap_{i, s_i\in \Z}\mathrm{Ker}(S^i)\Big\vert V_{L,s\phi+t\eta}^{\otimes n}
\ee
is a simple module of $\mc{V}_{\bg}^{\otimes n}$ and all the simple modules are obtained this way. The free field realization of $\mc M\otimes \mc{V}_{bc}^{\otimes n}$ is of the form:
\be
\mc M\otimes \mc{V}_{bc}^{\otimes n}\cong \bigcap_{i, s_i\in \Z}\mathrm{Ker}(S^i)\Big\vert \bigoplus_{\lambda,\mu\in \Z^n}\CF_{(\mu+s)\phi}\otimes \Fock_{(\mu+t)\eta}\otimes \Fock_{\lambda\theta}.
\ee
Since the BRST cohomology only acts on the factor $\Fock_{(\mu+t)\eta}\otimes \Fock_{\lambda\theta}$, and that the screening operator only acts on $\CF_{(\mu+s)\phi}$, they commute with each other and we have:
\be
H^*_{BRST}\left(\mc M\otimes \mc{V}_{bc}^{\otimes n}\right)\cong \bigcap_{i, s_i\in \Z}\mathrm{Ker}(S^i)\Big\vert \bigoplus_{\lambda,\mu\in \Z^n}\CF_{(\mu+s)\phi}\otimes H^*_{BRST}\left(\Fock_{(\mu+t)\eta}\otimes \Fock_{\lambda\theta}\right).
\ee
Splitting the vectors $\mu+s=\rho(\mu_0+s_0)+\rho^\vee (\mu^\vee+s^\vee)$, $\mu+t=\rho(\mu_0+t_0)+\rho^\vee (\mu^\vee+t^\vee)$ and $\lambda=\rho(\lambda_0)+\rho^\vee (\lambda^\vee)$ where $\lambda_0,\mu_0, s_0, t_0$ are spanned by $N$ and $\lambda^\vee, \mu^\vee, s^\vee, t^\vee$ spanned by $N^\vee$ such that $[\lambda_0]=[\lambda^\vee]\in H$ and $[\mu_0]=[\mu^\vee]\in H$, the BRST cohomology can be rewritten as:
\be
\bigcap_{i, s_i\in \Z}\mathrm{Ker}(S^i)\Big\vert \hspace{-20pt}\bigoplus_{\substack{\lambda_0,\mu_0\in N,~ \lambda^\vee, \mu^\vee \in N^\vee \\ [\lambda_0]=[\lambda^\vee]\in H,~ [\mu_0]=[\mu^\vee]\in H }} \hspace{-20pt}\CF_{(\mu+s)\phi}\otimes \CF_{\rho^\vee(\mu^\vee+t^\vee)\eta}\otimes \CF_{\rho^\vee\lambda^\vee\theta}\otimes H_{BRST}^*\left(\CF_{\rho(\mu_0+t_0)\eta}\otimes \CF_{\rho\lambda_0\theta}\right).
\ee
If $\sum_i t_i\rho^{ia}\notin \Z$ for some $a$, then $t_0$ is not in $N$, there is no $\lambda_0, \mu_0$  such that $\rho(\mu_0+t_0)=\rho\lambda_0$, and the above cohomology is trivial. This is the first part of the statement. For the second part, assuming now that $t_0\in N$, then the above becomes:
\be
\bigcap_{i, s_i\in \Z}\mathrm{Ker}(S^i)\Big\vert \hspace{-20pt}\bigoplus_{\substack{\lambda_0\in N,~ \lambda^\vee, \mu^\vee \in N^\vee \\ [\lambda_0]=[\lambda^\vee]= [\mu^\vee]+[t_0]\in H }} \hspace{-20pt}\CF_{\rho(\lambda_0+s_0-t_0)\cdot \phi+\rho^\vee(\mu^\vee+s^\vee)\cdot \phi}\otimes \CF_{\rho^\vee(\mu^\vee+t^\vee)\eta}\otimes \CF_{\rho^\vee\lambda^\vee\theta}.
\ee
Let us now tensor this with the VOA $\mc{V}_{\Z^{2r}}^{+-}$ and use the field re-definition as in equation \eqref{redef} to rewrite the above into:
\be
\left[\bigcap\limits_{i, s_i\in \Z} \mathrm{Ker}(S_0^i) \Big\vert \bigoplus_{\lambda\in \Z^n,~ \mu\in [t_0]+\Z^{n-r} } \CF_{\lambda\cdot Z+(\mu-t^\vee)\cdot X+\tau (t-s)\cdot Y}^{X,Y,Z}\right]\otimes \mc{V}_{\Z^{2r}}^{+-}.
\ee
We then use the re-definition in equation \eqref{V*-redef} to obtain:
\be
\bigcap\limits_{i, s_i\in \Z} \mathrm{Ker}(S_0^i)\Big\vert \bigoplus_{\lambda\in \Z^n, \mu\in \Z^n, \nu\in \Z^r}\CF^{\wt{X}, \wt{Y}, \wt{Z}}_{\lambda\cdot \wt{Z}+(\mu-t)\wt{X}+(t-s)+\rho(\nu)\cdot \wt{Y}}.  
\ee
By \cite[Proposition 4.8]{BN22}, this is precisely the free-field realization of $\mc L_A(\mc M)\otimes \mc{V}_{bc}^V$ for the simple module $\mc M=\bigcap_{s_i\in\Z}\mathrm{Ker}(S^i)\Big\vert V_{L, s\phi+t\eta}^{\otimes n}$. This completes the proof. 

\end{proof}

Therefore, the lifting functor and BRST functor coincides on simple modules satisfying the integrability condition $\sum \rho_{ia}J^i_0\in \Z$, and the BRST cohomology simply sends those simple modules that don't satisfy this to a trivial module. The abelian category $\mc C_\rho^A$ can be alternatively defined to be the abelian category of finite-length modules of $\mc V_\rho^A$ whose composition factors are given by BRST cohomology of $\mc M\otimes \mc{V}_{bc}^{\otimes n}$ for simple objects $\mc M$ in $\cbg^{\boxtimes n}$. This means that our definition of the category $\mc C_\rho^A$ using the lifting functor is reasonable, and it allows us to use the idea of simple current extension to define braided tensor structure. 

We can now derive fusion rules of $\V_m^\upsilon$ using the above identification and fusion rules of $\mc V_{\bg}$ from \cite[Corollary 4.14]{BN22}. 

\begin{Cor}
    The simple modules $\V_m^\upsilon$ satisfy the following fusion rules: 
    \begin{itemize}
        \item $\V_m^{[0]}\otimes_{\wt{V}_\rho^A} \V_{m'}^{[0]}\cong \V_{m+m'}^{[0]}$, exactly matching equation \eqref{eq:Wfuse}. 

        \item $\V_m^{[0]}\otimes_{\wt{V}_\rho^A} \V_{m'}^{\upsilon}\cong \V_{m+m'}^{\upsilon}$ exactly matching equation \eqref{fusionna}.

        \item If $\rho^\vee(\upsilon)_i, \rho^\vee(\upsilon')_i$ and $\rho^\vee(\upsilon)_i+\rho^\vee(\upsilon')_i\ne 0 \text{ mod } \Z$ for all $i$, then $\V_m^{\upsilon}\otimes_{\wt{V}_\rho^A} \V_{m'}^{\upsilon'}\cong\bigoplus_{\epsilon\in \{0,1\}^{n}}\V^{\upsilon+\upsilon'}_{m+m'+\tau(\epsilon)}$, where $\tau: \Z^n\to \Z^{n-r}$ is the transpose of $\rho^\vee$, matching (and generalizing) equation \eqref{monodromy-fusion}.  

        \item The fusion product $\V_m^{\upsilon}\otimes_{\wt{V}_\rho^A}\V_{m'}^{-\upsilon}$ is an indecomposable module whose length is given by $4\cdot \#\{i, \rho^\vee(\upsilon)_i\ne 0 \text{ mod }\Z\}$, and whose simple factors are given by $\V_{m+m'\pm \tau(\epsilon)}$ where $\epsilon_i=0$ or $1$ and it can be equal to $1$ only when $\rho^\vee(\upsilon)_i\ne 0 \text{ mod }\Z$.  
        
    \end{itemize}
\end{Cor}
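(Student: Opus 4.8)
The plan is to derive all four fusion rules uniformly from the fusion rules of $n$ copies of $\mc{V}_{\beta\gamma}$ established in \cite{BN22}, pushed through the lifting functor $\mc L_A$. Since each simple $\V_m^\upsilon$ is $\mc L_A$ applied to a product $\otimes_i \mc M_i$ of $\mc{V}_{\beta\gamma}$-modules, and $\mc L_A$ is a braided tensor functor (hence monoidal), I would first compute $\big(\otimes_i \mc M_i\big)\otimes_{\mc{V}_{\beta\gamma}^{\otimes n}}\big(\otimes_i \mc M_i'\big) \cong \otimes_i\big(\mc M_i\otimes_{\mc{V}_{\beta\gamma}}\mc M_i'\big)$ factor by factor, and then apply $\mc L_A$. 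The key input is the single-factor fusion rules from \cite[Corollary 4.14]{BN22}: spectral flows of the vacuum fuse additively ($\mc V_n\otimes \mc V_{n'}\cong \mc V_{n+n'}$), tensoring a vacuum flow with $\Sigma^k\mc W_\upsilon$ shifts the flow, and the interesting cases are products of two typical modules $\Sigma^{k}\mc W_\upsilon\otimes \Sigma^{k'}\mc W_{\upsilon'}$, which split as a direct sum of two simple currents-shifts when $\upsilon+\upsilon'\neq 0$, and produce an indecomposable length-four module (a projective cover / self-extension type object) when $\upsilon'=-\upsilon$.

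The first two bullets are then essentially immediate. For $\V_m^{[0]}\otimes \V_{m'}^{[0]}$ every factor $\mc M_i$ is a spectral-flow of vacuum $\mc V_{\sigma(m)_i}$, so the factorwise fusion gives $\mc V_{\sigma(m)_i+\sigma(m')_i}=\mc V_{\sigma(m+m')_i}$; applying $\mc L_A$ and recalling the identification of simples yields $\V_{m+m'}^{[0]}$. For $\V_m^{[0]}\otimes \V_{m'}^\upsilon$ the vacuum-flow factors merely shift the spectral flow index of each $\mc M_i'$ by $\sigma(m)_i$, giving $\V_{m+m'}^\upsilon$ after lifting. I would note that in both cases the equivalence relation $\lambda\sim\lambda+\rho(\mathbb Z^r)$ imposed by $\mc L_A$ is respected, so the $\mathbb Z^{n-r}$-labels add as stated.

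The third and fourth bullets require the genuine tensor product of typical modules and careful bookkeeping of how the lift identifies objects. For the third, when all three of $\rho^\vee(\upsilon)_i,\rho^\vee(\upsilon')_i,\rho^\vee(\upsilon+\upsilon')_i$ are nonintegral, each factor fuses as $\Sigma^{a_i}\mc W_{[\rho^\vee(\upsilon)_i]}\otimes \Sigma^{a_i'}\mc W_{[\rho^\vee(\upsilon')_i]}\cong \Sigma^{a_i+a_i'}\mc W_{[\cdots]}\oplus \Sigma^{a_i+a_i'+1}\mc W_{[\cdots]}$, a two-term sum. Taking the tensor product over all $n$ factors produces $2^n$ summands indexed by $\epsilon\in\{0,1\}^n$, and after applying $\mc L_A$ (which translates the extra integer shift in slot $i$ into the flavor class via $\tau=\rho^{\vee\,T}$) these collapse to $\bigoplus_{\epsilon\in\{0,1\}^n}\V^{\upsilon+\upsilon'}_{m+m'+\tau(\epsilon)}$. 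The main obstacle, and the step I would treat most carefully, is the fourth bullet: here $\upsilon'=-\upsilon$, so for each slot with $\rho^\vee(\upsilon)_i\neq 0$ the single-factor fusion is \emph{not} a direct sum but the indecomposable length-four module from \cite{BN22}. I must verify that the Deligne-product of these factorwise indecomposables, followed by $\mc L_A$, remains indecomposable and has the asserted length $4\cdot\#\{i:\rho^\vee(\upsilon)_i\neq 0\bmod\mathbb Z\}$ with the stated composition factors $\V_{m+m'\pm\tau(\epsilon)}$. The subtlety is that tensoring indecomposables over distinct tensor factors could a priori decompose after the lift; I would rule this out by using exactness of the tensor product (Theorem \ref{Lexact}, which guarantees $\mc L_A$ preserves composition series) together with the block decomposition of $\mc C_\rho^A$ by $\upsilon\in(\mathbb C/\mathbb Z)^{n-r}$, noting that $\mc L_A$ maps a single block's indecomposable to an indecomposable because $\mc L_A$ is full and surjective with no simple-current fixed points. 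Tracking the composition factors then reduces to reading off the Loewy structure of the factorwise length-four module, multiplying lengths across the nonintegral slots, and translating the allowed integer shifts $\epsilon$ into flavor labels via $\tau$, exactly as in the third bullet.
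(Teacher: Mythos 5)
Your strategy is the same as the paper's: the paper's entire justification for this Corollary is the one-line remark that the fusion rules follow from the identification of the simples $\V_m^\upsilon$ as lifts $\mc L_A(\otimes_i\mc M_i)$, the monoidality of $\mc L_A$ (Theorem \ref{Lexact}), and the single-factor fusion rules of $\mc{V}_{\beta\gamma}$ from \cite[Corollary 4.14]{BN22}, applied slot by slot. Your treatment of the first three bullets is correct, including the key bookkeeping step that an integer shift $\epsilon\in\{0,1\}^n$ of the spectral-flow indices becomes the flavor shift $\tau(\epsilon)$ after lifting (since $\tau\sigma=\mathrm{id}$ and $\tau\rho=0$), and that summands related by $\rho(\Z^r)$-shifts become isomorphic but are not merged by $\mc L_A$.

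The gap is in your fourth bullet, and it is an internal inconsistency you do not notice: you claim that "multiplying lengths across the nonintegral slots" will verify the asserted length $4\cdot \#\{i:\rho^\vee(\upsilon)_i\neq 0 \bmod \Z\}$, but multiplying a factor of $4$ for each nonintegral slot gives $4^{\#\{i:\rho^\vee(\upsilon)_i\neq 0\}}$, which agrees with $4\cdot\#\{\cdots\}$ only when at most one slot is nonintegral. Likewise, the composition factors of each slot's length-four module are spectral flows shifted by $\{0,0,+1,-1\}$, so the Deligne product has factors $\V_{m+m'+\tau(\delta)}$ with \emph{independent} per-slot signs $\delta_i\in\{0,\pm 1\}$, not only the uniform-sign combinations $\V_{m+m'\pm\tau(\epsilon)}$; mixed-sign factors such as $\delta=(+1,-1,0,\dots)$ genuinely occur and are generically distinct simples. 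Since $\mc L_A$ preserves composition series, neither the length nor the factor list changes under the lift, so your method — carried out honestly — produces a conclusion that differs from the statement you set out to prove (it is, however, consistent with the paper's own $\mathrm{SQED}_1$ example, where $\mc L_A(\mc P)\otimes_{\wt\CV_\rho^A}\mc L_A(\mc P)\cong \C^4\otimes\mc L_A(\mc P)$ has length $16$ from two length-four objects, i.e.\ lengths multiply under fusion). A correct write-up must either derive the multiplicative formula and flag the discrepancy with the stated one, or produce a genuinely different argument for the additive count — and no such argument can exist, because factorwise fusion is forced by monoidality of $\mc L_A$ together with the fusion theory of tensor-product VOAs. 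Your secondary concern about indecomposability of the lift is legitimate and your appeal to fullness is the right tool (via Frobenius reciprocity, $\mathrm{End}(\mc L_A(M))\cong\bigoplus_{c}\mathrm{Hom}(M,\CW_c\otimes M)$, and the Deligne product of modules with local endomorphism rings again has local endomorphism ring), but that point is moot until the counting is repaired.
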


The other fusion rules are more complicated, and we won't record them all here. Recall the action of monodromy:
\be
\mu(\mc M,\mc N): \mc M\otimes_{\wt{\mc V}_\rho^A} \mc N\stackrel{R_{\mc M, \mc N}}{\longrightarrow} \mc N\otimes_{\wt{\mc V}_\rho^A} \mc M\stackrel{R_{\mc N, \mc M}}{\longrightarrow} \mc M\otimes_{\wt{\mc V}_\rho^A} \mc N.
\ee
Using free-field fusion rule, in a method similar to Lemma \ref{LemmonoAq}, we can derive the action of monodromy for these simple modules. (Recall the definition of monodromy from \eqref{def-monMN}.)

\begin{Cor}
    The monodromy between simple modules are given by:
    \begin{itemize}
        \item The monodromy $\V_m^{0}\otimes_{\wt{V}_\rho^A} \V_{m'}^{\upsilon}\longrightarrow \V_m^{0}\otimes_{\wt{V}_\rho^A} \V_{m'}^{\upsilon}$ is equal to $e^{2\pi i m\cdot \upsilon}$.

        \item If $\rho^\vee(\upsilon)_i, \rho^\vee(\upsilon')_i$ and $\rho^\vee(\upsilon)_i+\rho^\vee(\upsilon')_i\ne 0 \text{ mod } \Z$ for all $i$, then the monodromy $\V_m^{\upsilon}\otimes_{\wt{V}_\rho^A} \V_{m'}^{\upsilon'}\longrightarrow \V_m^{\upsilon}\otimes_{\wt{V}_\rho^A} \V_{m'}^{\upsilon'}$ is given by 
        $$\sum_{\epsilon \in \{ 0, 1\}^n}e^{2\pi i \left(\upsilon \cdot m'+\upsilon'\cdot m+\tau(\epsilon)\cdot (\upsilon+\upsilon')\right)} \textup{Id}_{\V^{\upsilon+\upsilon'}_{m+m'+\tau(\epsilon)}} .$$ 

        \item The monodromy on $\V_m^{\upsilon}\otimes_{\wt{V}_\rho^A}\V_{m'}^{-\upsilon}$ is not a phase, and has $\#\{i, \rho^\vee(\upsilon)_i\ne 0 \text{ mod } \Z\}$ many Jordan blocks each of size $2$. 
    \end{itemize}
\end{Cor}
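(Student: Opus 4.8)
The plan is to reduce the whole computation to the symplectic-boson category $\cbg^{\boxtimes n}$ and then to an explicit free-field calculation, exploiting that the lifting functor $\mc L_A$ is a braided tensor functor (Theorem \ref{Lexact}). By Definition \ref{def:lineopeA} each simple object is a lift $\V_m^\upsilon = \mc L_A(\otimes_i \mc M_i)$ of an explicit $\mc{V}_{\bg}^{\otimes n}$-module, and because $\mc L_A$ preserves braiding (and the simple currents $\V_{\rho(\mu)}$ used in the de-equivariantization braid trivially, by the $[0]$ condition of Proposition \ref{Propcbgsscbgmono}), the monodromy $\mu(\V_m^\upsilon,\V_{m'}^{\upsilon'})$ in $\mc C_\rho^A$ is the image under $\mc L_A$ of the monodromy computed inside $\cbg^{\boxtimes n}$. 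Since $\cbg^{\boxtimes n}$ is a Deligne product, its braiding factorizes over the $n$ tensor factors, so the first step is to reduce everything to the single-factor monodromy $\mu(\mc M_i,\mc M'_i)$ of a pair of simple $\cbg$-modules and then take the product over $i$.

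For a single factor I would realize each $\mc M_i$ (a spectral flow $\Sigma^{a}\mc W_{\upsilon_i}$, or $\mc V_a$ when $\upsilon_i=0$) as a lattice Fock module $V_{L,\mu}$ via the embedding $\mc{V}_{\bg}\hookrightarrow V_L$ of Section \ref{subsec:baby}, and compute the double-braiding by analytic continuation $z\mapsto e^{2\pi i}z$ of the logarithmic intertwining operator, exactly as in the proof of Lemma \ref{LemmonoAq}. The non-logarithmic part contributes the phase $e^{2\pi i B(\mu_1,\mu_2)}$ read off from the OPE exponent $(z-w)^{B(\mu_1,\mu_2)}$, while any genuine $\log z$ produces a nilpotent contribution. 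This immediately settles the first bullet: for $\V_m^{0}=\V_{\sigma(m)}$ Lemma \ref{LemmonoAq} gives the operator $\mathrm{Id}\otimes e^{-2\pi i\sum_i\sigma(m)_i J^i_0}$, and since $J^i_0$ acts on $\mc M_i$ with generalized eigenvalue $\rho^\vee(\upsilon)_i$ and $\sigma^{\T}\rho^\vee=\mathrm{id}$, this collapses to the scalar $e^{2\pi i\,m\cdot\upsilon}$ (up to the overall sign fixed by the braiding convention).

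For the second bullet (the resonance-free case) I would use the factorwise fusion $\mc W_{\upsilon_i}\otimes_{\mc{V}_{\bg}}\mc W_{\upsilon'_i}\cong \mc W_{\upsilon_i+\upsilon'_i}\oplus\Sigma\mc W_{\upsilon_i+\upsilon'_i}$ from the fusion rules just established, the two summands being labelled by $\epsilon_i\in\{0,1\}$. On each summand the monodromy is a pure phase, computed from the relevant pairing $B(\mu_1,\mu_2)$ in the free-field realization; collecting the cross-pairings of the spectral-flow labels against the $\upsilon$-labels produces $\upsilon\cdot m'+\upsilon'\cdot m$, while the shift by the chosen summand contributes $\tau(\epsilon)\cdot(\upsilon+\upsilon')$, yielding
\[
\mu(\V_m^\upsilon,\V_{m'}^{\upsilon'})=\sum_{\epsilon\in\{0,1\}^n}e^{2\pi i\left(\upsilon\cdot m'+\upsilon'\cdot m+\tau(\epsilon)\cdot(\upsilon+\upsilon')\right)}\,\mathrm{Id}_{\V^{\upsilon+\upsilon'}_{m+m'+\tau(\epsilon)}}.
\]
The only care here is the bookkeeping of the coordinate change via the splitting $\rho,\rho^\vee,\sigma,\tau$, which is routine given the decomposition of the charge lattice already set up in Section \ref{sec:VA-BRST}.

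The third bullet, the dual case $\upsilon'=-\upsilon$, is where the real work lies and is the main obstacle. Now $\upsilon+\upsilon'=0$, so in every factor $i$ with $\rho^\vee(\upsilon)_i\ne0\bmod\Z$ the fusion $\mc W_{\upsilon_i}\otimes_{\mc{V}_{\bg}}\mc W_{-\upsilon_i}$ is \emph{resonant} and contains the logarithmic module $\mc W_0^{\pm}$, whose free-field intertwiner carries a genuine $\log z$ (precisely the logarithmic term appearing in the generators $\gamma_0^\upsilon\log(\gamma_0)^{k-1}$ of the self-extensions $\mc W_\upsilon^k$). Continuing $z\mapsto e^{2\pi i}z$ then converts each such $\log z$ into a nilpotent endomorphism, and the difficulty is to show that these assemble into exactly one size-$2$ Jordan block per resonant factor — so that the total count is $\#\{i:\rho^\vee(\upsilon)_i\ne0\bmod\Z\}$ — rather than into higher blocks or trivial contributions. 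I would establish this by tracking the $\log z$ dependence through the indecomposable structure of $\mc W_0^{\pm}$ via the column-picture / self-extension analysis of \cite{BN22}, verifying that the nilpotent part squares to zero and has the asserted rank; the factorization over tensor factors then multiplies these into the stated Jordan structure and, in particular, shows the monodromy is no longer a scalar.
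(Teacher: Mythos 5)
Your proposal is correct and takes essentially the same route as the paper: the paper's entire justification for this Corollary is the single remark that the monodromies are derived ``using free-field fusion rules, in a method similar to Lemma \ref{LemmonoAq}'', which is precisely the computation you spell out --- lifting along the braided tensor functor $\mc L_A$, factoring over the Deligne product $\cbg^{\boxtimes n}$, reading off phases from the OPE exponents of the free-field intertwiners under $z\mapsto e^{2\pi i}z$, and attributing the nilpotent part in the case $\upsilon'=-\upsilon$ to the genuine $\log z$ terms of the resonant fusion containing $\mc W_0^{\pm}$. Your write-up is in fact more detailed than the paper's one-line proof, and the points you flag as requiring care (sign conventions in the first bullet, the Jordan-block bookkeeping in the third) are exactly the points the paper leaves implicit.
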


\begin{Exp}
    Consider again the A twist of free hypermultiplet. In this case, the category $\mc C_\rho^A$ is simply $\cbg$, and the simple modules are given by $\CV_n$ and $\Sigma^n \mc W_{\upsilon}$ for $\upsilon\ne 0\text{ mod }\Z$. The fusion rule of these are given by:
    \begin{itemize}
        \item $\CV_n\otimes_{\mc V_{\bg}}\Sigma^{n'} \mc W_{\upsilon}=\Sigma^{n+n'}\mc W_{\upsilon}$. 

        \item $\Sigma^{n} \mc W_{\upsilon}\otimes_{\mc V_{\bg}}  \Sigma^{n'} \mc W_{\upsilon'}=\Sigma^{n+n'}\mc W_{\upsilon+\upsilon'}\oplus \Sigma^{n+n'+1}\mc W_{\upsilon+\upsilon'}$, if $\upsilon+\upsilon'\ne 0\text{ mod } \Z$. 

        \item $\Sigma^n \mc W_{\upsilon}\otimes_{\mc V_{\bg}}  \Sigma^{n'} \mc W_{-\upsilon}=\Sigma^{m+n+1}\mc P$, where $\mc P$ is an indecomposable module of length $4$. 
    \end{itemize}
\end{Exp}

\begin{Exp} 
    Consider the A twist of $\mathrm{SQED}_1$, with $\rho=(1)$. In this case, we have seen that the category $\mc C_\rho^A$ is the category of modules of $\mc V_{SF}$, and has a unique simple object. The object $\mc P$ from the previous example can be lifted to $\mc C_\rho^A$ and is a projective object in $\mc C_\rho^A$. The fusion rule of this is:
    \be
\mc L_A (\mc P)\otimes_{\wt{\mc V}_\rho^A} \mc L_A (\mc P)\cong \C^4\otimes \mc L_A(\mc P).
    \ee

\end{Exp}

\subsection{The category of line operators for the B twist}\label{subsec:LineB}

\subsubsection{The Kazhdan-Lusztig category of $\widehat{\grho}$}

Since we already have an isomorphism $\mc V_\rho^A\cong\mc{V}_\rho^B$, one may ask why we still need to define the category $\mc C_{\rho}^B$. After all, we must have equivalence of derived categories. The reason is that we would like to see the category defined from the perspective of the affine Lie superalgebra $\widehat{\grho}$, so that we can compare the objects with physical line operators. The VOA $\mc{V}_{\rho}^B$ is a simple current extension of $\widehat{\grho}$, we would like to use the Kazhdan-Lusztig category of $\widehat{\grho}$ to define the category, since objects in this category are directly comparable to physical line operators. 

\begin{Def} \label{def:KLrho}
We define $KL_\rho$, the Kazhdan-Lusztig category of $\widehat{\grho}$, to be the category of finite-length, grading-restricted modules of $\widehat{\grho}$.
\end{Def}

We would like to define $\mc C_{\rho}^B$ as lifts of modules from $KL_\rho$, but in order to do so, we first need to show that $KL_\rho$ is a braided tensor category. We claim:

\begin{Thm}\label{ThmKLqbrten}
$KL_\rho$ is a braided tensor category defined by logarithmic intertwining operators. Moreover, tensor product is exact functor $KL_\rho$. 
\end{Thm}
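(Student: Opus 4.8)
The plan is to reduce Theorem \ref{ThmKLqbrten} to the already-established braided tensor structure on the symplectic-boson category $\cbg^{\boxtimes n}$ (Theorem \ref{Thm:cbgBTC}), by exploiting the free field realization of Proposition \ref{PropffpurtBgaugeq} together with the structural criterion of \cite[Theorem 3.3.4]{creutzig2021tensor} for the existence of vertex tensor category structure. Recall from Section \ref{sec:Bff} that $\widehat{\grho} \cong \bigcap_i \mathrm{ker}\, S^i_0$ inside the lattice VOA $\CV_Z$; since the screening charges are built from the very same $\normord{e^{Z^i}}$-type operators that appear in the free field realization of symplectic bosons, modules in $KL_\rho$ should be accessible as kernels of screening operators acting on Fock modules. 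The first step is therefore to give an explicit description of simple (and finite-length grading-restricted) objects of $KL_\rho$ via induction from finite-dimensional $\grho$-modules, exactly mirroring the $\mathfrak{gl}(1|1)$ discussion of Section \ref{subsec:baby}, and to identify them inside the free-field picture. This is where one verifies that $KL_\rho$ decomposes according to the generalized eigenvalues of the $E^a_0$ into blocks, and that the $N^a_0$-action controls the relevant gradings.

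Next I would verify the finiteness hypotheses required by \cite[Theorem 3.3.4]{creutzig2021tensor} (or the analogous criterion used for the $\mathfrak{gl}(1|1)$ case in \cite{creutzig2020tensor}). The essential conditions are: that $KL_\rho$ is closed under the contragredient (dual) functor; that every object has a lower bound on conformal weights and finite-dimensional weight spaces (built into the definition of grading-restricted); and, most importantly, that the $C_1$-cofiniteness / finite-generation condition holds so that the $P(z)$-tensor products exist and the associativity isomorphisms can be constructed. For an abelian affine Lie superalgebra such as $\widehat{\grho}$, whose bosonic part is a direct sum of Heisenberg algebras and whose odd part is governed by symplectic fermions, these conditions are inherited from the corresponding statements for the Heisenberg and symplectic-fermion pieces. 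I would argue them block-by-block, using that each block $KL_{\rho,e}$ is built from a finite extension problem over a product of Heisenberg and singlet-type algebras, all of which are known to satisfy the relevant cofiniteness.

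The exactness of the tensor product is the step I would handle using the free field realization most directly. The key input is that fusion is exact for both Heisenberg VOAs and for $\CV_{\beta\gamma}$ (as already used in the proof of Proposition \ref{Propcbgsscbgmono}, via the exactness of $\Fock_\mu \otimes_{\mc H}(-)$ and $\mathbb V_\mu \otimes_{\CV_{\beta\gamma}}(-)$). Since the screening charges $S^i_0$ commute with fusion over the ambient lattice VOA, taking kernels is compatible with the tensor product, and exactness descends from the ambient free-field category to $KL_\rho$. Concretely, I would show that the tensor functor on $KL_\rho$ agrees, after free field realization, with the (exact) tensor functor on the relevant subcategory of $\CV_Z$-modules, so exactness is transported. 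The result of \cite{creutzig2020tensor} for $n=1$ (affine $\mathfrak{gl}(1|1)$) serves as the prototype, and the general abelian case follows the same template because $\widehat{\grho}$ differs from $\widehat{\mathfrak{gl}(1|1)}^{\otimes n}$ only by the bilinear form $\rho^T\rho$ on the Heisenberg directions, which does not affect exactness.

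The main obstacle I anticipate is verifying the cofiniteness condition of \cite[Theorem 3.3.4]{creutzig2021tensor} rigorously for the full superalgebra $\widehat{\grho}$ rather than appealing to it blockwise. Because $\grho$ has nontrivial odd-even brackets $\{\psi^{i,+},\psi^{i,-}\} = \sum_a \rho^i{}_a E^a$ mixing the fermions with the central $E$, the modules do not simply factor as external tensor products of $\mathfrak{gl}(1|1)$ modules, so one cannot naively cite the $n=1$ result $n$ times. The careful work is in showing that the induced modules $\widehat M$ from finite-dimensional $\grho$-modules $M$ are $C_1$-cofinite and that their finite-length grading-restricted quotients remain so; I expect this to require an explicit analysis of the PBW filtration on $\widehat{\grho}$ and a bound on the generators of the associated graded, analogous to but more intricate than the character computation in the proof of Proposition \ref{PropffpurtBgaugeq}. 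Once cofiniteness is secured, the braided tensor structure and its exactness follow from the cited machinery essentially formally.
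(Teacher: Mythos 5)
Your proposal has two genuine gaps, and they sit exactly where the real work of the theorem lives.

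First, for the braided tensor structure you invoke the general criterion of \cite[Theorem 3.3.4]{creutzig2021tensor} and thereby take on the burden of verifying $C_1$-cofiniteness (plus the other finiteness hypotheses) for $\widehat{\grho}$-modules --- a burden you yourself flag as ``the main obstacle'' and never discharge. The paper sidesteps this entirely by using the affine-specific criterion \cite[Corollary 3.4.6]{creutzig2021tensor}: for an affine (super)algebra it suffices that \emph{every grading-restricted generalized Verma module has finite length}. The whole mathematical content of the paper's proof is this finite-length statement, established in two steps: (i) Lemma \ref{Lem:grhosimple}, showing $\widehat{M}$ is already \emph{simple} whenever each entry of $\rho(\upsilon)$ is either non-integral or zero; (ii) for the remaining blocks, a spectral-flow automorphism $\sigma_{0,\nu}$ moves the $E_0$-eigenvalue into the generic-or-zero range, and a finite filtration argument --- quotienting by the submodules generated by the finitely many shifted-moding vectors $\psi^{i,\pm}_n m$ --- exhibits $\sigma_{0,\nu}\widehat{M}$ as a finite iterated extension of quotients of finite-length Verma modules. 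Your proposal contains neither the criterion nor any substitute for this argument; ``once cofiniteness is secured, the rest follows formally'' concedes precisely the point at issue.

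Second, your exactness argument is not merely incomplete but wrong in principle. You claim that because the screening charges commute with fusion over the ambient lattice VOA $\CV_Z$, ``taking kernels is compatible with the tensor product'' and exactness descends to $KL_\rho$. Fusion over a vertex subalgebra realized as a kernel of screenings is \emph{not} computed by fusion over the ambient algebra: the standard counterexample is the singlet algebra $M_0\subset H_\phi$, used throughout this very paper, for which Fock-module fusion is trivially exact and additive while singlet fusion is notoriously intricate and non-semisimple. Moreover the embedding $\widehat{\grho}\hookrightarrow \CV_Z$ runs the wrong way for the machinery of Section \ref{subsecsimplecurrent}, which transports tensor structure from a smaller algebra's category to a \emph{simple current extension} of it, not to a subalgebra. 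The paper's actual route to exactness is Theorem \ref{Thm:KLKLrho}: $\widehat{\grho}\otimes \mc{V}_{\Z^{2(n-r)}}^{+-}$ is a simple current extension of $\widehat{\mathfrak{gl}(1|1)}^{\otimes n}$, so the lifting functor $\mc L_\rho^{\mathrm{ungauge}}$ --- exact and composition-series-preserving by Theorem \ref{Lexact}, given exactness in $KL^{\boxtimes n}$ from \cite{creutzig2020tensor} --- identifies $KL_\rho$ with a de-equivariantization of $KL^{\boxtimes n,\rho,[0]}$ and transports exactness. Your instinct to reduce to the $\mathfrak{gl}(1|1)$ case is the right one, but the reduction must run through this extension (with $\widehat{\grho}\otimes\mc{V}_{\Z^{2(n-r)}}^{+-}$ as the \emph{larger} algebra), not through kernels of screenings inside $\CV_Z$; note also that this step logically presupposes the braided tensor structure on $KL_\rho$ from the first part, so it cannot be used to produce that structure in the first place.
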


Let $\mathbb{U}_\mu$ be the module of $\widehat{\grho}$ generated by the monopole operator $\mc U_\mu$. 

\begin{Prop}\label{prop:monopolesimple}
    The modules $\mathbb{U}_\mu$ are simple currents in $KL_\rho$ that satisfy the fusion rules:
    \be
\mathbb{U}_\mu\otimes_{\widehat{\grho}} \mathbb{U}_\nu=\mathbb{U}_{\mu+\nu}.
    \ee
    The algebra object $\bigoplus \mathbb{U}_\mu$ in $KL_\rho$ can be identified with $\CV_\rho^B$. 
\end{Prop}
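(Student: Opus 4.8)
The plan is to establish Proposition \ref{prop:monopolesimple} by combining three ingredients already available in the excerpt: the spectral-flow description of the $\mathbb{U}_\mu$, the simplicity of the vacuum module $\mathbb{U}_0$, and the general theory of spectral flow as a tensor-respecting autoequivalence. First I would recall from Section \ref{sec:mon2} that by definition $\mathbb{U}_\mu = \sigma_{\rho^\T\rho\mu,\,\mu}\,\widehat{\grho}$ is the spectral-flow image of the vacuum module, where $\sigma_{\mu,\nu}$ is the spectral-flow automorphism associated to the $\mathfrak{gl}(1)$ Kac-Moody action $\mu\cdot N(z)$. The key input is Lemma \ref{Lem:grhosimple} (cited in the excerpt, to be proved in Section \ref{sec:VOAlines}), which states that $\mathbb{U}_0 = \widehat{\grho}$ is simple as its own module; this is independently confirmed by the free-field realization of Proposition \ref{PropffpurtBgaugeq}.

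Granting this, the proof that each $\mathbb{U}_\mu$ is simple is immediate: spectral flow is an exact functor that maps simples to simples (Proposition 2.5 of \cite{creutzig2019schur}, recalled in Section \ref{sec:spectralflow}), so $\mathbb{U}_\mu = \sigma_{\rho^\T\rho\mu,\,\mu}\mathbb{U}_0$ is simple because $\mathbb{U}_0$ is. To see that the $\mathbb{U}_\mu$ are genuinely objects of $KL_\rho$, I would verify the defining conditions of Definition \ref{def:KLrho}: finite length is the simplicity just established, and grading-restrictedness (conformal weights bounded below, finite-dimensional weight spaces) follows from the explicit action of spectral flow on $L_0$, which shifts conformal weight by the finite amount $\tfrac12|\rho\mu|^2$ recorded in \eqref{mon-charges}, preserving the lower bound and the finite-dimensionality of each graded piece inherited from $\widehat{\grho}$. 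The fusion rule
\be
\mathbb{U}_\mu\otimes_{\widehat{\grho}}\mathbb{U}_\nu \cong \mathbb{U}_{\mu+\nu}
\ee
then follows from the compatibility of spectral flow with fusion: by \eqref{spec-fusion} (Proposition 2.11 of \cite{Li:1997} together with skew-symmetry of intertwining operators), $\sigma^\ell(\CV)\otimes_\CV\sigma^{\ell'}(\CV)\cong\sigma^{\ell+\ell'}(\CV)$, and the additivity $\sigma_{\rho^\T\rho\mu,\mu}\circ\sigma_{\rho^\T\rho\nu,\nu}=\sigma_{\rho^\T\rho(\mu+\nu),\,\mu+\nu}$ of the spectral-flow parameters gives exactly $\mathbb{U}_{\mu+\nu}$. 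That each $\mathbb{U}_\mu$ is a \emph{simple current} is the content of the same formula: tensoring with $\mathbb{U}_\mu$ is an invertible operation with inverse $\mathbb{U}_{-\mu}$, so it permutes simple objects. This requires knowing that $KL_\rho$ carries a tensor structure at all, which is precisely Theorem \ref{ThmKLqbrten}; I would therefore present this proposition as a corollary deduced \emph{after} that theorem.

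The final clause, identifying the algebra object $\bigoplus_\mu\mathbb{U}_\mu$ in $KL_\rho$ (more precisely in $\mathrm{Ind}(KL_\rho)$, since the sum is infinite) with $\CV_\rho^B$, is a direct rereading of Definition \ref{def-VB}: the OPE \eqref{eq:mons-OPE} of the vertex operators $\CU_\mu$ endows $\bigoplus_\mu\mathbb{U}_\mu$ with its vertex-algebra (equivalently, commutative-algebra-object) structure, and locality of the operators for $\mu,\mu'\in\Z^r$ is exactly the statement that the monodromy among the $\mathbb{U}_\mu$ is trivial, so that the algebra is commutative in the braided sense. I expect the main obstacle to be not in this proposition itself but in its logical dependencies — chiefly Theorem \ref{ThmKLqbrten}, which supplies the ambient braided tensor structure and the exactness of fusion that makes ``simple current'' meaningful, and Lemma \ref{Lem:grhosimple}, which supplies simplicity of the vacuum. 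Once those are in hand, the argument here is a short application of the formal properties of Li's $\Delta$-operator, and I would keep the write-up correspondingly brief, citing \eqref{spec-fusion}, \eqref{eq:mons-OPE}, and \eqref{mon-charges} rather than recomputing anything.
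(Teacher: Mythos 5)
Your proposal is correct, but it takes a genuinely different route from the paper's proof. The paper establishes this proposition through the free-field realization and the ungauging functor: by Corollary \ref{cor:VB-ff}, each $\mathbb{U}_\mu\otimes \mc{V}_{\mathbb{Z}^{2(n-r)}}^{+-}$ is the kernel of the screening charges on $\CV_{\CZ,\mu\cdot X}\otimes \mc{V}_{\mathbb{Z}^{2(n-r)}}^{+-}$, which by Theorem \ref{Thm:KLKLrho} is the image under $\CL_\rho^{\mathrm{ungauge}}$ of an explicit module over $\widehat{\mathfrak{gl}(1|1)}^{\otimes n}$; simplicity is then imported from \cite[Proposition 4.3]{BN22}, the fusion rules are transported through the tensor functor from the known fusion rules in $KL^{\boxtimes n}$, and the algebra structure is read off from the free-field OPE. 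Your route --- simplicity of the vacuum (Lemma \ref{Lem:grhosimple}) plus exactness of spectral flow to get simplicity of every $\mathbb{U}_\mu$, then Li's $\Delta$-operator \eqref{spec-fusion} for the fusion rules --- is essentially the informal argument the paper itself sketches in Section \ref{sec:mon2} to motivate Definition \ref{def-VB}, here promoted to a proof. It is more intrinsic: it needs neither Theorem \ref{Thm:KLKLrho} nor the comparison with $\widehat{\mathfrak{gl}(1|1)}^{\otimes n}$, only Theorem \ref{ThmKLqbrten} (so that tensor products in $KL_\rho$ exist and ``simple current'' is meaningful) and the general spectral-flow facts of Section \ref{sec:spectralflow}; you correctly order it as a corollary of that theorem. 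What the paper's route buys is that the lifting-functor machinery it invokes is needed anyway for the mirror-symmetry Theorem \ref{ThmCACB}, so the fusion rules and the identification with $\CV_\rho^B$ come out already packaged in the categorical framework used for the subsequent simple-current extension.

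One step of yours needs repair, though it is not fatal: membership of $\mathbb{U}_\mu$ in $KL_\rho$. Spectral flow does \emph{not} shift all conformal weights by the uniform amount $\tfrac12|\rho\mu|^2$; that is only the weight of the image of the vacuum vector, and in general spectral flow shifts a state's weight by an amount depending on its Heisenberg charges, so it can a priori destroy lower-boundedness (indeed $\mathbb{U}_\mu$ contains states of weight strictly below $\tfrac12|\rho\mu|^2$). The correct argument is that only finitely many modes of the $\psi^{i,\pm}$ acquire negative weights after the flow, and these are fermionic, so the weights of $\mathbb{U}_\mu$ remain bounded below with finite-dimensional graded pieces --- this is exactly what the PBW basis and character computation in Appendix \ref{app:index} exhibit, and finite length then follows from simplicity. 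With that substitution your argument goes through.
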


We will prove the two statements later, but they, together with the technique of simple current extensions, implies the existence of the following lifting functors between braided tensor categories:
\be
\mc{L}_{B}: KL_\rho^{[0]}\longrightarrow \mc{V}_\rho^B\mathrm{-Mod}_{\mathrm{loc}}\left(\mathrm{Ind}(KL_\rho)\right).
\ee
Similar to Definition \ref{def:lineopeA}, we define:

\begin{Def}\label{Def:lineopeB}
    Let us denote by $\mc{C}_{\rho}^B$ to be the image of $KL_\rho^{[0]}$ under $\mc{L}_{B}$:
    \be
\mc{C}_{\rho}^B:=\mc{L}_{B}(KL_\rho^{[0]}).
    \ee
    This is a braided tensor category via logarithmic intertwing operators. 
\end{Def}

As in the case of $\wt{\mc{V}}_{\rho}^A$, there is an action of $(\C^*)^r$ on $\mc{V}_\rho^B$, such that the weight of the monopole operator $\mc U_\mu$ is given by $\mu$, and the weight of other generators are zero. By definition, we clearly have $(\mc{V}_\rho^B)^{(\C^*)^r}=\widehat{\grho}$. We conclude that $\mc{C}_{\rho}^B$ is the de-equivariantization $KL_{\rho, (\C^*)^r}^{[0]}$, or $KL_{\rho}^{[0]}/ \Z^r$.

We would like to understand the category $KL_{\rho}$, and especially those that have trivial monodromy with $\mc{V}_\rho^B$, in order to compare with physical line operators. For this purpose, and for the proof of Theorem \ref{ThmKLqbrten}, we need to first look at Verma modules of $\widehat{\grho}$. 

\subsubsection{The Verma modules}

For each finite-dimensional module of $\grho$, say $M$, one can define a module for $\widehat{\grho}$, denoted by $\widehat{M}$, in the following way. The mode algebra of $\widehat{\grho}$, which we denote by $U\left(\widehat{\grho}\right)$, has a subalgebra $U_{\geq 0}$ generated by all non-negative modes. The Lie algebra $\grho$ is a quotient of $U_{\geq 0}$. We can then view $M$ as a $U_{\geq 0}$ module through the quotient map $U_{\geq 0}\to \grho$. In other words, we let the positive modes act trivially. The module $\widehat{M}$ is then the induction of $M$:
\begin{equation}
    \widehat{M}:= M\otimes_{U_{\geq 0}} U\left(\widehat{\grho}\right).
\end{equation}

Any simple elememt in $KL_\rho$ is generated by its lowest conformal weight space, and so is a quotient of a Verma module. Therefore, to understand simple modules in $KL_\rho$, we need to understand Verma modules. 

\begin{Lem}\label{Lem:grhosimple}
    Let $\upsilon\in \C^r$ and $M$ be a simple module of $\grho$ such that $E^a$ acts with eigenvalue $\upsilon^a$. When $\sum_i\rho_{ia}\upsilon^a\notin \Z$ or $\sum_i\rho_{ia}\upsilon^a=0$ for all $i$, the module $\widehat{M}$ is simple.
\end{Lem}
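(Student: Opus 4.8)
The plan is to characterize simplicity of $\wh M$ through the absence of singular vectors, and then to reduce that to the nondegeneracy of a contravariant form whose degeneracies are controlled exactly by the scalars $c_i:=\sum_a\rho^i{}_a\upsilon^a=(\rho\upsilon)_i$. (I read the hypothesis as the per-$i$ condition: for each $i$, $(\rho\upsilon)_i\notin\Z$ or $(\rho\upsilon)_i=0$, i.e. no $(\rho\upsilon)_i$ is a nonzero integer.) First I would record the grading: $\wh M$ is generated by its bottom space $M$ (at conformal weight $0$) under the negative modes of $N_a,E^a,\psi^{i,\pm}$; it is graded by $L_0$, bounded below, together with the zero-mode charges of $N_{a,0},E^a_0$; and the strictly positive modes lower $L_0$. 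Since $M$ is a simple $\grho$-module, any nonzero graded submodule meeting conformal weight $0$ is all of $\wh M$. Hence a proper submodule has lowest weight $h>0$, and any lowest-weight vector of it is annihilated by every strictly positive mode. So $\wh M$ is simple if and only if it carries no such singular vector of positive conformal weight.

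Next I would introduce the standard Shapovalov-type contravariant pairing between $\wh M$ and the module $\wh{M^*}$ induced from the dual $\grho$-module $M^*$, contracting positive modes against negative ones down to the canonical pairing $M\times M^*\to\C$. This pairing is graded (weight spaces of distinct $(L_0,N_0,E_0)$-charge are orthogonal) and its left radical is precisely the maximal submodule of $\wh M$ not meeting $M$. The problem then reduces to proving that this form is nondegenerate on every weight space under the hypothesis on the $c_i$.

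The computation uses that $\wh\grho$ is an oscillator (Heisenberg--Clifford) algebra. On the relevant modules $E^a_0=\upsilon^a$ is a central scalar, so for each $m\ge 1$ the bosonic modes of $N_a,E^a$ pair through the matrix $m\sm{(\rho^T\rho)_{ab}&\delta_a{}^b\\ \delta_a{}^b&0}$, of determinant $\pm m^{2r}\ne 0$, while the fermionic modes contract as $\{\psi^{i,+}_m,\psi^{i,-}_{-m}\}=m+c_i$ and $\{\psi^{i,+}_{-m},\psi^{i,-}_{m}\}=c_i-m$. Ordering a PBW basis by mode number, the Gram matrix on a fixed weight space becomes block upper-triangular in the oscillator level, with diagonal blocks given by these decoupled bosonic and fermionic contractions; its determinant is therefore a nonzero multiple of products of the Heisenberg determinants and of the fermionic factors $(m+c_i)$ and $(c_i-m)$ over the modes that occur. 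These fermionic factors vanish exactly when some $c_i\in\{m,-m\}$ with $m\ge 1$, i.e. when some $c_i$ is a nonzero integer. Under the hypothesis that each $c_i$ is either non-integral or zero, no factor vanishes (note that for $c_i=0$ the factors are $\pm m\ne 0$ at all $m\ge 1$, while for $c_i\notin\Z$ even the $m=0$ value $c_i$ is nonzero), so the form is nondegenerate and $\wh M$ is simple.

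The main obstacle is the coupling $[N_{a,m},\psi^{i,\pm}_k]=\pm\rho^i{}_a\psi^{i,\pm}_{m+k}$, which means the bosonic and fermionic oscillators do not commute and the contravariant form does not factorize naively over modes. The careful step is to verify that this coupling only produces contributions at strictly lower mode level, so that with the mode-ordered PBW basis the Gram matrix is genuinely block-triangular with the decoupled diagonal computed above; equivalently, one may pass to the bosonized presentation of \eqref{ffRealization}, in which the fermion-charge currents $\partial Z^i$ diagonalize each fermion tower and $N_a=\partial X_a+\sum_i\rho_{ia}\partial Z^i$ decouples the towers at the cost of enlarging the Heisenberg algebra (used here purely as a device to evaluate the Gram determinant, not to assert injectivity of the map $\wh\grho\to\CV_Z$, which would be circular with Proposition \ref{PropffpurtBgaugeq}). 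Either way the degeneracy locus of the form is exactly $\{\,c_i\in\Z_{\ne 0}\,\}$, which is precisely the complement of the stated hypothesis, completing the proof. This is also consistent with the $\mathfrak{gl}(1|1)$ specialization, where $\wh V_{n,e}$ is irreducible iff $e\notin\Z$ and the only reducible integral case is $e\in\Z_{\ne 0}$.
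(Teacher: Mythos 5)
Your overall strategy---reduce simplicity of $\widehat M$ to nondegeneracy of a contravariant (Shapovalov-type) form and then evaluate its Gram determinant---is a legitimate route and genuinely different from the paper's, which instead shows directly that every nonzero vector of $\widehat M$ generates it, by fixing a lexicographic order on PBW monomials and stripping negative modes off the leading term with positive modes, using $[E^a_n,N_{b,-n}]=n\delta^a_b$ and $\{\psi^{i,-}_n,\psi^{i,+}_{-n}\}=n+c_i\neq 0$ (exactly the quantities your factors are built from). However, your argument has a genuine gap at precisely the step you flag as ``the careful step'': the claim that in a mode-ordered PBW basis the Gram matrix is block upper-triangular with decoupled Heisenberg/Clifford diagonal blocks is false. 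The couplings $[N_{a,m},\psi^{i,\pm}_{-k}]=\pm\rho^i{}_a\psi^{i,\pm}_{m-k}$ and $\{\psi^{i,+}_m,\psi^{j,-}_{-k}\}=\delta^{ij}\sum_a\rho^i{}_aE^a_{m-k}$ do \emph{not} only contribute at strictly lower level: when a partial contraction terminates in a zero mode acting on $M$, or in a full contraction after a mode shift, it produces nonzero entries pairing purely bosonic PBW monomials against purely fermionic ones in the \emph{same} weight space at the \emph{same} level. Concretely, already for $\rho=(1)$ (affine $\mathfrak{gl}(1|1)$) and the vacuum module, with $\mathbf 1$ the vacuum vector,
\begin{equation*}
\bigl\langle N_{-2}\mathbf 1,\ \psi^{+}_{-1}\psi^{-}_{-1}\mathbf 1\bigr\rangle \;=\; \mathbf 1^{*}\bigl(N_{2}\,\psi^{+}_{-1}\psi^{-}_{-1}\mathbf 1\bigr)\;=\;\mathbf 1^{*}\bigl(\psi^{+}_{1}\psi^{-}_{-1}\mathbf 1\bigr)\;=\;1\;\neq\;0,
\end{equation*}
and for a bottom $M$ containing $u=\psi^{i,+}_0v$ one gets $\langle N_{a,-1}v,\ \psi^{i,-}_{-1}u\rangle=-\rho^i{}_a c_i$ at level one. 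So the matrix is not triangular with your diagonal. Its determinant may well still equal a nonzero multiple of the claimed product (it does in the level-one example, where the cross terms drop out of the expansion for structural reasons), but establishing this in general is exactly the content of a Shapovalov-determinant theorem for $\widehat\grho$ and needs an additional argument (a filtration or leading-term analysis controlling these boson--fermion cross terms, exploiting centrality of the $E$'s). Your proposal does not supply it.

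Your fallback via the bosonization \eqref{ffRealization} is also not carried out, and your own circularity worry is well founded: in the paper, injectivity of $\widehat\grho\hookrightarrow\CV_Z$ (Proposition \ref{PropffpurtBgaugeq}) is \emph{deduced from} Lemma \ref{Lem:grhosimple}. One can avoid the circle by pulling the contravariant form back along the universal map from $\widehat M$ to the corresponding free-field module (contravariant forms normalized on the bottom are unique, so the pulled-back form \emph{is} the form), but then the image of your PBW basis is not a standard Fock basis, and evaluating the Gram determinant there reintroduces the same missing comparison. If you want a complete proof with minimal machinery, the paper's direct route is the one to absorb: write any nonzero $w=w_0+w'$ with $w_0$ the lex-leading monomial, kill the $N_{a,-n}$'s in $w_0$ by applying powers of $E^a_n$, then the $\psi^{i,+}_{-n}$'s by applying $\psi^{i,-}_n$ (this is the only place the hypothesis enters, via $n+c_i\neq 0$), then the remaining $\psi^-$'s and $E$'s; the tail $w'$ dies along the way, leaving a nonzero vector of $M$, and simplicity of $M$ finishes the argument.
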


\begin{proof}
    We only need to show that any $w\in \widehat{M}$ generates the entire module. Note that an element in $\widehat{M}$ is always of the form:
    \be
\sum N^a_*\psi^{i,+}_*\psi^{i,-}_*E^a_* v
    \ee
where $v$ is in the lowest conformal weight space of $\widehat{M}$, which is simply $M$. The subscripts are all negative integers. We give a lexicographic order to $\widehat{M}$ such that $N>\psi^+>\psi^->E$, and
\be
N^a_{-n}>N^a_{-n+1}>\cdots> N^{a+1}_{-n}>N^{a+1}_{-n+1}
\ee
and similarly for $\psi^\pm$ and $E$. Given an $w$, let $w=w_0+w'$ where $w_0=N^a_*\psi^{i,+}_*\psi^{i,-}_*E^a_* v$ is a  homogeneous vector that is the biggest in the lexicographic order, such that $w_0>w'$. Denote by $\mc W$ the sub-representation generated by $w$. 

We perform the following procedure. If the expression of $w_0$ involves $(N^a_{-n})^k$, we will apply to $w$ $(E^a_n)^k$. Since $[E^a_n, N^a_{-n}]=n$, the vector $(E^a_n)^k w_0$ will have no $N^a_{-n}$ in its expression. Moreover, each time applying $E^a_{n}$, we obtain a non-zero vector. We repeate this process until all the $N^a_*$ in the expression of $w_0$ is killed. 

The next step we apply $\psi^{i,-}_{n}$ until there is no $\psi^{i,+}_*$ in the expression of $w_0$. Note that this step, we need the assumption that $\sum_i\rho_{ia}\upsilon^a\notin \Z$ or equal to $0$, since the commutator $\{\psi^{i,-}_n, \psi^{i,+}_{-n}\}=n+\sum_i\rho_{a}{}^iE^{a}_{0}$ acts non-trivially on $w_0$ when $n+\sum_i\rho_{a}{}^iE^{a}_{0}\ne 0$. By definition, $\sum_i\rho_{ia}E^{a}_{0}$ acts as $\sum_i\rho_{ia}\upsilon^a$, which is not an integer or equal to $0$ by assumption. Therefore the action of $\psi^{i,-}_n$ is nontrial on $w_0$. We can now safely keep the procedure until all the $\psi^{i,+}_*$ in the expression of $w_0$ is annihilated. 

We can repeat this process for all $\psi^{j,-}_*$ and $E^a_*$, until all the negative modes in the expression of $w_0$ is annihilated. Since $w_0>w'$, this process must annihilate $w'$ entirely, and we are left with a nonzero vector in the lowest conformal weight space. Consequently, $\mc W$ contains a nonzero vector $v$ in $M$. We can now conclude the proof since $M$ is assumed to be simple, which means $\mc W$ contains $M$, and consequently $\widehat{M}$.  
    
\end{proof}

\begin{Rmk}
    In particular, the vacuum module of $\widehat{\grho}$ is simple, since the vacuum module is defined as $\widehat{\C}$ where $\C$ is the trivial (and therefore simple) $\grho$ module. This confirms the discussion of Section \ref{sec:mon2}, especially Proposition \ref{PropffpurtBgaugeq} and Corollary \ref{cor:VB-ff}. 
\end{Rmk}

Consider a simple module $M$ of $\grho$. Since $N^a_0$ and $E^a_0$ (zero mode of $N^a(z),E^a(z)$) commutes with each other, there must be at least one simultaneous eigenvector for all of them. By acting on $\psi^{i,-}_0$, we may assume that this eigenvector $v$ is annihilated by $\psi^{i,-}_0$ for all $i$. Let $(m_a,\upsilon_a)$ be its eigenvalues under $N^a_0$ and $E^a_0$. It is clear then that the module $M$ is spanned by vectors of the form:
\begin{equation}\label{eqspansimplegq}
    \psi^{i_1,+}_0\cdots \psi^{i_k,+}_0v,\qquad i_1< i_2< \cdots < i_k, k\leq n.
\end{equation}
Each of this vector is an eigenvector of $N^a_0,E^a_0$ with eigenvalues:
\begin{equation}
    \left(m_a+\sum\limits_{1\leq s\leq k} \rho_{i_s a}, ~ \upsilon_a\right). 
\end{equation}
Let us define module $V_{(m,\upsilon)}$ to be the module generated by vectors of the form in equation \eqref{eqspansimplegq}. Then any simple module is a quotient of $V_{(m,\upsilon)}$ for some ${(m,\upsilon)}\in \C^r\times \C^r$. Denote by $\widehat{V}_{(m,\upsilon)}$ the corresponding Verma module. 

\begin{Lem}
    The module $V_{(m,\upsilon)}$ is a simple module of $\grho$ when $\sum \rho_{ia}\upsilon^a\ne 0$ for all $i$. 
\end{Lem}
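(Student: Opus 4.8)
The plan is to prove simplicity of the finite-dimensional $\grho$-module $V_{(m,\upsilon)}$ by the same lexicographic/commutation-relation argument used in Lemma \ref{Lem:grhosimple} for the affine Verma modules, but now restricted to the zero modes. Since $V_{(m,\upsilon)}$ is spanned by the $2^n$ vectors $\psi^{i_1,+}_0\cdots\psi^{i_k,+}_0 v$ of equation \eqref{eqspansimplegq}, it is manifestly finite-dimensional, so the only content is to show any nonzero $w\in V_{(m,\upsilon)}$ generates the whole module under the $\grho$-action.

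First I would recall the relevant commutators among the zero modes: $N^a_0$ and $E^a_0$ are central-ish (they act by scalars $m_a,\upsilon_a$ on $v$ and measure weights on the spanning vectors), while the odd generators satisfy $\{\psi^{i,+}_0,\psi^{j,-}_0\}=\delta^{ij}\sum_a\rho^i{}_a E^a_0$, which on $V_{(m,\upsilon)}$ acts by the scalar $\delta^{ij}\sum_a\rho^i{}_a\upsilon^a$. The hypothesis $\sum_a\rho_{ia}\upsilon^a\neq 0$ for all $i$ is exactly what makes each of these anticommutators act as a nonzero scalar. Given a nonzero $w$, expand it in the spanning basis and pick the basis monomial $w_0=\psi^{i_1,+}_0\cdots\psi^{i_k,+}_0 v$ of maximal length (or maximal in a fixed lexicographic order) appearing with nonzero coefficient. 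Applying the lowering operators $\psi^{i_s,-}_0$ successively removes the factors $\psi^{i_s,+}_0$ one at a time; because $\{\psi^{i,-}_0,\psi^{i,+}_0\}=\sum_a\rho^i{}_a\upsilon^a\neq 0$, each such application returns a nonzero multiple of the monomial with that factor deleted, while it annihilates any lower-order term that lacks the factor. Iterating, I strip $w_0$ down to a nonzero multiple of $v$, showing $v$ lies in the submodule generated by $w$. Since every spanning vector $\psi^{i_1,+}_0\cdots\psi^{i_k,+}_0 v$ is obtained from $v$ by the $\psi^{i,+}_0$, the submodule generated by $w$ is all of $V_{(m,\upsilon)}$.

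The one point requiring slight care is the bookkeeping with the fermionic signs and the nilpotency of the $\psi^{i,+}_0$ (each squares to zero, so monomials are genuinely indexed by subsets $i_1<\cdots<i_k$). Because the $\psi^{i,+}_0$ anticommute among themselves and the $\psi^{i,-}_0$ likewise, one must order the lowering steps consistently with the chosen order on the factors of $w_0$, so that the deletion at each step is clean and the surviving coefficient is a product of the nonzero scalars $\sum_a\rho^{i_s}{}_a\upsilon^a$ (up to a sign). This is entirely parallel to the affine argument, just without the conformal grading, so it is routine.

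I expect the main (and only mild) obstacle is to confirm that the lower-order terms $w'$ in $w=w_0+w'$ are genuinely killed by the stripping procedure applied to $w_0$, rather than conspiring to cancel the surviving $v$; this is handled exactly as in Lemma \ref{Lem:grhosimple} by choosing $w_0$ maximal in the order, so that the sequence of lowering operators annihilating $w_0$ down to $v$ sends every strictly smaller monomial to zero. Thus the result follows immediately, and I would remark that this lemma is precisely the finite-mode shadow of Lemma \ref{Lem:grhosimple}: under the hypothesis $\sum_a\rho_{ia}\upsilon^a\neq 0$ for all $i$, the finite module $V_{(m,\upsilon)}$ is simple, and its affine induction $\widehat{V}_{(m,\upsilon)}$ is then simple as well.
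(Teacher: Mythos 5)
Your proof is correct and follows essentially the same route as the paper's: fix an order on the spanning monomials $\psi^{i_1,+}_0\cdots\psi^{i_k,+}_0v$, strip the maximal term with successive $\psi^{i,-}_0$'s, and use that each anticommutator $\{\psi^{i,+}_0,\psi^{i,-}_0\}$ acts by the nonzero scalar $\sum_a\rho_{ia}\upsilon^a$ while all strictly smaller terms are annihilated, so the submodule generated by any nonzero vector contains $v$ and hence everything. One caveat outside the lemma itself: your closing remark that $\widehat{V}_{(m,\upsilon)}$ is then simple as well does not follow from mere nonvanishing of $\sum_a\rho_{ia}\upsilon^a$ — simplicity of the affine Verma module in Lemma \ref{Lem:grhosimple} requires $\sum_a\rho_{ia}\upsilon^a\notin\Z$ (or $=0$), since integer nonzero values allow singular vectors at higher modes.
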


\begin{proof}
    The idea of the proof is very similar to the previous lemma. We define a lexicographic order on $V_{(m,\upsilon)}$ such that $\psi^{1,+}>\cdots >\psi^{n,+}$. Let $w$ be any vector in $V_{(m,\upsilon)}$, and $M$ the module generated by $w$. Let $w=w_0+w'$ where $w_0>w'$ and $w_0$ is homogeneous. Then we can apply $\psi^{i,-}$ consequtively on $w$ to remove $\psi^{i,+}$. Beccause $\sum \rho_{ia}\upsilon^a$ is not zero, each of this action is non-trivial on $w_0$. After annihilating all the $\psi^{i,+}$ in the expression of $w_0$, the vector $w'$ is annihilated since $w_0>w'$, and we are left with a highest-weight vector. Therefore $M=V_{(m,\upsilon)}$. 
    
\end{proof}

As a consequence of the above two lemmas, the module $\widehat{V}_{(\mu, \lambda)}$ is simple when $\sum_i\rho_{ia}\lambda^a\notin \Z$ for all $i$, and is finite length for $\sum_i\rho_{ia}\lambda^a\notin \Z$ or $\sum_i\rho_{ia}\lambda^a=0$ for all $i$. In fact, we have the following proposition, whose proof is a word-to-word translation of the proof of \cite[Proposition 3.2]{BN22}. Let $\CG_\rho$ be the category of finite-dimensional modules of $\grho$, and $\CG_{\rho, \upsilon}$ the subcategory where generalized eigenvalue of $E^a$ is $\upsilon^a$. 

\begin{Prop}\label{Prop:KLGrho}
    When $\sum_i\rho_{ia}\upsilon^a\notin \Z$ or $\sum_i\rho_{ia}\upsilon^a=0$ for all $i$, the induction functor $M\mapsto \widehat{M}$ is an equivalence of abelian categories:
    \be
\CG_{\rho, \upsilon}\simeq KL_{\rho, \upsilon}.
    \ee
\end{Prop}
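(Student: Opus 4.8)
The statement to prove is Proposition \ref{Prop:KLGrho}: when $\sum_i\rho_{ia}\upsilon^a\notin\Z$ or $\sum_i\rho_{ia}\upsilon^a=0$ for all $i$, the induction functor $M\mapsto\widehat M$ is an equivalence $\CG_{\rho,\upsilon}\simeq KL_{\rho,\upsilon}$. Since the excerpt tells us the proof is a ``word-to-word translation'' of the proof of \cite[Proposition 3.2]{BN22}, the plan is to recast that argument in the present notation and verify that each step survives the generalization from $\gl$ to $\grho$. The induction functor $M\mapsto\widehat M = M\otimes_{U_{\geq 0}}U(\widehat{\grho})$ is exact (induction along a flat base-change of enveloping algebras is exact) and lands in $KL_{\rho,\upsilon}$ by Lemma \ref{Lem:grhosimple}, which guarantees finite length under exactly the stated hypotheses on $\upsilon$. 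So the first thing I would do is set up induction and restriction-to-lowest-weight as a candidate adjoint pair, with the restriction functor $R$ sending a $\widehat{\grho}$-module to its lowest conformal-weight space, viewed as a $\grho$-module via $U_{\geq 0}\twoheadrightarrow\grho$.

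The heart of the argument is to show that under the hypothesis on $\upsilon$, the functors $\widehat{(-)}$ and $R$ are mutually inverse. First I would establish that $R\circ\widehat{(-)}\cong\mathrm{id}_{\CG_{\rho,\upsilon}}$: for a finite-dimensional $\grho$-module $M$ with $E^a_0$ acting by (generalized eigenvalue) $\upsilon^a$, the induced module $\widehat M$ has its lowest conformal-weight space equal to $M$ on the nose, because the PBW filtration shows all strictly-negative modes raise conformal weight, so no new vectors appear at the bottom degree. This direction is formal and does not use the hypothesis on $\upsilon$. The substantive direction is $\widehat{(-)}\circ R\cong\mathrm{id}_{KL_{\rho,\upsilon}}$, i.e. every object $W$ of $KL_{\rho,\upsilon}$ is generated by its lowest-weight space and the natural surjection $\widehat{R(W)}\twoheadrightarrow W$ is an isomorphism. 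Here is precisely where the $\upsilon$-hypothesis enters, and where the commutator $\{\psi^{i,-}_n,\psi^{i,+}_{-n}\} = n + \sum_a\rho^i{}_a E^a_0$ becomes the crucial structural input: when $\sum_i\rho_{ia}\upsilon^a\notin\Z$ this bracket is invertible in every nonzero mode $n$, while when it is $0$ one must argue more carefully that the kernel of $\widehat{R(W)}\to W$ vanishes using the finite-length and grading-restriction constraints. The strategy is the standard one: the comparison map is surjective since $W$ is grading-restricted and generated from below; injectivity follows by a character/graded-dimension count, matching graded dimensions of $\widehat{R(W)}$ and $W$ degree-by-degree using the $\upsilon$-dependence of the action of negative $\psi$-modes.

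The main obstacle I anticipate is the $\sum_i\rho_{ia}\upsilon^a=0$ boundary case, which is genuinely harder than the generic $\notin\Z$ case. When the bracket $n+\sum_a\rho^i{}_a E^a_0$ vanishes for some $i$ at $n=0$, the Verma modules $\widehat{V}_{(\mu,\lambda)}$ need not be simple (they have the reducibility exhibited for $\gl$ in \cite{Creutzig:2011cu}), so one cannot simply invoke simplicity. Instead I would need to check that although indecomposable modules genuinely occur, induction is still essentially surjective and fully faithful: the key is that reducibility happens entirely within the lowest-weight $\grho$-module structure, i.e. the reducibility of $\widehat M$ is controlled by the reducibility of $M$ in $\CG_{\rho,\upsilon}$, so the functor still respects the full abelian structure. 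This is exactly why Lemma \ref{Lem:grhosimple} was stated with both hypotheses bundled together. I would verify full faithfulness by showing $\mathrm{Hom}_{\widehat{\grho}}(\widehat M,\widehat M')\cong\mathrm{Hom}_{\grho}(M,M')$ via the adjunction, reducing a morphism of induced modules to its restriction on lowest weight spaces; here exactness of induction and the vanishing of $\mathrm{Hom}$ into strictly higher conformal-weight pieces do the work.

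Finally, once the equivalence $\CG_{\rho,\upsilon}\simeq KL_{\rho,\upsilon}$ is in hand for each admissible $\upsilon$, the decomposition $KL_\rho=\bigoplus_{\upsilon}KL_{\rho,\upsilon}$ (by generalized $E^a_0$-eigenvalue) assembles the block-wise equivalences into the stated result. I would emphasize in writing that the only properties of $\grho$ actually used are the specific bracket relations \eqref{g-brackets} and the fact that $E$ is central at level zero, so that $E^a_0$ acts as a scalar on simples and provides the grading; this is what makes the $\gl$ argument of \cite{BN22} transport verbatim. The remaining verifications --- exactness of induction, the lowest-weight-space computation, and the PBW-based character count --- are routine and can be cited or compressed rather than reproduced in full.
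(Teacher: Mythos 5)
Your overall skeleton --- induction paired with a ``take the bottom'' functor, with the hypothesis entering through the bracket $\{\psi^{i,-}_n,\psi^{i,+}_{-n}\}=n+\sum_a\rho^i{}_aE^a_0$ --- is indeed the mechanism the paper relies on: the paper gives no argument for this Proposition beyond citing that it is a word-for-word translation of the proof of Proposition 3.2 of \cite{BN22}, and that same commutator trick is what drives Lemma \ref{Lem:grhosimple} immediately above it. However, your execution has two genuine problems. First, your inverse functor $R$ is misdefined. The right adjoint of induction is $W\mapsto W^{+}:=\{w\in W \,:\, x_nw=0 \ \forall x\in\grho,\ n>0\}$ (vectors annihilated by all strictly positive modes), not the lowest conformal-weight space. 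These genuinely differ here: within a single block $KL_{\rho,\upsilon}$ the bottom conformal weight of $\widehat M$ depends on the $N_0$-eigenvalues of $M$, which are unconstrained complex numbers in $\CG_{\rho,\upsilon}$. Hence $W=\widehat M\oplus\widehat{M'}$ with distinct bottom weights is \emph{not} generated by its lowest conformal-weight space and $\widehat{R(W)}\not\cong W$; likewise, if $M$ is indecomposable with several $N_0$-weights, the lowest conformal-weight space of $\widehat M$ is a proper subspace of $M$, so even your ``formal'' direction $R\circ\widehat{(-)}\cong\mathrm{id}$ fails with your definition of $R$.

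Second, and more seriously, your argument for essential surjectivity is circular. The ``character/graded-dimension count'' presupposes that for $W$ with composition factors $\widehat{M_i}$ one has $\dim R(W)=\sum_i\dim M_i$ (and that $R(W)$ generates $W$); but this is precisely the statement that $(-)^{+}$ is exact on $KL_{\rho,\upsilon}$, which is the content of the Proposition, not an input to it. The honest step is an induction on length: given $0\to\widehat{M_1}\to W\to\widehat{M_2}\to 0$, lift a vector of $M_2$ to $W$ inside its generalized $L_0$-eigenspace and then correct the lift so that it is killed by all positive modes. The obstruction lies in weight spaces of $\widehat{M_1}$, and is removable exactly because $n+\sum_i\rho_{ia}\upsilon^a$ is invertible for all $n>0$ when $\sum_i\rho_{ia}\upsilon^a\notin\Z$, while in the case $\sum_i\rho_{ia}\upsilon^a=0$ one needs a separate analysis of the $n$ where it vanishes (this is where atypical/reducible bottoms occur, and where your ``reducibility happens entirely within the lowest-weight $\grho$-module structure'' is an assertion rather than a proof). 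Without this correction argument, the hypothesis on $\upsilon$ never actually does any work in your proof of essential surjectivity, which is a sign that the step as written would fail.
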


How do we deal with $KL_{\rho, \upsilon}$ when $\upsilon$ does not satisfy the above? The answer is the spectral-flow automorphism. The VOA $\widehat{\grho}$ has the following spectral-flow automorphism:
\be
\sigma_{\mu, \nu}N^a(z)=N^a(z)-\frac{\mu^a}{z},\qquad \sigma_{\mu, \nu}E^a(z)=E^a(z)-\frac{\nu^a}{z}\qquad \sigma_{\mu, \nu}\psi^{i,\pm}(z)=z^{\mp \sum_a \rho_{ia}\nu^a} \psi^{i,\pm}.
\ee
In this, $\mu\in \C^r$ and $\nu\in \C^r$ such that $\sum_a \rho_{ia}\nu^a\in \Z$ for all $i$, or in other words, $\nu\in \rho^{-1}(\Z^n)$. The spectral flow gives an equivalence:
\be
\sigma_{\mu, \nu}: KL_{\rho, \upsilon}\simeq KL_{\rho, \upsilon+\nu}.
\ee
For each $\upsilon\in \C^r$, using the splitting of the short exact sequence \eqref{split-SES}, there must be a $\nu\in\rho^{-1}(\Z^n)$, such that $\upsilon+\nu$ satisfies that $\sum_i\rho_{ia}(\upsilon^a+\nu^a)\notin \Z$ or $\sum_i\rho_{ia}(\upsilon^a+\nu^a)=0$ for all $i$. We are now ready to prove Theorem \ref{ThmKLqbrten}.

\begin{proof}[Proof of Theorem \ref{ThmKLqbrten}]

In \cite[Corollary 3.4.6]{creutzig2021tensor}, the authors showed that if all the grading-restricted generalized Verma modules of $\widehat{\grho}$ are of finite length, then $KL_\rho$ has the structure of a braided tensor category by logarithmic intertwining operators. Let $M$ be a simple module of $\grho$ in $\CG_{\rho, \upsilon}$. If $\sum_i\rho_{a}{}^i\upsilon^a\notin \Z$ or $\sum_i\rho_{a}{}^i\upsilon^a=0$ for all $i$ then we are done, since $\widehat{M}$ is simple and therefore is of finite-length. 

Suppose $\sum_i\rho_{a}{}^i\upsilon^a$ is a non-zero integer for some $i$. Choose $\nu\in \rho^{-1}(\Z^n)$ such that $\upsilon+\nu$ satisfy that entries of $\rho (\upsilon+\nu)$ is either zero or a non-integer. We just need to show that $\sigma_{0, \nu}\widehat{M}$ is of finite-length. Choose $m\in M$, let $I$ be the subset of all $i$ where $\sum \rho_{ai}\nu^a>0$ and $J$ where $\sum \rho_{ai}\nu^a<0$. Consider the sub-representation of $\sigma_{0, \nu}\widehat{M}$ given by:
\be
N:=\bigoplus\limits_{i\in I, 0\leq n< \rho_{ai}\nu^a}<\psi^{i,-}_n m>\oplus \bigoplus\limits_{j\in J, 0<n\leq -\rho_{aj}\nu^a}<\psi^{j,+}_n m>
\ee
Here $\langle v\rangle$ for $v\in \widehat{M}$ denotes the submodule of $\widehat{\grho}$ generated by the vector $v$. By definition of $N$, the quotient $\sigma_{0, \nu}\widehat{M}/N$ is generated by a single element $\overline{m}$ (the image of $m$ in the quotient) and that all the negative modes of $\widehat{\grho}$ acts trivially on $\overline{m}$. Let $M'$ be the $\grho$ submodule of $\sigma_{0, \nu}\widehat{M}/N$ generated by $\overline{m}$, which must be finite-dimensional. By universal property of the induction functor, we have a surjection $\widehat{M'}\to \sigma_{0, \nu}\widehat{M}/N$, and consequently, $\sigma_{0, \nu}\widehat{M}/N$ is finite length. 

We can now repeat this argument for all the summands in $N$, and this process must terminate because there are only finitely many positive modes that act non-trivially. We therefore obtain a finite filtration $\CF_*\sigma_{0, \nu}\widehat{M}$ such that each graded piece is a quotient of a finite-length Verma module. Therefore, $\sigma_{0, \nu}\widehat{M}$ is finite-length, and so must be $\widehat{M}$. This completes the proof. 

\end{proof}

We have now shown that $KL_\rho$ has the structure of a braided tensor category. This is, however, not enough since we need fusion to be exact, and we need to verify the fusion rule of the simple modules $\mathbb{U}_\mu$. This can be done in many different ways, we will do this via the relation between $\widehat{\grho}$ and $\widehat{\mathfrak{gl}(1|1)}$, derived in Section \ref{subsecrelatealt}. In there, it is shown that $\widehat{\grho}\otimes \mc{V}_{\mathbb{Z}^{2(n-r)}}^{+-}$ is a simple current extension of $\widehat{\mathfrak{gl}(1|1)}^{\otimes n}$, or in other words, it is an algebra object in $\mathrm{Ind}(KL^{\boxtimes n})$. Let $KL^{\boxtimes n, \rho, [0]}$ be the subcategory whose objects have trivial monodromy with $\widehat{\grho}\otimes \mc{V}_{\mathbb{Z}^{2(n-r)}}^{+-}$. Then there is a lifting functor:
    \be
\mc L_{\rho}^{\mathrm{ungauge}}: KL^{\boxtimes n, \rho, [0]}\longrightarrow \widehat{\grho}\otimes \mc{V}_{\mathbb{Z}^{2(n-r)}}^{+-}\mathrm{-Mod}_{\mathrm{loc}}\left(\mathrm{Ind}(KL^{\boxtimes n})\right).
    \ee

We prove:

\begin{Thm}\label{Thm:KLKLrho}
    The image of $KL^{\boxtimes n, \rho, [0]}$ under $\mc L_\rho^{\mathrm{ungauge}}$ is precisely  the category $KL_\rho$. Consequently, there is a braided tensor equivalence between $KL_\rho$ and the de-equivariantization of $KL^{\boxtimes n, \rho, [0]}$ by the lattice of simple modules defining $\widehat{\grho}\otimes \mc{V}_{\mathbb{Z}^{2(n-r)}}^{+-}$. 
\end{Thm}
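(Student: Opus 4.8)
The plan is to run the simple-current-extension machinery of \cite{creutzig2017tensor, creutzig2020direct} exactly as in the baby case of Section \ref{subsec:baby}, now with $\widehat{\mathfrak{gl}(1|1)}^{\otimes n}$ in the role of the small algebra and $\mathcal A := \widehat{\grho}\otimes \mc{V}_{\Z^{2(n-r)}}^{+-}$ as the extension. First I would record that $\mathcal A$ is genuinely a simple current extension inside $\mathrm{Ind}(KL^{\boxtimes n})$: the free-field computation of Section \ref{subsecrelatealt} (applied with the roles of $\rho$ and $\rho^\vee$ exchanged, so that the lattice factor has rank $2(n-r)$) exhibits the decomposition of $\mathcal A$ over $\widehat{\mathfrak{gl}(1|1)}^{\otimes n}$ as a direct sum over $\mu\in\Z^{n-r}$ of the currents $\normord{e^{\rho^\vee(\mu)\cdot\int E}}$, each a simple current given by a tensor product of spectral flows $\wh{A}$ of the $\widehat{\mathfrak{gl}(1|1)}$ factors, hence an object of $KL^{\boxtimes n}$. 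Their fusion is the group law of $\Z^{n-r}$, there are no fixed points, and fusion in $KL^{\boxtimes n}$ is exact because it is exact in each $KL$ factor by \cite{creutzig2020tensor}. These are precisely the hypotheses of Theorem \ref{Lexact}, so $\mc L_\rho^{\mathrm{ungauge}}$ is a full, essentially surjective braided tensor functor onto $\mathcal A\text{-Mod}_{\mathrm{loc}}(\mathrm{Ind}(KL^{\boxtimes n}))$ which sends simples to simples and preserves composition series.

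With surjectivity in hand, the theorem reduces to identifying the target category of local modules with $KL_\rho$. Since $\mc{V}_{\Z^{2(n-r)}}^{+-}$ is a self-dual lattice VOA with trivial module category, a local $\mathcal A$-module is the same datum as a $\widehat{\grho}$-module, and because the stress tensors are matched under the change of variables of Section \ref{subsecrelatealt} (cf.\ Theorem \ref{thm:VOA} and Theorem \ref{ThmMirSymbdVOA}) the conformal grading defining $KL_\rho$ agrees with the one induced from $\widehat{\mathfrak{gl}(1|1)}^{\otimes n}$. I would then match simple objects directly: a simple object of $KL^{\boxtimes n,\rho,[0]}$ is a Deligne product of simple $\widehat{\mathfrak{gl}(1|1)}$-modules, and under the common free-field realization of Corollary \ref{cor:VB-ff} and Proposition \ref{PropffpurtBgaugeq} its lift is cut out of the same Fock modules, by the same screening charges, as a simple $\widehat{\grho}$-module $\widehat{V}_{(m,\upsilon)}$ (or a simple quotient thereof). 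The classification of simples of $KL_\rho$ via Proposition \ref{Prop:KLGrho} together with spectral flow then yields a bijection onto the simples in the image, so the two abelian categories have the same simple objects.

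Finite length and grading restriction transfer in both directions. In one direction $\mc L_\rho^{\mathrm{ungauge}}$ preserves composition series and sends simples to the $KL_\rho$-simples just identified, so the lift of any finite-length grading-restricted $KL^{\boxtimes n}$-object is finite-length and grading-restricted over $\widehat{\grho}$; in the other, any object of $KL_\rho$, viewed through $\mathcal A$, restricts to $\mathrm{Ind}(KL^{\boxtimes n})$ and so lies in the target. Granting this, the locality condition ``$[0]$'' on $KL^{\boxtimes n}$ is exactly the statement that the relevant $E$-charges are integral, so that monodromy with each simple current is trivial, matching the decomposition of $KL_\rho$ into $E_0$-generalized-eigenspaces; the de-equivariantization then identifies $\mc M$ with $\normord{e^{\rho^\vee(\mu)\cdot\int E}}\otimes\mc M$, and since $\mc L_\rho^{\mathrm{ungauge}}$ is a braided tensor functor we obtain the equivalence of braided tensor categories $KL_\rho\simeq KL^{\boxtimes n,\rho,[0]}/\Z^{n-r}$, exactly as in \cite{egno, BN22}.

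The hard part will be the finite-length claim above: controlling that the $\Z^{n-r}$-indexed spectral-flow orbit, which is infinite over $\widehat{\mathfrak{gl}(1|1)}^{\otimes n}$, collapses to a finite-length $\widehat{\grho}$-module once the extending currents $\normord{e^{\rho^\vee(\mu)\cdot\int E}}$ glue it together. I expect to establish this not abstractly but from the explicit action of $N^a,E^a,\psi^{i,\pm}$ on the lifted Fock modules, reusing the lexicographic-ordering and mode-annihilation argument of Lemma \ref{Lem:grhosimple}; this single computation should simultaneously show that the simples of $KL_\rho$ are exhausted, that no non-grading-restricted module enters the image, and that the image is therefore neither larger nor smaller than $KL_\rho$.
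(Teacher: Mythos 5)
Your overall framework is the same as the paper's: apply the simple-current machinery of Theorem \ref{Lexact} to the extension $\widehat{\grho}\otimes\mc{V}_{\Z^{2(n-r)}}^{+-}$ of $\widehat{\mathfrak{gl}(1|1)}^{\otimes n}$, and then match objects via the common free-field realization. Your containment direction (image $\subseteq KL_\rho$) is also essentially the paper's argument: monodromy-triviality forces the integrality conditions (the analogue of Lemma \ref{LemmonoAq}), monodromy-trivial simples of $KL^{\boxtimes n}$ embed into free-field modules (this is \cite{BN22}, Prop.~4.3, an external input you should invoke explicitly rather than assert), and the lifts of these are identified with free-field $\widehat{\grho}$-modules $V_{\CZ,\,\cdot}$, which lie in $KL_\rho$ by Corollary \ref{Cor:VZKLrho}.

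The genuine gap is in the essential-surjectivity direction. The step ``any object of $KL_\rho$, viewed through $\mathcal A$, restricts to $\mathrm{Ind}(KL^{\boxtimes n})$ and so lies in the target'' does not work as stated: to use it you must show that each $(\C^*)^{n-r}$-isotypic component of the restriction of $M\otimes\mc{V}_{\Z^{2(n-r)}}^{+-}$ is a \emph{finite-length, grading-restricted} $\widehat{\mathfrak{gl}(1|1)}^{\otimes n}$-module, and that is essentially the theorem itself; nothing forces it a priori. (Note also that $\mc{V}_{\Z^{2(n-r)}}^{+-}$ is an indefinite lattice VOA, so the full restriction is very far from grading-restricted; only its components can be, and that requires proof, not just the observation that the lattice factor is Morita-trivial.) Compounding this, you locate the ``hard part'' in the wrong place: finite length of lifts is automatic, because Theorem \ref{Lexact} sends simples to simples and preserves composition series. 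What is genuinely hard is the converse, that every simple of $KL_\rho$ \emph{is} a lift. The paper's mechanism for this is Proposition \ref{Propvermaffequiv} together with Corollary \ref{Cor:VZKLrho}: the Verma module $\widehat{V}_{(m-\rho,\upsilon)}$ is shown to be \emph{isomorphic} to the free-field module $V_{\CZ,\upsilon\cdot X+m\cdot Y}$ --- proved by producing an embedding (nonvanishing on the unique simple submodule of the bottom $\grho$-module, via Proposition \ref{Prop:KLGrho}) and then comparing graded characters --- and spectral flow extends this to all parameters. Since the $V_{\CZ,\,\cdot}$ are manifestly lifts (they are restrictions of lattice-VOA modules, visibly induced from the small algebra), and every simple of $KL_\rho$ is a quotient of a Verma, fullness and preservation of composition series close the argument. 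Your proposed substitute --- rerunning the lexicographic mode-annihilation argument of Lemma \ref{Lem:grhosimple} on lifted Fock modules --- proves simplicity of Vermas under genericity conditions, but it does not produce the Verma-to-free-field isomorphism; without that identification (or an equivalent character comparison), your surjectivity step does not close.
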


The last statement of Theorem \ref{ThmKLqbrten} follows now from this theorem. To prove this, we use the free-field realizations.  

Consider the free field realization of Section \ref{sec:Bff}. For each $\upsilon\cdot X+m \cdot Y$, one has a $V_{\mathcal Z}$ module (and therefore a module of $\widehat{\grho}$) $V_{\mathcal Z, \upsilon\cdot X+m \cdot Y}$, generated by the vector $v:=\vert\upsilon\cdot X+m \cdot Y\rangle$. Let $b^i(z)=\normord{e^{Z^i(z)}}$ and $c^i(z)=\normord{e^{-Z^i(z)}}$, and consider the following grading $\Delta$ on $V_{\mathcal Z, \upsilon\cdot X+m \cdot Y}$ by:
\be
\Delta (v)=\Delta (c_{-1}^iv)=0,\qquad \Delta (b_{-1}^iv)=\Delta (X_{-1}^av)=\Delta (Y_{-1}^av)=1.
\ee
Moreover, $\Delta (A_{n}B)=\Delta (A)+\Delta (B)-n-1$. With this grading, $V_{\mathcal Z, \upsilon\cdot X+m \cdot Y}$ is positively graded, and the minimal degree part of this is spanned by vectors of the form:
\be
\prod\limits_{i_1<i_2<\cdots<i_k}c^{i_1}_{-1}\ldots c^{i_k}_{-1}v.
\ee
The vector $c_{-1}^1c_{-1}^2\cdots c_{-1}^n v$ has weight $(\upsilon_a)$ under the action of $E^a_0$ and weights:
\begin{equation}
    m_a-\sum_i \rho_{ia}.
\end{equation}
under the action of $N^a_0$. Moreover, this vector is annihilated by positive modes of $\widehat{\grho}$ by degree considerations as $\Delta(x)=1$ for $x\in \grho$, and moreover, it is killed by $\psi^{i,-}_0$ by considering the weights of $N_0$. Therefore, this vector, under the action of $\psi_0^{i, +}$, generates a copy of $V_{(m-\rho, \upsilon)}$, where $m-\rho=(\mu_a-\sum_i \rho_{ia})$. By universal property of induction functor, there is an induced morphism:
\begin{equation}\label{eqmorphvermaff}
    \widehat{V}_{(m-\rho, \upsilon)}\longrightarrow V_{\mathcal Z, \upsilon\cdot X+m \cdot Y}.
\end{equation}

\begin{Prop}\label{Propvermaffequiv}
When $\sum_i\rho_{ia}\upsilon^a\notin \Z$ or $\sum_i\rho_{ia}\upsilon^a=0$ for all $i$, the morphism in equation \eqref{eqmorphvermaff} is an isomorphism.
\end{Prop}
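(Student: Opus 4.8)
The statement to prove is Proposition~\ref{Propvermaffequiv}: that the induced morphism $\widehat{V}_{(m-\rho,\upsilon)}\to V_{\mathcal{Z},\upsilon\cdot X+m\cdot Y}$ is an isomorphism whenever $\sum_i\rho_{ia}\upsilon^a\notin\Z$ for all $i$, or $\sum_i\rho_{ia}\upsilon^a=0$ for all $i$. The plan is to prove injectivity and surjectivity separately, with the main leverage coming from the simplicity results already established (Lemma~\ref{Lem:grhosimple} and the companion lemma on $V_{(m,\upsilon)}$). First I would observe that under the hypothesis on $\upsilon$, the $\grho$-module $V_{(m-\rho,\upsilon)}$ is simple, and hence by Lemma~\ref{Lem:grhosimple} the Verma module $\widehat{V}_{(m-\rho,\upsilon)}$ is itself a \emph{simple} $\widehat{\grho}$-module. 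This immediately gives injectivity of the morphism \eqref{eqmorphvermaff}, since a nonzero map out of a simple module must be injective, and the map is nonzero because it sends the generating vector to the nonzero vector $c^1_{-1}\cdots c^n_{-1}v$.

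For surjectivity, the cleanest route is a character comparison, mirroring the strategy already used in the proof of Proposition~\ref{PropffpurtBgaugeq}. The idea is that both sides are graded by the conformal weight $L_0$ and the Cartan weights $N^a_0, E^a_0$; since the domain is simple and maps injectively, the image is a submodule, and equality of graded characters forces the image to be everything. Concretely, I would compute the character of $V_{\mathcal{Z},\upsilon\cdot X+m\cdot Y}$ directly from its free-field description as a Fock module (the same Pochhammer-symbol bookkeeping as in \eqref{eqindexgqB}, with $2r$ bosonic towers from $X,Y$ and $n$ fermionic pairs from the bosonized $Z$ lattice via $b^i,c^i$), and compare it to the character of the Verma module $\widehat{V}_{(m-\rho,\upsilon)}$, which is $\chi(V_{(m-\rho,\upsilon)})/(q;q)^{2r}_\infty$ times the contribution of the odd negative modes $\psi^{i,\pm}_{<0}$. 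The finite-dimensional module $V_{(m-\rho,\upsilon)}$ has dimension $2^n$ (spanned by \eqref{eqspansimplegq}), which must match the $2^n$-dimensional minimal-degree space of $V_{\mathcal{Z},\upsilon\cdot X+m\cdot Y}$ spanned by the $\prod c^{i_s}_{-1}v$; this matching of the lowest graded pieces is the crucial input, and the grading $\Delta$ introduced just before the statement is precisely engineered to make it manifest.

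The step I expect to be the main obstacle is verifying that the two characters agree in \emph{all} degrees, not just the minimal one. The subtlety is that the bosonization trades the $n$ fermionic towers $\psi^{i,\pm}$ of $\widehat{\grho}$ against the $X,Z$-lattice degrees of freedom of $V_{\mathcal{Z}}$, and one must check that the PBW basis of $\widehat{V}_{(m-\rho,\upsilon)}$—built from freely acting negative modes $N^a_{<0}, E^a_{<0}, \psi^{i,\pm}_{<0}$ on the $2^n$-dimensional top—is counted identically by the free-field side. Here the positivity and finiteness of the $\Delta$-grading are essential: because $V_{\mathcal{Z},\upsilon\cdot X+m\cdot Y}$ is positively graded with finite-dimensional graded pieces, and the injective image contains the minimal-degree space, one can argue inductively up the grading that no vectors can be omitted. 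Equivalently, and perhaps more robustly, since the domain is simple and the map is a nonzero $\widehat{\grho}$-module homomorphism, it suffices to show that $V_{\mathcal{Z},\upsilon\cdot X+m\cdot Y}$ is \emph{generated} by its minimal-degree space as a $\widehat{\grho}$-module—this follows from the same lexicographic-ordering/raising-operator argument as in Lemma~\ref{Lem:grhosimple}, using the non-degeneracy of the brackets $\{\psi^{i,-}_n,\psi^{i,+}_{-n}\}=n+\sum_a\rho_{ia}E^a_0$ under exactly the stated hypothesis on $\upsilon$. Once that generation statement is in hand, surjectivity is immediate and the proof concludes by invoking the word-for-word analogue of \cite[Proposition 3.2]{BN22} as in Proposition~\ref{Prop:KLGrho}.
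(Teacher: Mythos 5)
Your proof has a genuine gap in the injectivity step. You claim that under the stated hypothesis $V_{(m-\rho,\upsilon)}$ is simple, so that by Lemma~\ref{Lem:grhosimple} the Verma module $\widehat{V}_{(m-\rho,\upsilon)}$ is simple and injectivity is automatic. But the hypothesis of Proposition~\ref{Propvermaffequiv} explicitly allows $\sum_a\rho_{ia}\upsilon^a=0$ for some (or all) $i$ --- in particular $\upsilon=0$ --- and in that case $V_{(m-\rho,\upsilon)}$ is \emph{not} simple: the lemma preceding Proposition~\ref{Prop:KLGrho} gives simplicity of $V_{(m,\upsilon)}$ only when $\sum_a\rho_{ia}\upsilon^a\neq 0$ for \emph{every} $i$. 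Already for $\rho=(1)$ and $\upsilon=0$ the vector $\psi^{+}_0 v$ spans a proper $\grho$-submodule, so $\widehat{V}_{(m-\rho,0)}$ is a reducible Verma module and injectivity of \eqref{eqmorphvermaff} cannot be deduced from simplicity. This excluded case is not peripheral: it is exactly the case needed to identify the vacuum module, the monopole modules $\mathbb{U}_\mu$ (Proposition~\ref{prop:monopolesimple}), and to prove Theorem~\ref{Thm:KLKLrho}. The paper closes this case differently: by the equivalence of Proposition~\ref{Prop:KLGrho}, the kernel of \eqref{eqmorphvermaff} is induced from a $\grho$-submodule of $V_{(m-\rho,\upsilon)}$; one then observes that $V_{(m-\rho,\upsilon)}$ has a \emph{unique} simple submodule, generated by $\prod_{i,\;\sum_a \rho_{ia}\upsilon^a=0}\psi^{i,+}_0 v$, and checks that the map is nonzero on it, which forces the kernel to vanish. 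Your simplicity argument is valid only in the sub-case where all pairings $\sum_a\rho_{ia}\upsilon^a$ are non-integral.

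There is also a flaw in your ``more robust'' alternative for surjectivity. The lexicographic/raising-operator argument of Lemma~\ref{Lem:grhosimple} shows that the submodule generated by any nonzero vector meets the lowest conformal weight space; that is a socle-type (cogeneration) statement. It does \emph{not} show that $V_{\CZ,\upsilon\cdot X+m\cdot Y}$ is \emph{generated} by its minimal-degree space, which is a statement about the top of the module --- contragredient (dual) Verma modules satisfy the former and fail the latter, so the two are not equivalent. Generation by the bottom is essentially the content of what is being proved (that the module is a Verma), so this route is unsupported as written. Your primary surjectivity route --- injectivity plus equality of graded characters, computed in the positive grading $\Delta$ so that both sides have finite-dimensional graded pieces and the map is graded --- is exactly the paper's argument and does work, once the injectivity step is repaired as above.
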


\begin{proof}
    We first show that this is an embedding. To do so, by Proposition \ref{Prop:KLGrho}, we only need to show that it is non-zero on any submodule of $V_{(m-\rho, \upsilon)}$. In fact, the $\grho$ module $V_{(m-\rho, \upsilon)}$ has a unique simple submodule generated by:
    \be
\prod\limits_{i, \sum \rho_{ia}\upsilon^a=0} \psi^{i,+}_0v. 
    \ee
It is very clear then that the above map $\widehat{V}_{(m-\rho, \upsilon)}\to V_{\mathcal Z, \upsilon\cdot X+m \cdot Y}$ is nonzero when restricted to this unique simple. 

To show that this is an isomorphism, we just need to define a positive grading on $\widehat{V}_{(m-\rho, \upsilon)}$ such that $\widehat{V}_{(m-\rho, \upsilon)}\to V_{\mathcal Z, \upsilon\cdot X+m \cdot Y}$ is graded and that they have the same grading. We define the grading on $\widehat{V}_{(m-\rho, \upsilon)}$ such that $\Delta (V_{(m-\rho, \upsilon)})=0$ and $\Delta (x_{-1}v)=1$ for all $x\in \grho$, and that $\Delta (A_{n}B)=\Delta (A)+\Delta (B)-n-1$. It is clear that the map $\widehat{V}_{(m-\rho, \upsilon)}\to V_{\mathcal Z, \upsilon\cdot X+m \cdot Y}$ is a map of positively graded vector spaces. It is straightforward that the graded character $\mathrm{Tr}(q^\Delta)$ agrees on the two modules, and so they must be isomorphic. 
    
\end{proof}

\begin{Cor}\label{Cor:VZKLrho}
    For any $\upsilon,m\in \C^r$, the module $V_{\CZ, \upsilon\cdot X+m\cdot Y}$ is an object in $KL_\rho$.
\end{Cor}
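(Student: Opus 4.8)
The strategy is to reduce, by spectral flow, to the ``good'' range of $\upsilon$ already treated in Proposition \ref{Propvermaffequiv}. First note that $V_{\CZ,\upsilon\cdot X+m\cdot Y}=\bigoplus_{\lambda\in\Z^n}\CF_{\upsilon\cdot X+m\cdot Y+\lambda\cdot Z}$ is a $\wh{\grho}$-module on which $E^a_0$ acts by the honest scalar $\upsilon^a$ (since $E^a=\partial Y^a$ and $B(Y^a,Z)=B(Y^a,Y)=0$, $B(Y^a,X_b)=\delta^a_b$ in $H_{X,Y,Z}$), so it lives in a single block $KL_{\rho,\upsilon}$ once membership in $KL_\rho$ is established. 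When $\sum_a\rho_{ia}\upsilon^a\notin\Z$ or $\sum_a\rho_{ia}\upsilon^a=0$ for every $i$, Proposition \ref{Propvermaffequiv} identifies $V_{\CZ,\upsilon\cdot X+m\cdot Y}$ with the Verma module $\wh V_{(m-\rho,\upsilon)}$, which by Proposition \ref{Prop:KLGrho} (equivalently, by the finite length established via Lemma \ref{Lem:grhosimple}) is finite-length and grading-restricted, hence an object of $KL_\rho$. This settles the good range.

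For arbitrary $\upsilon\in\C^r$, I would use the spectral-flow reduction already carried out in the proof of Theorem \ref{ThmKLqbrten}: choosing a splitting of the exact sequence \eqref{split-SES}, there is a $\nu\in\rho^{-1}(\Z^n)$ with $\upsilon+\nu$ in the good range. The point to verify is that the automorphism $\sigma_{\mu,\nu}$ of $\wh{\grho}$ carries the family $V_{\CZ,\cdot}$ to itself. This is where the free-field realization of Section \ref{sec:Bff} does the work: $\sigma_{\mu,\nu}$ is the spectral flow associated to the current $(\mu-\rho^T\rho\nu)\cdot E+\nu\cdot N$, which lies entirely in the Heisenberg subalgebra $H_{X,Y,Z}$, so by Example \ref{exp:heisenberg} it acts on each Fock summand by a uniform shift of the weight, $\CF_w\mapsto\CF_{w+s}$. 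Decomposing $s=s_X\cdot X+s_Y\cdot Y+s_Z\cdot Z$, the $X$- and $Y$-parts move the labels to $(\upsilon+s_X,\,m+s_Y)$ with $s_X=\nu$ (matching the eigenvalue shift $E^a_0\colon\upsilon\mapsto\upsilon+\nu$), while the $Z$-part equals $\rho(\nu)\in\Z^n$ and is therefore reabsorbed into the lattice sum $\bigoplus_{\lambda}$. Hence
\be
\sigma_{\mu,\nu}\big(V_{\CZ,\upsilon\cdot X+m\cdot Y}\big)\;\cong\;V_{\CZ,(\upsilon+\nu)\cdot X+m'\cdot Y}
\ee
for some $m'\in\C^r$.

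Since $\upsilon+\nu$ is good, the right-hand side is an object of $KL_\rho$ by the first paragraph. Finally, as recorded in the proof of Theorem \ref{ThmKLqbrten}, $\sigma_{\mu,\nu}\colon KL_{\rho,\upsilon}\simeq KL_{\rho,\upsilon+\nu}$ is an invertible (and exact) equivalence, so applying $\sigma_{-\mu,-\nu}$ sends this good Verma back to $V_{\CZ,\upsilon\cdot X+m\cdot Y}$, which is therefore finite-length and grading-restricted, i.e.\ an object of $KL_\rho$. The only step needing genuine care is the compatibility claim in the displayed isomorphism: one must confirm that the spectral-flow automorphism of the screened sub-VOA $\wh{\grho}\hookrightarrow\CV_Z$ is indeed the restriction of the ambient Heisenberg/lattice spectral flow, and that the integrality $\rho(\nu)\in\Z^n$ is exactly the condition keeping the result within the untwisted family $V_{\CZ,\cdot}$ rather than producing a genuinely twisted module; everything else is bookkeeping with the labels $(\upsilon,m)$.
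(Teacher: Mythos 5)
Your proposal is correct and follows essentially the same route as the paper: handle the good range via Proposition \ref{Propvermaffequiv}, then for general $\upsilon$ choose $\nu\in\rho^{-1}(\Z^n)$ and use the spectral flow $\sigma_{\mu,\nu}$ — which in the free-field realization is a Heisenberg spectral flow shifting Fock weights by $\nu\cdot X+(\mu-\rho^{\T}\rho\nu)\cdot Y+\rho(\nu)\cdot Z$, with $\rho(\nu)\in\Z^n$ absorbed into the lattice sum — together with the fact that $\sigma_{\mu,\nu}$ preserves $KL_\rho$. The only cosmetic difference is that the paper fixes $\mu=\rho^{\T}\rho\nu$ so that the label $m$ is unchanged, whereas you allow a general $\mu$ and track the resulting $m'$, which is equally valid since the statement ranges over all $m\in\C^r$.
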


\begin{proof}
    We have proved this if $\sum \rho_{ia}\upsilon^a=0$ or $\sum \rho_{ia}\upsilon^a\notin \Z$ for all $i$. When this is not satisfied, then there is a $\nu\in \rho^{-1}(\Z^n)$ such that $\upsilon+\nu$ satisfies this, and therefore $V_{\CZ, (\upsilon+\nu)\cdot X+m\cdot Y}$ is an object in $KL_\rho$. We now can finish the proof since there is clearly an isomorphism:
    \be
V_{\CZ, \upsilon\cdot X+m\cdot Y}\cong \sigma_{\rho^\perp\rho\nu,\nu} V_{\CZ, (\upsilon+\nu)\cdot X+m\cdot Y},
    \ee
given by mapping $\sigma_{\rho^\perp\rho\nu,\nu}\vert (\upsilon+\nu)\cdot X+m\cdot Y\rangle$ to $\vert \upsilon\cdot X+m\cdot Y-\rho(\nu)\cdot Z\rangle$ (the free field realization of equation \eqref{ffRealization}). Since the spectral flow $\sigma$ preserves $KL_\rho$, the object $V_{\CZ, \upsilon\cdot X+m\cdot Y}$ is in $KL_\rho$, and the proof is complete. 
    
\end{proof}

We can now prove Theorem \ref{Thm:KLKLrho}:

\begin{proof}[Proof of Theorem \ref{Thm:KLKLrho}]
Recall the field redefinition of equation \eqref{V*-redef}. For each $\upsilon, m\in \C^r$, the module $V_{\CZ,\upsilon\cdot X+m\cdot Y}\otimes \mc{V}_{\mathbb{Z}^{2(n-r)}}^{+-}$ of 
 $\widehat{\grho}\otimes \mc{V}_{\mathbb{Z}^{2(n-r)}}^{+-}$ is a lift of the $\widehat{\mathfrak{gl}(1|1)}^{\otimes n}$ module $\bigoplus_{\nu\in \Z^n} \CF^{\wt{X},\wt{Y},\wt{Z}}_{\nu\cdot \wt{Z}+\rho(\upsilon)\cdot\wt{X}+\sigma^\vee(m)\cdot \wt{Y}}$ (restricted from the free-field algebra). This, together with Proposition \ref{Propvermaffequiv} and the proof of Corollary \ref{Cor:VZKLrho}, shows that $\CL_\rho^{\mathrm{ungauge}}$ is essentially surjective. Indeed, Proposition \ref{Propvermaffequiv} and the proof of Corollary \ref{Cor:VZKLrho} combined shows that any simple objects in $KL_\rho$ is a quotient of $V_{\CZ,\upsilon\cdot X+m\cdot Y}$ for some $\upsilon,m\in \C^r$, and therefore is in the image of $\CL_\rho^{\mathrm{ungauge}}$. By definition of $KL_\rho$ and $KL^{\boxtimes n}$, the functor $\CL_\rho^{\mathrm{ungauge}}$ is essentially surjective.

We now show that $\CL_\rho^{\mathrm{ungauge}}$ maps into $KL_\rho$. Let $\mc M$ be a simple module of $\widehat{\mathfrak{gl}(1|1)}^{\otimes n}$ that has trivial monodromy with $\widehat{\grho}\otimes \mc{V}_{\mathbb{Z}^{2(n-r)}}^{+-}$. By \cite[Proposition 4.3]{BN22}, every such $\mc M$ embeds uniquely into a module of the form $\bigoplus_{\nu \in \Z^n} \CF^{\wt{X},\wt{Y},\wt{Z}}_{\nu\cdot\wt{Z}+\upsilon'\cdot\wt{X}+m'\cdot\wt{Y}}$ for some $\upsilon',m'\in \C^n$. By the same proof of Lemma  \ref{LemmonoAq}, such a module can be lifted only when $\rho^\vee\upsilon'\in \Z^{n-r}$ and $\sigma^\T m'\in \Z^{n-r}$. When this is satisfied, the lift of such a module can be identified with $V_{\CZ,(\sigma^\vee)^\T (\upsilon')\cdot X+\rho^\T (m')\cdot Y}\otimes \mc{V}_{\mathbb{Z}^{2(n-r)}}^{+-}$, which by Corollary \ref{Cor:VZKLrho} is an object in $KL_\rho$. Therefore, the image of any simple module under $\CL_\rho^{\mathrm{ungauge}}$ lies in $KL_\rho$. This completes the proof.

\end{proof}

We now prove Proposition \ref{prop:monopolesimple}:

\begin{proof}[Proof of Proposition \ref{prop:monopolesimple}]
For each $\mu\in \Z^r$, the module $\mathbb{U}_\mu\otimes \mc{V}_{\mathbb{Z}^{2(n-r)}}^{+-}$ is the kernel of the screening operator in $\CV_{\CZ, \mu\cdot X}\otimes \mc{V}_{\mathbb{Z}^{2(n-r)}}^{+-}$. By Theorem \ref{Thm:KLKLrho}, this is the image of the kernel of the screening operators in  $\bigoplus\limits_{\nu\in \Z^n}\CF^{\wt{X},\wt{Y},\wt{Z}}_{\nu\cdot\wt{Z}+\rho (\mu)\cdot \wt{X}}$ under $\CL_\rho^{\mathrm{ungauge}}$. The fact that these are simple follows from \cite[Proposition 4.3]{BN22} and their fusion rules now follow from the fusion rules in $KL^{\boxtimes n}$.

To finish the proof, we just need to comment that the fusion product
\be
\mathbb{U}_\mu\otimes_{\widehat{\grho}}\mathbb{U}_\nu\longrightarrow \mathbb{U}_{\mu+\nu}
\ee
can be induced from the OPE of $\mc V_\rho^B$, which is a consequence of the free-field realization of $\mathbb{U}_\mu$. 
\end{proof}

\subsubsection{Simple objects and physical line operators}

We have now justified our definition of $\mc{C}_{\rho}^B$. However, we still need to understand how monodromy acts on objects in $KL_\rho$ in order to study the objects in $\mc{C}_{\rho}^B$. It is convenient to have a statement similar to that of Proposition \ref{Propcbgsscbgmono}.
%, and turns out that such a statement exist. 
We claim the following proposition, which shows that the monodromy condition is equivalent to the requirement that the aciton of $N_0^a$ gives rise to an action of $(\mathbb{C}^*)^r$ on a module. The proof of this will follow from mirror symmetry statement, and will be presented in the next section. 

\begin{Prop}\label{prop:intBside}
    A module $\mc M$ of $\widehat{\grho}$ belong to $KL_\rho^{[0]}$ if and only if the action of $N_0^a$ is semi-simple with integer eigenvalues. 
\end{Prop}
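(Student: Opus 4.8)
The plan is to characterize $KL_\rho^{[0]}$ by computing the monodromy of every object $\mc M\in KL_\rho$ against the simple currents that generate the extension $\mc V_\rho^B$, in exact parallel with the A-twist computation of Lemma \ref{LemmonoAq} and Proposition \ref{Propcbgsscbgmono}. By definition $\mc M$ lies in $KL_\rho^{[0]}$ precisely when it has trivial monodromy with $\mc V_\rho^B$, and since $\mc V_\rho^B=\bigoplus_{\lambda\in\Z^r}\mathbb U_\lambda$ is a direct sum of the monopole simple currents, this is equivalent to $\mu(\mathbb U_\lambda,\mc M)=\mathrm{Id}$ for all $\lambda\in\Z^r$. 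The key claim I would establish is the explicit formula
\[
\mu(\mathbb U_\lambda,\mc M)=\mathrm{Id}\otimes e^{-2\pi i\,\lambda\cdot N_0}\colon\ \mathbb U_\lambda\otimes_{\widehat\grho}\mc M\longrightarrow \mathbb U_\lambda\otimes_{\widehat\grho}\mc M,
\]
where $N_0=(N_0^a)$ are the zero modes of the $G=U(1)^r$ Kac-Moody currents $N^a(z)$. Granting this, triviality for all $\lambda\in\Z^r$ forces $e^{-2\pi i\,\lambda\cdot N_0}=\mathrm{Id}$ for every integer vector $\lambda$, which holds if and only if $N_0^a$ acts semisimply with integer eigenvalues; this is exactly the assertion of Proposition \ref{prop:intBside}.

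To prove the monodromy formula I would use that $\mathbb U_\lambda$ is the spectral flow $\sigma_{\rho^\T\rho\lambda,\lambda}(\widehat\grho)$ associated to the abelian Kac-Moody current $\lambda\cdot N(z)$, as identified in Section \ref{sec:mon2}. Working in the free-field realization $\widehat\grho\hookrightarrow \CV_Z$ of \eqref{ffRealization}, the monopole corresponds to the vertex operator $\CU_\lambda^{\mathrm{free}}=\normord{e^{\lambda\cdot(X+\rho^\T Z)}}$ of \eqref{Ufree}, and the fusion product is computed by the logarithmic intertwining operator of $\CV_Z$. The $\log z$ dependence of this intertwiner is governed entirely by the zero mode $\lambda\cdot(X_0+\rho^\T Z_0)=\lambda\cdot N_0$ pairing against the Heisenberg charges of $\mc M$, so rotating $z\mapsto e^{2\pi i}z$ produces precisely the factor $e^{-2\pi i\,\lambda\cdot N_0}$; this is the same mechanism as in the proof of Lemma \ref{LemmonoAq}, now applied to the lattice VOA $\CV_Z$. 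Alternatively, and more economically, I would transport Proposition \ref{Propcbgsscbgmono} across the mirror equivalence established in Section \ref{subsec:MirrorLines}: under $\CV_\rho^B\cong\CV_{\rho^\vee}^A$ the $G$ currents $N^a$ map to the mirror $T_B$ Kac-Moody currents, and the monopole currents $\mathbb U_\lambda$ map to the vortex simple currents generating $\wt\CV_{\rho^\vee}^A$, so the A-side characterization of trivial monodromy immediately yields the B-side one.

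To make either argument complete I would reduce a general object of $KL_\rho$ to a form on which the computation is transparent. Every simple object is a quotient of a Verma module $\widehat V_{(m,\upsilon)}$, and by Propositions \ref{Prop:KLGrho} and \ref{Propvermaffequiv} these (and their self-extensions) are realized as, or embed into, the free-field modules $V_{\CZ,\upsilon\cdot X+m\cdot Y}$ on which $N_0$ and its action are explicit. Since the monodromy is functorial with respect to subobjects and quotients, establishing the formula on these generating modules propagates it to all of $KL_\rho$, exactly as in the passage from Lemma \ref{LemmonoAq} to Proposition \ref{Propcbgsscbgmono}.

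The main obstacle I anticipate is the logarithmic (non-semisimple) sector: on modules where $N_0^a$ carries a nilpotent part, I must verify that the exponential $e^{-2\pi i\lambda\cdot N_0}$ captures the full monodromy including this nilpotent contribution, rather than just the semisimple eigenvalues. This is precisely the point where the free-field intertwiner computation must be carried out carefully, tracking the powers of $\log z$; it is also the step that makes the ``if and only if'' meaningful, since it shows that a non-integer or non-semisimple $N_0^a$ genuinely obstructs liftability to a $\mc V_\rho^B$-module rather than merely failing a naive eigenvalue test.
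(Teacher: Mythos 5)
Your ``more economical'' alternative route is, in essence, the paper's actual proof, but the mechanism you describe needs one correction. The paper does not transport Proposition \ref{Propcbgsscbgmono} directly through the isomorphism $\CV_\rho^B\cong\CV_{\rho^\vee}^A$: that isomorphism relates the \emph{extended} VOA's, whereas Proposition \ref{prop:intBside} concerns modules of the subalgebra $\widehat{\grho}$, and the mirror image of $\widehat{\grho}$ inside $\CV_{\rho^\vee}^A$ is the $T_F$-orbifold of $\CV_{\rho^\vee}^A$ --- not the symplectic boson algebra to which Proposition \ref{Propcbgsscbgmono} applies. The transport actually runs through the common subalgebra $\widehat{\mathfrak{gl}(1|1)}^{\otimes n}$ and the commutative diagram of lifting functors from the proof of Theorem \ref{ThmCACB}: one characterizes $KL^{\boxtimes n,[0]}$ in two ways, using (i) the result of \cite{BN22} that the domain of $\mc L_{\mathrm{tot}}^{\mathrm{ungauge}}$ is cut out by the $N_0^i$ acting semisimply with integer eigenvalues, and (ii) Proposition \ref{Propcbgsscbgmono} for the domain of $\mc L_A$, together with the identification of $\sum_i\rho_{ia}J_0^i$ with $-\sum_i\rho_{ia}E_0^i$ modulo $\Z$; one then reads off the image under $\mc L_{\rho^\vee}^{\mathrm{ungauge}}$ as the subcategory where the $N_0^a$ of the mirror superalgebra act semisimply with integer eigenvalues. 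So your alternative is right in spirit, but ``transport across the mirror equivalence'' only becomes a proof once routed through this diagram; this is also why the paper defers the proof of Proposition \ref{prop:intBside} until after Theorem \ref{ThmCACB}.

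Your primary route --- the direct B-side monodromy computation --- is genuinely different from the paper, and the key formula $\mu(\mathbb U_\lambda,\mc M)=\mathrm{Id}\otimes e^{-2\pi i\,\lambda\cdot N_0}$ is correct: in the free-field realization the monopole vertex operator $\normord{e^{\lambda\cdot(X+\rho^\T Z)}}$ pairs against Heisenberg charges exactly through the zero mode of $\lambda\cdot N(z)$, with the nilpotent part of $N_0$ entering via the $\log z$ terms, just as in Lemma \ref{LemmonoAq}. The gap is the propagation step. In the A twist, the passage from Lemma \ref{LemmonoAq} to Proposition \ref{Propcbgsscbgmono} rests on Theorem \ref{ThmCbgindecom}: \emph{every} object of $\cbg$ is a subquotient of direct sums of (logarithmic) free-field modules. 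The B-side analogue --- that every object of $KL_\rho$, including those on which $N_0$ or $E_0$ has Jordan blocks, is a subquotient of direct sums of the modules $V_{\CZ,\upsilon\cdot X+m\cdot Y}$ and their self-extensions --- is not proved anywhere in the paper; Propositions \ref{Prop:KLGrho} and \ref{Propvermaffequiv} cover only Verma modules and simples, which is not enough for the ``only if'' direction on general logarithmic objects. Filling this in would require either computing self-extensions in $KL_\rho$ directly, or invoking Theorem \ref{Thm:KLKLrho} to pull every object back to $KL^{\boxtimes n,\rho,[0]}$ where the embedding results of \cite{BN22} apply --- at which point you are effectively running the paper's argument anyway. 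If that spanning theorem were established, your route would have the structural advantage of making Proposition \ref{prop:intBside} self-contained on the B side, independent of mirror symmetry; the paper's route avoids that cost by reusing machinery it has already built.
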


Using this, let us consider the structure of the category $\mc{C}_{\rho}^B$. Since $KL_\rho$ has a decomposition:
\be
KL_{\rho}=\bigoplus\limits_{\upsilon\in \C^r} KL_{\rho, \upsilon}
\ee
where $KL_{\rho, \upsilon}$ is the full-subcategory where the generalized eigenvalue of $E_0^a$ is $\lambda^a$. Consequently, there is a decomposition:
\be
KL_{\rho}^{[0]}=\bigoplus\limits_{\upsilon\in \C^r} KL_{\rho, \upsilon}^{[0]}.
\ee
The lifting functor identifies an object $\mc M$ with $\mathbb{U}_\mu\otimes_{\widehat{\grho}} \mc M$, and it is clear that $\mathbb{U}_\mu$ induces an equivalence:
\be
\mathbb{U}_\mu\otimes_{\widehat{\grho}} -: KL_{\rho, \upsilon}^{[0]}\stackrel{\simeq}{\longrightarrow}KL_{\rho, \upsilon+\mu}^{[0]}.
\ee
This leads to the following Corollary, in which we abuse the notation $\upsilon$ to mean both an element in $(\C/\Z)^r$ and a representative in $\C^r$. 

\begin{Cor}
    The category $\mc{C}_{\rho}^B$ admits a decomposition that is compatible with fusion product:
    \be
\mc{C}_{\rho}^B=\bigoplus\limits_{\upsilon\in (\C/\Z)^r}\mc{C}_{\rho, \upsilon}^B
    \ee
    where $\mc{C}_{\rho, \upsilon}^B$ is the image of $KL_{\rho, \upsilon}^{[0]}$ for any representative $\upsilon$. 
\end{Cor}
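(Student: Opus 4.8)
The plan is to deduce the corollary directly from the $\C^r$-grading of $KL_\rho$ together with the de-equivariantization structure of $\mc L_B$, so that the argument amounts to bookkeeping of $E_0^a$-eigenvalues through a tensor functor. First I would recall the already-established decomposition $KL_\rho = \bigoplus_{\upsilon\in\C^r} KL_{\rho,\upsilon}$ by the generalized eigenvalues of the commuting zero-modes $E_0^a$, and its three standard features: morphisms preserve the grading (any $\widehat{\grho}$-morphism commutes with $E_0^a$ and hence preserves generalized eigenspaces); every object is a finite direct sum over the blocks (on each finite-dimensional conformal-weight space the commuting operators $E_0^a$ admit a simultaneous generalized-eigenspace decomposition, and grading-restriction forces only finitely many cosets to occur); and fusion respects the grading, $KL_{\rho,\upsilon}\otimes_{\widehat{\grho}}KL_{\rho,\upsilon'}\to KL_{\rho,\upsilon+\upsilon'}$. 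Intersecting with the trivial-monodromy subcategory gives $KL_\rho^{[0]} = \bigoplus_{\upsilon}KL_{\rho,\upsilon}^{[0]}$ with the same properties.

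The key input is how the simple currents $\mathbb U_\mu$ sit inside this grading. Since the generating monopole $\CU_\mu$ carries $E_0^a$-eigenvalue $\mu^a$ (\emph{cf.} \eqref{mon-charges}, \eqref{mon-charges2}), the module $\mathbb U_\mu$ lies in $KL_{\rho,\mu}$; being a summand of the \emph{commutative} algebra $\mc V_\rho^B$ it is automatically local to it, so $\mathbb U_\mu\in KL_{\rho,\mu}^{[0]}$ (consistently, its $N_0^a$-eigenvalue is $(\rho^\T\rho\mu)_a\in\Z$, as required by Proposition \ref{prop:intBside}). Therefore the equivalence recorded just above the corollary, $\mathbb U_\mu\otimes_{\widehat{\grho}}(-)\colon KL_{\rho,\upsilon}^{[0]}\xrightarrow{\ \sim\ }KL_{\rho,\upsilon+\mu}^{[0]}$, shifts the grading precisely by $\mu\in\Z^r$. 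Invoking that $\mc C_\rho^B$ is the de-equivariantization $KL_\rho^{[0]}/\Z^r$ — so that $\mc L_B$ identifies $\mc M$ with $\mathbb U_\mu\otimes_{\widehat{\grho}}\mc M$ for every $\mu\in\Z^r$ — I conclude that the $\C^r$-grading descends to a grading by the quotient $\C^r/\Z^r = (\C/\Z)^r$: objects differing by a shift $\mu\in\Z^r$ become isomorphic, whereas objects in cosets not $\Z^r$-related stay distinct. Setting $\mc C_{\rho,\upsilon}^B := \mc L_B(KL_{\rho,\upsilon}^{[0]})$ (well-defined on $[\upsilon]\in(\C/\Z)^r$ by the previous sentence), the block-orthogonality follows because $\mc L_B$ is full and morphisms preserve the descended grading, the direct-sum property is inherited from the eigenspace decomposition in $KL_\rho^{[0]}$, and fusion compatibility $\mc C_{\rho,\upsilon}^B\otimes\mc C_{\rho,\upsilon'}^B\to\mc C_{\rho,\upsilon+\upsilon'}^B$ (sum taken in $(\C/\Z)^r$) follows from fusion-compatibility of the $\C^r$-grading and the fact that $\mc L_B$ is a braided tensor functor (Theorem \ref{Lexact}).

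The one genuinely delicate point — which I would treat most carefully — is upgrading the $E_0^a$-generalized-eigenspace labelling into an honest direct-sum decomposition of the \emph{abelian} category $\mc C_\rho^B$, rather than a mere sorting of objects. Concretely, one must verify that the idempotents projecting a module onto the part with $E_0^a$-eigenvalues in a fixed coset $\upsilon+\Z^r$ are actual morphisms in $\mc C_\rho^B$, i.e.\ that they commute with the $\mc A$-action. This reduces to two observations: the central role of $E_0^a$ (they act by a fixed scalar shift under fusion with each $\mathbb U_\mu$ and commute with the other generators), and grading-restriction, which guarantees that only finitely many cosets appear in any object, so each projector is realized by a finite polynomial in the $E_0^a$ and is therefore visibly a morphism. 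Once this is in place, everything else is a direct transcription of the established grading on $KL_\rho$ through the exact, surjective tensor functor $\mc L_B$, and the corollary follows.
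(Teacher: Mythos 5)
Your main argument is correct and follows essentially the same route as the paper: the paper's justification of this corollary is exactly the discussion preceding it, namely the generalized-$E_0^a$-eigenvalue decomposition $KL_\rho=\bigoplus_{\upsilon\in\C^r}KL_{\rho,\upsilon}$ (hence of $KL_\rho^{[0]}$), the fact that $\mathbb{U}_\mu\otimes_{\widehat{\grho}}(-)$ restricts to an equivalence $KL_{\rho,\upsilon}^{[0]}\to KL_{\rho,\upsilon+\mu}^{[0]}$, and the descent of the $\C^r$-grading to $(\C/\Z)^r$ under the de-equivariantization implemented by $\mc L_B$. Your first two paragraphs reproduce this, with useful extra detail the paper leaves implicit: why $\mathbb{U}_\mu$ lies in $KL_{\rho,\mu}^{[0]}$, Hom-orthogonality of the blocks via fullness of $\mc L_B$, and fusion compatibility via $\mc L_B$ being a braided tensor functor.

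One caveat on your self-identified ``delicate point'': the claim that each coset projector ``is realized by a finite polynomial in the $E_0^a$'' is only valid for objects of $KL_\rho^{[0]}$, where finite length (each simple has a single $E_0^a$-eigenvalue, since $E_0^a$ is central) forces finitely many eigenvalues in total. For an object of $\mc C_\rho^B$ itself, such as $\mc L_B(\mc M)=\bigoplus_{\mu\in\Z^r}\mathbb{U}_\mu\otimes_{\widehat{\grho}}\mc M$, a single coset contains infinitely many eigenvalues (shifted by $\Z^r$), and since a lattice translate $\upsilon+\Z^r$ is Zariski-dense in $\C^r$, no finite polynomial can equal $1$ on one coset and $0$ on another. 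This does not damage your proof: the projector is still a morphism, either because it can be assembled summand-by-summand (each $\mathbb{U}_\mu\otimes_{\widehat{\grho}}\mc M$ lies in $KL_\rho$, where the polynomial argument applies) and commutes with the $\mc A$-action by your first observation (the action shifts eigenvalues by integers, hence preserves cosets), or — more simply — because the decomposition already follows from your second paragraph: every object of $\mc C_\rho^B$ is $\mc L_B$ of an object that decomposes by cosets in $KL_\rho^{[0]}$, and $\mc L_B$ is additive, essentially surjective onto $\mc C_\rho^B$, and full.
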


Now to understand simple modules, we only need to understand simple modules of $KL_{\rho, \upsilon}^{[0]}$. From the previous discussions, we see that any simple module of $KL_{\rho, \upsilon}$ is a quotient of $\CV_{\CZ, \upsilon\cdot X+m\cdot Y}$.

\begin{Prop}
    For each pair $(m,\upsilon)\in \C^r\times \C^r$, the module $\CV_{\CZ, \upsilon\cdot X+(m+\rho)\cdot Y}$ has a unique simple quotient, which we denote by $\mathbb{W}_{m, \upsilon}$. Here $m+\rho$ is the vector $(m_a+\sum_i\rho_{ia})$. 
\end{Prop}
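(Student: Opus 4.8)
The plan is to reduce the statement to a computation for finite-dimensional $\grho$-modules, exactly as in the generic analysis leading to Proposition \ref{Prop:KLGrho}, and then to remove the genericity assumption by spectral flow. First I would record the structure of the bottom: as in the discussion preceding Proposition \ref{Propvermaffequiv}, with $v=|\upsilon\cdot X+(m+\rho)\cdot Y\rangle$ the vector $c^1_{-1}\cdots c^n_{-1}v$ is annihilated by the positive modes of $\widehat{\grho}$ and by all $\psi^{i,-}_0$, has $E^a_0$-weight $\upsilon^a$ and $N^a_0$-weight $m_a$, and under the $\psi^{i,+}_0$ generates a copy of the finite-dimensional $\grho$-module $V_{(m,\upsilon)}$, giving an induced map $\widehat{V}_{(m,\upsilon)}\to \CV_{\CZ,\upsilon\cdot X+(m+\rho)\cdot Y}$.

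The heart of the argument is then the purely finite-dimensional claim that $V_{(m,\upsilon)}$ has a unique simple quotient as a $\grho$-module. To see this I would set $S=\{\,i:\sum_a\rho_{ia}\upsilon^a=0\,\}$. For $i\in S$ the relation $\{\psi^{i,+}_0,\psi^{j,-}_0\}=\delta^{ij}\sum_a\rho_{ia}E^a_0$ forces $\psi^{i,-}_0$ to annihilate $V_{(m,\upsilon)}$, so $\psi^{i,+}_0$ acts by exterior multiplication; for $i\notin S$ the pair $\psi^{i,\pm}_0$ generates a two-dimensional simple module. Since the operators for distinct $i$ (anti)commute and $N_0,E_0$ act diagonally, $V_{(m,\upsilon)}$ factors as $\bigl(\bigotimes_{i\in S}\Lambda^\bullet(\C)\bigr)\otimes\bigl(\bigotimes_{i\notin S}(\text{2-dim simple})\bigr)$. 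The Grassmann factor, generated by $v$, is a local module whose unique maximal submodule is its positive exterior-degree part, and the remaining factor is simple; hence $V_{(m,\upsilon)}$ has a unique simple quotient $L_{(m,\upsilon)}$.

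Next, under the genericity hypothesis that $\sum_a\rho_{ia}\upsilon^a\in\{0\}\cup(\C\setminus\Z)$ for every $i$, Proposition \ref{Propvermaffequiv} gives $\CV_{\CZ,\upsilon\cdot X+(m+\rho)\cdot Y}\cong\widehat{V}_{(m,\upsilon)}$, while Proposition \ref{Prop:KLGrho} identifies $\widehat{V}_{(m,\upsilon)}$ with $V_{(m,\upsilon)}$ under the equivalence of abelian categories $\CG_{\rho,\upsilon}\simeq KL_{\rho,\upsilon}$. Since an equivalence of abelian categories preserves the lattice of subobjects, the property of having a unique simple quotient transfers from $V_{(m,\upsilon)}$ to $\CV_{\CZ,\upsilon\cdot X+(m+\rho)\cdot Y}$, settling the generic range. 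For arbitrary $\upsilon$ I would invoke spectral flow: using the splitting of \eqref{split-SES} one can choose $\nu\in\rho^{-1}(\Z^n)$ so that $\upsilon+\nu$ meets the genericity hypothesis, and the isomorphism from the proof of Corollary \ref{Cor:VZKLrho} relates $\CV_{\CZ,\upsilon\cdot X+(m+\rho)\cdot Y}$ to $\sigma_{\mu,\nu}\CV_{\CZ,(\upsilon+\nu)\cdot X+(m+\rho)\cdot Y}$ (with $\mu=\rho^T\rho\,\nu$); as $\sigma_{\mu,\nu}\colon KL_{\rho,\upsilon+\nu}\to KL_{\rho,\upsilon}$ is an exact auto-equivalence it carries the unique simple quotient to a unique simple quotient, and the weight of the result records the labels of $\mathbb{W}_{m,\upsilon}$.

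I expect the main obstacle to be the degenerate directions $i\in S$: there $V_{(m,\upsilon)}$ is genuinely reducible, so uniqueness of the simple quotient does not come for free and rests entirely on the explicit Clifford/Grassmann module computation above. A secondary point needing care is verifying that the chosen $\nu$ lands $\upsilon+\nu$ in the generic range while correctly tracking the induced shifts of the $N_0$- and $E_0$-weights, so that the simple quotient is identified precisely as $\mathbb{W}_{m,\upsilon}$.
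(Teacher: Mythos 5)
Your route is genuinely different from the paper's, and in the generic range it works. The paper proves this proposition by ungauging: by Theorem \ref{Thm:KLKLrho}, $\CV_{\CZ,\upsilon\cdot X+(m+\rho)\cdot Y}\otimes\CV_{\Z^{2(n-r)}}^{+-}$ is the image under $\mc L_{\rho}^{\mathrm{ungauge}}$ of a free-field module that restricts to a tensor product of indecomposable Verma modules of $\widehat{\mathfrak{gl}(1|1)}^{\otimes n}$; each factor has a unique simple quotient (even at nonzero integer $E_0$-eigenvalue, where the Verma has length two), and since the lifting functor preserves composition series and sends simples to simples (Theorem \ref{Lexact}), the property survives the lift. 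You instead stay inside $\widehat{\grho}$: identify the free-field module with the Verma $\widehat{V}_{(m,\upsilon)}$ via Proposition \ref{Propvermaffequiv}, transport along the equivalence $\CG_{\rho,\upsilon}\simeq KL_{\rho,\upsilon}$ of Proposition \ref{Prop:KLGrho}, and settle the finite-dimensional claim by hand. Your Grassmann/Clifford computation is correct: the downward induction on exterior degree, together with density of the Clifford factor's action, shows that the augmentation ideal tensored with the Clifford factor is the unique maximal submodule of $V_{(m,\upsilon)}$, so the generic case is complete and arguably more self-contained than the paper's.

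The gap is in the non-generic case. You need $\nu\in\rho^{-1}(\Z^n)$ such that every entry of $\rho(\upsilon+\nu)$ is either zero or non-integral. For a simple theory, faithfulness (completeness of the charge lattice) forces $\rho^{-1}(\Z^n)=\Z^r$, so the only shifts available on $\rho\upsilon$ are by vectors in $\rho(\Z^r)$. An entry $(\rho\upsilon)_i\in\Z\setminus\{0\}$ stays integral under any such shift, so it must be shifted exactly to zero, and this requires the integer vector $\bigl((\rho\upsilon)_i\bigr)_{i\in I}$, $I=\{i:(\rho\upsilon)_i\in\Z\}$, to lie in the coordinate projection of $\rho(\Z^r)$ onto $\Z^I$ -- which can fail. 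Concretely, take $\rho=(1,2)^\T$ (the theory $\widetilde{\mathrm{SQED}}_2$ of Section \ref{sec:eg1}) and $\upsilon=1/2$: then $\rho\upsilon=(1/2,\,1)$, and for any $\nu\in\Z$ one has $\rho(\upsilon+\nu)=(1/2+\nu,\,1+2\nu)$, whose second entry is an odd integer, never zero and never non-integral. So no admissible spectral flow makes this $\upsilon$ generic; Propositions \ref{Propvermaffequiv} and \ref{Prop:KLGrho} do not apply, and your argument does not cover this module (which, in this case, is no longer a Verma module of $\widehat{\grho}$).

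You are relying on an existence claim that the paper itself asserts in the proof of Theorem \ref{ThmKLqbrten}, and which fails for the same example; but the paper's proof of the present proposition avoids it entirely, because after ungauging, the problematic directions correspond to $\widehat{\mathfrak{gl}(1|1)}$-Vermas $\wh{V}_{n,e}$ with $e\in\Z\setminus\{0\}$, which are reducible but still have unique simple quotients. To close your gap you would need either a direct analysis of $\CV_{\CZ,\upsilon\cdot X+(m+\rho)\cdot Y}$ when some $(\rho\upsilon)_i$ is a nonzero integer, or to fall back on the lifting-functor argument for those directions.
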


    \begin{proof}
        By the proof of Theorem  \ref{Thm:KLKLrho}, the module $\CV_{\CZ, \upsilon\cdot X+(m+\rho)\cdot Y}\otimes \mc{V}_{\mathbb{Z}^{2(n-r)}}^{+-}$ is the lift of the module
        \be
        \bigoplus_{\nu\in \Z^n} \CF^{\wt{X},\wt{Y},\wt{Z}}_{\nu\cdot \wt{Z}+\rho(\upsilon)\cdot\wt{X}+\sigma^\vee (m+\rho)\cdot \wt{Y}}
        ,\ee under the functor $\mc L_{\mathrm{ungauge}}$, which by \cite{creutzig2021duality}, is identified with a tensor product of indecomposable Verma modules of $\widehat{\mathfrak{gl}(1|1)}^{\otimes n}$, and has a unique simple quotient, as each of the indecomposable module in the tensor product has a unique simple quotient. Since $\mc L_\rho^{\mathrm{ungauge}}$ preserves composition series, the statement follows. 
    \end{proof}

In order for $\mathbb{W}_{m, \upsilon}$ to be in $KL_\rho^{[0]}$ (namely that it satisfies monodromy condition), $m\in \Z^r$. Upon lifting, $\mathbb{W}_{m, \upsilon}$ is identified with $\mathbb{U}_\nu\otimes_{\widehat{\grho}}\mathbb{W}_{m, \upsilon}$, which is $\mathbb{W}_{m, \upsilon+\nu}$. This is a complete list of simples. We summarize this into:

\begin{Cor}
    The simple modules of $\mc{C}_{\rho}^B$ are labelled by $\Z^r\times (\C/\Z)^r$. For each $m\in \Z^r$ and $\upsilon\in (\C/\Z)^r$, we denote by $\mathbb{W}_m^\upsilon$ the corresponding simple object, in accordance with Section \ref{sec:bulklines}. The module $\mathbb{W}_m^\upsilon$ is the image under the lifting functor $\mc L_B$ of the module $\mathbb{W}_{m,\upsilon}$ for any representative $\upsilon$. 
\end{Cor}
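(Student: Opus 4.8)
The statement is a consolidation of the structural results already established for $KL_\rho$ and for the lifting functor $\mc L_B$, so the plan is to read off the simple objects of $\mc C_\rho^B=\mc L_B(KL_\rho^{[0]})$ from those of $KL_\rho^{[0]}$ while tracking the de-equivariantization identification. First I would invoke Theorem \ref{Lexact}: since the tensor product on $KL_\rho$ is exact (Theorem \ref{ThmKLqbrten}) and the simple currents $\mathbb{U}_\mu$ are fixed-point free — no simple $\mc M$ can satisfy $\mathbb{U}_\mu\otimes_{\widehat{\grho}}\mc M\cong\mc M$, because fusing with $\mathbb{U}_\mu$ shifts the $E_0$-eigenvalue by $\mu\neq 0$ — the functor $\mc L_B$ is full, surjective, sends simples to simples and preserves composition series. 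Hence the simple objects of $\mc C_\rho^B$ are exactly the images $\mc L_B(\mc S)$ of simples $\mc S$ of $KL_\rho^{[0]}$, with $\mc L_B(\mc S)\cong\mc L_B(\mc S')$ precisely when $\mc S'\cong\mathbb{U}_\mu\otimes_{\widehat{\grho}}\mc S$ for some $\mu\in\Z^r$.

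Next I would classify the simples of $KL_\rho$ themselves. By the Proposition constructing $\mathbb{W}_{m,\upsilon}$, each $\CV_{\CZ,\upsilon\cdot X+(m+\rho)\cdot Y}$ has a unique simple quotient $\mathbb{W}_{m,\upsilon}$, and by the classification used in proving Theorem \ref{Thm:KLKLrho} every simple of $KL_\rho$ is a quotient of some $\CV_{\CZ,\upsilon\cdot X+\lambda\cdot Y}$; thus the $\mathbb{W}_{m,\upsilon}$ with $(m,\upsilon)\in\C^r\times\C^r$ exhaust the simples, and they are pairwise distinct since their lowest-weight spaces carry distinct $(N_0,E_0)$-eigenvalues $(m,\upsilon)$. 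To impose the locality ($[0]$) condition I would apply Proposition \ref{prop:intBside}: $\mathbb{W}_{m,\upsilon}\in KL_\rho^{[0]}$ iff $N_0^a$ acts semisimply with integer eigenvalues. On $\mathbb{W}_{m,\upsilon}$ this operator is automatically diagonalizable — every monomial in the negative modes applied to the generating eigenvector is again an $N_0^a$-eigenvector, its eigenvalue differing from $m_a$ by an integer because $[N_a,\psi^{i,\pm}]=\pm\rho^i{}_a\psi^{i,\pm}$ while the remaining generators commute with $N_0^a$ up to central terms — so the eigenvalues lie in $m_a+\Z$ and the condition collapses to $m\in\Z^r$.

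The crucial computation is the de-equivariantization identification, which I would carry out in the free-field realization rather than directly via spectral flow. Under the embedding of Theorem \ref{Thm:KLKLrho} together with Proposition \ref{prop:monopolesimple}, the monopole module $\mathbb{U}_\nu$ is the screening-kernel inside $\CV_{\CZ,\nu\cdot X}$, and fusion of lattice Fock modules merely adds weights; fusing $\mathbb{U}_\nu$ with $\mathbb{W}_{m,\upsilon}$ therefore adds $\nu\cdot X$ to the weight $\upsilon\cdot X+(m+\rho)\cdot Y$, yielding $\CV_{\CZ,(\upsilon+\nu)\cdot X+(m+\rho)\cdot Y}$ and hence $\mathbb{U}_\nu\otimes_{\widehat{\grho}}\mathbb{W}_{m,\upsilon}\cong\mathbb{W}_{m,\upsilon+\nu}$. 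The point to emphasize, and the main obstacle, is that a naive use of the spectral-flow automorphism $\sigma_{\rho^T\rho\nu,\nu}$ shifts \emph{both} $E_0$ and $N_0$, whereas here only the $X$-weight (the $E_0$, i.e. $T_{\rm top}$-charge $\upsilon$) moves while the $Y$-weight $(m+\rho)$, and so the label $m$, is preserved — the apparent $N_0$-shift by $\rho^T\rho\nu$ being absorbed by the $Z$-lattice summation built into $\CV_{\CZ}$. Combining the three steps, the identification $\mathbb{W}_{m,\upsilon}\sim\mathbb{W}_{m,\upsilon+\nu}$ for $\nu\in\Z^r$ makes $\mc L_B(\mathbb{W}_{m,\upsilon})$ depend on $\upsilon$ only through its class in $(\C/\Z)^r$ and introduces no other identifications, so the distinct simples $\mathbb{W}_m^\upsilon:=\mc L_B(\mathbb{W}_{m,\upsilon})$ are indexed exactly by $(m,\upsilon)\in\Z^r\times(\C/\Z)^r$. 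I would close by recording the physical matching of Section \ref{sec:bulklines}: $\mathbb{W}_m^{0}$ are the pure Wilson lines $\W_m$, and $\mathbb{W}_m^\upsilon$ with $\upsilon\neq 0$ the monodromy–Wilson lines $\W_m^{(\upsilon)}$.
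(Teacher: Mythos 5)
Your proposal is correct and takes essentially the same route as the paper: classify the simples of $KL_\rho$ as the unique simple quotients $\mathbb{W}_{m,\upsilon}$ of $\CV_{\CZ,\upsilon\cdot X+(m+\rho)\cdot Y}$, cut down by the locality condition of Proposition \ref{prop:intBside} to $m\in\Z^r$, and identify $\mathbb{W}_{m,\upsilon}\sim\mathbb{W}_{m,\upsilon+\nu}$ under the de-equivariantization. The details you make explicit where the paper is terse --- fixed-point freeness of the $\mathbb{U}_\mu$ via the $E_0$-block shift, the reduction of $N_0$-integrality to $m\in\Z^r$, and the free-field verification that fusion with $\mathbb{U}_\nu$ shifts only $\upsilon$ while the $N_0$-shift by $\rho^\T\rho\nu$ is absorbed into the $Z$-lattice --- are exactly the mechanisms the paper uses (the last one appearing verbatim in the proof of Corollary \ref{Cor:VZKLrho}), so the two arguments coincide.
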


In the special case when $\upsilon=0$, one can construct the simple module alternatively as follows. Consider the module $A_{m}$ of $\grho$ spanned by a single vector that has weight $m_a$ under the action of $N_a$ and annihilated by all other elements. It is simply from definition that $\mathbb{W}_m^{0}=\mc{L}_B\widehat{A_m}$. We identify these objects as the Wilson lines operators, since $\widehat{A_m}$ is defined by the representation of the gauge group $(\C^*)^r$ determined by the character $m$. When $\upsilon\ne 0$, we identify $\mathbb{W}_m^\upsilon$ as the general gauge vortex, since on such a module, the action of $E(z)$ is, up to gauge fixing, $E(z)\sim \frac{\upsilon}{z}$.

\begin{Exp}
    Consider the case when $\rho=0$, namely the B twist of free hyper. In this case, $\mc V_\rho^B=\mc V_{SF}$, since there is no monopole operator.  The category $KL_\rho\simeq \mc C_\rho^B$ is equvialent to $\C[\psi^+,\psi^-]\mathrm{-Mod}_{fin}$, the category of finite-dimensional modules of the exterior algebra $\C[\psi^+,\psi^-]$, as abelian category. This category has a unique simple object $\mc V_{SF}$, the vacuum module, which cprresponds to the trivial module $\C$ of $\C[\psi^+,\psi^-]$. It also has a projective object $\mc L_A (\mc P)$ as in Section \ref{subsec:LineA}, which corresponds to the free module $\C[\psi^+,\psi^-]$. 
\end{Exp}

\begin{Exp}
    Consider the case when $\rho=1$, namely the B twist of $\mathrm{SQED}_1$. In this case, the VOA $\mc V_\rho^B\cong \mc V_{\bg}\otimes \mc V_{bc}$. The category $KL_\rho\simeq KL$ is the Kazhdan-Lusztig category for $\widehat{\mathfrak{gl}(1|1)}$, and the simple objects that satisfy the monodromy condition are $\widehat{A}_{n, e}$ for $n+e/2\in \Z$ and $\widehat{V}_{n,e}$ for $n+\frac{1}{2}\in \Z$ and $e\notin \Z$. We can identify $\mathbb{W}_n^{[0]}$ with $\mc L_B(\widehat{A}_{n,0})$ and $\mathbb{W}_n^{[e]}$ for $e\notin \Z$ with $\mc L_B(\widehat{V}_{n+\frac{1}{2},e})$. These are simple modules of $\mc V_{\bg}\otimes \mc V_{bc}$, as explained in Section \ref{subsec:baby}.
\end{Exp}

\subsection{Mirror symmetry for line operators}\label{subsec:MirrorLines}

In this section, we will prove:
\begin{Thm}\label{ThmCACB}
    There is an equivalence of braided tensor categories:
    \begin{equation}
        \mathcal{C}_{\rho}^A\simeq \mathcal{C}_{\rho^\vee}^B.
    \end{equation}
\end{Thm}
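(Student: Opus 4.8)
The plan is to leverage the two ``alternative'' VOA constructions and the master Morita-equivalence of Theorem \ref{ThmMirSymbdVOA}, rather than attempting to compare $\CV_\rho^A$ and $\CV_{\rho^\vee}^B$ modules directly. The key observation is that both categories $\CC_\rho^A$ and $\CC_{\rho^\vee}^B$ have been defined as de-equivariantizations of a \emph{common} starting point: $\CC_\rho^A = \CC_{\beta\gamma}^{\boxtimes n,\rho,[0]}/\Z^r$ lifted from $\CV_{\beta\gamma}^{T^*V}$, while $\CC_{\rho^\vee}^B = KL_{\rho^\vee}^{[0]}/\Z^{n-r}$ lifted from $\widehat{\mathfrak g_{\rho^\vee}}$. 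By Lemma \ref{lem:AB-alt} we already have $\wt\CV_\rho^A\cong \wt\CV_{\rho^\vee}^B$, and by Theorem \ref{Thm:KLKLrho} the category $KL_{\rho^\vee}$ is itself a de-equivariantization of $KL^{\boxtimes n,\rho^\vee,[0]}$ (the $n$-fold Kazhdan--Lusztig category of $\widehat{\mathfrak{gl}(1|1)}$). The baby case (Section \ref{subsec:baby}) supplies the fundamental bridge: the equivalence $\CC_{\beta\gamma}\simeq KL^{[0]}/\Z$ coming from the fact that $\CV_{\beta\gamma}\otimes\CV_{bc}$ is a simple current extension of $\widehat{\mathfrak{gl}(1|1)}$.

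First I would assemble the diagram of lifting functors. Taking the $n$-fold Deligne product of the baby equivalence gives a braided tensor equivalence $\CC_{\beta\gamma}^{\boxtimes n}\simeq (KL^{\boxtimes n})^{[0]}/\Z^n$, identifying the two ``free hypermultiplet'' categories compatibly with the spectral-flow/simple-current structures. Both $\CC_\rho^A$ and $\CC_{\rho^\vee}^B$ are then obtained from these free-hypermultiplet categories by gauging a $\Z^r$ one-form symmetry (extension by the simple currents $\V_{\rho(\mu)}$ on the A side, and by the analogous currents defining $\widehat{\mathfrak g_{\rho^\vee}}\otimes\CV_{\Z^{2(n-r)}}^{+-}$ on the B side). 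The crucial compatibility is Theorem \ref{thm:VOA}: the isomorphism $\CV_\rho^A\cong\CV_{\rho^\vee}^B$ (extended, via Theorem \ref{ThmMirSymbdVOA}, to the Morita-equivalent alternative VOA's) matches the $T_F\times T_B$ Kac-Moody currents of the A side with the $T_{\rm top}\times G$ currents of the B side. This means the simple currents being used for the two extensions correspond to each other under the free-field identification. Since a simple current extension (equivalently, de-equivariantization by the dual torus) is determined by the collection of simple currents together with their braiding and monodromy data inside the ambient category, matching currents forces the resulting de-equivariantized categories to be equivalent as braided tensor categories.

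Concretely, the steps are: (1) prove the $n$-fold baby equivalence $\CC_{\beta\gamma}^{\boxtimes n}\simeq (KL^{\boxtimes n})^{[0]}/\Z^n$ respecting block decompositions and spectral flow; (2) identify, under this equivalence and the free-field change of basis \eqref{redef}--\eqref{inv-redef}, the $\Z^r$ simple currents $\{\V_{\rho(\mu)}\}$ used to build $\wt\CV_\rho^A$ with the simple currents used to build $\widehat{\mathfrak g_{\rho^\vee}}$ (this is exactly the content of Lemmas \ref{lem:AB-alt} and \ref{lem:V*B} at the VOA level, now upgraded to the categorical level); (3) verify that the subcategories on which one may perform the lift agree --- i.e. that the monodromy/locality condition $[0]$ defining $\CC_{\beta\gamma}^{\boxtimes n,\rho,[0]}$ corresponds precisely to the $[0]$ condition defining $KL^{\boxtimes n,\rho^\vee,[0]}$, using Propositions \ref{Propcbgsscbgmono} and \ref{prop:intBside}; and (4) conclude by functoriality of the lifting functor (Theorem \ref{Lexact}) that $\mc L_A$ and $\mc L_B\circ\mc L_{\rho^\vee}^{\rm ungauge}$ have equivalent images, compatibly with tensor and braiding. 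Finally I would confirm the matching of simple objects $\V_m^\upsilon\leftrightarrow\mathbb W_m^\upsilon$ by tracking the labels $(m,\upsilon)\in\Z^{n-r}\times(\C/\Z)^{n-r}$ through the free-field dictionary, which is forced by the weight computation \eqref{weight-top-AB} in the proof of Theorem \ref{thm:VOA}.

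The main obstacle I anticipate is step (3): carefully reconciling the two different-looking integrability/monodromy conditions. On the A side the $[0]$ condition is phrased as semisimplicity of $\sum_i\rho_{ia}J_0^i$ with integer eigenvalues (Proposition \ref{Propcbgsscbgmono}), whereas on the B side it is semisimplicity of $N_0^a$ with integer eigenvalues (Proposition \ref{prop:intBside}), and these live in categories lifted over \emph{different} intermediate VOA's ($\CV_{\beta\gamma}^{\otimes n}$ versus $\widehat{\mathfrak{gl}(1|1)}^{\otimes n}$). Matching them requires tracing the current $\sum_i\rho_{ia}J_0^i$ through the baby equivalence and the field redefinition to the current $N_0^a$ on the mirror side, and checking that the two de-equivariantizing tori (and hence the two sets of ``charges'' being killed) are genuinely dual under the exact sequence \eqref{split-SES2}. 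This is where the transpose/orthogonality structure of $\rho$ and $\rho^\vee$ must be used decisively. Notably, Proposition \ref{prop:intBside} itself is stated to be proved \emph{via} the mirror symmetry statement, so some care is needed to avoid circularity --- the cleanest route is to establish the categorical equivalence directly from the matching of simple currents and their monodromy (which is intrinsic, computed from the braiding on $KL^{\boxtimes n}$ and $\CC_{\beta\gamma}^{\boxtimes n}$ as in Lemma \ref{LemmonoAq}), and then \emph{deduce} the equivalence of the two $[0]$-conditions as a consequence, thereby supplying the deferred proof of Proposition \ref{prop:intBside} as a byproduct.
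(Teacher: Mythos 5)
Your proposal follows essentially the same route as the paper's proof: both categories are exhibited as de-equivariantizations of the common subcategory $KL^{\boxtimes n,[0]}\subset KL^{\boxtimes n}$ (modules of $\widehat{\mathfrak{gl}(1|1)}^{\otimes n}$) by the same $\Z^{n+r}$ lattice of simple currents, using the commuting diagram of lifting functors built from $\mc L_{\mathrm{tot}}^{\mathrm{ungauge}}$, $\mc L_{\rho^\vee}^{\mathrm{ungauge}}$, $\mc L_A$, $\mc L_B$ together with the fact that simple current extensions may be performed in either order. Your resolution of the apparent circularity with Proposition \ref{prop:intBside} --- prove the equivalence first via the monodromy condition relative to the full extension, then deduce the identification of the two $[0]$-conditions as a corollary --- is exactly how the paper proceeds.
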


\begin{proof}
 The strategy of the proof is to use the relation between $\wt{\mc V}_\rho^A$ and $\mc{V}_\rho^B$  derived in Section \ref{subsecrelatealt}:
\begin{equation}
    \wt{\mc V}_\rho^A\otimes  \mc{V}_{bc}^{V}\cong \mc V_{\rho^\vee}^B\otimes \mc{V}_{\mathbb{Z}^{2(n-r)}}^{+-},
\end{equation}
and the fact that there is a commutative diagram of simple current extensions:
\be
\begin{tikzcd}
      &  \wt{\mc V}_\rho^A\otimes \mc{V}_{bc}^{V}\cong \mc V_{\rho^\vee}^B\otimes \mc{V}_{\mathbb{Z}^{2(n-r)}}^{+-} & \\
      \mc{V}_{\beta\gamma}^{V}\otimes \mc{V}_{bc}^{V} \arrow[ur] & & \widehat{\grho}\otimes \mc{V}_{\mathbb{Z}^{2(n-r)}}^{+-} \arrow[ul]\\
       & \widehat{\mathfrak{gl}(1|1)}^{\otimes n} \arrow[ul]\arrow[ur]&
    \end{tikzcd}
\ee
which induces a diagram of lifting functors:
\be
\begin{tikzcd}
      &   \wt{V}_\rho^A\otimes \mc V_{bc}^V\mathrm{-Mod}_{\mathrm{loc}}\left(\mathrm{Ind}(KL^{\boxtimes n})\right) & \\
      \cbg^{\boxtimes n} \arrow[ur, "\mc L_A"] & & KL_{\rho^\vee} \arrow[ul, swap, "\mc L_B"]\\
       & KL^{\boxtimes n} \arrow[Rightarrow, shorten >=0.5cm,shorten <=0.5cm, uu] \arrow[ul, "\mc L_{\mathrm{tot}}^{\mathrm{ungauge}}"]\arrow[ur, swap, "\mc L_{\rho^\vee}^{\mathrm{ungauge}}"]&
\end{tikzcd}
\ee
This is a commutative diagram:
The domain of each functor is the subcategory that has trivial monodromy with the extension. Note that both $\mc C_\rho^A$ and $\mc C_{\rho^\vee}^B$ are full subcategories of $\wt{V}_\rho^A\otimes \mc V_{bc}^V\mathrm{-Mod}_{\mathrm{loc}}\left(\mathrm{Ind}(KL^{\boxtimes n})\right)$. The fact that the image of $\mc L_{\mathrm{tot}}^{\mathrm{ungauge}}$ is $\cbg^{\boxtimes n}$ is by \cite{BN22}, as we have recalled, and that the image of $\mc L_{\rho^\vee}^{\mathrm{ungauge}}$ is $KL_{\rho^\vee}$ is the content of Theorem \ref{Thm:KLKLrho}.  Since every extension involved is a simple current extension, by \cite{creutzig2021uprolling}, the order of such extensions do not matter. Consequently, the above diagram is a commutative diagram of lifting functors. Let $KL^{\boxtimes n, [0]}$ be the full subcategory of $KL^{\boxtimes n}$ that has trivial monodromy with $\wt{\mc V}_\rho^A\otimes \mc{V}_{bc}^{V}\cong \mc V_{\rho^\vee}^B\otimes \mc{V}_{\mathbb{Z}^{2(n-r)}}^{+-}$. The composition of the lifting functor:
\be
\mc L_A\circ \mc L_{\mathrm{tot}}^{\mathrm{ungauge}}: KL^{\boxtimes n, [0]}
\longrightarrow \mc C_\rho^A
\ee
is surjective, and gives an equivalence of braided tensor categories $\mc C_\rho^A\simeq KL^{\boxtimes n, [0]}/\Z^{n+r}$, de-equivariantization with respect to $(\C^*)^{n+r}$ determined by the extension $\wt{\mc V}_\rho^A\otimes \mc{V}_{bc}^{V}$. On the other hand, the lifting functor:
\be
\mc L_B\circ \mc L_{\rho^\vee}^{\mathrm{ungauge}}: KL^{\boxtimes n, [0]}
\longrightarrow \mc{C}_{\rho^\vee}^B
\ee
is also surjective and gives an equivalence $\mc{C}_{\rho^\vee}^B\simeq KL^{\boxtimes n, [0]}/\Z^{n+r}$, de-equivariantization with respect to the $(\C^*)^{n+r}$ determined by the same extension. Combining the two, there is an equivalence of braided tensor categories:
\be
\mc C_\rho^A\simeq KL^{\boxtimes n, [0]}/\Z^{n+r}\simeq \mc{C}_{\rho^\vee}^B,
\ee
and the proof is complete. 

\end{proof}

With this, we give the proof of Proposition \ref{prop:intBside}:

\begin{proof}[Proof of Proposition \ref{prop:intBside}]
By the proof of Theorem \ref{ThmCACB}, we only need to understand the category $KL^{\boxtimes n, [0]}$ and its image under $\mc L_{\rho^\vee}^{\mathrm{ungauge}}$. In \cite{BN22}, we have proven that the domain of the lifting functor $\mc L_{\mathrm{tot}}^{\mathrm{ungauge}}$ is the subcategory where the action of $N_0^i$ is semi-simple with integer eigenvalues for all $i$. On the other hand, Proposition \ref{Propcbgsscbgmono} implies that the domain of $\mc L_A$ is the subcategory where the action of $\sum\rho_{ai}J_0^i$ is semi-simple with integer eigenvalues. Up to an integer value, the action of $\sum\rho_{ai}J_0^i$ is identified with $-\sum \rho_{ai}E_0^i$, the domain of $\mc L_A\circ \mc L_{\mathrm{tot}}^{\mathrm{ungauge}}$, namely $KL^{\boxtimes n, [0]}$ is the subcategory of $KL^{\boxtimes n}$ such that $N_0^i$ and $\sum \rho_{ai}E_0^i$ acts semi-simply with integer eigenvalues. The image of this under the lifting functor $\mc L_{\rho^\vee}^{\mathrm{ungauge}}$ is clearly the subcategory such that the action of $N_0^a$ is semi-simple with integer eigenvalues, and the proof is complete. 

\end{proof}

Under the equivalence of Theorem  \ref{ThmCACB}, we can in fact identify the image of simple modules on both sides, which fulfills the physical expectations from Section \ref{sec:bulklines}. 

\begin{Cor}\label{cor:Mirrorlines}
    Under the equivalence of Theorem \ref{ThmCACB}, the simple objects $\mathbb{V}_m^{\upsilon}$ corresponds to $\mathbb{W}_m^\upsilon$. In particular, when $\upsilon=0 \text{ mod }\Z$, we find that the Wilson lines $\mathbb{W}_m^{[0]}$ corresponds to vortex lines $\mathbb{V}_m^{[0]}$. 
    
\end{Cor}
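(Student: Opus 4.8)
The plan is to track a single simple object of $KL^{\boxtimes n,[0]}$ through both legs of the commutative diagram of lifting functors that was used to prove Theorem \ref{ThmCACB}. Recall from that proof that the two composites
\[
\mc L_A\circ \mc L_{\mathrm{tot}}^{\mathrm{ungauge}}:KL^{\boxtimes n,[0]}\longrightarrow \mc C_\rho^A,\qquad \mc L_B\circ \mc L_{\rho^\vee}^{\mathrm{ungauge}}:KL^{\boxtimes n,[0]}\longrightarrow \mc C_{\rho^\vee}^B
\]
realize one and the same de-equivariantization $KL^{\boxtimes n,[0]}/\Z^{n+r}$, and that by Theorem \ref{Lexact} every lifting functor appearing here is a braided tensor functor that sends simples to simples and preserves composition series (all categories in play are fixed-point free). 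Hence the equivalence of Theorem \ref{ThmCACB} identifies the image of a simple $\mc S\in KL^{\boxtimes n,[0]}$ along the first leg with its image along the second. It therefore suffices, for each label $(m,\upsilon)\in\Z^{n-r}\times(\C/\Z)^{n-r}$, to fix a simple $\mc S$ whose image under $\mc L_A\circ\mc L_{\mathrm{tot}}^{\mathrm{ungauge}}$ is $\V_m^\upsilon$, and then to compute its image under $\mc L_B\circ\mc L_{\rho^\vee}^{\mathrm{ungauge}}$.

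This computation is most transparent in the common free-field realization. A simple object of $KL^{\boxtimes n}$ is the simple quotient of a Fock module $\CF^{\wt X,\wt Y,\wt Z}_{\lambda}$ of the rank-$3n$ Heisenberg algebra of Section \ref{subsecrelatealt}, and the VOA isomorphism of Theorem \ref{ThmMirSymbdVOA} is implemented by the explicit change of Heisenberg basis \eqref{V*-redef}. Under $\mc L_{\mathrm{tot}}^{\mathrm{ungauge}}$ this Fock datum determines, via the identifications of Section \ref{subsec:baby}, a module $\bigotimes_i\mc M_i$ of $\mc{V}_{\beta\gamma}^{\otimes n}$ (each $\mc M_i$ a spectral flow of $\mc V_0$ or of $\mc W_{\upsilon_i}$), whose lift under $\mc L_A$ is by construction $\V_m^\upsilon$ (Section \ref{subsec:LineA}). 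Feeding the \emph{same} Fock datum into $\mc L_{\rho^\vee}^{\mathrm{ungauge}}$ instead produces, by Proposition \ref{Propvermaffequiv} and Corollary \ref{Cor:VZKLrho}, the module $V_{\CZ,\upsilon'\cdot X+m'\cdot Y}$ of $\widehat{\mathfrak g_{\rho^\vee}}$ and hence its simple quotient $\W_{m',\upsilon'}$, whose lift under $\mc L_B$ is $\W_{m'}^{\upsilon'}$ (Section \ref{subsec:LineB}). The content of the claim is exactly that $(m',\upsilon')=(m,\upsilon)$.

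Rather than grinding the change of basis \eqref{V*-redef} directly, I would pin down the two labels by their invariant meaning as zero-mode eigenvalues, which the braided tensor equivalence preserves because it is assembled from lifting functors over the common subalgebra $\widehat{\mathfrak{gl}(1|1)}^{\otimes n}$ whose currents are retained on both sides. The continuous parameter $\upsilon$ is the generalized eigenvalue, reduced mod $\Z$, of the zero mode of the $T_F$ flavor current on the A side and of the $T_{\mathrm{top}}$ current $E_0$ on the B side; by the symmetry-matching assertion of Theorem \ref{thm:VOA} (which sends the $T_F$ Kac-Moody currents of $\CV_\rho^A$ to the $T_{\mathrm{top}}$ currents of $\CV_{\rho^\vee}^B$), these eigenvalues are identified, so $\upsilon'=\upsilon$. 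The integer label $m$ is the spectral-flow (flavor-vortex) index on the A side and the $N_0$ gauge/Wilson charge on the B side; the same dictionary identifies the $T_F$ one-form generators with the $G$ Wilson charges, giving $m'=m$. This yields $\V_m^\upsilon\mapsto \W_m^\upsilon$.

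The main obstacle is precisely this bookkeeping of the integer label: one must carry $m$ through the orthogonal decomposition $\Z^n=\rho(\Z^r)\oplus\rho^\vee(\Z^{n-r})$ twisted by the fractionalization group $H$, through the basis change \eqref{V*-redef}, and through the intervening spectral flows, checking that no transpose, sign, or $H$-shift is silently introduced, so that the correspondence is literally $(m,\upsilon)\mapsto(m,\upsilon)$ rather than some reparametrization. The integrability criteria of Propositions \ref{Propcbgsscbgmono} and \ref{prop:intBside} guarantee that the domains $\mc C_{\beta\gamma}^{\boxtimes n,\rho,[0]}$ and $KL_{\rho^\vee}^{[0]}$ correspond correctly inside $KL^{\boxtimes n,[0]}$, so no spurious simples enter the matching. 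Finally the special case $\upsilon=0$ is a specialization: there $\V_m^{[0]}=\mc L_A(\V_{\sigma(m)})$ is the flavor vortex and $\W_m^{[0]}=\mc L_B\widehat{A_m}$ is the Wilson line attached to the gauge character $m$ of $\CT_{\rho^\vee}$, recovering the expected mirror identification of vortex lines with Wilson lines from Section \ref{sec:bulklines}.
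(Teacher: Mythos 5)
Your skeleton is the same as the paper's: run a simple object of $KL^{\boxtimes n,[0]}$ through both legs of the commutative diagram from the proof of Theorem \ref{ThmCACB}, using that lifting functors send simples to simples (Theorem \ref{Lexact}). But the decisive step --- verifying that the labels agree --- is exactly what you flag as ``the main obstacle'' and then never carry out, and the shortcut you propose does not close it. The integer label $m$ is not the eigenvalue of any current available in $\wt\CV_\rho^A$: the $T_B$ current that would measure it lives in the $\mc{V}_{bc}^V$ factor stripped off by the Morita equivalence of Theorem \ref{ThmMirSymbdVOA}, while on the B side $m$ enters only through the shifted Fock label $\upsilon\cdot X+(m+\rho^\vee)\cdot Y$ defining $\W_{m,\upsilon}$; so ``$m$ as an invariant zero-mode eigenvalue'' is simply not available on either side. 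Worse, your assertion that ``the dictionary identifies the $T_F$ one-form generators with the $G$ Wilson charges'' is precisely the $\upsilon=0$ instance of the corollary being proved: Theorem \ref{thm:VOA} matches currents of the \emph{VOA's}, but the identification of the corresponding \emph{objects} under the specific zigzag equivalence of Theorem \ref{ThmCACB} (which is not literal transport along the isomorphism of Theorem \ref{thm:VOA}) is the content of the statement, so invoking it is circular.

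The paper's proof shows that the shift you worry about is not hypothetical --- it actually occurs. Tracking $\W_m^\upsilon$, i.e.\ the lift of the simple quotient of $\CV_{\CZ,\,\upsilon\cdot X+(m+\rho^\vee)\cdot Y}$, through $\mc L_{\mathrm{tot}}^{\mathrm{ungauge}}$ via Proposition 4.6 of \cite{BN22} yields $\bigotimes_i\Sigma_i^{b_i-1}\mc W_{\rho^\vee(\upsilon)_i}$ with $b_i=\sum_\alpha\sigma_{i\alpha}m^\alpha+\sum_{\alpha,j}\sigma_{i\alpha}\rho^\vee_{j\alpha}$, and $b_i-1$ is \emph{not} $\sigma(m)_i$: by the splitting identity $\sigma\rho^{\vee\T}+\rho\sigma^{\vee\T}=\mathrm{Id}$, the two differ by the gauge-lattice vector $\rho(\sigma^\vee)$. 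The corollary holds only because this discrepancy lies in $\rho(\Z^r)$, so that fusing with the simple current $\V_{\rho(\sigma^\vee)}$ --- a direct summand of $\wt\CV_\rho^A$, hence invisible after lifting --- converts the image into $\bigotimes_i\Sigma_i^{\sigma(m)_i}\mc W_{\rho^\vee(\upsilon)_i}$, whose simple quotient lifts to $\V_m^\upsilon$ by definition (this is equation \eqref{eq:Aquotient}). Your proof needs this computation, or an equivalent argument that accounts for both the $+\rho^\vee$ shift in the B-side definition and the residual $\rho(\sigma^\vee)$ shift; without it, $(m',\upsilon')=(m,\upsilon)$ is asserted rather than proved. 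The continuous label is the defensible half of your argument, since the blocks are separated by the generalized flavor/$E_0$ eigenvalue mod $\Z$ and the equivalence respects the block decompositions --- though even there, that the identification is $\upsilon'=\upsilon$ on the nose, rather than some relabeling, ultimately comes out of the same free-field computation.
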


\begin{proof}
    By definition, $\mathbb{W}_m^\upsilon$ is the lift under $\mc L_B$ of the unique simple quotient of $\CV_{\CZ, \upsilon\cdot X+(m+\rho^\vee)\cdot Y}$ for a representative $\upsilon$. This is a module of the free field algebra, and is the lift of 
    \be
    \bigoplus_{\mu\in \Z^n} \CF^{\wt{X},\wt{Y},\wt{Z}}_{\mu\cdot\wt{Z}+\sigma(m+\rho^\vee)\cdot\wt{Y}+\rho^\vee(\upsilon)\cdot \wt{X}}, \ee
    where $m+\rho^\vee$ is the vector $\nu_\alpha+\sum_i \rho^\vee_{i\alpha}$. By \cite[Proposition 4.6]{BN22}, the image of this module under $\mc L_{\mathrm{tot}}^{\mathrm{ungauge}}$ is:
    \be
\bigotimes_i \Sigma_i^{b_i-1}\mc W_{ \rho^\vee(\upsilon)_i},
    \ee
    where $b_i=\sum_\alpha \sigma_{i\alpha}m^\alpha+\sum_{\alpha, j}\sigma_{i\alpha}\rho^\vee_{j\alpha}$. Each tensor factor is simple when $\sum_\alpha \rho_{i\alpha}^\vee\upsilon^\alpha\ne 0 \text{ mod } \Z$, or has a unique simple quotient when $\sum_\alpha \rho_{i\alpha}^\vee\upsilon^\alpha=0\text{ mod } \Z$, in which case the quotient is $\mc V_{-b_i+1}$. Using the splitting of the short exact sequence \eqref{split-SES}, there is an identify:
    \be
\sum_{\alpha,j}\sigma_{i}{}^\alpha\rho^\vee_{j\alpha}+\sum_{a, j}\rho_{i}{}^a\sigma^\vee_{aj}=1,
    \ee
   and so if we define $\rho(\sigma^\vee)$ to be the vector $\sum_{a, j}\rho_{i}{}^a\sigma^\vee_{aj}$, then we have the following fusion rule:
    \be\label{eq:Aquotient}
\mathbb V_{\rho (\sigma^\vee)}\otimes_{\CV_{\bg}^V}  \left(\bigotimes_i \Sigma_i^{b_i-1}\mc W_{ \rho^\vee(\upsilon)_i}\right)\cong  \bigotimes_i \Sigma_i^{\sigma(m)_i}\mc W_{\rho^\vee(\upsilon)_i}.
    \ee
    The simple quotient of the right-hand-side of equation \eqref{eq:Aquotient} is by definition the pre-image of $\mathbb{V}_m^\upsilon$ under the lifting functor $\mc L_A$. Since $\V_{\rho(\sigma^\vee)}$ is a direct-summand of $\wt{\mc V}_\rho^A$, the lifting functor identifies the two sides of equation \eqref{eq:Aquotient}. Therefore, the image under $\mc L_A$ of the simple quotient of the left-hand-side of equation \eqref{eq:Aquotient} is also $\mathbb{V}_m^\upsilon$. This completes the proof.
    
\end{proof}

\begin{Exp}
    Consider the case when $\rho=0$ and $\rho^\vee=1$. In this case, $\mc C_\rho^A$ is $\cbg$, and $\mc C_\rho^B$ is $KL^N/\Z$. This equivalence is the main content of \cite{BN22}. 

    On the other hand, $\mc C_\rho^B$ is the category of modules of $\mc V_{SF}$, and $\mc C_{\rho^\vee}^A$ is the de-equivariantization $\cbg^{[0]}/\Z$. The only simple in $\cbg^{[0]}/\Z$ is the spectral flow of the vacuum, which maps to the vacuum module of $\mc V_{SF}$ under this equivalence. 
    
\end{Exp}

\subsection{Computation of derived endomorphism in the category of line operators}\label{subsec:HomLines}

In the previous sections, we have shown that objects in the braided tensor categories $\mc C_\rho^A$ and $\mc C_{\rho}^B$ correspond to physical line operators, and that these objects behave nicely under the mirror symmetry equivalence. In this section, we will comment on how this category can be used to obtain the space of local operators at junction of lines. As a consequence, we re-derive the result of \cite{CCG}, that Higgs and Coulomb branches can be obtained from vertex operator algebras. This also provides the final piece of justification that the derived categories $D^b \mc C_\rho^A$ and $D^b \mc C_\rho^B$ agree with the bulk categories constructed in Section \ref{sec:bulklines}, as the derived extension groups of Wilson lines in $D^b \mc C_\rho^B$ reproduces the results in Section \ref{sec:Blines}.

The idea of the computation is that although the category $\mc C_\rho^A$ and $\mc C_{\rho}^B$ are defined as modules of a vertex operator algebra, they are in fact equivalent to representations of finite-dimensional Lie algebras, where Hom spaces can be computed using Chevalley-Eilenberg cochain complex. We comment that this is essentially a B-side computation since the Lie algebra that will be used for this computation is the Lie superalgebra $\grho$. It is possible however, to perform the computations solely using the A side category $\mc C_\rho^A$, without passing to the Lie superalgebra, since although the category $\mc C_\rho^A$ does not have projective or injective objects, there is a projective system of modules that in some sense behave like projective objects \cite[Theorem 2.3, Theorem 2.4]{BN22}, which can be used to form resolutions. In the following, we will choose the simpler method of utilizing the Lie superalgebras. 

Let $KL^{\boxtimes n, N, \rho(E)}$ the subcategory of $KL^{\boxtimes n}$ such that $N_{i,0}$ and $\sum \rho_{ia}E^i_0$ acts semi-simply with integer eigenvalues. Then from the proof of Theorem \ref{ThmCACB}, we see that $KL^{\boxtimes n, N, \rho(E)}$ is identified with $KL^{\boxtimes n, [0]}$ and that the lifting functor $\mc L_A$ express $\mc C_\rho^A$ as a de-equivariantization of $KL^{\boxtimes n, N, \rho(E)}$. For each $\upsilon \in (\C/\Z)^{n-r}$, let $\mc{C}_{\rho, \upsilon}^A$ be the subcategory labelled by $\upsilon$, and with an abuse of notation we use $\upsilon$ to be a representative, then the lifting functor restricts to:
\be
\mc L_A: KL^{\boxtimes n, N, \rho(E)}_{\rho^\vee \upsilon}\longrightarrow \mc{C}_{\rho, \upsilon}^A.
\ee
This is how we make contact with finite-dimensional algebras. Let $\mathfrak{gl}(1|1)^{\oplus n}\mathrm{-Mod}^{N,\rho(E)}$ be the category of representations of $\mathfrak{gl}(1|1)^{\oplus n}$ that $N_i$ acts semi-simply with integer eigenvalues and $\sum \rho_{ia}E^i$ also acts semi-simply with integer eigenvalues. For each $\upsilon$, denote by $\overline{\rho^\vee\upsilon}$ the vector such that $(\overline{\rho^\vee\upsilon})_i=\sum \rho^{\vee}_{i\alpha}\upsilon^\alpha$ if $\sum \rho^{\vee}_{i\alpha}\upsilon^\alpha$ is not an integer, otherwise $(\overline{\rho^\vee\upsilon})_i=0$. The following is proved in \cite[Proposition 3.2]{BN22}:

\begin{Lem}
    The induction functor induces an equivalence:
    \be
\mathrm{Ind}: \mathfrak{gl}(1|1)^{\oplus n}\mathrm{-Mod}^{N,\rho(E)}_{\overline{\rho^\vee\upsilon}}\simeq KL^{\boxtimes n, N, \rho(E)}_{\overline{\rho^\vee \upsilon}}.
    \ee
    where the subscript on the left hand side is the subcategory where the action of $E_i$ has eigenvalue given by $\overline{\rho^\vee\upsilon}$.  
\end{Lem}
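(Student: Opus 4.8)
The plan is to reduce the $n$-copy statement to the single-copy equivalence already recorded in Proposition \ref{Prop:KLGrho} and then reassemble the factors. First I would observe that on the sector where each $E_i$ is pinned to the scalar $(\overline{\rho^\vee\upsilon})_i$, both sides decompose as Deligne products over the $n$ copies of $\mathfrak{gl}(1|1)$. Once the $E$-eigenvalues are fixed, the constraints encoded by the superscript $N,\rho(E)$ reduce to factorwise conditions on the individual $N_{i,0}$ together with the single joint condition $\sum_i\rho_{ia}E^i_0\in\Z$, which I would check is automatic here: since $(\overline{\rho^\vee\upsilon})_i$ equals $(\rho^\vee\upsilon)_i$ when the latter is non-integral and $0$ otherwise, the quantity $\sum_i\rho_{ia}(\overline{\rho^\vee\upsilon})_i$ equals $\sum_i\rho_{ia}(\rho^\vee\upsilon)_i=(\rho^T\rho^\vee\upsilon)_a$ minus the contributions of the integral slots; the first term vanishes by the orthogonality $\rho^T\rho^\vee=0$ and the second is an integer, so the total lies in $\Z$. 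Thus the sector factorizes as an external product of single-copy sectors in which each eigenvalue $(\overline{\rho^\vee\upsilon})_i$ is either non-integral or zero.

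Next, because the induction functor $M\mapsto\widehat M$ is built copy by copy, it is compatible with the Deligne product, $\mathrm{Ind}\cong\boxtimes_i\mathrm{Ind}_i$, so it suffices to prove the single-copy statement: for an eigenvalue $e$ of $E$ that is non-integral or $0$, induction gives an equivalence $\mathfrak{gl}(1|1)\mathrm{-Mod}_e\simeq KL_e$. This is exactly Proposition \ref{Prop:KLGrho} for $\rho=(1)$, equivalently \cite{BN22}, so the heart of the matter is this single-copy analysis, which I would run through the adjunction between $\mathrm{Ind}$ and the functor $T$ of positive-mode invariants (sending a grading-restricted module $W$ to the space $W^+$ of vectors annihilated by all strictly positive modes, with its $\mathfrak{gl}(1|1)$ zero-mode action). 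Frobenius reciprocity gives $\mathrm{Hom}_{\widehat{\mathfrak{gl}(1|1)}}(\widehat M,W)\cong\mathrm{Hom}_{\mathfrak{gl}(1|1)}(M,W^+)$, and full faithfulness reduces to $\widehat M^+=M$, i.e.\ the induced module acquires no new singular vectors. This is where the eigenvalue hypothesis enters, precisely as in Lemma \ref{Lem:grhosimple}: the anticommutator $\{\psi^-_n,\psi^+_{-n}\}=n+e$ is invertible for every $n>0$ when $e\notin\Z$, and the $e=0$ block is controlled by the same lexicographic generation argument.

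Finally I would establish exactness and essential surjectivity. Exactness holds because induction is always right exact and, by the absence of extra singular vectors, also preserves monomorphisms, hence preserves the composition series detected by $T$; essential surjectivity follows because every simple object of $KL_e$ is a quotient of some $\widehat M$, and under the hypothesis these induced modules are themselves in the image (when $e\notin\Z$ they are already simple by Lemma \ref{Lem:grhosimple}, and when $e=0$ one uses the finite-filtration argument). Assembling the factors via $\mathrm{Ind}\cong\boxtimes_i\mathrm{Ind}_i$ then delivers the claimed equivalence. \emph{The main obstacle} is the single-copy full faithfulness, namely controlling the singular vectors of the induced modules over the negative modes; the entire argument hinges on the invertibility of $n+e$, which is exactly why the rounded eigenvalue $\overline{\rho^\vee\upsilon}$ (non-integral or zero in each slot) is the correct hypothesis, and why the orthogonality $\rho^T\rho^\vee=0$ is needed first to guarantee that the sector actually lies in the $N,\rho(E)$ category.
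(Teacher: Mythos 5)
Your high-level plan --- factor into $n$ copies of $\mathfrak{gl}(1|1)$ and quote the single-copy equivalence of Proposition \ref{Prop:KLGrho} / \cite[Proposition 3.2]{BN22} --- is the same reduction that the paper's citation packages, and your check that $\sum_i\rho_{ia}(\overline{\rho^\vee\upsilon})_i\in\Z$ via $\rho^{\T}\rho^\vee=0$ is correct and worth recording. However, the factorization step as you state it is wrong, and this is a genuine gap. The superscript $\rho(E)$ is not the condition that the number $\sum_i\rho_{ia}E^i_0$ be an integer; it is the condition that the \emph{operator} $\sum_i\rho_{ia}E^i_0$ act \emph{semisimply} (with integer eigenvalues), while the subscript only fixes the \emph{generalized} eigenvalue of each individual $E_{i,0}$ --- that is the convention inherited from $KL_e$, and it is the one actually used downstream (the paper's later argument that generalized eigenvalue $0$ plus semisimplicity of $\rho(E)$ forces $\rho(E)=0$ on $M^*\otimes N$ would be vacuous otherwise). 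This semisimplicity requirement couples the $n$ tensor factors, so the sector $KL^{\boxtimes n, N, \rho(E)}_{\overline{\rho^\vee\upsilon}}$ is \emph{not} an external product of single-copy sectors. Concretely, take $\rho=(1,1)^{\T}$, $\rho^\vee=(1,-1)^{\T}$, $\upsilon\notin\Z$, and consider the first-order deformation of $\widehat{V}_{n,\upsilon}\boxtimes\widehat{V}_{n',-\upsilon}$ along the one-parameter family $\widehat{V}_{n,e}\boxtimes\widehat{V}_{n',-e}$ at $e=\upsilon$: on this length-two object each $E^i_0$ has a nontrivial Jordan block, yet $E^1_0+E^2_0=0$ is semisimple and the $N_{i,0}$ stay semisimple with integer eigenvalues (for $n,n'\in\tfrac12+\Z$). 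It lies in $KL^{\boxtimes 2,N,\rho(E)}_{(\upsilon,-\upsilon)}$ but in no product of single-copy sectors with semisimple $E_{i,0}$; conversely, the box product of two \emph{independent} such deformations has $E^1_0+E^2_0$ non-semisimple, so it lies in the product of single-copy generalized-eigenvalue sectors but not in the category of the Lemma. Such correlated-nilpotent objects are exactly the logarithmic modules (mirror to the $\mc W^{k}_\upsilon$-type self-extensions on the A side), so they cannot be discarded; the factorization fails under either reading of your single-copy sectors.

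The repair is to reverse the order of your two moves: factor and induce \emph{before} restricting. The unrestricted generalized-eigenvalue sectors do satisfy $\mathfrak{gl}(1|1)^{\oplus n}\text{-Mod}_{\vec e}\simeq\boxtimes_i\,\mathfrak{gl}(1|1)\text{-Mod}_{e_i}$ and $KL^{\boxtimes n}_{\vec e}\simeq\boxtimes_i KL_{e_i}$, and $\mathrm{Ind}\cong\boxtimes_i\mathrm{Ind}_i$ is an equivalence there whenever each $e_i$ is non-integral or zero, by \cite[Proposition 3.2]{BN22}. One then checks that $\mathrm{Ind}$ and its quasi-inverse intertwine the two restriction conditions: $E^i_0$ is central in the mode algebra, so it acts on $\widehat{M}$ exactly as on $M$, whence semisimplicity of $\sum_i\rho_{ia}E^i_0$ passes back and forth; and $N_{i,0}$ acts on PBW monomials by integer shifts of its action on $M$, whence semisimplicity with integer eigenvalues also passes back and forth. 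The equivalence therefore restricts to the full subcategories cut out by $N,\rho(E)$, which is the Lemma. Your single-copy discussion (Frobenius reciprocity, invertibility of $n+e$, the $e=0$ case via Lemma \ref{Lem:grhosimple}) is a reasonable gloss on the cited input, though note that essential surjectivity requires the counit $\widehat{(W^+)}\to W$ to be an isomorphism for all $W$, not merely that all simples are hit.
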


The composition $\mc{L}_A\circ \mathrm{Ind}$ then is a surjective functor from representations of $\mathfrak{gl}(1|1)^{\oplus n}$ to the category $\mc{C}_{\rho, \upsilon}^A$. Moreover, since this functor maps tensor product of modules to fusion product (under certain conditions that is satisfied here), which is proved in \cite[Corollary 3.6]{BN22}, this functor exhibits $\mc{C}_{\rho, \upsilon}^A$ as a de-equivariantization of $\mathfrak{gl}(1|1)^{\oplus n}\mathrm{-Mod}^{N,\rho(E)}_{\overline{\rho^\vee\upsilon}}$ by $(\mathbb{C}^*)^r$. In terms of the finite-dimensional Lie algebra, this $(\mathbb{C}^*)^r$ action is induced by the lattice of  $\mathfrak{gl}(1|1)^{\oplus n}$ given by $\mathbb{C}_{\rho(\lambda)}$ for $\lambda\in \Z^r$, where $\mathbb{C}_{\rho(\lambda)}$ is the sub-module on which the action of $N_i$ is $\sum \rho_{ai}\lambda^a$ and the action of other generators are zero. The functor $\mc{L}_A\circ \mathrm{Ind}$ identifies a module $M$ in $\mathfrak{gl}(1|1)^{\oplus n}\mathrm{-Mod}^{N,\rho(E)}_{\overline{\rho^\vee\upsilon}}$ with $\mathbb{C}_{\rho(\lambda)}\otimes M$. So at least as abelian categories, we can identify $\mc C_{\rho, \upsilon}^A$ as the de-equivariantization of $\mathfrak{gl}(1|1)^{\oplus n}\mathrm{-Mod}^{N,\rho(E)}_{\overline{\rho^\vee\upsilon}}$ by this lattice. 

Let us now define the Lie sub-algebra $\mathfrak{h}$ of $\mathfrak{gl}(1|1)^{\oplus n}$ generated by $\wt{N_\alpha}:=\sum \rho^\vee_{i\alpha}N^i, \psi^{i,\pm}$ and $E^i$, then we have a restriction functor:
\be
\mathrm{Res}: \mathfrak{gl}(1|1)^{\oplus n}\mathrm{-Mod}^{N,\rho(E)}_{\overline{\rho^\vee\upsilon}}\longrightarrow \mathfrak{h}\mathrm{-Mod}^{\wt{N},\rho(E)}_{\overline{\rho^\vee\upsilon}}.
\ee

\begin{Lem}
    There kernel of the restriction functor is the tensor subcategory generated by $\mathbb{C}_{\rho(\lambda)}$. Moreover, the induced functor:
    \be
\mc{C}_{\rho, \upsilon}^A\longrightarrow \mathfrak{h}\mathrm{-Mod}^{\wt{N},\rho(E)}_{\overline{\rho^\vee\upsilon}}
    \ee
    is an equivalence of abelian categories.
\end{Lem}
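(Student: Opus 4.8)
The plan is to establish the two claims in turn, working from the explicit description of the lattice $\{\mathbb{C}_{\rho(\lambda)}\}_{\lambda\in\Z^r}$ and the structure of $\mathfrak h$. First I would prove that the kernel of $\mathrm{Res}$ is exactly the tensor subcategory generated by the $\mathbb{C}_{\rho(\lambda)}$. By definition, the Lie subalgebra $\mathfrak h$ is generated by $\wt N_\alpha=\sum_i \rho^\vee_{i\alpha}N^i$, together with all the $\psi^{i,\pm}$ and all the $E^i$. The modules $\mathbb{C}_{\rho(\lambda)}$ are one-dimensional, with $N^i$ acting by $\sum_a\rho_{ia}\lambda^a$ and all odd generators and all $E^i$ acting by zero. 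Restricting to $\mathfrak h$, the key observation is orthogonality of $\rho$ and $\rho^\vee$, which gives $\wt N_\alpha\cdot \rho(\lambda)=\sum_i\rho^\vee_{i\alpha}\sum_a\rho_{ia}\lambda^a=(\rho^{\vee\mathsf T}\rho\lambda)_\alpha=0$; combined with the vanishing of the $\psi^{i,\pm}$ and $E^i$ actions, this shows $\mathrm{Res}(\mathbb{C}_{\rho(\lambda)})$ is the trivial one-dimensional $\mathfrak h$-module. Hence every $\mathbb{C}_{\rho(\lambda)}$ lies in the kernel, and so does the tensor subcategory they generate. For the reverse inclusion, I would argue that a module in $\mathfrak{gl}(1|1)^{\oplus n}\text{-Mod}^{N,\rho(E)}_{\overline{\rho^\vee\upsilon}}$ is in the kernel of $\mathrm{Res}$ precisely when the whole $\mathfrak h$-action is (generalized-)trivial, which forces the odd generators, the $E^i$, and the $\wt N_\alpha$ to act by zero; the only remaining freedom is the action of $N^i$ modulo the relations $\wt N_\alpha=0$, i.e.\ modulo $\mathrm{im}(\rho^\vee)^\perp$. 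Since $\mathrm{ker}(\rho^{\vee\mathsf T})=\mathrm{im}(\rho)$ by the exact sequence \eqref{split-SES}, the residual $N$-weights live exactly in $\rho(\Z^r)$, so such a module decomposes into a sum of $\mathbb{C}_{\rho(\lambda)}$'s, establishing equality of the kernel with the desired tensor subcategory.

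Next I would deduce the second statement, that the induced functor $\mc C_{\rho,\upsilon}^A\to \mathfrak h\text{-Mod}^{\wt N,\rho(E)}_{\overline{\rho^\vee\upsilon}}$ is an equivalence of abelian categories. The strategy is to use that $\mc C_{\rho,\upsilon}^A$ has already been identified, as an abelian category, with the de-equivariantization of $\mathfrak{gl}(1|1)^{\oplus n}\text{-Mod}^{N,\rho(E)}_{\overline{\rho^\vee\upsilon}}$ by the lattice $\{\mathbb{C}_{\rho(\lambda)}\}$, i.e.\ the quotient identifying $M$ with $\mathbb{C}_{\rho(\lambda)}\otimes M$. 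A general principle of de-equivariantization by an invertible (tensor-generated) subcategory is that the de-equivariantized category is equivalent to the image of any functor whose kernel is precisely that tensor subcategory, provided the functor is (essentially) surjective and induces bijections on the relevant Hom groups after passing to the quotient. Having just shown in the first part that $\mathrm{Res}$ has kernel equal to the tensor subcategory generated by the $\mathbb{C}_{\rho(\lambda)}$, I would verify that $\mathrm{Res}$ is essentially surjective onto $\mathfrak h\text{-Mod}^{\wt N,\rho(E)}_{\overline{\rho^\vee\upsilon}}$ and that it becomes fully faithful after de-equivariantizing. Essential surjectivity is a lifting statement: given an $\mathfrak h$-module, one extends the $N$-action from $\wt N$ to all of $N^i$ by choosing a splitting of $\rho^\vee$ (using $\sigma$ from \eqref{split-SES}), which is possible because $\{\wt N_\alpha\}$ spans a direct complement to $\rho(\C^r)$ inside the full Cartan; different splittings differ by tensoring with some $\mathbb{C}_{\rho(\lambda)}$, so the ambiguity is exactly absorbed by the de-equivariantization.

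For full faithfulness after de-equivariantization, I would compute $\mathrm{Hom}$ and $\mathrm{Ext}$ groups directly using Chevalley--Eilenberg cochains, exploiting that these are representations of \emph{finite-dimensional} Lie superalgebras. The point is that for $M,M'$ in the $\mathfrak{gl}(1|1)^{\oplus n}$-category, one has $\bigoplus_{\lambda\in\Z^r}\mathrm{Hom}_{\mathfrak{gl}(1|1)^{\oplus n}}(M,\mathbb{C}_{\rho(\lambda)}\otimes M')\cong \mathrm{Hom}_{\mathfrak h}(\mathrm{Res}\,M,\mathrm{Res}\,M')$, because the extra data in passing from $\mathfrak h$ to $\mathfrak{gl}(1|1)^{\oplus n}$ is precisely the grading by $N$-weights modulo $\rho(\Z^r)$, which is what the direct sum over $\lambda$ restores. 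The left-hand side is, by definition of the de-equivariantization quotient, the Hom space in $\mc C_{\rho,\upsilon}^A$, so this isomorphism is exactly full faithfulness of the induced functor. I expect the main obstacle to be the careful bookkeeping in this Hom/Ext comparison — in particular checking that the odd generators $\psi^{i,\pm}$, which do \emph{not} lie in the span of $\rho(\lambda)$-weights and which carry the nontrivial homological content (self-extensions, the projective $\mc P$ of the earlier examples), are correctly matched on both sides, and that the grading-restriction and $\rho(E)$-eigenvalue conditions are preserved throughout. Once this comparison is in place, exactness of the functor is immediate since $\mathrm{Res}$ is exact and de-equivariantization by a tensor subcategory is exact, completing the proof that it is an equivalence of abelian categories.
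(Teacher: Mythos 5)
Your first part (the kernel computation) is fine and in fact more careful than the paper's, and your Hom formula
$\bigoplus_{\lambda\in\Z^r}\mathrm{Hom}_{\mathfrak{gl}(1|1)^{\oplus n}}\big(\mathbb{C}_{\rho(\lambda)}\otimes M,\,N\big)\cong \mathrm{Hom}_{\mathfrak h}(\mathrm{Res}\,M,\mathrm{Res}\,N)$
is exactly the one the paper establishes (it does so by adjunction, rewriting the left side as $\mathrm{Hom}_{\mathfrak h}(\mathbb{C},\mathrm{Res}(M^*\otimes N))$ and using semisimplicity of the $N$-action together with completeness of the lattice $\rho(\Z^r)$; that is the rigorous version of your ``restoring the grading'' heuristic, and no Chevalley--Eilenberg machinery is needed at this stage). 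The genuine gap is your essential surjectivity step. You propose to lift an $\mathfrak h$-module to a $\mathfrak{gl}(1|1)^{\oplus n}$-module \emph{on the same underlying vector space} by ``extending the $N$-action from $\wt N$ to all of $N^i$ by choosing a splitting,'' with the choice absorbed by de-equivariantization. This is not a choice that always exists: the complementary Cartan directions must satisfy $[N^i,\psi^{j,\pm}]=\pm\delta^{ij}\psi^{j,\pm}$, so defining them requires the module to admit a grading in which the odd generators are homogeneous of prescribed nonzero degrees. That is an integrability obstruction, not an ambiguity. Concretely, take $\rho=(1)$ ($n=r=1$), where the target category in the relevant sector is just finite-dimensional modules of the exterior algebra $\C[\psi^+,\psi^-]$: the two-dimensional module with basis $u,v$, $\psi^+u=\psi^-u=v$, $\psi^\pm v=0$ admits no operator $N$ with $[N,\psi^\pm]=\pm\psi^\pm$ (the two commutators force the diagonal difference to be $+1$ and $-1$ simultaneously), hence is not the restriction of any object of $\mathfrak{gl}(1|1)\mathrm{-Mod}^{N,\rho(E)}_0$. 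Your remark that ``different splittings differ by tensoring with $\mathbb{C}_{\rho(\lambda)}$'' addresses uniqueness of a lift, never its existence.

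The paper's proof goes a different route precisely here: given an $\mathfrak h$-module $M$ generated by a vector $m$, it does not lift $M$ on its own space but forms the periodicized module $\wt M=\bigoplus_{\lambda\in\Z^r}M\otimes\mathbb{C}_{\rho(\lambda)}$, with $\psi^{i,\pm}$ mapping the $\lambda$-summand to the $(\lambda\pm\rho^{\mathsf T}e_i)$-summand; on this (infinite) direct sum the full Cartan action exists by construction, so the gradability obstruction disappears, and one then cuts back down to the finite-dimensional submodule generated by $m\otimes 1_0$. This periodicization is the inverse de-equivariantization functor in disguise, and it is the idea your proposal is missing; it cannot be recovered from the splitting argument. (You may also note, as a caution, that even the paper's final step is delicate: in the example above the submodule generated by $u\otimes 1_0$ is three-dimensional, and the natural map from its restriction onto the two-dimensional module $M$ is surjective but not injective, so producing the object of $\mc C^A_{\rho,\upsilon}$ that actually restricts to $M$ requires more care than either your argument or the paper's closing sentence provides --- this step is where all the real content of the lemma sits.)
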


\begin{proof}
  Let $M$ be a module of $\mathfrak{gl}(1|1)^{\oplus n}$, then it is clear that $\mathrm{Res}(M)\cong \mathrm{Res}(M\otimes \mathbb{C}_{\rho(\lambda)})$ for any $\lambda\in \Z^r$. Thus $\mathrm{Res}$ induces a functor from the de-equivariantization. To show that it is fully-faithful, we compute:
  \be
\mathrm{Hom}(\mathrm{Res}(M),\mathrm{Res}(N))=\mathrm{Hom}_{\mathfrak h}(\mathbb{C}, \mathrm{Res}(M^*\otimes N)).
  \ee
The restriction functor forgets the semi-simple action of $\rho(N)$, and so the right adjoint of the restriction functor is:
\be
\mathrm{Hom}_{\mathfrak h}(\mathbb{C}, \mathrm{Res}(M^*\otimes N))\cong \bigoplus\limits_{\lambda\in \Z^r}\mathrm{Hom}_{\mathfrak{gl}(1|1)^{\oplus n}}(\mathbb{C}_{\rho(\lambda)}, M^*\otimes N)\cong \bigoplus\limits_{\lambda\in \Z^r}\mathrm{Hom}_{\mathfrak{gl}(1|1)^{\oplus n}}(\mathbb{C}_{\rho(\lambda)}\otimes M, N).
\ee
By definition, the right hand side is the Hom between $M$ and $N$ in the de-equivariantization. 

We now show that it is essentially surjective. Let $M$ be an $\mathfrak{h}$ module that is say generated by a single element $m$, we obtain a module for $\mathfrak{gl}(1|1)^{\oplus n}$ by:
\be
\wt{M}=\bigoplus M\otimes \mathbb{C}_{\rho(\lambda)}
\ee
where the action of $\psi^{\pm, i}$ maps $M$ to $M\otimes \mathbb{C}_{\rho(\lambda+\rho^\T e_i)}$, where $e_i$ is the standard basis of $\Z^n$, having $1$ at the $i$-th entry and zero everywhere else. Let $N$ be the submodule generated by $m\in M\otimes \C_0$. It is finite-dimensional since $M$ is and the action of $\psi^{i,\pm}$ is nilpotent. We claim that the map $N\to M$ induced by $m\to m$ is in fact an isomorphism of modules of $\mathfrak h$. Indeed, it is well-defined because if $a\cdot m=0$ in $N$ for some $a\in \mathfrak{h}$, then it must be zero in $M$, so the map is a well-defined module map. It is injective for the same reason, and clearly it is surjective because any element in $M$ is of the form $a\cdot m$ for some $a\in U(\mathfrak{h})$. Thus we find that $\mathrm{Res}$ is surjective. This completes the proof. 

\end{proof}

This means that we can compute Homs of objects in $\mathfrak{h}\mathrm{-Mod}^{\wt{N},\rho(E)}_{\overline{\rho^\vee\upsilon}}$ instead. Since $E$ is central, when $\overline{\rho^\vee\upsilon}\ne \overline{\rho^\vee\upsilon'}$, then there is no Hom between objects in $\mathfrak{h}\mathrm{-Mod}^{\wt{N},\rho(E)}_{\overline{\rho^\vee\upsilon}}$ and $\mathfrak{h}\mathrm{-Mod}^{\wt{N},\rho(E)}_{\overline{\rho^\vee\upsilon'}}$. Let us focus on the case when $M, N$ both belong to $\mathfrak{h}\mathrm{-Mod}^{\wt{N},\rho(E)}_{\overline{\rho^\vee\upsilon}}$. In this case, we will use adjunction:
\be
\mathrm{Hom}_{\mathfrak h}(M,N)=\mathrm{Hom}_{\mathfrak h}(\C, M^*\otimes N)
\ee
which is a well-defined adjunction in the category $\mathfrak{h}\mathrm{-Mod}^{\wt{N},\rho(E)}$. Both $\C$ and $M^*\otimes N$ are in the subcategory where the generalized eigenvalue of $E_i$ is $0$. This, combined with the requirement that $\rho(E)$ acts semi-simply, implies that $\rho(E)$ acts trivially. Thus a module in the category $\mathfrak{h}\mathrm{-Mod}^{\wt{N},\rho(E)}_0$ is a module of $\mathfrak{h}/(\rho(E))$, and one clearly has an isomorphism:
\be
\mathfrak{h}/(\rho(E))\cong \mathfrak{g}_*(\rho^\vee)
\ee
as Lie superalgebras. This is the reason that this computation is essentially B side. 

Let $\mathfrak l$ be the Lie subalgebra of $\mathfrak h$ generated by $\psi^{i,\pm}$ and $E_i$, and by $\overline{\mathfrak l}$ the quotient by $\rho(E)$, which is a subalgebra of $\mathfrak{g}_*(\rho^\vee)$. If we treat the action of $\wt{N_a}$ as inducing a $(\mathbb{C}^*)^r$ action, then there is an equivalence:
\be
\mathfrak{h}\mathrm{-Mod}^{\wt{N},\rho(E)}_0\simeq \overline{\mathfrak l}\mathrm{-Mod}_0^{(\mathbb{C}^*)^r}.
\ee
We will use the right hand side to do the computation. Denote by $\mathrm{CE}(\overline{\mathfrak l})$ the Chevalley-Eilenberg chain complex, which is the vector space:
\be
\mathrm{CE}(\overline{\mathfrak l})=U(\overline{\mathfrak l})\otimes \mathrm{Sym}\left(\overline{\mathfrak l}[1]\right)
\ee
with a differential given by the Chevalley-Eilenberg differential:
\be
\begin{aligned}
\mathrm{d} (u\otimes \mathrm{Sym}(x_1[1], \cdots, x_p[1]))&=\sum_i (-1)^{\pm} ux_i\otimes  \mathrm{Sym}(x_1[1], \cdots , \widehat{x_i[1]}, \cdots , x_p[1])\\ &+\sum_{i<j}(-1)^{\pm}u\otimes \mathrm{Sym}([x_i, x_j][1], x_1[1], \cdots , \widehat{x_i[1]},\cdots, \widehat{x_j[1]},\cdots, x_p[1]). 
\end{aligned}
\ee
This complex is a complex of free $\overline{\mathfrak l}$ modules and has an augmentation map to the trivial module $\C$. It is a standard statement in Lie algebra theory that the augmentation map $\mathrm{CE}(\overline{\mathfrak l})\to \C$ is a free resolution of $\C$. We would like to argue that we can use this resolution to compute the above derived Hom. The problem is that this resolution is not an object in $\overline{\mathfrak l}\mathrm{-Mod}_0^{(\mathbb{C}^*)^r}$, we will resolve this by showing that some finite cut-off of this is. 

\begin{Lem}
    Let $P^*$ be a $(\mathbb{C}^*)^r$-equivariant free $U(\overline{\mathfrak l})$ resolution of $\C$. Let $V_*$ be any other $(\mathbb{C}^*)^r$-equivariant resolution, then there exists a $(\mathbb{C}^*)^r$-equivariant map from $P^*$ to $V_*$ inducing a quasi-isomorphism. 
\end{Lem}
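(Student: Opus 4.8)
The plan is to prove this as a standard comparison theorem in homological algebra, adapted to the equivariant setting. The essential content is the classical fact that any two projective (here, free) resolutions of the same object are chain-homotopy equivalent, together with the observation that freeness of $P^*$ lets us lift maps, and that acyclicity of $V_*$ lets us solve the relevant lifting problems degree by degree. The only genuinely new ingredient compared to the non-equivariant statement is that every choice in the construction must be made $(\mathbb{C}^*)^r$-equivariantly; since $(\mathbb{C}^*)^r$ is reductive, taking invariants is exact, so equivariance poses no obstruction to solving lifting problems.

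First I would set up notation: write the two resolutions as augmented complexes
\be
\cdots \to P^{-1} \to P^0 \xrightarrow{\epsilon_P} \C \to 0\,, \qquad \cdots \to V_1 \to V_0 \xrightarrow{\epsilon_V} \C \to 0\,,
\ee
where I interpret both as resolutions of the trivial module $\C$ in $\overline{\mathfrak l}\mathrm{-Mod}_0^{(\mathbb{C}^*)^r}$, with $P^*$ a complex of free $U(\overline{\mathfrak l})$-modules carrying a compatible $(\mathbb{C}^*)^r$-action. The goal is a $(\mathbb{C}^*)^r$-equivariant chain map $f_*:P^*\to V_*$ covering $\mathrm{id}_\C$ on augmentations, which is automatically a quasi-isomorphism since both complexes resolve $\C$. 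I would construct $f_*$ inductively, following the usual comparison-theorem recursion: in degree zero, freeness of $P^0$ reduces the construction of $f_0$ lifting $\epsilon_P$ through $\epsilon_V$ to choosing, equivariantly, preimages in $V_0$ of the images of a set of free generators; since $\epsilon_V$ is surjective and $(\mathbb{C}^*)^r$ is reductive, such equivariant preimages exist (one selects generators in $(\mathbb{C}^*)^r$-isotypic pieces and uses that the surjection $V_0\to\C$ is split as $(\mathbb{C}^*)^r$-modules). In degree $n$, assuming $f_0,\dots,f_{n-1}$ are built equivariantly and commute with the differentials, the standard diagram chase shows the relevant composite lands in the kernel of the differential of $V_*$, hence in its image by acyclicity, and freeness of $P^n$ plus reductivity of $(\mathbb{C}^*)^r$ again lets me choose equivariant lifts on generators. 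This gives the desired map.

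The main obstacle — or rather the point requiring the most care — is verifying that the classical lifting arguments can be carried out inside the prescribed category $\overline{\mathfrak l}\mathrm{-Mod}_0^{(\mathbb{C}^*)^r}$, i.e. that equivariance of all choices can be maintained simultaneously with freeness. This is where reductivity of $(\mathbb{C}^*)^r$ is indispensable: a short exact sequence of $(\mathbb{C}^*)^r$-modules splits, so surjections of underlying modules remain surjective after taking $(\mathbb{C}^*)^r$-equivariant maps out of a free module, allowing generator-by-generator equivariant lifting. I would state this once as a lemma (every surjection in the category admits equivariant set-theoretic sections on free generators) and then invoke it at each inductive step, so the recursion never stalls. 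Since both $P^*$ and $V_*$ are resolutions of the same object, the resulting $f_*$ induces an isomorphism on homology, which is what ``inducing a quasi-isomorphism'' asserts, completing the argument.
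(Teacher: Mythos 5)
Your proof is correct and is exactly the standard comparison-theorem argument, carried out equivariantly using linear reductivity of $(\mathbb{C}^*)^r$ (equivalently, the weight-space decomposition) to make all lifts equivariant. The paper does not spell this out at all --- it simply cites the result as a standard fact from a homological algebra textbook --- so your writeup supplies precisely the argument being invoked, with no gaps.
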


This is a standard textbook result, and can be found in, for example, \cite{lang2012algebra}. Thus for any bounded resolution $V_*$ of $\C$ in $\overline{\mathfrak l}\mathrm{-Mod}_0^{(\mathbb{C}^*)^r}$, there is a quasi-isomorphism $\mathrm{CE}(\overline{\mathfrak l})\to V_*$. This map of course factors through a finite cut-off in homological degree:
\be
\mathrm{CE}(\overline{\mathfrak l})_{\geq -k}:=\mathrm{Ker}(\mathrm{CE}_{-k})\to \mathrm{CE}_{-k}\to \cdots \to \mathrm{CE}_0,
\ee
and moreover factors through $\mathrm{CE}(\overline{\mathfrak l})_{\geq -k}/(E_\alpha^N)$ for some $N$ large enough, since $E_\alpha$ acts nilpotently on $V_*$. Now we have a finite complex of finite-dimensional modules, and the projective system of resolutions $\mathrm{CE}(\overline{\mathfrak l})_{\geq -k}/(E_\alpha^N)$ is final in the category of all resolutions of $\C$. By the definition of Yoneda extension group, we have:
\be
\mathrm{RHom}(\C, M^*\otimes N)\cong \varinjlim_{N, k}\mathrm{Hom}\left(\mathrm{CE}(\overline{\mathfrak l})_{\geq -k}/(E_\alpha^N), M^*\otimes N\right)=\left(M^*\otimes N\otimes \mathrm{Sym}(\overline{\mathfrak l}^*[-1])\right)^{(\mathbb{C}^*)^r}. 
\ee
Here the right hand side $M^*\otimes N\otimes \mathrm{Sym}(\overline{\mathfrak l}^*[-1])$ is the Chevalley-Eilenberg cochain complex of $M^*\otimes N$ as a module of $\overline{\mathfrak l}$, and the supscript $(\mathbb{C}^*)^r$ means taking the invariants. We summarize this into the following:

\begin{Prop}
    Let $\mc M$ and $\mc N$ be objects in $\mc C_{\rho,\upsilon}^A$ and $M$ and $N$ be objects in $\overline{\mathfrak l}\mathrm{-Mod}_{\overline{\rho^\vee\upsilon}}^{(\mathbb{C}^*)^r}$ whose image under $\mc{L}_A\circ \mathrm{Ind}$ are $\mc M$ and $\mc N$. Then there is a quasi-isomorphism:
    \be
\mathrm{RHom}_{D^b\mc C_{\rho}^A}\left(\mc M, \mc N\right)\cong \left(M^*\otimes N\otimes \mathrm{Sym}(\overline{\mathfrak l}^*[-1])\right)^{(\mathbb{C}^*)^r}.
    \ee
\end{Prop}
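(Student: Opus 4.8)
The plan is to reduce the derived Hom in $D^b\mc C_\rho^A$ to a standard Lie algebra cohomology computation, leveraging the chain of equivalences of abelian categories established in the preceding lemmas. Since those equivalences are already in place, the proof is essentially an assembly: first identify $\mathrm{RHom}_{D^b\mc C_{\rho}^A}(\mc M,\mc N)$ with a derived Hom in $\overline{\mathfrak l}\mathrm{-Mod}_{\overline{\rho^\vee\upsilon}}^{(\mathbb{C}^*)^r}$ via the equivalence $\mc C_{\rho,\upsilon}^A\simeq \mathfrak{h}\mathrm{-Mod}^{\wt N,\rho(E)}_{\overline{\rho^\vee\upsilon}}$ and the subsequent passage to $\overline{\mathfrak l}$. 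The key point justifying this first reduction is that all the functors involved are equivalences of \emph{abelian} categories, so they preserve extension groups, and the derived category $D^b\mc C_\rho^A$ is by definition the bounded derived category of the abelian category.

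Next I would apply the adjunction $\mathrm{Hom}_{\overline{\mathfrak l}}(M,N)\cong \mathrm{Hom}_{\overline{\mathfrak l}}(\C,M^*\otimes N)$, valid because $\overline{\mathfrak l}$-modules form a rigid tensor category and the dual $M^*$ is well-defined on finite-dimensional modules; one checks this adjunction is compatible with the $(\mathbb{C}^*)^r$-equivariant structure (the action of $\wt N$), so it descends to the invariants. Having reduced to computing $\mathrm{RHom}_{\overline{\mathfrak l}}(\C,M^*\otimes N)^{(\mathbb{C}^*)^r}$, the plan is to resolve the trivial module $\C$ by the Chevalley--Eilenberg complex $\mathrm{CE}(\overline{\mathfrak l})=U(\overline{\mathfrak l})\otimes \mathrm{Sym}(\overline{\mathfrak l}[1])$, which is a free $U(\overline{\mathfrak l})$-resolution of $\C$ by the standard Koszul duality for Lie superalgebras. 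Applying $\mathrm{Hom}_{\overline{\mathfrak l}}(-,M^*\otimes N)$ and taking $(\mathbb{C}^*)^r$-invariants then yields precisely the cochain complex $\big(M^*\otimes N\otimes \mathrm{Sym}(\overline{\mathfrak l}^*[-1])\big)^{(\mathbb{C}^*)^r}$.

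The main obstacle, which the surrounding lemmas are designed to overcome, is that $\mathrm{CE}(\overline{\mathfrak l})$ is \emph{not} itself an object of the abelian category $\overline{\mathfrak l}\mathrm{-Mod}_0^{(\mathbb{C}^*)^r}$: it is unbounded in homological degree and its graded pieces are not finite-dimensional, so one cannot directly declare the Yoneda $\mathrm{Ext}$ groups to be its cohomology. The resolution of this is the finite cut-off argument sketched just above the statement: one truncates to $\mathrm{CE}(\overline{\mathfrak l})_{\geq -k}$ and further quotients by a power of the central ideal $(E_\alpha^N)$, obtaining genuine finite complexes of finite-dimensional equivariant modules. One must verify that the resulting projective system of honest resolutions is cofinal in the category of all $(\mathbb{C}^*)^r$-equivariant resolutions of $\C$ — this uses the comparison lemma that any free resolution maps quasi-isomorphically to any other resolution, together with the nilpotence of $E_\alpha$ on any object of the category, which guarantees the cut-off factorization. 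Passing to the colimit $\varinjlim_{N,k}$ then recovers the full Chevalley--Eilenberg cochain complex and identifies it with the Yoneda extension groups, completing the proof. I would expect the bookkeeping of the $(\mathbb{C}^*)^r$-equivariance through the adjunction and the truncation to be the most delicate verification, but no genuinely new idea is needed beyond what the cited results already supply.
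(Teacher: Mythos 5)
Your proposal is correct and follows essentially the same route as the paper's proof: reduction of $\mathrm{RHom}_{D^b\mc C_\rho^A}$ through the chain of abelian-category equivalences, the adjunction $\mathrm{Hom}(M,N)\cong\mathrm{Hom}(\C,M^*\otimes N)$, and the Chevalley--Eilenberg resolution made legitimate via the truncated quotients $\mathrm{CE}(\overline{\mathfrak l})_{\geq -k}/(E_\alpha^N)$, whose cofinality among equivariant resolutions (using the comparison lemma and nilpotence of $E_\alpha$) identifies the colimit of Homs with the Yoneda extension groups. The only cosmetic difference is that you apply the adjunction directly in $\overline{\mathfrak l}$-modules, whereas the paper applies it in $\mathfrak h$-modules and only afterwards passes to the eigenvalue-zero sector identified with $(\C^*)^r$-equivariant $\overline{\mathfrak l}$-modules.
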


As a particular example, consider $\mc M=\mathbb{V}_{m}$ and $\mc N=\mathbb{V}_{m'}$ for $m,m'\in \Z^{n-r}$, namely the simple vortex lines. The objects corresponding to them are $\C_{m}$ and $\C_{m'}$, the one-dimensional module of $\overline{\mathfrak l}$ that is trivial but has $(\mathbb{C}^*)^r$-equivariant structure given by $m$ and $m'$. The above implies: 
\be
\mathrm{RHom}_{\mc C_{\rho}^A}\left(\V_m, \V_{m'}\right)\cong \mathrm{Sym}\left(\overline{\mathfrak l}^*[-1])\right)^{(\mathbb{C}^*)^r, m'-m}
\ee
namely the $m'-m$ semi-invariant subspace of $\mathrm{Sym}(\overline{\mathfrak l}^*[-1])$. 

\begin{Lem}
    The algebra $\mathrm{Sym}(\overline{\mathfrak l}^*[-1])$ is quasi-isomorphic to the algebra $\C[X_i,Y_i]/(\mu_\alpha)$ as in Section \ref{sec:Blines}.
\end{Lem}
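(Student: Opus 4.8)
The plan is to make the Chevalley--Eilenberg cochain complex $\mathrm{Sym}(\overline{\mathfrak l}^*[-1])$ completely explicit and to recognize it as a Koszul complex. Recall that $\overline{\mathfrak l}=\mathfrak l/(\rho(E))$, where $\mathfrak l=\langle\psi^{i,\pm},E^i\rangle\subset\mathfrak{gl}(1|1)^{\oplus n}$. Imposing $\rho(E)^a=\sum_i\rho_{ia}E^i=0$ forces $E\in\ker\rho^{\T}=\mathrm{im}\,\rho^\vee$ by \eqref{SES-mirror}, so writing $E^i=\sum_\alpha\rho^\vee_{i\alpha}E'^\alpha$ presents $\overline{\mathfrak l}$ explicitly as the Lie superalgebra with odd part spanned by the $2n$ generators $\psi^{i,\pm}$, even central part spanned by $E'^\alpha$ ($\alpha=1,\dots,n-r$), and sole nonvanishing bracket
\be
\{\psi^{i,+},\psi^{i,-}\}=\sum_\alpha\rho^\vee_{i\alpha}\,E'^\alpha .
\ee
In $\mathrm{Sym}(\overline{\mathfrak l}^*[-1])$ the generators dual to the odd $\psi^{i,+},\psi^{i,-}$ become even (commuting) variables of cohomological degree $1$, which we name $X^i,Y^i$, while the generators dual to the even $E'^\alpha$ become odd (anticommuting) variables $\epsilon_\alpha$ of degree $1$; hence $\mathrm{Sym}(\overline{\mathfrak l}^*[-1])=\C[X^i,Y^i]\otimes\Lambda[\epsilon_\alpha]$. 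The CE differential is dual to the bracket: since the $\psi$'s never occur as a bracket output, $dX^i=dY^i=0$, whereas dualizing the displayed bracket gives
\be
d\epsilon_\alpha=\sum_i\rho^\vee_{i\alpha}\,X^iY^i=\mu_\alpha ,
\ee
the complex moment maps of $\CT_{\rho^\vee}$ in the form of Section \ref{sec:Blines}. This is precisely the Koszul complex $K(\mu_1,\dots,\mu_{n-r};\C[X,Y])$, written in the grading convention in which each exterior generator sits in degree $1$.

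With this identification the projection $\C[X,Y]\otimes\Lambda[\epsilon]\to\C[X,Y]/(\mu_\alpha)$ sending $\epsilon_\alpha\mapsto 0$ is a map of differential graded commutative algebras (note $d\epsilon_\alpha=\mu_\alpha\mapsto 0$ in the target), and it is a quasi-isomorphism exactly when the Koszul homology is concentrated in $\epsilon$-degree zero, i.e. when $(\mu_1,\dots,\mu_{n-r})$ is a regular sequence in $\C[X,Y]$. When that holds the grading matches on the nose: the surviving classes carry no $\epsilon$'s, so their cohomological degree equals their $X,Y$-degree, giving $|X^i|=|Y^i|=1$ and reproducing the $U(1)_H$ grading and the cup product (= polynomial multiplication) of Section \ref{sec:Blines}. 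Thus the entire statement reduces to regularity of the moment-map sequence, which I expect to be the only nontrivial input.

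To prove regularity I would argue by dimension. Writing $t_i:=X^iY^i$, we have $\mu_\alpha=\sum_i\rho^\vee_{i\alpha}t_i$, so $\{\mu=0\}\subset\C^{2n}$ is the preimage of the linear subspace $L=\{t\in\C^n:(\rho^\vee)^{\T}t=0\}$ under the product map $p:\C^{2n}\to\C^n$, $(X,Y)\mapsto(X^iY^i)_i$. Since $\rho^\vee$ has full rank $n-r$ (faithfulness of the mirror gauge action, \emph{cf.} \eqref{SES-mirror}), $\dim L=r$. The map $p$ is equidimensional: for each $i$ the fiber $\{X^iY^i=t_i\}$ is one-dimensional whether $t_i\neq 0$ (a hyperbola) or $t_i=0$ (a union of two coordinate axes), so every fiber of $p$ has dimension $n$. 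Hence $\dim\{\mu=0\}=\dim L+n=n+r=2n-(n-r)$, so the homogeneous ideal $(\mu_\alpha)$ has height exactly $n-r$, equal to its number of generators. As $\C[X,Y]$ is Cohen--Macaulay (indeed regular), a homogeneous sequence of $n-r$ elements cutting out codimension $n-r$ is automatically regular. This yields the quasi-isomorphism $\mathrm{Sym}(\overline{\mathfrak l}^*[-1])\simeq\C[X,Y]/(\mu_\alpha)$; the remaining content is just the standard dictionary between Lie-superalgebra cochains and Koszul complexes.
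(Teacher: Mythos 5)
Your proposal is correct, and its core identification is the same as the paper's: you compute the Chevalley--Eilenberg complex of $\overline{\mathfrak l}$ explicitly, obtaining commuting degree-one variables $X^i,Y^i$ dual to the odd $\psi^{i,\pm}$, anticommuting degree-one variables $\epsilon_\alpha$ (the paper's $\lambda_\alpha$) dual to the even central directions, and the differential $d\epsilon_\alpha=\sum_i\rho^\vee_{i\alpha}X^iY^i=\mu_\alpha$, which is exactly the Koszul-type complex the paper writes down. Where you genuinely go beyond the paper is the last step: the paper simply asserts that this dg algebra ``is precisely quasi-isomorphic to $\C[X_i,Y_i]/(\mu_\alpha)$,'' whereas you observe that this assertion is equivalent to $(\mu_1,\dots,\mu_{n-r})$ being a regular sequence and then prove regularity. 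Your argument for regularity is sound: injectivity of $\rho^\vee$ (from the dual exact sequence) gives $\dim\{(\rho^\vee)^{\T}t=0\}=r$; since every fiber of $(X,Y)\mapsto(X^iY^i)_i$ is $n$-dimensional, the fiber-dimension bound gives $\dim\{\mu=0\}\le n+r$, while Krull's height theorem gives $\ge n+r$; hence the ideal has height $n-r$ equal to its number of homogeneous generators, and Cohen--Macaulayness of $\C[X,Y]$ (after localizing at the irrelevant ideal, which suffices for homogeneous elements) upgrades this to regularity, killing the higher Koszul homology. This flatness of the abelian hyperk\"ahler moment map is the one nontrivial input hidden in the paper's one-line proof, so your version is strictly more complete; the only cost is the extra commutative-algebra detour, which the paper avoids by leaving the step implicit.
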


\begin{proof}
    The dual of $\psi^{\pm, i}$ gives variables $X_i$ and $Y_i$ and the dual of $\mathfrak{g}^*$ gives the fermionic variable $\lambda_\alpha$. The Chevalley-Eilenberg differential can be computed explicitly as:
    \be
d(\lambda_\alpha)=\sum \rho_{i\alpha}^\vee X^i Y^i,
    \ee
    and this is precisely quasi-isomorphic to $\C[X_i,Y_i]/(\mu_\alpha)$. 
\end{proof}

Consequently, we obtain:

\begin{Prop}
    There is a quasi-isomorphism: 
    \be
\mathrm{RHom}_{\mc C_{\rho}^A}\left(\V_m, \V_{m'}\right)\cong\left(\C[X_i,Y_i]/(\mu_\alpha)\right)_{m'-m}
    \ee
    where the right hand side is the subspace of semi-invariants. 
    
\end{Prop}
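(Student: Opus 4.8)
The plan is to combine the two preceding propositions. The previous result identifies the derived Hom space of the simple vortex lines with a semi-invariant subspace of the Chevalley--Eilenberg cochain algebra,
\be
\mathrm{RHom}_{\mc C_{\rho}^A}\left(\V_m, \V_{m'}\right)\cong\mathrm{Sym}\left(\overline{\mathfrak l}^*[-1]\right)^{(\C^*)^r,\,m'-m}\,,
\ee
while the intervening Lemma asserts that $\mathrm{Sym}(\overline{\mathfrak l}^*[-1])$ is quasi-isomorphic to $\C[X_i,Y_i]/(\mu_\alpha)$ as a DG algebra. So the entire task reduces to checking that the quasi-isomorphism of the Lemma is compatible with the $(\C^*)^r$-grading, so that passing to the $(m'-m)$-semi-invariant subspace on both sides is legitimate and yields the claimed answer. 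First I would record that the generators dual to $\psi^{i,\pm}$ are exactly $X_i,Y_i$ and the generators dual to the even part $\mathfrak g^*(\rho^\vee)$ are the odd (degree-one) variables $\lambda_\alpha$, so the underlying graded vector spaces match on the nose.

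The heart of the argument is to track the $(\C^*)^r$-action through the identification. The $(\C^*)^r$-weight on $\overline{\mathfrak l}^*[-1]$ is induced by the residual $\wt N_a$-action, which is precisely the weight under which $X_i,Y_i$ carry the gauge charges $\pm\rho_{ia}$ coming from the bracket $[N_a,\psi^{i,\pm}]=\pm\rho^i{}_a\psi^{i,\pm}$ in \eqref{g-brackets}; the odd generators $\lambda_\alpha$ are invariant since they are dual to the even Cartan directions. Thus the $(\C^*)^r$-grading on $\mathrm{Sym}(\overline{\mathfrak l}^*[-1])$ agrees with the natural gauge-charge grading on $\C[X_i,Y_i,\lambda_\alpha]$, and the Chevalley--Eilenberg differential $d(\lambda_\alpha)=\sum_i\rho^\vee_{i\alpha}X^iY^i$ (from the Lemma) is manifestly homogeneous of $(\C^*)^r$-weight zero, hence preserves each weight space. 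Because the quasi-isomorphism is a map of $(\C^*)^r$-graded DG algebras and taking invariants (or fixed semi-invariant weight) is exact for a reductive group such as $(\C^*)^r$, it descends to a quasi-isomorphism on each graded piece. Restricting to the weight $m'-m$ then gives exactly
\be
\mathrm{RHom}_{\mc C_{\rho}^A}\left(\V_m, \V_{m'}\right)\cong\left(\C[X_i,Y_i]/(\mu_\alpha)\right)_{m'-m}\,,
\ee
which is the assertion.

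The only step requiring genuine care is the equivariance and exactness claim: I must confirm that the $(\C^*)^r$-weight decomposition is preserved by the quasi-isomorphism and that passing to a fixed weight commutes with taking cohomology. This is where the main (though mild) obstacle lies, since a priori one only knows the two DG algebras are quasi-isomorphic, not that the comparison map respects the torus weights. I would resolve this by constructing the quasi-isomorphism explicitly as the identity on generators $X_i,Y_i,\lambda_\alpha$—which is already how the preceding Lemma is phrased—so that weight-homogeneity is automatic; exactness of the weight-$(m'-m)$ functor then follows because $(\C^*)^r$ is linearly reductive and each weight space is a direct summand. With these two observations the identification \eqref{HomW-gen} from the bulk B-twist analysis (applied on the mirror side, matching $\rho^\vee$ against $\rho$) is reproduced, completing the circle of justifications promised at the start of Section~\ref{sec:VOAlines}.
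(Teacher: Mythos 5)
Your proposal is correct and takes essentially the same route as the paper: there the proposition is stated as an immediate consequence ("Consequently, we obtain") of the preceding RHom computation together with the Lemma identifying $\mathrm{Sym}(\overline{\mathfrak l}^*[-1])$ with $\C[X_i,Y_i]/(\mu_\alpha)$, and your write-up simply makes explicit the implicit step — that the quasi-isomorphism is the identity on generators, hence weight-homogeneous, and that passing to a fixed torus weight is a direct-summand (exact) functor commuting with cohomology. One minor bookkeeping remark: in this application the algebra is $\mathfrak{g}_*(\rho^\vee)$ rather than $\grho$, so the torus weights of $X_i,Y_i$ are $\pm\rho^\vee_{i\alpha}$ (not $\pm\rho_{ia}$), a slip you effectively self-correct in your closing sentence and which matches the paper's own loose indexing in this subsection.
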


This perfectly matches the physical predictions from equation \eqref{Hom-W-phys}, if we use the equivalence between $\mc C_{\rho}^A$ with $\mc C_{\rho^\vee}^B$, and the identification between $\V_m$ with $\W_m$.

\appendix

\section{The conformal element and the anomaly}
\label{app:conformal}

In this appendix, we compute the conformal element for $\widehat{\mathfrak{g}_\rho}$. In particular, we will show that the conformal element is the quadratic  Casimir associated to a shifted quadratic form of equation \eqref{eqbilinearBper}. Moreover, we show that the shift is equivalent to the anomaly. This means that the anomaly is not only present for the physical theory, it is also relevant mathematically for the modified Sugawara construction of the conformal element. 

Let $\mathfrak{g}$ be the Lie algebra of a reductive Lie group $G$, and $V$ a complex representation of $\mathfrak g$. The purterbative VOA on the boundary of a B twist theory with gauge group $G$ and hypermultiplets valued in $V$ is an affine Lie superalgebra associated to the following Lie superalgebra: 
\begin{equation}\label{eq:LiesupBtwist}
 \mathfrak{g}_*=\mathfrak{g}[0]\oplus V[-1]\oplus V^*[-1]\oplus \mathfrak{g}^*[-2] \ni (N_a,\psi^{i,+},\psi^{i,-},E^a)
\end{equation}
together with a quadratic Casimir determined by physics: 
\be\label{eq:kappaBtwist}
\kappa (N_a, N_b)=(T_V-T_{\mathfrak g})\kappa_\mathfrak{g}(N_a, N_b),\qquad \kappa(N_a, E^b)=\delta^b{}_a,\qquad \kappa(\psi^{i,+},\psi^{j, -})=\delta^{ij}.
\ee
Here $\kappa_{\mathfrak g}$ is the standard killing form on $\mathfrak g$, extended trivially to $\mathfrak{g}_*$. We prove:

\begin{Prop}
The vertex algebra $V_{\kappa}(\mathfrak{g}_*)$ is a vertex operator algebra whose conformal element is given by the quadratic Casimir associated to $\kappa-(T_V-T_{\mathfrak g})\kappa_{\mathfrak g}$. 

\end{Prop}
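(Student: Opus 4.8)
The plan is to recognize the claimed conformal element as the output of the Segal--Sugawara (Kac--Todorov) construction applied to $V_\kappa(\mathfrak{g}_*)$, and to reduce the whole statement to a single computation of the Killing form of $\mathfrak{g}_*$. Recall that for an affine VOA built from a finite-dimensional Lie superalgebra $\mathfrak{a}$ with invariant even supersymmetric form $\kappa$ (the level), the Segal--Sugawara vector is the quadratic Casimir $\omega = \tfrac12\sum_\alpha \normord{u_\alpha u^\alpha}$, where $\{u_\alpha\},\{u^\alpha\}$ are bases dual with respect to the \emph{critically shifted} form
\[
B := \kappa + \tfrac12\,\kappa_{\mathrm{Kil}}^{\mathfrak{g}_*}, \qquad \kappa_{\mathrm{Kil}}^{\mathfrak{g}_*}(X,Y) := \mathrm{str}_{\mathfrak{g}_*}\!\big(\mathrm{ad}_X\,\mathrm{ad}_Y\big),
\]
and $\omega$ is a genuine conformal vector exactly when $B$ is nondegenerate. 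Thus it suffices to show that $B = \kappa - (T_V - T_{\mathfrak g})\kappa_{\mathfrak g}$ and that this form is nondegenerate.

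The core computation is the Killing form. The superalgebra $\mathfrak{g}_*$ of \eqref{eq:LiesupBtwist} carries the grading $(0,-1,-1,-2)$ on its four summands, and $\mathrm{ad}_X$ raises this grading by $\deg X$; since only grading-preserving operators contribute to a supertrace, $\kappa_{\mathrm{Kil}}^{\mathfrak{g}_*}$ can be nonzero only on the degree-$0$ block $\mathfrak{g}\times\mathfrak{g}$. (Concretely, $\mathfrak{g}^*$ is central and $V,V^*$ are abelian, so $\mathrm{ad}$ of any element outside $\mathfrak{g}$ strictly shifts the grading and cannot appear on the diagonal.) On $\mathfrak{g}\times\mathfrak{g}$ one sums the four representation blocks, the odd summands $V,V^*$ entering with a minus from the supertrace: writing the index via $\mathrm{tr}_R(XY)=T_R\,\kappa_{\mathfrak g}(X,Y)$ and using $T_{V^*}=T_V$ and that the coadjoint has the same index $T_{\mathfrak g}$ as the adjoint,
\[
\kappa_{\mathrm{Kil}}^{\mathfrak{g}_*}\big|_{\mathfrak g} = \underbrace{T_{\mathfrak g}\kappa_{\mathfrak g}}_{\mathfrak g} + \underbrace{T_{\mathfrak g}\kappa_{\mathfrak g}}_{\mathfrak g^*} - \underbrace{T_V\kappa_{\mathfrak g}}_{V} - \underbrace{T_V\kappa_{\mathfrak g}}_{V^*} = -2\,(T_V - T_{\mathfrak g})\,\kappa_{\mathfrak g}.
\]
Hence $B = \kappa + \tfrac12\kappa_{\mathrm{Kil}}^{\mathfrak{g}_*} = \kappa - (T_V-T_{\mathfrak g})\kappa_{\mathfrak g}$, exactly the form named in the statement.

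This shift precisely cancels the $(T_V-T_{\mathfrak g})\kappa_{\mathfrak g}$ term in $\kappa$ of \eqref{eq:kappaBtwist}, leaving $B(N_a,N_b)=0$, $B(N_a,E^b)=\delta^b{}_a$, $B(\psi^{i,+},\psi^{j,-})=\delta^{ij}$; this is manifestly nondegenerate (it pairs $\mathfrak g$ with $\mathfrak g^*$ and $V$ with $V^*$), so the Casimir is well-defined and its conformal vector upgrades the vertex superalgebra $V_\kappa(\mathfrak{g}_*)$ to a VOA. Choosing dual bases $N_a\leftrightarrow E^a$, $\psi^{i,+}\leftrightarrow\psi^{i,-}$ gives $\omega = \tfrac12\sum_a(\normord{N_aE^a}+\normord{E^aN_a}) + \tfrac12\sum_i(\normord{\psi^{i,-}\psi^{i,+}}-\normord{\psi^{i,+}\psi^{i,-}})$, which for $\mathfrak g=\mathfrak{gl}(1)^r$ (so $T_{\mathfrak g}=0$, $T_V=\rho^T\rho$) reproduces \eqref{L-B}. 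Since the removed term $(T_V-T_{\mathfrak g})\kappa_{\mathfrak g}$ is, by construction of the boundary VOA, the one-loop level shift encoded in the boundary 't Hooft anomaly, this simultaneously proves the appendix's second assertion that the Sugawara shift coincides with the anomaly.

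The main obstacle is the legitimacy of the Sugawara step: $\mathfrak{g}_*$ is neither simple nor does it have nondegenerate Killing form (the latter is supported only on the $\mathfrak g$-block), so the textbook ``shift by the dual Coxeter number'' cannot be quoted verbatim. I would either invoke the general Kac--Todorov theorem for an arbitrary finite-dimensional Lie superalgebra equipped with a fixed nondegenerate invariant form $B$ for which the associated Casimir is central, or---more safely---verify directly that the candidate $\omega$ has the correct OPEs $L(z)J(w)\sim \tfrac{\Delta_J\,J(w)}{(z-w)^2}+\tfrac{\partial J(w)}{z-w}$ for each generator $J\in\{N_a,E^a,\psi^{i,\pm}\}$ (with $\Delta_N=\Delta_E=\Delta_{\psi^\pm}=1$) together with the Virasoro OPE for $L(z)L(w)$. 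Nondegeneracy of $B$ guarantees these close; the only delicate bookkeeping is the super-sign in the fermionic contractions and checking that the normal-ordering corrections generate no spurious central or weight shifts. This OPE verification is routine once $B$ is in hand, and forms the rigorous backbone of the conceptual computation above.
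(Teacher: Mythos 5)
Your proposal is correct, and it reaches the statement by a genuinely different route than the paper. The paper works entirely with the Casimir $\Omega_\kappa$ of the \emph{given} level form $\kappa$: using the standard Casimir--current OPE, it computes the adjoint action of $\Omega_\kappa$ on $\mathfrak{g}_*$ explicitly, finds that it is not a scalar but nilpotent, $\Omega_\kappa(N_a)=2(T_{\mathfrak g}-T_V)E_a$, and then repairs the failure by hand, adding $\tfrac{T_V-T_{\mathfrak g}}{2}\sum_a E_aE^a$ after checking the OPE of $\sum_b\normord{E_bE^b}$ with $N_a$; only at the end is the corrected element recognized as the Casimir of the shifted form. You instead identify the shift \emph{a priori} as half the Killing form of the full superalgebra $\mathfrak{g}_*$, reducing everything to the supertrace computation $\kappa^{\mathfrak{g}_*}_{\mathrm{Kil}}\big|_{\mathfrak g}=-2(T_V-T_{\mathfrak g})\kappa_{\mathfrak g}$ (which is correct, including the grading argument that kills all off-diagonal blocks and the signs on the odd summands), plus the generalized Sugawara criterion that the $B$-Casimir with $B=\kappa+\tfrac12\kappa_{\mathrm{Kil}}$ is conformal whenever $B$ is nondegenerate. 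The two computations are literally dual to each other: the paper's identity $\Omega_\kappa(N_a)=2(T_{\mathfrak g}-T_V)E_a$ is exactly your Killing-form statement read through the pairing $\kappa(\Omega_\kappa x,y)=\kappa_{\mathrm{Kil}}(x,y)$. What your route buys is conceptual clarity and generality: the shift is manifestly independent of which invariant form one starts with (recovering the paper's closing remark about $\tilde\kappa=\kappa+c\kappa_{\mathfrak g}$ for free), and the identification ``Sugawara shift $=$ anomaly'' becomes a tautology rather than an observation. What it costs is that the non-semisimple Sugawara criterion you invoke (in the literature this is essentially the Figueroa-O'Farrill--Stanciu construction, not the simple-algebra Kac--Todorov theorem) is not textbook-standard for a graded superalgebra like $\mathfrak{g}_*$, so it must be cited carefully or proved; your proposed fallback of verifying the $L(z)x(w)$ OPEs directly is exactly the content of the paper's hands-on argument, so in its fully rigorous form your proof would subsume the paper's computation rather than avoid it. One small imprecision: the extra hypothesis you mention when invoking Kac--Todorov (``for which the associated Casimir is central'') is not needed in the generalized version --- the only condition is nondegeneracy of $B=\kappa+\tfrac12\kappa_{\mathrm{Kil}}$, since the defining relation $2\kappa+\kappa_{\mathrm{Kil}}=2B$ then holds identically.
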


Let us start the computation in a more general setting. Assume that we are given a Lie superalgebra $\mathfrak{g}_*=\mathfrak{g}_0\oplus \mathfrak{g}_1$ together with an adjoint invariant non-degenerate even bilinear form $\kappa$. Let $x_i$ be a set of basis and $x^i$ be a set of dual basis under $\kappa$. The corresponding quadratic Casimir for $\kappa$ is given by
\begin{equation}
    \Omega_{\kappa}=\sum_i (-1)^{|x_i|}x_ix^i,
\end{equation} 
where $|x_i|$ is the parity of $x_i$. Note that this only depends on the choice of $\kappa$ but not the basis, thus the notation. 

Let $V_k (\mathfrak g_*)$ be the affine Lie superalgebra associated to $\mathfrak g_*$ at level $k$. This vertex algebra is generated by $x_i(z)$ with OPE:
\begin{equation}
    x_i(z)x_j(w)\sim \frac{ k\kappa(x_i,x_j)}{(z-w)^2}+\frac{[x_i,x_j](w)}{z-w}.
\end{equation}
The vertex operator corresponding to the quadratic Casimir is given by:
\begin{equation}
\Omega_\kappa(z)=\sum_i(-1)^{|x_i|}\normord{x_i(z)x^i(z)}.
\end{equation}
Following the computation of \cite{FBZ}, the OPE between the quadratic Casimir element with $x_i$ is given by:
\begin{equation}\label{eqOPECas}
    \Omega_\kappa(z)x_i(w)\sim \frac{2kx_i(w)+[\Omega_\kappa, x_i](w)}{(z-w)^2}+\frac{2k \partial x_i(w)}{z-w}+\frac{\partial [
    \Omega_\kappa, x_i](w)}{z-w}.
\end{equation}
Here $[\Omega_\kappa,x_i]$ denotes the action of the quadratic Casimir on the Lie algebra as its own module:
\begin{equation}
[\Omega_\kappa,x_i]=\sum_j(-1)^{|x_j|}[x_j,[x^j,x_i]].
\end{equation}
When $\mathfrak g$ is simple, and $\kappa$ is chosen to be the standard killing form, the action of $\Omega_\kappa$ is given by multiplication by $2h^\vee$, where $h^\vee$ is the dual coxeter number of $\mathfrak{g}$. In this case the OPE above becomes:
\begin{equation}
    \Omega_\kappa(z)x_i(w)\sim \frac{2(k+h^\vee)x_i(w)}{(z-w)^2}+\frac{2(k+h^\vee) \partial x_i(w)}{z-w},
\end{equation}
and so $\frac{1}{2(k+h^\vee)}\Omega_\kappa (z)$ is a conformal element. This is the usual Segal-Sugawara construction. 

We now specify to the case of $\mathfrak{g}_*$ having the form of equation \eqref{eq:LiesupBtwist} and $\kappa$ as in equation \eqref{eq:kappaBtwist}. Under this bilinear form, the dual basis of $\{N_a, \psi^{i,+}, \psi^{i,-}, E^a\}$ is given by:
\begin{equation}
\{E^a, \psi^{i,-}, -\psi^{i,+}, N_a-(T_V-T_\mathfrak{g})\sum_b \kappa_{\mathfrak{g}}(N_a, N_b)E^b\}
\end{equation}
The quadratic Casimir is thus:
\begin{equation}
    \Omega_\kappa=\sum_a N_a E^a+E^aN_a-\sum_i\psi^{i,+}\psi_i^{-}+\sum_i\psi^{-}_i\psi^{i,+}-(T_V-T_{\mathfrak{g}})\sum_{a,b}E^aE^b\kappa_{\mathfrak{g}}(N_a,N_b).
\end{equation}
For convenience, let $N^a$ be the dual basis of $N_a$ under $\kappa_{\mathfrak{g}}$ and $E_a$ the dual of $N^a$ in $\mathfrak{g}^*$. Since:
\begin{equation}
    \sum_{a,b} E^aE^b\kappa_{\mathfrak{g}}(N_a,N_b)=\sum_a E_aE^a,
\end{equation}
we can simplify the quadratic Casimir:
\begin{equation}
    \Omega_\kappa=\sum_a N_a E^a+E^aN_a-\sum_i\psi^{i,+}\psi^{-}_i+\sum_i \psi^{-}_i\psi^{i,+}-(T_V-T_{\mathfrak{g}})\sum_a E_aE^a.
\end{equation}
By equation \eqref{eqOPECas}, whether this can be used as a conformal element depends on how $\Omega_\kappa$ acts on $\mathfrak{g}_*$, which is what we compute next:

\begin{Prop}
The action of $\Omega_\kappa$ on $\mathfrak{g}_*$ is nilpotent, with the only nontrivial action given by:
\begin{equation}
    \Omega_\kappa(N_a)=2(T_\mathfrak{g}-T_V)E_a.
\end{equation}
\end{Prop}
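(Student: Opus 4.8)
The plan is to compute the endomorphism $x\mapsto \Omega_\kappa\cdot x$ of $\mathfrak{g}_*$ directly from the formula $\Omega_\kappa\cdot x=\sum_j(-1)^{|x_j|}[x_j,[x^j,x]]$, feeding in the explicit dual basis $\{E^a,\ \psi^{i,-},\ -\psi^{i,+},\ N_a-(T_V-T_{\mathfrak g})\sum_b\kappa_{\mathfrak g}(N_a,N_b)E^b\}$ recorded above. Writing out the four families indexed by $N_a,\psi^{i,+},\psi^{i,-},E^a$ (and carrying the super-signs) gives
\begin{equation}
\begin{aligned}
\Omega_\kappa\cdot x = {}& \sum_a[N_a,[E^a,x]] + \sum_a[E^a,[N_a,x]] \\
& - \sum_i[\psi^{i,+},[\psi^{i,-},x]] + \sum_i[\psi^{i,-},[\psi^{i,+},x]] \\
& - (T_V-T_{\mathfrak g})\sum_{a,b}\kappa_{\mathfrak g}(N_a,N_b)\,[E^a,[E^b,x]].
\end{aligned}
\end{equation}
The organising observation is that $\mathfrak{g}^*=\mathrm{span}(E^a)$ is an abelian ideal acting trivially on $V$, on $V^*$ and on itself, so $[\mathfrak{g}^*,[\mathfrak{g}^*,-]]=0$ identically. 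Two consequences follow at once: the $(T_V-T_{\mathfrak g})$ correction term never contributes to the action on $\mathfrak{g}_*$, since each of its summands contains a double $E$-bracket; and for $x\in\{\psi^{i,\pm},E^a\}$ every surviving double bracket either lands in $\mathfrak{g}^*$ and is then annihilated, or involves $\{V,V\}=\{V^*,V^*\}=0$. Hence $\Omega_\kappa\cdot\psi^{i,\pm}=0$ and $\Omega_\kappa\cdot E^a=0$, which is the vanishing part of the claim.

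It then remains to evaluate $\Omega_\kappa\cdot N_c$, and I would treat the two ``gauge'' terms $\sum_a[N_a,[E^a,N_c]]+\sum_a[E^a,[N_a,N_c]]$ and the two ``matter'' terms $-\sum_i[\psi^{i,+},[\psi^{i,-},N_c]]+\sum_i[\psi^{i,-},[\psi^{i,+},N_c]]$ separately. Using that $[E^a,N_c]$ and $[N_a,E^c]$ are given by the coadjoint action and the antisymmetry $f_{ab}{}^c=-f_{ba}{}^c$, both gauge terms collapse to a trace of a product of adjoint operators, and the defining identity $\Tr_{\mathrm{adj}}(\mathrm{ad}_{N_a}\mathrm{ad}_{N_b})=T_{\mathfrak g}\,\kappa_{\mathfrak g}(N_a,N_b)$ yields $\sum_a[N_a,[E^a,N_c]]=\sum_a[E^a,[N_a,N_c]]=T_{\mathfrak g}E_c$, where $E_c:=\sum_g\kappa_{\mathfrak g}(N_c,N_g)E^g$ agrees with the element $E_a$ of the text. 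For the matter terms I would push each inner bracket through the representation $\pi$ of $\mathfrak g$ on $V$ (respectively its dual on $V^*$) and close up with the moment-map bracket $\{\psi^{i,+},\psi^{j,-}\}=\sum_a\pi(N_a)^i{}_jE^a$; the trace over $V$ produces $\Tr_V(\pi(N_a)\pi(N_c))=T_V\,\kappa_{\mathfrak g}(N_a,N_c)$, so that each of the two matter contributions equals $-T_V E_c$. Summing the four pieces gives $\Omega_\kappa\cdot N_c=2T_{\mathfrak g}E_c-2T_V E_c=2(T_{\mathfrak g}-T_V)E_c$, as claimed, and nilpotency is immediate since $\Omega_\kappa$ sends $N_a$ into $\mathfrak{g}^*$ and annihilates $\mathfrak{g}^*$, so $\Omega_\kappa^2=0$.

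The main obstacle is entirely one of sign discipline: one must track the factors $(-1)^{|x_j|}$ together with the odd/odd anticommutators for the $\psi$'s, the explicit sign in the dual basis vector $-\psi^{i,+}$, and the antisymmetry of the structure constants, so that the two gauge terms and the two matter terms each emerge with the correct sign and the relative minus sign between $T_{\mathfrak g}$ and $T_V$ survives. The conceptual payoff is that the coadjoint/gauge sector contributes the adjoint index $T_{\mathfrak g}$ while the hypermultiplet sector contributes $-T_V$, so that the coefficient $2(T_{\mathfrak g}-T_V)$ measuring the deviation of $\Omega_\kappa$ from a genuine Segal--Sugawara vector is precisely (twice) the boundary anomaly $T_V-T_{\mathfrak g}$; this is what forces the level shift by $\rho^T\rho$ in \eqref{eqbilinearBper} before inverting to obtain the conformal element.
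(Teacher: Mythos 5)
Your proposal is correct and follows essentially the same route as the paper's own proof: first one argues that $\Omega_\kappa$ can only act nontrivially on the $N_a$ (your centrality-of-$\mathfrak g^*$ argument is the paper's "degree considerations" in different words), then the two gauge double-brackets are evaluated as an adjoint trace giving $T_{\mathfrak g}E_a$ each, and the two matter double-brackets via the moment-map bracket and $\Tr_V$ giving $-T_VE_a$ each, summing to $2(T_{\mathfrak g}-T_V)E_a$ with nilpotency immediate since the image lies in the annihilated subspace $\mathfrak g^*$. The only cosmetic differences are that the paper extracts the traces by pairing against $N^b$ and invoking invariance of $\kappa$ (writing the adjoint contribution as $2h^\vee$), whereas you compute the traces directly and phrase everything uniformly in terms of $T_{\mathfrak g}$.
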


\begin{proof}
By considering degrees, $\Omega_\kappa$ acts nontrivially only on $N_a$, which we will compute. Let us start with the term:
\begin{equation}
    \sum_b [N_b,[E^b, N_a]].
\end{equation}
Denote by $f_{ab}^c$ the structural constants of $\mathfrak{g}$ associated with the basis $N_a$. Namely, $[N_a,N_b]=\sum f_{ab}^c N_c$. Since
\begin{equation}
\kappa([E^b,N_a],N_c)=f_{ac}^b=\kappa([N^b,E_a], N_c),
\end{equation}
and because $\kappa$ is non-degenerate, we find:
\begin{equation}
    \sum_b [N_b,[E^b, N_a]]=\sum_b [N_b,[N^b, E_a]]=2h^\vee E_a.
\end{equation}
Similarly 
\begin{equation}
    \sum_b [E^b,[N_b,N_a]]=2h^\vee E_a.
\end{equation}
Consider now:
\begin{equation}
    \sum_i\{\psi^{i,+},[\psi^{-}_i, N_a]\}.
\end{equation}
Computing the pairing:
\begin{equation}
    \kappa(\sum_i\{\psi^{i,+},[\psi^{-}_i, N_a]\},N^b)=\sum_i\kappa([ N_a,\psi^{-}_i],[ N^b,\psi^{i,+}])=\text{Tr}_V(N_aN^b).
\end{equation}
By our choice of normalization, this is zero unless $a=b$ in which case it is $T_V$, and thus:
\begin{equation}
    \sum_i\{\psi^{i,+},[\psi^{-}_i, N_a]\}=T_VE_a.
\end{equation}
Similarly, the $\psi_i^-\psi^{i,+}$ also gives the same contribution. Finally, the action of $E_aE^a$ is trivial by degree considerations, and we find:
\begin{equation}
    \Omega_\kappa(N_a)=2(T_{\mathfrak g}-T_V)E_a.
\end{equation}
as desired. 

\end{proof}

This proof shows that the quadratic Casimir fails in being a conformal element precisely because of the anomaly. To correct this, we need to use the additional fields in the vertex algebra to cancel this nilpotent part of the action. Consider the OPE:
\begin{equation}
    \sum_b \normord{E_bE^b}(z)N_a(w)\sim \frac{2k E_a(w)}{(z-w)^2}+\frac{2k \partial E_a(w)}{z-w}.
\end{equation}
Indeed, from the following computation, we have:
\be
\sum_b N_a(w) E_{b, -1}E^b_{-1}\vert 0\rangle \sim w^{-2}\sum_b N_{a, 1} E_{b, -1}E^b_{-1}\vert 0\rangle+w^{-1}\sum_b N_{a, 0}E_{b, -1}E^b_{-1}\vert 0\rangle=w^{-2} 2k E_{a,-1}\vert 0\rangle,
\ee
from which one finds:
\be
 \sum_b \normord{E_bE^b}(z)N_a(w)\sim \frac{2 k E_a(z)}{(w-z)^2}\sim \frac{2E_a(w)}{(z-w)^2}+\frac{2k\partial E_a(w)}{z-w}.
\ee
Consider the following element:
\begin{equation}
  \widetilde{\Omega}_\kappa=  \frac{1}{2}\Omega_{\kappa}+\frac{T_V-T_\mathfrak{g}}{2}\sum_a E_aE^a,
\end{equation}
whose OPE with $x(w)$ is given by:
\be
\widetilde{\Omega}_\kappa(z)x(w)=\frac{2k x(w)}{(z-w)^2}+\frac{2k \partial x(w)}{z-w}.
\ee
As a consequence, $\frac{1}{2k}\widetilde{\Omega}_\kappa(z)$ is a conformal element. This element can be written as: 
\begin{equation}
\widetilde{\Omega}_\kappa=\frac{1}{2}\left(\sum_a N_aE^a+E^aN_a-\sum_i \psi^{i,+}\psi_i^-+\sum_a \psi_i^-\psi^{i,+}\right),
\end{equation}
from which one can see that it is nothing but the quadratic Casimir associated to the quadratic form $\kappa'=\kappa-(T_V-T_{\mathfrak g})\kappa_{\mathfrak g}$. 

In fact, one can start with any bilinear form $\tilde{\kappa}=\kappa+c\kappa_{\mathfrak g}$ for some $c$. The quadratic Casimir $\Omega_{\tilde{\kappa}}$ will always act nilpotently with the same factor. Moreover, in the affine Lie algebra $V_k(\mathfrak{g}_*)$, the OPE between $N_a$ and $\sum E_b E^b$ will remain the same since the paring between $N_a$ and $E_a$ remain the same. As a consequence, the conformal element, in any choice $\tilde{\kappa}$, is always the quadratic Casimir associated to $\tilde{\kappa}-(T_V-T_{\mathfrak g})\kappa_{\mathfrak g}$, shifting by the anomaly.

\section{Half-index identities}
\label{app:index}

In this appendix we compute the graded characters of VOA's $\CV_\rho^A$ and $ \CV_{\rho^\vee}^B$, recovering known formulas for half-indices from the physics literature. 
Theorem \ref{thm:VOA} relating 3d-mirror VOA's
\be \CV_\rho^A \;\cong\; \CV_{\rho^\vee}^B \label{app-AB}  \ee
immediately implies an equality of graded characters, which generalizes half-index identities from \cite{Okazaki-04} to arbitrary $\rho,\rho^\vee$.

In the physics literature, the characters of boundary vertex algebras are known as half-indices. They are most generally defined for boundary conditions of 3d $\CN=2$ theories that preserve 2d $\CN=(0,2)$ SUSY \cite{Gadde:2013sca,SugishitaTerashima,YoshidaSugiyama,dimofte2018dual}, generalizing bulk 3d index computations of \cite{KW-index,ImamuraYokoyama}. Viewing a 3d $\CN=4$ theory as a 2d $\CN=2$ theory, the half-index can be defined as
\be \mathds{I}^{\CN=2}_\CV(\eta,q^{\frac12},x) \,=\, \text{Tr}_\CV\, (-1)^{2J} \eta^{H-C} q^{\frac12(2J+H)} x^F\,, \label{II-2}  \ee
where $H,C$ are the charges of the $U(1)_H\times U(1)_C \subset SU(2)_H\times SU(2)_C$ R-symmetry, $J \in \frac12\Z$ is boundary spin (with $2J$ mod 2 being fermion number), and $F$ is a vector of charges for all flavor symmetries that may be present.
Formula \eqref{II-2} can be further specialized to the character of a boundary VOA corresponding to either the A or B topological twist of the 3d $\CN=4$ bulk (so long as the boundary condition is compatible with either the A or B twist). For the A twist, the character is
\be \label{A-index} \mathds{I}^{A}_\CV(q^{\frac12},x) := \text{Tr}_\CV\, (-1)^C q^{L_0} x^F \,=\, \mathds{I}^{\CN=2}_\CV(-1,-q^{\frac12},x)\,, \ee
noting that $L_0 = J+H/2$ in the A twist.
For the B twist, the character is
\be \label{B-index} \mathds{I}^{B}_\CV(q^{\frac12},x) := \text{Tr}_\CV\, (-1)^H q^{L_0} x^F \,=\, \mathds{I}^{\CN=2}_\CV(-q^{\frac12},-q^{\frac12},x)\,, \ee
noting that $L_0 = J+C/2$ in the B twist.

\subsection{A twist}

In the A twist, we have $\CV_\rho^A = H_{\rm BRST}(\mathfrak{gl}(1)^r,\mc{V}_{\beta\gamma}^{T^*V}\otimes \mc{V}_{bc}^V) = H^\bullet\big((\mc{V}_{bc}^{\mathfrak g} \otimes \mc{V}_{\beta\gamma}^{T^*V}\otimes \mc{V}_{bc}^V)^{\rm rel}, Q_{\rm BRST}\big)$. The differential $Q_{\rm BRST}$ has homological degree $+1$ and is otherwise uncharged; taking cohomology does not change the character, so
\be \mathds{I}_{\CV_\rho^A} = \mathds{I}_{(\mc{V}_{bc}^{\mathfrak g} \otimes \mc{V}_{\beta\gamma}^{T^*V}\otimes \mc{V}_{bc}^V)^{\rm rel}} \ee
Recall from Section \ref{sec:BRST} that the relative complex $(\mc{V}_{bc}^{\mathfrak g} \otimes \mc{V}_{\beta\gamma}^{T^*V}\otimes \mc{V}_{bc}^V)^{\rm rel}$ is defined by starting with $\mc{V}_{bc}^{\mathfrak g} \otimes \mc{V}_{\beta\gamma}^{T^*V}\otimes \mc{V}_{bc}^V$, removing the zero-mode of the $\mathfrak g$-values $c$ fields, and restricting by hand to invariants under the finite $\mathfrak{gl}(1)^r$ symmetry. We can replicate these operations at the level of characters.

Let $\mc{V}_{bc}^{\mathfrak g}{}'$ denote $\mc{V}_{bc}^{\mathfrak g}$ (as a vector space) with zero-modes of $c$ removed. 
The space $\mc{V}_{bc}^{\mathfrak g}{}' \otimes \mc{V}_{\beta\gamma}^{T^*V}\otimes \mc{V}_{bc}^V$ has finite symmetry $U(1)_H\times G\times T_F\times T_B\times T_{\rm top}$, as described in Section \eqref{sec:VA-sym}. We compute a graded character
\be \mathds{I}_{\mc{V}_{bc}^{\mathfrak g}{}' \otimes \mc{V}_{\beta\gamma}^{T^*V}\otimes \mc{V}_{bc}^V}(q^{\frac12},s,f,b,t) = \text{Tr}_{\mc{V}_{bc}^{\mathfrak g}{}' \otimes \mc{V}_{\beta\gamma}^{T^*V}\otimes \mc{V}_{bc}^V}\, (-1)^H q^{L_0} s^G f^F b^B t^T\,, \ee
where $s=(s_1,...,s_r)$ are fugacities for the $U(1)^r$ gauge action (and $G=(G_1,...,G_r)$) are the gauge charges; $f=(f_1,...,f_{n-r})$ are $T_F$ flavor fugacities; $b=(b_1,...,b_{n-r})$ are $T_B$ boundary flavor fugacities, and $t=(t_1,...,t_r)$ are topological $T_{\rm top}$ flavor fugacities.
The character factorizes $\text{Tr}_{\mc{V}_{bc}^{\mathfrak g}{}' \otimes \mc{V}_{\beta\gamma}^{T^*V}\otimes \mc{V}_{bc}^V}=\text{Tr}_{\mc{V}_{bc}^{\mathfrak g}{}'}\text{Tr}_{\mc{V}_{\beta\gamma}^{T^*V}}\text{Tr}_{\mc{V}_{bc}^V}$ with
\be \text{Tr}_{\mc{V}_{bc}^{\mathfrak g}{}'} = (q;q)_\infty^{2r}\,,\qquad
 \text{Tr}_{\mc{V}_{\beta\gamma}^{T^*V}} = \prod_{i=1}^n \big(q^{\frac12}s^{\rho_i}f^{\sigma_i}, q^{\frac12}s^{-\rho_i}f^{-\sigma_i};q\big)_\infty^{-1}\,, \ee 
\be\notag \text{Tr}_{\mc{V}_{bc}^V} = \prod_{i=1}^n \big(-q^{\frac12}s^{\rho_i}f^{\sigma_i}b^{\rho^\vee_i}t^{\sigma^\vee_i}, -q^{\frac12}s^{-\rho_i}f^{-\sigma_i}b^{-\rho^\vee_i}t^{-\sigma^\vee_i};q\big)_\infty\,,
 \ee
with Pochhammer symbols as defined in \eqref{Poch}.
Taking invariants under the gauge action is easily implemented by expanding and taking the constant term in all the $s$ variables, or equivalently by taking a contour integral $\prod_a \oint \frac{ds_a}{2\pi i}$. Therefore, the character of the A-twisted VOA is 
\be \label{VA-index}  \mathds{I}_{\CV_\rho^A}(q^{\frac12},f,b,t) =  (q;q)_\infty^{2r} \prod_{a=1}^r \oint \frac{ds_a}{2\pi i} \prod_{i=1}^n \frac{ 
 \big(-q^{\frac12}s^{\rho_i}f^{\sigma_i}b^{\rho^\vee_i}t^{\sigma^\vee_i}, -q^{\frac12}s^{-\rho_i}f^{-\sigma_i}b^{-\rho^\vee_i}t^{-\sigma^\vee_i};q\big)_\infty }
 {     \big(q^{\frac12}s^{\rho_i}f^{\sigma_i}, q^{\frac12}s^{-\rho_i}f^{-\sigma_i};q\big)_\infty } \,.
\ee

\subsection{B twist}

In the B twist, we will compute
\be \mathds{I}_{\mc{V}_\rho^B}(q^{\frac12},f,g,t) = \text{Tr}_{\mc{V}_\rho^B} (-1)^C q^{L_0} f^F g^G t^T\,,\ee
where $f=(f_1,...,f_{n-r})$, $g=(g_1,...,g_r)$, and $t=(t_1,...,t_r)$ are fugacities for the $T_F$, $G$, and $T_{\rm top}$ flavor symmetries described in Section \ref{sec:sym-B}\,. Recall from Section \ref{sec:mon2} that the B-twisted VOA is a sum of monopole sectors
\be \mc{V}_\rho^B = \bigoplus_{\mu\in \Z^r} \CF_\mu^N\,, \ee
where each $\CF_\mu^N$ is a module for the perturbative VOA $\CF_0^N = \widehat\grho$, generated by a state $|\mu\rangle$ corresponding to the monopole vertex operator $\CU_\mu$ from \eqref{mon-vertex}.

The perturbative VOA has a PBW basis given by words in the non-positive modes of $N(z),E(z),\psi^+(z),\psi^-(z)$, written in this same order. Using the charges in \eqref{VOA-B-sym}, the character of the perturbative VOA is
\be \mathds{I}_{\widehat\grho} = \frac{1}{(q;q)_\infty^{2r}} \prod_{i=1}^N \big(-qg^{\rho_i}f^{\sigma_i}, -qg^{-\rho_i};q\big)_\infty\,. \ee
Each module $\CF_\mu^N$ is generated by the state $|\mu\rangle$, which has charges as in \eqref{mon-charges2}, agreeing with the expected charges of monopole operators  \eqref{mon-charges}. The vector $|\mu\rangle$ thus contributes $q^{\frac12|\rho\mu|^2} g^{\rho^T\rho\mu}t^\mu f^{\sigma^T\rho\mu}$ to the character. It can be seen from the OPE's \eqref{mon-OPE} that the remainder of the module has a PBW basis given by words in the non-positive modes of $N(z)$ and $E(Z)$ (since the zero-modes act as constants, the positive modes act as zero, and the negative modes act freely); as well as modes $\psi_k^{i,\pm}$ satisfying
\be \psi_k^{i,+}:\quad k < -(\rho\mu)^i\,,\qquad \psi_k^{i,-}:\quad k < (\rho\mu)^i\,. \ee
These modes of $\psi^\pm$ act freely on $|\mu\rangle$, while the remaining ones act as constants. Thus, $\CF_\mu^N$ is essentially a Verma module, modulo the shift in moding of the $\psi^\pm$'s. Accounting for the shift, its character is
\be \mathds{I}_{\CF_\mu^N} = q^{\frac12|\rho\mu|^2} g^{\rho^T\rho\mu}t^\mu f^{\sigma^T\rho\mu} \times  \frac{1}{(q;q)_\infty^{2r}} \prod_{i=1}^N \big(-q^{1+(\rho\mu)_i}g^{\rho_i}f^{\sigma_i}, -q^{1-(\rho\mu)_i}g^{-\rho_i};q\big)_\infty\,. \ee

Putting everything together, we find
\be \mathds{I}_{\mc{V}_\rho^B}(q^{\frac12},f,g,t) = \frac{1}{(q;q)_\infty^{2r}} \sum_{\mu\in \Z^r} q^{\frac12|\rho\mu|^2} g^{\rho^T\rho\mu}t^\mu f^{\sigma^T\rho\mu}  \prod_{i=1}^N \big(-q^{1+(\rho\mu)_i}g^{\rho_i}f^{\sigma_i}, -q^{1-(\rho\mu)_i}g^{-\rho_i};q\big)_\infty\,. \label{VB-index} \ee
A corollary of Theorem \ref{thm:VOA} is
\begin{Cor} For any $\rho$ (and corresponding $\rho^\vee$, as well as choices of splitting $\sigma$ and co-splitting $\sigma^\vee$) we have
\be \mathds{I}_{\CV_\rho^A}(q^{\frac12},f,b,t) \,=\, \mathds{I}_{\CV_{\rho^\vee}^B}(q^{\frac12},t,b,f) \ee
\end{Cor}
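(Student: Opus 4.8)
The cleanest route is to deduce the identity directly from Theorem \ref{thm:VOA} rather than from the explicit product formulas \eqref{VA-index}--\eqref{VB-index}. The half-index is by definition a graded trace, weighted by $q^{L_0}$, by a cohomological sign, and by flavor fugacities; Theorem \ref{thm:VOA} asserts an isomorphism $\CV_\rho^A \cong \CV_{\rho^\vee}^B$ that is \emph{simultaneously} compatible with the conformal grading, the cohomological grading, and all global flavor symmetries. Any isomorphism of vector spaces that respects a collection of commuting gradings induces an equality of the associated multi-graded characters, provided each joint graded piece is finite-dimensional. The latter holds here because both VOA's have conformal weight bounded from below with finite-dimensional $L_0$-weight spaces --- exactly the property exploited in the character-comparison step of the proof of Proposition \ref{PropffpurtBgaugeq}. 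Thus the entire content of the corollary is the bookkeeping that translates the symmetry dictionary of Theorem \ref{thm:VOA} into the stated substitution of fugacities.

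First I would match the grading variables. The factor $q^{L_0}$ is preserved because Theorem \ref{thm:VOA} sends the stress tensor \eqref{L-A} to \eqref{L-B}, so conformal weights agree on the nose. The cohomological sign transforms correctly because the A-twist grading is by $U(1)_C$ (see \eqref{A-index}) while the B-twist grading is by $U(1)_H$ (see \eqref{B-index}), and these are exchanged by $SU(2)_H \leftrightarrow SU(2)_C$ under 3d mirror symmetry; compatibility of the isomorphism with cohomological degree then identifies $(-1)^C$ on $\CV_\rho^A$ with $(-1)^H$ on $\CV_{\rho^\vee}^B$.

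Next I would match flavor fugacities using the explicit dictionary in Theorem \ref{thm:VOA}: the isomorphism carries the $T_F \times T_B$ Kac-Moody currents of $\CV_\rho^A$ to the $T_{\rm top} \times G$ currents of $\CV_{\rho^\vee}^B$, and the finite $T_{\rm top}$ symmetry of $\CT_\rho^A$ to the finite $T_F$ symmetry of $\CT_{\rho^\vee}^B$. Writing $(f,b,t)$ for the $(T_F, T_B, T_{\rm top})$ fugacities entering $\mathds{I}_{\CV_\rho^A}$ and $(f_B, g_B, t_B)$ for the $(T_F, G, T_{\rm top})$ fugacities entering $\mathds{I}_{\CV_{\rho^\vee}^B}$ as in \eqref{VB-index}, the correspondences $T_F \mapsto T_{\rm top}$, $T_B \mapsto G$, $T_{\rm top} \mapsto T_F$ force $t_B = f$, $g_B = b$, $f_B = t$. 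Substituting gives $\mathds{I}_{\CV_\rho^A}(q^{\frac12},f,b,t) = \mathds{I}_{\CV_{\rho^\vee}^B}(q^{\frac12}, t, b, f)$, which is precisely the asserted identity.

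The one place requiring care --- and the main (though still routine) obstacle --- is checking that the symmetry identifications of Theorem \ref{thm:VOA} hold at the level of the integral charge lattices, and not merely up to the auxiliary choices of splitting $\sigma$ and co-splitting $\sigma^\vee$; this is what guarantees the fugacity substitution is an honest relabeling rather than a shift of variables. As an independent sanity check one could instead verify the identity directly: performing the gauge contour integral $\prod_a \oint \frac{ds_a}{2\pi i}$ in \eqref{VA-index} extracts a constant term that reorganizes into the monopole lattice sum over $\mu \in \Z^r$ appearing in \eqref{VB-index}, with the Pochhammer shifts $q^{1 \pm (\rho\mu)_i}$ emerging from the poles and zeros forced on the $\psi^{\pm}$ modings. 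This direct manipulation, however, is merely the combinatorial shadow of the free-field proof of Theorem \ref{thm:VOA}, so the conceptual route above is both shorter and more transparent.
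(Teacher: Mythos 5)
Your proposal is correct and takes essentially the same route as the paper: the paper presents this identity as an immediate corollary of Theorem \ref{thm:VOA}, whose isomorphism $\CV_\rho^A \cong \CV_{\rho^\vee}^B$ preserves the conformal and cohomological gradings and carries the $T_F\times T_B\times T_{\rm top}$ symmetries of the A side to the $T_{\rm top}\times G\times T_F$ symmetries of the B side, which is exactly the fugacity substitution $(f,b,t)\mapsto(t,b,f)$ you derive. Your added remarks (finite-dimensionality of the joint graded pieces, and the contour-integral/lattice-sum sanity check relating \eqref{VA-index} to \eqref{VB-index}) are consistent elaborations of the same argument rather than a different method.
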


\begin{Exp}
Consider SQED with two hypermultiplets of charge 1, with $r=1$, $n=2$, and charges
\be \rho = \begin{pmatrix} 1\\1 \end{pmatrix}\,,\quad \rho^\vee = \begin{pmatrix} 1 \\ -1\end{pmatrix}\,,\qquad \sigma = \begin{pmatrix} 1 \\ 0 \end{pmatrix}\,,\quad \sigma^\vee = \begin{pmatrix} 0 \\ 1 \end{pmatrix}\,. \ee
We find characters
\be\mathds{I}_{\CV_\rho^A}(q^{\frac12},f,b,t) = \frac{1}{(q;q)_\infty^2} \oint \frac{ds}{2\pi i s} \frac{\big(-q^{\frac12} s f b, -q^{\frac12}(sfb)^{-1}, -q^{\frac12}s b^{-1}t,-q^{\frac12}(s b^{-1}t)^{-1};q\big)_\infty}
{ \big(q^{\frac12} s f, q^{\frac12}(sf)^{-1}, q^{\frac12}s,q^{\frac12}(s)^{-1};q\big)_\infty}
\ee
\be \mathds{I}_{\CV_{\rho^\vee}^B}(q^{\frac12},f,g,t) = \frac{1}{(q;q)_\infty^2} \sum_{\mu\in \Z} q^{\mu^2}g^{2\mu}t^\mu f^{-\mu} \big(-q^{1+\mu}g,-q^{1-\mu}g^{-1},-q^{1-\mu}g^{-1}f,-q^{1+\mu}gf^{-1};q\big)_{\infty} \ee
and can verify order-by-order in $q$ that
\begin{align} \mathds{I}_{\CV_\rho^A}(q^{\frac12},f,b,t)  &= \mathds{I}_{\CV_{\rho^\vee}^B}(q^{\frac12},f,g,t) \\
 &= 1+\Big( 2 + \tfrac1b+b+\tfrac1f+\tfrac{1}{bf}+f+bf+\tfrac bf+\tfrac{bf}{t}+\tfrac{b^2f}{t}+\tfrac tb+\tfrac{t}{b^2f}+\tfrac{t}{bf}\Big)q + O(q^2) \notag
 \end{align}
\end{Exp}

\newpage

\bibliographystyle{ytamsalpha}
\bibliography{VOAlines}

\newcommand{\etalchar}[1]{$^{#1}$}
\providecommand{\bysame}{\leavevmode\hbox to3em{\hrulefill}\thinspace}
\providecommand{\MR}{\relax\ifhmode\unskip\space\fi MR }
% \MRhref is called by the amsart/book/proc definition of \MR.
\providecommand{\MRhref}[2]{%
  \href{http://www.ams.org/mathscinet-getitem?mr=#1}{#2}
}
\providecommand{\href}[2]{#2}
\providecommand{\doihref}[2]{\href{#1}{#2}}
\providecommand{\arxivfont}{\tt}
\begin{thebibliography}{EKSNW19}

\bibitem[ABC{\etalchar{+}}09]{Dbranes}
P.~S. Aspinwall, T.~Bridgeland, A.~Craw, M.~R. Douglas, A.~Kapustin, G.~W.
  Moore, M.~Gross, G.~Segal, B.~Szendr\"oi, and P.~M.~H. Wilson,
  \emph{{Dirichlet branes and mirror symmetry}}, Clay Mathematics Monographs,
  vol.~4, AMS, Providence, RI, 2009.

\bibitem[ACPV22]{Adamovic:2022oqj}
D.~Adamovic, T.~Creutzig, O.~Perse, and I.~Vukorepa, \emph{{Tensor category
  $KL_k(\mathfrak{sl}_{2n})$ via minimal affine $W$-algebras at the
  non-admissible level $k =-\frac{2n+1}{2}$}},
  \href{http://arxiv.org/abs/2212.00704}{{\arxivfont arXiv:2212.00704
  [math.QA]}}.

\bibitem[Ada03]{adamovic2003classification}
D.~Adamovi{\'c}, \emph{Classification of irreducible modules of certain
  subalgebras of free boson vertex algebra}, Journal of Algebra \textbf{270}
  (2003) 115--132.

\bibitem[AG15]{AsselGomis}
B.~Assel and J.~Gomis, \emph{{Mirror Symmetry And Loop Operators}},
  \doihref{http://dx.doi.org/10.1007/JHEP11(2015)055}{JHEP \textbf{11} (2015)
  055}, \href{http://arxiv.org/abs/1506.01718}{{\arxivfont arXiv:1506.01718
  [hep-th]}}.

\bibitem[AGG{\etalchar{+}}09]{AGGTV}
L.~F. Alday, D.~Gaiotto, S.~Gukov, Y.~Tachikawa, and H.~Verlinde, \emph{{Loop
  and surface operators in N=2 gauge theory and Liouville modular geometry}},
  \doihref{http://dx.doi.org/10.1007/JHEP01(2010)113}{JHEP \textbf{01} (2010)
  113}, \href{http://arxiv.org/abs/0909.0945}{{\arxivfont arXiv:0909.0945
  [hep-th]}}.

\bibitem[AO21]{AganagicOkounkov}
M.~Aganagic and A.~Okounkov, \emph{Elliptic stable envelopes},
  \href{https://doi-org.ezproxy.is.ed.ac.uk/10.1090/jams/954}{J. Amer. Math.
  Soc. \textbf{34} (2021) 79--133}.

\bibitem[AP19]{adamovic2019fusion}
D.~Adamovi{\'c} and V.~Pedi{\'c}, \emph{On fusion rules and intertwining
  operators for the weyl vertex algebra}, Journal of Mathematical Physics
  \textbf{60} (2019) 081701.

\bibitem[AW20]{allen2020bosonic}
R.~Allen and S.~Wood, \emph{{Bosonic Ghostbusting: The Bosonic Ghost Vertex
  Algebra Admits a Logarithmic Module Category with Rigid Fusion}},
  \doihref{http://dx.doi.org/10.1007/s00220-021-04305-6}{Commun. Math. Phys.
  \textbf{390} (2022) 959--1015},
  \href{http://arxiv.org/abs/2001.05986}{{\arxivfont arXiv:2001.05986
  [math.QA]}}.

\bibitem[BBFSN22]{BBFS}
L.~Bhardwaj, M.~Bullimore, A.~E.~V. Ferrari, and S.~Schafer-Nameki,
  \emph{{Anomalies of Generalized Symmetries from Solitonic Defects}},
  \href{http://arxiv.org/abs/2205.15330}{{\arxivfont arXiv:2205.15330
  [hep-th]}}.

\bibitem[BBFSN23]{BBFS-3d}
\bysame, \emph{{Generalized Symmetries and Anomalies of 3d N=4 SCFTs}},
  \href{http://arxiv.org/abs/2301.02249}{{\arxivfont arXiv:2301.02249
  [hep-th]}}.

\bibitem[BCH18]{BCH}
F.~Benini, C.~C\'ordova, and P.-S. Hsin, \emph{{On 2-Group Global Symmetries
  and their Anomalies}},
  \doihref{http://dx.doi.org/10.1007/JHEP03(2019)118}{JHEP \textbf{03} (2019)
  118}, \href{http://arxiv.org/abs/1803.09336}{{\arxivfont arXiv:1803.09336
  [hep-th]}}.

\bibitem[BDG15]{BDG}
M.~Bullimore, T.~Dimofte, and D.~Gaiotto, \emph{{The Coulomb Branch of 3d
  ${\mathcal{N}= 4}$ Theories}},
  \doihref{http://dx.doi.org/10.1007/s00220-017-2903-0}{Commun. Math. Phys.
  \textbf{354} (2017) 671--751},
  \href{http://arxiv.org/abs/1503.04817}{{\arxivfont arXiv:1503.04817
  [hep-th]}}.

\bibitem[BDGH16]{BDGH}
M.~Bullimore, T.~Dimofte, D.~Gaiotto, and J.~Hilburn, \emph{{Boundaries, Mirror
  Symmetry, and Symplectic Duality in 3d $\mathcal{N}=4$ Gauge Theory}},
  \doihref{http://dx.doi.org/10.1007/JHEP10(2016)108}{JHEP \textbf{10} (2016)
  108}, \href{http://arxiv.org/abs/1603.08382}{{\arxivfont arXiv:1603.08382
  [hep-th]}}.

\bibitem[BF18]{BF-notes}
A.~Braverman and M.~Finkelberg, \emph{{Coulomb branches of 3-dimensional gauge
  theories and related structures}},
  \doihref{http://dx.doi.org/10.1007/978-3-030-26856-5_1}{Lect. Notes Math.
  \textbf{2248} (2019) 1--52},
  \href{http://arxiv.org/abs/1807.09038}{{\arxivfont arXiv:1807.09038
  [math.AG]}}.

\bibitem[BF22]{BF-talk}
M.~Bullimore and A.~Ferrari, \emph{{Generalized symmetries in 3d N=4 theories
  and moduli stacks of vacua}}, 2022. Talk at University of Edinburgh GRIFT
  Seminar, 11 Feb.

\bibitem[BFK21]{BFK}
M.~Bullimore, A.~E.~V. Ferrari, and H.~Kim, \emph{{Supersymmetric Ground States
  of 3d $\mathcal{N}=4$ Gauge Theories on a Riemann Surface}},
  \doihref{http://dx.doi.org/10.21468/SciPostPhys.12.2.072}{SciPost Phys.
  \textbf{12} (2022) 072}, \href{http://arxiv.org/abs/2105.08783}{{\arxivfont
  arXiv:2105.08783 [hep-th]}}.

\bibitem[BFN16]{BFN}
A.~Braverman, M.~Finkelberg, and H.~Nakajima, \emph{{Towards a mathematical
  definition of Coulomb branches of $3$-dimensional $\mathcal{N} = 4$ gauge
  theories, II}}, \doihref{http://dx.doi.org/10.4310/ATMP.2018.v22.n5.a1}{Adv.
  Theor. Math. Phys. \textbf{22} (2018) 1071--1147},
  \href{http://arxiv.org/abs/1601.03586}{{\arxivfont arXiv:1601.03586
  [math.RT]}}.

\bibitem[BFN21]{BFN-lines}
\bysame, \emph{Line bundles on {C}oulomb branches},
  \href{https://doi-org.ezproxy.is.ed.ac.uk/10.4310/ATMP.2021.v25.n4.a2}{Adv.
  Theor. Math. Phys. \textbf{25} (2021) 957--993}.

\bibitem[BLL{\etalchar{+}}13]{Beem:2013sza}
C.~Beem, M.~Lemos, P.~Liendo, W.~Peelaers, L.~Rastelli, and B.~C. van Rees,
  \emph{{Infinite Chiral Symmetry in Four Dimensions}},
  \doihref{http://dx.doi.org/10.1007/s00220-014-2272-x}{Commun. Math. Phys.
  \textbf{336} (2015) 1359--1433},
  \href{http://arxiv.org/abs/1312.5344}{{\arxivfont arXiv:1312.5344 [hep-th]}}.

\bibitem[BLPW10]{Gale}
T.~Braden, A.~Licata, N.~Proudfoot, and B.~Webster, \emph{Gale duality and
  {K}oszul duality},
  \href{https://doi-org.ezproxy.is.ed.ac.uk/10.1016/j.aim.2010.04.011}{Adv.
  Math. \textbf{225} (2010) 2002--2049}.

\bibitem[BM21]{Beem:2021jnm}
C.~Beem and C.~Meneghelli, \emph{{Geometric free field realization for the
  genus-two class S theory of type a1}},
  \doihref{http://dx.doi.org/10.1103/PhysRevD.104.065015}{Phys. Rev. D
  \textbf{104} (2021) 065015},
  \href{http://arxiv.org/abs/2104.11668}{{\arxivfont arXiv:2104.11668
  [hep-th]}}.

\bibitem[BMR19]{Beem:2019tfp}
C.~Beem, C.~Meneghelli, and L.~Rastelli, \emph{{Free Field Realizations from
  the Higgs Branch}}, \doihref{http://dx.doi.org/10.1007/JHEP09(2019)058}{JHEP
  \textbf{09} (2019) 058}, \href{http://arxiv.org/abs/1903.07624}{{\arxivfont
  arXiv:1903.07624 [hep-th]}}.

\bibitem[BN22a]{BN22}
A.~Ballin and W.~Niu, \emph{{3d Mirror Symmetry and the $\beta\gamma$ VOA}},
  \href{http://arxiv.org/abs/2202.01223}{{\arxivfont arXiv:2202.01223
  [hep-th]}}.

\bibitem[BN22b]{Beem:2022vfz}
C.~Beem and S.~Nair, \emph{{Free field realisation and the chiral universal
  centraliser}}, \href{http://arxiv.org/abs/2208.09343}{{\arxivfont
  arXiv:2208.09343 [hep-th]}}.

\bibitem[BT96]{BlauThompson}
M.~Blau and G.~Thompson, \emph{{Aspects of $N(T)\geq 2$ topological gauge
  theories and D-branes}},
  \doihref{http://dx.doi.org/10.1016/S0550-3213(97)00161-2}{Nucl. Phys. B
  \textbf{492} (1997) 545--590},
  \href{http://arxiv.org/abs/hep-th/9612143}{{\arxivfont
  arXiv:hep-th/9612143}}.

\bibitem[BZ21]{BullimoreZhang}
M.~Bullimore and D.~Zhang, \emph{{3d $\mathcal{N}=4$ Gauge Theories on an
  Elliptic Curve}},
  \doihref{http://dx.doi.org/10.21468/SciPostPhys.13.1.005}{SciPost Phys.
  \textbf{13} (2022) 005}, \href{http://arxiv.org/abs/2109.10907}{{\arxivfont
  arXiv:2109.10907 [hep-th]}}.

\bibitem[CCF{\etalchar{+}}18]{3dmodularity}
M.~C.~N. Cheng, S.~Chun, F.~Ferrari, S.~Gukov, and S.~M. Harrison, \emph{{3d
  Modularity}}, \doihref{http://dx.doi.org/10.1007/JHEP10(2019)010}{JHEP
  \textbf{10} (2019) 010}, \href{http://arxiv.org/abs/1809.10148}{{\arxivfont
  arXiv:1809.10148 [hep-th]}}.

\bibitem[CCF{\etalchar{+}}22]{VOAcharacters}
M.~C.~N. Cheng, S.~Chun, B.~Feigin, F.~Ferrari, S.~Gukov, S.~M. Harrison, and
  D.~Passaro, \emph{{3-Manifolds and VOA Characters}},
  \href{http://arxiv.org/abs/2201.04640}{{\arxivfont arXiv:2201.04640
  [hep-th]}}.

\bibitem[CCG18]{CCG}
K.~Costello, T.~Creutzig, and D.~Gaiotto, \emph{{Higgs and Coulomb branches
  from vertex operator algebras}},
  \doihref{http://dx.doi.org/10.1007/JHEP03(2019)066}{JHEP \textbf{03} (2019)
  066}, \href{http://arxiv.org/abs/1811.03958}{{\arxivfont arXiv:1811.03958
  [hep-th]}}.

\bibitem[CDG20]{CDG-chiral}
K.~Costello, T.~Dimofte, and D.~Gaiotto, \emph{{Boundary Chiral Algebras and
  Holomorphic Twists}}, \href{http://arxiv.org/abs/2005.00083}{{\arxivfont
  arXiv:2005.00083 [hep-th]}}.

\bibitem[CDGG21]{CDGG}
T.~Creutzig, T.~Dimofte, N.~Garner, and N.~Geer, \emph{{A QFT for
  non-semisimple TQFT}}, \href{http://arxiv.org/abs/2112.01559}{{\arxivfont
  arXiv:2112.01559 [hep-th]}}.

\bibitem[CG17]{creutzig2020vertex}
T.~Creutzig and D.~Gaiotto, \emph{{Vertex Algebras for S-duality}},
  \doihref{http://dx.doi.org/10.1007/s00220-020-03870-6}{Commun. Math. Phys.
  \textbf{379} (2020) 785--845},
  \href{http://arxiv.org/abs/1708.00875}{{\arxivfont arXiv:1708.00875
  [hep-th]}}.

\bibitem[CG18]{CG}
K.~Costello and D.~Gaiotto, \emph{{Vertex Operator Algebras and 3d $
  \mathcal{N} $ = 4 gauge theories}},
  \doihref{http://dx.doi.org/10.1007/JHEP05(2019)018}{JHEP \textbf{05} (2019)
  018}, \href{http://arxiv.org/abs/1804.06460}{{\arxivfont arXiv:1804.06460
  [hep-th]}}.

\bibitem[CGN20]{creutzig2021duality}
T.~Creutzig, N.~Genra, and S.~Nakatsuka, \emph{{Duality of subregular
  $W$-algebras and principal $W$-superalgebras}},
  \doihref{http://dx.doi.org/10.1016/j.aim.2021.107685}{Adv. Math. \textbf{383}
  (2021) 107685}, \href{http://arxiv.org/abs/2005.10713}{{\arxivfont
  arXiv:2005.10713 [math.QA]}}.

\bibitem[CGNS21]{Creutzig:BRST}
T.~Creutzig, N.~Genra, S.~Nakatsuka, and R.~Sato, \emph{{Correspondences of
  Categories for Subregular ${{\mathcal {W}}}$-Algebras and Principal
  ${\mathcal {W}}$-Superalgebras}},
  \doihref{http://dx.doi.org/10.1007/s00220-021-04297-3}{Commun. Math. Phys.
  \textbf{393} (2022) 1--60},
  \href{http://arxiv.org/abs/2104.00942}{{\arxivfont arXiv:2104.00942
  [math.RT]}}.

\bibitem[CGP21]{CGP-spin}
F.~Costantino, S.~Gukov, and P.~Putrov, \emph{{Non-Semisimple TQFT's and BPS
  $q$-Series}}, \doihref{http://dx.doi.org/10.3842/SIGMA.2023.010}{SIGMA
  \textbf{19} (2023) 010}, \href{http://arxiv.org/abs/2107.14238}{{\arxivfont
  arXiv:2107.14238 [math.GT]}}.

\bibitem[CHZ13]{CHZ}
S.~Cremonesi, A.~Hanany, and A.~Zaffaroni, \emph{{Monopole operators and
  Hilbert series of Coulomb branches of $3d$ $\mathcal{N} = 4$ gauge
  theories}}, \doihref{http://dx.doi.org/10.1007/JHEP01(2014)005}{JHEP
  \textbf{01} (2014) 005}, \href{http://arxiv.org/abs/1309.2657}{{\arxivfont
  arXiv:1309.2657 [hep-th]}}.

\bibitem[CJOH{\etalchar{+}}20]{Creutzig:2020zvv}
T.~Creutzig, C.~Jiang, F.~Orosz~Hunziker, D.~Ridout, and J.~Yang, \emph{{Tensor
  categories arising from the Virasoro algebra}},
  \doihref{http://dx.doi.org/10.1016/j.aim.2021.107601}{Adv. Math. \textbf{380}
  (2021) 107601}, \href{http://arxiv.org/abs/2002.03180}{{\arxivfont
  arXiv:2002.03180 [math.RT]}}.

\bibitem[CKL15]{creutzig2020simple}
T.~Creutzig, S.~Kanade, and A.~R. Linshaw, \emph{{Simple current extensions
  beyond semi-simplicity}},
  \doihref{http://dx.doi.org/10.1142/s0219199719500019}{Commun. Contemp. Math.
  \textbf{22} (2020) 1950001},
  \href{http://arxiv.org/abs/1511.08754}{{\arxivfont arXiv:1511.08754
  [math.QA]}}.

\bibitem[CKLR16]{creutzig2019schur}
T.~Creutzig, S.~Kanade, A.~R. Linshaw, and D.~Ridout, \emph{{Schur-Weyl Duality
  for Heisenberg Cosets}},
  \doihref{http://dx.doi.org/10.1007/s00031-018-9497-2}{Transform. Groups
  \textbf{24} (2019) 301--354},
  \href{http://arxiv.org/abs/1611.00305}{{\arxivfont arXiv:1611.00305
  [math.QA]}}.

\bibitem[CKM17]{creutzig2017tensor}
T.~Creutzig, S.~Kanade, and R.~McRae, \emph{{Tensor categories for vertex
  operator superalgebra extensions}},
  \href{http://arxiv.org/abs/1705.05017}{{\arxivfont arXiv:1705.05017
  [math.QA]}}.

\bibitem[CL20]{Creutzig:2020zaj}
T.~Creutzig and A.~R. Linshaw, \emph{{Trialities of $\mathcal{W}$-algebras}},
  \doihref{http://dx.doi.org/10.4310/CJM.2022.v10.n1.a2}{Camb. J. Math.
  \textbf{10} (2022) 69–194},
  \href{http://arxiv.org/abs/2005.10234}{{\arxivfont arXiv:2005.10234
  [math.RT]}}.

\bibitem[CL21]{Creutzig:2021dda}
\bysame, \emph{{Trialities of orthosymplectic W-algebras}},
  \doihref{http://dx.doi.org/10.1016/j.aim.2022.108678}{Adv. Math. \textbf{409}
  (2022) 108678}, \href{http://arxiv.org/abs/2102.10224}{{\arxivfont
  arXiv:2102.10224 [math.RT]}}.

\bibitem[CLNS22]{Creutzig:2022yvr}
T.~Creutzig, A.~R. Linshaw, S.~Nakatsuka, and R.~Sato, \emph{{Duality via
  convolution of W-algebras}},
  \href{http://arxiv.org/abs/2203.01843}{{\arxivfont arXiv:2203.01843
  [math.QA]}}.

\bibitem[CMY20a]{creutzig2020direct}
T.~Creutzig, R.~Mcrae, and J.~Yang, \emph{{Direct limit completions of vertex
  tensor categories}}, \href{http://arxiv.org/abs/2006.09711}{{\arxivfont
  arXiv:2006.09711 [math.QA]}}.

\bibitem[CMY20b]{Creutzig:singlet1}
T.~Creutzig, R.~McRae, and J.~Yang, \emph{{On Ribbon Categories for Singlet
  Vertex Algebras}},
  \doihref{http://dx.doi.org/10.1007/s00220-021-04097-9}{Commun. Math. Phys.
  \textbf{387} (2021) 865--925},
  \href{http://arxiv.org/abs/2007.12735}{{\arxivfont arXiv:2007.12735
  [math.QA]}}.

\bibitem[CMY22a]{Creutzig:singlet2}
\bysame, \doihref{http://dx.doi.org/10.48550/ARXIV.2202.05496}{\emph{Ribbon
  tensor structure on the full representation categories of the singlet vertex
  algebras}}, 2022. \url{https://arxiv.org/abs/2202.05496}.

\bibitem[CMY22b]{creutzig2020tensor}
\bysame, \emph{Tensor structure on the {K}azhdan-{L}usztig category for affine
  gl(1|1)},
  \href{https://doi-org.ezproxy.is.ed.ac.uk/10.1093/imrn/rnab080}{Int. Math.
  Res. Not. IMRN (2022) 12462--12515}.

\bibitem[CMY22c]{Creutzig:2022ugv}
\bysame, \emph{{Rigid tensor structure on big module categories for some
  $W$-(super)algebras in type $A$}},
  \href{http://arxiv.org/abs/2210.04678}{{\arxivfont arXiv:2210.04678
  [math.QA]}}.

\bibitem[CO16]{ChungOkazaki}
H.-J. Chung and T.~Okazaki, \emph{{(2,2) and (0,4) supersymmetric boundary
  conditions in 3d $\mathcal{N}$ = 4 theories and type IIB branes}},
  \doihref{http://dx.doi.org/10.1103/PhysRevD.96.086005}{Phys. Rev. D
  \textbf{96} (2017) 086005},
  \href{http://arxiv.org/abs/1608.05363}{{\arxivfont arXiv:1608.05363
  [hep-th]}}.

\bibitem[CP20]{CostelloPaquette}
K.~Costello and N.~M. Paquette, \emph{{Twisted Supergravity and Koszul Duality:
  A case study in AdS$_3$}},
  \doihref{http://dx.doi.org/10.1007/s00220-021-04065-3}{Commun. Math. Phys.
  \textbf{384} (2021) 279--339},
  \href{http://arxiv.org/abs/2001.02177}{{\arxivfont arXiv:2001.02177
  [hep-th]}}.

\bibitem[CR08]{creutzig2009gl}
T.~Creutzig and P.~B. Ronne, \emph{{The GL(1|1)-symplectic fermion
  correspondence}},
  \doihref{http://dx.doi.org/10.1016/j.nuclphysb.2009.02.013}{Nucl. Phys. B
  \textbf{815} (2009) 95--124},
  \href{http://arxiv.org/abs/0812.2835}{{\arxivfont arXiv:0812.2835 [hep-th]}}.

\bibitem[CR11a]{Creutzig:2011cu}
T.~Creutzig and D.~Ridout, \emph{{Relating the Archetypes of Logarithmic
  Conformal Field Theory}},
  \doihref{http://dx.doi.org/10.1016/j.nuclphysb.2013.04.007}{Nucl. Phys. B
  \textbf{872} (2013) 348--391},
  \href{http://arxiv.org/abs/1107.2135}{{\arxivfont arXiv:1107.2135 [hep-th]}}.

\bibitem[CR11b]{creutzig2013w}
\bysame, \emph{{W-Algebras Extending Affine $\hat{gl}(1|1)$}},
  \doihref{http://dx.doi.org/10.1007/978-4-431-54270-4_24}{Springer Proc. Math.
  Stat. \textbf{36} (2013) 349--367},
  \href{http://arxiv.org/abs/1111.5049}{{\arxivfont arXiv:1111.5049 [hep-th]}}.

\bibitem[CR13a]{Creutzig:2013hma}
\bysame, \emph{{Logarithmic Conformal Field Theory: Beyond an Introduction}},
  \doihref{http://dx.doi.org/10.1088/1751-8113/46/49/494006}{J. Phys. A
  \textbf{46} (2013) 4006}, \href{http://arxiv.org/abs/1303.0847}{{\arxivfont
  arXiv:1303.0847 [hep-th]}}.

\bibitem[CR13b]{Creutzig:2013yca}
\bysame, \emph{{Modular Data and Verlinde Formulae for Fractional Level WZW
  Models II}},
  \doihref{http://dx.doi.org/10.1016/j.nuclphysb.2013.07.008}{Nucl. Phys. B
  \textbf{875} (2013) 423--458},
  \href{http://arxiv.org/abs/1306.4388}{{\arxivfont arXiv:1306.4388 [hep-th]}}.

\bibitem[CR20]{creutzig2021uprolling}
T.~Creutzig and M.~Rupert, \emph{{Uprolling unrolled quantum groups}},
  \doihref{http://dx.doi.org/10.1142/S0219199721500231}{Commun. Contemp. Math.
  \textbf{24} (2022) 2150023},
  \href{http://arxiv.org/abs/2005.12445}{{\arxivfont arXiv:2005.12445
  [math.RT]}}.

\bibitem[CRW13]{Creutzig2014coset}
T.~Creutzig, D.~Ridout, and S.~Wood, \emph{{Coset Constructions of Logarithmic
  (1, p) Models}}, \doihref{http://dx.doi.org/10.1007/s11005-014-0680-7}{Lett.
  Math. Phys. \textbf{104} (2014) 553--583},
  \href{http://arxiv.org/abs/1305.2665}{{\arxivfont arXiv:1305.2665
  [math.QA]}}.

\bibitem[CY21]{creutzig2021tensor}
T.~Creutzig and J.~Yang, \emph{Tensor categories of affine lie algebras beyond
  admissible levels}, Mathematische Annalen \textbf{380} (2021) 1991--2040.

\bibitem[dBHO{\etalchar{+}}96]{dBHOOY}
J.~de~Boer, K.~Hori, H.~Ooguri, Y.~Oz, and Z.~Yin, \emph{{Mirror symmetry in
  three-dimensional theories, SL(2,Z) and D-brane moduli spaces}},
  \doihref{http://dx.doi.org/10.1016/S0550-3213(97)00115-6}{Nucl. Phys. B
  \textbf{493} (1997) 148--176},
  \href{http://arxiv.org/abs/hep-th/9612131}{{\arxivfont
  arXiv:hep-th/9612131}}.

\bibitem[dBHOO96]{dBHOO}
J.~de~Boer, K.~Hori, H.~Ooguri, and Y.~Oz, \emph{{Mirror symmetry in
  three-dimensional gauge theories, quivers and D-branes}},
  \doihref{http://dx.doi.org/10.1016/S0550-3213(97)00125-9}{Nucl. Phys. B
  \textbf{493} (1997) 101--147},
  \href{http://arxiv.org/abs/hep-th/9611063}{{\arxivfont
  arXiv:hep-th/9611063}}.

\bibitem[DFPY17]{Dedushenko:2017avn}
M.~Dedushenko, Y.~Fan, S.~S. Pufu, and R.~Yacoby, \emph{{Coulomb Branch
  Operators and Mirror Symmetry in Three Dimensions}},
  \doihref{http://dx.doi.org/10.1007/JHEP04(2018)037}{JHEP \textbf{04} (2018)
  037}, \href{http://arxiv.org/abs/1712.09384}{{\arxivfont arXiv:1712.09384
  [hep-th]}}.

\bibitem[DGGH19]{lineops}
T.~Dimofte, N.~Garner, M.~Geracie, and J.~Hilburn, \emph{{Mirror symmetry and
  line operators}}, \doihref{http://dx.doi.org/10.1007/JHEP02(2020)075}{JHEP
  \textbf{02} (2020) 075}, \href{http://arxiv.org/abs/1908.00013}{{\arxivfont
  arXiv:1908.00013 [hep-th]}}.

\bibitem[DGH10]{Dimofte:2010tz}
T.~Dimofte, S.~Gukov, and L.~Hollands, \emph{{Vortex Counting and Lagrangian
  3-manifolds}}, \doihref{http://dx.doi.org/10.1007/s11005-011-0531-8}{Lett.
  Math. Phys. \textbf{98} (2011) 225--287},
  \href{http://arxiv.org/abs/1006.0977}{{\arxivfont arXiv:1006.0977 [hep-th]}}.

\bibitem[DGN{\etalchar{+}}18]{Dedushenko:2018bp}
M.~Dedushenko, S.~Gukov, H.~Nakajima, D.~Pei, and K.~Ye, \emph{{3d TQFTs from
  Argyres-Douglas theories}},
  \doihref{http://dx.doi.org/10.1088/1751-8121/abb481}{J. Phys. A \textbf{53}
  (2020) 43LT01}, \href{http://arxiv.org/abs/1809.04638}{{\arxivfont
  arXiv:1809.04638 [hep-th]}}.

\bibitem[DGP17]{dimofte2018dual}
T.~Dimofte, D.~Gaiotto, and N.~M. Paquette, \emph{{Dual boundary conditions in
  3d SCFT\textquoteright{}s}},
  \doihref{http://dx.doi.org/10.1007/JHEP05(2018)060}{JHEP \textbf{05} (2018)
  060}, \href{http://arxiv.org/abs/1712.07654}{{\arxivfont arXiv:1712.07654
  [hep-th]}}.

\bibitem[DLM97]{DLM}
C.~Dong, H.~Li, and G.~Mason, \emph{Regularity of rational vertex operator
  algebras},
  \href{https://doi-org.login.ezproxy.library.ualberta.ca/10.1006/aima.1997.1681}{Adv.
  Math. \textbf{132} (1997) 148--166}.

\bibitem[DN21]{DedushenkoNekrasov}
M.~Dedushenko and N.~Nekrasov, \emph{{Interfaces and Quantum Algebras, I:
  Stable Envelopes}}, \href{http://arxiv.org/abs/2109.10941}{{\arxivfont
  arXiv:2109.10941 [hep-th]}}.

\bibitem[Dou00]{Douglas-Dbranes}
M.~R. Douglas, \emph{{D-branes, categories and N=1 supersymmetry}},
  \doihref{http://dx.doi.org/10.1063/1.1374448}{J. Math. Phys. \textbf{42}
  (2001) 2818--2843}, \href{http://arxiv.org/abs/hep-th/0011017}{{\arxivfont
  arXiv:hep-th/0011017}}.

\bibitem[DVVV89]{DVVV}
R.~Dijkgraaf, C.~Vafa, E.~P. Verlinde, and H.~L. Verlinde, \emph{{The Operator
  Algebra of Orbifold Models}},
  \doihref{http://dx.doi.org/10.1007/BF01238812}{Commun. Math. Phys.
  \textbf{123} (1989) 485}.

\bibitem[EGNO15]{egno}
P.~Etingof, S.~Gelaki, D.~Nikshych, and V.~Ostrik, \emph{Tensor categories},
  Mathematical Surveys and Monographs, American Mathematical Society, 2015.
  \url{https://books.google.com/books?id=NwM-CgAAQBAJ}.

\bibitem[EKSNW19]{Eckhard:2019jgg}
J.~Eckhard, H.~Kim, S.~Schafer-Nameki, and B.~Willett, \emph{{Higher-Form
  Symmetries, Bethe Vacua, and the 3d-3d Correspondence}},
  \doihref{http://dx.doi.org/10.1007/JHEP01(2020)101}{JHEP \textbf{01} (2020)
  101}, \href{http://arxiv.org/abs/1910.14086}{{\arxivfont arXiv:1910.14086
  [hep-th]}}.

\bibitem[EMSS89]{EMSS}
S.~Elitzur, G.~W. Moore, A.~Schwimmer, and N.~Seiberg, \emph{{Remarks on the
  Canonical Quantization of the Chern-Simons-Witten Theory}},
  \doihref{http://dx.doi.org/10.1016/0550-3213(89)90436-7}{Nucl. Phys. B
  \textbf{326} (1989) 108--134}.

\bibitem[FBZ01]{FBZ}
E.~Frenkel and D.~Ben-Zvi,
  \doihref{http://dx.doi.org/10.1090/surv/088}{\emph{Vertex algebras and
  algebraic curves}}, Mathematical Surveys and Monographs, vol.~88, American
  Mathematical Society, Providence, RI, 2001.
  \url{https://doi-org.ezproxy.is.ed.ac.uk/10.1090/surv/088}.

\bibitem[FG18]{FrenkelGaiotto}
E.~Frenkel and D.~Gaiotto, \emph{{Quantum Langlands dualities of boundary
  conditions, D-modules, and conformal blocks}},
  \doihref{http://dx.doi.org/10.4310/CNTP.2020.v14.n2.a1}{Commun. Num. Theor.
  Phys. \textbf{14} (2020) 199--313},
  \href{http://arxiv.org/abs/1805.00203}{{\arxivfont arXiv:1805.00203
  [hep-th]}}.

\bibitem[FGZ86]{Frenkel:BRST}
I.~B. Frenkel, H.~Garland, and G.~J. Zuckerman, \emph{{Semiinfinite Cohomology
  and String Theory}},
  \doihref{http://dx.doi.org/10.1073/pnas.83.22.8442}{Proc. Nat. Acad. Sci.
  \textbf{83} (1986) 8442}.

\bibitem[Gai16]{Gaiotto-twisted}
D.~Gaiotto, \emph{{Twisted compactifications of 3d $ \mathcal{N} $ = 4 theories
  and conformal blocks}},
  \doihref{http://dx.doi.org/10.1007/JHEP02(2019)061}{JHEP \textbf{02} (2019)
  061}, \href{http://arxiv.org/abs/1611.01528}{{\arxivfont arXiv:1611.01528
  [hep-th]}}.

\bibitem[Gar22]{garner2022vertex}
N.~Garner, \emph{{Vertex Operator Algebras and Topologically Twisted
  Chern-Simons-Matter Theories}},
  \href{http://arxiv.org/abs/2204.02991}{{\arxivfont arXiv:2204.02991
  [hep-th]}}.

\bibitem[GGP13]{Gadde:2013sca}
A.~Gadde, S.~Gukov, and P.~Putrov, \emph{{Fivebranes and 4-manifolds}},
  \doihref{http://dx.doi.org/10.1007/978-3-319-43648-7_7}{Prog. Math.
  \textbf{319} (2016) 155--245},
  \href{http://arxiv.org/abs/1306.4320}{{\arxivfont arXiv:1306.4320 [hep-th]}}.

\bibitem[GH22]{HG-Betti}
B.~Gammage and J.~Hilburn, \emph{{Betti Tate's thesis and the trace of perverse
  schobers}}, \href{http://arxiv.org/abs/2210.06548}{{\arxivfont
  arXiv:2210.06548 [math.RT]}}.

\bibitem[GHMG22]{gammage2022perverse}
B.~Gammage, J.~Hilburn, and A.~Mazel-Gee, \emph{Perverse schobers and 3d mirror
  symmetry}, \href{http://arxiv.org/abs/2202.06833}{{\arxivfont
  arXiv:2202.06833 [math.RT]}}.

\bibitem[GHN{\etalchar{+}}20]{RW-Coulomb}
S.~Gukov, P.-S. Hsin, H.~Nakajima, S.~Park, D.~Pei, and N.~Sopenko,
  \emph{{Rozansky-Witten geometry of Coulomb branches and logarithmic knot
  invariants}}, \doihref{http://dx.doi.org/10.1016/j.geomphys.2021.104311}{J.
  Geom. Phys. \textbf{168} (2021) 104311},
  \href{http://arxiv.org/abs/2005.05347}{{\arxivfont arXiv:2005.05347
  [hep-th]}}.

\bibitem[GHP20]{GukovHsinPei}
S.~Gukov, P.-S. Hsin, and D.~Pei, \emph{{Generalized global symmetries of
  $T[M]$ theories. Part I}},
  \doihref{http://dx.doi.org/10.1007/JHEP04(2021)232}{JHEP \textbf{04} (2021)
  232}, \href{http://arxiv.org/abs/2010.15890}{{\arxivfont arXiv:2010.15890
  [hep-th]}}.

\bibitem[GKL{\etalchar{+}}21]{Gang:2021hrd}
D.~Gang, S.~Kim, K.~Lee, M.~Shim, and M.~Yamazaki, \emph{{Non-unitary TQFTs
  from 3D $ \mathcal{N} $ = 4 rank 0 SCFTs}},
  \doihref{http://dx.doi.org/10.1007/JHEP08(2021)158}{JHEP \textbf{08} (2021)
  158}, \href{http://arxiv.org/abs/2103.09283}{{\arxivfont arXiv:2103.09283
  [hep-th]}}.

\bibitem[GKSW14]{GKSW}
D.~Gaiotto, A.~Kapustin, N.~Seiberg, and B.~Willett, \emph{{Generalized Global
  Symmetries}}, \doihref{http://dx.doi.org/10.1007/JHEP02(2015)172}{JHEP
  \textbf{02} (2015) 172}, \href{http://arxiv.org/abs/1412.5148}{{\arxivfont
  arXiv:1412.5148 [hep-th]}}.

\bibitem[GN23]{GarnerNiu}
N.~Garner and W.~Niu, \emph{{Line Operators in $U(1|1)$ Chern-Simons Theory}},
  \href{http://arxiv.org/abs/2304.05414}{{\arxivfont arXiv:2304.05414
  [hep-th]}}.

\bibitem[GR17]{GaiottoRapcak}
D.~Gaiotto and M.~Rap\v{c}\'ak, \emph{{Vertex Algebras at the Corner}},
  \doihref{http://dx.doi.org/10.1007/JHEP01(2019)160}{JHEP \textbf{01} (2019)
  160}, \href{http://arxiv.org/abs/1703.00982}{{\arxivfont arXiv:1703.00982
  [hep-th]}}.

\bibitem[GW06]{GukovWitten}
S.~Gukov and E.~Witten, \emph{{Gauge Theory, Ramification, And The Geometric
  Langlands Program}}, \href{http://arxiv.org/abs/hep-th/0612073}{{\arxivfont
  arXiv:hep-th/0612073}}.

\bibitem[GW08]{GW-S}
D.~Gaiotto and E.~Witten, \emph{{S-Duality of Boundary Conditions In N=4 Super
  Yang-Mills Theory}},
  \doihref{http://dx.doi.org/10.4310/ATMP.2009.v13.n3.a5}{Adv. Theor. Math.
  Phys. \textbf{13} (2009) 721--896},
  \href{http://arxiv.org/abs/0807.3720}{{\arxivfont arXiv:0807.3720 [hep-th]}}.

\bibitem[GY22]{GeerYoung}
N.~Geer and M.~B. Young, \emph{{Three dimensional topological quantum field
  theory from $U_q(\mathfrak{gl}(1 \vert 1))$ and $U(1 \vert 1)$ Chern--Simons
  theory}}, \href{http://arxiv.org/abs/2210.04286}{{\arxivfont arXiv:2210.04286
  [math.QA]}}.

\bibitem[HH97]{HananyHori}
A.~Hanany and K.~Hori, \emph{{Branes and N=2 theories in two-dimensions}},
  \doihref{http://dx.doi.org/10.1016/S0550-3213(97)00754-2}{Nucl. Phys. B
  \textbf{513} (1998) 119--174},
  \href{http://arxiv.org/abs/hep-th/9707192}{{\arxivfont
  arXiv:hep-th/9707192}}.

\bibitem[HLS18]{HLS-gauging}
P.-S. Hsin, H.~T. Lam, and N.~Seiberg, \emph{{Comments on One-Form Global
  Symmetries and Their Gauging in 3d and 4d}},
  \doihref{http://dx.doi.org/10.21468/SciPostPhys.6.3.039}{SciPost Phys.
  \textbf{6} (2019) 039}, \href{http://arxiv.org/abs/1812.04716}{{\arxivfont
  arXiv:1812.04716 [hep-th]}}.

\bibitem[HLZ10a]{huang2014logarithmic1}
Y.-Z. Huang, J.~Lepowsky, and L.~Zhang, \emph{{Logarithmic Tensor Category
  Theory for Generalized Modules for a Conformal Vertex Algebra, I:
  Introduction and Strongly Graded Algebras and their Generalized Modules}},
  \href{http://arxiv.org/abs/1012.4193}{{\arxivfont arXiv:1012.4193
  [math.QA]}}.

\bibitem[HLZ10b]{huang2010logarithmic2}
\bysame, \emph{{Logarithmic tensor category theory, II: Logarithmic formal
  calculus and properties of logarithmic intertwining operators}},
  \href{http://arxiv.org/abs/1012.4196}{{\arxivfont arXiv:1012.4196
  [math.QA]}}.

\bibitem[HLZ10c]{huang2010logarithmic3}
\bysame, \emph{{Logarithmic tensor category theory, III: Intertwining maps and
  tensor product bifunctors}},
  \href{http://arxiv.org/abs/1012.4197}{{\arxivfont arXiv:1012.4197
  [math.QA]}}.

\bibitem[HLZ10d]{huang2010logarithmic4}
\bysame, \emph{{Logarithmic tensor category theory, IV: Constructions of tensor
  product bifunctors and the compatibility conditions}},
  \href{http://arxiv.org/abs/1012.4198}{{\arxivfont arXiv:1012.4198
  [math.QA]}}.

\bibitem[HLZ10e]{huang2010logarithmic5}
\bysame, \emph{{Logarithmic tensor category theory, V: Convergence condition
  for intertwining maps and the corresponding compatibility condition}},
  \href{http://arxiv.org/abs/1012.4199}{{\arxivfont arXiv:1012.4199
  [math.QA]}}.

\bibitem[HLZ10f]{huang2010logarithmic6}
\bysame, \emph{{Logarithmic tensor category theory, VI: Expansion condition,
  associativity of logarithmic intertwining operators, and the associativity
  isomorphisms}}, \href{http://arxiv.org/abs/1012.4202}{{\arxivfont
  arXiv:1012.4202 [math.QA]}}.

\bibitem[HLZ11a]{huang2011logarithmic7}
\bysame, \emph{{Logarithmic tensor category theory, VII: Convergence and
  extension properties and applications to expansion for intertwining maps}},
  \href{http://arxiv.org/abs/1110.1929}{{\arxivfont arXiv:1110.1929
  [math.QA]}}.

\bibitem[HLZ11b]{huang2011logarithmic8}
\bysame, \emph{{Logarithmic tensor category theory, VIII: Braided tensor
  category structure on categories of generalized modules for a conformal
  vertex algebra}}, \href{http://arxiv.org/abs/1110.1931}{{\arxivfont
  arXiv:1110.1931 [math.QA]}}.

\bibitem[HR21]{hilburn2021tate}
J.~Hilburn and S.~Raskin, \emph{Tate's thesis in the de rham setting},
  \href{http://arxiv.org/abs/2107.11325}{{\arxivfont arXiv:2107.11325
  [math.AG]}}.

\bibitem[IS96]{IS}
K.~A. Intriligator and N.~Seiberg, \emph{{Mirror symmetry in three-dimensional
  gauge theories}},
  \doihref{http://dx.doi.org/10.1016/0370-2693(96)01088-X}{Phys. Lett. B
  \textbf{387} (1996) 513--519},
  \href{http://arxiv.org/abs/hep-th/9607207}{{\arxivfont
  arXiv:hep-th/9607207}}.

\bibitem[IY11]{ImamuraYokoyama}
Y.~Imamura and S.~Yokoyama, \emph{{Index for three dimensional superconformal
  field theories with general R-charge assignments}},
  \doihref{http://dx.doi.org/10.1007/JHEP04(2011)007}{JHEP \textbf{04} (2011)
  007}, \href{http://arxiv.org/abs/1101.0557}{{\arxivfont arXiv:1101.0557
  [hep-th]}}.

\bibitem[Kap05]{Kapustin-WtH}
A.~Kapustin, \emph{{Wilson-'t Hooft operators in four-dimensional gauge
  theories and S-duality}},
  \doihref{http://dx.doi.org/10.1103/PhysRevD.74.025005}{Phys. Rev. D
  \textbf{74} (2006) 025005},
  \href{http://arxiv.org/abs/hep-th/0501015}{{\arxivfont
  arXiv:hep-th/0501015}}.

\bibitem[KL02]{KapustinLi}
A.~Kapustin and Y.~Li, \emph{{D branes in Landau-Ginzburg models and algebraic
  geometry}}, \doihref{http://dx.doi.org/10.1088/1126-6708/2003/12/005}{JHEP
  \textbf{12} (2003) 005},
  \href{http://arxiv.org/abs/hep-th/0210296}{{\arxivfont
  arXiv:hep-th/0210296}}.

\bibitem[KN20]{Kiyoshige:2020uqz}
K.~Kiyoshige and T.~Nishinaka, \emph{{The Chiral Algebra of Genus Two Class
  $\mathcal{S}$ Theory}}, \href{http://arxiv.org/abs/2009.11629}{{\arxivfont
  arXiv:2009.11629 [hep-th]}}.

\bibitem[KR09]{KapustinRozansky}
A.~Kapustin and L.~Rozansky, \emph{{Three-dimensional topological field theory
  and symplectic algebraic geometry II}},
  \doihref{http://dx.doi.org/10.4310/CNTP.2010.v4.n3.a1}{Commun. Num. Theor.
  Phys. \textbf{4} (2010) 463--550},
  \href{http://arxiv.org/abs/0909.3643}{{\arxivfont arXiv:0909.3643
  [math.AG]}}.

\bibitem[KRS08]{KRS}
A.~Kapustin, L.~Rozansky, and N.~Saulina, \emph{{Three-dimensional topological
  field theory and symplectic algebraic geometry I}},
  \doihref{http://dx.doi.org/10.1016/j.nuclphysb.2009.01.027}{Nucl. Phys. B
  \textbf{816} (2009) 295--355},
  \href{http://arxiv.org/abs/0810.5415}{{\arxivfont arXiv:0810.5415 [hep-th]}}.

\bibitem[KS99]{KapustinStrassler}
A.~Kapustin and M.~J. Strassler, \emph{{On mirror symmetry in three-dimensional
  Abelian gauge theories}},
  \doihref{http://dx.doi.org/10.1088/1126-6708/1999/04/021}{JHEP \textbf{04}
  (1999) 021}, \href{http://arxiv.org/abs/hep-th/9902033}{{\arxivfont
  arXiv:hep-th/9902033}}.

\bibitem[Kuw13]{Kuwabara}
T.~Kuwabara, \emph{{BRST cohomologies for symplectic reflection algebras and
  quantizations of hypertoric varieties}},
  \href{http://arxiv.org/abs/1311.1787}{{\arxivfont arXiv:1311.1787
  [math.QA]}}.

\bibitem[KW11]{KW-index}
A.~Kapustin and B.~Willett, \emph{{Generalized Superconformal Index for Three
  Dimensional Field Theories}},
  \href{http://arxiv.org/abs/1106.2484}{{\arxivfont arXiv:1106.2484 [hep-th]}}.

\bibitem[Lan12]{lang2012algebra}
S.~Lang, \emph{Algebra}, vol. 211, Springer Science \& Business Media, 2012.

\bibitem[Li97]{Li:1997}
H.~Li, \emph{The physics superselection principle in vertex operator algebra
  theory},
  \href{https://doi-org.login.ezproxy.library.ualberta.ca/10.1006/jabr.1997.7126}{J.
  Algebra \textbf{196} (1997) 436--457}.

\bibitem[McR20]{McRae:2020}
R.~McRae, \emph{On the tensor structure of modules for compact orbifold vertex
  operator algebras},
  \href{https://doi-org.login.ezproxy.library.ualberta.ca/10.1007/s00209-019-02445-z}{Math.
  Z. \textbf{296} (2020) 409--452}.

\bibitem[Mik15]{Mikhaylov}
V.~Mikhaylov, \emph{{Analytic Torsion, 3d Mirror Symmetry And Supergroup
  Chern-Simons Theories}}, \href{http://arxiv.org/abs/1505.03130}{{\arxivfont
  arXiv:1505.03130 [hep-th]}}.

\bibitem[MS89]{MooreSeiberg-taming}
G.~W. Moore and N.~Seiberg, \emph{{Taming the Conformal Zoo}},
  \doihref{http://dx.doi.org/10.1016/0370-2693(89)90897-6}{Phys. Lett. B
  \textbf{220} (1989) 422--430}.

\bibitem[Nak15]{Nak}
H.~Nakajima, \emph{{Towards a mathematical definition of Coulomb branches of
  $3$-dimensional $\mathcal{N}=4$ gauge theories, I}},
  \doihref{http://dx.doi.org/10.4310/ATMP.2016.v20.n3.a4}{Adv. Theor. Math.
  Phys. \textbf{20} (2016) 595--669},
  \href{http://arxiv.org/abs/1503.03676}{{\arxivfont arXiv:1503.03676
  [math-ph]}}.

\bibitem[NSWZ23]{NSWZ}
S.~Nawata, M.~Sperling, H.~E. Wang, and Z.~Zhong, \emph{{3d $\mathcal{N}=4$
  mirror symmetry with 1-form symmetry}},
  \href{http://arxiv.org/abs/2301.02409}{{\arxivfont arXiv:2301.02409
  [hep-th]}}.

\bibitem[Oka19]{Okazaki-04}
T.~Okazaki, \emph{{Abelian dualities of $\mathcal{N}=(0,4)$ boundary
  conditions}}, \doihref{http://dx.doi.org/10.1007/JHEP08(2019)170}{JHEP
  \textbf{08} (2019) 170}, \href{http://arxiv.org/abs/1905.07425}{{\arxivfont
  arXiv:1905.07425 [hep-th]}}.

\bibitem[OR18]{OR-Chern}
A.~Oblomkov and L.~Rozansky, \emph{{Categorical Chern character and braid
  groups}}, \href{http://arxiv.org/abs/1811.03257}{{\arxivfont arXiv:1811.03257
  [math.GT]}}.

\bibitem[Pro08]{Proudfoot}
N.~J. Proudfoot,
  \href{https://doi-org.ezproxy.is.ed.ac.uk/10.1090/conm/460/09027}{\emph{A
  survey of hypertoric geometry and topology}}, Toric topology, Contemp. Math.,
  vol. 460, Amer. Math. Soc., Providence, RI, 2008, pp.~323--338.

\bibitem[RW96]{RW}
L.~Rozansky and E.~Witten, \emph{{HyperKahler geometry and invariants of three
  manifolds}}, \doihref{http://dx.doi.org/10.1007/s000290050016}{Selecta Math.
  \textbf{3} (1997) 401--458},
  \href{http://arxiv.org/abs/hep-th/9612216}{{\arxivfont
  arXiv:hep-th/9612216}}.

\bibitem[RW10]{RobertsWillerton}
J.~Roberts and S.~Willerton, \emph{On the {R}ozansky-{W}itten weight systems},
  \href{https://doi-org.ezproxy.is.ed.ac.uk/10.2140/agt.2010.10.1455}{Algebr.
  Geom. Topol. \textbf{10} (2010) 1455--1519}.

\bibitem[ST13]{SugishitaTerashima}
S.~Sugishita and S.~Terashima, \emph{{Exact Results in Supersymmetric Field
  Theories on Manifolds with Boundaries}},
  \doihref{http://dx.doi.org/10.1007/JHEP11(2013)021}{JHEP \textbf{11} (2013)
  021}, \href{http://arxiv.org/abs/1308.1973}{{\arxivfont arXiv:1308.1973
  [hep-th]}}.

\bibitem[SW96]{SW-3d}
N.~Seiberg and E.~Witten, \emph{{Gauge dynamics and compactification to
  three-dimensions}}, {Conference on the Mathematical Beauty of Physics (In
  Memory of C. Itzykson)}, 6 1996, pp.~333--366.
  \href{http://arxiv.org/abs/hep-th/9607163}{{\arxivfont
  arXiv:hep-th/9607163}}.

\bibitem[SW21]{SchweigertWoike}
C.~Schweigert and L.~Woike, \emph{{The differential graded Verlinde Formula and
  the Deligne Conjecture}}, \href{http://arxiv.org/abs/2105.01596}{{\arxivfont
  arXiv:2105.01596 [math.QA]}}.

\bibitem[SY89]{SY-simple}
A.~N. Schellekens and S.~Yankielowicz, \emph{{Modular Invariants From Simple
  Currents: An Explicit Proof}},
  \doihref{http://dx.doi.org/10.1016/0370-2693(89)90948-9}{Phys. Lett. B
  \textbf{227} (1989) 387--391}.

\bibitem[Tel14]{Teleman-ICM}
C.~Teleman, \emph{Gauge theory and mirror symmetry}, Proceedings of the
  {I}nternational {C}ongress of {M}athematicians---{S}eoul 2014. {V}ol. {II},
  Kyung Moon Sa, Seoul, 2014, pp.~1309--1332.

\bibitem[TW12]{Tsuchiya:2012ru}
A.~Tsuchiya and S.~Wood, \emph{{The tensor structure on the representation
  category of the $W_{p}$ triplet algebra}},
  \doihref{http://dx.doi.org/10.1088/1751-8113/46/44/445203}{J. Phys. A
  \textbf{46} (2013) 445203}, \href{http://arxiv.org/abs/1201.0419}{{\arxivfont
  arXiv:1201.0419 [hep-th]}}.

\bibitem[Web16]{Webster-Coulomb}
B.~Webster, \emph{{Koszul duality between Higgs and Coulomb categories
  $\mathcal{O}$}}, \href{http://arxiv.org/abs/1611.06541}{{\arxivfont
  arXiv:1611.06541 [math.RT]}}.

\bibitem[Web19]{Webster-tilting}
\bysame, \emph{{Coherent sheaves and quantum Coulomb branches I: tilting
  bundles from integrable systems}},
  \href{http://arxiv.org/abs/1905.04623}{{\arxivfont arXiv:1905.04623
  [math.AG]}}.

\bibitem[Wit88]{Witten-Donaldson}
E.~Witten, \emph{{Topological Quantum Field Theory}},
  \doihref{http://dx.doi.org/10.1007/BF01223371}{Commun. Math. Phys.
  \textbf{117} (1988) 353}.

\bibitem[Wit89a]{Witten-topology}
\bysame, \emph{{Topology Changing Amplitudes in (2+1)-Dimensional Gravity}},
  \doihref{http://dx.doi.org/10.1016/0550-3213(89)90591-9}{Nucl. Phys. B
  \textbf{323} (1989) 113--140}.

\bibitem[Wit89b]{Witten-Jones}
\bysame, \emph{{Quantum Field Theory and the Jones Polynomial}},
  \doihref{http://dx.doi.org/10.1007/BF01217730}{Commun. Math. Phys.
  \textbf{121} (1989) 351--399}.

\bibitem[Wit03]{Witten-SL2}
\bysame, \emph{{SL(2,Z) action on three-dimensional conformal field theories
  with Abelian symmetry}}, {From Fields to Strings: Circumnavigating
  Theoretical Physics: A Conference in Tribute to Ian Kogan}, 7 2003,
  pp.~1173--1200. \href{http://arxiv.org/abs/hep-th/0307041}{{\arxivfont
  arXiv:hep-th/0307041}}.

\bibitem[Woo18]{wood2020admissible}
S.~Wood, \emph{{Admissible level $\mathfrak{osp}(1|2)$ minimal models and their
  relaxed highest weight modules}},
  \href{http://arxiv.org/abs/1804.01200}{{\arxivfont arXiv:1804.01200
  [math.QA]}}.

\bibitem[Yos23]{Yoshida:2023wyt}
Y.~Yoshida, \emph{{Fermionic extensions of $W$-algebras via 3d $\mathcal{N}=4$
  gauge theories with a boundary}},
  \href{http://arxiv.org/abs/2304.03270}{{\arxivfont arXiv:2304.03270
  [hep-th]}}.

\bibitem[YS14]{YoshidaSugiyama}
Y.~Yoshida and K.~Sugiyama, \emph{{Localization of three-dimensional
  $\mathcal{N}=2$ supersymmetric theories on $S^1 \times D^2$}},
  \doihref{http://dx.doi.org/10.1093/ptep/ptaa136}{PTEP \textbf{2020} (2020)
  113B02}, \href{http://arxiv.org/abs/1409.6713}{{\arxivfont arXiv:1409.6713
  [hep-th]}}.

\end{thebibliography}

\end{document}